\theoremstyle{definition}
\newtheorem{thm}{Theorem}[chapter]
\newtheorem*{thm*}{Theorem}
\newtheorem{ithm}{Theorem}[chapter]
\newtheorem{cor}{Corollary}[chapter]
\newtheorem{conj}{Conjecture}[chapter]
\newtheorem{lem}{Lemma}[chapter]
\newtheorem{prop}{Proposition}[chapter]
\newtheorem{defn}{Definition}[chapter]
\newtheorem{ex}{Example}[chapter]
\newtheorem{obs}{Observation}[chapter]
\newtheorem{question}{Question}[chapter]
\newtheorem*{question*}{Question}
\newtheorem{const}{Construction}[chapter]
\newcommand{\scC}{\mathcal C}
\newcommand{\scD}{\mathcal D}
\newcommand{\scH}{\mathcal H}
\newcommand{\scM}{\mathcal M}
\newcommand{\cC}{\mathcal C}
\newcommand{\cU}{\mathcal U}
\newcommand{\cN}{\mathcal N}
\newcommand{\cT}{\mathcal T}
\newcommand{\cM}{\mathcal M}
\newcommand{\cA}{\mathcal A}
\newcommand{\cL}{\mathcal L}
\newcommand{\cE}{\mathcal E}
\newcommand{\cD}{\mathcal D}
\newcommand{\cF}{\mathcal F}
\newcommand{\cV}{\mathcal V}
\newcommand{\cH}{\mathcal H}
\newcommand{\cW}{\mathcal W}
\newcommand{\cP}{\mathcal P}
\newcommand{\cR}{\mathcal R}
\newcommand{\cQ}{\mathcal Q}
\def\sfL{\mathsf{L}}
\newcommand{\lra}{\leftrightarrow}
\newcommand{\R}{\mathbb R}
\newcommand{\E}{\mathbb E}
\newcommand{\bS}{\mathbb S}
\newcommand{\xdot}{\dot x}
\DeclareMathOperator{\odim}{odim}
\DeclareMathOperator{\cdim}{cdim}
\DeclareMathOperator{\ndim}{ndim}
\DeclareMathOperator{\conv}{conv}
\DeclareMathOperator{\cl}{cl}
\DeclareMathOperator{\Int}{int}
\DeclareMathOperator{\code}{code}
\DeclareMathOperator{\nerve}{nerve}
\DeclareMathOperator{\link}{link}
\DeclareMathOperator{\trunk}{trunk}
\DeclareMathOperator{\sign}{sign}
\DeclareMathOperator{\FP}{FP}
\DeclareMathOperator{\ur}{urank}
\DeclareMathOperator{\monr}{mrank}
\DeclareMathOperator{\order}{order}
\DeclareMathOperator{\rank}{rank}
\DeclareMathOperator{\radr}{radrank}
\DeclareMathOperator{\vertices}{Vert}
\DeclareMathOperator{\sep}{sep}
\DeclareMathOperator{\tk}{trunk}
\newcommand{\aff}{\mathrm{aff}}
\def \shatter{\Delta_{\mathrm{sh}}}
\newcommand{\pcode}{P_{\code}}
\newcommand{\wt}{\widetilde}
\newcommand{\inv}{^{-1}}
\renewcommand{\emptyset}{\varnothing}
\title{Combinatorial geometry of neural codes, neural data analysis, and neural networks}
\author{Caitlin Lienkaemper}
\definecolor{gray75}{gray}{0.75}
\newcommand{\hsp}{\hspace{15pt}}
\titleformat{\chapter}[display]{\fontsize{30}{30}\selectfont\bfseries\sffamily}{Chapter \thechapter\hsp\textcolor{gray75}{\raisebox{3pt}{|}}}{0pt}{}{}
\titleformat{\section}[block]{\Large\bfseries\sffamily}{\thesection}{12pt}{}{}
\titleformat{\subsection}[block]{\large\bfseries\sffamily}{\thesubsection}{12pt}{}{}
\begin{document}
\pagestyle{fancy}
\fancyhead[L,C,R]{}
\fancyfoot[L,R]{}
\fancyfoot[C]{\thepage}
\renewcommand{\headrulewidth}{0pt}
\renewcommand{\footrulewidth}{0pt}
\frontmatter

%

\psutitlepage

\psucommitteepage

\thesisabstract{SupplementaryMaterial/Abstract}

\thesistableofcontents

\begin{singlespace}
\renewcommand{\listfigurename}{\sffamily\Huge List of Figures}
\setlength{\cftparskip}{\baselineskip}
\addcontentsline{toc}{chapter}{List of Figures}
\listoffigures
\end{singlespace}
\clearpage



\thesisacknowledgments{SupplementaryMaterial/Acknowledgments}


\thesismainmatter

\allowdisplaybreaks{
%

\part{Preliminaries }


\chapter{Introduction } \label{chapter1:introduction}


Neural activity both encodes information about the outside world and arises from the activity of interacting neurons. Here, we will explore how both of these things happen. Which features of the world can be recovered from neural activity alone? What is the relationship between the structure of neural connectivity and the dynamics of neural activity? Answering these questions presents mathematical challenges, which we consider here. Quantitative data is often hidden not just by noise, but by nonlinear distortion or missing information that makes traditional data analysis and modeling techniques less useful. When quantitative details are lost, combinatorial information remains. With tools from discrete geometry, we can use this combinatorial information to constrain the underlying geometry. Using nonlinear dynamics, we show how the graph structure of a neural circuit constrains its dynamics. 

The first problem we consider, in Chapters \ref{sec:codes},  \ref{chapter:order_forcing},  and \ref{chapter:matroids_codes}, concerns \emph{convex neural codes}. One common way neurons encode information about the world is via receptive field coding:  each neuron has a set of stimuli, known as its receptive field, and fires when the animal receives a stimulus within this receptive field. 
 In 2014, John O'Keefe won the Nobel prize for the 1976 discovery of \textit{place cells}, neurons whose receptive fields, termed \emph{place fields}, are regions of the environment \cite{o1976place}. When an animal is located within the place field of a given neuron, that place cell fires at an increased rate. Though receptive field coding occurs in many other contexts, we will focus on place cells as our primary example of receptive field coding. 
 
 \begin{figure}[ht!]
\includegraphics[width = 6 in]{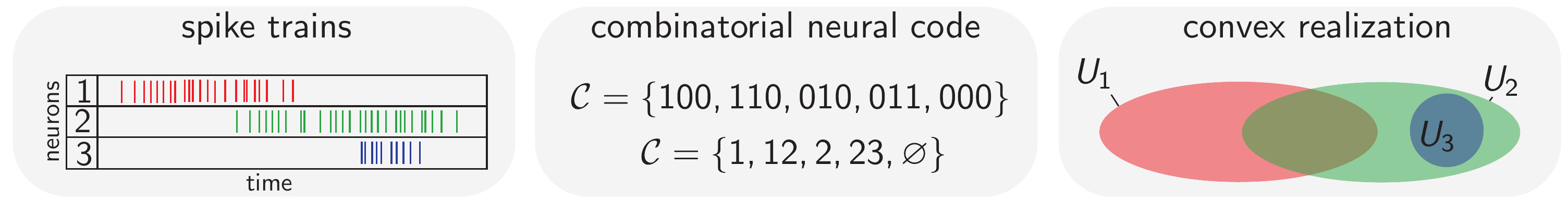}
\caption[Overview of convex neural codes.]{Convex neural codes arise from the activity of neurons with convex receptive fields. Spike trains (left) can be proccessed into a combinatorial neural code (center), which records which sets of neurons fire at the same time. This code is \emph{convex} if it is the intersection pattern of a family of convex open sets (right). Convex sets correspond to receptive fields, regions of the stimulus space to which the neurons respond.  \label{fig:cvx}}
\end{figure}

Notice that it is in principle possible to recover from neural activity alone which place fields overlap: a set of place fields has a nonempty intersection if the corresponding place cells are active at the same time. We formalize this using a simplified model of place cells in which each neuron's receptive field is a subset of Euclidean space $U_i \subseteq \R^d$. Each neuron is active if and only if the animal is located which this neuron's place field. We refer to the set of active neurons corresponding to each point in space as a codeword $\sigma \subseteq [n] := \{1, \ldots, n\}$. We can describe the set of codewords arising from a family of place fields $U_1, \ldots, U_n$  by 
\[\code(U_1, \ldots, U_n) := \{\sigma \mid \bigcap_{i\in \sigma}U_i \setminus \bigcup_{j\notin\sigma} U_j \neq \emptyset\},\]
as illustrated in Figure \ref{fig:cvx}.

Within small environments, place fields are roughly convex sets. 
A growing body of work \cite{curto2013neural, curto2017can, curto2017makes, franke2017every, cruz2019open, chen2019neural, lienkaemper2017obstructions, jeffs2019sunflowers, jeffs2020morphisms, kunin2020oriented, itskov2020hyperplane, gambacini2021non, 
chan2020nondegenerate, jeffs2021open, jeffs2022embedding, jeffs2021convex, goldrup2020classification, johnston2020neural
} explores what consequences this constraint on receptive field geometry has for combinatorial neural codes: which combinatorial neural codes arise from neurons with convex receptive fields? 
Given a combinatorial neural code $\cC\subseteq 2^{[n]}$, when do there exist convex open sets $U_1, \ldots, U_n$ such that $\cC = \code(U_1, \ldots, U_n)$? 
This question turns out to be mathematically rich: it has connections to classic problems in discrete geometry which ask when a simplicial complex is representable as the nerve of a family of convex sets in $\R^d$  \cite{kalai1984characterization, kalai1986characterization,matouvsek2009dimension, tancer2013intersection} or when an oriented matroid is representable as a hyperplane arrangement \cite{sturmfels1987decidability, mnev1988universality, shor1991stretchability}. 
In Chapter \ref{chapter:order_forcing}, we introduce \emph{order-forcing} as a technique for proving that codes are not convex. Our main result in this section is Theorem \ref{thm:order-forcing}, which gives conditions for when a list of codewords must correspond to a straight line in every convex realization of a code. We use this result to construct several new examples of non-convex codes. Material in this paper is taken from \cite{jeffs2020order}. 
In Chapter \ref{chapter:matroids_codes}, we make the connection between convex codes and oriented matroids explicit. Our main results in this section are Theorem \ref{thm:polytope_matroid}, which relates convex codes to oriented matroids via the code morphisms of \cite{jeffs2020morphisms}, and Theorem \ref{thm:hard}, which states that recognizing convex codes is computationally intractable. Material in this chapter is taken from \cite{kunin2020oriented}.

Next in Chapters \ref{chapter:urank} and \ref{chapter:urank_math}, we investigate a different aspect of the neural code, its dimensionality.  
The dimensionality of neural activity varies across experimental conditions, but is often observed to be low relative to the number of neurons recorded  \cite{gao2015simplicity}. 
Low-dimensional activity can reflect the low-dimensional input or low-dimensional intrinsic dynamics \cite{cowley2016stimulus, pang2016dimensionality, chaudhuri2019intrinsic, zhou2018hyperbolic}. 
The optimal dimensionality of neural activity is subject to computational trade-offs: higher-dimensional neural representations allow more complex readouts by downstream networks, while lower-dimensional representations have better generalization properties \cite{fusi2016neurons, farrell2019dynamic}.

However, common measurement techniques such as calcium imaging can distort firing rates in a nonlinear way, which can cause problems for typical methods of estimating dimensionality \cite{akerboom2012optimization, nauhaus2012nonlinearity}.  However, we do expect this distortion to roughly monotone, and thus to preserve the ordering between measurements. Can we use this information to estimate dimensionality?

 \begin{figure}[ht!]
\includegraphics[width = 6 in]{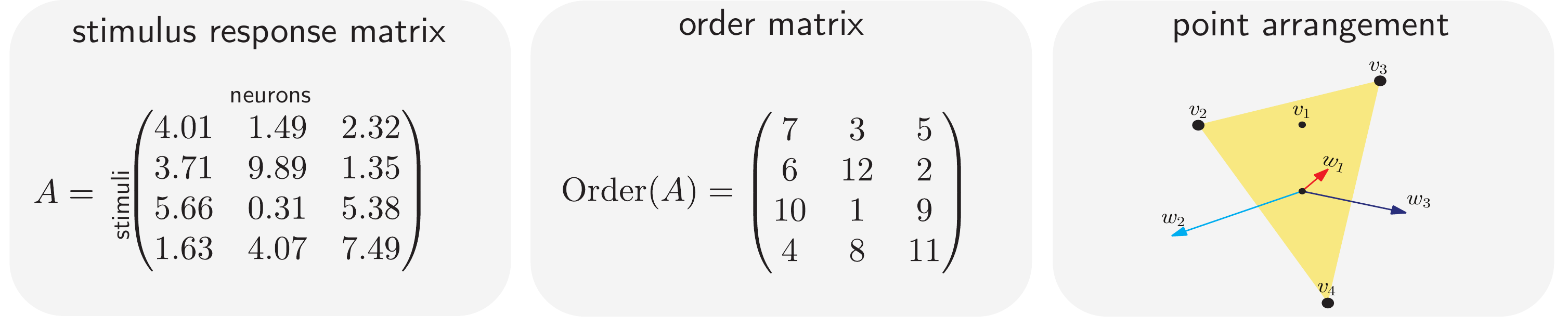}
\caption
[Overview of underlying rank.]
{The underlying rank of a matrix is the lowest rank consistent with the ordering of entries.  The only reliable information contained in neural data (left) after an unknown monotone transformation is the ordering of entries (center). We introduce techniques for estimating underlying rank by using the order matrix to infer information about an underlying point arrangement (right). \label{fig:urank}}
\end{figure}

Motivated by this problem, we introduce the  \emph{underlying rank} of a matrix  $A$ as the minimal value of $d$ such that there is a rank $d$ matrix whose entries are in the same order as those of $A$.  The general idea of using the order of entries in a matrix to determine information about geometric structure is introduced in \cite{giusti2015clique}, and is approached topologically in that paper and in \cite{curto2021betti}. See Figure \ref{fig:urank} for an overview of the underlying rank?
 We show that matrices with underlying rank $d$ correspond to point arrangements in $\R^d$, and that it is possible to recover information about this point arrangement using the ordering of entries in $A$. Much like the convex neural codes problem, underlying rank has natural connections to the theory of allowable sequences \cite{goodman1991complexity} and oriented matroids \cite{bjorner1999oriented}. In Chapter \ref{chapter:urank}, we introduce the underlying rank and some tools for estimating it. Our main contributions in this chapter are as follows: We define \emph{minimal nodes}, and prove Propositions \ref{prop:expected_cube}, \ref{prop:expected_gauss}, and \ref{prop:expected_ball}, which relate the expected number of minimal nodes to the rank of a random matrix. We also define the \emph{Radon rank} of a matrix and prove that it is a lower bound for underlying rank in Proposition \ref{prop:radon_bound}. In Chapter \ref{chapter:urank_math}, we explore underlying rank in greater mathematical detail. The main results of this chapter are Examples \ref{ex:strict} and \ref{ex:allowable}, matrices whose underlying rank exceeds their Radon rank. Example \ref{ex:strict} arises from the relationship between underlying rank and oriented matroid theory which we describe in Theorem \ref{thm:potential}. Example \ref{ex:allowable} arises from the relationship between underlying rank and allowable sequences, which we describe in Observation  \ref{obs:allowable}. We also exploit this relationship between allowable sequences and underlying rank to prove that computing underlying rank is computationally intractable in Corollary \ref{cor:urank_hard}.  
 
Finally, in Chapters \ref{chapter:TLNs1}, \ref{chapter:nullclines}, and \ref{chapter:TLNs2}, we turn to the relationship between the connectivity of a neural circuit and its dynamics. See Figure \ref{fig:ctln} for an overview of this relationship.  Different computational tasks require different patterns of network activity: for instance, central pattern generators are networks which generate the periodic activity needed for walking, breathing, and other rhythmic activities. Viewed as dynamical systems, these networks need to have limit cycles. On the other hand, networks whose activity always converges to a stable fixed point, such as the Hopfield model, are used to model pattern completion in memory. How do different patterns of network connectivity contribute to these different types of activity? 

In general, this question is difficult because neural circuits have nonlinear dynamics. We thus consider a simple, nonlinear model, threshold-linear networks (TLNs). In a TLN, the firing rate $x_i$ of neuron $i$ is determined by
\begin{align} \label{eqn:tln}
\frac{dx_i}{dt} &= -x_i + \left[\sum_{j = 1}^n W_{ij}x_j + b_i\right]_+,
\end{align}
where $\left[y\right]_+ = \max\{y, 0\}$. 
To isolate the role of connectivity, we consider a further restriction to combinatorial threshold-linear networks (CTLNs), whose dynamics are fully determined by a directed graph. A fair amount is known about how the stable and unstable fixed points of combinatorial threshold-linear networks are constrained by the graph, but less is known about their dynamic attractors more generally. In particular if a TLN is symmetric, then all trajectories approach stable fixed points by \cite{hahnloser2000permitted}. At the opposite extreme, if a graph has no bidirectional edges and no sinks, its CTLN  has no stable fixed points, and thus must have a dynamic attractor.

\begin{figure}[ht!]
\includegraphics[width = 6 in]{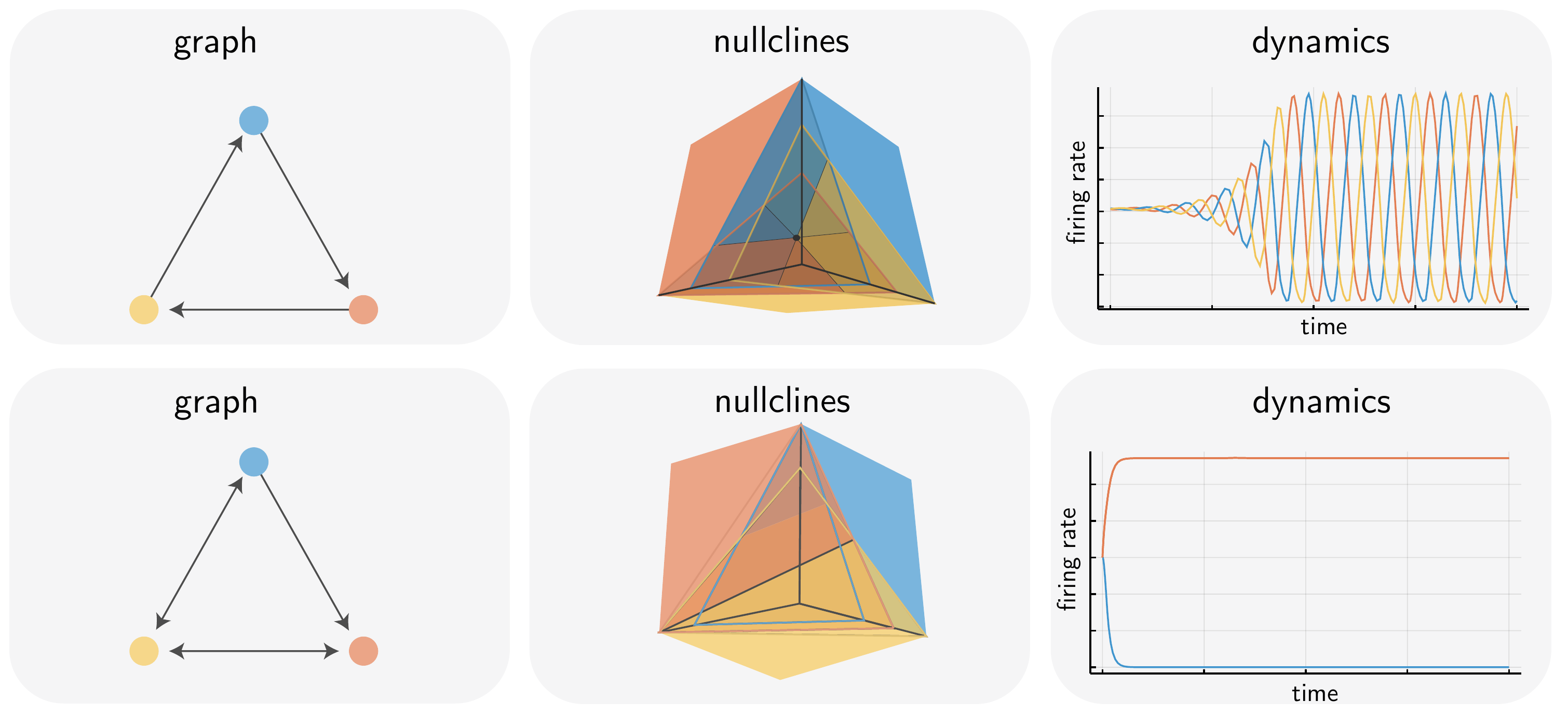}
\caption
[Overview of Combinatorial Threshold Linear Networks]
{CTLNs are dynamical systems determined by directed graph. The directed graph of a CTLN (left) constrains the nullcline arrangement, which in turn constrains the dynamics.  \label{fig:ctln}}
\end{figure}

 Here, we begin to classify which CTLNs have dynamic attractors, which do not. In particular, we prove that the CTLN of a directed acyclic graph must have all trajectories converge to a fixed point. This result can be combined with the result about symmetric threshold-linear networks to prove that if a graph can be decomposed into a directed acyclic graph with edges onto a symmetric graph, all trajectories of its CTLN converge to a fixed point. Our results are sufficient to classify which three neuron graphs have dynamic attractors. Chapter \ref{chapter:TLNs1} gives an introduction to CTLNs. Chapter \ref{chapter:nullclines}  explores the nullcline arrangements of TLNs. The main result in this chapter is Theorem \ref{thm:mixed_sign}, which states that all trajectories of a competitive TLN approach in a small region defined by the nullclines. This results in Corollary \ref{cor:total_pop}, which bounds the total population activity of a TLN in terms of the weights.  The bulk of our main results about CTLNs appear in Chapter \ref{chapter:TLNs2}. In particular, this chapter contains Theorems \ref{thm:dag}  and  \ref{thm:dag_onto_symmetric}, which give conditions which guarantee that all trajectories of a CTLN converge to a fixed point. Theorem \ref{thm:dag} covers the case of directed acyclic graphs, while Theorem \ref{thm:dag_onto_symmetric} strengthens this result to include graphs which contain a directed acyclic part and a symmetric part arranged in a particular way. 

The structure of this dissertation is as follows: in Chapter \ref{chapter:combo_background}, we give the common background on convex sets, hyperplane arrangements, point arrangements, and oriented matroids required for the subsequent chapters. In Part II, Chapters \ref{sec:codes}, \ref{chapter:order_forcing}, and \ref{chapter:matroids_codes}, we discuss convex neural codes. In Part III, Chapters \ref{chapter:urank} and \ref{chapter:urank_math} we discuss underlying rank.  In Part IV, Chapters \ref{chapter:TLNs1}, \ref{chapter:nullclines}, and \ref{chapter:TLNs2}, we discuss threshold-linear networks. 

My novel contributions are concentrated in Chapters 4, 5, 6, 7, 9, and 10. The results in Chapters 4 and 5 can be found in the papers \emph{Order Forcing in Neural Codes}, written with Amzi Jeffs and Nora Youngs \cite{jeffs2020order} and \emph{Oriented Matroids and Neural Codes}, written with Alexander Kunin and Zvi Rosen \cite{kunin2020oriented}. The results on underlying rank in Chapters 6 and 7 and the results on TLNs in Chapters 9 and 10 are currently being written up for publication.


 \chapter{Combinatorial Background} \label{chapter:combo_background}

In this chapter, we give background information on the combinatorial objects and theorems which are used here. 
In particular, we discuss convex neural codes, hyperplane arrangements, and oriented matroids. 
We use the notation $[n] := \{1, \ldots, n\}$, and use $2^{[n]} := \mathcal P ([n])$ to denote the powerset of $[n]$. 

\section{Intersection patterns of convex sets }

The \emph{nerve} of a cover records the intersection pattern of a family of sets. 

\begin{defn}
Let $\cU = \{U_1, \ldots, U_n\}$ be a family of subsets of a set $X$. The \emph{nerve} of $\cU$, denoted $\nerve(\cU)$, is the simplicial complex 
\begin{align*}
\nerve(\cU) := \{\sigma \subseteq [n] \mid \bigcap_{i\in \sigma} U_i \neq \varnothing\}.
\end{align*}
We use the notation $U_\sigma := \bigcap_{i\in \sigma} U_i$, with $U_\emptyset = X$. 
\end{defn}

Notice that $\nerve(\cU)$ records less detail about a family of sets than $\code(\cU)$, defined in the previous section. 
Various versions of the nerve theorem, which relate the topology of the nerve to that of the underlying space, were proved in  \cite{ borsuk1948imbedding, leray1950anneau, weil1952theoremes}.  
The version of the nerve theorem we use is \cite[Corollary 4G.3]{hatcher2001algebraic}, and holds when the members of $\cU$ form a good cover. 

\begin{defn}
A family of sets $\cU = \{U_1, \ldots, U_n\}$ is a \emph{good cover} if for all $\sigma \subseteq [n]$, $U_\sigma$ is either empty or contractible. 
\end{defn}

Notice that, since intersections of convex sets are convex, and convex sets are contractible, any collection of convex sets forms a good cover.

\begin{thm}[The Nerve Lemma \cite{hatcher2001algebraic}]
Let $\cU = \{U_1, \ldots, U_n\}$ be a good cover, with all sets open or all sets closed. Then $\nerve(U)$ is homotopy equivalent to $\bigcup_{i = 1}^n U_i$. 
\end{thm}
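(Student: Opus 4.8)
The plan is to realize both $\bigcup_{i=1}^n U_i$ and the geometric realization $|\nerve(\cU)|$ as homotopy equivalent to a single intermediate space, the \emph{Mayer--Vietoris blowup} (equivalently, the homotopy colimit of the diagram of sets $U_\sigma$ and their inclusions). Concretely, for each nonempty $\sigma \subseteq [n]$ with $U_\sigma \neq \emptyset$ let $\Delta^\sigma$ denote the simplex spanned by the vertices of $\sigma$, and form
\[
B \;=\; \Bigl(\coprod_{\emptyset \neq \sigma \in \nerve(\cU)} U_\sigma \times \Delta^\sigma \Bigr)\Big/\!\sim,
\]
where for $\sigma \subseteq \tau$ the face $U_\tau \times \Delta^\sigma \hookrightarrow U_\tau \times \Delta^\tau$ is glued to $U_\sigma \times \Delta^\sigma$ along the inclusion $U_\tau \hookrightarrow U_\sigma$. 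There are two tautological maps, $\pi : B \to \bigcup_i U_i$, $(x,t)\mapsto x$, and $\rho : B \to |\nerve(\cU)|$, $(x,t)\mapsto t$, and the theorem follows once each of $\pi$ and $\rho$ is shown to be a homotopy equivalence.

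For $\rho$: the point-preimage over a point in the relative interior of $\Delta^\sigma$ is $U_\sigma$, which is contractible by the good cover hypothesis. I would show $\rho$ is a homotopy equivalence by induction on the number of simplices of $\nerve(\cU)$, attaching one simplex $\Delta^\sigma$ at a time: at each stage the new piece maps to $\Delta^\sigma$ by the projection $U_\sigma \times \Delta^\sigma \to \Delta^\sigma$ off a contractible factor, and the gluing lemma (using that face inclusions of simplices are cofibrations) propagates the homotopy equivalence. Equivalently, $\rho$ fibrewise collapses each contractible $U_\sigma$, which one checks is a deformation retraction over the skeleta of $\nerve(\cU)$.

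For $\pi$: here I would induct on $n$. Deleting the index $n$ gives a subcover $\cU' = \{U_1,\dots,U_{n-1}\}$ with union $X' = \bigcup_{i<n} U_i$ and blowup $B'$; the part of $B$ lying over the closed star of the vertex $n$ is a cone that deformation retracts onto $U_n \times \{n\} \cong U_n$ (collapse the simplex factors toward $n$, carrying the $U_\sigma \subseteq U_n$ along), and the overlap of this star part with $B'$ is exactly the blowup of the cover $\{\,U_i \cap U_n : i < n,\ U_i\cap U_n\neq\emptyset\,\}$ of $X'\cap U_n$, which by induction (fewer sets) maps by a homotopy equivalence to $X'\cap U_n$. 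Since $B = B' \cup(\text{star part})$ and $X = X' \cup U_n$ are (homotopy) pushouts of these compatible pieces, the gluing lemma for homotopy pushouts yields that $\pi$ is a homotopy equivalence. Composing, $\bigcup_i U_i \xleftarrow{\ \pi\ } B \xrightarrow{\ \rho\ } |\nerve(\cU)|$ exhibits the desired homotopy equivalence.

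The main obstacle is point-set bookkeeping in the gluing steps: the gluing lemma for homotopy equivalences requires the relevant inclusions to be cofibrations. When all members of $\cU$ are closed subcomplexes of a fixed CW (or simplicial) structure this is automatic; when all members are open, $X = X' \cup U_n$ with $X' \cap U_n$ open is already a homotopy pushout, so one works with homotopy pushouts throughout (or, alternatively, replaces $\pi$ by the partition-of-unity map $f : X \to |\nerve(\cU)|$, $f(x) = \sum_i \phi_i(x)[v_i]$, and runs the same fibrewise induction to show $f$ is an equivalence). Handling the open and closed cases uniformly — and, in the open case, supplying the mild paracompactness needed for a subordinate partition of unity — is the only genuinely delicate point; everything else is the formal ``good cover $\Rightarrow$ both projections off the blowup are equivalences'' argument, as carried out in \cite[\S4G]{hatcher2001algebraic}.
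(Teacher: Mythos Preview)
The paper does not prove this statement; it is quoted as background and attributed to \cite[Corollary~4G.3]{hatcher2001algebraic}. Your sketch via the Mayer--Vietoris blowup $B$ with the two projections $\pi$ and $\rho$ is exactly the argument Hatcher gives in \S4G, so there is nothing to compare --- you have reproduced the intended proof. The only caveat is that Hatcher's version is stated for open covers (where a partition of unity gives the map $X \to |\nerve(\cU)|$ directly); the closed case as stated here needs the additional hypothesis or argument you allude to (e.g.\ that the closed sets are subcomplexes of some CW structure, or some other cofibration condition), and you are right to flag that as the one genuinely delicate point.
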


Every simplicial complex arises as the nerve of a family of convex sets \cite{tancer2013intersection}. 
However, characterizing the dimension required is more complicated. A simplicial complex $\Delta$ is $d$-representable if $\Delta = \nerve(\cU)$ where $\cU$ is a family of convex open sets in $\R^{d}$. 

A classic theorem in the vein is Helly's theorem, which constrains the intersection pattern of convex sets in $\R^d$. 

\begin{thm}[Helly's theorem \cite{helly1923mengen}]
Let $\cU = \{U_1, \ldots, U_n\}$ be a family of convex sets in $\R^{d}$. 
Define $U_\sigma = \bigcap_{i\in \sigma} U_i$.  
Then if for each $\sigma \subseteq [n]$ with $|\sigma | = d+1$ has $U_\sigma \neq \emptyset$, then $U_{[n]} \neq \emptyset$. 
\end{thm}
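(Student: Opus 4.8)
The plan is to prove Helly's theorem by induction on the number $n$ of sets, with Radon's theorem serving as the key engine. So the first step is to record (or quickly establish) \textbf{Radon's theorem}: any $d+2$ points in $\R^d$ can be partitioned into two subsets whose convex hulls have a common point. This is a short linear-algebra fact — $d+2$ points in $\R^d$ are affinely dependent, so there is a nontrivial relation $\sum_i \lambda_i p_i = 0$ with $\sum_i \lambda_i = 0$; splitting the indices by the sign of $\lambda_i$ and normalizing the positive and negative parts separately exhibits a point in $\conv\{p_i : \lambda_i > 0\} \cap \conv\{p_i : \lambda_i < 0\}$.

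\textbf{The induction.} We may assume $n \ge d+1$; the base case $n = d+1$ is exactly the hypothesis applied with $\sigma = [n]$. For the inductive step, suppose $n \ge d+2$ and that the statement holds for every family of $n-1$ convex sets in $\R^d$ satisfying the $(d+1)$-wise intersection hypothesis. For each $i \in [n]$, the subfamily $\{U_j : j \ne i\}$ still satisfies the hypothesis (it is a subfamily of $\cU$), so by the inductive hypothesis we may choose a point $p_i \in \bigcap_{j \ne i} U_j$. Apply Radon's theorem to $p_1, \dots, p_{d+2}$ to obtain a partition $I \sqcup J = \{1, \dots, d+2\}$ together with a point $q \in \conv\{p_i : i \in I\} \cap \conv\{p_j : j \in J\}$.

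\textbf{Verifying $q$ is in every $U_k$.} Fix $k \in [n]$. Since $I$ and $J$ are disjoint subsets of $\{1, \dots, d+2\}$, at least one of them does not contain $k$; say $k \notin I$. Then every $i \in I$ satisfies $i \ne k$, so $p_i \in \bigcap_{j \ne i} U_j \subseteq U_k$. By convexity of $U_k$ we get $\conv\{p_i : i \in I\} \subseteq U_k$, hence $q \in U_k$. As $k$ was arbitrary, $q \in \bigcap_{k \in [n]} U_k = U_{[n]}$, so $U_{[n]} \ne \emptyset$, completing the induction.

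\textbf{Main obstacle.} There is no serious obstacle: the only non-bookkeeping ingredient is Radon's theorem, and that reduces to the observation that $d+2$ vectors cannot be affinely independent in a $d$-dimensional affine space. The one subtlety worth stating carefully is the final case analysis — that disjointness of $I$ and $J$ forces each index $k$ to be missing from one of the two parts — since this is precisely where Radon's partition structure is converted into a single point lying in all $n$ sets simultaneously.
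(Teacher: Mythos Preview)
Your argument is correct and is the classical proof of Helly's theorem via Radon's theorem. One small point worth making explicit: when you fix $k \in [n]$ with $k > d+2$, the index $k$ lies in neither $I$ nor $J$, so the clause ``at least one of them does not contain $k$'' holds trivially; your wording covers this, but a reader might appreciate seeing the two cases $k \le d+2$ and $k > d+2$ separated.

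The paper itself does not give a proof of Helly's theorem; it simply states the result with a citation to Helly's original 1923 paper. The only hint at an argument is the remark immediately following the statement, that Helly's theorem ``is a consequence of the nerve theorem.'' That is a genuinely different route from yours: one takes $d+2$ convex sets whose $(d+1)$-wise intersections are all nonempty and whose total intersection is empty, observes that the nerve is then the boundary of a $(d+1)$-simplex (homotopy equivalent to $S^d$), and derives a contradiction from the Nerve Lemma since an open subset of $\R^d$ cannot have the homotopy type of $S^d$. Your Radon-based proof is more elementary and self-contained (it needs only affine dependence and convexity), whereas the nerve-theorem route generalizes immediately to good covers rather than just convex sets, which is the form the paper actually cares about in later chapters.
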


Interpreted as a result about $d$-representability, Helly's theorem states that if a $d$-representable simplicial complex contains every $d$-simplex, then it is a simplex. 
Helly's theorem holds when the family of convex sets is replaced with a good cover, and is a consequence of the nerve theorem. 
 More general results in this vein exist, characterizing $f$-vectors of $d$-representable complexes \cite{kalai1984characterization, kalai1986characterization, kalai1984intersection}.
 Other results characterize $d$-representable complexes topologically and combinatorially: $d$-representable complexes must be $d$-collapsible \cite{wegner1975d}. 

In particular, the problem of determining the minimal dimension $d$ such that $\Delta$ arises as the nerve of convex sets in $\R^d$ is is NP-hard \cite{tancer2010d}. 
Further, the minimal dimension   
$d$ for which  $\Delta$ arises as the nerve of a good cover may be lower than that in which $\Delta$ arises as the nerve of a family of convex sets \cite{tancer2010counterexample}. 

Note that if a combinatorial code $\cC$ is a simplicial complex, then convexity in a certain dimension corresponds to $d$-representability. Thus, the theory of convex codes which we will discuss in Chapters \ref{sec:codes}, \ref{chapter:order_forcing}, and \ref{chapter:matroids_codes} strictly generalizes the theory of $d$-representability.

\section{Hyperplane and point arrangements}

Hyperplane arrangements and point arrangements are both ways of giving a geometric structure to the relationships between the columns of a matrix. 
We will make extensive use of hyperplane arrangements, via oriented matroid theory, in Chapter \ref{chapter:matroids_codes}, and will use them to a lesser extent in Chapters \ref{chapter:TLNs1}, \ref{chapter:nullclines}, and \ref{chapter:TLNs2}. We will use point arrangements heavily in Chapters \ref{chapter:urank} and \ref{chapter:urank_math}.

Arrangements of hyperplanes, and the half-spaces they define, form an important special case of arrangements of convex sets. 
A vector $h\in \R^{d}$ defines a \emph{hyperplane} $H$ and two open half-spaces $H^+$ and $H^-$ by 
\begin{align*}
H := \{x\mid h\cdot x = 0\}&&
H^+ := \{ x \mid h \cdot  x > 0\}&&
H^- := \{ x \mid h\cdot x < 0\}.
\end{align*}
$H^+$ and $H^-$ are referred to as the positive and negative sides of $H$, respectively. 
A set of hyperplanes is called a \emph{hyperplane arrangement}.
A hyperplane arrangement is \emph{essential} if the matrix with columns $h_1, \ldots, h_n$ has rank $d$. Notice that under this definition, all hyperplanes meet at the origin. We can also define \emph{affine} hyperplane arrangements. 
An affine hyperplane is defined by 
\begin{align*}
H := \{x\mid h\cdot x + b = 0\}&&
H^+ := \{ x \mid h \cdot  x + b > 0\}&&
H^- := \{ x \mid h\cdot x +b < 0\}.
\end{align*}
An arrangement which is not affine as \emph{central}. 
We can translate between central and affine hyperplane arrangements, embedding any affine arrangement in $\R^d$ as a central arrangement in $\R^{d+1}$. 
More specifically, let $h = (h_1, \ldots, h_d), b$ define an affine hyperplane. 
Then $\hat h =  (h_1, \ldots, h_d, b)$ defines a central hyperplane in $\R^{d+1}$. 
Restricting to the plane $x_{d+1} = 1$ recovers our original affine hyperplane arrangement.

An arrangement of $n$ hyperplanes in $\R^{d}$ divides space into a union of at most  $2^{n}$ full dimensional chambers.
Each of these chambers corresponds to a facet of $\nerve(H_1^+, H_1^-, \ldots, H_n^+, H_n^-)$, a simplicial complex on the vertex set $\{1, \ldots, n\} \cup \{\bar 1, \ldots, \bar n\}$ known as the polar complex in \cite{itskov2020hyperplane}.  A set $\sigma\cup\tau$, $\sigma\subseteq \{1, \ldots, n\}, \tau\subseteq \{\bar 1, \ldots, \bar n\}$ is a face of $\nerve(H_1^+, H_1^-, \ldots, H_n^+, H_n^-)$ if and only if there is some point $p\in \R^{d}$ such that $h_i \cdot p > 0$ for $i\in \sigma$, $h_j\cdot p <0$ for $\bar j\in \tau$.  While every simplicial complex arises as the nerve of some arrangement of convex sets, this is not true when we replace ``convex sets" with half spaces. In general, it is difficult to determine whether a simplicial complex is the nerve of an arrangement of half-spaces. On the other hand, it is possible to recover the dimension of an essential hyperplane arrangement from its nerve.  

To see this, we notice that the half spaces $H_1^+, H_1^-, \ldots, H_n^+, H_n^-$ cover all of $\R^d$, except for the point $\bigcap_{i = 1}^n H_i$. 
Thus, the nerve of a central, essential arrangement in $\R^{d}$ has the homotopy type of a $d-1$ sphere, while the nerve of an affine arrangement is contractible. 
Notice that when the maximal value of $2^{n}$ chambers is achieved,  $\nerve(H_1^+, H_1^-, \ldots, H_n^+, H_n^-)$ is a $n$-dimensional cross polytope. 
This means that the maximal value can be achieved only when $d = n$ and $H_1,  \ldots, H_n$ is an essential arrangement. 

If  $H_1, \ldots, H_n$  are an essential hyperplane arrangement in $\R^{d}$, then there is some $\sigma \subseteq [n]$ such that the normal vectors to $\{H_i\}_{i\in \sigma}$ span $\R^{d}$. Then restricting the nerve to this subset produces a $d$-dimensional cross polytope, with $2^{d}$ facets. If we restrict to any larger set of hyperplanes, there must be some ``missing facet". The complex  $\nerve(H_1^+, H_1^-, \ldots, H_n^+, H_n^-)$ is studied in more detail in \cite{itskov2020hyperplane}, in the context of convex neural codes. 

The same numerical data used to define a hyperplane arrangement may also be taken to define a point arrangement $p_1, \ldots, p_n \in \R^d$. 
We can describe the combinatorial structure of a point arrangement in terms of which sets of points can be separated with hyperplanes. In particular, there is an affine hyperplane separating the points 
$\{p_i\}_{i\in \sigma}$, $\{p_j\}_{i\in \tau}$ if and only if there is some vector $h \in \R^{d+1}$ such that $h \cdot (p_i, 1) > 0$ for $i\in \tau$, $h \cdot (p_j, 1) < 0$ for $j\in \sigma$. Notice that this is the same condition for $(\sigma, \tau)$ to be the chamber of a hyperplane arrangement. 
Thus, we can also determine dimension of a point arrangement from the partitions of points which can be achieved with a hyperplane--for any set of at $n$ points in $\R^{d}$, if $n \geq d + 2$, there is some partition of the points which cannot be achieved with a hyperplane.

This observation is equivalent to \emph{Radon's theorem}. Notice that $\conv_{i\in \sigma}\{x_i\} \cap \conv_{j\notin\sigma}\{x_j\} = \emptyset$ if and only if there is an affine hyperplane $H$ which separates $\{x_i\}_{i\in \sigma}$ from $\{x_j\}_{j\notin\sigma}$.

\begin{thm}[Radon's Theorem \cite{radon1921mengen}]
If $p_1, \ldots, x_n$ are points in $\R^{d}$, and $n \geq d+1$, then there exists a \emph{Radon partition} $\sigma \cup \tau =[n]$ such that $\sigma\cup \tau = \emptyset$, but $\conv_{i\in \sigma}\{x_i\} \cap \conv_{j\notin\sigma}\{x_j\} \neq \emptyset$. 
\end{thm}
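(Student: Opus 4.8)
The plan is to reduce the statement to a fact of linear algebra: once there are more points than are needed to affinely span $\R^d$, the points $p_1,\dots,p_n$ satisfy a nontrivial affine dependence, and the Radon partition can be read off directly from the signs of the coefficients of such a dependence. (We use the hypothesis in the form $n \ge d+2$, so that the lifted vectors below are genuinely linearly dependent.) As the remark just before the statement indicates, the existence of a Radon partition is equivalent to the \emph{non}-existence of a hyperplane separating $\{p_i\}_{i\in\sigma}$ from $\{p_j\}_{j\in\tau}$ for some bipartition, but rather than argue by contradiction through separating hyperplanes I would give the explicit construction, which also exhibits the common point.

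First I would lift to homogeneous coordinates, setting $\hat p_i := (p_i, 1) \in \R^{d+1}$ for each $i \in [n]$. Since there are $n \ge d+2$ of these vectors in the $(d+1)$-dimensional space $\R^{d+1}$, they are linearly dependent: there exist scalars $\lambda_1,\dots,\lambda_n$, not all zero, with $\sum_{i=1}^n \lambda_i \hat p_i = 0$. Reading off the last coordinate gives $\sum_{i=1}^n \lambda_i = 0$, and the first $d$ coordinates give $\sum_{i=1}^n \lambda_i p_i = 0$.

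Next I would split the index set by the sign of the coefficients: let $\sigma := \{i \in [n] : \lambda_i > 0\}$ and $\tau := [n]\setminus\sigma = \{j \in [n] : \lambda_j \le 0\}$. Because the $\lambda_i$ are not all zero and sum to zero, at least one of them is strictly positive and at least one is strictly negative; hence $\sigma \ne \emptyset$, and $\tau$ contains the strictly negative indices so $\tau \ne \emptyset$. By construction $\sigma \cap \tau = \emptyset$ and $\sigma \cup \tau = [n]$. Now set $S := \sum_{i\in\sigma}\lambda_i$, so $S > 0$, and note $\sum_{j\in\tau}(-\lambda_j) = S$ since the full sum vanishes. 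Define $q := \sum_{i\in\sigma}\frac{\lambda_i}{S}\,p_i$; the weights $\lambda_i/S$ for $i\in\sigma$ are nonnegative and sum to $1$, so $q \in \conv_{i\in\sigma}\{p_i\}$. Rearranging $\sum_i \lambda_i p_i = 0$ as $\sum_{i\in\sigma}\lambda_i p_i = \sum_{j\in\tau}(-\lambda_j)p_j$ and dividing by $S$ shows $q = \sum_{j\in\tau}\frac{-\lambda_j}{S}\,p_j$ as well, a convex combination of $\{p_j\}_{j\in\tau}$, so $q \in \conv_{j\in\tau}\{p_j\}$. Therefore $q$ lies in $\conv_{i\in\sigma}\{p_i\}\cap\conv_{j\in\tau}\{p_j\}$, which is thus nonempty, and $(\sigma,\tau)$ is the desired Radon partition.

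There is essentially no deep obstacle here; the argument is short linear algebra together with bookkeeping on the coefficients. The one point that genuinely requires care is checking that both blocks of the partition are nonempty — this is exactly where I use that the affine dependence is nontrivial and that its coefficients sum to zero; a mere linear dependence among the $p_i$ would not guarantee coefficients of both signs. The only other thing to be careful about is the normalization by $S$ (and the identity $\sum_{j\in\tau}(-\lambda_j)=S$) so that each side really is a convex, not just affine, combination.
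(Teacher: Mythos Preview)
Your proof is correct and is the standard argument for Radon's theorem via affine dependence. Note, however, that the paper does not actually prove this statement: it is cited as a classical result from \cite{radon1921mengen} in the background chapter, with no proof given. So there is nothing in the paper to compare your argument against; you have simply supplied the well-known proof that the paper omits. You also correctly observed that the hypothesis should read $n \ge d+2$ rather than $n \ge d+1$ (and that $\sigma \cap \tau = \emptyset$ is intended), consistent with the paper's own discussion immediately preceding the theorem.
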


The bound provided by Radon's theorem is tight: $d+1$ affinely independent points in  $\R^{d}$ have no Radon partition. 
In Chapters \ref{chapter:urank} and \ref{chapter:urank_math}, we use Radon's theorem as a method for estimating underlying rank. 

\section{Oriented Matroids}
\label{sec:oriented_matroid_intro}

Oriented matroid theory is a powerful tool in discrete geometry which we use in Chapters \ref{chapter:matroids_codes} and \ref{chapter:urank}. Oriented matroids abstract and generalize the properties of hyperplane arrangements and point arrangements. Here, we provide a short overview of oriented matroid theory. See \cite{bjorner1999oriented} for a comprehensive reference.

\subsection{Covector Axioms}
Much like the code of a cover records the combinatorial information about how a family of convex sets overlap, an oriented matroid records combinatorial information about a hyperplane arrangement. In fact, we can see the oriented matroid of a hyperplane arrangement as a special case of a convex neural code, as illustrate in Figure \ref{fig:OMandCodeExample}.

\begin{figure}[ht!]
\begin{center}
  \includegraphics[width = 5 in]{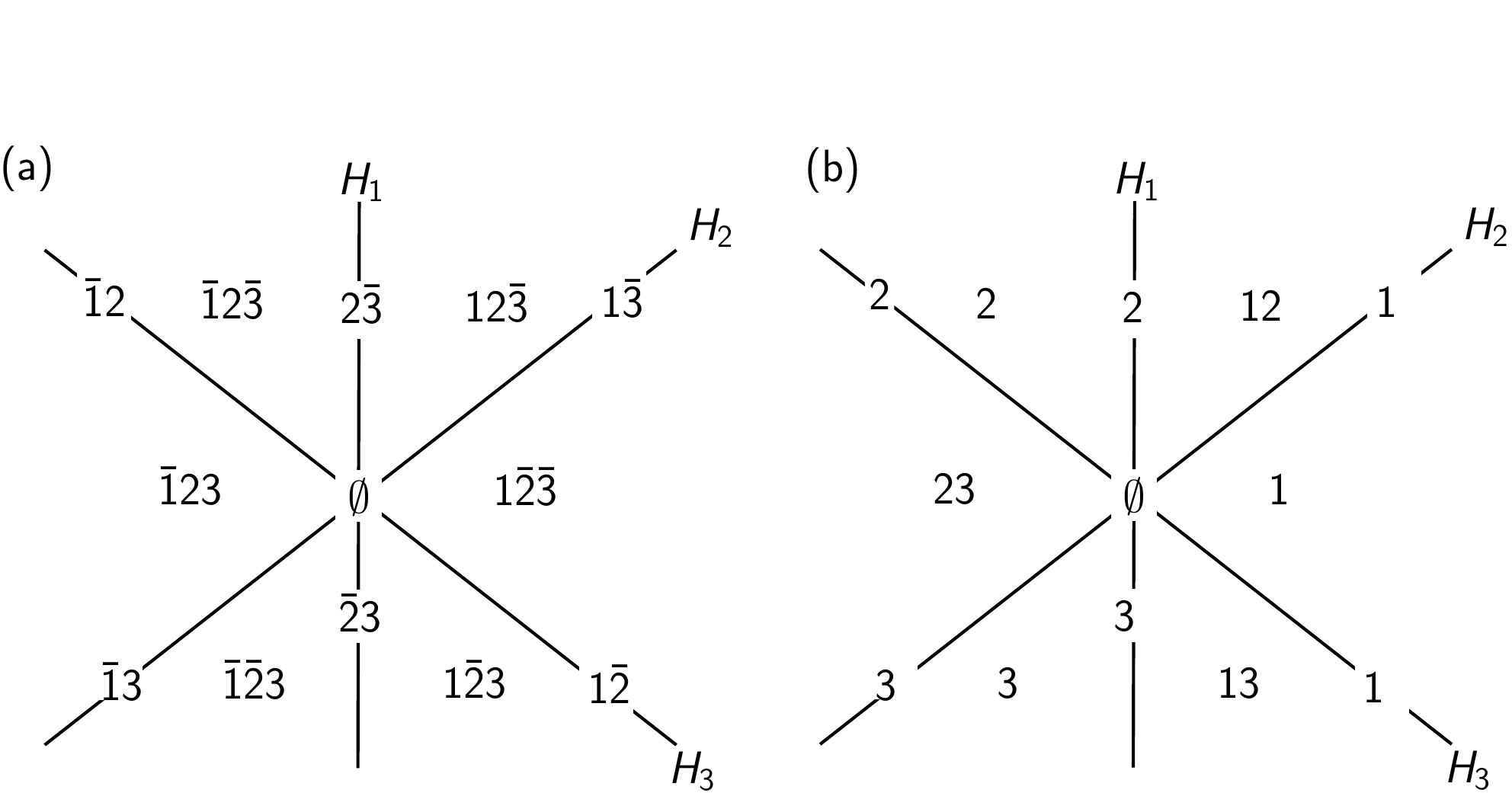}
  \caption[Oriented matroids and neural codes from hyperplane arrangements.]{(a)~The covectors of an oriented matroid arising from a central hyperplane arrangement. The topes are $\bar 12\bar 3, 12\bar 3, 1\bar2\bar3, 1\bar23, \bar1\bar23$ and $\bar123$ (b)~The combinatorial code of the cover given by the positive open half-spaces.}
  \label{fig:OMandCodeExample}
  \end{center}
\end{figure}

A central hyperplane arrangement $\cH$ divides $\R^d$ into a set of polyhedral chambers. The natural labels assigned to these chambers form the \emph{covectors} of a \emph{representable oriented matroid}. These labels can be written as sign vectors, i.e. elements of $\{+, -, 0\}^n$.  
We can assign each point $x\in \R^d$ to a sign vector by 
  \begin{align*}
    X_i = \begin{cases}
      ~ + &\mbox{ if }  h^*_i(x)  > 0\\
     ~ - &\mbox{ if }  h^*_i(x)  < 0\\
    ~  0 &\mbox{ if }  h^*_i(x)  = 0.
    \end{cases}
  \end{align*}
The family $\cL(\cH)$ of sign vectors which arise in this way is known as the set of \emph{covectors} of the oriented matroid $\cM(\cH) = ([n], \cL(\cH))$. The covectors of top-dimensional cells are called \emph{topes} of $\cM$. 

Notice that $\cL(\cH)$ records the same information as $\code(H_1^+, \ldots, H_n^+, H_1^-, \ldots, H_n^-)$. We can clarify this with alternate notation for sign vectors: defining $\pm[n] = [n]\cup \bar[n] := \{1, \ldots, n\} \cup \{\bar 1, \ldots, \bar n\}$, we can write the sign vector $X$ as the set $X := \{i\mid X_i = +\} \cup \{\bar i\mid X_i = -\}$. 
In this notation, 
\begin{align*}
\cL(\cH) = \code(H_1^+, \ldots, H_n^+, H_1^-, \ldots, H_n^-) \subseteq 2^{\pm[n]}.
\end{align*}

\begin{figure}[ht!]
\begin{center}
\includegraphics[scale = 0.75]{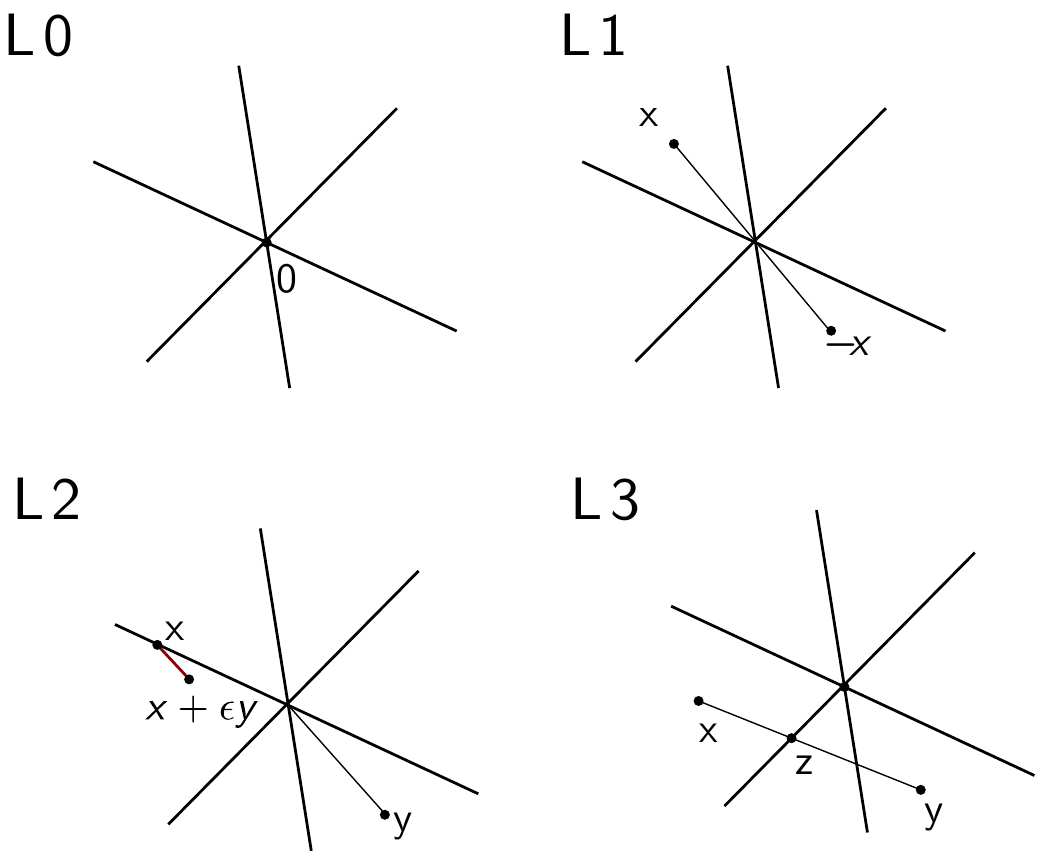}
\caption[Geometric interpretation of the covector axioms.]{Geometric interpretation of the covector axioms.}
\label{fig:axioms}
\end{center}
\end{figure}

The set $\cL(\cH)$ satisfies a list of axioms know as the \emph{covector axioms for oriented matroids}. Oriented matroids are defined in general via these axioms. In order to state them, we introduce some more notation. The \emph{support} of a sign vector $X$ is the set
$\underline X := \{ i \mid X_i \neq 0\}$. 
The \emph{positive part} of $X$ is $X^+ := \{i \mid X_i = +\}$ and the
\emph{negative part} is $X^- := \{i \mid X_i = -\}$. 
The \emph{composition} of sign vectors $X$ and $Y$ is defined component-wise by
\begin{align*}
  (X\circ Y)_i := 
  \begin{cases}
    X_i \mbox{ if } X_i\neq 0\\
    Y_i \mbox{ otherwise}.
  \end{cases}
\end{align*}
The \emph{separator} of $X$ and $Y$ is the unsigned set $\sep(X, Y) := \{ i\mid X_i = -Y_i\neq 0\}$. 
\begin{defn}
  Let $E$ be a finite set, and $\cL \subseteq 2^{\pm E}$ a collection of sign vectors satisfying the following
  \emph{covector axioms}:
\begin{enumerate}[(L1)]
\item\label{axiom:emptysetV} $\emptyset \in \mathcal L$
\item\label{axiom:symmetryV} $X\in \mathcal L $ implies $-X\in \mathcal L$. 
\item\label{axiom:compositionV} $X, Y\in \mathcal L$ implies $X\circ Y\in \mathcal L$. 
\item\label{axiom:crossingV} If $X, Y\in \mathcal L$ and $e \in \sep(X,Y)$, then there exists $Z\in \mathcal L$ such that $Z_e = 0$ and $Z_f = (X\circ Y)_f = (Y\circ X)_f$ for all $f\notin \sep(X, Y)$. 
\end{enumerate}
Then, the pair $\cM = (E, \cL)$ is called an \emph{oriented matroid}, and $\cL$ its set of covectors.
\end{defn}

For any hyperplane arrangement,  $\cL(\cH)$  must satisfy all of the covector axioms, thus the oriented matroid of a hyperplane arrangement is, in fact, an oriented matroid. We give a geometric interpretation of each covector axiom in Figure \ref{fig:axioms}. An oriented matroid $\cM$ is \emph{representable} if there exists a hyperplane arrangement $\cH$ such that $\cM = \cM(\cH)$. We discuss representability in more detail in Section \ref{sec:rep}. 

We can view $\cL$ as a poset, with the covectors partially ordered by inclusion. By adjoining a top element $\hat 1$ with $X \leq \hat 1$ for all $X\in \cL$, we can construct the \emph{face lattice} of the oriented matroid, $\cL \cup \{\hat 1\}$. Notice that in the realizable case, traversing upwards in this face lattice corresponds to moving from a lower dimensional cell to an adjacent cell of one dimension higher. Thus, the height of this poset tracks the dimension of the space. This lets us define the \emph{rank} of a matroid as 

\[\rank(\cL) = \mathrm{height}\, (\cL \cup \hat 1) -2.\]

In the realizable case, $\rank(\cL(\cH))$ recovers the rank of the matrix whose columns are the normal vectors to the hyperplanes in $\cH$. 

We can use oriented matroids to describe affine hyperplane arrangements as well. In particular, an \emph{affine oriented matroid} $(\cM, g)$ is an oriented matroid together with a distinguished ground set element $g$. We can define the \emph{positive covectors} of $(\cM, g)$ as the set $\cL_+(\cM, g) := \{X \in \cL(\cM) | X_g = +\}$. Notice that if $\cM$ is the oriented matroid of $H_1, \ldots, H_n, H_g$ where $H_g$ is the hyperplane $x_{d+1} = 0$, and $H_1, \ldots, H_n$ are centralized versions of affine hyperplanes as above, then the positive covectors  $\cL_+(\cM, g)$ correspond to cells of the affine hyperplane arrangement. 




\subsection{Circuit axioms}
There are many equivalent axiomatizations of oriented matroids.
The two formulations we use most often throughout this work are the covector axioms (L\ref{axiom:emptysetV})-(L\ref{axiom:crossingV}), stated above, and the circuit axioms (C\ref{axiom:emptysetC})-(C\ref{axiom:weakelimC}), which we state here. The circuit axioms most naturally arise when we consider the oriented matroid of a point arrangement, $p_1, \ldots, p_n \in \R^{d}$. 

\begin{defn}
Let $\cP= \{p_1, \ldots, p_n\}$ be a point configuration in $\R^{d}$. The sign vector $X$ is a \emph{circuit} of the oriented matroid $\cM(\cP)$ of $\cP$ if $X^+$, $X^-$ is a \emph{minimal Radon partition} of $V$. That is, $$\conv(\{p_i \mid i \in X^+\})\cap \conv(\{p_j \mid j \in X^-\}) \neq \emptyset,$$ and for all  $Y < X$, 
$$\conv(\{p_i \mid i \in Y^+\})\cap \conv(\{p_j \mid j \in Y^-\}) = \emptyset.$$
\end{defn}

The minimal Radon partitions of a point arrangement follow a list of rules known as \emph{the circuit axioms for oriented matroids.} Oriented matroids are defined via these axioms.

\begin{defn}\label{D:circuitaxioms}
  Let $E$ be a finite set, and $\cC\subseteq 2^{\pm E}$ a collection of signed subsets satisfying the following \emph{circuit axioms}: 
\begin{enumerate}[(C1)]
	\item\label{axiom:emptysetC} $\varnothing \notin \cC$.
	\item\label{axiom:symmetryC} $X \in \cC$ implies $-X \in \cC$.
	\item\label{axiom:incomparableC} $X,Y \in \cC$ and $\underline X \subseteq \underline Y$ implies $X = Y$ or $X = -Y$.
	\item\label{axiom:weakelimC} For all $X,Y \in \cC$ with $X \neq -Y$ and an element $e \in X^+ \cap Y^-$, there is a $Z \in \cC$ such that $Z^+ \subseteq (X^+ \cup Y^+) \setminus e$ and $Z^- \subseteq (X^- \cup Y^-) \setminus e$.
\end{enumerate}
Then the pair $\cM = (E, \cC)$ is an oriented matroid, and $\cC$ is its set of circuits. 

\end{defn}

Note that it is possible to recover the dimension of the affine span of the point arrangement of $\cC(\cP)$ via Radon's theorem.

 In particular, if $\cC(\cP)$ is a point configuration in $\R^{d}$, then every set of at least $d+2$ points contains the support of circuit. Further, as long as all points are not contained in a lower-dimensional subspace, there is at least one set of $d+1$ points which does not contain the support of circuit. 
Motivated by this, the \emph{rank} of an oriented matroid is defined as the maximum size of a set $X$ which does not contain the support of a circuit. Note that this means a point arrangement in $\R^{d}$ corresponds to a rank $d+1	$ matroid.  An oriented matroid is \emph{uniform} if all of its circuits have the same cardinality. Uniform oriented matroids correspond to point arrangements which are in general position.

\subsection{Duality}

We can translate between the circuit and covector descriptions of an oriented matroid, illustrated in the case of point arrangements in Figure \ref{fig:radon_matroid}. 
Circuits are related to covectors as follows: 
Two signed sets $X$ and $Y$ are called \emph{orthogonal} if either $\underline X \cap \underline Y = \emptyset $ or if there exist $e, f\in \underline X \cap \underline Y $ such that $X_eX_f  = - Y_e Y_f$. 
A signed set is called a \emph{vector} of $\cM$ if and only if it is orthogonal to every covector.
Equivalently, a signed set is a vector of $\cM$ if and only if it is orthogonal to every tope.
The circuits are the minimal vectors of $\cM$, while minimal covectors are called \emph{cocircuits}. 
The vectors of an oriented matroid $\cM$  are the covectors of its dual oriented matroid $\cM^*$.
Thus, vectors and covectors satisfy the same set of axioms, as do circuits and cocircuits.

\begin{figure}[ht!]
\begin{center}
\includegraphics[width = 3 in]{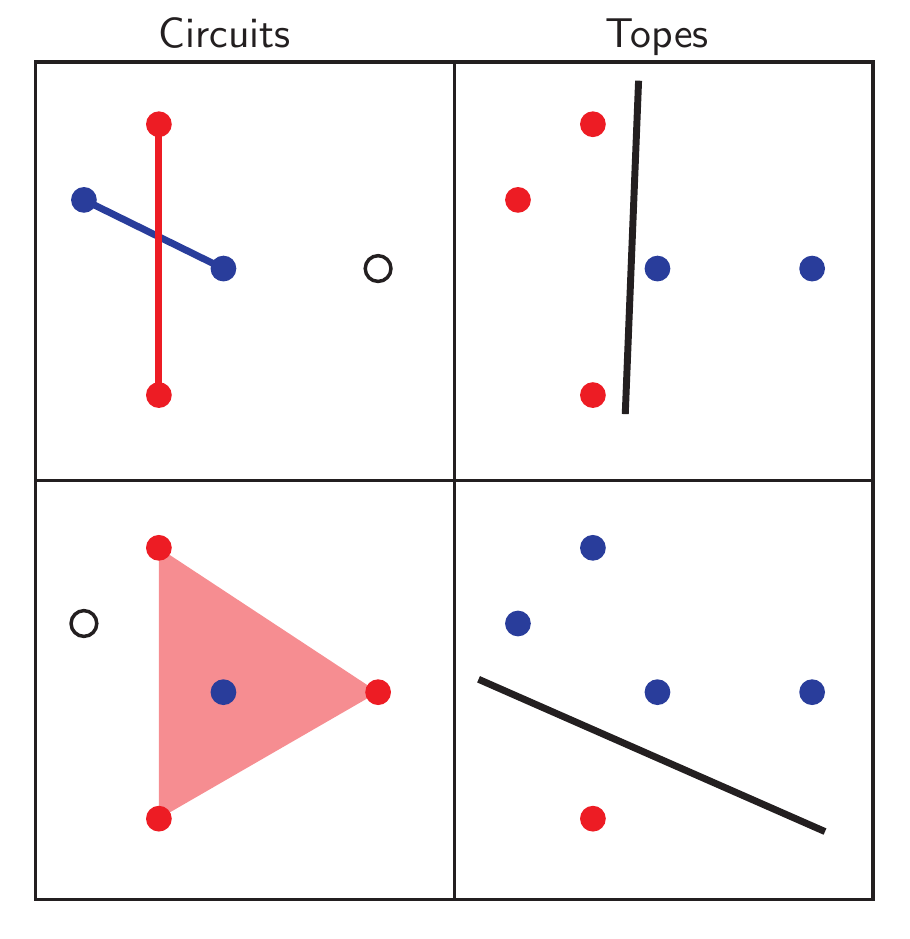}
\caption[Circuits and topes of oriented matroids]{Circuits of an oriented matroid correspond to minimal Radon partitions, while topes correspond to maximal hyperplane partitions. 
\label{fig:radon_matroid}
}
\end{center}
\end{figure}
  
For a given oriented matroid $\cM$, each one of the set of covectors $\cL(\cM)$, the set of topes $\cW(\cM)$, the set of vectors $\cV(\cM)$, and the set of circuits $\cC(\cM)$ is sufficient to recover all of the others.

We can build geometric intuition around duality by considering the oriented matroid of a point arrangement. Let $\cP$ be a point arrangement, and note that a signed set $X$ with $\underline X = [n]$ is a tope of $\cM$ if and only if it is orthogonal to every circuit of $\cM$. Then there is no circuit $Y$ of $\cM$ such that $Y \subseteq X$.  Then $\conv(\{p_i\}_{X_i = +}) \cap \conv(\{p_j\}_{X_j = -}) = \emptyset$. Thus, the topes of the oriented matroid of a point arrangement correspond to the partitions of the set of points which can be achieved with a hyperplane. In general, $X$ is a covector of $\cM(\cP)$ if there exists a hyperplane $H$ such that $X^+ = \{i \mid p_i \in H^+\}, X^- =  \{j \mid p_j \in H^-\}, X^0 = \{k \mid p_k \in H\}$. 

We can also see duality in the case of hyperplane arrangements through receptive field relationships, similar (but not identical) to those defined via the neural ring in \cite{curto2013neural}.  In particular, suppose $X$ is a circuit of $\cC(\cH)$. Then  $X$ is orthogonal to each covector of $\cL(\cH)$. For each point $p$, let $Y(p)$ be the sign vector at the point $p$. Then either $\underline {Y(p)} \cap \underline X = \emptyset$, or there exists $j \in Y(P)^+\cap X^-$, $k \in Y(P)^-  \cap X^+$. In the first case, we have $p\in H_i$ for all $i\in \underline Y$. In the second case, we have $p \in H_j^+$ for some $j \in X^-$, $p \in H_k ^-$ for some $k \in X^+$. Thus, the sets $\{\bar H_j^+\}_{j \in X^-} \cup \{\bar H_k^-\}_{j \in X^+}$ cover $\R^d$.  Equivalently, $\bigcap_{j\in X^+} H_j^+ \subseteq \bigcup_{k \in X-} \bar H_k^+$.

\subsection{Representability}\label{sec:rep}
An oriented matroid $\cM$ is \emph{representable} if $\cM = \cM(\cH)$ for some hyperplane arrangement $\cH$, or, equivalently, $\cM = \cM(\cP)$ for some point arrangement $\cP$.  
Figure \ref{F:OMandCodeExample}(a) illustrates an example in $\R^2$.
Not every oriented matroid is representable--see Chapter 8 of \cite{bjorner1999oriented} for more information on represnentability. 
However, we are able to take this hyperplane picture as paradigmatic. The topological representation theorem guarantees that every oriented matroid has a representation by a pseudosphere arrangement: a collection of centrally symmetric topological $d-2$ spheres embedded in $\bS^{d-1}$ whose intersections are also spheres of the appropriate dimension \cite{folkman1978oriented}.  See Chapter 5 of \cite{bjorner1999oriented} for  more information on  the topological representation theorem. See Figure \ref{fig:pseudosphere} for an illustration.  For a representable oriented matroid, we can obtain a representation with a sphere arrangement by intersecting each hyperplane with a sphere containing the origin. 

\begin{figure}[ht!]
\begin{center}
  \includegraphics[width = 4 in]{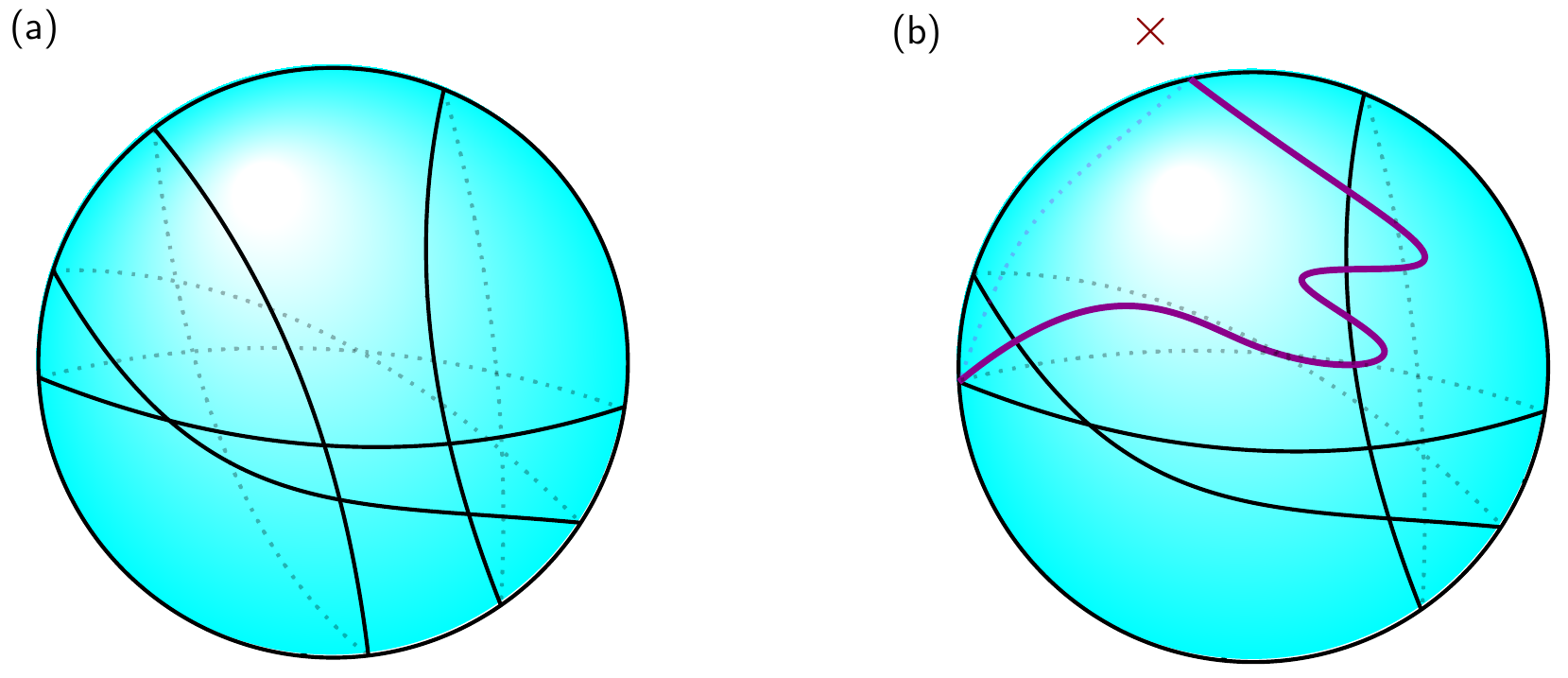}
  \caption[Pseudosphere arrangements.]{(a)~A pseudosphere arrangement (b)~Not a pseudosphere arrangement.}
  \label{fig:pseudosphere}
  \end{center}
\end{figure}

In general, it is difficult to determine whether an oriented matroid is representable. In particular, by \cite{sturmfels1987decidability, mnev1988universality, shor1991stretchability} determining representability is NP-hard. In fact, something stronger is true: determining representability is complete for the \emph{existential theory of the reals}. This is the complexity class of decision problems of the form
$$\exists(x_1 \in \R)\cdots \exists(x_n\in \R)P(x_1, \ldots , x_n),$$
where P is a quantifier-free formula whose atomic formulas are polynomial equations and inequalities in the $x_i$ \cite{broglia1996lectures}. 
Problems which are $\exists \R$-complete are not believed to be computationally tractable. In particular, they must be NP-hard. Many classic problems in computational geometry fall into $\exists\R$ \cite{schaefer2009complexity}. In particular, in Chapter \ref{chapter:matroids_codes}, we will show that determining whether a code is convex is $\exists\R$ complete. In Chapter \ref{chapter:urank_math}, we will show that determining the underlying rank of a matrix is $\exists\R$ complete. 

There is likely no combinatorial characterization of representability. In particular, focusing on  unoriented matroids but with results which apply to oriented matroids as well, a series of papers makes the claim that the ``missing axiom of oriented matroid theory is lost forever" \cite{vamos1978missing, mayhew2014missing, mayhew2018yes}. That is, there is no statement in the language of the original matroid axioms which characterizes representable matroids.

\part{Convex Neural Codes}
\chapter{Introduction to Convex Neural Codes} \label{sec:codes}

How does the brain keep track of the body's position in space? 
In 1948, based on observations that rats in mazes learn the broader geography of the maze, rather than just the correct sequence of turns to reach the goal, Tolman \cite{tolman1948cognitive} speculated that the brains of rodents (and humans) create and maintain maps of their environments. 
In 1971 \cite{o1971hippocampus}, O'Keefe and Dostrovsky recorded the activity of neurons in the hippocampus of a rat which they held on a platform, and found cells which appeared to form part of the cognitive map Tolman posited. 
In particular, they discovered some neurons which were more active when the rat was at one particular location on the platform, and other neurons which were more active when the rat was oriented in one particular direction. 
In \cite{o1976place}, O'Keefe mapped the receptive fields of cells they named place cells, and found them to be contiguous, roughly convex subsets of the rat's environment.

From subsequent work, we now know that place cells determine location by integrating input from multiple sensory systems, and by performing path integration based on the animal's motion \cite{knierim1995place}.  
While place cells have one convex firing field within a small environment, place cells have multiple fields in larger environments, with no apparent relationship between the different fields \cite{yim2021place}. 
Over time, place fields within one environment are remapped, i.e. some cells stop being active, others become active, and some have their place fields move. The relationships between place fields change over time. 
Place cells are part of a larger navigational system, involving grid cells and head direction cells.

Early on, it was observed that it is possible to decode location from the collective activity of place cells \cite{o1976place}. 
Once technology made it possible to record enough place cells to be possible, this was demonstrated in \cite{brown1998statistical}. 
However, these decoding mechanisms make use of the encoding map: observations of the locations of the place fields. 
This is not information that the brain has access to. 
 
In \cite{curto2008cell}, Curto and Itskov asked to what information about the environment can be decoded from place cell activity alone, without information about the locations of place fields. 
In particular, they use the \emph{nerve theorem} to show that it is possible to recover the topology of the environment using the sets of place cells which fire together, on the assumption that place cells have convex receptive fields.

A key observation of  \cite{curto2008cell} is that if $U_1, \ldots, U_n$ are interpreted as place fields, $\nerve(\cU)$ consists of the sets of place cells which fire together. Thus, we can use the nerve theorem to recover the topology of $\bigcup_{i = 1}^n U_i$ from neural activity, even if all we know about the receptive fields is that they are convex.

\section{Convex and non-convex codes}\label{sec:cvx_or_not}
In \cite{curto2013neural}, Curto, Itskov, Veliz-Cuba, and Youngs go beyond the nerve, studying the relationships between receptive fields which are implied by neural activity. In particular, they focus on the the combinatorial neural codes, which record more detail about how receptive fields interact than the nerve of a cover.

\begin{defn}
A combinatorial neural code is a subset $\cC\subseteq 2^{[n]}$. Elements of the code are called codewords. 
\end{defn}

We interpret the codewords as sets of neurons which fire at roughly the same time, and the neural code as the collection of all sets of neurons which are observed to fire together over some time. While codewords are often written as binary vectors, we use more compact subset notation. For instance, if at some time neuron 1 fires alone, at another time neurons 1 and 2 fire together, and at a third time, neurons 2 and 3 fire together, and at a fourth time no neurons fire, we denote this with the neural code $\cC = \{\{1\}, \{1,2\}, \{2,3\}, \emptyset\}$. For compactness, we will omit brackets and commas on the inner sets, abbreviating this as $\cC = \{1, 12, 23, \emptyset\}$. 	

Given a family of sets $\cU$, we can define a combinatorial neural code.

\begin{defn}
Let $\cU = \{U_1, \ldots, U_n\}$ be a collection of subsets of a set $X$. The code of $\cU$, written  $\code(\cU)$, is the set 
\begin{align*}
\code(\cU) := \{\sigma \mid U_\sigma \setminus \bigcup_{j\notin\sigma} U_j \neq \emptyset\}. 
\end{align*}
We define the \emph{atom} of a codeword as 
\[\cU^\sigma := \{U_\sigma \setminus \bigcup_{j\notin \sigma} U_j\}\]
In cases where the universe $X$ is not clear from context, we will write $\code(\cU, X)$.

Equivalently, $\code(\cU)$ is the set of labels which arise when we label each point $p\in X$ with the set of $i$ such that $p\in U_i$. See Figure \ref{fig:cvx} for the relationship between neural activity, combinatorial neural codes, and receptive fields. 
\end{defn}

We can recover $\nerve(\cU)$ by completing $\code(\cU)$ to a simplicial complex. 

\begin{defn}
Let $\cC$ be a combinatorial neural code. Define a simplicial complex $\Delta(\cC)$ as the smallest abstract simplicial complex containing $\cC$, i.e. as 
\begin{align*}
\Delta(\cC) := \{\tau\mid \tau \subseteq \sigma \mbox{ for some }\sigma \in \cC\}
\end{align*}
\end{defn}
Notice that 
\begin{align*}
\Delta(\cC(\cU)) = \nerve(\cU).
\end{align*}

To what extent does the constraint of having convex receptive fields show up in the structure of the combinatorial code itself?  That is, can we characterize which combinatorial codes arise from the activity of neurons with convex receptive fields? 

\begin{defn}
A neural code $\cC $ is \emph{convex open} (resp. closed) if there exists a family of convex open (resp. closed) sets $\cU = \{U_1, \ldots, U_n\}$  in $\R^d$ such that $\cC = \code (\cU)$. The if $\cC$ is convex, the minimum value of $d$ for which this is possible is referred to as the \emph{minimal embedding dimension}.
\end{defn}

\begin{question}\label{quest:convex}
Can we give an intrinsic characterization of which neural codes are convex? Is there an algorithm to determine whether a code is convex? Can we compute or estimate the minimal embedding dimension of a code? 
\end{question}

An answer to Question \ref{quest:convex} would make it possible to search for convex receptive field geometry in regions of the brain where the receptive fields are less straightforward than those of hippocampal place cells. Additionally, such a characterization would help us to characterize the connectivity of neural circuits which give rise to neurons with convex receptive fields. Finally, Question \ref{quest:convex} turns out to be mathematically rich, with connections to other work in discrete geometry.

Not every code is convex. For instance, the code $\cC = \{12, 13\}$ is neither convex open nor convex closed. To see this, suppose to the contrary that there are convex sets $U_1, U_2, U_3$, either all open or all closed, such that $\cC = \code(U_1, U_2, U_3)$.  Since neuron $1$ never fires alone, we have $U_1 = U_2 \cup U_3$. However, neurons 2 and 3 never fire together, so $U_2 \cap U_3 = \emptyset$. Thus, $U_2$, $U_3$ gives a disconnection of $U_1$. Since convex sets must be connected, this is a contradiction. This is an example of a \emph{local obstruction}. Without the assumption that our sets are all open or all closed, it is true that all codes are convex \cite{franke2017every}, though these constructions are often highly degenerate. 

Giusti and Itskov make a first step towards answering Question  \ref{quest:convex} in  \cite{giusti2014no}, which characterizes non-convex codes via local obstructions. We use the characterization of local obstructions provided in \cite[Theorem 1.3]{curto2017makes}. We first define the link of a simplex in a simplicial complex. 

\begin{defn}
Let $\Delta$ be a simplicial complex. Then the \emph{link} of a simplex $\sigma \in \Delta$ is the set 
\begin{align*}
\link_\Delta(\sigma)= \{\tau\in \Delta \mid \sigma \cap \tau = \varnothing, \sigma \cup \tau \in \Delta\}
\end{align*}
\end{defn}

\begin{defn}
Let $\cC$ be a combinatorial neural code. Then $\cC$ has a \emph{local obstruction} if there is a $\sigma \in \Delta(\cC)$ such that  $\sigma \notin \cC$ and $\link_{\Delta(\cC)}(\sigma)$ is not contractible. 
\end{defn}

Notice that for each simplicial complex $\Delta$, this defines a minimal code with no local obstructions 
\begin{align*}
\cC_{\min}(\Delta) = \{ \sigma \in \Delta \mid  \link_{\Delta(\cC)}(\sigma) \mbox{ is not contractible }\}
\end{align*}

If $\sigma$ is not the intersection of facets of $\Delta(\cC)$, then $\link_{\Delta(\cC)}(\sigma)$ is automatically contractible. Thus, when checking for local obstructions, we can restrict to codewords $\sigma$ which are intersections of facets, also called \emph{maximal codewords}. We often write the maximal codewords in bold. 

\begin{thm}[Theorem 3, \cite{giusti2014no}]
If $\cC$ is convex (either open or closed), it has no local obstructions. 
\label{thm:no_local}
\end{thm}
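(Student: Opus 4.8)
The plan is to obtain the theorem as a direct application of the Nerve Lemma, by exhibiting the link of the offending simplex as the nerve of a good cover of a convex (hence contractible) set. So suppose $\cC = \code(\cU)$ for a family $\cU = \{U_1, \ldots, U_n\}$ of convex sets in $\R^d$, all open or all closed, and let $\sigma \in \Delta(\cC)$ with $\sigma \notin \cC$; we must show $\link_{\Delta(\cC)}(\sigma)$ is contractible. Since $\Delta(\cC) = \Delta(\code(\cU)) = \nerve(\cU)$, the hypothesis $\sigma \in \Delta(\cC)$ says $U_\sigma = \bigcap_{i \in \sigma} U_i \neq \emptyset$, and as an intersection of convex sets $U_\sigma$ is convex, hence contractible. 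The key observation is that $\sigma \notin \cC$ means $U_\sigma \setminus \bigcup_{j \notin \sigma} U_j = \emptyset$, i.e.\ $U_\sigma \subseteq \bigcup_{j \notin \sigma} U_j$; therefore the family $\cV := \{\, U_\sigma \cap U_j \,\}_{j \notin \sigma}$ is a cover of $U_\sigma$.

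Next I would identify $\nerve(\cV)$ with $\link_{\Delta(\cC)}(\sigma)$ on the nose. For $\tau \subseteq [n] \setminus \sigma$ we have $\bigcap_{j \in \tau} (U_\sigma \cap U_j) = U_{\sigma \cup \tau}$, which is nonempty exactly when $\sigma \cup \tau \in \nerve(\cU) = \Delta(\cC)$; since $\sigma \cap \tau = \emptyset$, this is precisely the condition $\tau \in \link_{\Delta(\cC)}(\sigma)$. Hence $\nerve(\cV) = \link_{\Delta(\cC)}(\sigma)$ as simplicial complexes (with the convention that indices $j$ for which $U_\sigma \cap U_j = \emptyset$ simply fail to be vertices on both sides).

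Finally I would invoke the Nerve Lemma. The members of $\cV$ are convex (each is an intersection of convex sets) and are all open or all closed according to the hypothesis on $\cU$, and every subfamily intersection is again convex, hence empty or contractible; so $\cV$ is a good cover of the required homogeneous type. The Nerve Lemma then gives that $\link_{\Delta(\cC)}(\sigma) = \nerve(\cV)$ is homotopy equivalent to $\bigcup_{j \notin \sigma} (U_\sigma \cap U_j) = U_\sigma$, which is contractible. Since $\sigma$ was an arbitrary non-codeword in $\Delta(\cC)$, $\cC$ has no local obstruction.

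I do not expect a substantial obstacle here: the geometric content of the theorem is entirely absorbed into the Nerve Lemma, and the rest is bookkeeping. The two points that genuinely need care are (i) the implication $\sigma \notin \cC \implies U_\sigma \subseteq \bigcup_{j \notin \sigma} U_j$, which is immediate from the definition of $\code$ but is exactly what makes $\cV$ a cover of $U_\sigma$ rather than of a proper subset, and (ii) checking the hypotheses of the Nerve Lemma for $\cV$ uniformly in the open and the closed cases. The only mild subtlety is making sure the nerve of the cover $\cV$ of $U_\sigma$ is correctly read off as the link $\link_{\Delta(\cC)}(\sigma)$, which the computation in the second paragraph pins down.
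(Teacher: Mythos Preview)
Your proposal is correct and follows essentially the same argument as the paper: cover $U_\sigma$ by the sets $U_\sigma \cap U_j$ for $j \notin \sigma$, identify the nerve of this cover with $\link_{\Delta(\cC)}(\sigma)$, and apply the Nerve Lemma. Your write-up is in fact more careful than the paper's sketch, spelling out explicitly why $\sigma \notin \cC$ makes $\cV$ a cover and verifying the nerve computation and good-cover hypotheses in detail.
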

To see this, notice that if $\sigma \in \Delta(\cC)\setminus \cC$, then  $U_\sigma = \bigcup_{j\notin \sigma} U_\sigma \cap U_j$. On the assumption that $U_1, \ldots, U_n$ are convex and either all open or all closed, this is a good cover. Also notice that $\nerve(\{U_\sigma \cap U_j\}_{j\notin \sigma}) = \link_\sigma(\Delta(\cC)$. Thus,  $\link_{\Delta(\cC)}(\sigma)$ is homotopy equivalent to  $\bigcap_{i\in \sigma} U_i$ by the nerve theorem. Thus, if $\cC$ is convex, $\sigma\in \Delta(\cC)\setminus \cC$,   $\link_{\Delta(\cC)}(\sigma)$  must be contractible. Notice that we can weaken the requirement that $U_1, \ldots, U_n$ be convex here to a requirement that $U_1, \ldots, U_n$ forms a good cover. 

One might hope that the converse of this theorem holds, that a code is convex if and only if it has no local obstructions. Unfortunately, while this is true for codes on at most four neurons, this is not the case in general. In particular, the code 
\begin{align*}
\cC_1 = \{{\bf 2345}, {\bf 123}, {\bf 134}, {\bf 145}, 13, 14, 23, 34, 45, 3, 4, \varnothing\} 
\end{align*}
 has no local obstructions, and is closed-convex (Figure \ref{fig:non_convex} (a)), but not open-convex \cite{lienkaemper2017obstructions}. 
On the other hand, the code 
\begin{align*}
\cD_1 = \{{\bf 123},{\bf 234},{\bf 345},{\bf 145},{\bf 125},12,23,34,45,15,\emptyset\},
\end{align*} the has no local obstructions and is open convex  (Figure \ref{fig:non_convex} (b)), but is not closed convex. This code first appears in \cite{goldrup2020classification}, and is a simplified version of a code introduced in \cite{cruz2019open}. We will give a proof that this code is not convex in Example \ref{ex:wheel} in Chapter \ref{chapter:order_forcing}.  By combining these two codes in a clever way, \cite{gambacini2021non} gives an eight neuron code which has no local obstructions, but is neither open nor closed convex
\begin{align*}
\cC = \{\mathbf {2345}, \mathbf {123}, \mathbf {124},\mathbf { 145}, 12, 14, 23, 24, 45, 2, 4,\varnothing\}\cup\{\mathbf {237}, \mathbf {238},\mathbf { 367},\mathbf { 678}, \mathbf{26}, 37, 67, 6, 8\}. 
\end{align*}
It is true, however, that $\cC$ is a good cover code if and only if it has no local obstructions \cite{chen2019neural}. 

\begin{figure}
\begin{center}
\includegraphics[width = 4.5 in]{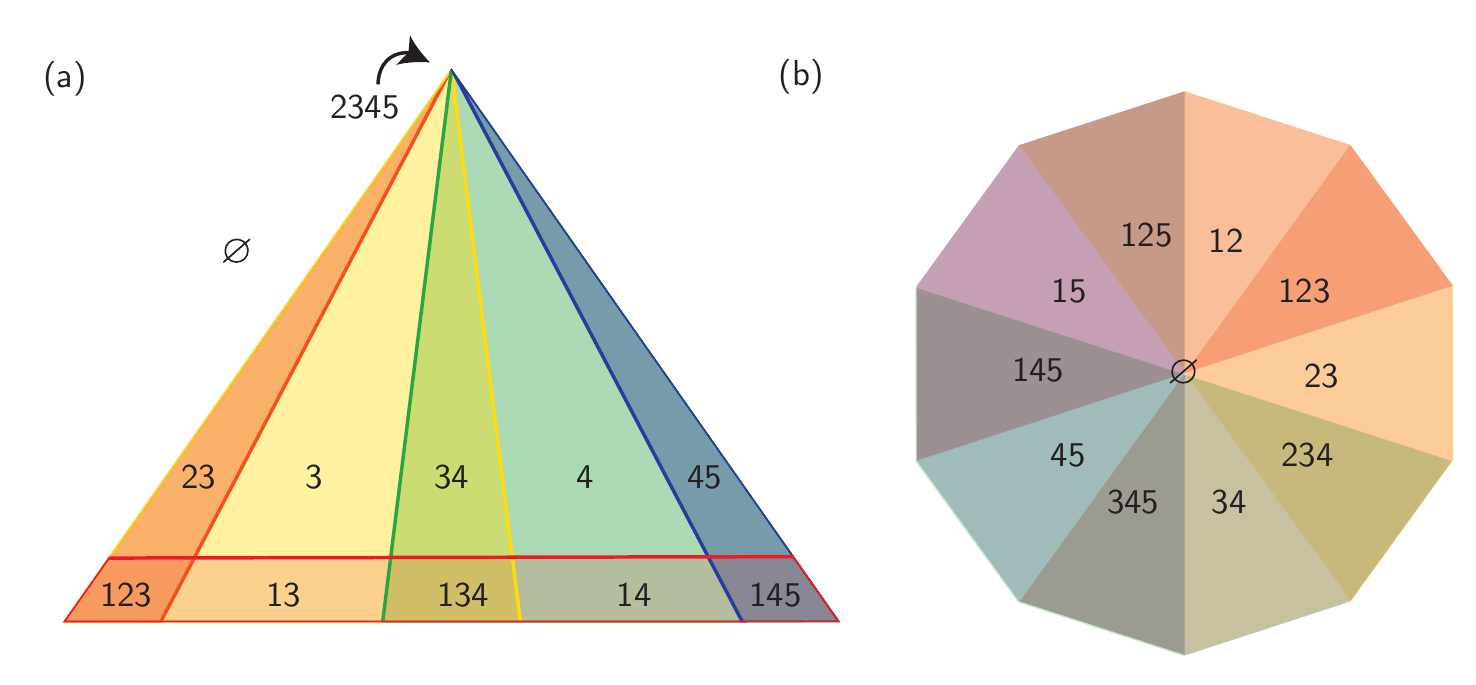}
\end{center}
\caption[Examples of non-convex codes]{(a) A closed convex realization of $\cC_1$. (b) An open convex realization of $\cD_1$. \label{fig:non_convex}}
\end{figure}

On the other hand, there are large families of codes which we can guarantee are convex. In particular, we say a code $\cC$ is max-intersection complete if whenever $\sigma\in \Delta(\cC)$ is an intersection of maximal codewords, $\sigma\in \cC$. By \cite{cruz2019open}, all max-intersection complete codes are both open and closed convex. Thus, the codes where convexity is an interesting problem are the codes which are not max-intersection complete, but have no local obstructions. 
We can partially order the set of codes $\cC$ with the same simplicial complex $\Delta(\cC) = \Delta$ by inclusion. Cruz et al. prove that open convexity is monotone increasing under this order \cite{cruz2019open}. On the other hand, closed convexity is \emph{not} monotone increasing under this order \cite{gambacini2021non}. 

On up to four neurons, a code has no local obstructions if and only if it is max intersection complete--thus on up to four neurons, a code is convex if and only if it has no local obstructions. A classification of all codes on at most three neurons appears in \cite{curto2013neural}, while a classification of codes on four neurons appears in \cite{curto2017makes}. A complete classification of codes on five neurons appears in \cite{goldrup2020classification}. In particular, $\cC_1$ is the only code on five neurons which has no local obstructions, but is not open convex. In addition to $\cD_1$, there are two other five neurons which are closed, but not open convex:
\begin{align*}
\cC_6 &= \{{\bf 125}, {\bf 234}, {\bf 145}, {\bf 123}, 4, 23, 15, 12, \varnothing\}\\
\cC_{10} &= \{{\bf 134}, {\bf 245}, {\bf 234}, {\bf 135}, 12, 1, 5, 34, 13, 2, 24, \varnothing\}
\end{align*}
Codes with at most three maximal codewords are convex if and only if they have no local obstructions, by \cite{johnston2020neural}. 

By a stronger version of Theorem \ref{thm:no_local}  proved in \cite{chen2019neural}, if a code is convex, the simplicial complexes which occur as links of missing codewords must be collapsible, not just contractible. In fact, they must satisfy even stronger properties established in \cite{jeffs2021convex}. The complete classification of which simplicial complexes arise as links in convex codes remains open. 

\section{Embedding dimensions of open and closed convex codes}

Beyond determining whether or not a code is convex, we can characterize its minimal embedding dimension. It turns out that we get different answers asking this question for open, closed, and \emph{non-degenerate} convex codes. 

\begin{defn}
A collection $\cU = \{U_1, \ldots, U_n\}$ of open convex sets is \emph{non-degenerate} if the collection  of their closures $\cl(\cU) =  \{\cl(U_1), \ldots, \cl(U_n) \}$ has code $\code(\cU) = \code(\cl(\cU))$. Likewise, a collection of closed sets $\cU = \{U_1, \ldots, U_n\}$ is non-degenerate if the collection of their interiors  $\Int(\cU) =  \{\Int(U_1), \ldots, \Int(U_n) \}$  has $\code(\cU) = \code(\Int(\cU))$. 
\end{defn} 

The open, closed, and non-degenerate embedding dimensions of a code are defined as follows: 

\begin{defn}
Let $\cC$ be a neural code. Then the open, closed, and non-degenerate embedding dimensions of $\cC$ are defined, respectively, as 
\begin{align*}
\odim(\cC) &= \min\{d \mid \cC \mbox{ has an open realziation in } \R^d\}\\
\cdim(\cC) &= \min\{d \mid \cC \mbox{ has an closed realziation in } \R^d\}\\
\ndim(\cC) &= \min\{d \mid \cC \mbox{ has an non-degenerate realziation in } \R^d\}\\
\end{align*}
\end{defn}

First, we note that if $\cC$ is convex, then $\odim(\cC) \leq |\cC| -1$, since we can obtain a lower-dimensional realization of $\cC$ by intersecting our realization of $\cC$ with the affine hull of a set of points $\{p_\sigma\}_{\sigma \in \cC}$, where $\{p_\sigma\}$ is taken to be in the atom of $\sigma$. This result holds for $\cdim(\cC)$ and $\ndim(\cC)$ as well. A slightly better bound exists when $\cC$ is max-intersection complete: by Theorem  1.2 of \cite{cruz2019open}, 
$\odim(\cC), \cdim(\cC) \leq \max\{2, k-1\}$ where $k$ is the number of maximal codewords of $\cC$.  If $\cC$ is intersection complete, then $ \cdim(\cC) \leq \min\{2d+1, n-1\}$, where  $d$ is the dimension of $\Delta(\cC)$. 

These bounds allow for the possibility that the minimal embedding dimension is exponential in the number of neurons, even for max-intersection complete codes. In fact, this can occur, at least for open embedding dimension: Jeffs \cite{jeffs2022embedding} gives an infinite family of intersection-complete codes $\cE_n$, such that $\odim(\cE_n)$ grows as fast as $\frac{2^{n-1}}{n}$. These codes have closed embedding dimension at most $n$, since they are intersection complete. There are no know examples of closed-convex codes on $n$ neurons such that $n <\cdim(\cC) <\infty$.  

As the previous example demonstrates, open and closed embedding dimension can be wildly different. In some cases, we can still control the relationship between $\odim(\cC), \cdim(\cC)$, and $\ndim(\cC)$.  For instance, if any one of $\odim(\cC), \cdim(\cC)$,or $ \ndim(\cC)$ is equal to 1, then the other two embedding dimensions must be 1. If $\cC$ is a simplicial complex, then $\odim(\cC) = \cdim(\cC)$, and if $\cC$ is intersection complete, then $\cdim(\cC) \leq \odim(\cC)$. Other than this, the only constraint on the open, closed, and non-degenerate embedding dimensions of a code is the clear constraint that the non-degenerate embedding dimension must be at least the maximum of the closed and open embedding dimensions. That is, any triple $(a, b, c)$ such that $a, b \leq c$ and $2\leq a, b, c $, there is a code $\cC$ with $\odim(\cC) = a, \cdim(\cC) = b$, and $\ndim(\cC) = c$ \cite{jeffs2021open}.

\section{Morphisms of neural codes}
Across all of mathematics, objects make more sense when we can relate them to one another. Combinatorial codes are no exception. In order to describe the relationships between codes, Jeffs introduces \emph{neural code morphisms} in \cite{jeffs2020morphisms}. These maps between codes allow us to relate the convexity of one code to the convexity of another, or even the convexity of one class of codes to another class of codes. In particular, they allow give us a formal way to talk about whether non-convex code is novel, rather than a trivial modification of a previous code. 

Morphisms of neural codes are defined in terms of \emph{trunks}. 

\begin{defn}
Let $\cC$ be a neural code. The trunk of $\sigma \in \cC$ is the set 
\begin{align*}
\trunk_{\cC}(\sigma) := \{\tau\in \cC\mid \sigma \subseteq \tau\}
\end{align*}
A subset of $\cC$ is a trunk if it is empty, or if it is equal to $\trunk_{\cC}(\sigma)$ for some $\sigma \subseteq [n]$. 
\end{defn}

\begin{defn}
Let $\cC$, $\cD$ be combinatorial neural codes. A map $f: \cC \to \cD$ is a \emph{morphism} of neural codes if the preimage of every trunk of $\cD$ is a trunk of $\cC$. Two codes $\cC$ and $\cD$ are isomorphic if there is a bijective code morphism $f:\cC \to \cD$
whose inverse is also a code morphism.
\end{defn}

Note that while this definition feels reminiscent of the definition of a continuous map between topological spaces, the trunks of a code need not form the open sets of a topology on $\cC$. In particular, the union of trunks need not be a trunk. 

A key property of code morphisms is that the preserve convexity.
\begin{thm}[Theorem 1.3, \cite{jeffs2020morphisms}]
If $\cC$ is a convex code, and $f: \cC \to \cD$ is a surjective map, then $\cD$ is a convex code.\label{thm:morph}
\end{thm}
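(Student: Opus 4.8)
The plan is to turn a convex realization of $\cC$ directly into one of $\cD$, using the defining property of a code morphism only for the simplest trunks of $\cD$. Write $\cD \subseteq 2^{[m]}$ with $[m] = \{1, \ldots, m\}$, fix a family $\cU = \{U_1, \ldots, U_n\}$ of convex open sets in $\R^d$ with $\code(\cU) = \cC$, and for $p \in \R^d$ let $c(p) := \{i \mid p \in U_i\}$, so that $\code(\cU) = \{c(p) \mid p \in \R^d\}$ and each $c(p) \in \cC$. (Here I read the hypothesis as: $f$ is a surjective code morphism.)

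First I would extract the combinatorial data. For each $j \in [m]$ the set $\trunk_{\cD}(\{j\})$ is a trunk of $\cD$, so since $f$ is a code morphism its preimage $f\inv(\trunk_{\cD}(\{j\}))$ is a trunk of $\cC$; hence either $f\inv(\trunk_{\cD}(\{j\})) = \trunk_{\cC}(\sigma_j)$ for some $\sigma_j \subseteq [n]$, or it is empty. By surjectivity of $f$ the latter happens exactly when no codeword of $\cD$ contains $j$. When the preimage is a nonempty trunk $\trunk_{\cC}(\sigma_j)$, nonemptiness forces $\sigma_j \in \Delta(\cC) = \nerve(\cU)$, so $\bigcap_{i \in \sigma_j} U_i \neq \emptyset$.

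Next I would build the realization. Set $V_j := \bigcap_{i \in \sigma_j} U_i$ (with the empty intersection read as $\R^d$) in the first case, and $V_j := \emptyset$ in the second. Each $V_j$ is an intersection of convex open sets, hence convex and open, so $\cV := \{V_1, \ldots, V_m\}$ is a candidate realization in $\R^d$. The crux is the identity $\{j \mid p \in V_j\} = f(c(p))$ for every $p \in \R^d$. This follows from the chain $p \in V_j \iff \sigma_j \subseteq c(p) \iff c(p) \in \trunk_{\cC}(\sigma_j) = f\inv(\trunk_{\cD}(\{j\})) \iff f(c(p)) \in \trunk_{\cD}(\{j\}) \iff j \in f(c(p))$, together with a separate check of the degenerate cases: $\sigma_j = \emptyset$ means $j$ lies in every codeword of $\cD$ and $V_j = \R^d$, while the empty-preimage case means $V_j = \emptyset$ and $j$ lies in no codeword of $\cD$. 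Given this identity, $\code(\cV) = \{f(c(p)) \mid p \in \R^d\} = f(\code(\cU)) = f(\cC) = \cD$, the last equality by surjectivity, so $\cD$ is convex. The identical construction handles closed realizations, since intersections of closed convex sets are closed and convex, giving the closed-convex case as well.

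The argument is essentially a definition chase, so the only real obstacle is keeping the bookkeeping honest: making sure the convention for the empty intersection is compatible with the empty index set $\sigma_j = \emptyset$, handling the trunks of $\cD$ that pull back to $\emptyset$, and confirming that $f(c(p))$ is genuinely a codeword of $\cD$ (it is, since $c(p) \in \cC$ and $f$ maps $\cC$ into $\cD$). It is also worth noting that no boundedness adjustment is needed, since the definition of a realization permits $\R^d$ itself as one of the convex open sets.
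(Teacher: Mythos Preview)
Your proof is correct and follows essentially the same approach as the paper: both arguments use that a morphism is determined by trunks $T_j = f^{-1}(\trunk_\cD(\{j\})) = \trunk_\cC(\sigma_j)$, define $V_j = \bigcap_{i \in \sigma_j} U_i$ (or $\emptyset$), and verify pointwise that the codeword of $p$ in the new realization equals $f(c(p))$. The paper packages the first step by citing the structural result that every morphism is of the form $f_S$ for some collection of trunks (Proposition~\ref{prop:trunks_morph}), while you extract those trunks directly from the definition, but the construction and verification are the same.
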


This fact motivates Jeffs to define a partial order on codes such that convex codes are a down-set. 
If there is a sequence of codes $\cC = \cC_0,\cC_1,\dots,\cC_k=\cD$ such that each successive code is either the image of a morphism from or a trunk of the preceding code, we say $\cD$ is a \emph{minor} of $\cC$.
Codes are then quasi-ordered by setting $\cD \leq \cC$ if $\cD$ is a minor of $\cC$.
The poset of isomorphism classes of codes induced by this order is denoted $\pcode$.
We can then rephrase Theorem \ref{thm:morph} as the statement that the set of convex codes in $\pcode$ is downward closed. 

The proof of Theorem \ref{thm:morph} is constructive, allowing us to build a realization of $\cD$ out of the realization of $\cC$. To see this, we first use Proposition 2.11 of \cite{jeffs2020morphisms}, which says that all code morphisms take a certain form. 

\begin{prop}[Proposition 2.11, \cite{jeffs2020morphisms}]
\label{prop:trunks_morph}
Let $\cC$ be a neural code, $S = \{T_1, \ldots, T_m\}$ a finite collection of trunks of $\cC$. Define the function $f_S: \cC \to 2^{[m]}$ by
\begin{align*}
f_S(\sigma) = \{i \mid \sigma\in T_i\}.
\end{align*}
We say that $f_S$ is the \emph{morphism determined by the trunks} in $S$. The map $f_S$ is indeed a code morphism.  Further, every code morphism is of this form. 
\end{prop}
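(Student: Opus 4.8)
The plan is to prove the two assertions separately: first that $f_S$ is always a code morphism, and then that an arbitrary code morphism $f$ is recovered as $f_S$ for a suitable choice of $S$.

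For the first assertion, I would begin by computing the preimage under $f_S$ of a basic trunk of the codomain $2^{[m]}$. For $\rho \subseteq [m]$, unwinding the definitions gives $\sigma \in f_S^{-1}(\trunk_{2^{[m]}}(\rho))$ precisely when $\rho \subseteq f_S(\sigma) = \{i \mid \sigma \in T_i\}$, i.e.\ when $\sigma \in T_i$ for every $i \in \rho$. Hence $f_S^{-1}(\trunk_{2^{[m]}}(\rho)) = \bigcap_{i \in \rho} T_i$, with the empty intersection understood to be all of $\cC = \trunk_{\cC}(\emptyset)$. So the claim reduces to the observation that a finite intersection of trunks of $\cC$ is again a trunk: if some $T_i$ is empty the intersection is the empty trunk, and otherwise writing $T_i = \trunk_{\cC}(\omega_i)$ for $\omega_i \subseteq [n]$ one has $\bigcap_{i \in \rho} \trunk_{\cC}(\omega_i) = \trunk_{\cC}\bigl(\bigcup_{i \in \rho} \omega_i\bigr)$, since a codeword contains each $\omega_i$ if and only if it contains their union. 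Together with the fact that the preimage of the empty trunk of $2^{[m]}$ is empty, this covers all trunks of $2^{[m]}$ and shows $f_S$ is a morphism.

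For the second assertion, let $f : \cC \to \cD$ be a code morphism with $\cD \subseteq 2^{[m]}$. For each $j \in [m]$ I would set $T_j := f^{-1}\bigl(\trunk_{\cD}(\{j\})\bigr)$; since $\trunk_{\cD}(\{j\})$ is a trunk of $\cD$ and $f$ is a morphism, $T_j$ is a trunk of $\cC$. Taking $S = \{T_1, \dots, T_m\}$, for any $\sigma \in \cC$ we compute $f_S(\sigma) = \{j \mid \sigma \in T_j\} = \{j \mid f(\sigma) \in \trunk_{\cD}(\{j\})\} = \{j \mid j \in f(\sigma)\} = f(\sigma)$, so $f = f_S$.

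There is no serious obstacle here; the argument is essentially a matter of unwinding definitions. The only points requiring a little care are the degenerate cases (the empty trunk, and the index set $\rho = \emptyset$ giving the preimage $\cC$ itself) and the identification of the key closure property in play, namely that the trunks of a code are closed under finite intersection because $\trunk_{\cC}$ carries unions of subsets of $[n]$ to intersections of trunks.
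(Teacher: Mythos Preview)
Your proof is correct. Note, however, that the dissertation does not actually prove this proposition: it is quoted from \cite{jeffs2020morphisms} (Proposition 2.11 there) and used as a black box in the proof of Lemma \ref{lem:int_closed}. Your argument is the standard one and matches the original source: show that trunks are closed under finite intersection (via $\bigcap_i \trunk_{\cC}(\omega_i) = \trunk_{\cC}(\bigcup_i \omega_i)$), which handles the forward direction, and then pull back the simple trunks $\trunk_{\cD}(\{j\})$ to recover an arbitrary morphism as some $f_S$. There is nothing to compare against in this paper specifically, but your treatment of the edge cases (empty trunk, $\rho = \emptyset$) is clean and complete.
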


Jeffs uses this fact prove Theorem 1.3 constructively. In fact, the only property of convex sets his proof uses is that the intersection of convex sets is convex. Thus, in \cite{kunin2020oriented}, Kunin, Rosen and I generalize this result to intersection-closed families, of which  the family of open convex subsets of $\R^d$ is one example. In particular, this allows us to show that codes with no local obstructions form a down-set in $\pcode$. We include this argument here. 

A family $\cF$ of subsets of a topological space is called {\em intersection-closed} if it is closed under finite intersections and contains the empty set. 
We say that a neural code $\cC$ is {\em $\cF$-realizable} if $\cC= \code(\cU, X)$ for some $\cU \subseteq \cF$ and $X\in \cF$.
For instance, a neural code is convex if and only if it is $\cF$-realizable for the set $\cF$ of convex open subsets of some $\R^d$.

\begin{lem} \label{lem:int_closed}
	For any intersection closed family $\cF$,
	if $\cC$ is $\cF$-realizable and  $\cD \leq \cC$, then $\cD$ is $\cF$-realizable. 
\end{lem}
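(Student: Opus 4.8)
The proof reduces to two claims: (1) if $\cC$ is $\cF$-realizable and $\cD$ is the image of a code morphism $f:\cC\to\cD$, then $\cD$ is $\cF$-realizable; and (2) if $\cC$ is $\cF$-realizable and $T = \trunk_\cC(\rho)$ is a trunk of $\cC$, then the restricted code $\cC|_T$ (together with the appropriate restriction of the universe) is $\cF$-realizable. Since $\cD \le \cC$ means $\cD$ is obtained from $\cC$ by a finite sequence of morphism-images and trunk-restrictions, induction on the length of this sequence then gives the lemma. I would set up this induction at the start and handle the two base cases separately.

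\textbf{The morphism case.} Suppose $\cC = \code(\cU, X)$ with $\cU = \{U_1,\dots,U_n\} \subseteq \cF$ and $X \in \cF$, and let $f = f_S$ be the morphism determined by a collection of trunks $S = \{T_1,\dots,T_m\}$ of $\cC$, using Proposition~\ref{prop:trunks_morph}. The idea is to realize each new neuron $i \in [m]$ by intersecting the old sets indexing the trunk $T_i$. Concretely, if $T_i = \trunk_\cC(\rho_i)$, set $V_i := \bigcap_{j \in \rho_i} U_j$ (with $V_i := X$ if $\rho_i = \emptyset$); since $\cF$ is intersection-closed and contains $X$, each $V_i \in \cF$. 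One then checks that for a point $p \in X$ lying in the atom $\cU^\sigma$, the set of new neurons containing $p$ is exactly $\{i \mid \rho_i \subseteq \sigma\} = \{i \mid \sigma \in T_i\} = f_S(\sigma)$, so $\code(\{V_1,\dots,V_m\}, X) = f_S(\cC) = \cD$. This is essentially the constructive proof of Theorem~\ref{thm:morph} from \cite{jeffs2020morphisms}, now observing it only used closure under intersection.

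\textbf{The trunk case.} If $T = \trunk_\cC(\rho)$ and $\cC = \code(\cU,X)$, take the new universe to be $Y := \bigcap_{j\in\rho} U_j \in \cF$ and keep the sets $\cU$ (or their restrictions $U_i \cap Y$, which are still in $\cF$). A point $p \in Y$ has codeword $\sigma \ni \rho$, i.e. $\sigma \in T$, and conversely every codeword of $T$ is realized inside $Y$; so $\code(\cU, Y) = \cC|_T$, the induced code on $T$, which is the trunk minor. The main obstacle to watch for is bookkeeping about the universe: in the morphism step the universe stays $X$, but in the trunk step it must shrink to $Y$, and one must be careful that "the trunk of $\cC$ as a code" matches the code obtained by shrinking the universe — this is where the definition of $\cF$-realizability carrying both $\cU$ and $X$ does real work, and it is the only genuinely delicate point; the rest is routine verification of which neurons contain a given point.
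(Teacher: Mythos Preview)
Your proposal is correct and follows essentially the same approach as the paper: the morphism case is handled by exactly the same construction (define $V_i = \bigcap_{j\in\rho_i} U_j$ and verify pointwise that the resulting code is $f(\cC)$), which is the heart of the argument. You are in fact slightly more careful than the paper, which jumps directly to ``since $\cD\le\cC$, we have $\cD = f(\cC)$'' and treats only the morphism step explicitly; your explicit induction on the minor sequence and your separate handling of the trunk case (shrinking the universe to $Y = U_\rho$) make the bookkeeping cleaner, though the paper's omission is harmless since the trunk case is straightforward.
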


\begin{proof}
 This closely follows the proof of Theorem 1.4 in \cite{jeffs2020morphisms},  since the only property of convex sets this proof uses is that the family of open convex subsets of $\R^d$ is closed under finite intersection.  . We repeat the details here. Let $\cC\subseteq 2^{[n]}$, $\cD\subseteq 2^{[m]}$. Since $\cD \leq \cC$, we have $\cD = f(\cC)$.  By Proposition \ref{prop:trunks_morph}, there are trunks  $T_1, \ldots, T_m$  in $\cC$ that define the morphism $f: \cC \to \cD$. Let $\{U_1, \ldots, U_n\} \subseteq \cF$ be an $\cF$-realization of $\cC$. 

If $T_j$ is nonempty, let $\sigma_j$ be the unique largest subset of $[n]$ such that $T_j = \trunk_\cC(\sigma_j)$. In particular, $\sigma_j$ will be the intersection of all elements of $T_j$. Then, for $j\in [m]$, define 
\begin{align*}
V_j = \begin{cases}
\emptyset & T_j = \emptyset\\
\bigcap_{i\in \sigma_j} U_i & T_j \neq \emptyset
\end{cases}
\end{align*}
Since $\cF$ is closed under finite intersection and contains the empty set, $V_j\in \cF$ for all $j\in [m]$. Thus, it suffices to show that the code $\cE$ that they realize is $\cD$. To see this, note that we can associate each point $p\in X$ to a codeword in $\cC$ or $\cE$ by $p\mapsto \{i\in [n] \mid p\in U_i\}$ and $p\mapsto\{j\in [m] \mid p\in V_j\}$. Then let $p\in X$ be arbitrary, and let $c$ and $e$ be the associated codewords in $\cC$ and $\cE$ respectively. Observe that by the definition of the $V_j$, we have that $c\in T_j$ if and only if $j\in e$. But this is equivalent to $e = f(c)$. Since $p$ was arbitrary and every codeword arises at some point, we conclude that $\cE = f(\cC) = \cD$, as desired. 

An example of this construction is illustrated in Figure  \ref{fig:int_closed}. 

\end{proof}

\begin{figure}[h]
\begin{center}
 \includegraphics[width = 5 in]{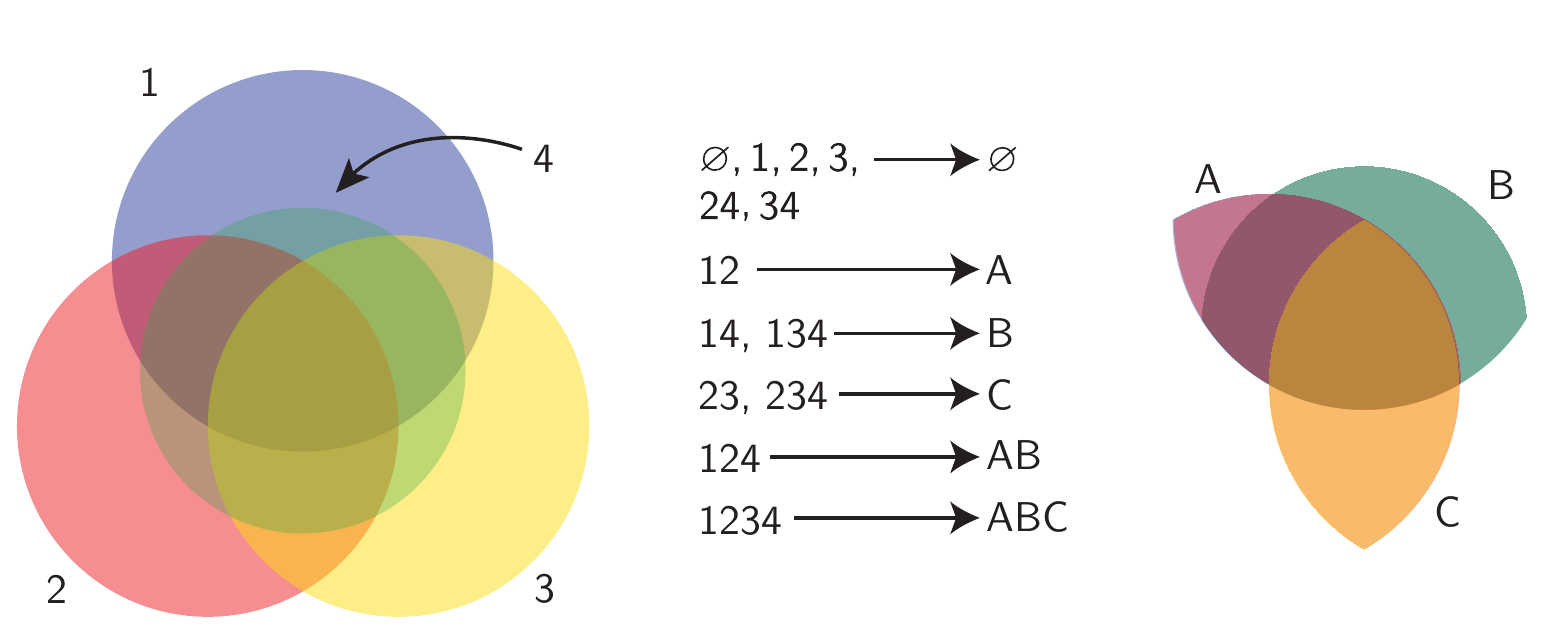}
 \end{center}
  \caption[Geometric interpretation of code morphisms.]{Example of construction from proof of Lemma \ref{lem:int_closed}.
  \label{fig:int_closed}
  }
\end{figure}

To prove Theorem \ref{thm:morph}, we apply Lemma \ref{lem:int_closed} to the intersection closed family of open or closed convex sets in $\R^d$.  We also show that codes with no local obstructions form a down-set in $\pcode$. 
The only requirement to be an open set in some good cover is contractibility, and the family of contractible sets is not intersection-closed.
Instead, we consider the sets $U_1, \ldots, U_n$ in one particular good cover and their intersections as our intersection-closed family.

\begin{cor} 
The set of codes with no local obstructions is a down-set in $\pcode$. \label{thm:good_cover}
\end{cor}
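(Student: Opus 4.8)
The plan is to combine Lemma~\ref{lem:int_closed} with the characterization of \cite{chen2019neural}, which states that a code has no local obstructions if and only if it is the code of a good cover. One cannot apply Lemma~\ref{lem:int_closed} directly using ``good cover codes'', because the defining property of the members of a good cover --- being empty or contractible --- is not preserved under intersection, so the family of all contractible open subsets of a space is not intersection-closed. The key point is that we do not need every contractible set: once we fix a \emph{single} good cover realizing $\cC$, the finite family it generates under intersection \emph{is} intersection-closed, and, because intersections of the $U_i$ are exactly the sets $U_\sigma$, every member of that family is empty or contractible.

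Concretely, suppose $\cC$ has no local obstructions. By \cite{chen2019neural}, fix a good cover $\cU = \{U_1, \ldots, U_n\}$ of open subsets of a space $X$ with $\cC = \code(\cU, X)$; note that $X = U_\emptyset$ is contractible since $\cU$ is a good cover. Set
\[
\cF := \{\emptyset\} \cup \{\, U_\sigma \mid \sigma \subseteq [n] \,\}, \qquad U_\sigma := \bigcap_{i \in \sigma} U_i, \quad U_\emptyset := X.
\]
Since $U_\sigma \cap U_\tau = U_{\sigma \cup \tau}$ and intersecting anything with $\emptyset$ gives $\emptyset$, the family $\cF$ is closed under finite intersection and contains $\emptyset$, so it is an intersection-closed family in the sense of Lemma~\ref{lem:int_closed}. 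Moreover $\cC$ is $\cF$-realizable, with $\cU \subseteq \cF$ and ambient set $X = U_\emptyset \in \cF$.

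Now let $\cD \leq \cC$ be any minor of $\cC$. By Lemma~\ref{lem:int_closed}, $\cD$ is $\cF$-realizable; examining the construction in the proof of that lemma, the realization it produces never changes the ambient set, so $\cD = \code(\cV, X)$ with $\cV = \{V_1, \ldots, V_m\} \subseteq \cF$. Each $V_j$ is either $\emptyset$ or equals $U_{\sigma_j}$ for some $\sigma_j \subseteq [n]$. I claim $\cV$ is a good cover: for $\tau \subseteq [m]$ with $\tau \neq \emptyset$, if some $V_j$ with $j \in \tau$ is empty then $V_\tau := \bigcap_{j \in \tau} V_j = \emptyset$, and otherwise $V_\tau = \bigcap_{j \in \tau} U_{\sigma_j} = U_{\bigcup_{j \in \tau}\sigma_j}$, which is empty or contractible because $\cU$ is a good cover; finally $V_\emptyset = X$ is contractible. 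Hence $\cD$ is a good cover code, so by \cite{chen2019neural} it has no local obstructions. As this holds for every $\cD \leq \cC$, the set of codes with no local obstructions is downward closed in $\pcode$.

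The only subtle point --- and the reason the argument takes this indirect route through a particular good cover rather than through ``all contractible sets'' --- is exactly the claim that $\cV$ is again a good cover: because $\cF$ is generated by a single good cover, each intersection it produces is itself one of the sets $U_\sigma$, and these are controlled by the good-cover hypothesis on $\cU$. The remaining bookkeeping (that Lemma~\ref{lem:int_closed} can be invoked without altering the ambient space, that $\emptyset \in \cF$, and that having no local obstructions is an isomorphism invariant and hence well-defined on $\pcode$) is routine.
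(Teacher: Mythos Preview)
Your proof is correct and follows essentially the same route as the paper: invoke \cite{chen2019neural} to get a good cover $\cU$ realizing $\cC$, form the intersection-closed family $\cF_\cU$ generated by $\cU$ together with $\emptyset$, apply Lemma~\ref{lem:int_closed}, and observe that any $\cF_\cU$-realization is again a good cover. Your write-up simply spells out the verification that $\cV$ is a good cover more explicitly than the paper does; the one remark about the ambient set ``never changing'' is slightly imprecise when trunks are involved, but since any new ambient set is some $U_\sigma \in \cF$ and hence contractible, the argument goes through unchanged.
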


\begin{proof}
	Let $\cC$ be a code with no local obstructions, $\cD \leq \cC$.
	By \cite[Theorem 3.13]{chen2019neural},  $\cC$ is a good cover code.
	Fix a good cover $\cU = \{U_1, \ldots, U_n\}$ realizing $\cC$.
	Let $\cF_\cU$ denote the family of sets obtained by arbitrary intersections of sets in $\cU$, together with the empty set.
	This family still forms a good cover.
	$\cD$ lies below $\cC$ and is therefore $\cF_\cU$-realizable by \ref{lem:int_closed}; it is therefore a good cover code and thus has no local obstructions.
\end{proof}

Using the structure provided by $\pcode$, Jeffs defines \emph{minimally non-convex codes}: a code $\cC$ is \emph{minimally non-convex} if it is not convex, but all codes $\cC'$ such that $\cC' \leq \cC$ are convex. Using this framework, Jeffs constructs a minimally non-convex code $\cC'$ on six neurons by taking images of the code $\cC_1$ from \cite{lienkaemper2017obstructions}.  

We can produce infinitely many distinct non-convex codes by (for instance) adding new neurons to the code $\cC$ from \cite{lienkaemper2017obstructions}. A more interesting question is whether or not there are infinitely many minimal non-convex codes. In contrast to case for graphs, where any minor-closed family has finitely many excluded minors by the famous result of Robertson and Seymour, there are infinitely many minimally non-convex codes. In particular, there is a minimal non-convex code corresponding to each non-collapsible simplicial complex by Proposition 5.8 of \cite{jeffs2020morphisms}. 

A more explicit infinite family of minimally non-convex codes generalizing $\cC_1$ is given in \cite{jeffs2019sunflowers}. This family depends on the following fact about sunflowers of convex open sets, proved in the same paper. Say $\cU = \{U_1, \ldots, U_n\}$ is a \emph{sunflower} if for any $i, j$, $U_i \cap U_j = \bigcap_{i = 1}^n U_i$. The intersection $U = \bigcap_{i = 1}^n U_i$ is referred to as the \emph{center} of the sunflower. 

\begin{thm}[Theorem 1.1, \cite{jeffs2019sunflowers}]
Let $\cU = \{U_1, \ldots, U_n, U_{d+1}\}$ be a sunflower of $d+1$ convex opens sets in $\R^d$. Then any hyperplane which intersects each $U_i$ must also intersect the center $\bigcap_{i = 1}^n U_i$. 
\label{thm:sunflower}
\end{thm}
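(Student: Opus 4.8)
The plan is to argue by contradiction. Suppose $H$ meets every $U_i$ but misses the center $U := \bigcap_{i} U_i$, which we may assume is nonempty. Since $U$ is open, convex and disjoint from $H$, it lies in one open half-space of $H$; write $H = \{\phi = 0\}$ for an affine functional $\phi$ with $U \subseteq \{\phi > 0\}$. Intersecting each $U_i$ with a sufficiently large open ball preserves the sunflower property, keeps the center nonempty, and keeps $H$ meeting every $U_i$, so we may assume all $U_i$ are bounded. Replacing $\phi$ by $\phi - \inf_U \phi$ slides $H$ toward $U$: it still meets each $U_i$ (since $U_i$ is open and already dips below $H$, while $\sup_{U_i}\phi \ge \sup_U\phi > \inf_U\phi$) and still misses $U$, so we may also assume $\inf_U \phi = 0$. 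Then $\overline U$ is compact and touches $H$; fix $x_0 \in \overline U \cap H$.

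The core step is to pass to tangent cones at $x_0$. Let $T_i$ and $T_U$ be the radial tangent cones of $U_i$ and of $U$ at $x_0$; these are open convex cones, $T_U \subseteq T_i$ for all $i$, and since tangent cones commute with intersection here, $T_i \cap T_j = T_U$ for $i \ne j$. The decisive observation is that, because $U$ is open and disjoint from $H$, no nonzero direction $v$ with $\phi(v)=0$ can point into $U$ (the point $x_0 + \varepsilon v$ lies on $H$), so $T_U$ contains no such ``horizontal'' direction. On the other hand, picking $p_i \in U_i \cap H$, the vector $v_i := p_i - x_0$ is horizontal ($\phi(v_i)=0$) and lies in $T_i$. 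One first rules out $x_0 \in U_i$ for every $i$: if $x_0 \in U_k$ then $T_k = \R^d$, hence $T_i = T_i \cap T_k = T_U$ for all $i \ne k$, which would force the horizontal vectors $v_i$ into $T_U$ (and $v_i = 0$ would give $x_0 \in U_i \cap U_k = U$, contradicting $x_0 \in H$). Thus each $v_i \ne 0$; and $v_i \in T_j$ for some $j \ne i$ would put $v_i \in T_i \cap T_j = T_U$, impossible, so $v_i \notin T_j$ and in particular the $v_i$ are distinct.

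It remains to derive a contradiction from having $d+1$ open convex cones $T_i$ in $\R^d$, pairwise intersecting exactly in the horizontal-direction-free cone $T_U$, each containing a distinct nonzero horizontal vector $v_i$. Slicing the direction space by a hyperplane $\{\phi = 1\} \cong \R^{d-1}$, and writing $D := T_U \cap \{\phi = 1\}$ (nonempty open convex) and $C_i$ for the convex cone generated by $T_U$ and $v_i$ (so $C_i \subseteq T_i$ and $C_i \cap C_j = T_U$), one gets $C_i \cap \{\phi = 1\} = D + \R_{\ge 0}v_i$, hence $(D + \R_{\ge 0}v_i) \cap (D + \R_{\ge 0}v_j) = D$ for all $i \ne j$. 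When $D$ is bounded, a short separation argument shows this identity forces every pair $v_i, v_j$ to be negatively parallel, which is impossible for three or more nonzero vectors; since $d+1 \ge 3$ (the cases $d \le 1$ being trivial), this is the contradiction.

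The genuine obstacle is the degenerate case where $D$ is unbounded, i.e.\ where the closure of $T_U$ does contain horizontal directions — morally, where $U$ is ``flat'' against $H$ at $x_0$ (think of $U$ a ball tangent to $H$). There the slice argument above does not immediately bite, and one needs a separate treatment: I would try either a limiting argument (perturbing $x_0$, or shrinking the $U_i$) or an induction on $d$ obtained by cutting the whole configuration with a hyperplane parallel to $H$ just inside $U$, which again yields a sunflower in dimension $d-1$ — the delicate point being to manufacture a suitable transversal flat inside that slice. I expect this degenerate case, rather than the main line above, to be where the real work lies.
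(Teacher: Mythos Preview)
This theorem is only \emph{cited} in the thesis (from \cite{jeffs2019sunflowers}); no proof is given here, so there is no in-paper argument to compare against. I will therefore assess your attempt on its own terms.

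Your reduction to tangent cones at a boundary point $x_0\in\overline U\cap H$ is sound. In particular, for open convex sets the identity $T_{U_i}\cap T_{U_j}=T_{U_i\cap U_j}=T_U$ really does hold (if $x_0+tv\in U_i$ and $x_0+tv\in U_j$ for all small $t>0$, then $x_0+tv\in U_i\cap U_j$), and the bookkeeping showing $v_i\in T_i$, $v_i\neq 0$, and $x_0\notin U_k$ is correct.

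The genuine error is in the step you present as the \emph{easy} case. After passing from $T_i$ to the smaller cone $C_i=\operatorname{cone}(T_U,v_i)$ and slicing at $\{\phi=1\}$, you correctly obtain
\[
(D+\R_{\ge 0}v_i)\cap(D+\R_{\ge 0}v_j)=D\qquad(i\ne j),
\]
with $D\subseteq\R^{d-1}$ bounded open convex and $v_1,\dots,v_{d+1}\in\R^{d-1}$ nonzero. You then assert that a short separation argument forces each pair $v_i,v_j$ to be negatively parallel. This is false for $d\ge 3$. Take $d=3$, $D=(0,1)^2\subseteq\R^2$, and $v_1=e_1,\ v_2=e_2,\ v_3=-e_1,\ v_4=-e_2$. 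The four tubes are $\{x>0,\,0<y<1\}$, $\{0<x<1,\,y>0\}$, $\{x<1,\,0<y<1\}$, $\{0<x<1,\,y<1\}$, and one checks directly that every pairwise intersection is exactly $(0,1)^2=D$. So $d+1=4$ tubes in $\R^{d-1}=\R^2$ can satisfy your derived condition, and no contradiction follows.

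What went wrong is that in replacing $T_i$ by $C_i$ you discarded the crucial fact that $T_i$ is \emph{open} and hence contains a full neighbourhood of the horizontal vector $v_i$, in particular directions with $\phi<0$; the surrogate $C_i$ lives entirely in $\{\phi\ge 0\}$ and is much weaker. The honest constraint ``$d+1$ open convex cones $T_i$ in $\R^d$ with $T_i\cap T_j=T_U\subseteq\{\phi>0\}$ and each $T_i$ containing a horizontal vector'' may well be contradictory (for $d=2$ it is, by an elementary angular-sector count), but you have not shown it, and the $C_i$ version cannot carry the weight. Combined with the unbounded-$D$ case you already flag as open, the argument as written does not prove the theorem in any dimension $d\ge 3$.
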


We can construct non-convex codes by forcing a hyperplane to intersect each of the $U_i$, but not the center. In Chapter \ref{chapter:order_forcing}, we provide an alternate family of minimally non-convex codes with no local obstructions generalizing $\cC_1$ which uses only the $n = 3$ case of the sunflower theorem.



\chapter{Order Forcing in Neural Codes } \label{chapter:order_forcing}

This chapter is adapted from the paper ``Order Forcing in Neural Codes", which is joint work with Amzi Jeffs and Nora Youngs \cite{jeffs2020order}, and is included here with their permission. 

\section{Introduction}

The arguments that the codes $\cC_1$ and $\cD_1$ in Section \ref{sec:cvx_or_not} are not convex share a common feature: at some step, they derive a contradiction by showing that any convex realization would have a straight line path passing through a certain sequence of atoms in a certain order. 
In this chapter, we introduce a combinatorial concept that we call \emph{order-forcing} which allows us to generalize these arguments.  Order-forcing provides an elementary connection between the combinatorics of a code and the geometric arrangement of atoms in its open or closed realizations. In particular, our main result is the following: 

\begin{ithm}\label{thm:order-forcing}
Let $\sigma_1, \sigma_2, \ldots, \sigma_k$ be an order-forced sequence of codewords in a code $\cC\subseteq 2^{[n]}$. Let $\cU = \{U_1, \ldots ,U_n\}$ be a (closed or open) convex realization of $\cC$, and let $x\in A^{\cU}_{\sigma_1}$, and $y\in A^{\cU}_{\sigma_k}$. Then the line segment $\overline{xy}$ must pass through the atoms of $\sigma_1, \sigma_2, \ldots, \sigma_k$, in this order.
\end{ithm}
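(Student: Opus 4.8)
The plan is to reduce the statement to a one–dimensional, purely combinatorial fact by restricting the realization to the segment $\overline{xy}$. Parametrize $\overline{xy}$ by $\gamma(t) = (1-t)x + ty$ for $t\in[0,1]$, and for each $i\in[n]$ set $I_i := \{t\in[0,1] : \gamma(t)\in U_i\}$. Since each $U_i$ is convex and $\gamma$ is affine, $I_i$ is a convex subset of $[0,1]$, i.e.\ a subinterval. Let $c(t)$ denote the codeword realized at $\gamma(t)$, so $c(t) = \{i : t\in I_i\}$, $c(0)=\sigma_1$, and $c(1)=\sigma_k$. Since an atom may meet a line in several pieces, the conclusion of the theorem should be read as: there exist parameters $0\le t^{(1)} < t^{(2)} < \cdots < t^{(k)} \le 1$ with $c(t^{(\ell)}) = \sigma_\ell$ for every $\ell$, and that is what I would prove.

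The next step is to extract the structural consequences of convexity along the line. For $i\in\sigma_1\cap\sigma_k$ we have $0,1\in I_i$, hence $I_i=[0,1]$: neuron $i$ is on for all $t$, every codeword on the segment contains $\sigma_1\cap\sigma_k$, and $\gamma$ stays inside $\bigcap_{i\in\sigma_1\cap\sigma_k}U_i$. For $i\in\sigma_1\setminus\sigma_k$ the interval $I_i$ is an initial segment of $[0,1]$; for $i\in\sigma_k\setminus\sigma_1$ it is a final segment; and for $i\notin\sigma_1\cup\sigma_k$ it is an interior interval, possibly empty. Hence, as $t$ increases from $0$ to $1$, each neuron switches off at most once and on at most once and never reverses. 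Cutting $[0,1]$ at the finitely many endpoints of the $I_i$, the ``trace'' of the segment is a finite sequence $\sigma_1 = c_0, c_1, \dots, c_m = \sigma_k$ of codewords of $\cC$ in which every neuron is on during a contiguous block of indices.

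The heart of the proof is a combinatorial induction on $\ell$ showing that this trace visits $\sigma_1,\sigma_2,\dots,\sigma_k$ in order — this is exactly where the hypothesis that $\sigma_1,\dots,\sigma_k$ is \emph{order-forced} is invoked. In the inductive step, once the trace is known to reach $\sigma_\ell$, the order-forcing conditions relating $\sigma_\ell$ and $\sigma_{\ell+1}$, combined with the contiguity constraint above, are used to exclude the two ways the trace could fail to reach $\sigma_{\ell+1}$ next: it cannot ``skip past'' $\sigma_{\ell+1}$ (simultaneously switching off the neurons of $\sigma_\ell\setminus\sigma_{\ell+1}$ and on those of $\sigma_{\ell+1}\setminus\sigma_\ell$ without ever realizing $\sigma_{\ell+1}$), since this would require a codeword of $\cC$ lying strictly between $\sigma_\ell$ and $\sigma_{\ell+1}$ that order-forcing rules out; and it cannot ``double back'' to $\sigma_\ell$ or an earlier $\sigma_j$ after a relevant neuron has been switched, since that would violate contiguity. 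Concatenating the steps produces the required parameters $t^{(1)}<\cdots<t^{(k)}$, and lifting back to $\R^d$ gives the ordering of atoms along $\overline{xy}$.

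The step I expect to be the main obstacle is this inductive combinatorial core: drawing out of the definition of an order-forced sequence exactly the ``no shortcut'' and ``no doubling back'' statements needed at each stage, and doing so uniformly in the open and closed cases. In the closed case one must be careful that atoms can be lower-dimensional, so $\sigma_\ell$ may be realized only at isolated parameter values and $c(t)$ may jump on the endpoints of the $I_i$; I would handle this by arguing directly in terms of the intervals $I_i$ and whether their endpoints are included, rather than in terms of the topology of the atoms. The reduction to the line and the interval structure it produces are routine once convexity is in hand.
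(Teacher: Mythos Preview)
Your one–dimensional reduction is exactly right, and the contiguity observation (each $I_i$ is an interval, so every neuron is on during a contiguous block of the trace) is the whole geometric content of the paper's Lemma~\ref{lem:convexfeasible}: the sequence of atoms along $\overline{xy}$ is a \emph{feasible walk} in the codeword containment graph $G_\cC$, where feasibility means precisely $c_i\cap c_j\subseteq c_m$ for $i<m<j$.

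The gap is in how you invoke the hypothesis. In the paper, ``$\sigma_1,\dots,\sigma_k$ is order-forced'' is \emph{defined} to mean that every feasible $\sigma_1,\sigma_k$ path in $G_\cC$ contains $\sigma_1,\dots,\sigma_k$ as a subsequence. It is a single global hypothesis about the combinatorics of $\cC$, not a family of local ``order-forcing conditions relating $\sigma_\ell$ and $\sigma_{\ell+1}$.'' So once you know the trace is a feasible walk (and hence, after deleting the portions between repeated vertices, contains a feasible $\sigma_1,\sigma_k$ path), the conclusion is immediate from the definition: that path already contains $\sigma_1,\dots,\sigma_k$ in order, and therefore so does the full trace. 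The ``inductive combinatorial core'' you flag as the main obstacle does not exist; there is nothing to prove there, and the local ``no-shortcut / no-doubling-back'' statements you sketch are neither available from the hypothesis nor needed. Your careful handling of the open versus closed distinction is likewise unnecessary: the contiguity of each $I_i$ uses only convexity of $U_i$, and the walk-to-path reduction is purely combinatorial.
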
 

\setcounter{ithm}{0}

We will use order-forcing to construct new examples of non-convex codes. 
In Section \ref{sec:order-forcing}, we define an order-forced sequence of codewords and prove Theorem \ref{thm:order-forcing}.  In Section~\ref{sec:of_examples} we use order-forcing to describe new good cover codes that are not convex:
\begin{itemize}
    \item We generalize a minimally non-convex code from \cite{jeffs2020morphisms} based on sunflowers of convex open sets to a family $\{\cL_n\mid n\ge 0\}$ of minimally non-convex  good cover codes (Proposition \ref{prop:stretch_sun}).
    \item We build a good cover code $\cR$ that is neither open nor closed convex by using order-forcing to guarantee that two disjoint sets would cross one another in a convex realization of $\cR$ (Proposition \ref{prop:rowboat}). This example is notable in that it relies on the order that codewords appear along line segments, rather than just certain codewords being ``between" one another.
    \item We build a good cover code $\cT$ that is neither open nor closed convex by using order-forcing to guarantee a non-convex ``twisting" in every realization of $\cT$ (Proposition \ref{prop:braid}).

\end{itemize}

These examples illustrate the utility of order-forcing. The codes $\cT$ and $\cR$ have the advantage that they require only elementary geometric techniques (i.e. order-forcing) to analyze. The codes $\cT$ and $\cR$ are also the first ``natural" examples we know of of good cover codes which are not produced by combining a non-open-convex code and a non-closed-convex code.

\section{Order-Forcing \label{sec:order-forcing}}

When we constrain ourselves to realizations that use only open (or only closed) convex regions $U_i$, we restrict not only which codes may be realized, but how regions in these realizations can be arranged. 
In particular, when we move along continuous paths through realizations composed of open (or closed) sets $U_i$, we are limited in the transitions we can make from one atom to the next.  

\begin{lem}\label{lem:edges} Suppose $\cC\subseteq 2^{[n]}$ is a neural code with a good cover realization $\cU$, and let $\sigma$ and $\tau$ be codewords of $\cC$. If there are points $p_\sigma\in A^\cU_\sigma$ and $p_\tau \in A^\cU_\tau$ and a continuous path from $p_\sigma$ to $p_\tau$ that is contained in $A^\cU_\sigma\cup A^\cU_\tau$ (that is, if the atoms are adjacent in the realization), then either $\sigma\subseteq\tau$ or $\tau\subseteq\sigma$.
\end{lem}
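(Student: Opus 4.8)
The plan is to argue by contradiction: suppose neither $\sigma \subseteq \tau$ nor $\tau \subseteq \sigma$. Then there is an index $i \in \sigma \setminus \tau$ and an index $j \in \tau \setminus \sigma$. I would use these two indices to extract a disconnection of a set in the good cover, contradicting contractibility (and in particular connectedness) of that set. The key observation is that points in $A^\cU_\sigma$ lie in $U_i$ but not in $U_j$, while points in $A^\cU_\tau$ lie in $U_j$ but not in $U_i$; so along the purported path, the set $U_i$ and the set $U_j$ play complementary roles.

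More precisely, first I would set $W := U_i \cup U_j$ (or perhaps better, work with the path's image $P \subseteq A^\cU_\sigma \cup A^\cU_\tau$ directly). Since every point of $A^\cU_\sigma$ is in $U_i$ and every point of $A^\cU_\tau$ is in $U_j$, and no point of $A^\cU_\sigma$ is in $U_j$ while no point of $A^\cU_\tau$ is in $U_i$, the two sets $P \cap U_i$ and $P \cap U_j$ cover $P$, are each nonempty (they contain $p_\sigma$ and $p_\tau$ respectively), and are disjoint (a point in both would be in $U_i \cap U_j$, hence in neither atom $A^\cU_\sigma$ nor $A^\cU_\tau$ since those require being outside $U_j$ and outside $U_i$ respectively). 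Each is open in $P$ (as $P$ lies in the union of the two atoms, on $A^\cU_\sigma$ we have $P \cap U_i = A^\cU_\sigma \cap P$ which is relatively open, and similarly for the other), so this disconnects $P$. But $P$ is the continuous image of an interval, hence connected — contradiction.

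The main obstacle I anticipate is being careful about \emph{why} each piece $P \cap U_i$ and $P \cap U_j$ is relatively open in $P$, since $U_i$ and $U_j$ need not be open in a general good cover; the trick is that atoms $A^\cU_\sigma$ and $A^\cU_\tau$ are themselves relatively open in the union $\bigcup_k U_k$ restricted appropriately, or more simply that on the set $A^\cU_\sigma \cup A^\cU_\tau$ the relation "belongs to $U_i$" coincides with "belongs to $A^\cU_\sigma$" and "belongs to $U_j$" coincides with "belongs to $A^\cU_\tau$", and the atoms partition this union into two relatively open pieces by hypothesis (the path stays in $A^\cU_\sigma \cup A^\cU_\tau$, and each atom is relatively open in the whole space because it is a difference of the open-or-closed behavior — actually I should note we only need that a continuous curve cannot jump, i.e., use that $P$ is connected and is covered by two disjoint nonempty relatively open sets). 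I would phrase the cleanest version using connectedness of $P$ together with the partition $P = (P \cap A^\cU_\sigma) \sqcup (P \cap A^\cU_\tau)$ and the identification of these with $P \cap U_i$ and $P \cap U_j$; then invoke that in a good cover the relevant sets are open-in-$X$ when all $U_k$ are open (and handle the closed case by the symmetric argument, or by noting the statement and proof are insensitive to the open/closed convention as long as we track which of $U_i, U_j$ is open vs closed). This lemma is then exactly the engine behind the definition of an order-forced sequence in the next section.
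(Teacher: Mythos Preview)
Your proposal is correct and follows essentially the same route as the paper: argue by contradiction, pick $i\in\sigma\setminus\tau$ and $j\in\tau\setminus\sigma$, and observe that $P\cap U_i$ and $P\cap U_j$ form a nonempty disjoint partition of the connected path image $P$. The paper resolves the obstacle you flagged more directly than you do: rather than routing through the atoms, it simply uses that the good cover consists of sets that are all open or all closed, so $U_i\cap P$ and $U_j\cap P$ are both relatively open or both relatively closed in $P$, which immediately contradicts connectedness.
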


\begin{proof} Let $P$ be the image of a continuous path from $p_\sigma$ to $p_\tau$ with $P\subseteq A^\cU_\sigma\cup A^\cU_\tau$. Suppose for contradiction that $\sigma\not\subset\tau$ and $\tau\not\subset \sigma$. Then there exist elements $i\in \sigma\setminus \tau$ and $j\in \tau\setminus\sigma$. But then $U_i\cap P$ and $U_j\cap P$ partition $P$ (every point in $P$ is in exactly one of $A_\sigma^\cU$ or $A_\tau^\cU$ and thus in exactly one of $U_i$ or $U_j$). Since our good cover consists of sets that are all open or all closed, the sets $U_i\cap P$ and $U_j\cap P$ are both relatively open or both relatively closed in $P$. This contradicts the fact that $P$ is connected, so $\sigma\subseteq\tau$ or $\tau\subseteq \sigma$ as desired.
\end{proof}

Thus, as we move continuously through any good cover realization of a code, we are moving along  edges in the following graph $G_\cC$:

\begin{defn}\label{def:codewordgraph}
Let $\cC\subseteq 2^{[n]}$ be a neural code. The \emph{codeword containment graph} of $\cC$ is the graph $G_{\cC}$ whose vertices are codewords of $\cC$, with edges $\{\sigma,\tau\}$ when either $\sigma\subsetneq\tau$ or $\tau\subsetneq \sigma$. Note that this graph is also defined in \cite{chan2020nondegenerate}. 
\end{defn}

\begin{ex}\label{ex:manypaths}
Consider the code $\cC = \{{\bf 1235}, {\bf 1245}, {\bf 1256}, 125, 13, 14, 15, \emptyset\}$. The graph $G_\cC$ for this code is shown in Figure \ref{fig:manypaths}.

\begin{center}
\begin{figure}[h!]
\begin{center}
\includegraphics[width=2in]{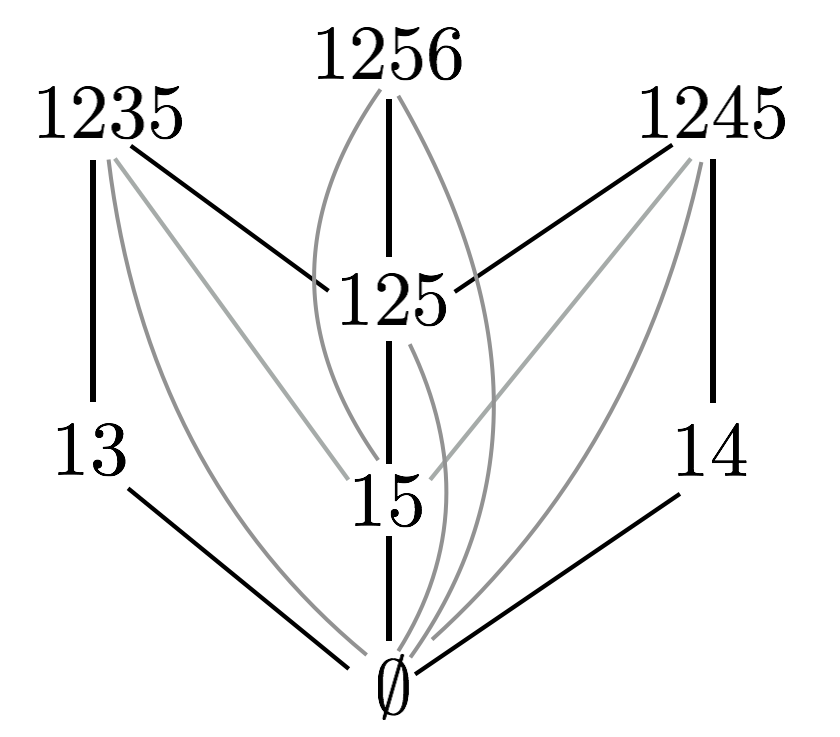}
\caption[The codeword containment graph.]{The codeword containment graph $G_\cC$ for the code in Example \ref{ex:manypaths}. \label{fig:manypaths}}
\end{center}
\end{figure} 
\end{center}
\end{ex}

Lemma \ref{lem:edges} implies that any continuous path from one codeword region to another codeword region in an open (or closed) realization of the code $\cC$ must correspond to a walk in the graph $G_\cC$. For straight-line paths within a convex realization, this walk must respect convexity, a property we call {\it feasibility} (see Lemma \ref{lem:convexfeasible}). 
    
\begin{defn}\label{def:feasible}
Let $\cC$ be a neural code and $G_\cC$ its codeword containment graph. A $\sigma,\tau$ walk  $\sigma=v_1,v_2,...,v_k=\tau$ in $G_\cC$ is called \emph{feasible} if $v_i\cap v_j\subseteq v_m$ for all  $1\leq i<m<j\leq k$.
\end{defn}

In general, if there exists a feasible $\sigma,\tau$ walk, then by removing portions of the walk between repeated vertices, we can obtain a feasible $\sigma, \tau$ path. 
 This does not, however, mean that there is a corresponding straight line path in the realization which would follow precisely this sequence of codewords. For example, one could form a closed realization of the code $\{\mathbf{1},\mathbf{2},\mathbf{3},\emptyset\}$ 
 in which $U_2$ is a hyperplane, and $U_1$ and $U_3$ are contained in its positive and negative side respectively. 
 Then any straight line from the atom of $1$ to the atom of $3$ must pass through the atom of 2, but the path $1, \emptyset, 3$ is a feasible path in $G_{\cC}$ regardless.

\begin{ex}[Example \ref{ex:manypaths} continued] Consider the codewords $\sigma = 13$ and $\tau = 14$ in $\cC$ from the code in Example \ref{ex:manypaths}. There are many $\sigma,\tau$ walks; however, not all are feasible. For example, the walk $13, 1235, 15, 1245, 14$ would not be feasible; however, the walk $13, 1235, 125, 1275, 125, 1245$ is a feasible $\sigma,\tau$ walk. This walk contains the feasible path $13, 1235, 125, 1245, 14$, which in this case is the unique  feasible $\sigma,\tau$ path. 

\end{ex}

\begin{lem}\label{lem:convexfeasible} Suppose $\cC\subseteq 2^{[n]}$ is a neural code with a convex realization $\cU = \{U_1,\ldots, U_n\}$, and let $\sigma$ and $\tau$ be codewords of $\cC$. If there are points $p_\sigma\in A^\cU_\sigma$ and $p_\tau \in A^\cU_\tau$, then the sequence of atoms along the line $\overline{p_\sigma p_\tau}$ forms a feasible  walk in $G_\cC$.
\end{lem}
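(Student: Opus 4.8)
The plan is to parametrize the segment $\overline{p_\sigma p_\tau}$ as $p\colon[0,1]\to\R^d$ with $p(0)=p_\sigma$, $p(1)=p_\tau$, and to track the codeword $c(t):=\{i\mid p(t)\in U_i\}$ of $p(t)$. The geometric input I would lean on is that each $U_i\cap\overline{p_\sigma p_\tau}$, being the intersection of a convex set with a line segment, is convex, hence equals $p(I_i)$ for a subinterval $I_i\subseteq[0,1]$, and this subinterval is relatively open when the $U_i$ are open and relatively closed when they are closed. Then $c(t)=\{i\mid t\in I_i\}$, and since there are only finitely many endpoints among the $I_i$, the codeword $c$ is constant on each of finitely many subintervals of $[0,1]$. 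Listing the atoms met by $\overline{p_\sigma p_\tau}$ in the order encountered (before collapsing, every atom visited appears) and then deleting consecutive repetitions yields a finite sequence $v_1,v_2,\dots,v_k$; using that $c$ is constant near $t=0$ and $t=1$ except possibly at the endpoint itself (by relative openness/closedness of the $I_i$) gives $v_1=\sigma$ and $v_k=\tau$. This sequence is what I would take as the precise meaning of ``the sequence of atoms along the line,'' and it then remains to check that $v_1,\dots,v_k$ is a feasible walk in $G_\cC$.

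First I would show it is a walk, i.e.\ that consecutive $v_\ell,v_{\ell+1}$ are joined by an edge of $G_\cC$. Let $t^\ast$ be the parameter where the codeword changes from $v_\ell$ to $v_{\ell+1}$. Because every visited atom appears in the full (uncollapsed) list, consecutiveness of $v_\ell$ and $v_{\ell+1}$ forces $c(t^\ast)\in\{v_\ell,v_{\ell+1}\}$ and forces the two regions to abut with no gap; hence a short closed sub-segment $p([a,b])$ straddling $t^\ast$ lies entirely in $A^\cU_{v_\ell}\cup A^\cU_{v_{\ell+1}}$ and is a continuous path from a point of $A^\cU_{v_\ell}$ to a point of $A^\cU_{v_{\ell+1}}$. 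Lemma~\ref{lem:edges} then yields $v_\ell\subseteq v_{\ell+1}$ or $v_{\ell+1}\subseteq v_\ell$, which is exactly an edge of $G_\cC$. Running this over all $\ell$ shows the sequence is a walk.

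Finally I would verify \emph{feasibility}. Fix $\ell<m<\ell'$ and $j\in v_\ell\cap v_{\ell'}$; the goal is $j\in v_m$. By construction the (occurrences of the) regions where $c\equiv v_\ell$, where $c\equiv v_m$, and where $c\equiv v_{\ell'}$ appear in this order along $[0,1]$. Since $j\in v_\ell$ and $j\in v_{\ell'}$, the set $I_j$ contains a parameter in the first region and a parameter in the third; being an interval, $I_j$ therefore contains the whole middle region, so $j\in c(t)=v_m$ for $t$ there. Hence $v_i\cap v_j\subseteq v_m$ whenever $i<m<j$, so the walk is feasible, completing the proof.

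I expect the only genuine obstacle to be the bookkeeping in step one — pinning down ``the sequence of atoms along the line'' so that it is finite, has the right endpoints, and has genuinely adjacent consecutive terms so that Lemma~\ref{lem:edges} applies at each transition point. This is exactly where the hypothesis that the $U_i$ are all open or all closed is used (it makes each $I_i$ an interval that is uniformly relatively open or relatively closed, which is what Lemma~\ref{lem:edges} needs), and it is why the conclusion genuinely requires convexity and can fail for arbitrary good covers, as the hyperplane example following Definition~\ref{def:feasible} illustrates. Once the sequence is set up, the feasibility step itself is just the one-line observation that a line meets each convex $U_j$ in an interval.
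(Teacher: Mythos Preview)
Your proposal is correct and follows essentially the same approach as the paper: use Lemma~\ref{lem:edges} at each transition to show the sequence is a walk, and use that each $U_j$ meets the segment in an interval to verify feasibility (the paper phrases this last step as ``$U_{\tau_i\cap\tau_k}$ is convex, so the segment from $x_i$ to $x_k$ lies in it,'' which is your interval argument applied to all $j\in\tau_i\cap\tau_k$ at once). You are more careful than the paper about the bookkeeping of what ``the sequence of atoms along the line'' actually means, which the paper simply takes for granted.
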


\begin{proof}   Select points $p_\sigma\in A^\cU_\sigma$ and $p_\tau\in A^\cU_\tau$, and let the sequence of atoms along the line be given by $\sigma=\tau_1,\tau_2,...,\tau_\ell = \tau$. 
By Lemma \ref{lem:edges}, if we cross directly from $A_{\tau_i}^\cU$ to $A_{\tau_{i+1}}^\cU$ along the path $\overline{xy}$, then either $\tau_i\subseteq \tau_{i+1}$ or $\tau_{i+1}\subseteq \tau_i$, thus $(\tau_i, \tau_{i+1})$ is an edge of $G_\cC$. 
Thus, $\tau_1, \ldots, \tau_\ell$ is a walk in $G_\cC$. 
To check feasibility, we need to show that for all $i \leq j \leq k$,  $\tau_i \cap\tau_k  \subseteq \tau_j$. 
We can choose points $x_i, x_j$, and $x_j$ in this order along $\overline{p_\sigma p_\tau}$ such that $x_i \in A_{\tau_i}^\cU$, $x_j \in A_{\tau_j}^\cU$, $x_k \in A_{\tau_k}^\cU$. 
Since $\{U_1, \ldots, U_n\}$ is a convex realization and intersections of convex sets are convex, $U_{\tau_i\cap \tau_k}$ is a convex set.
By the definition of a convex set, the line segment $\overline{x_i x_k}$ is contained in $U_{\tau_i\cap \tau_k}$. 
Thus, $x_j\in U_{\tau_i\cap \tau_k}$, so $\tau_i \cap \tau_k \subseteq \tau_j$. Thus, $\tau_1, \ldots, \tau_\ell$ is a feasible walk. 

\end{proof}

The idea of feasibility gives us a new tool for finding possible obstructions to convexity. In any convex realization of a code, straight line paths between points in the same set $U_i$ must correspond to feasible walks in the graph, and so codes where feasible walks are rare or nonexistent can force us into contradictions. To that end, we define a few particular restrictions we will encounter.

\begin{defn}\label{def:forcedbetween} Let $\cC$ be a neural code and $G_\cC$ its codeword containment graph. We say a vertex $v$ of $G_\cC$ is \emph{forced between} vertices $\sigma$ and $\tau$ if every feasible $\sigma,\tau$ path passes through $v$. 
\end{defn}

\begin{ex}[Example \ref{ex:manypaths} continued]
In the codeword containment graph $G_\cC$, we see that $1245$ is forced between $14$ and $15$. There are many possible feasible paths from $14$ to $15$ (for example (14, 1245, 15) or (14, 1245, 125, 15) or (14, 1245,125, 1235, 15), but all these paths must use $1245$. 
\end{ex}

In cases where there are multiple codewords forced between two vertices of our graph, we often find that these vertices are also forced into a particular order, a situation we call {\it order-forcing}.

\begin{defn} Let $\cC$ be a neural code and $G_\cC$ its corresponding graph. A sequence of codewords $\sigma_1,...,\sigma_k$ is \emph{order-forced} if every feasible $\sigma_1,\sigma_k$ path contains these codewords as a subsequence. 
\end{defn}





\begin{defn}\label{def:orderforced} Let $\cC$ be a neural code and  $G_\cC$ its corresponding graph.  A feasible path $\sigma_1,...,\sigma_k$ is \emph{strongly order-forced} if $\sigma_1,...,\sigma_k$ is the unique feasible $\sigma_1,\sigma_k$ walk in  $G_\cC$. 
\end{defn} 



\begin{ex}
Consider the code $$\cC = \{{\bf 2456},{\bf 123},{\bf 145},{\bf 437}, {\bf467},45,46,47,1,2,3,\emptyset\}$$

In this code, the sequence $145,45,2456,46,467,47,473$ is strongly order-forced.  In order to have a path in $G_\cC$ from $145$ to $437$ which is feasible, we can certainly only use codewords which contain $4$. If we restrict to the portion of  $G_\cC$ which contains $4$, we have that this subgraph is a path with endpoints $145$ and $437$. Thus, there is a unique path from $145$ to $437$, and we can check that this path is feasible. 

\end{ex}

\begin{thm}\label{thm:order-forcing}
Let $\sigma_1, \sigma_2, \ldots, \sigma_k$ be an order-forced sequence of codewords in a code $\cC\subseteq 2^{[n]}$. Let $\cU = \{U_1, \ldots ,U_n\}$ be a (closed or open) convex realization of $\cC$, and let $x\in A^{\cU}_{\sigma_1}$, and $y\in A^{\cU}_{\sigma_k}$. Then the line segment $\overline{xy}$ must pass through the atoms of $\sigma_1, \sigma_2, \ldots, \sigma_k$, in this order.
\end{thm}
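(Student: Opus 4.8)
The plan is to reduce the statement directly to Lemma \ref{lem:convexfeasible} together with the definition of an order-forced sequence, so that essentially no new geometry is needed. First I would fix the convex realization $\cU = \{U_1,\ldots,U_n\}$ and the points $x\in A^{\cU}_{\sigma_1}$, $y\in A^{\cU}_{\sigma_k}$. Since each $U_i$ is convex, $U_i\cap\overline{xy}$ is a subinterval of $\overline{xy}$, so the finitely many endpoints of these intervals (together with $x$ and $y$) partition $\overline{xy}$ into finitely many pieces on each of which membership in every $U_i$ is constant; reading these pieces off in order from $x$ to $y$ gives a well-defined finite sequence of atoms $\sigma_1 = \tau_1,\tau_2,\ldots,\tau_\ell = \sigma_k$ (the endpoints are $\sigma_1$ and $\sigma_k$ precisely because $x\in A^{\cU}_{\sigma_1}$ and $y\in A^{\cU}_{\sigma_k}$). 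By Lemma \ref{lem:convexfeasible}, $\tau_1,\ldots,\tau_\ell$ is a feasible $\sigma_1,\sigma_k$ walk in the codeword containment graph $G_\cC$.

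Next I would pass from this walk to a feasible path. As noted immediately after Definition \ref{def:feasible}, deleting from a feasible walk the portion strictly between the first and last occurrences of each repeated vertex produces a feasible $\sigma_1,\sigma_k$ path $P$. The key observation here is that $P$ is obtained from $\tau_1,\ldots,\tau_\ell$ only by deletions, so $P$ is a subsequence of $\tau_1,\ldots,\tau_\ell$ (and still starts at $\sigma_1$ and ends at $\sigma_k$).

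Now I would invoke the hypothesis that $\sigma_1,\ldots,\sigma_k$ is order-forced: by definition, every feasible $\sigma_1,\sigma_k$ path contains $\sigma_1,\sigma_2,\ldots,\sigma_k$ as a subsequence, so in particular $P$ does. Composing the two subsequence relations, $\sigma_1,\ldots,\sigma_k$ occurs as a subsequence of $\tau_1,\ldots,\tau_\ell$, i.e.\ there are indices $1\le j_1<j_2<\cdots<j_k\le \ell$ with $\tau_{j_m}=\sigma_m$ for each $m$. Unwinding the construction of the $\tau_i$, this says exactly that as one travels along $\overline{xy}$ from $x$ to $y$ one passes through the atoms of $\sigma_1,\sigma_2,\ldots,\sigma_k$ in this order, which is the assertion of the theorem.

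Since Lemma \ref{lem:convexfeasible} is already established, I do not expect a genuine obstacle; the only point requiring care is the bookkeeping in the walk-to-path reduction, namely checking that the reduced path is a subsequence of the original atom sequence so that order-forcedness of the path transfers back to the segment. A secondary point worth a sentence is the finiteness of the atom sequence along $\overline{xy}$ (used implicitly by Lemma \ref{lem:convexfeasible}), which follows from convexity of the sets $U_i$ as indicated above.
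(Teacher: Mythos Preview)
Your proposal is correct and follows essentially the same approach as the paper's proof: apply Lemma~\ref{lem:convexfeasible} to get a feasible walk along $\overline{xy}$, reduce to a feasible path (a subsequence of the walk), and use the definition of order-forcing to conclude that $\sigma_1,\ldots,\sigma_k$ is a subsequence of the atom sequence. The paper's argument is simply a terser version of yours; your added care about finiteness of the atom sequence and the walk-to-path subsequence bookkeeping fills in details the paper leaves implicit.
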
 

\begin{proof}
Let $\sigma_1, \ldots, \sigma_k$ be an order-forced sequence in a code $\cC.$ Let $\cU = \{U_1, \ldots, U_n\}$ be a (closed or open) convex realization of $\cC$. 
Let $x\in A_{\sigma_1}^{\cU}$ and $y\in A_{\sigma_k}^\cU,$ and let $\overline{xy}$ be the line segment from $x$ to $y$. Let $\tau_1 = \sigma_1, \tau_2, \ldots, \tau_\ell = \sigma_k$ be the sequence of atoms along $\overline{xy}$.
By Lemma \ref{lem:convexfeasible}, we have that $\tau_1, \ldots, \tau_\ell$ is a feasible  walk from $\sigma_1$ to $\sigma_k$ in $G_\cC.$
Since every feasible walk from $\sigma_1$ to $\sigma_k$ contains a feasible path from $\sigma_1$ to $\sigma_k$, and every feasible path from $\sigma_1$ to $\sigma_k$ contains  $\sigma_1, \sigma_2, \ldots, \sigma_k$ as a subsequence, this suffices to prove Theorem \ref{thm:order-forcing}.

\end{proof}

 The situation where a codeword $v$ is forced between $\sigma$ and $\tau$ is a special case of order-forcing, and in this case we obtain the following result.  Once we know that a sequence is order-forced in a code $\cC$, we are often able to obtain several instances of order-forcing.
\begin{cor}\label{cor:forcedbetween}  
 Let $\cC\subseteq 2^{[n]}$ and suppose $\cU = \{U_1, \ldots ,U_n\}$ is a (closed or open) convex realization of a code $\cC$. If  $v\in \cC$ is forced between $\sigma$ and $\tau$, then for any $x\in A^{\cU}_{\sigma}$, and $y\in A^{\cU}_{\tau}$, the line segment $\overline{xy}$ must pass through the atom of $v$. 
\end{cor}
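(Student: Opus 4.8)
The plan is to recognize that Corollary~\ref{cor:forcedbetween} is nothing more than the three-term case of Theorem~\ref{thm:order-forcing}, so the whole argument is a short reduction rather than a fresh computation.

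First I would unpack the hypothesis using Definition~\ref{def:forcedbetween}: to say that $v$ is forced between $\sigma$ and $\tau$ means precisely that every feasible $\sigma,\tau$ path in $G_\cC$ passes through $v$. Since any feasible $\sigma,\tau$ path begins at $\sigma$ and ends at $\tau$ by definition, the fact that it also visits $v$ means it contains $\sigma, v, \tau$ as a subsequence. Hence the length-three sequence $\sigma_1 := \sigma$, $\sigma_2 := v$, $\sigma_3 := \tau$ satisfies the defining condition of an order-forced sequence of codewords in $\cC$.

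Then I would simply invoke Theorem~\ref{thm:order-forcing} for this sequence: given a (closed or open) convex realization $\cU = \{U_1, \ldots, U_n\}$ of $\cC$ and points $x \in A^\cU_\sigma$, $y \in A^\cU_\tau$, the segment $\overline{xy}$ passes through the atoms of $\sigma, v, \tau$ in that order; in particular it meets $A^\cU_v$, which is exactly the assertion of the corollary.

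I do not expect any genuine obstacle here: all of the content lives in Theorem~\ref{thm:order-forcing} and, beneath it, in Lemmas~\ref{lem:edges} and~\ref{lem:convexfeasible}, and this corollary merely extracts a special case. The only point worth a remark is the degenerate situation in which $G_\cC$ admits no feasible $\sigma,\tau$ path at all: then $v$ is vacuously forced between $\sigma$ and $\tau$, but also $\cC$ has no convex realization (by Lemma~\ref{lem:convexfeasible}, the atoms along a segment between a point of $A^\cU_\sigma$ and a point of $A^\cU_\tau$ would give a feasible walk, hence a feasible path, from $\sigma$ to $\tau$), so the claimed conclusion holds vacuously. No further calculation is needed.
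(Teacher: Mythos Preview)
Your proposal is correct and matches the paper's approach: the paper simply remarks that the situation where $v$ is forced between $\sigma$ and $\tau$ is a special case of order-forcing, and states the corollary as an immediate consequence of Theorem~\ref{thm:order-forcing} without a separate proof. Your explicit verification that $\sigma, v, \tau$ is an order-forced sequence, and your handling of the vacuous case, are entirely in line with this.
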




In the following example, we illustrate the value of these ideas by showing a proof that a relatively small code is open-convex but not closed-convex.

 \begin{ex}\label{ex:wheel} We revisit the code $\cD_1$ from Chapter \ref{sec:codes}, which was first introduced in  \cite{cruz2019open}.  This code is open convex, but not closed convex. This example (in particular the proof that it has no closed convex realization) is an instance of order-forcing, though it was not described by that name in \cite{cruz2019open}. A slightly smaller example of a similar code which is closed convex, but not open convex appears as code C15 in \cite{goldrup2020classification}. In this example, we give a proof of this result which resembles the proof in \cite[Lemma 2.9]{cruz2019open}, but is written to make the use of order-forcing explicit. 

 The code $$\cD_1 =\{{\bf 123},{\bf 234},{\bf 345},{\bf 145},{\bf 125},12,23,34,45,15,\emptyset\}$$
has an open convex realization, but does not have a closed convex realization.\\

We have already provided an open-convex realization of $\cD_1$ in  Figure \ref{fig:non_convex} (b). 
%

To show that no closed convex realization may exist, we proceed by contradiction.  Suppose that for some $d\geq 1$, there exists a closed convex realization $\{U_1,\ldots, U_5\}$ of $\cC$ in $\R^d$.  Select points $p_{125}\in U_{125}$ and $p_{345}\in U_{345}$. 
Since both points are within the convex set $U_5$, the line segment $L_1$ from $p_{125}$ to $p_{345}$ is contained within $U_5$. Thus, it cannot pass through $U_{123}$. Note that $U_{123}$ is a closed set, as $123$ is a maximal codeword. Pick a point $p_{123}\in U_{123}$ which minimizes the distance to the set $L_1$; this is possible as these sets are disjoint and $L_1$ is compact. 

Now, consider the line segment $L_2$ from $p_{125}$ to $p_{123}$; note that $L_2\subset U_{12}$. In this code, $12$ is forced between $125$ and $123$, so by Corollary \ref{cor:forcedbetween} there exists a point $p_{12}$, between $p_{125}$ and $p_{123}$ along this line, which is in $A^{\mathcal U}_{12}$. Likewise, if we consider the line segment $L_3$ from $p_{123}$ to $p_{345}$, the order-forced sequence $123,23,234,34,345$ implies there is a point $p_{234}\in U_{234}$ on $L_3$ which is between $p_{123}$ and $p_{345}$. 

Finally, consider the line segment $L_4$ between $p_{12}$ and $p_{234}$. $L_4$ must pass through $U_{123}$ somewhere between these points because $123$ is forced between $12$ and ${234}$. Select a point $q_{123}$ on this line and within $U_{123}$; then, $q_{123}$ will be closer to $L_1$ than $p_{123}$, a contradiction. 
\end{ex}

\section{New Examples of Non-Convex Codes \label{sec:of_examples}}
In this section, we demonstrate the power of order-forcing by using order-forcing to construct a  new infinite family of minimally non-convex codes and two new non-convex codes. 

\subsection{Stretching sunflowers}\label{sec:stretch}

Early examples of good cover codes which are not convex come from the sunflower theorem, Theorem \ref{thm:sunflower}.  The $d = 2$ case of this theorem was used as a lemma to give the first example of a non-convex good cover code in \cite[Theorem 3.1]{lienkaemper2017obstructions}. 
In this section, we give a new infinite family of non-convex codes generalizing this code.
In order to produce further examples of non-convex codes, we need a notion of what it means for a new code to be genuinely different from an old one. 
For instance, it is easy to produce ``new" non-convex codes by relabeling neurons, or by adding more neurons in some trivial way.

In this subsection, we introduce a family of codes $\{\cL_n\mid n\ge 0\} $ which generalize $\cC_0$ to an infinite family of minimally non-convex codes. Geometrically, each of these codes is only a small modification of the code $\cC_0$, and has the same basic obstruction to convexity. This lies in contrast to  \cite[Theorem 4.2]{jeffs2019sunflowers}, which generalizes the non-convex code $\cC_0$ in  \cite[Theorem 5.10]{jeffs2020morphisms} to an infinite family of minimally non-convex codes by using higher-dimensional versions of the sunflower theorem.  Thus the family $\{\cL_n\mid n\ge 0\}$ demonstrates that the intuition that each minimally non-convex code should result from a ``new" obstruction to convexity does not hold.

\begin{center}
\begin{figure}[ht!]
\begin{center}
\includegraphics[width = 4 in]{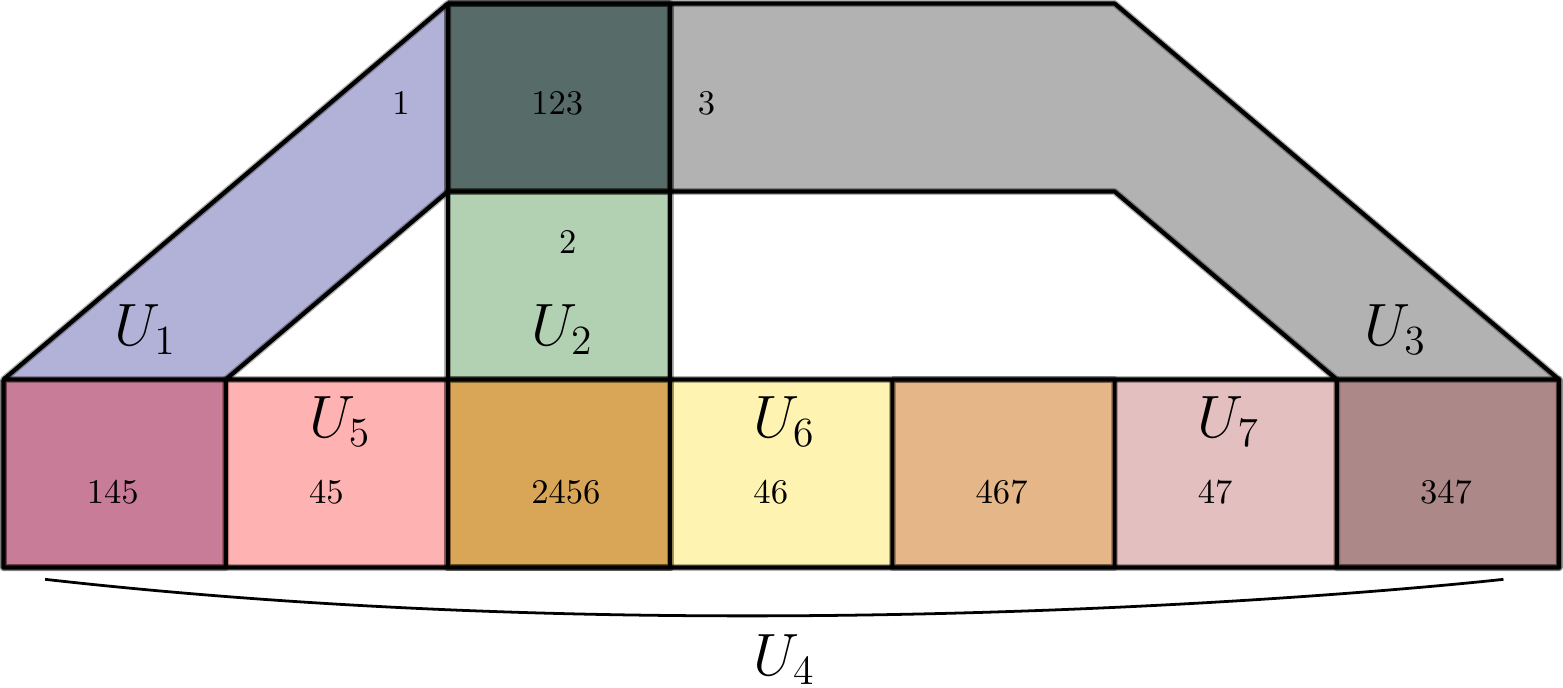}
\end{center}
\caption[Realization of the code $\cL_1$]{$\cL_1 = \{\mathbf{2456}, \mathbf{123}, \mathbf{145}, \mathbf{437}, \mathbf{467}, 45, 46, 47, 1, 2, 3, \emptyset\}$}
\label{fig:L1}
\end{figure} 
\end{center}

\begin{defn}For $n\geq 0$, define the code 
$$\cL_{n} = \{\emptyset, 1, 2, 3, \mathbf{123}, \mathbf{145}, 45, \mathbf{2456}, 46, \mathbf{467}, 47, \mathbf{478}, \ldots, 4(n+6),34(n+6)\}.$$
\end{defn}


For instance, $\cL_1 = \{\mathbf{2456}, \mathbf{123}, \mathbf{145}, \mathbf{437}, \mathbf{467}, 45, 46, 47, 1, 2, 3, \emptyset\}$. A good cover realization of $\cL_1$ is given in Figure \ref{fig:L1}.

Notice below that $\cL_0$ is equal to $\cC_0$ under the permutation of the neurons $2 \lra 3$ and $4\lra 5$. 
Thus, the family $\cL_n$ generalizes $\cC_0$. 
Even though each $\cL_n$ is minimally non-convex, the non-convexity of $\cL_0$ directly implies the non-convexity of each $\cL_n$ for $n > 0$.

\begin{prop}\label{prop:stretch_sun}
For $n\ge 0$, the code $\cL_n$ is a good cover code, but is minimally non-convex.
\end{prop}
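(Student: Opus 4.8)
The plan is to prove two things for each $n \ge 0$: first, that $\cL_n$ is a good cover code, and second, that it is minimally non-convex, i.e.\ not convex but every proper minor is. For the good cover claim, I would explicitly produce a good cover realization in $\R^2$. Figure \ref{fig:L1} already exhibits such a realization for $n=1$, and the structure is clearly inductive: the codewords $\mathbf{145}, 45, \mathbf{2456}, 46, \mathbf{467}, 47, \mathbf{478}, \ldots$ describe a ``fan'' of regions all sharing neuron $4$, stretched out in a line, with $\mathbf{123}$ and $\mathbf{145}$ attached at one end and $34(n+6)$ at the other. I would describe $U_4$ as a long thin region along a line, the sets $U_5, U_6, U_7, \ldots, U_{n+6}$ as a chain of overlapping ``petals'' meeting $U_4$ in consecutive order, and $U_1, U_2, U_3$ as in the realization of $\cC_0 = \cL_0$, verifying that the resulting code is exactly $\cL_n$ and that the cover is good (each nonempty atom and each intersection is contractible — automatic since all sets can be taken convex, in fact).

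For non-convexity, the key observation is that in $\cL_n$ the sequence
$$\mathbf{145},\ 45,\ \mathbf{2456},\ 46,\ \mathbf{467},\ 47,\ \mathbf{478},\ \ldots,\ 4(n+6),\ 34(n+6)$$
is (strongly) order-forced: every codeword other than $\emptyset, 1, 2, 3, \mathbf{123}$ contains the neuron $4$, so any feasible path from $\mathbf{145}$ to $34(n+6)$ must stay inside the induced subgraph of $G_{\cL_n}$ on codewords containing $4$, and that subgraph is a path. Then by Theorem \ref{thm:order-forcing}, in any convex (open or closed) realization, a line segment from a point of $A_{\mathbf{145}}$ to a point of $A_{34(n+6)}$ passes through the atoms of $\mathbf{2456}, \mathbf{467}, \ldots$ in order. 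I would then run the same three-petals-of-a-sunflower argument used for $\cC_0$: the sets $U_1, U_2, U_3$ (or the appropriate triple) form a sunflower situation where a hyperplane/line forced to meet each petal must meet the center, but the code structure forbids the center codeword, giving a contradiction. Concretely, one reduces to the $n=0$ case: since order-forcing collapses the long chain, the obstruction only involves the ``ends'' $\mathbf{123}, \mathbf{145}$ and $34(n+6)$, which behave exactly as in $\cC_0$.

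For minimality, I would invoke Lemma \ref{lem:int_closed} / the structure of $\pcode$: a minor of $\cL_n$ is obtained by taking trunks and images of morphisms, and I would check that every such proper minor either loses one of the ``essential'' codewords driving the order-forcing (so the obstruction disappears) or is small enough to be classified directly as convex — e.g.\ by exhibiting realizations, or by noting it is max-intersection complete or has at most three maximal codewords (hence convex by \cite{cruz2019open} or \cite{johnston2020neural}). The cleanest route is probably to show that deleting any neuron or contracting along any trunk of $\cL_n$ produces a code that is either a relabeling of some $\cL_m$ with $m<n$ — contradicting minimality would then reduce to the base case — or is convex by inspection; combined with a direct proof that $\cL_0 = \cC_0$ is minimally non-convex (which is essentially \cite[Theorem 3.1]{lienkaemper2017obstructions} together with the classification of smaller codes), this closes the induction.

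The main obstacle I anticipate is the minimality verification: order-forcing and the sunflower argument handle non-convexity cleanly, but checking that \emph{all} proper minors are convex requires understanding the morphism/trunk structure of $\cL_n$ carefully, and it is easy to miss a minor. I would handle this by enumerating the images of $\cL_n$ under the canonical morphisms of Proposition \ref{prop:trunks_morph} and the trunks $\trunk_{\cL_n}(\sigma)$ systematically, grouping them by which ``essential'' codewords survive, rather than case-bashing ad hoc.
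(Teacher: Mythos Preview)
Your non-convexity argument has a genuine gap. You correctly identify the order-forced chain from $\mathbf{145}$ to $34(n+6)$ inside $U_4$, and you correctly note that $U_1,U_2,U_3$ form a sunflower with center $U_{123}$. But the step ``a line forced to meet each petal must meet the center'' is not valid in $\R^d$ for $d>2$: the sunflower theorem (Theorem~\ref{thm:sunflower}) concerns \emph{hyperplanes} meeting $d+1$ petals, and a line is a hyperplane only in the plane. Restricting three sunflower petals to a line gives three open intervals whose pairwise intersections equal the (restricted) center; if the center misses the line, you simply get three pairwise disjoint intervals, which is no contradiction. So without an independent dimension-reduction step your argument does not close, and you have not supplied one. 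The paper avoids this entirely: it proves non-convexity by induction on $n$, using order-forcing only locally to show that in any convex realization of $\cL_n$ the union $U_{n+5}\cup U_{n+6}$ is itself convex (this is the ``rigid structure'' observation), then merges those two sets to obtain a convex realization of $\cL_{n-1}$. The base case $\cL_0\cong\cC_0$ is cited. This merging reduction works in every dimension and is what your phrase ``order-forcing collapses the long chain'' ought to become.

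Your minimality plan is also off. You write that a proper minor is ``either a relabeling of some $\cL_m$ with $m<n$ \ldots\ or convex by inspection,'' but if any proper minor of $\cL_n$ were isomorphic to $\cL_m$ (which is non-convex), that would \emph{disprove} minimality of $\cL_n$, not reduce it to a base case; there is no induction on $n$ available here. What actually needs to be shown is that \emph{every} covered code $\cL_n^{(i)}$ (in the sense of Definition~\ref{def:covered}) is convex, and the paper does this by explicit construction: realizations of $\cL_n^{(1)},\cL_n^{(2)},\cL_n^{(3)}$ in $\R^2$, of $\cL_n^{(4)}$ in $\R^3$, and a uniform construction for $\cL_n^{(5)},\ldots,\cL_n^{(n+6)}$ in $\R^3$. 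Your instinct to organize by ``which essential codewords survive'' is reasonable, but the work is in actually building those realizations, and appeals to max-intersection-completeness or the three-maximal-codeword criterion will not cover all cases here.
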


\begin{proof}
We first show that $\cL_n$ is non-convex by induction on $n$. 
The base case, that $\cL_0$ is non-convex, is proven by \cite[Theorem 5.10]{jeffs2020morphisms} since $\cL_0$ is permutation equivalent to the code $\cC_0$ in this paper. Now, we show that if $\cL_{n-1}$ is not convex, then neither is $\cL_n$.
We do this by proving the contrapositive: in any convex realization of $\cL_n$, we can merge the sets  $U_{n+5}$ and $U_{n+6}$ in a convex realization of $\cL_n$ to produce a convex realization of $\cL_{n-1}$.
That is, if $\{U_1, \ldots, U_{n+5}, U_{n+6}\}$ is a convex realization of $\cL_n$, then $\{V_1,\ldots, V_{n+5}\}$ is a convex realization of $\cL_{n-1}$ where $V_1 = U_1, \ldots, V_{n+4} = U_{n+4}, V_{n+5} = U_{n+5}\cup U_{n+6}$. 

This gives us two things to check. First, we must check that $\code(\{V_1, \ldots, V_{n+5}\}) = \cL_{n-1}$. If $\sigma$ is a codeword of $\cL_n = \code(\{U_1, \ldots, U_{n+6}\})$ which does not contain the neuron $n+6$, then $\sigma$ is still a codeword of $\code(\{V_1, \ldots, V_{n+5}\})$. The three codewords of $\cL_n$ which contain $n+6$ are $4(n+5)(n+6)$, $4(n+6)$, and $34(n+6)$. If we pick a point $p$ in the atom of  $4(n+5)(n+6)$ or $4(n+6)$ with respect to $U_1, \ldots, U_n$, it is now in the atom of $4(n+5)$. If we pick a point in the atom of $34(n+6)$ with respect to $U_1, \ldots, U_{n+6}$, it is in the atom of  $34(n+5)$ with respect to $\{V_1, \ldots, V_{n+6}\}$.

Next, we must check that $V_5 = U_{n+5} \cup U_{n+6}$ is convex.
That is, we must check that for each pair of points $x, y\in U_{n+5}\cup U_{n+6}$, the line segment from $x$ to $y$ is contained in $U_{n+5}\cup U_{n+6}$. 
Without loss of generality, let $x\in U_{n+5}\setminus U_{n+6}$, $y\in U_{n+6}\setminus U_{n+5}$. 
The point $x$ must be contained in the atom of $4(n+5)$ or $4(n+5)(n+4)$. (If $n = 1$,  $4(n+5)(n+4)$ replaces with $24(n+5)(n+4)$.) The point $y$ must be contained in the atom of $4(n+6)$ or $34(n+6)$. In all of these cases, the only feasible path from $x$ to $y$ in $G_{\cL_n}$ includes only codewords containing $n+5$ or $n+6$:
$$ 4(n+5) \lra 4(n+5)(n+6) \lra 4(n+6) \lra 34(n+6).$$ 
Thus the line segment from $x$ to $y$ is contained in $U_{n+6}\cup U_{n+6}$.  See Figure \ref{fig:sun_proof} for an illustration of this argument.


To show that $\cL_n$ is minimally nonconvex, we must show that all codes covered by $\cL_n$ in the poset $\pcode$ are convex. We give a proof of this in Appendix \ref{sec:constructions}, Construction \ref{const:minimal}. 
\end{proof}

Our proof uses ideas similar to the idea of a \emph{rigid structure} in Section 4 of \cite{chan2020nondegenerate}. In particular, our argument that $U_{n+5}\cup U_{n+6}$ must be convex is essentially an open-convex version of a rigid structure, which is a subset of neurons whose union must be convex in any closed-convex realization of a code. 

\begin{figure}[h]\label{fig:sun_proof}
    \centering
    \includegraphics[width = 4 in]{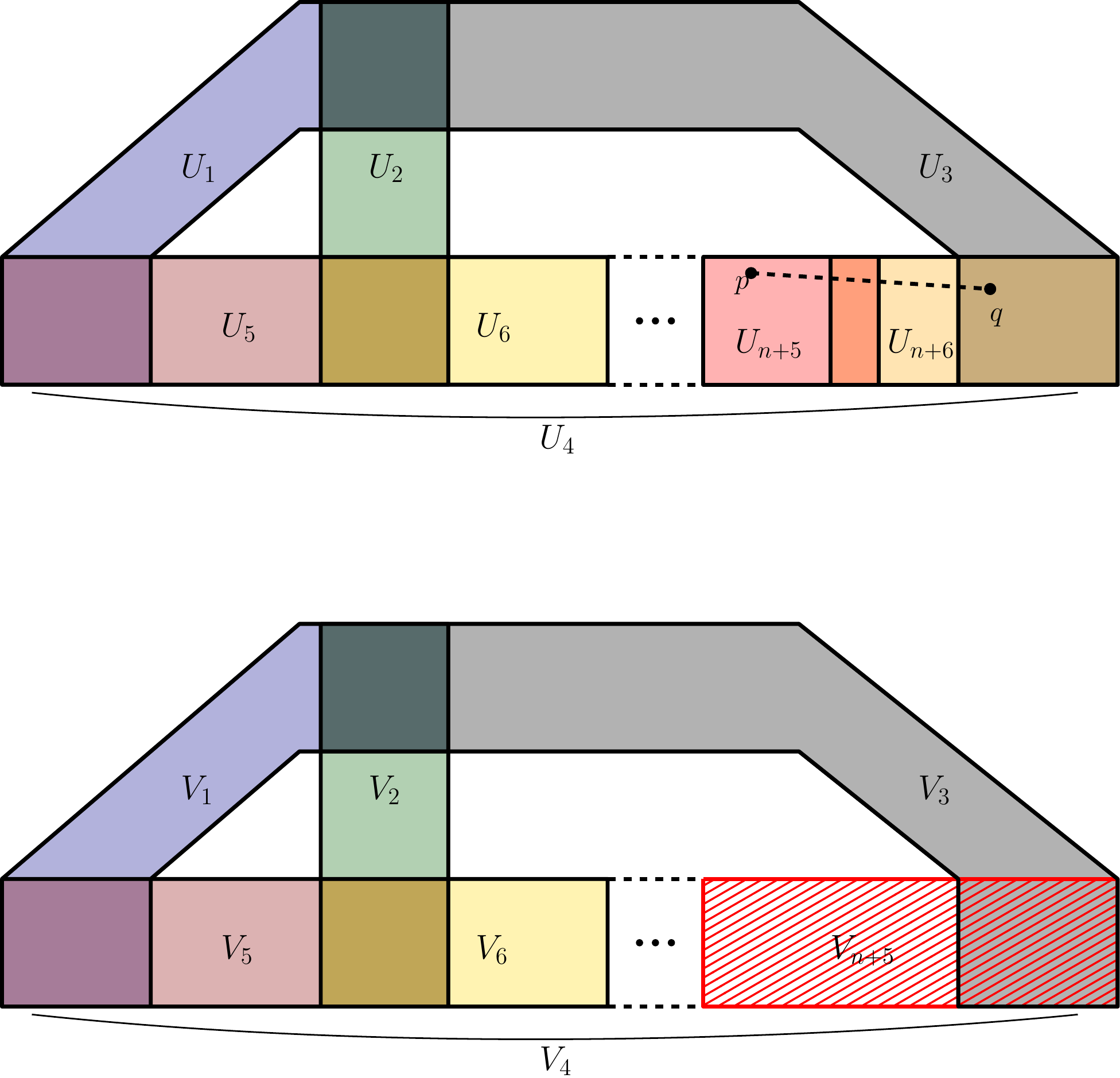}
    \caption[Proof sketch for Proposition \ref{prop:stretch_sun}. ]{A sketch of the proof of Proposition \ref{prop:stretch_sun}. Since the union of $U_{n+5}$ and $U_{n+6}$ is forced to be convex, we can use a realization of $\cL_{n}$ to construct a realization of $\cL_{n-1}$.}
    \label{fig:sun_proof}
\end{figure}

\subsection{Simple proofs of nonconvexity}\label{sec:braid}
In this section, we give two new examples of good cover codes which are neither open nor closed convex. The proofs that these codes are not convex depend only on order-forcing and elementary geometric arguments. Below, we use lowercase letters for neurons where it would be cumbersome to use only integers.

\begin{center}
\begin{figure}[ht!]
\begin{center}
\includegraphics[width = 4 in]{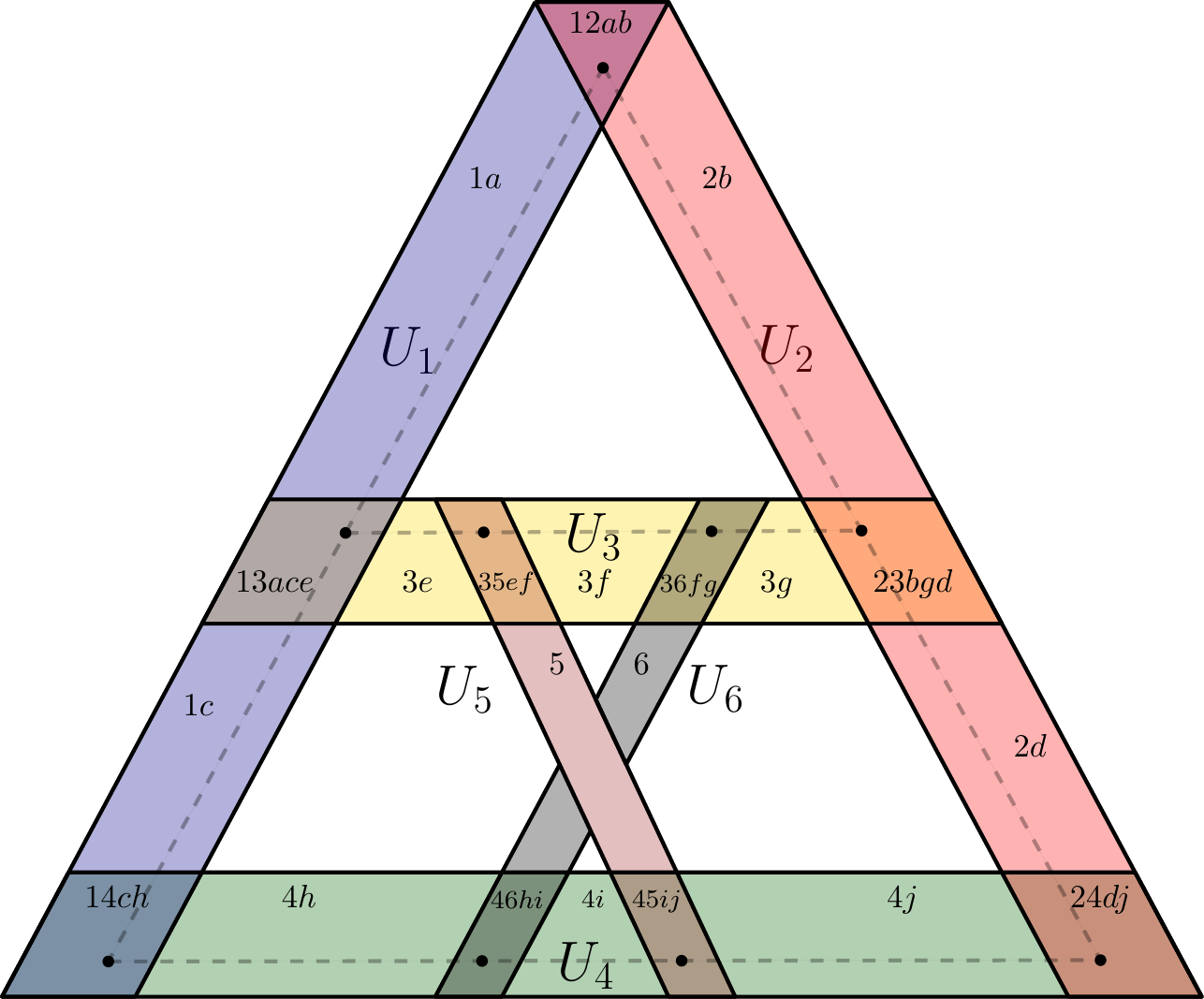}
\end{center}
\caption[A good-cover realization of the non-convex code 
$\cR$]{A good-cover realization of the non-convex code 
$\cR$ in $\R^3$. The open sets $U_a, U_b, U_c, U_d, U_e, U_f, U_g, U_h, U_i, U_j$ are not shown. Instead, maximal order-forced codewords are noted with vertices, and sets of order-forced vertices are indicated with dashed lines. \label{fig:rowboat}}
\end{figure} 
\end{center}

\begin{prop}\label{prop:rowboat}
The code \begin{align*}
\cR &= \{\mathbf{12ab}, \mathbf{13ace}, \mathbf{14ch}, \mathbf{23bgd},   \mathbf{24dj}, \mathbf{35ef},  \mathbf{36fg},   \mathbf{46hi},  \mathbf{45ij}, \\ &\quad\quad 1a,1c, 2b,2d, 3e,3f,3g,4i,4h, 4j, 5, 6, \emptyset\}
\end{align*} is a good cover code, but is neither open nor closed convex.
\end{prop}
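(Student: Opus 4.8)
The plan has two independent parts: exhibiting a good cover realization of $\cR$, and ruling out any convex realization via an order-forcing argument. For the first part I would simply verify that the family of contractible open sets in $\R^3$ indicated in Figure~\ref{fig:rowboat} (after filling in the suppressed sets $U_a,\dots,U_j$) has code exactly $\cR$; equivalently, one can check directly that $\cR$ has no local obstructions and invoke \cite[Theorem 3.13]{chen2019neural}. The real content is non-convexity, and the idea is to force $U_5$ and $U_6$ to cross like the two diagonals of a convex quadrilateral even though the neurons $5$ and $6$ never co-fire. Abbreviate the maximal codewords as $A=12ab$, $B=13ace$, $C=14ch$, $D=23bgd$, $E=24dj$, $F=35ef$, $G=36fg$, $H=46hi$, $I=45ij$. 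The key mechanical observation is that if $\sigma$ and $\tau$ are codewords with $\sigma\cap\tau=\{m\}$ a single neuron, then (taking the endpoints for $i$ and $j$ in Definition~\ref{def:feasible}) every interior vertex of a feasible $\sigma,\tau$ walk contains $m$, so it suffices to examine the subgraph of $G_\cR$ induced on the codewords containing $m$. One checks that this subgraph is a path in each relevant case: for $m=1$ it is $A,1a,B,1c,C$; for $m=2$ it is $A,2b,D,2d,E$; for $m=3$ it is $B,3e,F,3f,G,3g,D$; and for $m=4$ it is $C,4h,H,4i,I,4j,E$. In each case the displayed path is then the unique feasible walk between its two endpoints (and one checks it is feasible), so all four sequences are order-forced in $\cR$.

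Assume now that $\cU=\{U_1,\dots\}$ is a convex (open or closed) realization of $\cR$. Since $A$, $C$, $E$ are maximal codewords, their atoms coincide with $U_A$, $U_C$, $U_E$; choose $p_A\in A^\cU_A$, $p_C\in A^\cU_C$, $p_E\in A^\cU_E$. By Theorem~\ref{thm:order-forcing} applied to the $m=1$ and $m=2$ sequences, the segment $\overline{p_Ap_C}$ meets $A^\cU_B$ and $\overline{p_Ap_E}$ meets $A^\cU_D$; pick $p_B$ and $p_D$ in those intersections. Then $p_A,\dots,p_E$ all lie in the plane $\Pi=\aff\{p_A,p_C,p_E\}$. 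If $p_B,p_C,p_D,p_E$ were collinear, then all of $p_A,\dots,p_E$ would lie on one line $\ell$, and a short case analysis (using that $p_B$ is strictly between $p_A,p_C$ and $p_D$ strictly between $p_A,p_E$, and that the segments $\overline{p_Bp_D}$, $\overline{p_Ap_C}$, $\overline{p_Ap_E}$ only meet atoms of codewords containing $3$, $1$, $2$ respectively) forces one of the atoms $A^\cU_A$, $A^\cU_C$, $A^\cU_E$ to appear along a segment whose order-forced sequence excludes it — impossible. So $p_B,p_C,p_E,p_D$ are, in that cyclic order, the vertices of a genuine convex quadrilateral $Q\subseteq\Pi$ whose opposite sides are $\overline{p_Bp_D}$ and $\overline{p_Cp_E}$. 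Applying Theorem~\ref{thm:order-forcing} to the $m=3$ and $m=4$ sequences along these two sides produces $p_F,p_G$ in the relative interior of $\overline{p_Bp_D}$ (with $p_F$ nearer $p_B$) and $p_H,p_I$ in the relative interior of $\overline{p_Cp_E}$ (with $p_H$ nearer $p_C$); reading around $\partial Q$, the four points appear in cyclic order $p_H,p_I,p_G,p_F$.

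To finish: since $5\in F\cap I$ we have $p_F,p_I\in U_5$, hence $\overline{p_Fp_I}\subseteq U_5$ by convexity; likewise $6\in G\cap H$ gives $\overline{p_Gp_H}\subseteq U_6$. But the endpoint pairs $\{p_F,p_I\}$ and $\{p_G,p_H\}$ interleave in the cyclic order around $Q$, so the two chords meet inside $Q$, and their intersection point lies in $U_5\cap U_6$. This is impossible, because no codeword of $\cR$ contains both $5$ and $6$, so $U_5\cap U_6=\emptyset$. Hence $\cR$ has no convex realization of either type.

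The steps I expect to be fiddly rather than deep are verifying that each of the four induced subgraphs of $G_\cR$ is literally a path, that each such path is feasible, and writing down the good cover realization explicitly; the collinear degenerate case also needs a little care, but it reduces entirely to the same order-forcing facts. The part that makes the code work — two disjoint convex sets pinned into the crossing-diagonals configuration — is short once the order-forced sequences are recorded, and is the conceptual core I would present first.
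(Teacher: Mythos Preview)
Your proposal is correct and follows essentially the same strategy as the paper: both pick points in the atoms of $12ab$, $14ch$, $24dj$, use the same order-forced paths along neurons $1,2,3,4$ to locate the four points in the atoms of $35ef$, $36fg$, $46hi$, $45ij$, and derive the contradiction from the forced crossing of a segment in $U_5$ with one in $U_6$. The paper packages this slightly differently---it first reduces to a planar realization (using two extra order-forcings along neurons $5$ and $6$ to show every atom meets the triangle) rather than working directly in your plane $\Pi$ with the quadrilateral $Q$ and handling the collinear degenerate case---but the geometric content is identical.
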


\begin{proof}
We first show that if $\cR$ is convex, then it has a convex realization in the plane. We then show that it does not. Choose points $p_{12}\in A_{12ab}^\cU$, $p_{14}\in A_{14ch}^\cU$, and $p_{24}\in A_{24dj}^\cU$. 
We will use order-forcing to show that each atom of any realization of $\cR$ must have nonempty intersection with $A = \conv(p_{12}, p_{14}, p_{24})$, so that $\{U_i \cap A \mid i\in \{1, \ldots, 6, a, \ldots, h\}\}$ is a convex realization of $\cR$ in $\aff(p_{12}, p_{14}, p_{24})\cong \R^2$. 

First, notice the following order-forced sequences: 
\begin{enumerate} 
\item the only feasible path from $12ab$ to $14ch$ 
is 
$$12ab \leftrightarrow 1a \leftrightarrow 13ace  \leftrightarrow  1c  \leftrightarrow 14ch$$
\item the  only feasible path from $12ab$ to $24dj$ 
is 
$$12ab \lra 2b \lra 23bdg \lra 2d\lra 24dj$$
\item 
the only feasible path from $14ch$ to $24dj$
is 
$$14ch \lra 4h\lra 46hi\lra 45ij \lra 4j \lra 24dj$$
\item 
the only feasible path from $13ace$ to $23bgd$ 
is 
$$13ace \lra 3e \lra 35ef \lra 3f \lra 36fg \lra 3g \lra 23bgd$$
\item the only feasible  path from $35ef$ to $45ij$ 
is 
$$35ef \lra 5 \lra 45ij$$
\item the only feasible path from $36fg$ to $46hi$ 
is 
$$36fg \lra 6\lra 46hi.$$
\end{enumerate}
Now, by Theorem \ref{thm:order-forcing} and order-forcings (1), (2), and (4), the atoms corresponding to codewords $$\{12ab, 1a, 13ace, 1c,  14ch, 2b, 23bdg,  2d, 24dj, 4h, 46hi, 45ij, 4j \}$$ have nonempty intersection with $A$. Thus, we can pick $p_{13}\in A\cap A_{13ace}^\cU$, $p_{23}\in A \cap A_{23bdg}^\cU$, $p_{45} \in A \cap A_{45ij}^\cU$, and $p_{46}\in A\cap A_{46hi}^\cU$. Applying order-forcing (3) to $p_{13}$ and $p_{23}$, we deduce that the atoms corresponding to codewords $$\{3e , 35ef, 3f,  36fg, 3g\}$$ have nonempty intersection with $A$. Thus, we can pick $p_{35} \in A \cap A_{35ef}^\cU$ and $p_{36}\in A\cap A_{36fg}^\cU$. Finally, applying order-forcings (5) and (6), we deduce that the atoms corresponding to codewords $\{5, 6\}$ have nonempty intersection with $A$. This accounts for all codewords of $\cR$. 

Next, we show that $\cR$ cannot have a realization in the plane. Note that by applying an appropriate affine transformation, we can assume that $p_{12}$ is above $p_{14}$ and $p_{24}$, with $p_{14}$ to the left of $p_{24}$, as pictured in Figure \ref{fig:rowboat}. Then by order-forcings (3) and (4), $p_{35}$ must be to the left of $p_{36}$, while $p_{45}$ must be to the right of $p_{46}$. This implies the line segments $\overline{p_{35}p_{45}}$ and $\overline{p_{36}p_{46}}$ must intersect. But if $ p\in \overline{p_{35}p_{45}}\cap \overline{p_{36}p_{46}}$, then $p \in U_5 \cap U_6$. But, since $U_5$ and $U_6$ must be disjoint in any realization of $\cR$, this is not possible. 
\end{proof}

\begin{prop}\label{prop:braid} The code
$$\cT = \{\mathbf{14a}, \mathbf{15ab},  \mathbf{16bg},\mathbf{25c},  \mathbf{24cd}, \mathbf{26dgh},  \mathbf{34e},  \mathbf{35ef},  \mathbf{36fh},$$ $$1a, 1b,2c, 2d,3e,3f, 6g,6h, 4, 5,\emptyset\} $$  is a good cover code, but is neither closed nor open convex.
\end{prop}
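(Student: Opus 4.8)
The plan is to follow the template of the proof of Proposition~\ref{prop:rowboat}. First, $\cT$ admits a good cover realization (built in the same style as Figure~\ref{fig:rowboat}, with three ``hub'' sets $U_1,U_2,U_3$ and three ``strand'' sets $U_4,U_5,U_6$, the auxiliary neurons $a,\dots,h$ being used only to force paths), so the work is entirely in showing $\cT$ is not convex. I would do this in two stages: (i) show that if $\cT$ is convex then it has a convex realization in the plane, and (ii) show that a planar convex realization is impossible, the obstruction being a non-convex ``twist'' forced by the order-forced sequences.

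For stage (i) I would first read off the strongly order-forced sequences from $G_\cT$. Restricting $G_\cT$ to codewords containing a fixed neuron gives, for each hub $j\in\{1,2,3\}$, a path ($14a,1a,15ab,1b,16bg$ for $j=1$; $25c,2c,24cd,2d,26dgh$ for $j=2$; $34e,3e,35ef,3f,36fh$ for $j=3$); for neurons $4$ and $5$ a star centred at the singleton codeword; and for neuron $6$ the path $16bg,6g,26dgh,6h,36fh$. Each of these is the unique feasible walk between its endpoints, hence strongly order-forced. Then, exactly as in the proof of Proposition~\ref{prop:rowboat}, I would pick representatives of a suitable few atoms, let $A$ be the triangle (a $2$-dimensional simplex) they span, and use Theorem~\ref{thm:order-forcing} together with repeated bootstrapping --- each atom already located inside $A$ can be joined by a segment lying in $A$ to previously located atoms, dragging further order-forced atoms into $A$ --- to conclude that every atom of $\cT$ meets $A$; then $\{U_i\cap A\}$ is a convex realization of $\cT$ in $\aff(A)\cong\R^2$.

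For stage (ii), work in a planar convex realization. No codeword of $\cT$ contains two of $\{4,5,6\}$, nor two of $\{1,2,3\}$, so $U_4,U_5,U_6$ are pairwise disjoint and $U_1,U_2,U_3$ are pairwise disjoint. Choose $q_{ji}$ in the atom of the maximal codeword containing hub $j$ and strand $i$. Applying Theorem~\ref{thm:order-forcing} to the hub-$1$ path shows that the segment $\overline{q_{14}q_{16}}\subseteq U_1$ meets $U_4,U_5,U_6$ in that order, and applied to the hub-$2$ path shows that $\overline{q_{25}q_{26}}\subseteq U_2$ meets them in the order $U_5,U_4,U_6$. These two segments are disjoint (they lie in $U_1$ and $U_2$), yet the connecting segments $\overline{q_{14}q_{24}}\subseteq U_4$ and $\overline{q_{15}q_{25}}\subseteq U_5$ join them in a ``crossing'' pattern: the first point of $\overline{q_{14}q_{16}}$ connects to the second point of $\overline{q_{25}q_{26}}$, and vice versa. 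After normalizing the configuration by an affine transformation and using the remaining order-forcings (the two strand stars, the strand-$6$ path, and hub $3$) to pin down the relative positions --- analogous to the left/right argument at the end of the proof of Proposition~\ref{prop:rowboat} --- these two segments are forced to intersect. Their intersection point lies in $U_4\cap U_5$, which is empty, a contradiction.

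I expect the main obstacle to be exactly these ``pinning down'' steps. Reducing to the plane in stage (i) requires identifying an explicit triangle of atom-representatives and verifying by a finite but tedious walk through $G_\cT$ that every atom is reached; and the forced crossing in stage (ii) requires ruling out the alternative planar configuration in which $\overline{q_{14}q_{24}}$ and $\overline{q_{15}q_{25}}$ are ``opposite sides'' rather than ``diagonals'' of the quadrilateral they span. This is precisely where the asymmetry of hub $2$ --- its strand order is $5,4,6$ rather than $4,5,6$ --- must be exploited, and is the combinatorial heart of the \emph{twist}; getting the bookkeeping right there is the part I would be most careful about.
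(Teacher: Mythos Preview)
Your plan follows the template of Proposition~\ref{prop:rowboat} too literally, and this creates a genuine gap in stage~(i). In $\cR$ the three corner atoms $12ab,14ch,24dj$ pairwise share a neuron, so each edge of the triangle carries an order-forced path and you can bootstrap every other atom into the triangle. In $\cT$ there is no such triple: the three hubs $U_1,U_2,U_3$ are pairwise disjoint, the three strands $U_4,U_5,U_6$ are pairwise disjoint, and every auxiliary neuron $a,\dots,h$ lives entirely inside a single hub or a single strand. Consequently any three atoms that pairwise share a neuron all lie in a common $U_j$ (hub) or $U_i$ (strand), and the simplex they span is contained in that set, which is disjoint from the other hubs or strands. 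So there is no choice of ``a suitable few atoms'' whose simplex meets every atom via order-forcing, and your reduction to the plane does not go through. Your stage~(ii) then has no ground to stand on, since the crossing argument you sketch genuinely needs planarity.

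The paper avoids this entirely by \emph{not} reducing to the plane. Instead it uses the separating hyperplane theorem: since $U_4$ and $U_5$ are disjoint convex sets, there is a hyperplane $H$ with $U_4\subseteq H^+$ and $U_5\subseteq H^-$. The hub-$1$ and hub-$3$ order-forcings (which visit strands in the order $4,5,6$) force the chosen points $p_{16},p_{36}\in U_6$ to lie in $H^-$, since each of those segments starts in $H^+$, hits $H^-$, and can cross $H$ only once. The hub-$2$ order-forcing, whose strand order is $5,4,6$, forces $p_{26}\in H^+$ for the same reason. Finally the strand-$6$ order-forcing places $p_{26}$ on the segment $\overline{p_{16}p_{36}}$, which therefore goes $H^-\to H^+\to H^-$ and crosses $H$ twice --- a contradiction. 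The ``twist'' you correctly identified as the combinatorial heart is exactly what puts $p_{26}$ on the wrong side of $H$; the separating hyperplane converts this into a one-line geometric contradiction, valid in every dimension, with no planar bookkeeping required.
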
 

\begin{center}

\begin{figure}[ht!]
\begin{center}
\includegraphics[width = 3.5 in]{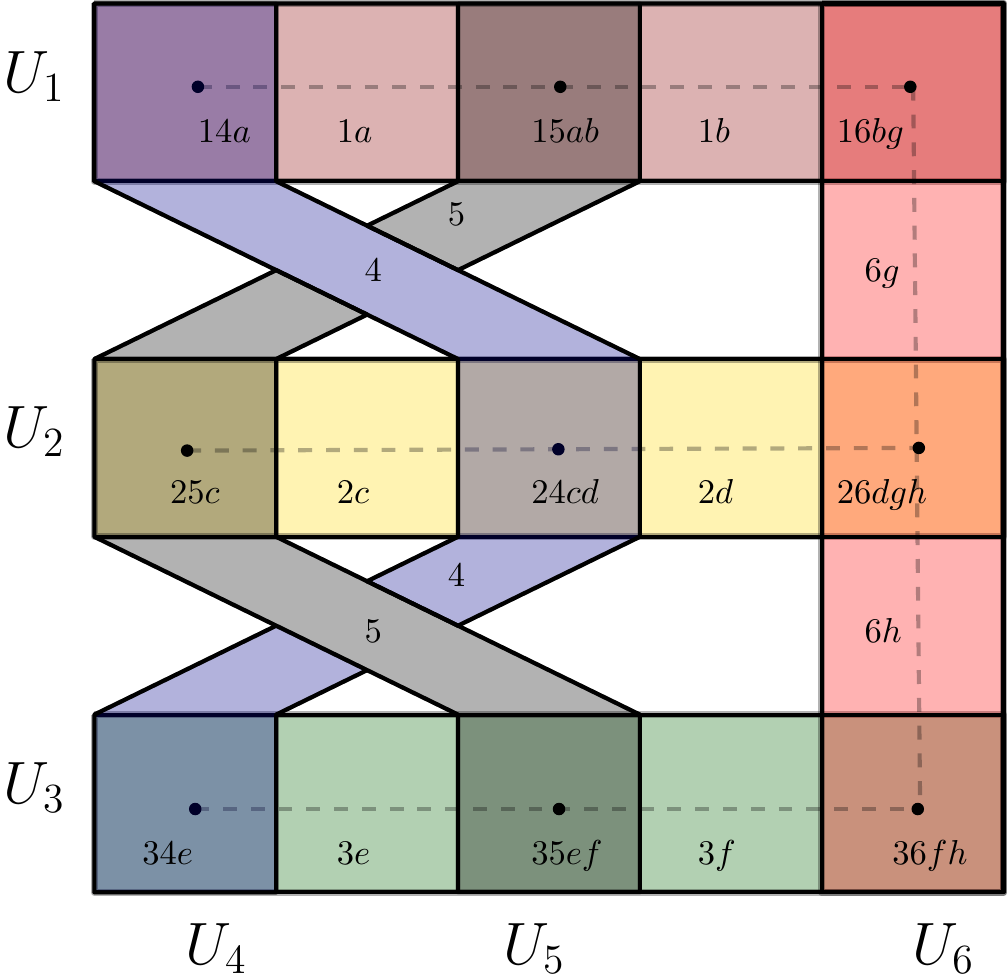}
\end{center}
\caption
{A good cover realization of the code $\cT$.
\label{fig:twist}} 
\end{figure} 
\end{center}

\begin{proof}
Suppose to the contrary that $\mathcal T$ has a convex realization $\{U_1,\ldots, U_6, U_a,\ldots, U_h\}$. Since the sets $U_4$ and $U_5$ must be disjoint convex sets which are either both open or both closed,  there exists a hyperplane $H$ separating them. In particular, if $U_4$ and $U_5$ are both open, then by the open-set version of the hyperplane separation theorem there is a hyperplane strictly separates them.  That is, $H$ separates $\R^n$ into open half spaces $H^+$ and $H^-$  with $U_4\subseteq H^+$ and $U_5\subseteq H^-$. This also holds if $U_4$ and $U_5$ are both closed. In this case, then without loss of generality, we can choose both sets to be compact. Thus by the compact-set version of the separating hyperplane theorem, there exists a hyperplane $H$ strictly separating them.  We will use order-forcing to exhibit a line segment which crosses  $H$ twice, a contradiction. 

We show that the triples of codewords corresponding to marked points in Figure \ref{fig:twist} are  order-forced. 
More specifically,  we have that:
\begin{enumerate} 
\item the only feasible path from $14a$ to $16bg$ 
is 
$$14a \lra 1a \lra 15ab  \lra 1b  \lra 16bg$$
\item the  only feasible path from $25c$ to $26dgh$ 
is 
$$25c\lra 2c \lra 24cd \lra 2d\lra 26dgh$$
\item  the only feasible path from $34e$ to $36fh$ 
is 
$$34e \lra 3e \lra 35ef \lra 3f \lra 36fh $$

\item the only feasible path from $16bg$ to $36fh$ 
is 
$$16bg \lra 6g\lra 26dgh \lra 6h \lra 36fh.$$
\end{enumerate}

Choose points $p_{14}\in U_{14a}$, $p_{16}\in U_{16bg}$,$p_{25}\in U_{25c}$, $p_{34}\in U_{34e}$, and $p_{36}\in U_{36fh}$. Define line segments $L_1 = \overline{p_{14}p_{16}}$ and $L_3 = \overline{p_{34}p_{36}}$. Notice that by order-forcing (1) we may choose $p_{15}\in L_1\cap U_{15ab}$. Similarly by order-forcing (3) we may choose $p_{35}\in L_3\cap U_{35ef}$.

By ordering forcing (4) we may choose a point $p_{26}\in U_{26dgh}$ on the line segment $\overline{p_{16}p_{36}}$. Lastly, order-forcing (2) allows us to choose a point $p_{24}\in U_{24cd}$ on the line segment $L_2 = \overline{p_{25}p_{26}}$.

Since each of $L_1$ and $L_3$ can only cross $H$ once, the fact that $p_{14}$ and $p_{34}$ are contained in $U_4$, and thus in $H^+$ implies that the points $p_{16}$ and $p_{36}$ are contained in $H^-$. Likewise, the fact that $L_2$ crosses $H$ only once and $p_{25}$ is contained in $U_5$, and thus in $H^+$, implies that the point $p_{26}$ is contained in $H^-$. Thus, the line from $p_{16}$ to $p_{36}$ crosses $H$ twice, a contradiction. 
\end{proof}

Note that both of these codes can be used to generate infinite families of non-convex codes using the same trick we use to produce $\cL_{n}$ from $\cL_0$. 
The codes $\cT$ and $\cR$ do not lie above any previously known non-convex codes in $\pcode$, and in fact are minimally non-convex. This can be checked by exhaustive search of the codes that they cover in $\pcode$, as described in Definition \ref{def:covered}.

\section{Conclusion and Open Questions}\label{sec:conclusion}

Past work constructing non-convex codes has used notions that are similar to, but distinct from, order-forcing. For example, sunflower theorems such as \cite[Theorem 1.1]{jeffs2019sunflowers} and \cite[Theorem 1.11]{jeffs2022embedding} were used to show that the convex hull of points sampled from certain atoms in a convex realization must intersect another atom. Likewise, \cite{jeffs2021convex} used collapses of simplicial complexes to prove that in certain codes the convex hull of appropriately chosen points must intersect certain atoms.

Order-forcing brings a new perspective to this general approach: not only must certain atoms appear, but they must appear in a certain arrangement (i.e. in a particular order along a line segment). The order of points on a line may be generalized to higher dimensions by examining the ``order type" of a point configuration \cite{goodman1991complexity}. We thus ask the following.

\begin{question}\label{q:higherdimensional}
Does there exist a general result connecting the combinatorial structure of a code $\cC$ to the order type of points chosen from certain atoms in any convex realization of $\cC$? Can such a result be formulated so that the connections between convex codes and sunflower theorems \cite{jeffs2019sunflowers,jeffs2022embedding}, convex union representable complexes \cite{jeffs2021convex}, or oriented matroids \cite{kunin2020oriented} are special cases?
\end{question}

A cleanly formulated answer to Question \ref{q:higherdimensional} would allow us to create fundamentally new families of non-convex codes.

To connect the combinatorics of order-forcing with the geometry of convex realizations, we examined straight line segments between different atoms. One could try to replace convex realizations by good cover realizations, and straight lines by continuous paths, which leads to the following question.

\begin{question}
If $\cC$ is a good cover code, are there feasible paths between all pairs of codewords in $\cC$?
\end{question}

Our examples have used order-forcing to prove that codes are not convex. However, even if a code is convex, one might hope to use order-forcing to bound its open or closed embedding dimension. 

\begin{question}
Can one use order-forcing to provide new lower bounds on the open or closed embedding dimension of codes?
\end{question}

Morphisms and minors of codes have played a role in characterizing ``minimal" obstructions to convexity, contextualizing results, and systematizing the study of convex codes \cite{jeffs2020morphisms,jeffs2022embedding}. It would be interesting to phrase our results in this framework.

\begin{question}
How does order-forcing interact with code morphisms and minors? If $f:\cC\to \cD$ is a morphism, and $\sigma_1,\sigma_2,\ldots,\sigma_k$ is an order-forced sequence in $\cC$, under what conditions is $f(\sigma_1),f(\sigma_2),\ldots, f(\sigma_k)$ order-forced in $\cD$? Similarly, if $f$ is surjective and $\tau_1,\tau_2,\ldots, \tau_k$ is order-forced in $\cD$, when can we find $\sigma_1,\ldots, \sigma_k$ order-forced in $\cC$ with $f(\sigma_i) = \tau_i$ (i.e., when can we ``pull back" an order-forced sequence)?
\end{question}

Work in \cite{kunin2020oriented} used minors of codes to tie the study of convex codes to the study of oriented matroids, in particular showing that non-convex codes come in two types: those that are minors of non-representable oriented matroid codes, and those that are not minors of any oriented matroid code. Concretely, it would be useful to understand which of these classes our codes $\cT$ and $\cR$ fall into.

\begin{question}
Are the codes $\cT$ and $\cR$ from Section \ref{sec:of_examples} minors of oriented matroid codes?  
\end{question}

\section{Constructions of Various Realizations}\label{sec:constructions}
\begin{const}
\label{const:minimal}
In order to check that $\cL_n$ is minimally non-convex for all $n$, we must show that all codes covered by $\cL_n$ in $\pcode$ are convex. For this, we need the following characterization, from \cite{jeffs2019sunflowers}, of the covering relations in $\pcode$. 

\begin{defn}[Definition 3.9 of \cite{jeffs2019sunflowers}]\label{def:covered}
Let $\cC\subseteq 2^{[n]}$ be a code, let $i\in[n]$, and let $\sigma = [n]\setminus \{i\}$. Consider the morphism $f_i:\cC\to 2^{\sigma\cup\overline\sigma}$ defined by \[
f(c) = \begin{cases} c\cap\sigma & i\notin c,\\
c\cap\sigma \cup (\overline{c\cap\sigma}) & i\in c.\end{cases}
\]
The \emph{$i$-th covered code} of $\cC$ is the image of $\cC$ under $f_i$, and is denoted $\cC^{(i)}$.
\end{defn}

Importantly, if a code $\cD$ is covered by $\cC$ in $\pcode$, then $\cD$ must be one of the covered codes described above. Thus to prove that a non-convex code $\cC$ is minimally non-convex, it suffices to prove that all of its covered codes are convex. 

A useful geometric interpretation of covered codes is as follows. Suppose that $\cU = \{U_1,\ldots, U_n\}$ is a (possibly not convex) realization of $\cC$. Then we may obtain a realization of $\cC^{(i)}$ by deleting $U_i$ from $\cU$, and adding sets $U_{\overline j} = U_i\cap U_j$ for all $j\neq i$.

In some cases, there may be distinct neurons $j, k$ such that $U_{\bar j }= U_{\bar k}$. In this cases, one of the neurons $\bar j, \bar k$ is  redundant, and we can remove it from the code without discarding geometric information. More generally, a neuron $j$ is \emph{redundant}  to a set $\sigma \subseteq [n]\setminus \{j\}$ if $\tk_\cC(j) = \tk_\cC(\sigma)$, and a neuron is \emph{trivial} if it does not appear in any codeword \cite{jeffs2020morphisms}. A code is \emph{reduced} if it does not have any trivial or redundant neurons. Theorem 1.4 of \cite{jeffs2020morphisms} states that a code is always isomorphic to a reduced code. Thus, convexity of the reduced code is equivalent to convexity of the original code. Thus, we can ``clean up" $\cC^{(i)}$ by removing all trivial or redundant neurons. In what follows, we give realizations for reduced versions of all codes mentioned. 

Thus, to show that $\cL_n$ is minimal for all $n$, we need to construct realizations for each covered code $\cL_n^{(i)}$. In Figure \ref{fig:const123}, we construct realizations of $\cL_n^{(1)}$, $\cL_n^{(2)}$, and $\cL_n^{(3)}$ in $\R^2$. In Figure \ref{fig:const4}, we construct a realization of $\cL_n^{(4)}$ in $\R^3$. Finally, in Figure \ref{fig:const56}, we construct a convex realization of $\cL_n^{(7)}$ in $\R^3.$ An analogous process can be used to construct convex realizations of $\cL_n^{(8)}, \ldots,  \cL_n^{(n+6)}.$

\begin{figure}
    \centering
    \includegraphics[width = 4 in]{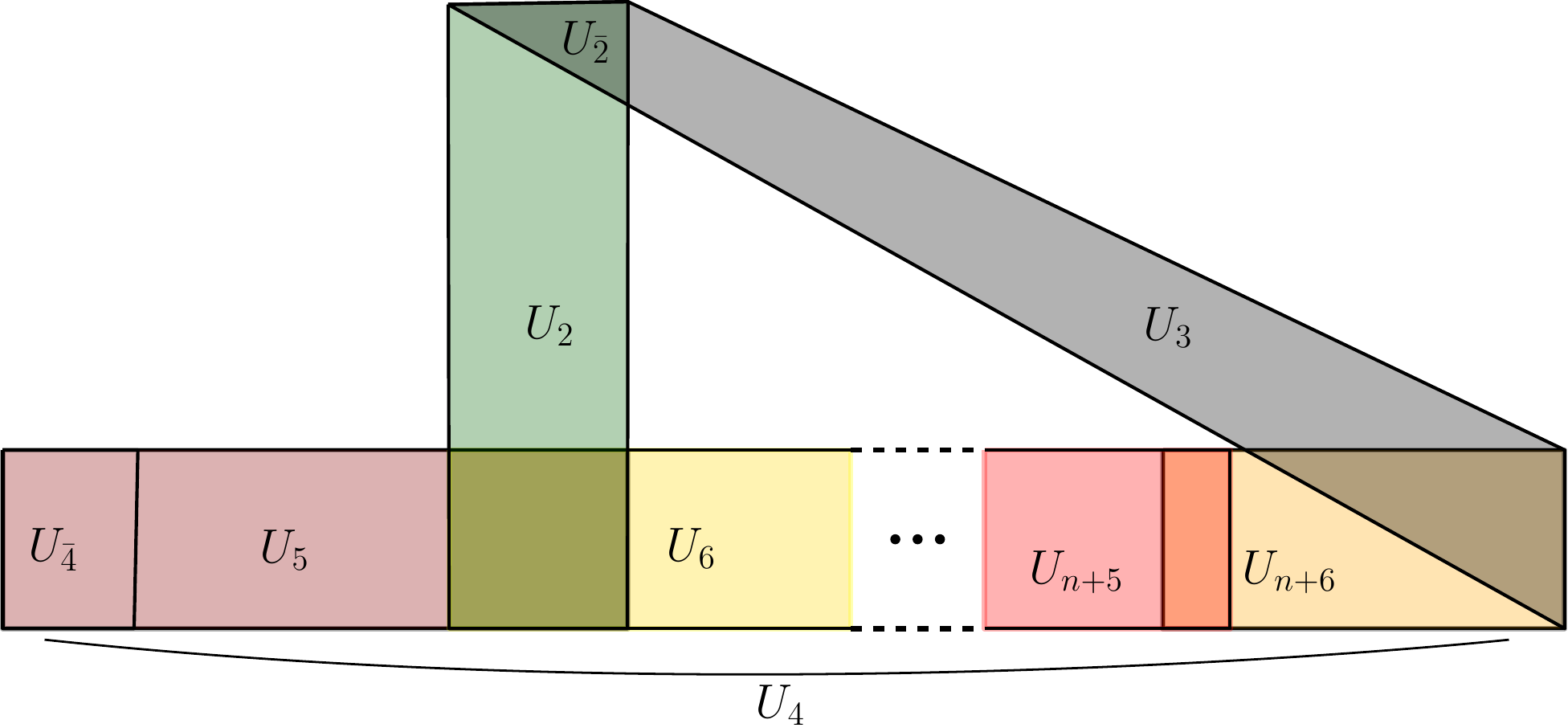}
    \includegraphics[width = 4 in]{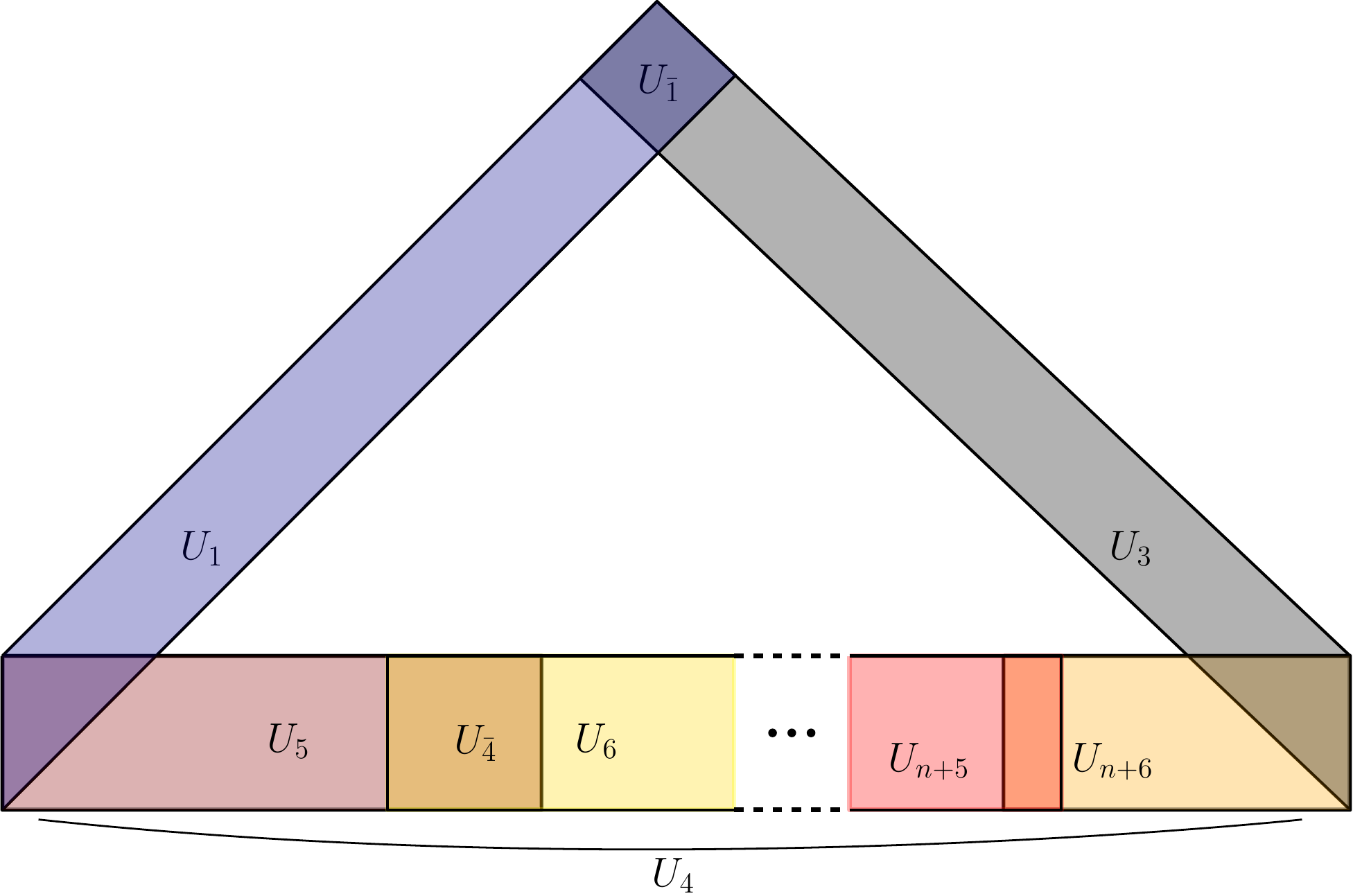}
    \includegraphics[width = 4 in]{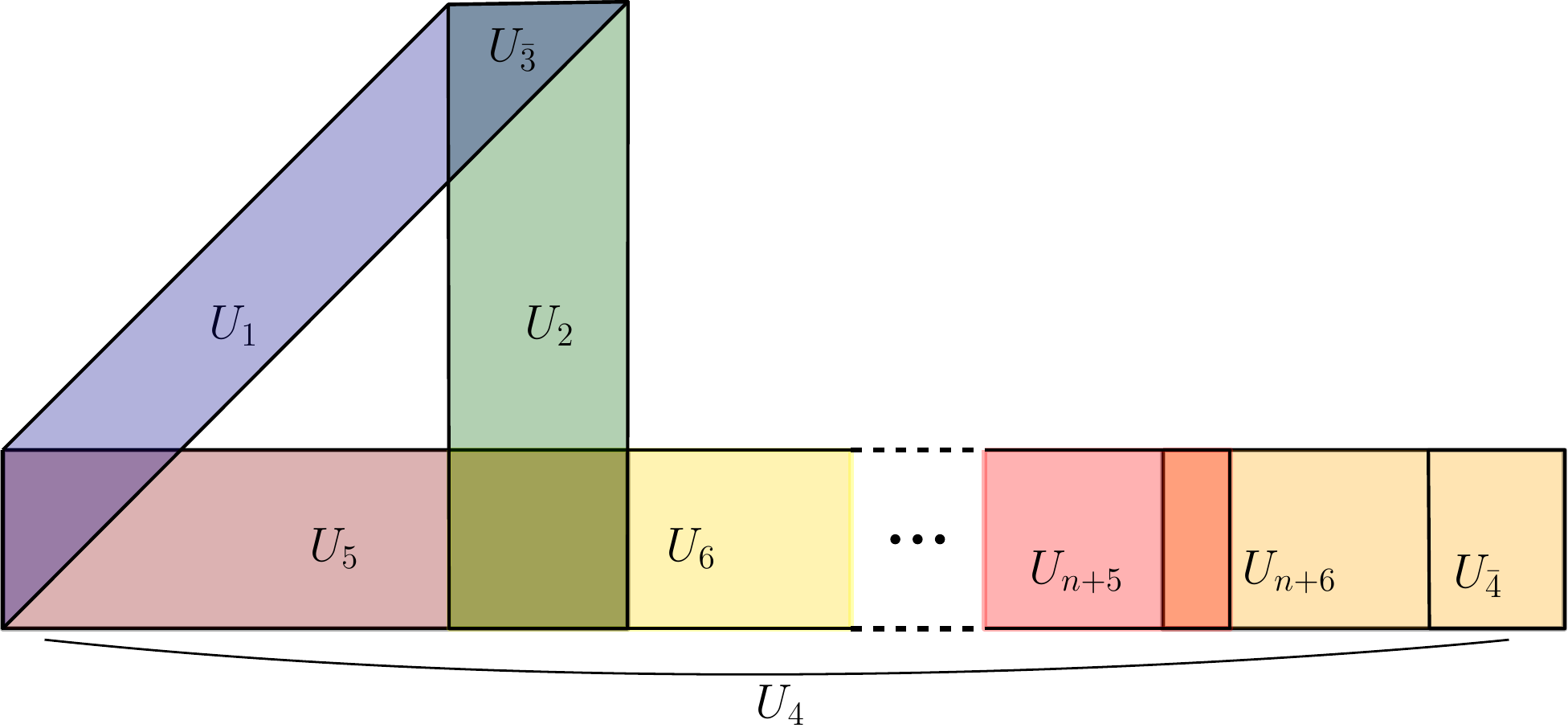}
    \caption[Convex realizations in $\R^2$ of the codes $\cL_{n}^{(1)}$, $\cL_{n}^{(2)}$, $\cL_{n}^{(3)}$.]{Convex realizations in $\R^2$ of the codes $\cL_{n}^{(1)}$, $\cL_{n}^{(2)}$, $\cL_{n}^{(3)}$.}
    \label{fig:const123}
\end{figure}

\begin{figure}
    \centering
    \includegraphics[width = 4 in]{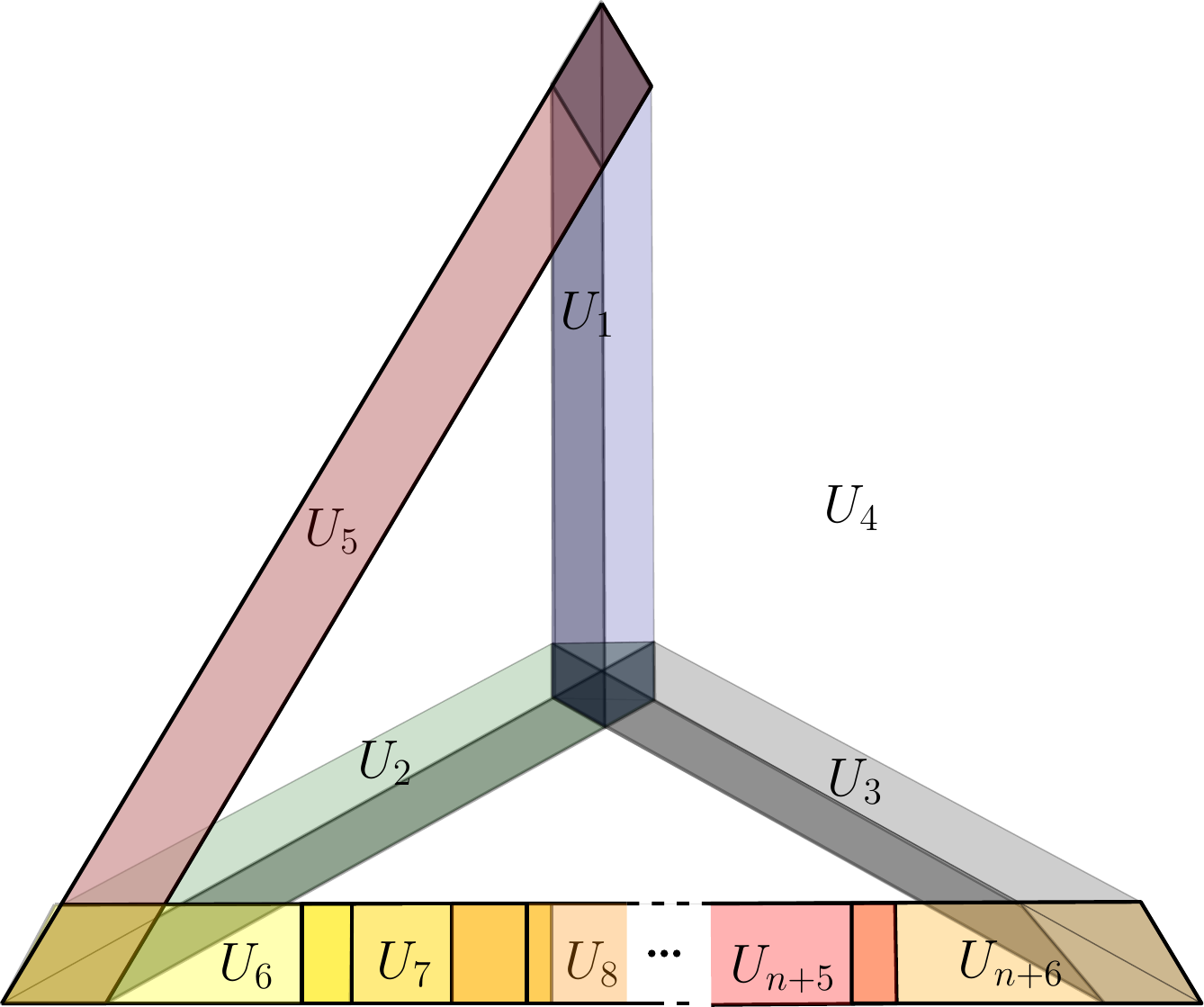}
    \caption[A convex realization in $\R^3$ of the code $\cL_n^{(7)}$.]{A convex realization in $\R^3$ of the code $\cL_n^{(7)}$. An analogous construction can be used to construct convex realizations for $\cL_n^{(5)}$, 
    $\cL_n^{(6)}$, and $\cL_{n}^{(8)}, \ldots, \cL_{n}^{(n+6)}$ in $\R^3$. }
    \label{fig:const4}
\end{figure}

\begin{figure}
    \centering
    \includegraphics[width = 4 in]{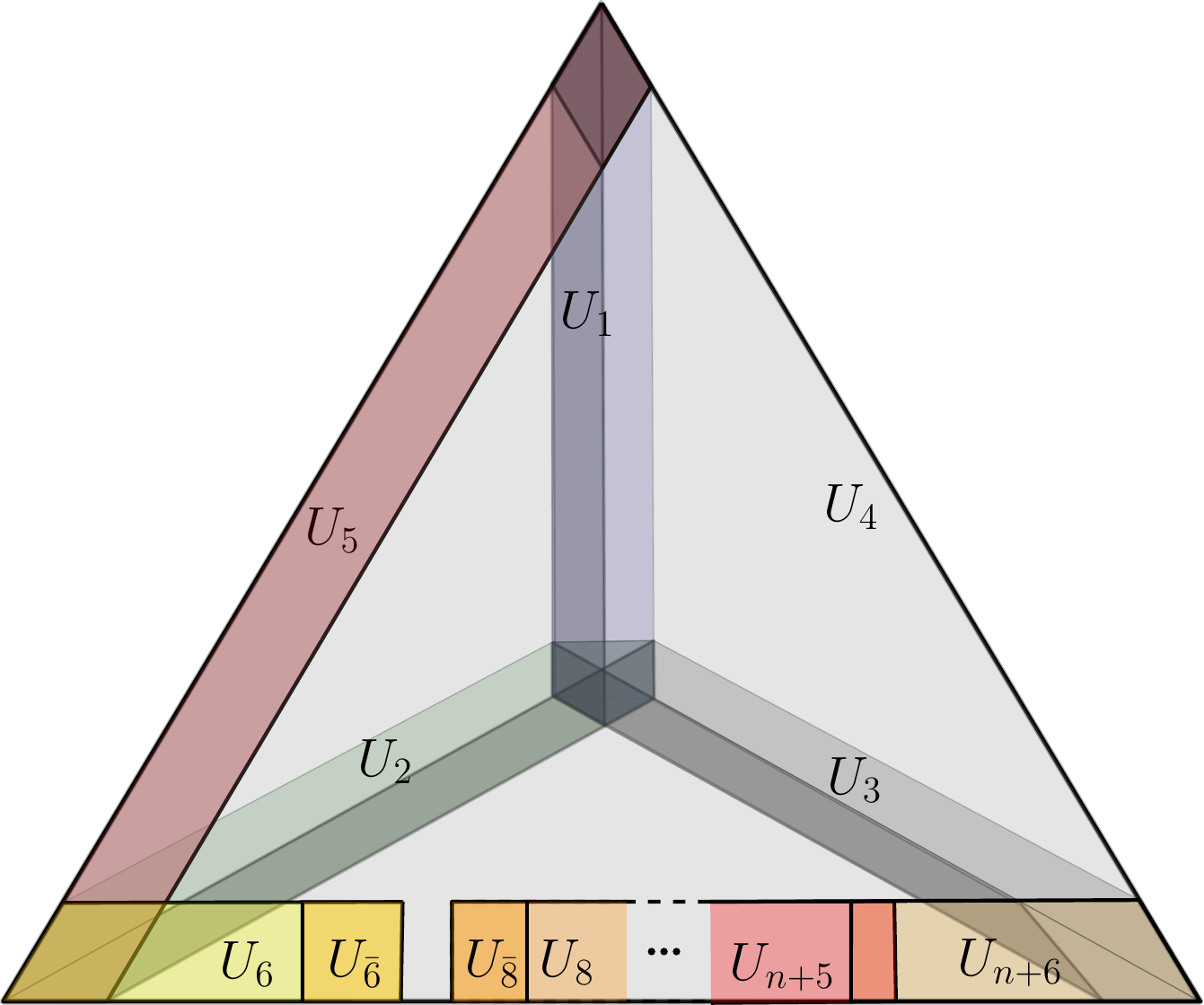}
    \caption[A convex realization in $\R^3$ of the code $\cL_n^{(7)}$.]{A convex realization in $\R^3$ of the code $\cL_n^{(7)}$. An analogous construction can be used to construct convex realizations for $\cL_n^{(5)}$, 
    $\cL_n^{(6)}$, and $\cL_{n}^{(8)}, \ldots, \cL_{n}^{(n+6)}$ in $\R^3$. }
    \label{fig:const56}
\end{figure}

\end{const}



\chapter{Oriented Matroids and Convex Neural Codes } \label{chapter:matroids_codes}

This chapter is adapted from the paper ``Oriented Matroids and Convex Neural Codes", which is joint work with Alexander Kunin and Zvi Rosen \cite{kunin2020oriented}, and is included here with their permission. 

\section{Introduction}

A convex neural code records the same information about the intersection pattern of a family of convex sets as a representable oriented matroid records about a hyperplane arrangement. For instance, the hyperplane arrangement in Figure \ref{F:OMandCodeExample} gives rise to the covectors illustrated in Figure \ref{F:OMandCodeExample}(a), while the codewords of the associated combinatorial code of the positive half-spaces are shown in panel (b).
In fact, as we noted in Section \ref{sec:oriented_matroid_intro}, the set of covectors of the oriented matroid of a hyperplane arrangement is the code of the positive and negative half-spaces. 
Thus, we can consider (representable) oriented matroids as a special case of convex neural codes. 
Because the study of oriented matroids long precedes the study of convex neural codes, making the connection between oriented matroids and convex codes explicit will allow us to leverage results about oriented matroids to prove new theorems about convex codes. 

\begin{figure}[ht!]
\begin{center}
 \includegraphics[width = 5 in]{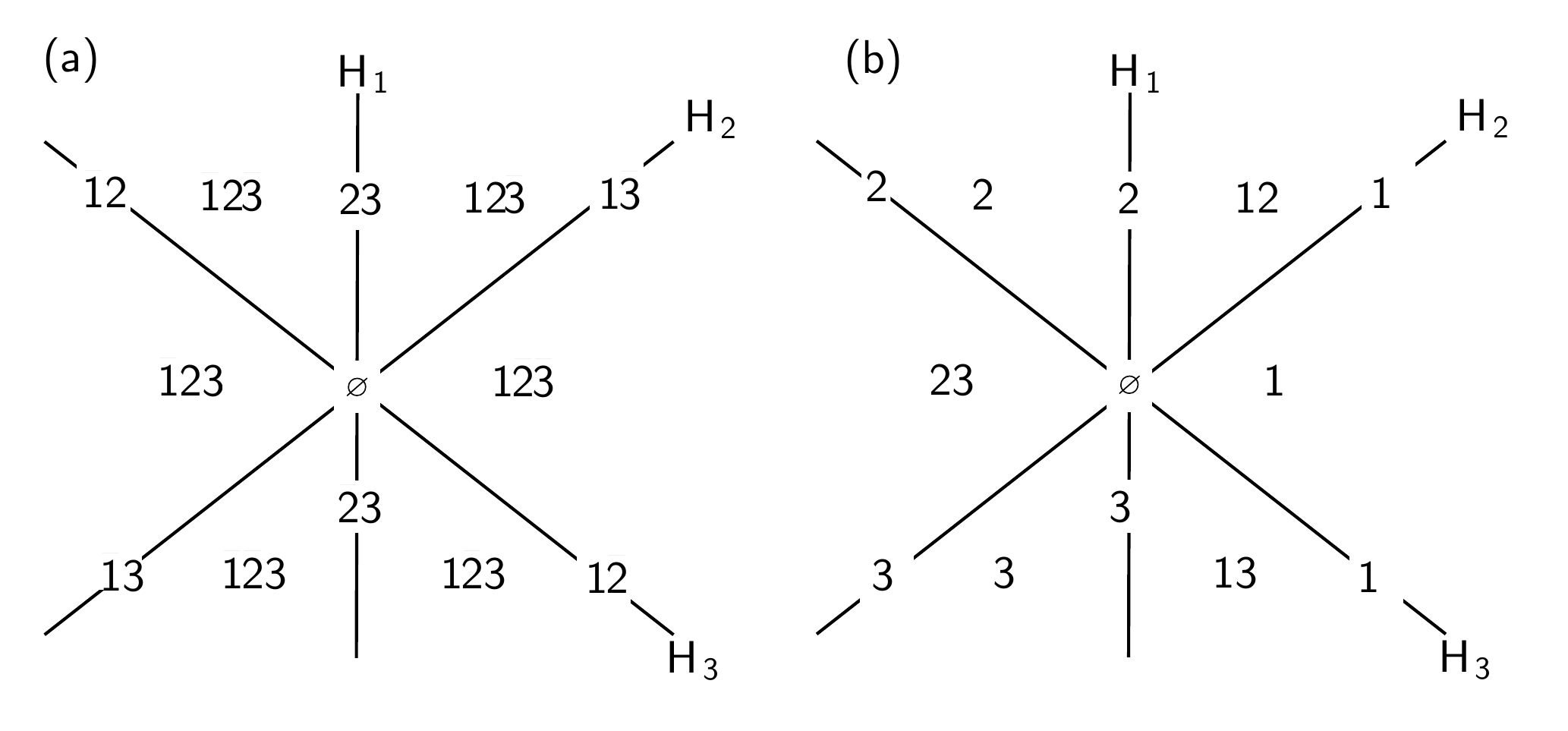}
 \end{center}
  \caption[Comparing covectors and codewords.]{(a)~The covectors of an oriented matroid arising from a central hyperplane arrangement. (b)~The combinatorial code of the cover given by the positive open half-spaces.}
  \label{F:OMandCodeExample}
\end{figure}

We define a map $\sfL^+$ which takes an oriented matroid to the set of positive parts of its covectors. 
Using this map, the connection between oriented matroids and convex codes comes primarily through the following theorem, which roughly holds that oriented matroids form the ``upper boundary"  of the set of convex neural codes in the poset $\pcode$. 
\begin{ithm}\label{thm:polytope_matroid}
A code $\cC$ has a realization with convex polytopes if and only if there is an oriented matroid $\cM$ such that $\cC$ lies below $\sfL^+( \cM)$ in the poset $\pcode$. 
\end{ithm}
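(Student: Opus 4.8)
The plan is to prove both directions by translating between polytope realizations and hyperplane arrangements, and then feeding the results into the morphism/minor machinery, namely Proposition~\ref{prop:trunks_morph} and Lemma~\ref{lem:int_closed}.

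\textbf{From polytope realizations to oriented matroids.} Suppose $\cC = \code(P_1,\dots,P_n)$ with each $P_i$ a convex polytope in $\R^d$. I would first write each $P_i$ as a finite intersection of half-spaces, $P_i = \bigcap_{j\in J_i}H_j^+$ (this works in either the open or the closed convention), and gather every hyperplane that occurs into one affine arrangement $H_1,\dots,H_m$ in $\R^d$, duplicating a hyperplane with the reversed orientation whenever two polytopes sit on opposite sides of it, so that each $P_i$ is cut out using only positive sides. Homogenizing as in the discussion of affine oriented matroids in Section~\ref{sec:oriented_matroid_intro} --- regard $\R^d$ as the slice $x_{d+1}=1$ in $\R^{d+1}$, adjoin the hyperplane at infinity $H_g = \{x_{d+1}=0\}$, and let $\cM$ be the (representable) oriented matroid of the resulting central arrangement on ground set $[m]\cup\{g\}$ --- I then read off a chain of minors: the trunk $\trunk_{\sfL^+(\cM)}(\{g\})$ consists of the positive parts of the covectors with $g$-entry $+$, i.e.\ of the positive covectors $\cL_+(\cM,g)$, which are exactly the cells of the affine arrangement; deleting the coordinate $g$ is a code morphism that turns this trunk into $\code(H_1^+,\dots,H_m^+)$; and the morphism $f_S$ determined via Proposition~\ref{prop:trunks_morph} by the trunks $T_i := \trunk(J_i)$ has image exactly $\cC$, since a point lies in $P_i = \bigcap_{j\in J_i}H_j^+$ precisely when its half-space codeword contains $J_i$. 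Composing, $\cC \le \sfL^+(\cM)$, and $\cM$ is an oriented matroid (indeed a representable one).

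\textbf{From oriented matroids to polytope realizations.} Conversely, suppose $\cC \le \sfL^+(\cM)$. Because the family of convex polytopes in a fixed $\R^N$ is closed under finite intersection and contains the empty set, Lemma~\ref{lem:int_closed} reduces the problem to realizing $\sfL^+(\cM)$ itself by convex polytopes. When $\cM$ is representable this is immediate: write $\sfL^+(\cM) = \code(H_1^+,\dots,H_n^+)$ for a hyperplane arrangement, choose one point from each of the finitely many nonempty atoms, let $B$ be a box containing all of them, and note that for $p\in B$ one has $p\in H_i^+\cap B \iff p\in H_i^+$, so $\code(H_1^+\cap B,\dots,H_n^+\cap B;B) = \sfL^+(\cM)$ with each $H_i^+\cap B$ a polytope. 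Combined with the previous paragraph this already settles the equivalence for representable oriented matroids.

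\textbf{Where the difficulty lies.} The first implication is essentially careful bookkeeping: getting the affine/central dictionary right, orienting shared facet hyperplanes consistently, and verifying that $f_S$ really has image $\cC$. The genuine obstacle is the converse when $\cM$ is \emph{not} representable --- one must exhibit a convex-polytope realization of $\sfL^+(\cM)$ starting only from a pseudo-arrangement (via the topological representation theorem), whose ``positive sides'' are contractible but not convex. I expect this step --- either a direct polytopal construction, or the observation that it suffices to take the oriented matroid in the statement to be representable --- to be the crux; everything else is an application of Proposition~\ref{prop:trunks_morph} and Lemma~\ref{lem:int_closed}.
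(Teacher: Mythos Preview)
Your approach matches the paper's proof almost exactly in both directions: write each polytope as an intersection of half-spaces, homogenize to obtain a central arrangement (hence a representable oriented matroid), take the trunk at $g$, and use the morphism determined by the trunks $\trunk(J_i)$ to recover $\cC$; conversely, apply Lemma~\ref{lem:int_closed} with the intersection-closed family of half-spaces (or half-spaces clipped to a box). The paper also keeps track of the bounding polytope $X$ explicitly on the forward direction, writing $X$ itself as an intersection of half-spaces $K_1^+,\dots,K_k^+$ and first passing to the trunk of those neurons before applying the morphism; you should mention this, but it is a minor bookkeeping addition.

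Your identified ``crux'' is not actually a gap in your argument but a gap in the \emph{statement} you were given. The informal version of the theorem in the chapter introduction omits the word ``representable'', but the formal theorem as stated and proved in Section~\ref{sec:intersection} reads: \emph{a code $\cC$ is open polytope convex if and only if there exists a \textbf{representable} oriented matroid $\cM$ such that $\cC\le\sfL^+(\cM)$}. With that qualifier in place, your proof is complete and there is nothing left to do. You are right that the unqualified version would require producing a polytope realization of $\sfL^+(\cM)$ for non-representable $\cM$; the paper does not claim or prove this, and in fact later (Theorem~\ref{thm:witness}) constructs codes below non-representable matroids that are \emph{not} convex, so the unqualified statement is false as you suspected.
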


This allows us to categorize non-convex codes: if a code is not convex, then either it does not lie below any oriented matroid in $\pcode$, or it lies below non-representable matroids only. However, it is not yet known whether every convex code has a realization with convex polytopes. If this does hold, then  Theorem \ref{thm:polytope_matroid} would give a full characterization of convex codes in terms of representable oriented matroids. 

There are many known examples of non-convex codes \cite{curto2017makes, lienkaemper2017obstructions,jeffs2020morphisms, jeffs2019sunflowers, chen2019neural},
and we show that many of these fall into the first category: they are non-convex because they are not below any oriented matroids in $\pcode$.
For instance, codes with topological local obstructions do not lie below oriented matroids. 
Furthermore, well known examples of non-convex codes with no local obstructions also do not lie below oriented matroids. 
	\begin{ithm}\label{thm:bad_sunflower}
		The non-convex codes with no local obstructions introduced in \cite{jeffs2020morphisms, jeffs2019sunflowers} and \cite{lienkaemper2017obstructions} do not lie below the set of covectors of an oriented matroid in $\pcode$. 
	\end{ithm}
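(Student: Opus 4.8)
The plan is to show that each of the named codes — the sunflower code $\cC_0$ of \cite{jeffs2020morphisms} (equivalently $\cL_0$), the family of codes from \cite{jeffs2019sunflowers}, and the original non-convex code of \cite{lienkaemper2017obstructions} — fails to lie below any oriented matroid $\cM$ in $\pcode$. By Theorem \ref{thm:polytope_matroid} it is enough to prove that none of these codes has a realization by convex polytopes; in fact it suffices to produce an obstruction that rules out \emph{any} convex realization and then observe that the obstruction is actually a statement about $\pcode$, i.e. that it persists under passing to codes that lie above. Since convex codes form a down-set in $\pcode$ (Theorem \ref{thm:morph}) and, by Theorem \ref{thm:polytope_matroid}, a code lies below some oriented matroid exactly when it has a polytope realization, what we really need is: if $\cC$ lies below $\sfL^+(\cM)$, then $\cC$ is convex. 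That is immediate — $\sfL^+(\cM)$ is realized by the positive open half-spaces of a (pseudo)hyperplane arrangement, hence by convex open sets, so it is convex, and convexity is inherited downward. Therefore the theorem reduces to the single assertion: \emph{each of these codes is non-convex}, which is already known in the literature. So the real content is to package the known non-convexity proofs in a way that makes clear they obstruct the matroid relation, and the cleanest route is via local-obstruction-style or order-forcing-style arguments that survive taking minors.

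Concretely, I would proceed as follows. First, recall that a code lying below $\sfL^+(\cM)$ in $\pcode$ is in particular a minor of the convex code $\sfL^+(\cM)$, hence convex by Lemma \ref{lem:int_closed} applied to the intersection-closed family of convex open sets. So it suffices to exhibit, for each code in question, a proof of non-convexity — and these already exist: $\cC_0$ is shown non-convex in \cite[Theorem 5.10]{jeffs2020morphisms}, the sunflower family in \cite[Theorem 4.2]{jeffs2019sunflowers} via Theorem \ref{thm:sunflower}, and the $\cL_0$-type code in \cite[Theorem 3.1]{lienkaemper2017obstructions}. Second, I would note the logical structure: ``$\cC$ not convex'' together with ``if $\cC\le_{\pcode}\sfL^+(\cM)$ then $\cC$ convex'' yields ``$\cC\not\le_{\pcode}\sfL^+(\cM)$ for every oriented matroid $\cM$,'' which is exactly the claim. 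Third, for the statement to be nontrivial one wants to rule out that $\cC$ lies below a \emph{non-representable} oriented matroid too; but the argument above makes no representability assumption — $\sfL^+(\cM)$ is convex for \emph{every} oriented matroid $\cM$ because of the topological representation theorem (every oriented matroid is realized by a pseudosphere arrangement, and the positive cells of a pseudosphere arrangement are still contractible, indeed one can take convex sets after a homeomorphism), so the non-convexity of $\cC$ rules out all oriented matroids at once.

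The one genuine subtlety — and the step I expect to be the main obstacle — is justifying that $\sfL^+(\cM)$ is convex (or at least a good-cover code whose non-obstructedness we can compare) even when $\cM$ is \emph{not} representable: a pseudosphere arrangement gives a good cover, not literally a convex one, so the clean statement ``$\sfL^+(\cM)$ is convex'' must be replaced by ``$\sfL^+(\cM)$ has no local obstructions,'' using that the positive regions of a pseudohyperplane arrangement form a good cover (their intersections are cells, hence contractible or empty). Then the correct downward-closed property to invoke is Corollary \ref{thm:good_cover}: codes with no local obstructions form a down-set in $\pcode$. So the final logical skeleton is: (i) $\sfL^+(\cM)$ has no local obstructions for every oriented matroid $\cM$ (good cover from the pseudosphere arrangement); (ii) hence every code lying below $\sfL^+(\cM)$ in $\pcode$ has no local obstructions; (iii) but the codes of \cite{jeffs2020morphisms, jeffs2019sunflowers, lienkaemper2017obstructions} named in the statement, while having no local obstructions, are non-convex — and the \emph{extra} input needed is that their non-convexity proofs actually show something stronger than ``not convex,'' namely a combinatorial obstruction incompatible with being a minor of a good-cover code realized by an arrangement. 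Here I would instead simply invoke that these are minor-minimal non-convex codes whose non-convexity is witnessed by sunflower-type arguments (Theorem \ref{thm:sunflower}), and check directly that $\sfL^+(\cM)$ can never contain the requisite sunflower configuration forced to avoid its center — since in any pseudohyperplane arrangement the relevant ``center'' atom is non-empty whenever all petal atoms are, by the combinatorics of covectors. This last verification, reducing the geometric sunflower statement to a purely covector-theoretic one, is where the work lies; everything else is formal bookkeeping in $\pcode$.
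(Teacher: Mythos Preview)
Your proposal has a genuine gap at exactly the point you flag as ``the one genuine subtlety.'' The implication ``$\cC \le \sfL^+(\cM) \Rightarrow \cC$ convex'' is only available when $\cM$ is \emph{representable}; for non-representable $\cM$ there is no reason for $\sfL^+(\cM)$ to be convex, and the topological representation theorem gives only a good cover, not a convex one. Your fallback to ``no local obstructions'' then proves nothing, since the codes in question are precisely good cover codes with no local obstructions --- that is the whole point of the statement. In fact the paper later (Theorem~\ref{thm:witness}) constructs non-convex codes that \emph{do} lie below non-representable oriented matroids in $\pcode$, so non-convexity alone cannot possibly imply ``not below any oriented matroid.'' Your argument, as written, would prove that those codes are also not below any oriented matroid, which is false.

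You correctly identify at the very end that what is actually needed is a direct covector-theoretic obstruction, but you treat this as residual bookkeeping when it is the entire content of the theorem. The paper's proof does exactly this work: it assumes $\scC_n = f(\sfL^+(\cM))$ for some oriented matroid $\cM$ and morphism $f$, and derives a contradiction purely from the oriented matroid axioms. The argument uses the tope graph of $\cM$ and the $T$-convexity of half-spaces in it. First, by constructing an auxiliary morphism and invoking that images of oriented matroid codes have no local obstructions, one forces the existence of a tope whose positive part contains all the simplex trunks together with the intersection of the petal trunks and the last petal. Second, a tope-graph path argument shows that any tope containing the last petal and the common intersection of petals must in fact contain every petal trunk in its positive part. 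Combining these yields a tope mapping to the forbidden codeword $P\cup S$, a contradiction. None of this reduces to the geometric sunflower theorem; it is an independent combinatorial proof that works uniformly over all oriented matroids regardless of representability.
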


We are also able to generate an infinite family of non-convex codes of the second kind, those which lie below non-representable matroids only.
In order to obtain this family, we establish a relationship between representability and convexity. We do this for the special case of uniform oriented matroids of rank 3, which correspond to non-degenerate pseudoline arrangements in the plane. This construction makes use of order-forcing results from the previous chapter. 
\begin{ithm}\label{thm:witness}
\
	 Let $\cM$ be a uniform, rank 3 oriented matroid. Then we can construct a code which is convex if and only if $\cM$ is representable. 
\end{ithm}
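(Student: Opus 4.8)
The plan is to turn $\cM$ into a pseudoline arrangement and then use order-forcing to build a code $\cC(\cM)$ whose every convex realization is forced to ``straighten'' that pseudoline arrangement into an honest arrangement of lines with combinatorial type $\cM$; representability of $\cM$ then becomes exactly the condition for a convex realization to exist. Since $\cM$ is uniform of rank $3$, the topological representation theorem (\cite{folkman1978oriented}) lets us fix, once and for all, a simple affine pseudoline arrangement $\cA$ of $n$ pseudolines (no three concurrent) whose oriented matroid is $\cM$; a wiring diagram for $\cM$ will do. From $\cA$ I would record the purely combinatorial data of its cell decomposition: the $2$-cells, and for each pseudoline $\ell_i$ the ordered list $c^i_1, c^i_2, \ldots, c^i_{k_i}$ of cells it passes through, together with which crossing with $\ell_j$ sits between which consecutive pair. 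This data exists regardless of representability, and for a uniform rank-$3$ oriented matroid it determines $\cM$ up to relabeling and reorientation.

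Next I would define the code. Take one neuron $i$ per pseudoline $\ell_i$, plus a bounded supply of auxiliary neurons: three of them (call them $a,b,c$) to pin down a plane exactly as in Proposition~\ref{prop:rowboat}, and the rest to turn the relevant induced subgraphs of the codeword containment graph into \emph{paths} (the same device that makes the chains in $\cL_n$, $\cR$, and $\cT$ strongly order-forced). The codewords of $\cC(\cM)$ would be: one maximal codeword per $2$-cell of $\cA$ recording the set of pseudolines on whose positive side that cell lies (decorated with auxiliary labels as needed); and, for each line $i$, a chain of codewords all containing $i$ that interleaves the cell-codewords $c^i_1,\ldots,c^i_{k_i}$ with the crossing codewords $\{i,j\}\cup(\text{aux})$. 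The auxiliary neurons must be chosen so that in $G_{\cC(\cM)}$ the unique feasible path between the two ends of line $i$'s chain is precisely that chain, and so that the three distinguished atoms are mutually positioned (in the sense of Proposition~\ref{prop:rowboat}) to force every atom to meet a common triangle.

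For the forward direction, suppose $\cU$ is a convex (open or closed) realization of $\cC(\cM)$. As in Proposition~\ref{prop:rowboat}, intersecting $\cU$ with the affine plane spanned by three chosen points yields a planar convex realization, so I may assume $d=2$. For each $i$ pick points in the atoms at the two ends of line $i$'s chain and let $L_i \subseteq U_i$ be the segment between them; by Theorem~\ref{thm:order-forcing}, $L_i$ meets the atoms of $c^i_1,\ldots,c^i_{k_i}$ and of the crossing codewords in exactly the prescribed order. Extending each $L_i$ to the line $\hat L_i = \aff(L_i)$, the recorded order of crossings forces $\{\hat L_1,\ldots,\hat L_n\}$ to be a line arrangement whose cell decomposition matches $\cA$, hence whose oriented matroid is $\cM$ up to reorientation; so $\cM$ is representable. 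Conversely, if $\cM$ is representable, straighten $\cA$ to a genuine line arrangement, replace each line by a sufficiently thin convex open neighborhood of a long enough segment of it (closed slabs in the closed case), place the auxiliary neurons explicitly, and check that the nonempty overlap regions of these slivers are exactly the cells of the arrangement, so that $\code(\cU) = \cC(\cM)$.

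The hard part will be the code design itself: I must choose the auxiliary neurons so that every relevant subgraph of $G_{\cC(\cM)}$ is literally a path --- giving strong order-forcing along every line --- while not creating any spurious codeword that would open a feasible ``shortcut'' or corrupt the combinatorial type read off from the $L_i$; and I must ensure the forced segments are long enough that extending them to full lines preserves the order of all $n-1$ crossings, so that the recovered arrangement really has oriented matroid $\cM$ and not merely some weaker invariant. The non-representable case then reduces, in the spirit of Propositions~\ref{prop:rowboat} and~\ref{prop:braid}, to the impossibility of realizing the forced crossing pattern by straight lines in $\R^2$.
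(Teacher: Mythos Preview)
Your proposal is correct and follows essentially the same approach as the paper: both build a code from the pseudoline arrangement of $\cM$ by attaching order-forcing auxiliary neurons along each pseudoline so that any convex realization can be flattened to a plane and the forced straight segments recover a genuine line arrangement with oriented matroid $\cM$. The paper's framing device uses two boundary pseudolines $b_\ell, b_r$ (rather than three anchor neurons) to produce the three reference atoms, and for the forward direction it shows the code lies below a representable oriented matroid via an explicit augmented arrangement, but these are implementation details of the same strategy you outline.
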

Using this last result, we are able to compare two fundamental decision problems: (1) is a given oriented matroid representable, and (2) is a given neural code realizable by convex sets. 
We demonstrate that deciding convexity for arbitrary neural codes is {\em at least} as hard as deciding representability of an oriented matroid. 
The latter problem is known to be NP-hard and $\exists\R$-hard, leading to the following theorem:

\begin{ithm}\label{thm:hard}
	The convex code decision problem is NP-hard and $\exists\R$-hard.
\end{ithm}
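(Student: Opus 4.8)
The plan is to prove hardness by exhibiting a polynomial-time many-one reduction from the oriented matroid representability problem, using the code construction promised by Theorem~\ref{thm:witness}. First I would recall that deciding whether an oriented matroid is representable is both NP-hard and $\exists\R$-hard, and---crucially---that this hardness already holds when the input is restricted to \emph{uniform oriented matroids of rank $3$}. This is essentially the statement that stretchability of simple pseudoline arrangements is hard: it follows from \cite{sturmfels1987decidability, mnev1988universality, shor1991stretchability} (and Mn\"ev universality), since simple arrangements of $n$ pseudolines are exactly the uniform rank $3$ oriented matroids on $n$ elements. Such an $\cM$ is specified by its chirotope, a sign function on $\binom{[n]}{3}$, so the input has size polynomial in $n$.

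Next I would invoke Theorem~\ref{thm:witness}: given such an $\cM$, it produces a neural code $\cC_\cM$ that is convex if and only if $\cM$ is representable. The reduction is complete once I verify that $\cC_\cM$ can be written down from $\cM$ in polynomial time. This amounts to checking that the construction underlying Theorem~\ref{thm:witness} introduces only polynomially many neurons and codewords relative to $n$---the neurons encoding the pseudolines of the arrangement together with the auxiliary neurons setting up the order-forced sequences that pin each crossing into place---and that each codeword can be read off from the chirotope of $\cM$ by inspecting only a bounded amount of combinatorial data. Since the construction is explicit and local (each order-forcing gadget depends on a constant-size piece of the chirotope), this bookkeeping is routine.

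Combining the two steps: if the convex code decision problem admitted a polynomial-time decision procedure (resp. lay in a complexity class below $\exists\R$), then precomposing that procedure with the reduction $\cM \mapsto \cC_\cM$ would decide representability of uniform rank $3$ oriented matroids within the same resource bound, contradicting the known NP-hardness and $\exists\R$-hardness of that problem. Hence the convex code decision problem is NP-hard and $\exists\R$-hard. The main obstacle is not conceptual but a matter of care: one must confirm that the hardness of representability genuinely survives the restriction to the uniform rank $3$ case (so that Theorem~\ref{thm:witness} applies), and that the number of codewords of $\cC_\cM$---and hence the time needed to produce it---is polynomial in the size of the chirotope encoding of $\cM$. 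Everything else is an immediate consequence of Theorem~\ref{thm:witness}.
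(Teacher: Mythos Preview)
Your proposal is correct and follows essentially the same approach as the paper: reduce from representability of uniform rank $3$ oriented matroids via the code construction of Theorem~\ref{thm:witness}, citing Mn\"ev--Sturmfels universality for the $\exists\R$-hardness of the source problem, and verifying that the reduction is polynomial-time by bounding the number of neurons and codewords. The paper's proof is slightly more explicit on the size bound (neurons quadratic in the ground set, codewords bounded by the number of covectors), but your outline identifies all the same ingredients.
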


The paper is organized as follows:  
In Section \ref{sec:intersection}, we define the map $\sfL^+$ and prove Theorem \ref{thm:polytope_matroid}. 
In Section \ref{sec:non_convex}, we discuss classes of non-convex codes and their relationships to oriented matroids, proving Theorems \ref{thm:bad_sunflower}, \ref{thm:witness}, and \ref{thm:hard}. 
Finally, in Section \ref{S:questions}, we present open questions related to each area discussed in the paper.

\section{Relating convex codes to oriented matroids}
\label{sec:intersection}

In this section, we establish the relationship between representable oriented matroids and convex neural codes, as well as between oriented matroids and good cover codes. 

While the set of covectors of an oriented matroid, viewed as a subset of $2^{\pm [n]}$, is a combinatorial code, it often makes sense to consider a more compact code using only the positive parts of covectors. 
We define this code as 
\begin{align*}
\sfL^+(\cM) = \{ X^+\subseteq [n] \mid X \in \cL(\cM)\}
\end{align*}
In the case that $\cM$ is realized by a hyperplane arrangement $H_1, \ldots, H_n$, $\sfL^+(\cM)$ corresponds to the code of the positive open half-spaces $H_1 ^+, \ldots, H_n^+$. Thus, we refer to  $\sfL^+(\cM)$  as the \emph{open code} of $\cM$. 

We can also define the  \emph{closed code} of $\cM$ via the map $\sfL^\geq(\cM)$  which takes a representable oriented matroid to the code of its closed positive half-spaces. 
We do this by taking the complement of the negative part of each covector. 
\begin{align*}
\sfL^\geq(\cM) = \{ [n]\setminus X^- \mid X \in \cL(\cM)\}
\end{align*}

Notice that there is not a one-to-one relationship between covectors of an oriented matroid and codewords of $\sfL^+(\cM)$ or $\sfL^\geq(\cM)$: multiple covectors may have the same positive part or the same negative part. 
For instance, in Figure \ref{F:OMandCodeExample}, the covectors $\bar 1 2 \bar 3$ and $\bar 1 2$ both have the same positive part, $2$. 
More significantly, neither $\sfL^+(\cM)$ nor $\sfL^-(\cM)$  is  an injective map from the set of oriented matroids to the set of convex codes. 
For example, see Figure \ref{fig:not_inj} for an example of two oriented matroids which map to the same code under $\sfL^+$. 

\begin{figure}[ht!]
\begin{center}
 \includegraphics[width = 5 in]{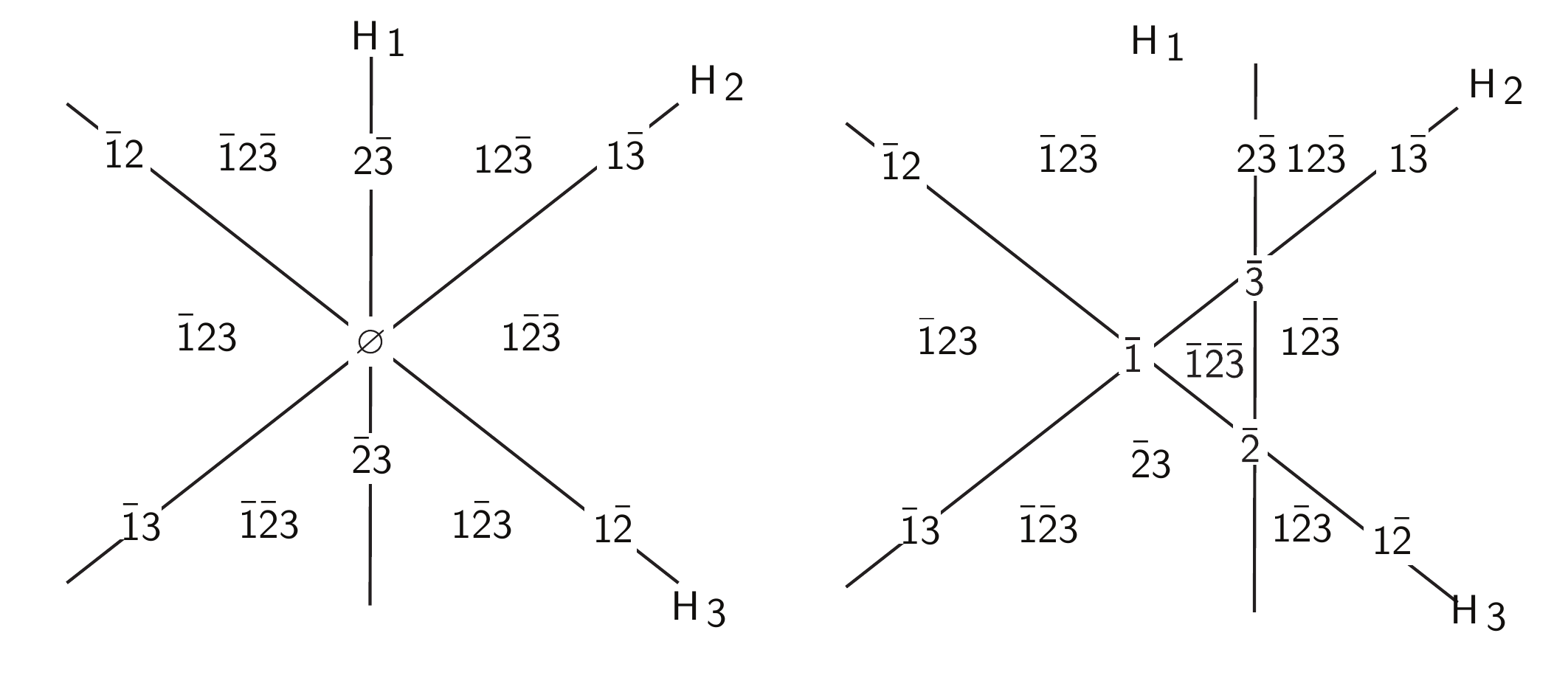}
 \end{center}
  \caption[Affine pieces of two oriented matroids which map to the same code]{Affine pieces of two oriented matroids which map to the same code. These arrangement can be centralized by adding a fourth hyperplane.	}
  \label{fig:not_inj}
\end{figure}

We can also apply $\sfL^+$  and $\sfL^{\geq}$ to an affine oriented matroid $(\cM, g)$. In this case, we have 
\begin{align*}
\sfL^+(\cM, g) &= \{X^+ \mid X \in \cL_+(\cM,g)\}\\
\sfL^\geq(\cM, g) &= \{ [n]\setminus X^- \mid X \in \cL_+(\cM,g)\}.
\end{align*}

In the representable case, $\sfL^+(\cM, g)$ is the code of the open half spaces of the affine hyperplane arrangement realizing $(\cM, g)$ and  $\sfL^\geq(\cM, g)$  is the code of the closed half spaces of the affine hyperplane arrangement realizing $(\cM, g)$.

We can relate the open code of an affine oriented matroid to the open code of its oriented matroid via trunks. Notice that   
\begin{align*}
\sfL^+(\cM, g) = \tk_{\sfL^+(\cM) }(g).
\end{align*}

However, no such relationship holds in the closed case, since $\tk_{\sfL^\geq(\cM) }(g)$ contains codewords whose atoms lie on $H_g$, while $\sfL^\geq(\cM, g)$ does not. 
In fact, notice that for any oriented matroid, $\sfL^\geq(\cM)$ contains the full support codeword $[n]$, which is contained in every trunk.

Using Theorem \ref{lem:int_closed}, we can prove relationships between open convex codes and codes of the form $\sfL^+(\cM)$, and between closed convex codes and codes of the form  $\sfL^\geq(\cM, g)$.  
We say that a code $\cC$ is open polytope convex if there exists a collection of interiors of convex polytopes $\cP = \{P_1, \ldots, P_n\}$ and a bounding convex polytope $X$ such that $\cC = \code(\cP, X)$. 
Likewise, we say that a code is closed polytope convex if there exists a collection of closed of convex polytopes $\cP = \{P_1, \ldots, P_n\}$ and a bounding convex polytope $X$ such that $\cC = \code(\cP, X)$. 

Notice that both the set of interiors of convex polytopes and the set of convex closed polytopes  are intersection-closed. 
Thus, Theorem \ref{lem:int_closed} implies that the image of any open (closed) polytope code under a surjective morphism is also an open (closed) polytope code. 
Thus, a code $\cC$ is open polytope convex if $\cC  \leq \sfL^+(\cM)$ and is closed polytope convex if  $\cC  \leq \sfL^\geq (\cM, g)$ for some representable oriented matroid $\cM$. 
We prove the converse, showing that every polytope code is itself the image of the code of an oriented matroid under some surjective morphism. 
This demonstrates that polytope codes are a down-set whose ``upper boundary" is the set of representable oriented matroid codes. 


 

\setcounter{ithm}{0}
\begin{ithm}\label{thm:polytope_matroid}
A code $\scC$ is open polytope convex if and only if there exists a representable oriented matroid $\cM$ such that 
$\scC \leq\sfL^+ (\cM)$. 
  
A code $\scC$ is closed polytope convex if and only if there exists a representable affine oriented matroid $(\cM, g)$ such that $\scC \leq\sfL^\geq (\cM, g)$. 
\end{ithm}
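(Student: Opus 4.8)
The statement is a pair of biconditionals, one for open polytope codes and one for closed, and I would prove both the same way. Each ``if'' direction is immediate from Lemma~\ref{lem:int_closed}: the family of interiors of convex polytopes in a fixed $\R^d$, and the family of closed convex polytopes in $\R^d$, are each closed under finite intersection and contain $\emptyset$, so it suffices to check that $\sfL^+(\cM)$ is itself an open polytope code whenever $\cM$ is representable, and that $\sfL^\geq(\cM,g)$ is a closed polytope code whenever $(\cM,g)$ is representable. For the former, realize $\cM$ by a central arrangement $h_1,\ldots,h_N$ in $\R^d$ and intersect each open half-space $h_j^+$ with a large open box $B$ meeting every cone of the arrangement; then $\{h_j^+\cap B\}$ with universe $B$ realizes $\sfL^+(\cM)$. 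For the latter, realize $(\cM,g)$ by an affine arrangement $H_1,\ldots,H_N$ and intersect each closed half-space $\cl(H_j^+)$ with a large generic closed box. Given $\scC\le\sfL^+(\cM)$ (resp.\ $\scC\le\sfL^\geq(\cM,g)$), Lemma~\ref{lem:int_closed} applied to the relevant intersection-closed family then makes $\scC$ open (resp.\ closed) polytope convex.

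\textbf{The converse: extracting an arrangement.} Suppose $\scC=\code(\cP,X)$ is open polytope convex, with $\cP=\{\Int(P_1),\ldots,\Int(P_n)\}$ and bounding polytope $X$; I may assume each $P_i\subseteq\cl(X)$. I would write $\Int(P_i)=\bigcap_{j\in T_i}H_j^+$ and $X=\bigcap_{j\in T_0}H_j^+$ as intersections of open half-spaces, where the $H_j$ are, with multiplicity, all the facet hyperplanes of the $P_i$ and of $X$, each oriented so that the region it helps define lies on its positive side (coincident hyperplanes carrying different orientations are harmless, since each is a separate element of the arrangement). Centralizing this affine arrangement and letting $(\cM,g)$ be the resulting affine oriented matroid with $g$ the ``hyperplane at infinity'' element, $(\cM,g)$ is representable by construction, and $\sfL^+(\cM,g)$ is exactly the open half-space code $\code(\{H_1^+,\ldots,H_N^+\},\R^d)$, with $g$ a dummy element lying in every codeword. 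Since $\sfL^+(\cM,g)=\tk_{\sfL^+(\cM)}(g)$ is a trunk --- hence a minor --- of $\sfL^+(\cM)$, it is enough to show $\scC\le\sfL^+(\cM,g)$. In the closed case the same construction with closed half-spaces gives $\sfL^\geq(\cM,g)=\code(\{\cl(H_1^+),\ldots,\cl(H_N^+)\},\R^d)$ up to the dummy $g$, and there I would show $\scC\le\sfL^\geq(\cM,g)$ directly; passing to the affine oriented matroid is essential here, because $\sfL^\geq(\cM)$ always contains the full-support codeword and $\tk_{\sfL^\geq(\cM)}(g)\neq\sfL^\geq(\cM,g)$.

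\textbf{The converse: $\scC$ as a minor.} Write $\cE$ for $\sfL^+(\cM,g)$ in the open case and $\sfL^\geq(\cM,g)$ in the closed case. First I would pass to the trunk $\cD:=\tk_{\cE}(T_0)$, a minor of $\cE$, whose codewords are exactly those arising from points of the arrangement lying in $X$. Then I would apply the morphism $f$ determined, in the sense of Proposition~\ref{prop:trunks_morph}, by the trunks $\tk_{\cE}(T_1),\ldots,\tk_{\cE}(T_n)$, namely $f(\tau)=\{i\in[n]\mid T_i\subseteq\tau\}$; its restriction to $\cD$ is again a morphism, since a trunk of $\cE$ intersected with $\cD=\tk_{\cE}(T_0)$ is a trunk of $\cD$. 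The point to verify is $f(\cD)=\scC$: a codeword $\tau\in\cD$ is the half-space codeword of some point $p\in X$, and $T_i\subseteq\tau$ holds precisely when $p$ lies in every $H_j^+$ (resp.\ $\cl(H_j^+)$) with $j\in T_i$, i.e.\ precisely when $p\in\Int(P_i)$ (resp.\ $p\in P_i$), so $f(\tau)$ is the $\scC$-codeword of $p$; as $p$ ranges over $X$ and every codeword of $\scC$ so arises, $f(\cD)=\scC$. Hence $\scC$ is the image of a morphism from a trunk of $\cE$, so $\scC\le\cE$, which completes the converse.

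\textbf{Expected main obstacle.} Everything delicate lies in the converse, and it is bookkeeping rather than a conceptual leap: correctly extracting the hyperplane arrangement from the facet structure of the polytopes (handling coincident facet hyperplanes and the dummy element $g$), and checking that restricting to the universe $X$ is faithfully implemented by the single trunk $\tk_{\cE}(T_0)$ while the morphism $\tau\mapsto\{i\mid T_i\subseteq\tau\}$, restricted to that trunk, has image exactly $\scC$ --- no spurious and no missing codewords. The closed case adds the wrinkle, already flagged in the surrounding text, that one must work with the affine oriented matroid $(\cM,g)$ rather than with $\cM$.
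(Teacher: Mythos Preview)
Your proposal is correct and follows essentially the same approach as the paper: both directions use Lemma~\ref{lem:int_closed} for the intersection-closed family of polytopes, and for the converse both extract the facet hyperplanes of the $P_i$ and of $X$, pass to the trunk cut out by the half-spaces defining $X$, and then apply the morphism determined by the trunks $\tk(\{i1,\ldots,ik_i\})$ to recover $\scC$. Your treatment is slightly more explicit in places (the bounding box in the easy direction, the handling of the element $g$, and the remark that the closed case must stay at the affine level), but the argument is the same.
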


\begin{proof}


	($\Rightarrow$)
	Note that if $\cM$ is a representable (affine) oriented matroid, then $\sfL^+ (\cM)$ and  $\sfL^+ (\cM, g)$ can be realized with open half-spaces, and  $\sfL^{\geq} (\cM)$  and  $\sfL^{\geq} (\cM, g)$ can be realized with closed half-spaces. Then by Theorem \ref{lem:int_closed}, any code $\cC \leq \sfL^+(\cM)$ or $ \cC \leq \sfL^+(\cM, g)$ has a realization with intersections of open half-spaces, i.e. open convex polytopes, and any code $\cC \leq \sfL^{\geq}(\cM)$ or $\cC \leq \sfL^{\geq}(\cM, g)$ has a realization with intersections of closed half-spaces, i.e closed convex polytopes. 
	
($\Leftarrow$)
Let $\scC$ be a polytope convex code with $(\cV,X)$ a realization of $\scC$ with open (closed) convex polytopes $V_i$ and bounding convex set $X$.
Without loss of generality, we can choose $X$ to be a convex polytope (the convex hull of one point in each atom).
Then each $V_i$ is the intersection of a collection of open (closed) half spaces $H_{i1}^+, \ldots, H_{ik_i}^+$, and $X$ is the intersection of open half spaces $K_1^+, \ldots, K_k^+$.
Now, let $\scH= \code(H_{i1}^+, \ldots, H_{ik_i}^+, \ldots, K_1^+, \ldots, K_k^+.\}, \R^d)$. 
Notice that in the open case, $\scH = \sfL^+(\cM, g)$ and in the closed case, $\scH = \sfL^\geq(\cM, g)$ for some representable affine oriented $(\cM, g)$.   
Let $\scH'$ be the trunk of the neurons associated to $K_1^+, \ldots, K_k^+$.

	Now, we define a surjective morphism $f: \scH' \to \scC$ as follows.
	Choose trunks $T_1, \ldots, T_n$ of $\scH'$ by $T_i = \tk_{\scH'}(\{i1,\dots,ik_i\})$.
	Let $f$ be the morphism defined by the trunks $T_1, \ldots, T_n$.
	We now show that its image is $\scC$.  
	
	To do this, construct the realization of $f(\scH')$ given in the proof of \ref{lem:int_closed}. This construction gives the realization
	\begin{align*}
	V_j' = \bigcap_{i = 1}^{i = k_j}U_{ji}
	\end{align*}
	relative to the convex set $X = \bigcap_{i= 1}^k X_i$. 
	Thus, $f(\scH') = \code(\{V_1, \ldots, V_n\}, X) = \scC$.
	Thus, we have shown that any open (closed) polytope  convex code lies below the open (closed) polytope convex code of an affine oriented matroid in $\pcode$.
Further, in the open case, the code of an affine oriented matroid  $(\cM, g)$ is a trunk of the code of the oriented matroid $\cM$. Thus, any convex open polytope code lies below the open code of an oriented matroid in $\pcode$. 
\end{proof}

For the remained of this chapter, we focus on open convex codes and $\sfL^+(\cM)$. 
We begin by noting that codes below oriented matroids have no local obstructions. This result appears in different language in \cite{edelman2002convex}. We flesh this out. 

\begin{prop} \label{prop:no_local}
  Let $\cM$ be an oriented matroid.
  If $\scC \leq \sfL^+( \cM)$ in $\pcode$, $\scC$ is a good cover code, and thus has no local obstructions. 
\end{prop}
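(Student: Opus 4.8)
The plan is to reduce to Corollary \ref{thm:good_cover}, which says that codes with no local obstructions form a down-set in $\pcode$. Since we are given $\scC \leq \sfL^+(\cM)$, it suffices to show that $\sfL^+(\cM)$ itself is a good cover code — then $\scC$, lying below it in $\pcode$, is also a good cover code, and by Theorem \ref{thm:no_local} (the Giusti--Itskov theorem) it has no local obstructions. So the entire content is: \emph{the open code of an oriented matroid is a good cover code.}

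First I would handle the representable case directly. If $\cM = \cM(\cH)$ for a hyperplane arrangement $\cH = \{H_1,\ldots,H_n\}$ in $\R^d$, then by the identification recalled in Section \ref{sec:oriented_matroid_intro}, $\sfL^+(\cM) = \code(H_1^+,\ldots,H_n^+)$; the open half-spaces $H_i^+$ are convex open sets, so this exhibits $\sfL^+(\cM)$ as a convex code, hence a good cover code. For a \emph{non-representable} $\cM$, the representation by actual hyperplanes is unavailable, so here I would invoke the topological representation theorem (Folkman--Lawrence, cited in the excerpt as \cite{folkman1978oriented}): every oriented matroid of rank $r$ is represented by an arrangement of pseudospheres in $\bS^{r-1}$, equivalently by an arrangement of pseudo-hyperplanes. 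Each pseudo-hyperplane divides the ambient sphere (or affine chart) into two open pseudo-half-spaces, and the positive open pseudo-half-spaces $P_1^+,\ldots,P_n^+$ have $\code(P_1^+,\ldots,P_n^+) = \sfL^+(\cM)$ by construction of the covectors from the sign-vector labels of the cells. So it remains to check that $\{P_1^+,\ldots,P_n^+\}$ is a good cover, i.e. that every intersection $P_\sigma^+ = \bigcap_{i\in\sigma} P_i^+$ is empty or contractible. This is exactly the cell structure of the pseudosphere arrangement: $P_\sigma^+$ is an open union of cells, each of which is an open ball, glued along lower-dimensional pseudospheres in the way guaranteed by the topological representation theorem, and such intersections are known to be empty or contractible (they are open cells, or more precisely regular open sets homeomorphic to open balls in the affine picture). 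Citing the structural properties of pseudo-arrangements from Chapter 5 of \cite{bjorner1999oriented}, the intersection of positive sides of pseudo-hyperplanes is either empty or an open, topologically convex (hence contractible) region.

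The main obstacle is precisely this last point: justifying that intersections of positive \emph{pseudo}-half-spaces are contractible, since for genuine half-spaces this is immediate from convexity but in the topological-representation setting one must appeal to the fact that a pseudo-hyperplane arrangement, restricted to an affine chart avoiding the "sphere at infinity," behaves like an arrangement of tame topological hyperplanes whose positive-side intersections are cells. One clean way around this: since $\sfL^+(\cM) \leq \sfL^+(\cM)$ trivially and we only need a \emph{good cover} realization (not a convex one), and since the cells of the pseudosphere arrangement together with their intersections form an intersection-closed family of contractible sets, one can take $\cF$ to be that family and cite Lemma \ref{lem:int_closed} together with Corollary \ref{thm:good_cover} to avoid re-proving contractibility from scratch. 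Finally, I would remark — as the proposition's statement already anticipates — that this recovers in combinatorial language the observation of \cite{edelman2002convex} that oriented matroid covector codes carry no local obstructions, and note that once $\sfL^+(\cM)$ is established as a good cover code, the down-set property (Corollary \ref{thm:good_cover}) immediately propagates the conclusion to every $\scC \leq \sfL^+(\cM)$.
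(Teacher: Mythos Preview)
Your approach is correct but genuinely different from the paper's. The paper does not construct a good cover realization at all: it cites Edelman--Reiner--Welker \cite{edelman2002convex} (their Proposition~11 and Lemma~13) for the purely combinatorial fact that every $\sigma \in \Delta(\sfL^+(\cM)) \setminus \sfL^+(\cM)$ has contractible link, concludes that $\sfL^+(\cM)$ has no local obstructions, and then invokes the Chen--Frick--Shiu equivalence \cite{chen2019neural} to say this is the same as being a good cover code. Corollary~\ref{thm:good_cover} then finishes, exactly as in your last step. Your route instead builds an explicit good cover from the topological representation theorem, which is more geometric and more self-contained (it does not need \cite{edelman2002convex} or the ``no local obstructions $\Leftrightarrow$ good cover'' direction of \cite{chen2019neural}), at the cost of having to justify contractibility of intersections of pseudo-half-spaces.

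On that justification: your acknowledgment of the obstacle is right, but the argument you sketch is vaguer than it needs to be. The clean statement is that for any $\sigma \subseteq [n]$, the sub-arrangement $\{S_i : i \in \sigma\}$ is again a pseudosphere arrangement, and $\bigcap_{i \in \sigma} S_i^+$ is a single tope of that sub-arrangement; topes of pseudosphere arrangements are open PL balls by the regular CW structure established in the topological representation theorem (see \cite[Chapter~5]{bjorner1999oriented}). This gives contractibility directly, without the somewhat circular appeal to Lemma~\ref{lem:int_closed}. One small technicality you should also address: on $\bS^{r-1}$ the zero covector (hence the codeword $\emptyset \in \sfL^+(\cM)$) has no geometric witness, so you should either work with the cone in $\R^r$ or note that adjoining $\emptyset$ does not affect the good cover property since the ambient space is contractible.
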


\begin{proof}
Edelman, Reiner, and Welkder define a simplicial complex $\Delta_{\mathrm{acyclic}}(\cM)$ which is identical to  $\Delta(\sfL^+( \cM))$ \cite{edelman2002convex}. Proposition 11 and Lemma 13 of  \cite{edelman2002convex}  establish that if $\sigma \in  \Delta(\sfL^+( \cM)) \setminus \sfL^+( \cM)$, then $\link_{\sigma}  \Delta(\sfL^+( \cM)) $ is contractible. Thus,  $\sfL^+( \cM)$ has no local obstructions, and is thus a good cover code. 
By Theorem \ref{thm:good_cover}, good cover codes form a down-set in $\pcode$, so if $\scC \leq \sfL^+( \cM)$ in $\pcode$, then  $\scC$ has no local obstructions. 
\end{proof}

This result suggests an analogy: the relationship between good cover codes and convex codes mirrors that between oriented matroids and representable oriented matroids.  
This analogy suggests that any good cover code is the image of an oriented matroid. 
However, this is not true: in the next chapter, we give an example of non-convex good cover code which is not the  image of any oriented matroid, representable or otherwise.


\section{Non-convex codes} 
\label{sec:non_convex}
Though it is unknown whether every convex code has a realization with convex polytopes, the contrapositive to Theorem \ref{thm:polytope_matroid} helps us characterize non-convex codes. 
If $\scC$ is not convex, one of two possibilities hold: either $\scC$ does not lie below any oriented matroid,
or $\scC$ lies below only non-representable oriented matroids in $\pcode$.
In this section, we prove that codes with local obstructions as well as
``sunflower codes''
do not lie below {\em any} oriented matroids. We also construct a new class of
non-convex codes which lie below non-representable oriented matroids.

%

\subsection{Sunflower codes do not lie below oriented matroids}

%
%
Recall from Chapter \ref{sec:codes} the non convex code \[ \scC_1 = \{\emptyset, 123, 13, 134, 14, 145, 23, 2345, 3, 34, 4, 45\}. \]
In \cite{jeffs2020morphisms}, Jeffs uses this code to construct a smaller non-convex code $\scC_2 \leq \scC_1$ with no local obstructions,
\[ \scC_2 = \{\emptyset, 1236, 13, 135, 23,  234, 4, 456, 5, 6  \}. \]
This code  is minimally non-convex, in the sense that any code $\scC' \leq \scC_2$ in $\pcode$ is convex.
In  \cite{jeffs2019sunflowers}, 
Jeffs uses the sunflower theorem, our Theorem \ref{thm:sunflower} to construct an infinite family $\{\scC_n\}_{n\geq 2}$ of minimally non-convex codes with no local obstructions generalizing $\scC_2$; we refer to these as ``sunflower codes.'' In the rest of this subsection, we define the code $\scC_n$ for $n\geq 2$ and give a proof that for all $n\geq 2$, the code $\scC_n$ does not lie below any oriented matroid, representable or otherwise. 

\begin{defn}[\cite{jeffs2019sunflowers}, Definition 4.1]
  Let $n \geq 2$, $P = \{p_1,\ldots,p_{n+1}\}$ and $S = \{s_1,\ldots,s_{n+1}\}$ be sets of size $n+1$.
Denote by $\scC_n \subseteq 2^{P\cup S}$ the code that consists of the following codewords: 
\begin{itemize}
	\item $\varnothing$; 
	\item $S \cup \{p_{n+1}\}$; 
	\item $P$;
	\item the codeword $X \cup \{s_{n+1}\}$  for each $\varnothing \subsetneq X \subsetneq \{s_1,\dots,s_n\}$;
	\item the codewords $\{p_i\}$ for each $1 \leq i \leq n+1$;
	\item and $(S \setminus \{s_i\}) \cup \{p_i\}$ for each $1 \leq i \leq n$. 
  \end{itemize}
  A good-cover realization of $\cC_2$ is given in Figure \ref{fig:c2}. 
\end{defn}

We will refer to the regions indexed by $P$ as {\em petals},
and the regions indexed by $S$ as {\em simplices}.
\begin{figure} [ht!]
\begin{center}
	\includegraphics[width = 3 in]{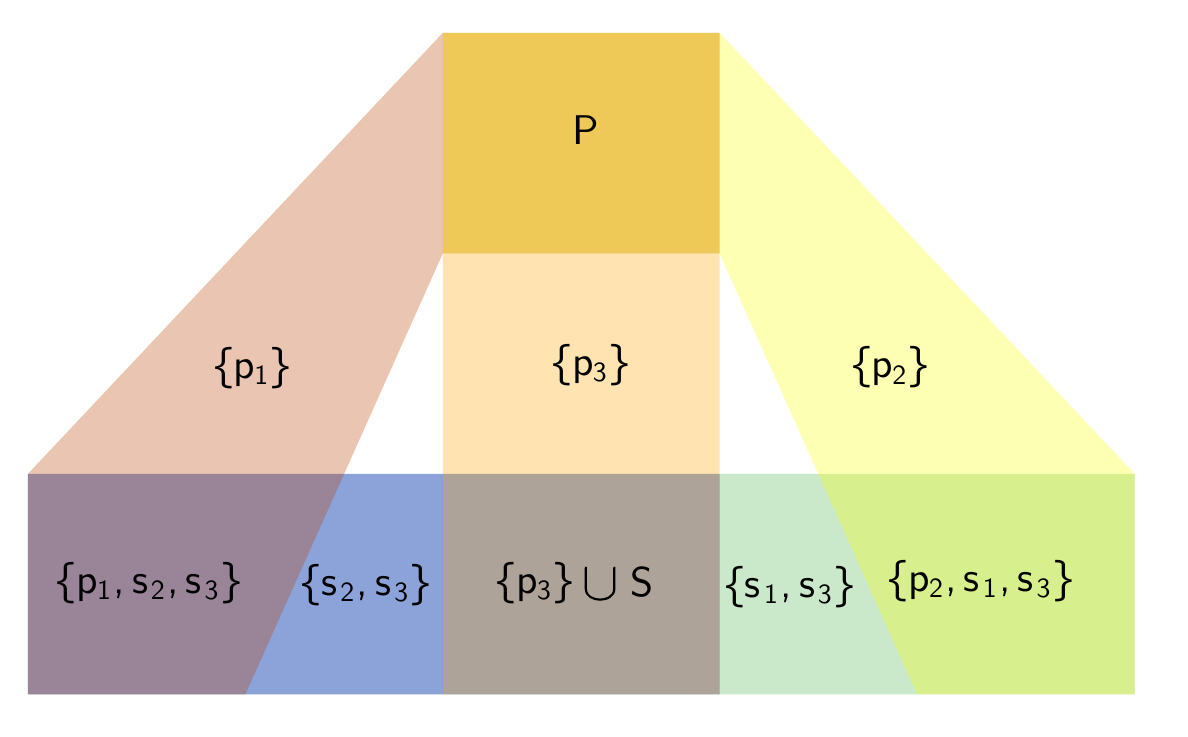}
	\end{center}
	\caption[A good cover realization of $\scC_2$]{A good cover realization of $\scC_2~=~\{\emptyset, 23, 13, 4, 5, 6, 234, 135,  1236, 456\}.$ Here $P = \{1,2,3\}$ and $S = \{4,5,6\}$. \label{fig:c2}}
\end{figure}

The proof of  Theorem \ref{thm:bad_sunflower} depends on some basic facts about tope graphs of oriented matroids. The \emph{tope graph} $\cT$ of an oriented matroid $\cM$ is a graph whose vertices are the topes of $\cM$, and whose edges connect pairs of topes which differ by one sign. A subgraph $\mathcal Q \subseteq \cT$ is called $T$-convex if it contains the shortest path between any two of its members. Any $e\in E$ divides the tope graph into two \emph{half-spaces} $\cT_e^+ = \{W\in \cW \mid  e\in W^+\}$ and $\cT_e^- = \{W\in \cW \mid  e\in W^-\}$.
A subgraph $\cQ \subseteq \cT$ is $T$-convex if and only if it is an intersection of half-spaces \cite[Proposition 4.2.6]{bjorner1999oriented}.

\begin{ithm}
	For each $n \geq 2$, the code $\scC_n \not \leq \sfL^+( \cM)$ for any oriented matroid $\cM$.
\end{ithm}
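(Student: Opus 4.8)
The plan is to argue by contradiction. Suppose $\scC_n\le\sfL^+(\cM)$ for some oriented matroid $\cM$; I will produce a configuration of $T$-convex subgraphs in the tope graph $\cT$ of $\cM$ that cannot exist.

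First I would make the minor relation concrete. By the topological representation theorem, $\sfL^+(\cM)=\code(\{H_e^+\}_e,\bS^{\rank\cM-1})$, the code of the open positive pseudo-hemispheres of a pseudosphere representation of $\cM$; thus $\sfL^+(\cM)$ is $\cF$-realizable for $\cF$ the intersection-closed family generated by the $H_e^+$. By Lemma \ref{lem:int_closed}, $\scC_n$ is then $\cF$-realizable as well: there are intersections of positive hemispheres $V_1,\dots,V_{2n+2}$ and $X$ with $\code(\{V_i\},X)=\scC_n$. Because the $H_e^+$ are \emph{open}, every atom of this realization is open and hence contains a top cell, so the code is determined by topes alone: with $\cT_0=\{W\in\cW(\cM):W\subseteq X\}$ and $Q_i=\{W:W\subseteq V_i\cap X\}$ (identifying a tope with its top cell), each of $\cT_0,Q_1,\dots,Q_{2n+2}$ is an intersection of half-spaces $\cT_e^+$, hence $T$-convex by \cite[Prop.~4.2.6]{bjorner1999oriented}, and $\code(\{Q_i\},\cT_0)=\scC_n$. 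Write $Q_{p_1},\dots,Q_{p_{n+1}}$ and $Q_{s_1},\dots,Q_{s_{n+1}}$ for the petal and simplex subgraphs.

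Next I would read off the features of $\scC_n$ I intend to use, each immediate from the codeword list: (i) the petals form a sunflower, $Q_{p_i}\cap Q_{p_j}=Q_P\ne\varnothing$ for all $i\ne j$, where $Q_P$ is the set of topes realizing the maximal codeword $P$; (ii) for $1\le i\le n$ the petal $Q_{p_i}$ avoids its matched simplex, $Q_{p_i}\cap Q_{s_i}=\varnothing$, while $Q_{p_i}\cap Q_{s_j}\ne\varnothing$ for $j\ne i$ and $Q_{p_{n+1}}\cap Q_{s_j}\ne\varnothing$ for all $j$; (iii) every nonempty sub-intersection $\bigcap_{i\in X}Q_{s_i}$ of the simplices is nonempty. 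I would also fix designated topes to work with: $z\in Q_P$ (so $z\in\bigcap_iQ_{p_i}$ and $z\notin\bigcup_jQ_{s_j}$), a tope $a_i$ realizing $(S\setminus\{s_i\})\cup\{p_i\}$ for $1\le i\le n$ (so $a_i\in Q_{p_i}\cap\bigcap_{j\ne i}Q_{s_j}$, $a_i\notin Q_{s_i}$, $a_i\notin Q_{p_k}$ for $k\ne i$), and a tope $b$ realizing $S\cup\{p_{n+1}\}$ (so $b\in Q_{p_{n+1}}\cap\bigcap_jQ_{s_j}$ and $b\notin Q_{p_k}$ for $k\le n$).

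Finally I would derive a contradiction from (i)--(iii) using the two basic facts about $T$-convex subgraphs: they are intersections of half-spaces, and they contain the geodesic interval between any two of their vertices. The natural route is a tope-graph analogue of the sunflower theorem \ref{thm:sunflower}: property (iii) should force the rank of $\cM$ to be large enough that the $n+1$ petals constitute a genuine sunflower in that rank, after which one locates a suitable restriction (or a $T$-convex subgraph) meeting every $Q_{p_i}$ but missing the core $Q_P$, contradicting the sunflower bound. Concretely the plan is to chase geodesics among $z,a_1,\dots,a_n,b$: the interval $[z,a_i]\subseteq Q_{p_i}$ avoids $Q_{s_i}$, the intervals among the $a_j$'s and $b$ lie in overlapping families $Q_{s_i}$, and the sunflower relation (i) glues the petals only along $Q_P$; tracking which half-spaces $\cT_e^\pm$ these geodesics cross should force a tope into some $Q_{p_i}\cap Q_{s_i}$, contradicting (ii). I expect Step~3 to be essentially the whole difficulty --- either proving the required sunflower-type lemma for tope graphs outright, or replacing it with an ad hoc geodesic argument tailored to $z,a_1,\dots,a_n,b$ --- whereas Steps~1 and~2 are bookkeeping assembled from the topological representation theorem, Proposition \ref{prop:trunks_morph}, and Lemma \ref{lem:int_closed}.
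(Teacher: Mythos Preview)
Your Steps~1 and~2 are fine and match the paper's setup: translating $\scC_n\le\sfL^+(\cM)$ into a family of $T$-convex subgraphs $Q_{p_i},Q_{s_j}$ of the tope graph (equivalently, trunks $\tk(\pi_i),\tk(\sigma_j)$) is exactly how the paper begins. The gap is that Step~3 is the entire proof, and you have not supplied it. Saying you expect it to be ``essentially the whole difficulty'' and sketching two possible strategies is not a proof, and in fact neither strategy as stated reaches the key idea.

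The paper's argument has two ingredients you are missing. First, and most importantly, it constructs an \emph{auxiliary morphism} $g:\sfL^+(\cM)\to 2^{[n+1]}$ using the trunks $\tau_i=\sigma_i\cup\bigl(\bigcap_{j=1}^n\pi_j\bigr)$, and then applies Proposition~\ref{prop:no_local}: the image $\scD=g(\sfL^+(\cM))$ has no local obstructions. Since each $[n+1]\setminus\{i\}\in\scD$ for $i\le n$ (coming from the codewords $(S\setminus\{s_i\})\cup\{p_i\}$) but $\{n+1\}\notin\scD$, contractibility of $\link_{\{n+1\}}(\Delta(\scD))$ forces $[n+1]\in\scD$. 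This produces a tope $T$ with $\bigl(\bigcup_i\sigma_i\bigr)\cup\bigl(\bigcap_j\pi_j\bigr)\cup\pi_{n+1}\subseteq T^+$. Second, a shortest-path argument in the tope graph (this is the part your ``sunflower for tope graphs'' intuition is groping toward) shows that any tope with $\pi_{n+1}\cup\bigl(\bigcap_j\pi_j\bigr)\subseteq T^+$ must in fact have $\bigcup_j\pi_j\subseteq T^+$. Together these yield a tope with $f(T^+)\supseteq P\cup S$, contradicting $P\cup S\notin\scC_n$.

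Your geodesic-chasing plan among $z,a_1,\dots,a_n,b$ cannot manufacture the tope from the first step: that tope need not lie on any geodesic between your designated topes, and its existence genuinely requires the global no-local-obstructions property of oriented matroid codes, not just $T$-convexity of the individual $Q$'s.
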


\begin{proof}
Fix $n \geq 2$.
Suppose to the contrary that there is an oriented matroid $\cM$ such that $\scC_n \leq \sfL^+( \cM)$.
For ease of notation, let $\scM$ denote the code $\sfL^+( \cM)$.
Since $\emptyset \in \scC_n$, we can assume without loss of generality that  $\scC_n = f(\scM)$  for some code morphism $f$.

Denote the ground set of $\cM$ by $E$. The map $f$ must be defined by
trunks
	 \[\tk_{\scM}(\pi_1), \dots, \tk_{\scM}(\pi_{n+1}), \tk_{\scM}(\sigma_{1}), \ldots,\tk_{\scM}(\sigma_{n+1}),\]
with $\pi_i,\sigma_i \subseteq E$ corresponding to $p_i$ and $s_i$ respectively. 

\vspace{3mm}
\noindent  {\bf Claim 1:} There is a tope $T$ of $\cM$ such that
    $\left(\bigcup_{i=1}^{n+1} \sigma_i\right)\cup \left(\bigcap_{j=1}^{n} \pi_j\right)\cup \pi_{n+1} \subseteq T^+$.\\
  Roughly speaking, we are producing a codeword in the intersection of the
  last petal and all simplices, which also lies in the convex hull of the other petals.
  
Define a morphism $g:\scM \to 2^{[n+1]}$ by the trunks $T_i = \tk_{\scM}(\tau_i)$, with 
$\tau_i = \sigma_i \cup \left(\,\,\bigcap_{j = 1}^{n}{ \pi_j}\,\right)$ for $i = 1, \ldots, n+1$.
Let $\scD = g(\scM)$.

Since $(S \setminus \{s_i\} )\cup \{p_i\} \in \scC_n$ for each $i \in [n]$, we deduce that $[n+1]\setminus i$ is a codeword of $\scD$ for each $i\in [n]$. Thus, $\link_{\{n+1\}}(\Delta(\scD))$ is either a hollow $(n-1)$-simplex or a solid $(n-1)$- simplex. Since we have defined $\scD$ as the image of an oriented matroid code, it cannot have local obstructions.
The codeword $\{n+1\}$ is not in $\scD$; if it were, then $f(g^{-1}(\{n+1\}))$
would be a codeword of $\scC$ including $s_{n+1}$ without any other $s_i$. No such codeword
exists in $\scC$.
Thus $\link_{\{n+1\}}(\Delta(\scD))$ must be contractible. Because $\{n+1\}$ is not a codeword of $\scD$,  the $\link_{\{n+1\}}(\Delta(\scD))$ must be a solid $(n-1)$-simplex; therefore, $[n+1]$ is a codeword of $\scD$. 

Based on the trunks defining $g$, we know that
$\left(\bigcup_{i =1}^{n+1}\sigma_i\right) \cup \left( \bigcap_{j = 1}^{n} \pi_j \right)\subseteq g^{-1}([n+1])$.
By definition of $f$, we must also have $S \subseteq f(g^{-1}([n+1]))$; however, the only codeword of $\scC_n$ which contains $S$ is $S \cup \{p_{n+1}\} $.
Thus, there is a codeword of $\scM$ containing $\left(\bigcup_{i =1}^{n+1}\sigma_i\right)
\cup \left(\bigcap_{j = 1}^{n} \pi_j \right)\cup \pi_{n+1}$. This implies that
$\cM$ has a covector $X$ such that 
 $\left(\bigcup_{i =1}^{n+1}\sigma_i\right)
\cup \left(\bigcap_{j = 1}^{n} \pi_j \right) \cup \pi_{n+1} \subseteq X^+$. To produce a tope satisfying the condition, take $T = X\circ W$ for any tope $W$ of $\cM$.

\vspace{3mm}
  
\noindent  {\bf Claim 2:}
$ \pi_{n+1} \cup \left( \,\,\bigcap_{ j=1}^{n}\,\pi_j\right) \subseteq T^+ $   implies $\,\,\bigcup_{ j=1}^{n+1}\,\pi_j \subseteq T^+ $ for any tope $T$ of $\cM$.\\
The intuition here is that the last petal must intersect the convex
hull of the other petals {\em only} in the common intersection of all petals, as illustrated in Figure \ref{fig:crossing }. 

\begin{figure}
\begin{center}
\includegraphics[width = 2 in]{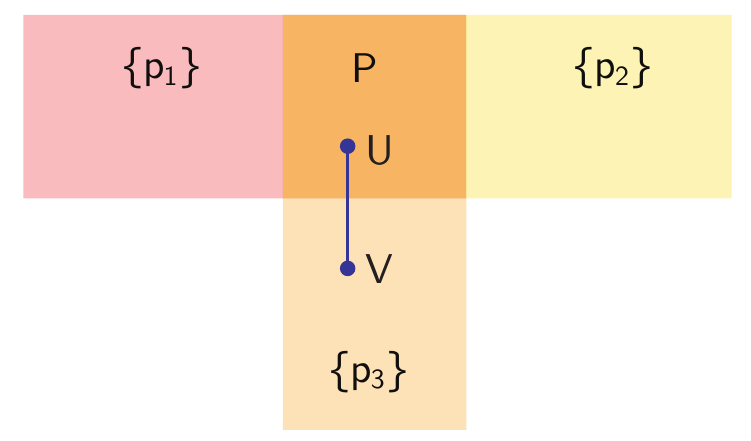}
\caption[A step from the proof of Theorem \ref{thm:bad_sunflower}]{Any path from a tope $U$ with $\left(\bigcup_{ j=1}^{n+1}\,\pi_j\right) \subseteq U^+$ to a tope $V$ with $\left(\bigcup_{ j=1}^{n+1}\,\pi_j\right) \not\subseteq V^+$ must cross an edge in  $\left(\bigcap_{ j=1}^{n+1}\,\pi_j\right)$. Analogously, a path from a point in the atom $P = \{p_1, p_2, \ldots, p_{n+1}\}$ to the atom $\{p_{n+1}\}$ must cross the boundaries of $p_1, p_2, \ldots, p_{n}$ all at one time.  \label{fig:crossing }}
\end{center}
\end{figure}

Let $U$ be a tope with $\left(\bigcup_{ j=1}^{n+1}\,\pi_j\right) \subseteq U^+$. Such a tope must exist, since $P \in \scC_n$.   Suppose for the sake of contradiction that there exists a tope $V$ such that 
\[
\pi_{n+1} \cup \left( \,\,\bigcap_{ j=1}^{n+1}\,\pi_j\right) \subseteq V^+, \text{   but   }\bigcup_{ j=1}^{n+1}\,\pi_j \not \subseteq  V^+.
\]
Consider a shortest path from $U$ to $V$ in the tope graph of $\cM$.  Each edge of the tope graph is naturally labeled by the ground set element $e$ by which the two incident topes differ.  By the $T$-convexity of intersections of half-spaces in the tope graph, each tope along this path has
$\pi_{n+1} \cup \left( \,\bigcap_{ j= 1}^{n}\,\pi_j\right) $ in its positive part, so no edge is labeled with an element of $\bigcap_{ j=1}^{n}\,\pi_j$. 

Thus at some point along the path from $U$ to $V$, we must cross an edge $(T, W)$ labeled by a ground set element 
$e \in
\left(\,\,\bigcup_{ j=1}^{n+1}\,\pi_j \right) \setminus
\left(\,\,\bigcap_{ j=1}^{n+1}\,\pi_j \right)$.
Choose the first such edge  $(T, W)$ labeled with ground set element $e$. By our choice of $e$, there exist
$k, \ell \in [n]$ such that $e\in \pi_k$, and $e\notin \pi_\ell$.
This means $\pi_k \not\subseteq W^+$, whereas $\pi_\ell \subseteq W^+$. Then
$\{p_\ell, p_{n+1}\} \subseteq f(W^+)$, but $f(W^+) \neq P$. However, the only codeword of $\scC_n$ containing $\{p_\ell, p_{n+1}\}$ is $P$, so we have reached a contradiction.
Therefore, no such tope $V$ may exist.

\vspace{6mm}

\noindent By Claim 1, $\cM$ must have a tope $T$ which has
$\left(\bigcup_{i=1}^{n+1} \sigma_i\right)\cup \left(\bigcup_{j=1}^{n} \pi_1\right)  \cup \pi_{n+1} \subseteq T^+$.
Because $T$ satisfies
$\left(\bigcap_{i=1}^{n+1} \pi_i\right) \cup \pi_{n+1} \subseteq T^+$, Claim 2 implies
that $\bigcup_{i=1}^{n+1} \pi_i \subseteq T^+$.
Therefore,
$\left(\bigcup_{i=1}^{n+1} \pi_i\right) \cup \left(\bigcup_{i=1}^{n+1} \sigma_i\right) \subseteq T^+$,
but this implies $f(T) = P\cup S \in \scC_n$, a contradiction. 
\end{proof}

By showing that the family of codes $\{\scC_n\}_{n\geq 2}$ do not lie below oriented matroids, we have given an alternate proof that these codes do not have realizations with interiors of convex polytopes.
 This proof is significantly different in structure than the original proof that these codes are not convex using Theorem \ref{thm:sunflower}, which is in turn proved by induction on dimension. 
 In contrast, our proof makes no reference to rank or dimension, and does not use induction. 
 While the codes $\{\scC_n\}_n$ are not open convex, they do have realizations with \emph{closed} convex sets, which can even be chosen to be (non-full dimensional) closed convex polytopes. 
 Notice that Theorem  \ref{thm:polytope_matroid}  establishes that if $\scC$ has a  realization with \emph{interiors} of convex polytopes, then  $\scC \leq \sfL^+( \cM)$. 
 However, the fact that a code has a realization with closed convex polytopes does not guarantee this. 
 Further, in showing that these codes do not lie below any oriented matroids at all, we have established that, even while these codes are good cover codes, their obstructions to convexity are somehow still topological in nature. 

\subsection{Representability and convexity}
	Having exhibited that many well-known non-convex codes do not lie below any oriented matroids at all, we now exhibit a family of non-convex codes which lie below non-representable oriented matroids.
	For each uniform, rank 3 affine oriented matroid $(\cM, g)$, we construct a code $\scC(\cM, g)$ which is convex if and only if $\cM$ is representable (recall a uniform oriented matroid is one in which all circuits have the same cardinality).
	Moreover, this code is always the image of an oriented matroid under a code morphism.

	
	Consider a uniform, affine oriented matroid $(\cM, g)$ of rank 3.
	A \emph{pseudoline} is a simple unbounded curve $L$ in $\R^2$, which partitions the plane into pieces $\R^2 = L^+ \sqcup L \sqcup L^-$.
	By the topological representation theorem (\cite[Section 1.3]{bjorner1999oriented},\cite{folkman1978oriented}),
	$(\cM, g)$ can be represented by a uniform arrangement of piecewise linear pseudolines, that is, a family $\cP = \{L_i\}_{i\in[n]}$ of pseudolines such that each pair intersects exactly once and no more than two meet at any point.
	The sign vectors of this arrangement are the covectors of $(\cM,g)$.
	An example is illustrated in \ref{F:pseudoline}.

	\begin{figure}
		\includegraphics[width = 6 in] {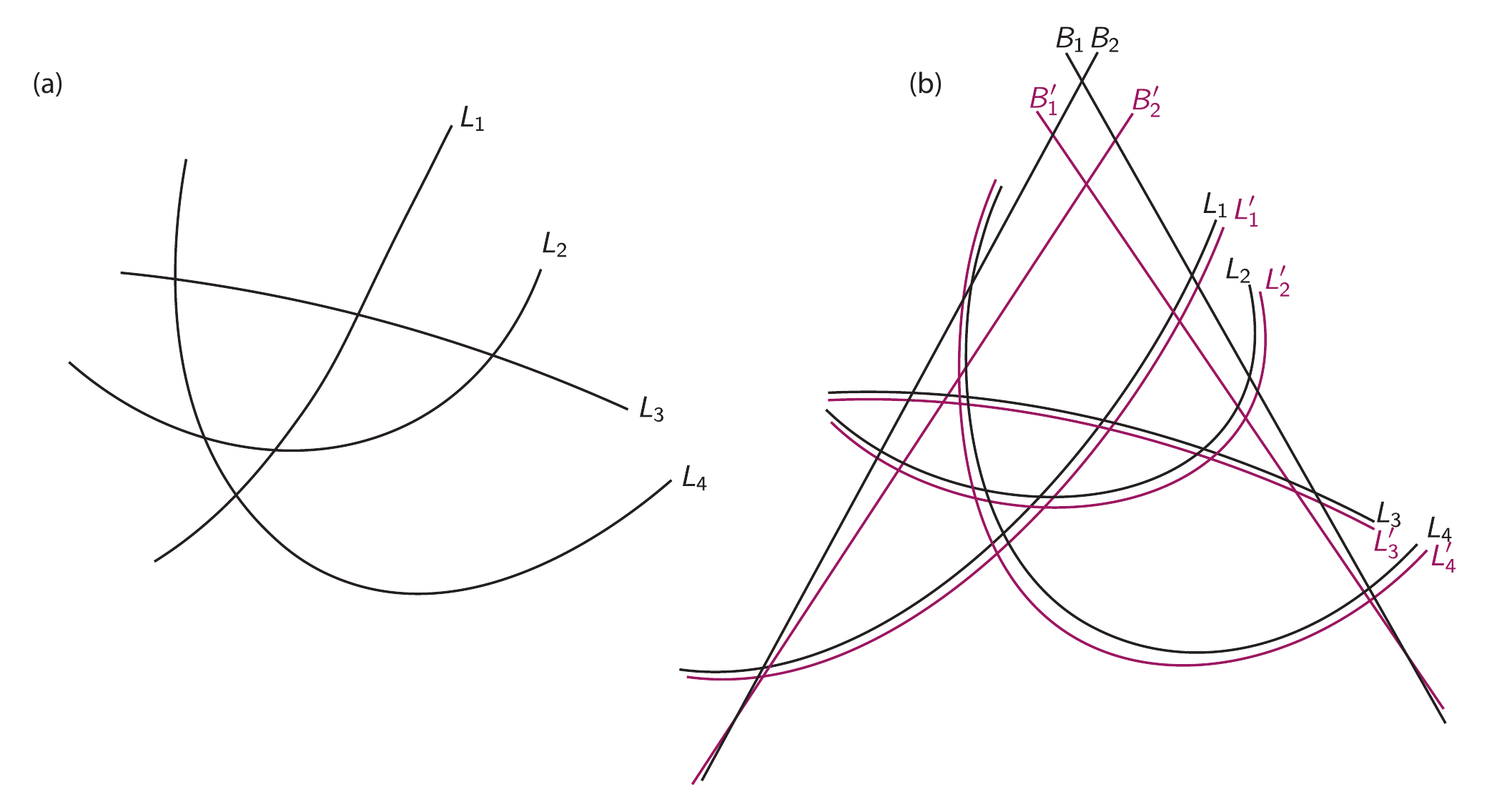}
		\caption[Pseudoline arrangements for the  proof of Proposition \ref{prop:img_of_matroid}. ]{(a) An arrangement of four pseudolines corresponding to an affine oriented matroid $(\cM, g)$. (b) The augmented pseudoline arrangement $\cM'$ used in the proof of Proposition \ref{prop:img_of_matroid}.  }
		\label{F:pseudoline}
	\end{figure}

	Note that the oriented matroid of a pseudoline arrangement is completely determined by the order in which each line meets all of the other lines. We can record this information as follows: 
	Let $L_1, \ldots, L_n$ be a pseudoline arrangement.
	For each pseudoline, fix one end of the pseudoline as the ``head".
	Let $\pi_i(j)$ denote the index $k$ such that $L_j$ is the $k^{\mathrm{th}}$ pseudoline we encounter as we follow $L_i$ from the head to the tail. 
	
	We use this order to define a code $\scC(\cM, g)$.
	We will then use the concept of order-forcing, introduced in the previous chapter, to prove that this code is convex if and only if $\cM$ is representable. 
	
	Now, we construct the code $\scC(\cM, g)$
	so that a sequence of codewords along each pseudoline is order-forced. 

	\begin{defn}
		Let $(\cM, g)$ be a uniform, affine oriented matroid of rank 3 with pseudoline arrangement $L_1, \ldots, L_n$.
		Relabel the pseudolines $L_1, \ldots, L_n$, with their heads in clockwise order around the outside of the plane.
		An example is illustrated in Figure \ref{F:pseudoline}. \\
		
		$\scC(\cM, g)$ is a code on $n+ 2 + n^2 +2n = (n+1)(n+2)$ neurons, labeled: \\
		
		\begin{center}
		\begin{longtable}{@{} p{0.3\textwidth}  p{0.65\textwidth} @{}} 
			$a_1,\ldots, a_n$: & Strips corresponding to each pseudoline of $\cM$.\\ 
			$b_\ell, b_r$: & Strips corresponding to two new ``boundary'' pseudolines whose positive quadrant includes all pseudoline intersections.\\ 
			$c_{1,1},c_{1,2},\ldots,c_{n,n}$: & $n$ neurons along each $a_i$ to apply order-forcing.\\ 
			$d_{\ell,1},d_{\ell,2},\ldots,d_{\ell,n},\ldots,d_{r,n}$: & $n$ neurons along $b_r$ and $b_\ell$ to apply order-forcing.\\ 
		\end{longtable}
		\end{center}
		
		\vspace{5mm}
		
		The codewords of $\scC(\cM, g)$  are as follows: \\
		
		\begin{center}
		\begin{longtable}{@{} p{0.3\textwidth} p{0.65\textwidth} @{}} 
			$b_\ell b_r d_{\ell,1}d_{r,1} $: & Intersection of the two boundary strips.\\ 
			$b_s d_{s,j}$: & Order-forcing along each boundary strip \\ & ($s = \ell,r$, $j = 1, \ldots, n$.)\\ 
			$b_r a_id_{r,i}d_{r,i+1} c_{i,1}$: & Intersection of each pseudoline with {\bf right}
			boundary strip, with order-forcing neurons. ($i = 1, \ldots, n-1$)\\ 
			$b_r a_n d_{r,n}c_{n,1}$: & Intersection of final pseudoline with {\bf right} boundary strip (one less
			order-forcing neuron is required.) \\ 
			$b_\ell  a_{n + 1 -i} d_{\ell,i}d_{\ell,i+1} c_{n+1-i,n}$:  & Intersection of each pseudoline with {\bf left}
			boundary strip plus order-forcing neurons. ($i = 1, \ldots, n-1$.) \\ 
			$b_\ell a_1d_{\ell ,n}c_{1,n}$: & Intersection of first pseudoline with {\bf left} boundary strip.  \\ 
			$a_i c_{ij}$: & Order-forcing along each pseudoline \\ & ($i = 1, \ldots, n$,
                        $j = 1, \ldots, n$) \\ 
			$a_i a_{j} c_{i,\pi_i(j)}c_{i,\pi_i(j)+1}$  & Pairwise intersections of pseudolines plus order-forcing \\ \hfill $c_{j,\pi_j(i)}c_{j,\pi_j(i)+1} $: & ($i = 1, \ldots, n$, $j = 1, \ldots, n-1$.) \\ 
		\end{longtable}
		\end{center}
		
		\vspace{3mm}
		
		We include an example of a good cover realization of this code in Figure \ref{F:pseudoline_code}. Note that this code resembles the code $\cR$ from the previous chapter. 
	\end{defn}
	
	\begin{figure}
		\includegraphics[width=0.45\textwidth]{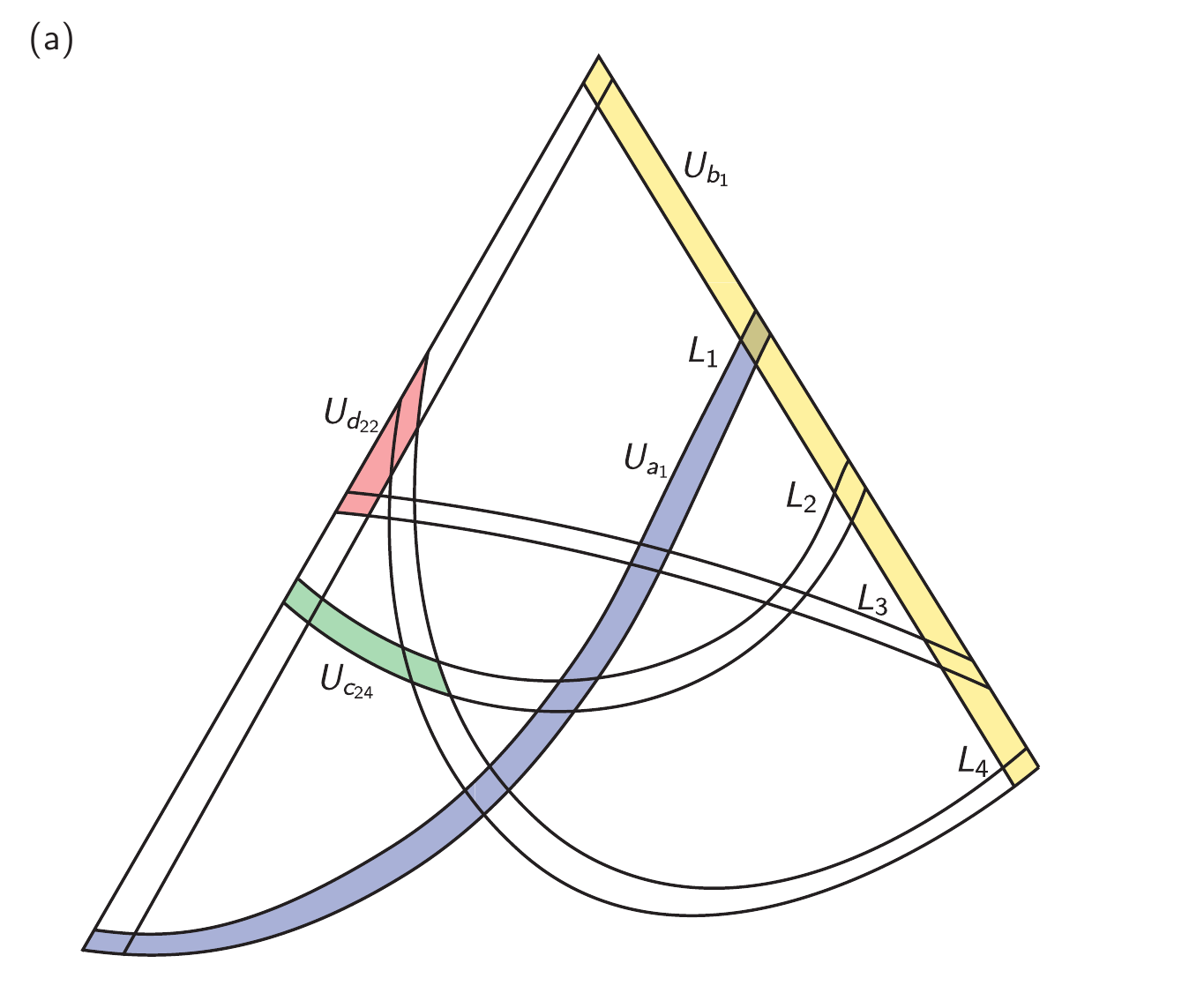}
		\includegraphics[width=0.45\textwidth]{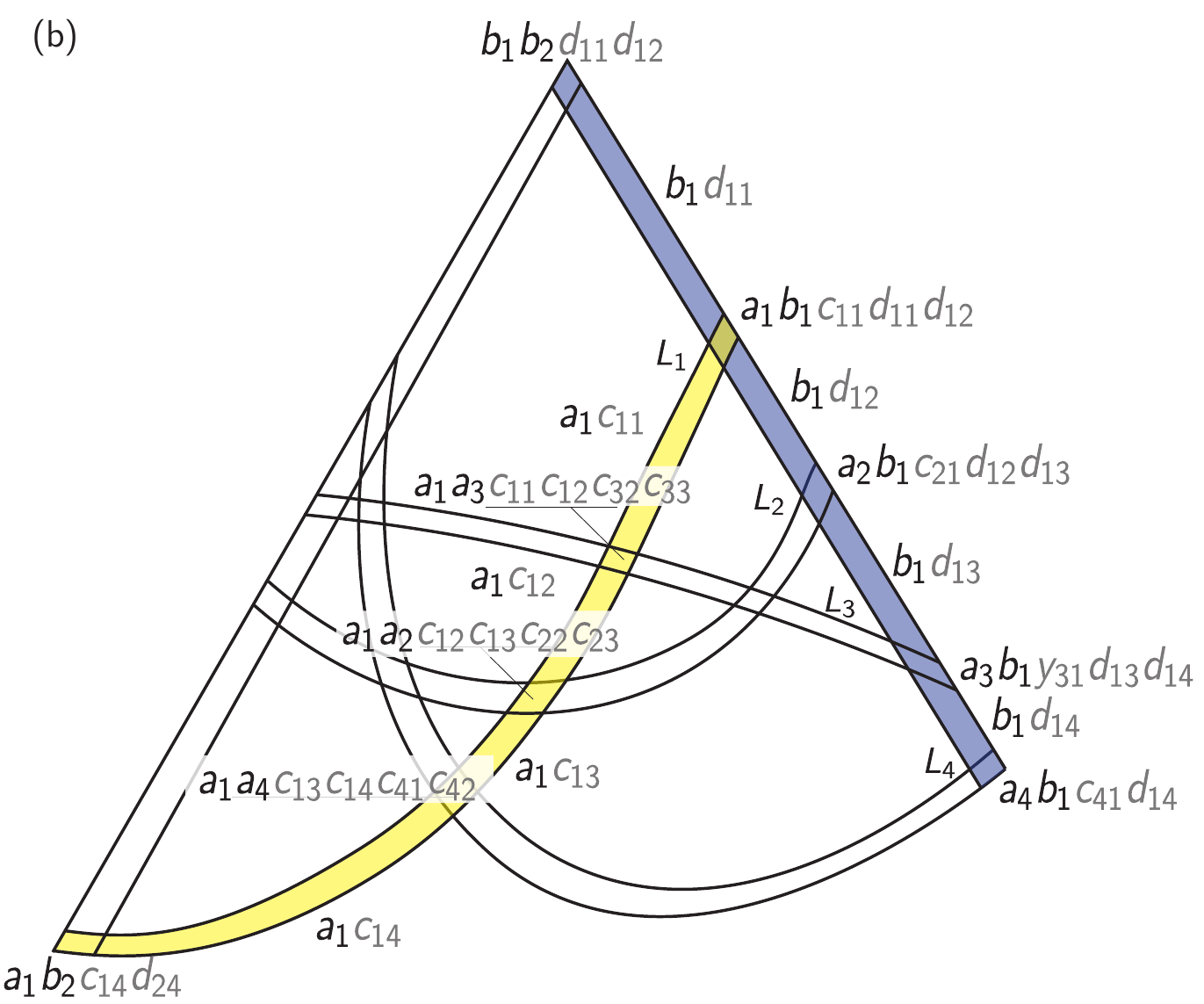}
		\caption[A good-cover realization of the code $\scC(\cM, g)$.]{A good-cover realization of the code $\scC(\cM, g)$ for the oriented matroid of the pseudoline arrangement $L_1, L_2, L_3, L_4$. (a) The sets $U_x$  corresponding to neurons $x$ of $\scC(\cM, g)$. (b) The codewords of $\scC(\cM, g)$. For clarity, only the codewords arising from  $U_{a_1}$ and $U_{b_r}$  are labeled. }
		\label{F:pseudoline_code}
	\end{figure}

	\begin{prop}\label{prop:img_of_matroid}
	For any uniform, rank 3  affine oriented matroid $(\cM, g)$, there exists a rank 3 oriented matroid $\widehat\cM$ such that $\scC(\cM, g) \leq \sfL^+\widehat\cM$.  
	\end{prop}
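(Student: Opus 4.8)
The plan is to build $\widehat\cM$ directly as the oriented matroid of an augmented pseudoline arrangement, and then realize $\scC(\cM,g)$ as a morphism image of a trunk of $\sfL^+(\widehat\cM)$, mimicking the $(\Leftarrow)$ direction of Theorem \ref{thm:polytope_matroid} with a pseudoline arrangement in place of a hyperplane arrangement. We cannot simply invoke Theorem \ref{thm:polytope_matroid}: when $\cM$ is non-representable the code $\scC(\cM,g)$ is not convex, hence not polytope convex, so $\widehat\cM$ must itself be allowed to be non-representable; fortunately nothing in the argument below uses straightness. First I would invoke the topological representation theorem \cite{folkman1978oriented} to fix a simple pseudoline arrangement $\cP=\{L_1,\ldots,L_n\}$ in the plane representing $(\cM,g)$, relabeled so that the heads occur in clockwise order as in Figure \ref{F:pseudoline}.

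Next I would assemble an augmented arrangement $\widehat\cP$ whose positive open half-planes, after prescribed intersections, produce exactly the sets $U_x$ of the good-cover realization in Figure \ref{F:pseudoline_code}(a). Concretely: (i) replace each $L_i$ by two pseudolines running close to either side of $L_i$, so that the strip neuron $a_i$ is the intersection of their two positive half-planes; (ii) add two further pairs of pseudolines cutting out strips $b_\ell,b_r$ whose common positive region contains every crossing of $\cP$; and (iii) along each strip, between consecutive crossings, insert small ``marker'' regions, each an intersection of finitely many positive half-planes of new pseudolines transverse to that strip, realizing the order-forcing neurons $c_{ij}$ and $d_{s,j}$. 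Every incidental crossing created by this process (strip boundary against strip boundary, marker against marker, marker against a distant strip, and so on) can be pushed outside a fixed disk $D$ containing all the relevant features; passing to the projective plane $\R\mathrm P^2$ where needed (so that ``parallel'' strip boundaries still meet exactly once), one arranges that every pair of the new pseudolines meets exactly once and no three are concurrent, so that $\widehat\cP$ is again a simple pseudoline arrangement. Let $\widehat\cM$ be its (rank $3$) oriented matroid.

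Finally, $\sfL^+(\widehat\cM)$ is the code of the positive open half-planes of $\widehat\cP$. Possibly after restricting to a bounding convex region $X$, itself an intersection of finitely many of these half-planes, and passing to the corresponding trunk $\scH'$ of $\sfL^+(\widehat\cM)$, I would define a surjective code morphism $f\colon\scH'\to\scC(\cM,g)$ by the trunks $\tk_{\scH'}(\text{indices of the half-planes whose intersection is }U_x)$ for each neuron $x\in\{a_i,b_\ell,b_r,c_{ij},d_{s,j}\}$, using Proposition \ref{prop:trunks_morph}. By the explicit construction in the proof of Lemma \ref{lem:int_closed}, $f(\scH')$ is realized precisely by the sets $U_x$ (inside $X$), i.e.\ by the realization of Figure \ref{F:pseudoline_code}(a), so $f(\scH')=\scC(\cM,g)$. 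Since both trunks and morphism images are minors, $\scC(\cM,g)\le\sfL^+(\widehat\cM)$, as claimed.

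The main obstacle is the geometric content of steps (ii)--(iii): verifying that the thickening and marker insertions can be carried out so that $\widehat\cP$ is a genuine simple pseudoline arrangement, and so that intersecting its positive half-planes as prescribed yields exactly the codewords of $\scC(\cM,g)$, with nothing extra and nothing missing. This is a careful bookkeeping of where the auxiliary crossings fall, organized by the clockwise head ordering and the boundary strips $b_\ell,b_r$, and it is precisely here that uniformity and rank $3$ enter, since these are exactly the hypotheses under which $\cM$ is represented by a simple pseudoline arrangement to begin with, which keeps the whole augmented picture ``simple.'' Once this verification is in hand, the remaining trunk/morphism bookkeeping is routine.
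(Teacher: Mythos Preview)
Your high-level strategy matches the paper's: represent $(\cM,g)$ by a pseudoline arrangement, augment it, take the oriented matroid $\widehat\cM$ of the augmented arrangement, and then exhibit $\scC(\cM,g)$ as the image of a trunk of $\sfL^+(\widehat\cM)$ via the machinery of Proposition~\ref{prop:trunks_morph} and Lemma~\ref{lem:int_closed}. The reduction to ``code of a good cover built from intersections of pseudo-half-planes'' is exactly right.

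Where you diverge is step~(iii). You propose to realize the order-forcing neurons $c_{ij}$ and $d_{s,j}$ by inserting \emph{new} marker pseudolines transverse to each strip, then pushing all incidental crossings outside a disk $D$. The paper avoids this entirely: it augments only by doubling each $L_i$ to a parallel $L_i'$ (and similarly $B_\ell,B_\ell',B_r,B_r'$), so that $\widehat\cP$ has just $2(n+2)$ pseudolines. The order-forcing neurons are then obtained as intersections of these doubled lines already present --- e.g.\ $T_{c_{ij}}=\{i,i',L_1(i,\pi_i^{-1}(j{-}1)),L_2(i,\pi_i^{-1}(j))\}$, using the half-planes of the $(j{-}1)$-th and $j$-th crossing lines along $L_i$ to carve out the segment of the strip between consecutive crossings. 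No marker pseudolines are needed.

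This difference is exactly what dissolves your ``main obstacle.'' With only parallel translates $L_i'$, verifying that $\widehat\cP$ is a genuine pseudoline arrangement is straightforward (each $L_i'$ meets the others in the same order as $L_i$, adjacent along every other line), and the codeword bookkeeping reduces to checking that the prescribed trunk intersections reproduce the good-cover realization of Figure~\ref{F:pseudoline_code}. Your approach with extra markers could in principle be made to work, but the verification you defer --- that $O(n^2)$ new pseudolines can be threaded through so that all their mutual crossings land outside $D$ while still meeting every other pseudoline exactly once, and that the resulting intersections produce exactly the codewords of $\scC(\cM,g)$ with no spurious overlaps among markers --- is substantially more delicate than what the paper's minimal construction requires. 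The insight you are missing is that the doubled lines already do the work of the markers.
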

	

	\begin{proof}
		We describe the pseudoline arrangement associated to $\widehat\cM$. 
		Fix a piecewise-linear pseudoline arrangement $L_1, \ldots, L_n$ representing $(\cM, g)$ consistent with the labeling in $\scC(\cM, g)$. 
		Let $B_r$ be a line which meets $L_1, L_2, \ldots, L_n$ in the clockwise order consistent with the labeling. 
		Let $B_\ell$ be a line which meets $B_r$ and then $L_n, L_{n-1}, \ldots, L_{1}$ in the opposite of this clockwise order.  
		Orient $B_r$ and $B_\ell$ such that $B_r^+$ and $B_\ell^+$ are the half-spaces containing all bounded cells of the pseudoline arrangement.
		Orient each $L_i$ such that $B_r\cap B_\ell$ lies in $L_i^-$. 
		
		Now, for each $i\in [n]$, we define a pseudoline $L'$ which  acts as a translation of $L_i$ into its positive half-space. 
		That is, we let $L_i'$ be a pseudoline which intersects $B_\ell, B_r$  each $L_j, j\neq i$ in the same order as $L_i$, and such that for each other pseudoline $L$, the intersections of $L_i$ and $L_i'$ are adjacent along $L$. 
		Further, we ensure that $L_i, L_i'$ do not intersect. 
		Orient $L_i'$ so that $L_i \subseteq L_i'^+$.
		Define $B_\ell', B_r'$ and orient them analogously.

		
		Finally, we produce an oriented matroid $\widehat \cM$ from this pseudoline arrangement by fixing a ground set element $h$ such that the set of covectors of the pseudoline arrangement is the affine space of $(\widehat \cM, h)$.  We claim the oriented matroid of this arrangement, $\widehat\cM$, lies above $\scC(\cM, g)$.
		The morphism $f$ such that $f(\tk_{\sfL^+\widehat\cM}(h)) = \scC(\cM, g)$ is defined by the the trunks
		\begin{align*}
		\{T_{a_i}\}_{i = 1, \ldots, n} \cup \{T_{b_r}, T_{b_\ell}\} \cup \{T_{c_{i,j}}\}_{i = 1, \ldots, n, j = 1, \ldots, n} \cup \{T_{d_{s,j}}\}_{s=r,\ell, j = 1, \ldots, n}.
		\end{align*}
		corresponding to the neurons of $\scC(\cM, g)$.
		These trunks are defined as follows:
		\begin{align*}
		T_{a_i} &:= \{i, i', r, \ell\} \mbox{ for  } i = 1, \ldots, n\\
		T_{b_r} &:= \{r, r', \ell, n'\}\\
		T_{b_\ell} &:= \{\ell, \ell', r, n'\}\\
		T_{d_{r,1}} &:= \{r, r', \ell, 1'\}\\
		T_{d_{r,i}} & := \{r, r', {i-1}, {i}'\} \mbox{ for } i = 2, \ldots, n\\
		T_{d_{\ell,1}} &:= \{\ell, \ell', r, n'\}\\
		T_{d_{\ell,i}} &:= \{\ell, \ell', {n-i+2}, {n-i+1}'\} \mbox{ for } i = 2, \ldots, n.
		\end{align*}
		
		In order to define $T_{c_{i,j}}$, we introduce some notation. Let 
		\begin{align*}
		L_1(i, j)  = 
		\begin{cases}
		 j \mbox{ if } j < i\\
		 j'  \mbox{ if } j > i
		\end{cases} \qquad
		L_2(i, j)  = 
		\begin{cases}
		 j' \mbox{ if } j < i\\
		 j  \mbox{ if } j > i\\
		\end{cases}.
		\end{align*}
		That is, $L_1(i, j)$ is whichever of $j, j'$ the line $L_1$ meets first as we follow it from its intersection with $B_r$ to its intersection with $B_\ell$, and $L_2(i,j)$ is whichever it hits second.
		Now, we define
		\begin{align*}
		T_{ c_{i1}} &:= \{i, i', r, L_2(i, \pi(i, 1))\} \mbox{ for } i = 1, \ldots, n\\ 
		T_{ c_{ij}} &:= \{i, i',  L_1(i, \pi(i, j-1)),  L_2(i, \pi(i, j-1))\} \mbox{ for } i = 1, \ldots, n, j = 2, \ldots, n-1\\ 
		T_{ c_{in}} &:= \{i, i',  L_1(i, \pi(i, n-1)), \ell \} \mbox{ for } i = 1, \ldots, n, j = 2, \ldots, n-1.\\ 
		\end{align*}
		Finally, we verify that the map $f(\sigma) = \{ s \mid \sigma\in T_s\}$ has image $\scC(\cM, g)$.
		This follows from the fact that the good cover arising from $\{B_s^+ \cap B_s'^+\}_{s = \ell,r} \cup \{L_i^+ \cap L_i'^+\}_{i = 1, \ldots, n}$ gives rise to a good cover realization of $\scC(\cM, g)$.
		This completes the proof.
	\end{proof}

	\begin{ithm}
	 Let $\cM = (E, \cL)$ be a uniform, rank 3 oriented matroid. Then for $g\in E$, the code $\scC(\cM, g)$ is convex if and only if $\cM$ is representable. \label{prop:cvx_iff_rep}
	 \end{ithm}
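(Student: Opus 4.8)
The plan is to prove the two directions separately, using the order-forcing machinery from Chapter~\ref{chapter:order_forcing} for the ``only if'' direction and the explicit construction in Proposition~\ref{prop:img_of_matroid} for the ``if'' direction. For the easy direction, suppose $\cM$ is representable. Then $(\cM, g)$ is realized by a genuine affine line arrangement $L_1, \ldots, L_n$ in $\R^2$, and the good-cover realization of $\scC(\cM, g)$ described after the definition of the code (the sets $\{L_i^+ \cap L_i'^+\}$ together with the two boundary strips $B_\ell, B_r$ and their translates, as in Figure~\ref{F:pseudoline_code}) can be taken to consist of genuinely convex strips: each $U_{a_i}$ is a thin open convex strip around the line $L_i$, each $U_{b_s}$ a convex strip around the boundary line, and the order-forcing neurons $U_{c_{ij}}, U_{d_{sj}}$ are convex ``beads'' placed along these strips in the prescribed order. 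One checks that this realizes exactly $\scC(\cM, g)$; hence $\scC(\cM, g)$ is convex. (Alternatively, invoke Proposition~\ref{prop:img_of_matroid}: $\scC(\cM, g) \le \sfL^+(\widehat\cM)$, and when $\cM$ is representable so is $\widehat\cM$, so by Theorem~\ref{thm:polytope_matroid} the code is open polytope convex.)

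For the hard direction, assume $\scC(\cM, g)$ is convex and build a representation of $\cM$. Fix a convex realization $\cU = \{U_{a_i}, U_{b_s}, U_{c_{ij}}, U_{d_{sj}}\}$ in some $\R^D$. First I would use the order-forcing codewords: by construction, along each neuron $a_i$ the codewords $a_ic_{i1}, a_ic_{i2}, \ldots, a_ic_{in}$ (interleaved with the pairwise-intersection codewords) form a strongly order-forced sequence, and similarly along $b_r$ and $b_\ell$. By Theorem~\ref{thm:order-forcing}, in \emph{any} convex realization the line segment between a point of the first atom and a point of the last atom of such a sequence must pass through the intermediate atoms \emph{in the prescribed order}. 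The key geometric consequence: choosing points $p_r \in A_{b_\ell b_r d_{\ell,1}d_{r,1}}^\cU$, and appropriate endpoints, the order-forced sequences force the ``intersection points'' $p_{ij}$ (where pseudoline $i$ meets pseudoline $j$) to lie along a line segment $\ell_i$ inside $U_{a_i}$, and the order in which $\ell_i$ meets the segments $\ell_j$ must agree with $\pi_i(j)$. Moreover, as in the proof of Proposition~\ref{prop:rowboat}, order-forcing along the two boundary strips forces everything into a common $2$-dimensional affine plane $A$, so that $\{U_x \cap A\}$ is a convex realization of $\scC(\cM, g)$ in $\R^2$. Thus we obtain genuine line segments $\ell_1, \ldots, \ell_n$ in the plane, each pair crossing exactly once (they cannot cross twice, being contained in convex sets that meet the way the code dictates, and they must cross since $a_ia_j$-codewords appear), with crossing order recording $\pi_i(j)$. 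Extending each $\ell_i$ to a full line $\hat L_i$ gives an affine line arrangement whose combinatorial type --- determined entirely by the orders $\pi_i$ --- equals that of the original pseudoline arrangement, hence represents $(\cM, g)$. Finally, centralizing (adding the hyperplane at infinity, i.e. the ground element $g$) yields a hyperplane arrangement representing $\cM$, so $\cM$ is representable.

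The main obstacle is the middle step: rigorously extracting, from the order-forced codewords alone, that the pairwise-crossing data of the segments $\ell_i$ matches $\pi_i$, and that no spurious extra crossings or degeneracies occur. Concretely I need to verify that (i) the codeword $a_ia_j c_{i,\pi_i(j)}c_{i,\pi_i(j)+1}c_{j,\pi_j(i)}c_{j,\pi_j(i)+1}$ forces the atom of $\ell_i \cap \ell_j$ to sit strictly between the $\pi_i(j)$-th and $(\pi_i(j)+1)$-th order-forcing beads along $\ell_i$ and simultaneously between the corresponding beads along $\ell_j$; (ii) $\ell_i$ and $\ell_j$, being the relevant segments inside the convex sets $U_{a_i}$ and $U_{a_j}$ which intersect only in the atom of $a_ia_j\cdots$, cross transversally exactly once --- here uniformity of $\cM$ (general position, no three pseudolines concurrent) is what guarantees the crossing is a single transversal point rather than an overlap, matching the ``no more than two meet at a point'' condition; and (iii) the resulting arrangement of extended lines is a \emph{simple} arrangement with the prescribed local orders, so by the standard fact that a simple line arrangement is determined up to isomorphism by the sequence of orders $\pi_1, \ldots, \pi_n$ (equivalently its allowable sequence / wiring diagram), it realizes the same oriented matroid as the pseudoline arrangement we started with. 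Packaging (i)--(iii) cleanly, while keeping the bookkeeping of the $(n+1)(n+2)$ neurons manageable, is where the real work lies; the rest is either a direct appeal to Theorem~\ref{thm:order-forcing} and Proposition~\ref{prop:img_of_matroid} or a routine centralization argument.
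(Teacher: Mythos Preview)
Your proposal is correct and follows essentially the same route as the paper: the ``if'' direction via Proposition~\ref{prop:img_of_matroid} and Theorem~\ref{thm:polytope_matroid}, and the ``only if'' direction by using the order-forced sequences along $b_r$, $b_\ell$, and each $a_i$ to trap a convex realization in a $2$-plane (via the convex hull of three boundary points, exactly as in the proof of Proposition~\ref{prop:rowboat}), then extending the resulting line segments to a genuine line arrangement with the prescribed crossing orders $\pi_i$. Your concerns (i)--(iii) are the right ones, and the paper handles them essentially as you suggest (with (ii) and (iii) dispatched quickly: once extended to full lines in the plane, two non-parallel lines meet exactly once, and strict order-forcing of distinct atoms along each $\ell_i$ gives distinct crossing points, hence simplicity).
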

	 
	\begin{proof}
	First, we show that if $\cM$ is representable, $\scC(\cM, g)$ is convex. Note that by Proposition \ref{prop:img_of_matroid}, we have that $\scC(\cM, g) \leq \sfL^+\widehat\cM$. Also note that by construction, if $\cM$ is representable, then so is $\widehat\cM$. Therefore, by Theorem \ref{thm:polytope_matroid}, if $\cM$ is representable, $\scC(\cM, g)$ is convex.

Next, we show that if $\scC(\cM, g)$ is convex, then $\cM$ is representable. Note that the following sequences are order-forced in $\scC(\cM, g)$. 
 
\begin{enumerate}

 \item The only feasible path from $  b_r   b_\ell   d_{r1} d_{\ell 1}$ to  $ b_{r} a _n d_{rn}$ in $G_{\scC(\cM, g)}$ is 
 \begin{align*}  b_r   b_\ell   d_{r1} d_{\ell 1} \lra     b_r   d_{r1} \lra   b_r   a _1  d_{r1} d_{r2} \lra   b_r   d_{r2} \lra \cdots \lra  b_{r} d_{rn} \lra  b_{r} a _n d_{rn}
 \end{align*}
 
  \item The only feasible path from $  b_r   b_\ell   d_{r1} d_{\ell 1}$ to  $  b_\ell  a _1 d_{\ell n}$ in $G_{\scC(\cM, g)}$ is 
  \begin{align*}
    b_r   b_\ell   d_{r1} d_{\ell 1} \lra     b_\ell   d_{\ell 1} \lra   b_\ell   a _n  d_{\ell 1} d_{\ell 2} \lra   b_r   d_{\ell 2} \lra \cdots \lra   b_\ell  d_{\ell n} \lra   b_\ell  a _1 d_{\ell n}
  \end{align*}

  \item For each $i$, the only feasible path from $ b_r  a_i  c_{i1}  d_{ri} d_{ri+1}$ to $ a _i   b_\ell   c_{in} d_{\ell (n-i+1)}$ in \\ $G_{\scC(\cM, g)}$ is 
  \begin{align*} 
    b_r  a _i  c_{i1}  d_{ri} d_{ri+1} 
    &\lra  a_i c_{i1}
    \lra  a_i  a_{\pi\inv_i(1)}  c_{i1} c_{i2}c_{\pi\inv_i(1), \pi_{\pi\inv_i(1)}(i)}c_{\pi\inv_i(1), \pi_{\pi\inv_i(1)}(i)+1} \\
   \cdots &
    \lra  a _i  c_{i2} 
   \lra  a _i  c_{i(n-1)} \\
   \cdots &
   \lra  a_i  a_{\pi\inv_i(n-1)}  c_{i(n-1)} c_{in}c_{\pi\inv_i((n-1)), \pi_{\pi\inv_i(n-1)}(i)}c_{\pi\inv_i(1), \pi_{\pi\inv_i(n-1)}(i)+1}\\ 
   \cdots& \lra  a _i  c_{in}
   \lra  a _i   b_\ell   c_{in} d_{\ell (n-i+1)}
  \end{align*}
 \end{enumerate}

We claim if $\scC(\cM, g)$ is convex, then it has a realization in the plane.
Suppose that  $\scC(\cM, g)$ is convex, and fix a realization $\cU$ in $\R^d$.  
Choose points $p_1, p_2, p_3$ in the atoms $A_{  b_r   b_\ell   d_{r1} d_{\ell 1}}$, $A_{ b_{r} a _n d_{rn}}$, and $A_{  b_\ell  a _1 d_{\ell n}}$ respectively. 
We will show that each atom in this realization has a nonempty intersection with $\conv(p_1, p_2, p_3)$. 
By order forcing (1), the line  from $p_1$ to $p_2$ must pass through the atoms of all codewords containing $b_r$ in the listed order. 
Likewise, by order forcing (2), the line from $p_1$ to $p_3$ must pass through the atoms of 
all codewords containing $b_\ell$ in the listed order. 

 
In particular, we have shown that for each $i$, the atoms of $\sigma = b_ra _id_{ri}d_{r(i+1)}c_{i1}$ and $\tau = b_\ell  a _i d_{\ell (n+1-i)} d_{\ell (n+2-i)} c_{i1}$ have a nonempty intersection with $\conv(p_1, p_2, p_3)$. For each $i$, pick a point $q_i \in \conv(p_1, p_2, p_3) \cap A_{\sigma}$ and a point $r_i \in \conv(p_1, p_2, p_3) \cap A_{\tau}$. Applying order forcing (3) for each $i$, we have that the line from $r_i$ to $q_i$ passes through the atoms of  
all codewords containing $a_i$, in the listed order. 
This accounts for every codeword of $\scC(\cM, g)$. Thus, intersecting the open  sets in $\cU$ with the plane $\aff(p_1, p_2, p_3)$ produces a two-dimensional convex realization of $\scC(\cM, g)$. 

Now, we obtain a straight line arrangement for $(\cM, g)$ in this plane by extending the line segment from $q_i$ to $r_i$ to be a line. Notice that by order forcing (3), this line meets the sets $U_{ a _1}, \ldots, U_{ a _n}$ in the order consistent with the pseudoline arrangement. Thus, if this code is convex, then the pseudoline arrangement is stretchable, and thus $\cM$ is representable. 
\end{proof}

Proposition \ref{prop:cvx_iff_rep} demonstrates that matroid representability and
convex code realizability are intertwined. One consequence is that
non-representable oriented matroids are a new source
for constructing non-realizable codes:

\begin{cor}
  There is an infinite family of non-convex codes which lie below
  oriented matroids in $\pcode$.
 \end{cor}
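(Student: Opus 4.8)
The plan is to combine Theorem \ref{prop:cvx_iff_rep} with Proposition \ref{prop:img_of_matroid} and feed in an infinite supply of non-representable uniform rank $3$ oriented matroids. Recall that for any uniform rank $3$ oriented matroid $\cM = (E,\cL)$ and any $g \in E$, Theorem \ref{prop:cvx_iff_rep} says the code $\scC(\cM,g)$ is non-convex whenever $\cM$ is not representable, while Proposition \ref{prop:img_of_matroid} produces a rank $3$ oriented matroid $\widehat\cM$ with $\scC(\cM,g) \le \sfL^+\widehat\cM$ in $\pcode$. Thus it suffices to exhibit infinitely many pairwise non-isomorphic codes $\scC(\cM,g)$ with $\cM$ uniform, of rank $3$, and non-representable.

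First I would invoke the existence of a non-representable uniform rank $3$ oriented matroid: the non-Pappus configuration gives a simple, non-stretchable pseudoline arrangement, and hence a uniform rank $3$ oriented matroid $\cM_0$ which is not representable. Next, for each integer $m \ge 0$ I would form $\cM_m$ by adjoining $m$ new elements to $\cM_0$ in sufficiently general position, i.e. extending the pseudoline arrangement by $m$ further pseudolines so that the arrangement stays simple and uniform; then $\cM_m$ is uniform of rank $3$ with $|E_m| = |E_0| + m$. Since $\cM_0$ is the deletion minor of $\cM_m$ obtained by removing the $m$ new elements, and representability of an oriented matroid is inherited by its minors, the non-representability of $\cM_0$ forces $\cM_m$ to be non-representable as well.

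Finally, choosing any $g_m \in E_m$ and setting $\scD_m := \scC(\cM_m,g_m)$, Theorem \ref{prop:cvx_iff_rep} gives that each $\scD_m$ is non-convex, and Proposition \ref{prop:img_of_matroid} gives $\scD_m \le \sfL^+\widehat{\cM_m}$, so each $\scD_m$ lies below an oriented matroid in $\pcode$. Because $\scC(\cM,g)$ is a code on $(n+1)(n+2)$ neurons, where $n+1 = |E|$ is the number of pseudolines in the affine arrangement, the codes $\scD_m$ are supported on strictly increasing numbers of neurons and are therefore pairwise non-isomorphic. Hence $\{\scD_m\}_{m \ge 0}$ is the desired infinite family, proving the corollary.

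I expect the only real content beyond bookkeeping to be the second step: certifying that one obtains infinitely many genuinely non-representable uniform rank $3$ oriented matroids. The minor argument disposes of non-representability cleanly once one checks that the generic extension keeps the matroid uniform and of rank $3$; no delicate control over the extension is needed, since representability is already obstructed inside the fixed minor $\cM_0$. Alternatively one could bypass $\cM_0$ and cite directly the known fact that there are infinitely many non-stretchable simple pseudoline arrangements.
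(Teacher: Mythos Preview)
Your proposal is correct and follows essentially the same approach as the paper: combine Theorem~\ref{prop:cvx_iff_rep} with Proposition~\ref{prop:img_of_matroid} and feed in infinitely many non-representable uniform rank~$3$ oriented matroids. The paper takes exactly the shortcut you mention in your final sentence, citing \cite[Proposition~8.3.1]{bjorner1999oriented} for the existence of infinitely many such matroids rather than building them by generic extension of non-Pappus; your explicit construction via minors is a fine alternative, and your observation that the resulting codes sit on strictly increasing neuron counts (hence are pairwise non-isomorphic) makes explicit a point the paper leaves implicit. One small notational slip: the number of pseudolines is $n$, not $n+1$; it is $n+1$ that equals $|E|$.
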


\begin{proof}
There are infinitely many non-representable uniform oriented matroids of rank 3 \cite[Proposition 8.3.1]{bjorner1999oriented}. 
By Proposition~\ref{prop:cvx_iff_rep}, $\scC(\cM, g)$ is non-convex for each of these.
By Proposition \ref{prop:img_of_matroid}, $\scC(\cM, g) \leq \sfL^+\widehat\cM$. 
\end{proof}

\begin{ex} Let $(\cM, g)$ be the uniform non-Pappus matroid from \cite{shor1991stretchability}, whose pseudoline arrangement appears in Figure \ref{fig:non-papp}. This matroid is non-representable, since a realization of it would violate Pappus's hexagon thoerem. Then $\scC(\cM, g)$ is a non-convex code with no local obstructions. 

\begin{figure}[ht!]
\begin{center}
\includegraphics[width = 3.5 in]{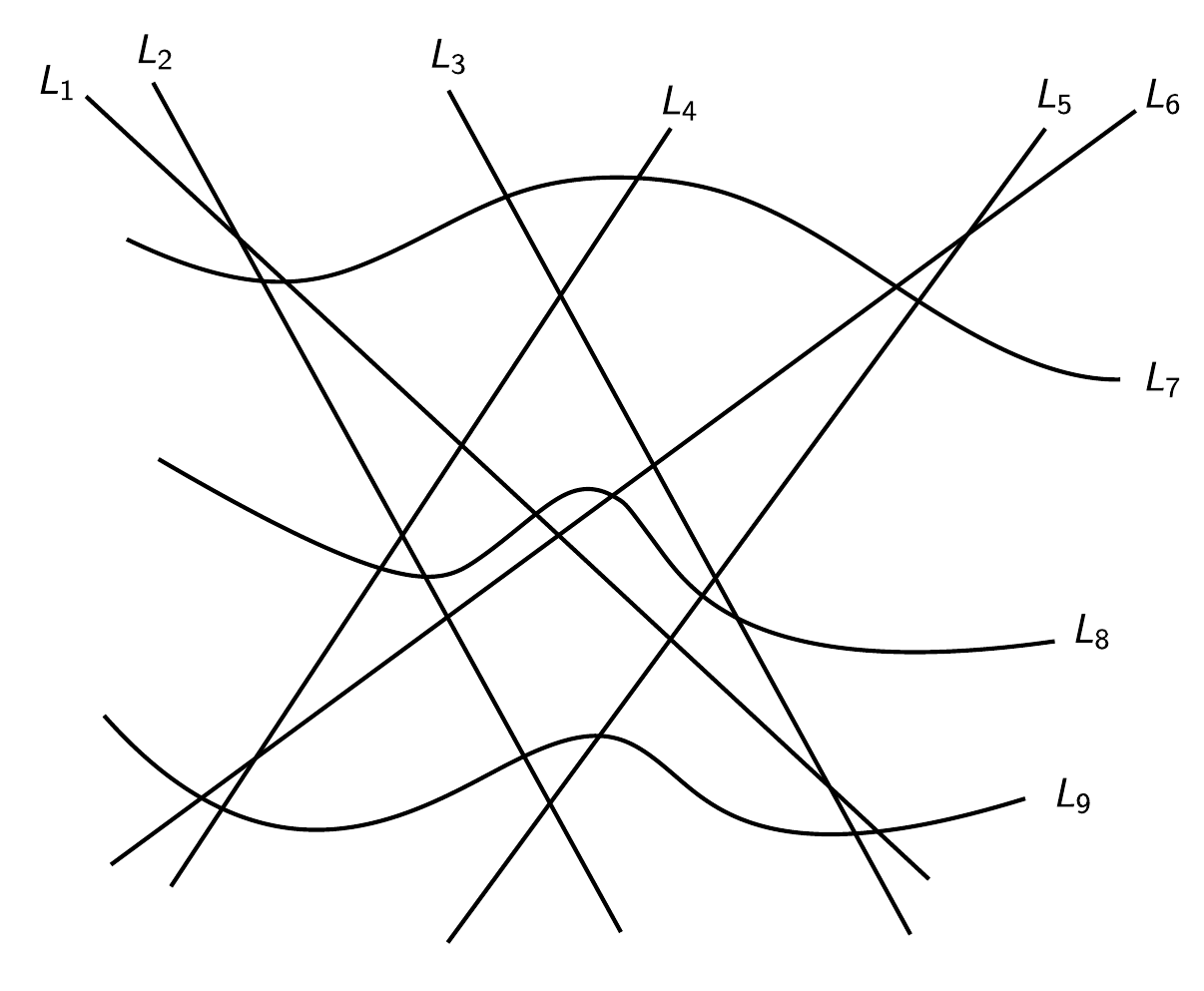}
\end{center}
\caption{The pseudoline arrangement for the uniform non-Pappus matroid. \label{fig:non-papp} }
\end{figure}

\end{ex}

\subsection{The convex code decision problem is NP-hard}

	We now turn to the computational aspects of convex codes.
	Using the relationship between convex codes and representable oriented matroids (Theorem~\ref{thm:witness}), we demonstrate the convex code decision problem is NP-hard and $\exists\R$-hard, though it remains open whether the convex code decision problem lies in either of these classes, or is even decidable.
	The complexity class $\exists \R$, read as \emph{the existential theory of the reals}, is the class of decision problems 
	of the form 
		\[ \exists (x_1\in \R) \ldots \exists(x_n\in \R) P(x_1, \ldots, x_n), \]
	where $P$ is a quantifier-free formula whose atomic formulas are polynomial equations, inequations, and inequalities in the $x_i$.  In other words, a problem in $\exists \R$ defines a semialgebraic set over the real numbers and asks whether or not it contains any points \cite{broglia2011lectures}. Many well known problems in computational geometry lie in $\exists \R$, including some problems very similar to determining whether a code is convex. For instance, determining whether a graph is the intersection graph of convex sets in the plane is $\exists \R$ complete \cite{schaefer2009complexity}. 
	
Theorem~\ref{thm:witness} implies the convex code decision problem is at least as difficult as deciding if an oriented matroid is representable.
This decision problem is $\exists\R$-complete \cite{mnev1988universality, shor1991stretchability, sturmfels1987decidability} and therefore the convex code decision problem is $\exists\R$-hard.
	
\begin{ithm}
	Any problem in $\exists \R$ can be reduced in polynomial time to the problem of determining whether a neural code is convex. 
\end{ithm}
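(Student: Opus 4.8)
The plan is to exhibit a polynomial-time many-one reduction from a known $\exists\R$-complete problem to the convex code decision problem, with Theorem~\ref{thm:witness} serving as the bridge. The natural source problem is \emph{stretchability of simple pseudoline arrangements}, equivalently \emph{representability of uniform rank $3$ oriented matroids}, which is $\exists\R$-complete by the work of Mnëv and Shor \cite{mnev1988universality, shor1991stretchability, sturmfels1987decidability}. Since that problem is already $\exists\R$-complete, it suffices to reduce it to the convex code problem; composing with any reduction witnessing its $\exists\R$-hardness then yields a reduction from an arbitrary $\exists\R$ instance $\exists(x_1\in\R)\cdots\exists(x_n\in\R)\,P(x_1,\dots,x_n)$, and polynomial-time many-one reductions compose.

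Concretely, given a uniform rank $3$ oriented matroid $\cM$ on ground set $E$, presented combinatorially (by a wiring diagram, chirotope, or list of covectors), I would first fix any element $g\in E$ to obtain an affine oriented matroid $(\cM,g)$, and then apply the construction of Section~\ref{sec:non_convex} to produce the code $\scC(\cM,g)$. Two points need checking. First, $\scC(\cM,g)$ can be written down in time polynomial in the input size: it has $(n+1)(n+2)$ neurons and $O(n^2)$ codewords, each given in closed form in terms of the crossing orders $\pi_i(j)$ and a clockwise labeling of the pseudoline heads, and this data is read off the wiring diagram / allowable sequence of $\cM$ in polynomial time. Second, by Theorem~\ref{thm:witness} (Proposition~\ref{prop:cvx_iff_rep}), $\scC(\cM,g)$ is convex if and only if $\cM$ is representable, i.e.\ if and only if the associated pseudoline arrangement is stretchable. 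Hence ``is $\scC(\cM,g)$ convex?'' is a yes-instance exactly when $\cM$ is representable, which is the reduction we want.

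Since representability of uniform rank $3$ oriented matroids is also NP-hard \cite{sturmfels1987decidability}, the same chain of reductions shows the convex code decision problem is NP-hard as well, completing Theorem~\ref{thm:hard}. (Note that the argument gives only hardness: it says nothing about membership of the convex code problem in $\exists\R$ or NP, and indeed the construction proceeds entirely through the combinatorial bridge of Theorem~\ref{thm:witness}.)

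The main obstacle I anticipate is bookkeeping rather than conceptual: one must verify that the combinatorial data required by the construction of $\scC(\cM,g)$ — the order in which the pseudolines meet, plus a consistent clockwise ordering of their heads — is genuinely computable in polynomial time from whatever standard encoding of $\cM$ is taken as input, and that the explicit codeword list in the definition of $\scC(\cM,g)$ has polynomial length. Both hold, but it is worth being explicit that no (possibly exponential) full covector representation of $\cM$ is needed: the wiring-diagram data alone suffices to build $\scC(\cM,g)$, and Theorem~\ref{thm:witness} already packages the geometric content via order-forcing.
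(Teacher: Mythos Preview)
Your proposal is correct and follows essentially the same argument as the paper: reduce from representability of uniform rank~$3$ oriented matroids (which is $\exists\R$-complete by Mn\"ev--Sturmfels--Shor) via Theorem~\ref{thm:witness}/Proposition~\ref{prop:cvx_iff_rep}, checking that $\scC(\cM,g)$ has polynomially many neurons and codewords. Your added discussion of which combinatorial data is needed to build $\scC(\cM,g)$ in polynomial time is more explicit than the paper's, but the underlying reduction is identical.
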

	
\begin{proof}

By the Mn\"ev-Sturmfels universality theorem (see \cite{mnev1988universality, sturmfels1987decidability, shor1991stretchability, bjorner1999oriented}), determining whether a rank 3 uniform oriented matroid is representable is complete for the existential theory of the reals.
By Proposition~\ref{prop:cvx_iff_rep}, a rank 3 uniform oriented matroid $\cM$ is representable if and only if $\scC(\cM, g)$ is a convex neural code. Further, the number of neurons in $\scC(\cM, g)$  is quadratic in the size of the ground set of $\cM$, and the number of codewords of $\scC(\cM, g)$ is less than the number of covectors of $\cM$. Any problem in $\exists\R$ can be reduced in polynomial time to deciding representability of a uniform oriented matroid and thus convexity of the corresponding code.
\end{proof}

\noindent 
Since any $\exists \R$ complete problem is also $\mathrm{NP}$-hard, we have as a corollary that determining whether a code is convex is $\mathrm{NP}$-hard.
 \begin{cor}
The problem of determining whether a code is convex is $\mathrm{NP}$-hard, where the problem size is measured in the number of codewords. 
\end{cor}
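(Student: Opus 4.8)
The plan is to read off NP-hardness from the $\exists\R$-hardness established in the preceding theorem, using the standard containment $\mathrm{NP}\subseteq\exists\R$. Concretely, I would recall that $3$-SAT lies in $\exists\R$: encode each Boolean variable $x_i$ by a real number constrained by $x_i(x_i-1)=0$ and each clause by a polynomial inequality, so that satisfiability of the formula is exactly nonemptiness of an explicitly constructed semialgebraic set, and this construction is computable in polynomial time. Composing this with the polynomial-time reduction from $\exists\R$ to the convex code decision problem supplied by the previous theorem (itself the composition of the Mn\"ev--Sturmfels/Shor reduction $\exists\R \to$ uniform rank-$3$ oriented matroid representability with Theorem~\ref{prop:cvx_iff_rep}) yields a polynomial-time many-one reduction from $3$-SAT to testing convexity of a code, which is exactly the definition of NP-hardness. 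As the paper already notes, this does not place the convex code decision problem inside $\mathrm{NP}$ (it is not even known to be decidable); NP-hardness only requires the reduction to go the other way, which is what we have.

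The single point that genuinely needs checking is the one the corollary calls attention to: the problem size must be measured in the \emph{number of codewords}, so I must verify that the code $\scC(\cM,g)$ produced along the pipeline has only polynomially many codewords in the size of $\cM$. This is immediate from the explicit list in the definition of $\scC(\cM,g)$: for a ground set of size $n$ there are $O(1)$ boundary codewords, $O(n)$ codewords of the forms $b_s d_{s,j}$ and $b_r a_i d_{r,i} d_{r,i+1} c_{i,1}$ together with their left-hand analogues, exactly $n^2$ codewords $a_i c_{ij}$, and $O(n^2)$ pairwise-intersection codewords $a_i a_j c_{i,\pi_i(j)} c_{i,\pi_i(j)+1} c_{j,\pi_j(i)} c_{j,\pi_j(i)+1}$, plus $\emptyset$; the total is $\Theta(n^2)$. (The bound ``fewer codewords than covectors of $\cM$'' used in the theorem's proof is not enough here, since that is exponential in $n$; the explicit count is what does the job.) Meanwhile the Mn\"ev--Sturmfels universality theorem, in the sharp form due to Shor, turns a $3$-SAT instance of size $m$ into a uniform rank-$3$ oriented matroid on $n=\mathrm{poly}(m)$ elements (given by its chirotope, hence of polynomial bit-size), computable in polynomial time, which is representable if and only if the formula is satisfiable. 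Therefore $\scC(\cM,g)$ has $\mathrm{poly}(m)$ codewords and, by Theorem~\ref{prop:cvx_iff_rep}, is convex if and only if the formula is satisfiable, so the whole pipeline is polynomial in the number of codewords.

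The ``hard part,'' such as it is, is thus entirely bookkeeping: confirming that the size blow-up at each stage (formula $\to$ oriented matroid chirotope $\to$ code) is polynomial, and in particular that the codeword count of $\scC(\cM,g)$ is polynomial rather than exponential in $n$ — which the explicit combinatorial description settles. Everything else is a routine composition of polynomial-time many-one reductions, together with the well-known fact that $\exists\R$-hard problems are NP-hard.
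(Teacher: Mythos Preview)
Your proposal is correct and follows essentially the same route as the paper: invoke $\mathrm{NP}\subseteq\exists\R$, compose with the reduction already built, and verify that the output code has polynomially many codewords. One small correction to your parenthetical critique: the paper's bound ``fewer codewords than covectors of $\cM$'' is in fact already polynomial here, because a rank~$3$ oriented matroid on $n$ elements has only $O(n^2)$ covectors (a pseudoline arrangement with $n$ lines has $O(n^2)$ faces of all dimensions). Your explicit $\Theta(n^2)$ codeword count is cleaner and more self-contained, but the paper's bound is not exponential in this setting.
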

Notice that because we can perform this reduction of a problem in $\exists \R$ to a neural code in polynomial time, this result holds even when we measure the problem size in terms of the number of codewords, which may be exponentially large in the number of neurons.  Again, this NP hardness result is not surprising. For instance, it parallels the result that recognizing whether a simplicial complex is the nerve of convex sets in $\R^d$ is $\mathrm{NP}$-hard for $d\geq 2$ \cite{tancer2010d}.

\section{Open questions}\label{S:questions}

The preceding sections have presented our case for employing
oriented matroid theory in the study of neural codes.
However, we stand at the very beginning of exploring this connection. In this
section, we outline some directions for future work. 

\subsection{Concrete Examples}

\begin{question}
Are the non convex codes $\cT$ and $\cR$ from the previous chapter images of oriented matroids? 
\end{question}

\subsection{Is the missing axiom of convex codes also lost forever?}
 While general neural codes are not required to satisfy any axioms,
the codes below oriented matroids may be more tractable to combinatorial description.
\begin{question}\label{q:axioms}
  Can the class of neural codes below oriented matroids be characterized
  by a set of combinatorial axioms?
\end{question}

If this question is answered in the affirmative, then these codes can be thought of as
``partial oriented matroids.''  
Suppose that $\scC\subseteq 2^{[n]}$ is a code and
$\cM$ is an oriented matroid on ground set $[N]$ such that $\scC = f(\sfL^+( \cM))$; then,
we obtain constraints on the set of covectors of $\cM$.  
Each included codeword $\sigma \in\scC$ implies existence of a preimage covector in $\cM$, and
each excluded codeword $\tau\notin\scC$ implies a set of forbidden covectors which may not be in $\cM$.
The oriented matroids satisfying these constraints can then be said to
be ``completions'' of the partial oriented matroid.


Just as we wish to characterize codes lying below oriented matroids with a set of combinatorial axioms, we might also wish to characterize convex codes using a set of combinatorial axioms. However, this is likely not possible. 
In \cite{mayhew2018yes}, Mayhew, Newman, and Whittle show that ``the missing axiom of matroid theory is lost forever." Slightly more formally, they show that there is no sentence characterizing representability in the monadic second order language $MS_0$, which is strong enough to state the standard matroid axioms. Roughly, this means that there is no ``combinatorial" characterization of representability, or no characterization of representability in the language of the other matroid axioms. 

Because we have found strong connections between representability and convexity, it is natural to ask whether a similar statement can be proven for convex codes. 

\begin{question}
Is there a natural language in which we can state ``combinatorial" properties of neural codes, in analogy with the $MS_0$ for matroids? If so, is it possible to characterize convexity in this language?
\end{question}

\subsection{Computational questions}
While we have shown that the convex code decision problem is $\exists \R$-hard,  we have not actually shown that the convex code decision problem lies in $\exists \R$, or is even algorithmically decidable. 
A similar problem, that of determining whether a code has a good cover realization, is undecidable by \cite[Theorem 4.5]{chen2019neural}. 
Here, the distinction between codes with good cover realizations and convex realizations may be significant. 
For instance, while there is an algorithm to decide whether, for any given $d$, a simplicial complex is the nerve of convex open subsets of $\R^d$, for each $d \geq 5$, it is algorithmically undecidable whether a simplicial complex is the nerve of a good cover in $\R^d$ \cite{tancer2013nerves}. 

We outline a possible path towards resolving \cite[Question 4.5]{chen2019neural}, which asks whether there is an algorithm which decides whether a code is convex.	
Our approach hinges on \ref{thm:polytope_matroid}: a code is polytope convex if and only if it lies below a representable oriented matroid.
A first step towards solving the convex code decision problem is answering the following open question:

\begin{question}\label{Q:convexiffpolytope}
Can every convex code be realized with convex polytopes?
\end{question}
	
If this can be answered in the affirmative,
then our Theorem \ref{thm:polytope_matroid} becomes strengthened to the following: 

\begin{conj}
  A code $\scC$ is convex if and only if $\scC \leq \sf{L}^+\cM$ for $\cM$
  a representable oriented matroid. 
\end{conj}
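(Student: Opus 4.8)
The plan is to deduce the conjecture from Theorem~\ref{thm:polytope_matroid} together with an affirmative answer to Question~\ref{Q:convexiffpolytope}, so the real work is concentrated entirely in that question. First, note that the conjecture is equivalent to the statement that a code is convex if and only if it is open polytope convex. The ``if'' direction is free: if $\scC \leq \sfL^+\cM$ for a representable $\cM$, then $\scC$ is open polytope convex by Theorem~\ref{thm:polytope_matroid}, and since interiors of convex polytopes are convex open sets, $\scC$ is convex. For the ``only if'' direction, the $(\Leftarrow)$ half of Theorem~\ref{thm:polytope_matroid} says that to obtain $\scC \leq \sfL^+\cM$ with $\cM$ representable it suffices to exhibit a realization of $\scC$ by interiors of convex polytopes. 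Thus the entire content of the conjecture collapses to: every convex code is open polytope convex.

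For that, I would start from a convex open realization $\cU = \{U_1, \ldots, U_n\}$ of $\scC$ in $\R^d$. Choosing a point in each (nonempty) atom $A^\cU_\sigma$ and taking $X$ to be the interior of a large cube containing all of these points, the restrictions $U_i \cap X$ still realize $\scC$ relative to $X$, since the label of any point of $X$ is a codeword of $\scC$ and each atom still meets $X$; so we may assume every $U_i$ is bounded and contained in the bounding polytope $X$. Because $\scC$ is open-convex each atom is open, so we may take each witness point $p_\sigma$ in its interior; then for $i \notin \sigma$ the point $p_\sigma$ lies at positive distance from the bounded convex set $U_i$, and a hyperplane strictly separates them, giving an open half-space $H_{\sigma,i} \supseteq U_i$ with $p_\sigma \notin H_{\sigma,i}$. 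Setting $P_i := X \cap \bigcap_{\sigma \in \scC,\, i \notin \sigma} H_{\sigma,i}$ yields an interior of a convex polytope with $U_i \subseteq P_i \subseteq X$ and with $p_\tau \in P_i$ exactly when $i \in \tau$. Hence every codeword of $\scC$ reappears in $\code(\{P_1,\ldots,P_n\}, X)$, witnessed by the corresponding $p_\tau$.

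The hard part — and the reason Question~\ref{Q:convexiffpolytope} is open — is controlling \emph{new} codewords: since each $P_i$ contains $U_i$, the enlarged sets can overlap where the originals did not, so $\code(\{P_i\}, X)$ may strictly contain $\scC$. Simply shrinking the $P_i$ where disjointness is needed conflicts with the enlargement forced at the witness points, and no single Hausdorff-close polytope approximation need preserve the code at all: a convex realization can be combinatorially degenerate — for instance with $\overline{U_i}\cap\overline{U_j}\neq\varnothing$ but $U_i\cap U_j=\varnothing$ — so that every perturbation introduces the codeword $ij$. I therefore expect a correct proof to need two further ingredients: (1) a preliminary reduction showing that every convex code admits a \emph{nondegenerate}, or otherwise ``generic,'' convex realization in which closures of disjoint sets remain disjoint and all relevant incidences are stable; and (2) a compactness argument over the finite witness set and the finitely many full-dimensional cells of the arrangement, choosing each separating hyperplane close enough to $U_i$ that no cell with a label outside $\scC$ is created. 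Establishing (1) and (2) is precisely the gap; granting them — equivalently, granting Question~\ref{Q:convexiffpolytope} — the conjecture follows immediately from Theorem~\ref{thm:polytope_matroid} via the reduction in the first paragraph.
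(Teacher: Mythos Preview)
The statement is a \emph{conjecture}, not a theorem, and the paper does not prove it. The paper's entire treatment of this conjecture is the observation that it follows from Theorem~\ref{thm:polytope_matroid} once Question~\ref{Q:convexiffpolytope} (``Can every convex code be realized with convex polytopes?'') is answered affirmatively. Your first paragraph makes exactly this reduction, so on that point you match the paper precisely.

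Where you go beyond the paper is in sketching an approach to Question~\ref{Q:convexiffpolytope} itself. You correctly identify the genuine obstruction: enlarging each $U_i$ to a polytope $P_i$ via separating hyperplanes preserves all existing codewords but may create new ones, and degenerate realizations (where closures of disjoint $U_i$ touch) cannot be fixed by any small perturbation. Your proposed ingredients (1) and (2) are reasonable heuristics, but neither is established, and you rightly flag them as the gap. In particular, ingredient (1) --- that every convex code admits a nondegenerate realization --- is itself nontrivial and not obviously easier than the original question; the paper's discussion of open, closed, and nondegenerate embedding dimensions (and the results of \cite{jeffs2021open}) shows these notions can diverge substantially.

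So your proposal is not a proof, and you know it is not a proof; but neither does the paper offer one. Your reduction is correct and matches the paper's, and your honest accounting of where the argument breaks down is accurate.
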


If this conjecture holds, then we can replace the problem of
determining whether a code is convex with the problem of determining
whether a code lies below a representable matroid. 
We only need to enumerate matroids above the code, and then
check these matroids for representability. 

\begin{question}\label{Q:OMsabovecode}
  Given a code $\scC$, is there an algorithm to enumerate the set of oriented
  matroids $\cM$ which lie above $\scC$? 
\end{question}

One way to find oriented matroids above a code $\scC$ is to travel step-by-step up the poset $\pcode$. 
While there is a straightforward algorithm to enumerate the $O(n)$
codes which are covered by a code $\scC\subseteq 2^{[n]}$ in $\mathbf{P}_{\mathbf{Code}}$
\cite{jeffs2019sunflowers}, we do not know of a straightforward way to characterize the codes which cover $\scC$. 
If we can characterize these codes as well, we may be able to find a way to ``climb up''
towards an oriented matroid. Alternatively, we can use the ``partial oriented matroid''
perspective described above to obtain a set of constraints that must be obeyed by any oriented
matroid above this code. Then we can look for a matroid satisfying these constraints.

Both of these approaches depend on the minimal size of the
ground set of oriented matroids that lie above $\scC$ in $\pcode$. 
Let 
\[ M(n) = \max_{\substack{\scC \subseteq 2^{[n]}
 \\ \scC \mbox{   \scriptsize  below an}\\\mbox{  \scriptsize oriented matroid}}} \left[ \min_{\scC\leq \sf{L}^+(\cM)} |E(\cM)|\right] \]
 be the smallest $N$ such that any code $\cC$ on $n$ neurons which lies below an oriented matroid lies below an oriented matroid with ground set of size at most $N$. 
Similarly, let
\[ H(n) = \max_{\substack{\scC \subseteq 2^{[n]}
 \\ \scC \mbox{   \scriptsize  below a representable}\\\mbox{  \scriptsize oriented matroid}}} \left[ \min_{\substack{\scC\leq \sf{L}^+\cM \\ \cM \mbox{   \scriptsize  representable}}} |E(\cM)| \right] \] be the smallest $N$ such that any code $\scC$ on $n$ neurons below a representable oriented matroid lies below a representable oriented matroid with ground set of size at most $N$. 
Clearly, $M(n) \leq H(n)$, since any representable matroid is a matroid. 

\begin{question}
Describe the growth of $M(n)$ and $H(n)$ as functions of $n$. Are they equal?
\end{question}

Note that if $H(n)$ is a computable function of $n$, and Question \ref{Q:convexiffpolytope}
is answered in the affirmative, then the convex code decision problem is decidable. 

\subsection{Polyhedral approximation questions}

The first part of this program is to prove  Conjecture  ~\ref{Q:convexiffpolytope}. 
Standard theorems about approximating convex bodies with convex polytopes suffice to prove that any simplicial complex is the nerve of a collection of interiors of convex polytopes. However, the code of a cover is a more delicate object than the nerve, so approximation techniques may fail.

\subsection{Other questions in geometric combinatorics}

Many classic theorems about convex sets, such as Helly's theorem, Radon's theorem, and Caratheodory's theorem, have oriented matroid analogues. In some way, we can view our Theorem \ref{thm:bad_sunflower} as an oriented matroid version of Jeffs' sunflower theorem \cite[Theorem 1.1]{jeffs2019sunflowers}. The fact that the non-convex codes constructed from the sunflower theorem do not lie below oriented matroids shows us that there is some fact about oriented matroids underlying the sunflower theorem.

\begin{question}
Is there a natural oriented matroid version of Jeffs' sunflower theorem? 
\end{question}

Proposition~\ref{prop:no_local} stated that if $\cM$ is an oriented matroid,
the code $\sf{L}^+\cM$ has no local obstructions. That is, for any
$\sigma \in \Delta(\sf{L}^+\cM)\setminus \sf{L}^+\cM$,
$\link_\sigma( \Delta(\sf{L}^+\cM))$ is contractible.
This result can also be found in \cite{edelman2002convex},
where is is phrased as a result about the simplicial complex $\Delta_{\mathrm{acyclic}}(\cM)$.
Something stronger holds for representable oriented matroids: by \cite[Theorem 5.10]{chen2019neural},
if $\cM$ is a {\bf representable} oriented matroid, and
$\sigma \in \Delta(\sf{L}^+\cM)\setminus \sf{L}^+\cM$, then
$\link_\sigma(\sf{L}^+\cM)$ must be collapsible.
Expanding upon this work, \cite{jeffs2021convex} gives stronger conditions
that the link of a missing codeword in a convex code must satisfy. 

We ask whether this holds for all oriented matroids:

\begin{question}
  If $\cM$ is an oriented matroid, and $\sigma \in \Delta(\sf{L}^+\cM)\setminus \sf{L}^+\cM$,
  is $\link_\sigma( \Delta(\sf{L}^+\cM))$ collapsible? More generally, which simplicial complexes can arise as $\link_\sigma( \Delta(\sf{L}^+\cM))$ for $\sigma \in \Delta(\sf{L}^+\cM)\setminus \sf{L}^+\cM$? 
\end{question} 

If not, then the non-collapsibility of $\link_\sigma( \Delta(\sfL^+( \cM)))$ gives a new ``signature" of non-representability.

\part{Underlying Rank}


\chapter{A Novel Notion of Rank for Neural Data Analysis } \label{chapter:urank}

This chapter is adapted from an upcoming paper, which is joint work with Carina Curto, Juliana Londono Alvarez, and Hannah Rocio Santa Cruz \cite{curto22novel}. 
\section{Introduction}
Monotone nonlinear transformations present challenges for data analysis in neuroscience and beyond. For instance, calcium imaging is a widely used tool for recording the activity of large populations of neurons. However, calcium fluorescence has a nonlinear, but monotone, relationship with underlying spiking activity \cite{akerboom2012optimization, huang2021relationship, siegle2021reconciling}. This has consequences for analysis of population codes \cite{nauhaus2012nonlinearity}. The nonlinear relationship between membrane potential and spiking activity also effects analysis of neural coding \cite{hansel2002noise}. In psychology, signed difference analysis is used to evaluate models when there is an unknown monotone relationship between underlying psychological constructs and measured variables, such as task performance \cite{dunn2018signed,dunn2003signed}. In studies of gene interaction, there is often a monotone, nonlinear relationship between the trait genes directly act on and the trait we are are able to measure \cite{weinreich2006darwinian, husain2020physical, otwinowski2018biophysical}. 

Here, we focus on estimating the dimensionality of neural activity in the presence of monotone nonlinear transformations. The dimensionality of neural activity has emerged as a key variable describing how populations of neurons encode stimuli and perform tasks \cite{cunningham2014dimensionality, stringer2019high}. In many cases, neural activity has been observed to have dimensionality much lower than the number of neurons \cite{machens2010functional, chapin1999principal, churchland2012neural }.  Standard linear methods for computing the dimensionality of neural activity rely on singular values, which are not stable under monotone nonlinear transformation. In Figure \ref{fig:distortion}, we see the effect of the saturating monotone nonlinear transformation $f(x) = 1 - e^{-5x}$ on the singular vales of a $20\times 20$ matrix with rank 5. There is a sharp drop after the first five singular values, consistent with the low rank of the matrix.  This drop does not appear in the transformed data: instead, singular values decay smoothly. How can we get around this to recover the original rank?

\begin{figure}[ht!]
    \centering
    \includegraphics[width = 3 in]{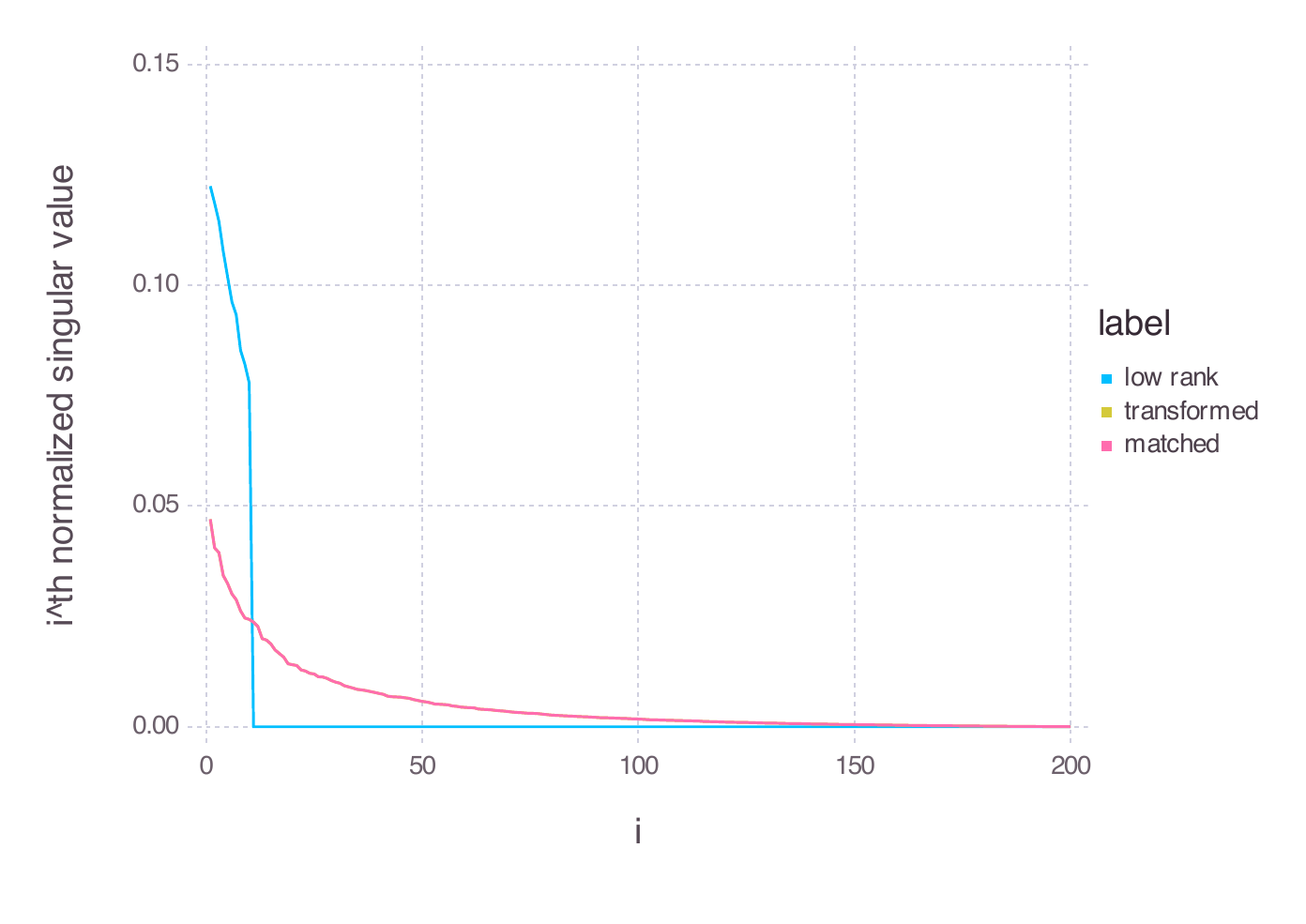}
    \includegraphics[width = 3 in]{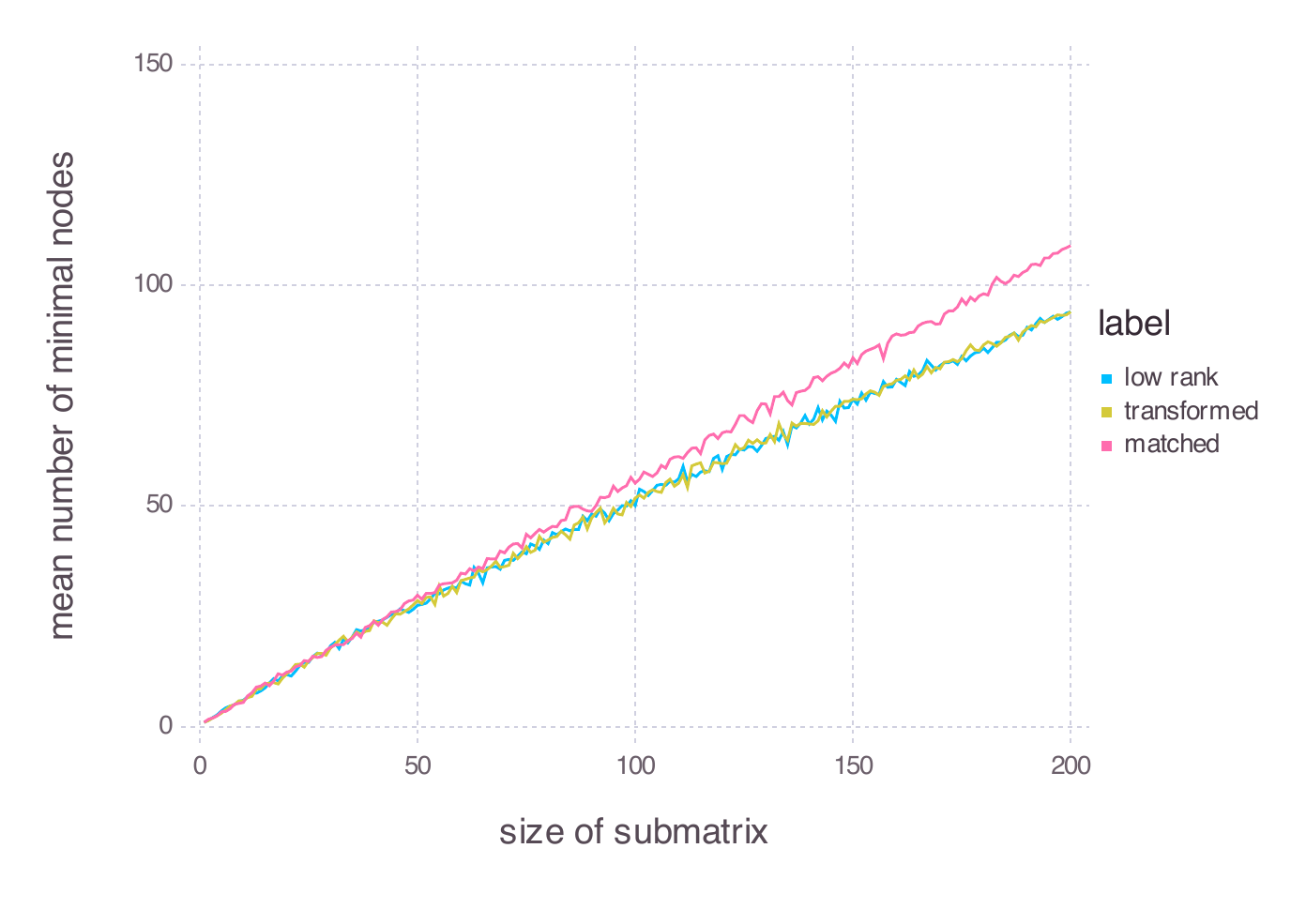}
    \caption[The effect of a monotone nonlinear transformation on singular values.]{Left: We plot the normalized singular values for three $20\times 20 $ matrices. The ``low rank" matrix a random matrix with rank 5. The ``transformed" matrix is  the result of applying the saturating function  $f(x) = 1 - e^{-5x}$ entrywise on the low rank matrix. Finally, the ``matched" matrix is a random matrix generated to have the same singular values as the transformed matrix. The ``transformed" and ``matched" curve are indistinguishable, by construction.
    Right: We are able to distinguish the ``transformed" and ``matched" matrices using the technique of counting \emph{minimal nodes}, introduced in Section \ref{sec:minimal} of this paper. Because minimal modes are stable under monotone nonlinear transformations, the minimal node curves of the low-rank and transformed matrices overlap.    \label{fig:distortion}}

\end{figure}

In this paper, we introduce the \emph{underlying rank} of a matrix, a notion of rank which is stable under monotone nonlinear transformations. The underlying rank of a matrix $A$ is the minimal rank $d$ such that there exists a rank $d$ matrix $B$ and a monotone function $f$ such that $A_{ij} = f(B_{ij})$. In other words, the underlying rank of a matrix is the minimal rank consistent with the ordering of matrix entries. When we interpret the matrix $A$ as our data and our matrix $B$ as the unobserved underlying values, the underlying rank $d$ is a lower bound for the true dimensionality of our data. Our goal in this paper is to put the concept of underlying rank into mathematical context and to present techniques for estimating the underlying rank of a given matrix.

In Section \ref{sec:geom}, we show that we can associate a rank $d$ matrix to a pair of point configurations in $\R^d$. We show that it is possible to recover features of these point arrangements from the ordering of matrix entries. This makes it possible to bring toolset of discrete geometry in to estimate underlying rank. In Section \ref{sec:minimal}, we introduce the \emph{minimal nodes} of a matrix as a tool for estimating the underlying rank of a random matrix. In In Section \ref{sec:radon}, we apply Radon's theorem to obtain lower bounds for underlying rank. Section \ref{sec:zebrafish}, we give a proof of concept by applying this analysis to calcium imaging data from the optic tectum in zebrafish larvae. Our discussion of underlying rank will continue into the next chapter, where we will relate underlying rank to oriented matroid theory and show that computing underlying rank is hard. 

\section{The geometry of underlying rank}
\label{sec:geom}

We can decompose a rank $d$ matrix $A$ as a pair of point arrangements in $\R^d$. The order of entries of $A$ carries information about these point arrangements which can help us estimate the underlying rank  of $A$. 

In this section, we introduce the key idea of this paper, that which underlies all of our results: rank $d$ matrices correspond to point arrangements in $\mathbb{R}^d$ (Figure \ref{fig:point_arr}), and the ordering of matrix entries encodes geometric information about this point arrangement (Figure \ref{fig:geom_pic}).

\begin{figure}[h]
\begin{center}
\includegraphics[width = 4 in]{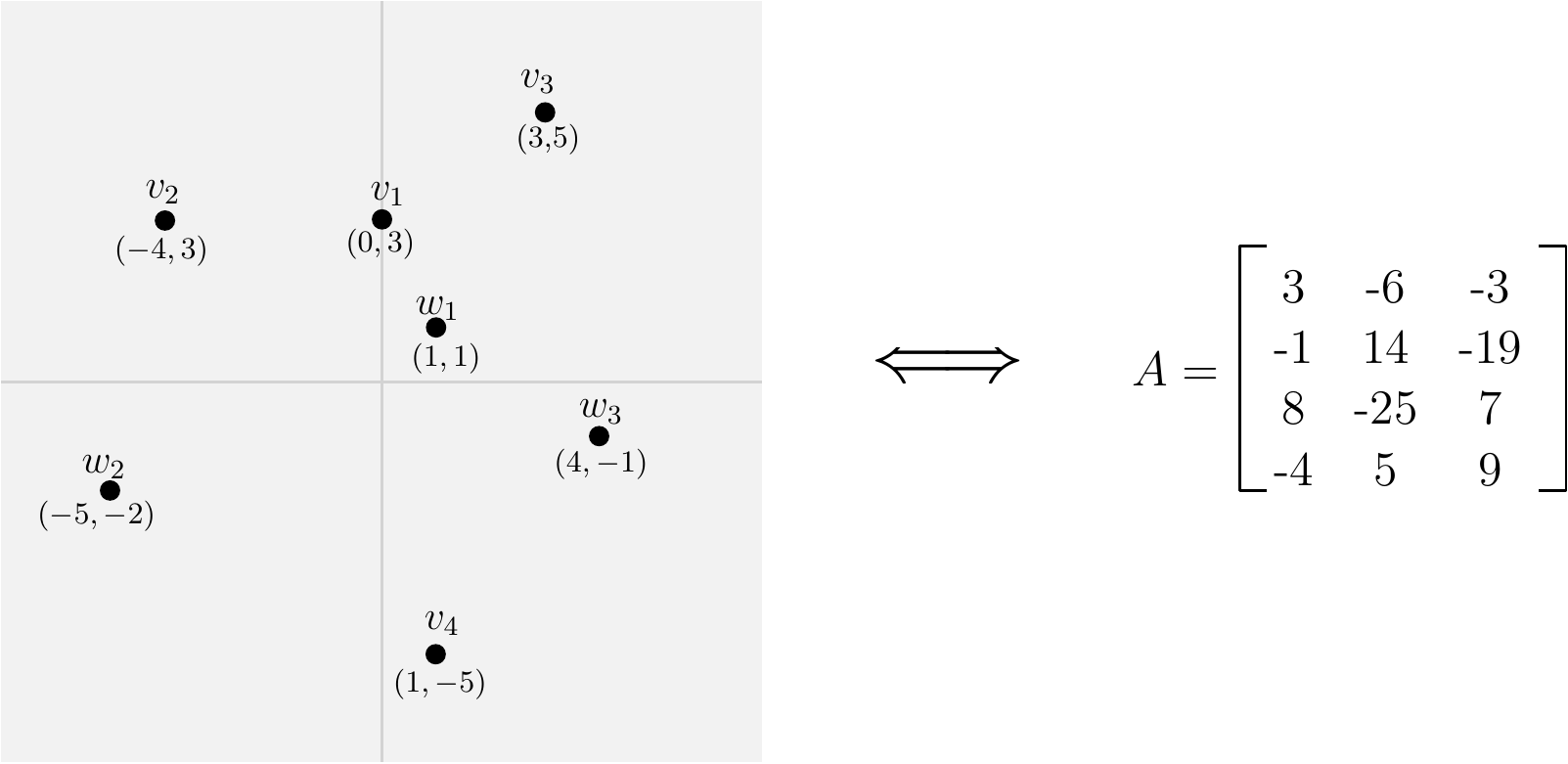}
\caption{\label{fig:point_arr}  A point arrangement in $\R^2$ generating the matrix $A$.} 
\end{center}
\end{figure}

\begin{prop} Let $A$ be a $m\times n$ matrix of rank $d$ with real entries. Then there exist point arrangements $ v_1,  v_2, \ldots,  v_m\in \R^d$, and $ w_1,  w_2, \ldots,  w_n \in \R^d$ such that $A_{ij} =  v_i\cdot  w_j$. Further, if $A$ is positive semidefinite, then we can take these two point arrangements to be the same. 
\end{prop}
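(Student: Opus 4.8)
The plan is to recognize this proposition as essentially a restatement of the existence of a rank factorization, supplemented in the symmetric case by the spectral decomposition of a positive semidefinite matrix.

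First I would establish the general statement. Since $A$ has rank $d$, its column space is a $d$-dimensional subspace of $\R^m$; choose a basis $u_1, \ldots, u_d$ for it and let $U$ be the $m \times d$ matrix with these vectors as columns. Each column of $A$ is a unique linear combination of $u_1, \ldots, u_d$, and recording these coefficients defines an $n \times d$ matrix $V$ with $A = U V^T$. Writing $v_i \in \R^d$ for the $i$-th row of $U$ and $w_j \in \R^d$ for the $j$-th row of $V$, the $(i,j)$ entry of $UV^T$ is exactly $v_i \cdot w_j$, as required. Equivalently, one can take the reduced singular value decomposition $A = P\Sigma Q^T$ with $P \in \R^{m\times d}$, $Q \in \R^{n \times d}$, and $\Sigma$ the $d \times d$ diagonal matrix of nonzero singular values, then set $U = P\Sigma^{1/2}$ and $V = Q\Sigma^{1/2}$; this symmetric choice will be convenient for motivating the positive semidefinite case.

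Next, for the positive semidefinite case, $A$ is symmetric (so $m = n$) with all eigenvalues nonnegative. Diagonalize $A = Q\Lambda Q^T$ with $Q$ orthogonal and $\Lambda$ diagonal; exactly $d$ of the diagonal entries are positive and the rest are zero. Taking square roots of the positive eigenvalues and keeping only the corresponding columns of $Q$ produces a matrix $C \in \R^{n \times d}$ with $CC^T = A$. Letting $v_i = w_i$ be the $i$-th row of $C$ then gives $A_{ij} = v_i \cdot w_j$ for a single point arrangement $v_1, \ldots, v_n$.

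The hard part here is not really a difficulty but a point worth flagging: the square-root step in the symmetric case requires $\Lambda$ to have no negative entries, and this is precisely where positive semidefiniteness is used; for a general symmetric (or arbitrary) matrix one only obtains $A_{ij} = v_i \cdot w_j$ with $v_i \neq w_i$ in general, or else a pseudo-Euclidean bilinear form. I would also note in passing that the dimension $d$ is optimal: if all the $v_i$ and $w_j$ lay in a subspace of dimension $d' < d$, then $A$ would factor through rank $d'$, contradicting $\rank A = d$; hence $d$ is simultaneously achievable and minimal, which is exactly what is needed for the subsequent identification of rank-$d$ matrices with point arrangements in $\R^d$.
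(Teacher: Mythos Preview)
Your proof is correct and follows essentially the same approach as the paper: both invoke the rank factorization $A = VW$ (the paper just names it, you construct it explicitly via a column-space basis or SVD) and the Cholesky-type factorization $A = CC^T$ in the positive semidefinite case. The paper's version is a two-line appeal to these factorizations, while you have additionally spelled out the constructions and added the observation that $d$ is optimal, which is fine but goes slightly beyond what the proposition asserts.
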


\begin{proof}
Since $A$ is a $m\times n$ matrix of rank $d$, $A$ has a rank factorization 
$A = VW$
where $V$ is $m\times d$ and $W$ is $d\times n$. Further, if $A$ is positive semidefinite, we can take $W = V^T$. 
Let $v_i$ be the $i$-th row of $V$ and $w_j$ be the $j$-th column of W. By matrix multiplication, we have 
\begin{align*}
\begin{pmatrix}
A_{11}& A_{12}& \ldots &A_{1n}\\
A_{21}& A_{22}& \ldots &A_{2n}\\
\vdots &\vdots & \ddots & \vdots\\
A_{m1} & A_{m2}& \ldots & A_{mn}
\end{pmatrix} &= 
\begin{pmatrix}
v_1\cdot  w_1& v_1\cdot  w_2& \ldots &v_1\cdot  w_n\\
v_2\cdot  w_1& v_2\cdot  w_2& \ldots &v_2\cdot  w_n\\
\vdots &\vdots & \ddots & \vdots\\
v_m\cdot  w_1& v_m\cdot  w_2& \ldots & v_m\cdot  w_n
\end{pmatrix}
\end{align*}
\end{proof}

From this, we have a correspondence between matrices of underlying rank $d$ and point arrangements in $\mathbb{R}^d$. First, we restate the definition of underlying rank.

\begin{defn}
The \textit{underlying rank}  $\ur(A)$ of  $A$  is the minimum value of $d$ such that there exists a rank-$d$ matrix $B$ such that $A_{ij} \leq A_{kl}$ if and only if $B_{ij} \leq B_{kl}$:
\begin{align*}
   \ur(A) = 
    \min\{\rank(B) \mid  A_{ij} \leq A_{kl} 
    \Longleftrightarrow B_{ij} \leq B_{kl}\}
\end{align*}
In other words, it is the minimum value of $d$ such that there exists a rank-$d$ matrix $B$ and a strictly increasing function $f$ such that $A_{ij}=f(B_{ij})$.
\end{defn}

It is convenient to define a matrix with small integer entries which has the same order of $A$. Thus, we define the \emph{order matrix} of $A$ to be a matrix entries
\begin{align*}
\order(A)_{ij} = |\{a_{kl} \mid a_{kl} \leq a_{ij}\}|.
\end{align*}

\begin{cor}
If $A$ is a $m\times n$ matrix of \textbf{underlying} rank $d$ with real entries, then there exist points $$ v_1,  v_2, \ldots,  v_m,  w_1,  w_2, \ldots,  w_n \in \R^d$$ and a monotone function $f$ such that $A_{ij} = f( v_i\cdot  w_j)$. 
If $A$ is a positive semidefinite matrix, then we can choose the underlying point arrangements to be the same, with $v_i = w_i$. 
\end{cor}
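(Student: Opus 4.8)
The plan is to read the statement directly off the preceding Proposition once the definition of underlying rank is unpacked. By definition, $\ur(A)=d$ means there is a matrix $B$ of rank exactly $d$ and a strictly increasing function $f$ with $A_{ij}=f(B_{ij})$ for all $i,j$. First I would apply the Proposition to $B$: a rank factorization $B=VW$ with $V$ an $m\times d$ matrix and $W$ a $d\times n$ matrix yields vectors $v_i\in\R^d$ (the rows of $V$) and $w_j\in\R^d$ (the columns of $W$) with $B_{ij}=v_i\cdot w_j$. Composing gives $A_{ij}=f(v_i\cdot w_j)$, which is the first assertion; $f$ is monotone because it is strictly increasing. This part is an immediate substitution.

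For the positive semidefinite clause, the identical computation goes through provided the witnessing matrix $B$ can itself be chosen positive semidefinite: then the positive semidefinite part of the Proposition gives $B=VV^T$, so $v_i=w_i$ and $A_{ij}=f(v_i\cdot v_j)$ with a single point arrangement. Two preliminary observations reduce the problem to exactly this point. First, any $B$ realizing the order type of $A$ is forced to be symmetric: if $A_{ij}=A_{ji}$ then $A_{ij}\le A_{ji}$ and $A_{ji}\le A_{ij}$, hence $B_{ij}\le B_{ji}$ and $B_{ji}\le B_{ij}$. Second, $A$ itself is a positive semidefinite matrix realizing its own order type, so that order type is at worst realizable by \emph{some} positive semidefinite matrix. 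The remaining task is to show that among symmetric matrices realizing the order type of $A$, the minimum possible rank is attained by a positive semidefinite one — equivalently, that restricting the definition of $\ur$ to positive semidefinite matrices does not change its value when $A$ is positive semidefinite.

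This last reduction is the crux, and the main obstacle I anticipate. The cleanest route seems to be to work inside the linear subspace $L$ cut out by the equalities among entries that $A$ forces, within which the symmetric matrices having exactly the order type of $A$ form a relatively open convex cone $\mathcal C$ containing both $A$ (positive semidefinite, of full rank $r\ge d$) and a rank-$d$ witness $B$; one then wants a rank-$d$ element of $\mathcal C\cap\mathcal S_+$, where $\mathcal S_+$ is the positive semidefinite cone. A plausible attack is to deform $B$ toward $\mathcal S_+$ along a path in $\mathcal C$ while controlling the rank, or to characterize directly which preorders on index pairs are ``positive semidefinite order types'' and show the obstructions to low rank coincide in the symmetric and the positive semidefinite settings. (The $3\times 3$ preorder in which every entry equals some $\alpha$ except the $(2,3)$ and $(3,2)$ entries, which equal a common $\beta<\alpha$, is instructive: the matrix of that form has determinant $-\alpha(\alpha-\beta)^2$, so it has symmetric realizations — even of rank $2$ — but no positive semidefinite realization at any rank, and hence is simply not an order type that arises from a positive semidefinite matrix; this is the sort of rigidity one would exploit.) Everything else in the corollary is direct substitution, so this positive semidefinite rank-minimization step is the only place genuine work is required; if one prefers, the corollary may instead be phrased with $d$ taken to be the minimal rank over positive semidefinite matrices with the order of $A$, in which case the argument is immediate.
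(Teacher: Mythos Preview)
The paper states this corollary without proof, treating it as an immediate consequence of the preceding Proposition and the definition of underlying rank. Your argument for the first clause is exactly this intended one-line deduction: take a rank-$d$ witness $B$ with $A_{ij}=f(B_{ij})$, apply the Proposition to $B$, and compose.

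You have, however, put your finger on a genuine subtlety in the positive semidefinite clause that the paper simply does not address. As you observe, the definition of underlying rank only guarantees a rank-$d$ matrix $B$ with the same order as $A$; nothing forces $B$ to be positive semidefinite even when $A$ is, and the PSD half of the Proposition needs $B$ PSD to yield a single point arrangement. So the clause does \emph{not} follow from the definitions as stated, and the paper offers no argument to bridge this.

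Your proposal then tries to close the gap by showing that the minimum rank over symmetric order-realizers equals the minimum over PSD order-realizers. You are candid that this is the crux and that you do not have a proof; indeed your sketched deformation-in-a-cone idea is suggestive but not an argument, and it is not clear the claim is even true in general. Your final remark is the right practical resolution: read the PSD clause as a statement about the minimal rank among \emph{positive semidefinite} witnesses (a separate quantity that one might call the ``PSD underlying rank''), in which case the corollary is again immediate from the Proposition. That is almost certainly what the authors intend, and it is what the subsequent uses of the corollary in the paper actually need.
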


\begin{figure}[h]
\begin{center}
\includegraphics[width = 4 in]{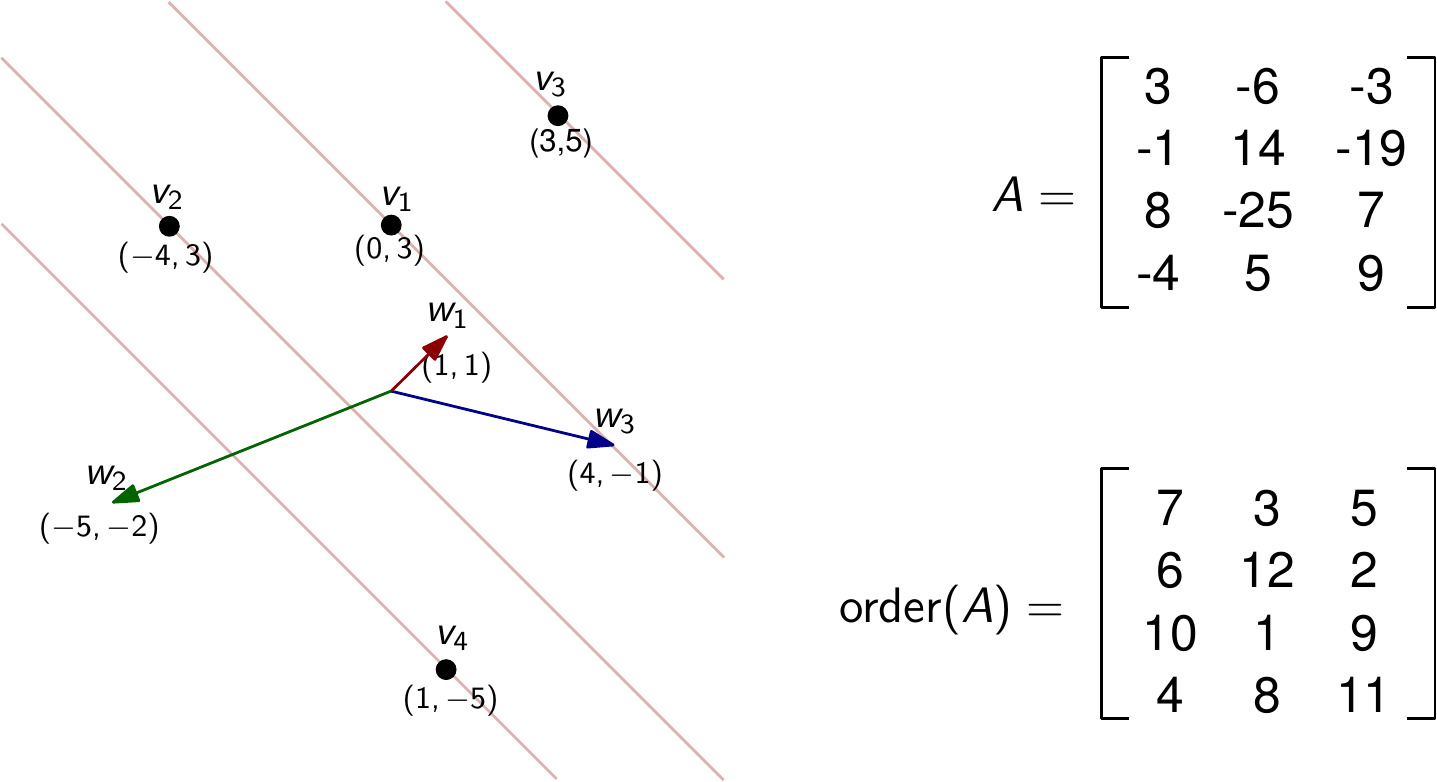}
\end{center}
\caption[ A rank-two representation of the matrix $A$.]{\label{fig:geom_pic}  A rank-two representation of the matrix $A$. The points of $V$  are shown as black disks and the points of $W$ are shown as colored vectors. In light red, a hyperplane with normal vector $w_1$ is shown sweeping across space. It encounters the points of $V$ in the order $v_4, v_2, v_1, v_3$.  }
\end{figure}

\begin{defn}
A \emph{rank-d representation} of a matrix $A$ is a set of points 
$$ v_1,  v_2, \ldots,  v_m,  w_1,  w_2, \ldots,  w_n \in \R^d$$ and a monotone function $f$ such that $A_{ij} = f( v_i\cdot  w_j)$. 
\end{defn}

See Figure \ref{fig:geom_pic} for an example of a rank-2 representation of a matrix.


\begin{ex}\label{ex:rank_rep}
The matrix $A$ has underlying rank at most two.
\begin{align*}
A = 
\begin{pmatrix}
7 & 3& 5\\
6 & 12& 2 \\
10& 1 & 9 \\
4 & 8& 11 \\
\end{pmatrix}
\end{align*}
This is because the order of entries of $A$ is the same as the order of entries of $B$, which has rank two:
\begin{align*}
B = 
\begin{pmatrix}
3 & -6 & -3 \\
-1 & 14 & -19 \\
8 & -25 & 7 \\
-4 & 5 & 9 \\
\end{pmatrix}
\end{align*}
A rank-two representation of $A$ is given in Figure \ref{fig:geom_pic}. We have $w_1, \ldots, w_4$ and $v_1, \ldots, v_4$ given by the rows of the matrices
\begin{align*} W= 
\begin{pmatrix}
 1&1 \\
-5 & -2 \\
4 & -1 \\
\end{pmatrix}
\qquad 
V =
\begin{pmatrix} 
0&3 \\
-4 & 3\\
3 & 5 \\
1 & -5  \\
\end{pmatrix}.
\end{align*}
Notice that the order of entries within the first column corresponds to the order in which a hyperplane normal to the vector $w_1$ sweeps past the points $v_1, v_2, v_3, v_4$. In Lemma \ref{lem:sweep_order}, we will show that this is always true. Using this result, together with Corollary \ref{cor:rank_one}, we will show that this matrix has underlying rank exactly two. 

\end{ex}

As we saw in Example \ref{ex:rank_rep}, we are able to recover a considerable amount of geometric information about the point arrangement $ v_1, \ldots,  v_m$ using the order of entries of $A$. We can use this information to bound the underlying rank  of $A$. 

We first notice that the order of entries in each column of a matrix $A$ corresponds to the order in which a sequence of  hyperplanes sweep past the points $ v_1, \ldots,  v_m$ in any rank-$d$ representation of $A$. This is illustrated in Figure \ref{fig:geom_pic}. 

\begin{lem}\label{lem:sweep_order} Let $ v_1, \ldots,  v_m,  w_1, \ldots,  w_n$, $f$ be a rank-$d$ realization of $A$.

Let $r_{j}(t)$ be the hyperplane defined by $r_j(t) = \{ w\mid  v_j \cdot  w = t\}$. The order of entries in the $j$-th row of $A$ is equal to the order in which the hyperplane $r_{j}(t)$ encounters the points $ w_1, \ldots,  w_m$ as we increase $t$. 

Likewise, let $c_{i}(t)$ be the hyperplane defined by $c_i(t) = \{ v\mid  w_i \cdot  v = t\}$. The order of entries in the $i$-th column of $A$ is equal to the order in which the hyperplane $c_{i}(t)$ encounters the points $ v_1, \ldots,  v_m$ as we increase $t$.

\end{lem}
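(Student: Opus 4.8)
The plan is to unwind the definitions and reduce the claim to a statement about a single real-valued function of $t$. First I would fix a rank-$d$ realization $v_1,\ldots,v_m, w_1,\ldots,w_n, f$ of $A$, so that $A_{ij} = f(v_i \cdot w_j)$ with $f$ strictly increasing. Since $f$ is strictly increasing, for any two entries $A_{jk}$ and $A_{j\ell}$ in the same row $j$ we have $A_{jk} \le A_{j\ell}$ if and only if $v_j \cdot w_k \le v_j \cdot w_\ell$. Thus the order of entries in row $j$ of $A$ is exactly the order of the real numbers $v_j \cdot w_1, \ldots, v_j \cdot w_n$, and it suffices to show that this latter order coincides with the order in which the moving hyperplane $r_j(t) = \{w \mid v_j \cdot w = t\}$ encounters the points $w_1, \ldots, w_n$ as $t$ increases.

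The key observation is that the point $w_k$ lies on the hyperplane $r_j(t)$ precisely when $t = v_j \cdot w_k$; that is, the moving hyperplane $r_j(t)$ passes through $w_k$ at the unique ``time'' $t_k := v_j \cdot w_k$. Hence the order in which $r_j(t)$ encounters the points $w_1, \ldots, w_n$ as $t$ increases from $-\infty$ is, by definition, the order induced by sorting the values $t_1, \ldots, t_n = v_j \cdot w_1, \ldots, v_j \cdot w_n$ in increasing order. (If several $w_k$ have the same value $v_j \cdot w_k$, the hyperplane passes through them simultaneously, matching the fact that the corresponding entries of $A$ are equal.) Combining this with the previous paragraph gives the row statement. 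The column statement is identical after transposing the roles of $V$ and $W$: for entries $A_{ik}$ and $A_{i\ell}$ in the same column $i$, strict monotonicity of $f$ gives $A_{ki} \le A_{\ell i}$ iff $w_i \cdot v_k \le w_i \cdot v_\ell$, and $c_i(t) = \{v \mid w_i \cdot v = t\}$ passes through $v_k$ at time $w_i \cdot v_k$, so the order of entries down column $i$ is the order in which $c_i(t)$ sweeps past $v_1, \ldots, v_m$.

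There is no real obstacle here — the lemma is essentially a bookkeeping identity between three descriptions of the same total preorder on an index set (the numeric order of the entries, the order of the dot products, and the sweep order of the hyperplane). The only point requiring a word of care is the treatment of ties: equal entries of $A$ correspond to equal dot products, which correspond to the hyperplane meeting several points at once, so the statement should be read as an equality of preorders (weak orders) rather than strict total orders, consistent with how ``order of entries'' is used elsewhere in the paper. I would simply note this parenthetically rather than belabor it.
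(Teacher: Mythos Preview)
Your proposal is correct and follows essentially the same approach as the paper's proof: both arguments define the crossing time of the sweeping hyperplane at each point as the corresponding dot product, then invoke the monotonicity of $f$ to identify the order of the $A$-entries with the order of those dot products, handling the column case by symmetry. Your added remark about ties (reading the statement as an equality of preorders) is a nice clarification the paper leaves implicit.
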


\begin{proof}
Set $t_{ij}$ to be the ``time" that $r_j(t)$ crosses the point $ w_i$. That is, $t_{ij}$, $ w_i \cdot  v_j = t_{ij}$. The order in which the sweeping hyperplane $r_i(t)$ crosses the points $ w_1, \ldots,  w_n$ is the order of the values of $t_{1j}, \ldots, t_{nj}$. 

Thus, $v_j$ is the $k^{th}$ point the hyperplane $h_i(t)$ hits if $t_{ij}$ is the $k^{th}$ smallest value among $t_{1j}, \ldots, t_{nj}$. Now, by the definition of a rank $d$ realization, $A_{ij} = f( w_i\cdot  v_j) = f(t_{ij})$. 
Since $f$ is monotone, the values of $A_{ij} = f(t_{ij})$ are in the same order as the values of $t_{ij}$. Thus, the order within the $j^{th}$ row is the same as the order in which the hyperplane $r_{j}(t)$ encounters the points $ w_1, \ldots,  w_m$ as we increase $t$. 

By the same argument, we can show that the order of entries in the $i$-th column of $A$ is equal to the order in which the hyperplane $c_{i}(t)$ encounters the points $ v_1, \ldots,  v_m$ as we increase $t$.

 \end{proof}

\begin{cor}\label{cor:rank_one}
If $A$ is a matrix of underlying rank one, then the order of entries in each row of $A$ is either the same as or the reverse of the order of entries in the first row of $A$. Likewise, the order of entries in each column of $A$ is either the same as or the reverse of the order of entries in the first column of $A$.
\end{cor}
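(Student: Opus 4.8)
The plan is to reduce immediately to the rank-one factorization and then argue coordinatewise. By the corollary preceding the statement (the rank-$d$ representation of a matrix of underlying rank $d$), an $m\times n$ matrix $A$ of underlying rank one admits scalars $v_1,\dots,v_m\in\R$ and $w_1,\dots,w_n\in\R$ together with a strictly increasing function $f$ such that $A_{ij}=f(v_iw_j)$. Since $f$ is strictly increasing, $A_{ij}\le A_{k\ell}$ if and only if $v_iw_j\le v_kw_\ell$, so it suffices to prove the statement for the rank-one matrix $B$ defined by $B_{ij}=v_iw_j$; by construction the order of its entries is identical to that of $A$.

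Next I would fix a row index $i$ and compare row $i$ of $B$, namely $(v_iw_1,\dots,v_iw_n)$, with the first row $(v_1w_1,\dots,v_1w_n)$. Both are scalar multiples of the common vector $w=(w_1,\dots,w_n)$. Multiplying $w$ by a positive scalar preserves the order of its entries, multiplying by a negative scalar reverses it, and multiplying by $0$ collapses all entries to a single value. Hence if $v_i$ and $v_1$ have the same sign then row $i$ and row $1$ have the same order; if they have opposite signs, row $i$ has the reversed order; and if either $v_i$ or $v_1$ is zero, the corresponding row is constant, which is vacuously consistent with being either the same as or the reverse of the other row. The statement about columns is the identical argument with the roles of the $v_i$ and the $w_j$ interchanged: column $j$ of $B$ is $(v_1w_j,\dots,v_mw_j)=w_j\cdot(v_1,\dots,v_m)$, so its order is that of $(v_1,\dots,v_m)$ when $w_j>0$, the reverse when $w_j<0$, and trivial when $w_j=0$.

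I would also note that this is exactly the one-dimensional special case of Lemma \ref{lem:sweep_order}: in $\R^1$ the sweeping hyperplane $c_i(t)=\{v\mid w_i v=t\}$ is a single moving point, which traverses $v_1,\dots,v_m$ in increasing order of their values when $w_i>0$ and in decreasing order when $w_i<0$; thus the order of column $i$ is determined up to reversal by $\sign(w_i)$, and likewise for rows via the $r_j(t)$. Either route gives the result, but the direct factorization argument is the cleaner one to write down.

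The computation here is entirely routine; the only point requiring any care is the bookkeeping around zero entries of the $v_i$ or $w_j$ — that is, constant rows or columns — and settling on the convention that a constant sequence is regarded as compatible with any linear order (which is the harmless reading, since a constant row or column cannot strictly contradict the order of another row or column). With that convention fixed there is no real obstacle, as the substance of the corollary is already packaged in Lemma \ref{lem:sweep_order} together with the rank-one factorization.
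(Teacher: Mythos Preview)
Your proof is correct and is essentially the same as the paper's: the paper invokes Lemma~\ref{lem:sweep_order} and observes that in $\R^1$ there are only two possible sweep orders, which is precisely your sign-of-$v_i$ argument phrased geometrically. You even anticipate this in your third paragraph, so there is no substantive difference between the two.
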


\begin{proof}
Let $v_1, \ldots, v_n, w_1, \ldots, w_n,f$ be a rank-one realization of $A$. By Lemma \ref{lem:sweep_order}, the order of entries in the $j$-th row of $A$ is the order in which the hyperplane perpendicular to $v_j$ sweeps past the points $w_1, \ldots, w_n$. Since the points $w_1, \ldots, w_n$ are on the real line, there are only two possible orders to sweep past them, one of which is the reverse of the other. 
\end{proof}

Using Corollary \ref{cor:rank_one}, we can show that the matrix $A$ in Example \ref{ex:rank_rep} has underlying rank exactly two. We notice that the order in the second column is neither the same as nor reversed from the order in the first column, thus the underlying rank is greater than one. Since we have already demonstrated that it is as most two by providing a rank two representation, this shows that it is exactly two. As it turns out, $A$ has the highest possible underlying rank for a $3\times 3$ matrix.


\begin{prop}
If $A$ is a $n\times n$ matrix, then $\ur(A)\leq n-1$.
\end{prop}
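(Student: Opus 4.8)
The plan is to realize the entry-order of $A$ by a matrix whose determinant vanishes. If $\det A = 0$ we are done at once, since $\ur(A)\le\rank(A)\le n-1$; so assume $\det A\neq 0$. Let $a_1<a_2<\cdots<a_N$ (with $N\le n^2$) be the distinct entries of $A$, and for $c=(c_1,\dots,c_N)$ let $B(c)$ be the matrix obtained from $A$ by replacing every occurrence of $a_k$ by $c_k$. Whenever $c_1<c_2<\cdots<c_N$, the matrix $B(c)$ has exactly the same order of entries as $A$ (distinct values stay distinct and in order, equal values stay equal), so $\ur(A)\le\rank(B(c))$. Hence it suffices to produce some such $c$ with $\det B(c)=0$.

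The basic tool is the constant shift. For any $t\in\R$ we have $B(c)+tJ=B(c+t\mathbf 1)$, where $J=\mathbf 1\mathbf 1^{\top}$, and this again has the same entry-order as $A$. Since $J$ has rank one, the matrix determinant lemma gives that $t\mapsto\det\!\big(B(c)+tJ\big)=\det B(c)+t\,K(B(c))$ is affine in $t$, where $K(X):=\mathbf 1^{\top}\mathrm{adj}(X)\,\mathbf 1$. Therefore, if $K(B(c))\neq 0$ for even one $c$ with $c_1<\cdots<c_N$, this affine function has a real root $t_0$, and $B(c)+t_0J$ has the required order and rank $\le n-1$. In particular I would first test this on $A$ itself ($c=(a_1,\dots,a_N)$): if $\det(A+tJ)$ is nonconstant in $t$, we are finished immediately, which already covers most matrices.

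The remaining (degenerate) case is when $K(B(c))=0$ for all admissible $c$. As $c\mapsto K(B(c))$ is a polynomial vanishing on a nonempty open set, it vanishes identically, which is equivalent to $\det(X+tJ)\equiv\det X$ for every matrix $X$ in the subspace $L$ of matrices that are constant on the level sets of $A$ (note $J\in L$). The plan is to show that this forces $\det X=0$ for all $X\in L$, in particular $\det A=0$, contradicting our assumption. I would do this either by a limit argument — sending $c_1\to-\infty$ and $c_N\to+\infty$ to pull the extreme level sets of $A$ apart, so that $\det B(c)$ becomes a polynomial in these two variables whose top-degree behavior is controlled by maximal matchings inside those level sets, and arguing that $\det B(c)$ must then change sign on the cone $\{c_1<\cdots<c_N\}$ and hence vanish somewhere in it — or more algebraically, by showing that vanishing of the $J$-directional derivative of $\det$ on the coordinate subspace $L$ forces all monomial coefficients of $\det|_L$ to cancel, i.e.\ that the level-set partition of $A$ is so coarse (refinable to rows or to columns, or otherwise rank-deficient) that every block-constant matrix is singular.

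The main obstacle is precisely this last step, the degenerate case where constant shifts alone cannot drop the rank. This is where squareness is essential: the combination of the freedom to re-space the values $c_1<\cdots<c_N$ (not merely shift them) with the fact that a rank-$\le n$ representation of an $n\times n$ matrix involves only $n$ vectors in $\R^{n}$, which cannot affinely span $\R^n$ and hence lie in a common affine hyperplane, is what makes $n-1$ dimensions attainable. Turning that geometric slack into the clean determinantal conclusion above — rather than the one-line affine argument of the generic case — is the delicate part of the proof.
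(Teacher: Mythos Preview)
Your approach is the same as the paper's: add a scalar multiple of $J=\mathbf 1\mathbf 1^{\top}$ and use the matrix determinant lemma to kill the determinant. The paper writes it compactly as: set $\lambda=\mathbf 1^{\top}A^{-1}\mathbf 1$, perturb $A$ within its order class if necessary so that $\lambda\neq 0$, then take $B=A-\tfrac{1}{\lambda}J$, which has $\det B=(1-\tfrac{1}{\lambda}\cdot\lambda)\det A=0$. The paper's one-line ``perturb so that $\lambda\neq 0$'' is precisely your generic case $K(B(c))\not\equiv 0$, and the paper simply asserts it without further discussion.

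Your extended treatment of the degenerate case $K(B(c))\equiv 0$ goes beyond what the paper does, but you do not actually complete it: both the limit argument and the algebraic argument remain sketches, and you say so yourself. So the substantive content of your proposal --- the affine shift plus matrix determinant lemma --- matches the paper exactly; the part where you go further is left unfinished. In short, you have the paper's proof, plus an honest identification of the soft spot the paper glosses over.
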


\begin{proof}
 By the matrix determinant lemma \cite
 {ding2007eigenvalues}(Lemma 1.1), for any ${u}, {v}\in \R^n$,
\begin{align*}{\displaystyle \det \left(A + {uv} ^{\textsf {T}}\right)=\left(1+ {v} ^{\textsf {T}}A ^{-1} {u} \right)\,\det \left(A \right)\,.}
\end{align*}
Let $\mathbf  1$ be the all ones vector. Let $\lambda = \mathbf 1^{\textsf {T}}A\inv\mathbf 1$. Notice that we can perturb $A$ without changing the order of entries so that $\lambda \neq 0$. Now, let $ u =  \mathbf 1$, $ v = \frac{-1}{\lambda}\mathbf 1$,  $B = A +  u v^{\textsf {T}}$. Notice that since $ u v^{\textsf {T}}$ is the matrix whose entries are all $-\lambda$, the order of entries in $B$ matches the order of entries in $A$. By the matrix determinant lemma, 
$$\det(B ) = \left( 1+ {v} ^{\textsf {T}}A ^{-1} {u} \right)\,\det \left(A \right) = 0.$$
Thus, $B$ is a matrix of rank at most $n-1$ whose entries are in the same order as those of $A$.

\end{proof}

In small cases, it is possible for this bound to reach equality: for $n = 2, 3, 4$, we can give examples of $n\times n$ matrices of rank $n-1$. In particular, any $2\times 2$ order matrix has underlying rank at least one, since the only rank-zero matrix is the all zeros matrix, which is not compatible with any strict ordering. In Example  \ref{ex:rank_rep}, we gave an example of a $3\times 3$ matrix with underlying rank two. In Example \ref{ex:allowable}, we give an example of a $4\times 4$ matrix of underlying rank three. We do not know whether, for all $n$, it is possible for an $n\times n$ matrix to have rank $n-1$.

\section{Minimal nodes: a practical tool to estimate underlying rank}
\label{sec:minimal}

In this section, we introduce the minimal, maximal, and extremal nodes of a matrix. These are the first practical tools we give for estimating the underlying rank. 

\begin{figure}[ht!]
    \centering
    \includegraphics[width = 4 in]{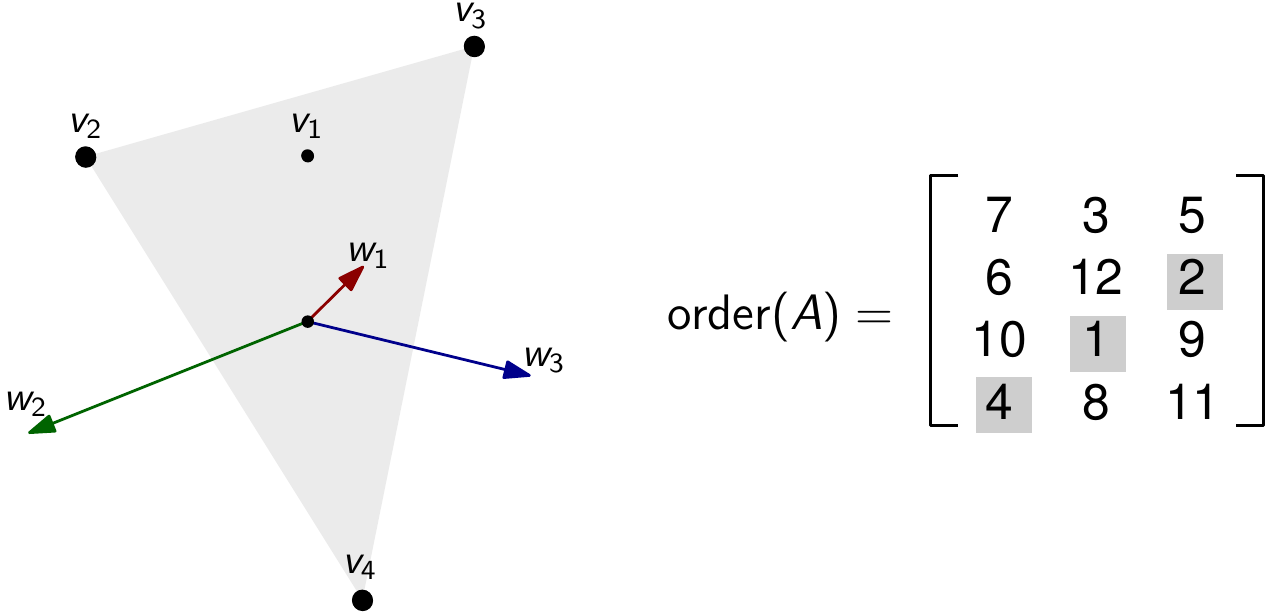}
    \caption[Minimal nodes]{The indices 2, 3 and 4 are minimal for the matrix $A$. These vertices are on the boundary of $\conv(v_1, v_2, v_3, v_4)$.  
    \label{fig:minimal}}
\end{figure}

\subsection{Minimal nodes}

\emph{Minimal}, \emph{maximal}, and \emph{extremal} nodes  are a useful feature for identifying low-rank matrices. 

\begin{defn}
Let $A$ be a $m\times n$ matrix. A row index $i\in [m]$ is a \emph{minimal node} if  $A$ has a column index $j\in [n]$ such that $A_{ij} < A_{kj}$ for all $k\in [m].$ That is, $i$ is minimal if there exists a column $j$ whose smallest entry is in the $i$-th row. 
An index $i\in [m]$ is a \emph{maximal node}  if  $A$ has a column index $j\in [n]$ such that $A_{ij} > A_{kj}$ for all $k\in [m].$ We say $i$ is an \emph{extremal node} if it is either maximal or minimal. 
\end{defn}

Extremal nodes capture a feature of the underlying point arrangement: in order to correspond to an extremal node, a point must be a vertex of the convex hull of the full set of points, illustrated in Figure \ref{fig:minimal}. 

\begin{defn}
 The \emph{convex hull} of a set of points  $v_1, \ldots, v_m \subset\R^d$ is 
\begin{align*}
    \conv\left(\{v_j\}_{j\in [m]}\right) = \left\{\sum_{i = 1}^m \lambda_j v_j\, \bigg\rvert\, \sum_{j=1}^m \lambda_j = 1, \lambda_j \geq 0\right\}
\end{align*}
\end{defn}

\begin{prop}
If $i$ is an extremal node of $A$, then $v_i$ is a vertex of $\conv(\{ v_j\}_{j\in [m]})$ for any $ v_1,  v_2, \ldots,  v_m\in \R^d$ rank-$d$ representation of $A$. \label{prop:convex_hull}
\end{prop}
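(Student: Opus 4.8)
The plan is to translate the notion of an extremal node into the language of linear functionals and then invoke the standard fact that a point which uniquely optimizes a linear functional over a finite set is a vertex of its convex hull. First I would fix any rank-$d$ representation $v_1, v_2, \ldots, v_m, w_1, \ldots, w_n, f$ of $A$, so that $A_{ij} = f(v_i \cdot w_j)$ with $f$ monotone. Suppose $i$ is a minimal node, witnessed by a column $j$, so $A_{ij} < A_{kj}$ for all $k \in [m]$ with $k \neq i$. Since $f$ is order-preserving, $f(a) < f(b)$ forces $a < b$, so this strict inequality transfers to $v_i \cdot w_j < v_k \cdot w_j$ for every $k \neq i$; that is, $v_i$ is the unique minimizer of the linear functional $v \mapsto v \cdot w_j$ over $\{v_1, \ldots, v_m\}$.

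Next I would show a unique minimizer cannot be a convex combination of the remaining points. If $v_i = \sum_{k \neq i} \lambda_k v_k$ with $\lambda_k \geq 0$ and $\sum_{k\neq i}\lambda_k = 1$, then at least one $\lambda_k$ is positive, and pairing both sides with $w_j$ gives $v_i \cdot w_j = \sum_{k\neq i}\lambda_k (v_k \cdot w_j) > v_i \cdot w_j$, since every term with $\lambda_k > 0$ strictly exceeds $\lambda_k (v_i \cdot w_j)$ and the coefficients sum to $1$ — a contradiction. Hence $v_i \notin \conv(\{v_k\}_{k\neq i})$, which is exactly the statement that $v_i$ is a vertex of $\conv(\{v_j\}_{j\in[m]})$. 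The maximal-node case is identical with the inequalities reversed (equivalently, run the same argument with $w_j$ replaced by $-w_j$), and since an extremal node is by definition either minimal or maximal, this completes the argument.

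I do not anticipate any serious obstacle; the only subtlety worth stating carefully is the use of monotonicity of $f$: we only need $f$ to be weakly increasing, because the strict inequality needed between the inner products is inherited from the strict inequality $A_{ij} < A_{kj}$ guaranteed by minimality, rather than from any strictness of $f$.
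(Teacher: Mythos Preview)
Your proof is correct and follows essentially the same approach as the paper: both identify that an extremal node corresponds to a point uniquely optimizing the linear functional $x \mapsto w_j \cdot x$ over $\{v_1,\ldots,v_m\}$, and conclude it must be a vertex of the convex hull. Your argument is in fact slightly more careful than the paper's, since you explicitly use the \emph{strict} inequality to rule out $v_i$ being a convex combination of the others, whereas the paper simply invokes the fact that linear functionals attain their extrema on vertices without addressing uniqueness.
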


\begin{proof}
If $i$ is an extremal node of $A$, then the linear functional $f(x) = w_j\cdot x$ is either maximized or minimized on $v_i$ among the points $v_1, \ldots, v_m.$ The maximum and minimum values of a linear function on the convex polytope $\conv(\{ v_j\}_{j\in [m]})$ occur on vertices. Thus, both the maximum and minimum value of   $f$ among  $v_1, \ldots, v_m$ must occur on a vertex. 
\end{proof}


\subsection{Expected numbers of minimal nodes}

Because extremal nodes must correspond to vertices of $\conv( v_1, \ldots,  v_m)$, the expected number of extremal nodes is bounded by the expected number of vertices of a random polytope. As dimension increases, the expected number of vertices increases. Thus, we can use the number of extremal nodes to estimate the underlying rank. One complication is that this expected number of vertices on the convex hull of $m$ random points in $\R^d$ depends on the probability distribution used to choose the points. Another complication is that not all vertices are minimal nodes: in particular, if a point arrangement is chosen within the positive orthant, many vertices are prohibited from being minimal nodes. 

\subsubsection{Expected numbers of vertices}

Our first model for a random rank $d$ matrix chooses points $ v_1, \ldots,  v_m$ uniformly in a unit cube centered at the origin. 

\begin{prop}
\label{prop:expected_cube}
Let 
$A = f(v_i \cdot w_j)$ be an underlying rank-$d$ matrix whose underlying point arrangement $v_1, \ldots, v_m$ is drawn uniformly at random from the unit cube in $\R^d$. Then the expected number $n_e$ of extremal nodes is bounded above by 
$$\E(n_e) \leq \frac{2^d d}{(d+1)^{d-1}}\log(m)^{d-1} + O(\log(m)^{d-2} \log \log(m))$$
\end{prop}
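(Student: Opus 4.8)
The plan is to reduce the bound on the expected number of extremal nodes to a known result on the expected number of vertices of the convex hull of $m$ points sampled uniformly from a cube. By Proposition \ref{prop:convex_hull}, every extremal node of $A$ corresponds to a vertex of $\conv(v_1,\ldots,v_m)$, so $n_e$ is at most the number of vertices of this random polytope. Hence $\E(n_e) \le \E(\text{number of vertices of } \conv(v_1,\ldots,v_m))$, and it suffices to bound the latter. First I would invoke the classical asymptotics for random polytopes in a cube: for $m$ points chosen independently and uniformly in $[-\tfrac12,\tfrac12]^d$, the expected number of vertices of their convex hull is
\[
\E(f_0) = \frac{2^d d}{(d+1)^{d-1}} (\log m)^{d-1} + O\!\left((\log m)^{d-2}\log\log m\right).
\]
This is a theorem of B\'ar\'any and Buchta (building on earlier work of R\'enyi--Sulanke in the planar case); the leading constant $\tfrac{2^d d}{(d+1)^{d-1}}$ is exactly the one appearing in the statement.

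The key steps, in order, are: (1) cite Proposition \ref{prop:convex_hull} to get $n_e \le f_0(v_1,\ldots,v_m)$ pointwise, hence in expectation; (2) observe that the underlying point arrangement is, by hypothesis, a uniform sample of size $m$ from the cube, so the distribution of $f_0$ is exactly the one studied in the random-polytope literature; (3) quote the B\'ar\'any--Buchta asymptotic for the expected number of vertices of the convex hull of a uniform sample from a polytope (specializing the polytope to the unit cube, which has $2^d$ facets, whence the $2^d$ factor), and match constants. One subtlety worth noting explicitly is that the statement only claims an \emph{upper} bound ($\le$) rather than an asymptotic equality, even though the cited random-polytope result is an equality; the inequality direction is all that is needed here because not every vertex of the hull need be an extremal node (a vertex $v_i$ is an extremal node only if some $w_j$ realizes the supporting functional among the sampled $w$'s, which is a weaker condition than being a hull vertex in the abstract). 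So the $\le$ is the honest statement and follows immediately once $n_e \le f_0$ is combined with the expectation formula.

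The main obstacle is purely bibliographic rather than mathematical: one must locate and cite the correct form of the B\'ar\'any--Buchta expected-$f$-vector theorem for uniform samples in a polytope and verify that, for the cube, its leading term reduces to $\tfrac{2^d d}{(d+1)^{d-1}}(\log m)^{d-1}$ with the stated error term. If one wanted a self-contained argument, the harder route would be to reprove this asymptotic via the standard approach: decompose the expected vertex count over the $2^d$ facet-cones of the cube, reduce each local contribution to an integral of the form $m\int_{\text{corner}} (1 - \mathrm{vol}(\text{cap}))^{m-1}\,dx$, and evaluate the dominant logarithmic growth by a Laplace-type estimate near each vertex of the cube. But since the excerpt permits assuming earlier results and this is a known theorem in the literature, I would simply cite it and note the reduction via Proposition \ref{prop:convex_hull}; the proof is then two or three lines.
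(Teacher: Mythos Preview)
Your proposal is correct and follows essentially the same route as the paper: bound $n_e$ by the vertex count of the random polytope via Proposition~\ref{prop:convex_hull}, then invoke B\'ar\'any--Buchta. One small correction and one small difference: the $d$-cube has $2d$ facets, not $2^d$; the $2^d$ factor actually comes from the $2^d$ \emph{vertices}. The paper makes this explicit by quoting the general B\'ar\'any--Buchta formula $\E(f_0) = \tfrac{T(P)}{(d+1)^{d-1}(d-1)!}(\log m)^{d-1} + O(\cdots)$, where $T(P)$ is the number of maximal chains in the face lattice of $P$, and then computes $T(\text{cube}) = 2^d \cdot d!$ directly (choose one of the $2^d$ vertices, then choose one of the $d!$ orders in which to relax its coordinate constraints), yielding $\tfrac{2^d d!}{(d+1)^{d-1}(d-1)!} = \tfrac{2^d d}{(d+1)^{d-1}}$.
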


\begin{proof} 

Let $ \vertices P_n $ be the number of vertices of the convex hull of $n$ points samples from the convex polytope $P$. Let $T(P)$ count the number of maximal chains in the face lattice of the polytope. Equation 1.6 of \cite{barany92} states that 

$$\E(\vertices P_n) = \frac{T(P)}{(d+1)^{d-1}(d-1)!}\log(m)^{d-1} + O(\log(m)^{d-2} \log \log(m))$$

Now, we can count maximal chains in the face lattice of the cube as follows. We an label each face $F$ of the $d$-dimensional cube with a string $X_f$ of $d$ zeros, ones, and stars. We define $X_F(i) = 0$ if $x_i$ is constrained to be zero on $F$, $X_F(i) = 1$ if  $x_i$ is constrained to be one on $F$, and $X_f(i) = *$ if the value of $x_i$ varies on $F$. Each maximal chain starts with one of the $2^d$ vertices of the cube, which has a string $X_v$ with no stars. Now, a chain which starts at $v$ corresponds to an order in which we choose coordinates to replace with stars. Thus, there are  $d!$ ways to do this. This comes out to a total of $2^d d!$ maximal chains. 

Thus, for $d$-dimensional cube $C^d$, 

$$\E(\vertices C^d_n) = \frac{2^d d}{(d+1)^{d-1}}\log(m)^{d-1} + O(\log(m)^{d-2} \log \log(m))$$
Since each minimal node corresponds to a vertex, if $A$ is a random rank $d$ matrix generated from a point arrangement uniformly sampled from the unit cube, the expected number $n_m$ of minimal nodes is bounded above by 
$$\E(n_m) \leq  \frac{2^d d}{(d+1)^{d-1}}\log(m)^{d-1} + O(\log(m)^{d-2} \log \log(m)).$$
\end{proof}

We can make similar estimates if the points $ v_1, \ldots,  v_m$ are chosen according to a Gaussian distribution, using results from \cite{baryshnikov1994regular}. 

\begin{prop}
\label{prop:expected_gauss}

Let 
$A = f(v_i \cdot w_j)$ be a random underlying rank-$d$ matrix whose underlying point arrangement $v_1, \ldots, v_m$ is drawn from a Gaussian distribution in $\R^d$. Then the expected number $n_e$ of extremal nodes is bounded above by 
$$\E(n_e) \leq\beta_d\frac{2^d}{\sqrt{d}}(\pi \log m)^{(d-1)/2},$$ where $\beta_d$ is a constant depending on $d$. 
\end{prop}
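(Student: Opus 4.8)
The plan is to follow the same route as the proof of Proposition \ref{prop:expected_cube}: bound the number of extremal nodes by the number of vertices of the convex hull of the underlying point arrangement, and then invoke a known asymptotic for the expected number of vertices of a Gaussian random polytope. By Proposition \ref{prop:convex_hull}, every extremal node $i$ of $A$ has the property that $v_i$ is a vertex of $\conv(v_1,\ldots,v_m)$ in any rank-$d$ representation, and distinct extremal nodes yield distinct vertices. Note that no separate treatment of maximal versus minimal nodes is needed, since both the maximum and the minimum of a linear functional over a polytope are attained at vertices, so the vertex count already absorbs both. Hence $n_e \leq \vertices(\conv(v_1,\ldots,v_m))$ pointwise, and therefore
\begin{align*}
\E(n_e) \leq \E\Big(\vertices\big(\conv(v_1,\ldots,v_m)\big)\Big).
\end{align*}

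Since the $v_i$ are i.i.d.\ draws from a nondegenerate Gaussian in $\R^d$, an affine change of coordinates makes them standard normal, so $\conv(v_1,\ldots,v_m)$ is the classical \emph{Gaussian polytope} on $m$ points. I would then quote the asymptotics for its expected number of vertices: by \cite{baryshnikov1994regular} (building on Raynaud's work), the expected number of vertices of the convex hull of $m$ i.i.d.\ standard Gaussian points in $\R^d$ is asymptotically $c_d (\log m)^{(d-1)/2}$, where the constant $c_d$ can be written in the form $\beta_d\,\tfrac{2^d \pi^{(d-1)/2}}{\sqrt d}$, with $\beta_d$ the constant coming from the internal angles of a regular $(d-1)$-simplex that appears in the Baryshnikov--Vitale correspondence between Gaussian samples and regular simplices. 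Substituting this into the inequality above and folding the lower-order terms into $\beta_d$ gives
\begin{align*}
\E(n_e) \leq \beta_d \frac{2^d}{\sqrt d}(\pi \log m)^{(d-1)/2},
\end{align*}
as claimed.

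The main obstacle is bookkeeping rather than ideas: one must extract the precise constant and error term from \cite{baryshnikov1994regular} in the normalization used here — the literature phrases the result variously via regular-simplex internal angles, support functions of the Gaussian measure, or floating bodies, and these descriptions must be reconciled — and one must be careful that the statement is an inequality rather than an asymptotic equivalence, which likely requires either restricting to $m$ sufficiently large or explicitly building the $O\big((\log m)^{(d-2)/2}\log\log m\big)$-type corrections into the constant $\beta_d$. Beyond that, the argument is a direct parallel of the cube case treated in Proposition \ref{prop:expected_cube}.
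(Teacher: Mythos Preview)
Your proposal is correct and follows essentially the same two-step argument as the paper: bound the number of extremal nodes by the number of vertices of $\conv(v_1,\ldots,v_m)$ via Proposition~\ref{prop:convex_hull}, then quote the Gaussian-polytope vertex asymptotic from \cite{baryshnikov1994regular}. The paper's proof is in fact terser than yours and does not engage with the constant-normalization or asymptotic-versus-inequality issues you flag, so your caveats are well taken but go beyond what the paper itself supplies.
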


\begin{proof}
By \cite{baryshnikov1994regular}, the expected number of vertices of the convex hull of $m$ points drawn from a Gaussian distribution in $\R^d$ is asymptotic to 
$\beta_d\frac{2^d}{\sqrt{d}}(\pi \log m)^{(d-1)/2},$
where $\beta$ is a constant depending on $d$. By Proposition \ref{prop:convex_hull}, this is an upper bound on the number of minimal nodes. 
\end{proof}

\begin{prop}
\label{prop:expected_ball}
Let 
$A = f(v_i \cdot w_j)$ be an underlying rank-$d$ matrix whose underlying point arrangement $v_1, \ldots, v_m$ is drawn uniformly at random from the unit ball in $\R^d$. Then the expected number $n_e$ of extremal nodes is bounded above by 
$$\E(n_e) \leq O\left(m^{\frac{d-1}{d+1}}\right)$$
\end{prop}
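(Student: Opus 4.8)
The plan is to argue exactly as in the proofs of Propositions \ref{prop:expected_cube} and \ref{prop:expected_gauss}: reduce the count of extremal nodes to a count of vertices of the convex hull of the underlying point arrangement, and then quote a classical asymptotic for the expected number of vertices of a random polytope inscribed in a ball.

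First I would apply Proposition \ref{prop:convex_hull}: in any rank-$d$ representation of $A$, each extremal node $i$ forces $v_i$ to be a vertex of $\conv(\{v_j\}_{j\in[m]})$, since the linear functional $x\mapsto w_j\cdot x$ attains its maximum or minimum over $v_1,\dots,v_m$ at $v_i$. Distinct nodes correspond to distinct points, so pointwise $n_e \le \vertices\bigl(\conv(\{v_1,\dots,v_m\})\bigr)$, and therefore $\E(n_e) \le \E\bigl(\vertices P_m\bigr)$, where $P_m = \conv(\{v_1,\dots,v_m\})$ is the convex hull of $m$ points drawn independently and uniformly from the unit ball $B^d\subseteq\R^d$.

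Second, I would invoke the classical result of Rényi and Sulanke (for $d=2$) and Raynaud and Wieacker (for general $d$; see also the survey \cite{barany92}) on random polytopes in a smooth convex body of positive curvature: when the $m$ sample points are uniform in a ball,
\[
\E\bigl(\vertices P_m\bigr) = c_d\, m^{\frac{d-1}{d+1}} + o\!\left(m^{\frac{d-1}{d+1}}\right),
\]
where $c_d$ is a constant depending only on $d$ (expressible through $\Gamma$-factors and the surface area of $B^d$, but irrelevant for the stated estimate). The exponent $(d-1)/(d+1)$ is the signature of a body with $C^2$ boundary of nonvanishing curvature, in contrast with the powers of $\log m$ that arise for polytopal bodies such as the cube. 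Combining this with the inequality of the previous paragraph yields $\E(n_e) \le O\bigl(m^{(d-1)/(d+1)}\bigr)$, as claimed.

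There is no real obstacle here: the argument is a one-line reduction to a textbook fact, and the only points that need care are bookkeeping ones, namely that the inequality goes the right way (not every hull vertex need be an extremal node, so we only get an upper bound, which is exactly what is asserted) and that the hypothesis "uniform in the unit ball" matches the classical theorem verbatim. No induction or new geometric input is required; if one wanted a self-contained version, the Rényi–Sulanke--type asymptotic could be reproved by the standard floating-body / cap-covering estimate, but citing it is cleaner and consistent with how Propositions \ref{prop:expected_cube} and \ref{prop:expected_gauss} are handled.
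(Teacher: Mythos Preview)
Your proposal is correct and follows essentially the same approach as the paper: apply Proposition~\ref{prop:convex_hull} to bound $n_e$ by the number of vertices of the random polytope, then cite Raynaud's asymptotic $O(m^{(d-1)/(d+1)})$ for the expected vertex count of $m$ uniform points in the unit ball. The paper's own proof is just these two sentences, citing \cite{raynaud1970enveloppe} directly.
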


\begin{proof}
By \cite{raynaud1970enveloppe}, the expected number of vertices of the convex hull of $m$ points sampled uniformly from the unit ball in $\R^d$ is 
$O\left(m^{\frac{d-1}{d+1}}\right)$. By Proposition \ref{prop:convex_hull}, this is an upper bound on the number of minimal nodes. 
\end{proof}

Notice that the exponent controlling the dependence on $n$ in the Gaussian case is half of that from the uniformly distributed case. Further, the function giving the expected number of vertices in the case of the uniform distribution on a ball is completely different from either of the other distributions.  Further, these are just three possible ways of choosing a point arrangement--there is not a finite list of distributions to check. Finally, not all vertices are picked up as minimal nodes. This means that the expected number of vertices gives an upper bound on the number of minimal nodes, based on the dimension. 
Thus, without information about the underlying probability distribution, we cannot reliably estimate underlying rank from the number of extremal nodes alone. However, with an appropriately chosen family of control distributions, we can estimate the underlying rank by computing the number of minimal nodes and comparing to a control distribution. This control distribution can be chosen based off of the scientific context. 

\subsubsection{Sign constraints}
While every extermal node of the matrix is a vertex of the convex hull, not every vertex of the convex hull actually is observed as an extremal node. This means that features of the distribution generating the underlying point arrangement which do not affect the expected number of vertices can nonetheless affect the number of minimal nodes. In particular, we consider sign constraints: what if we choose points within the positive orthant? We show that this can reduce the number of minimal nodes. 
More precisely, the number of minimal nodes of a point arrangement contained in the first quadrant is approximately $\frac{1}{2^d}$ that of a point arrangement centered at the origin, as exemplified in Figures \ref{fig:convhullposmix} and \ref{fig:nmin_empirical}

\begin{figure}[h!]
\begin{center}
\includegraphics[width = 5 in]{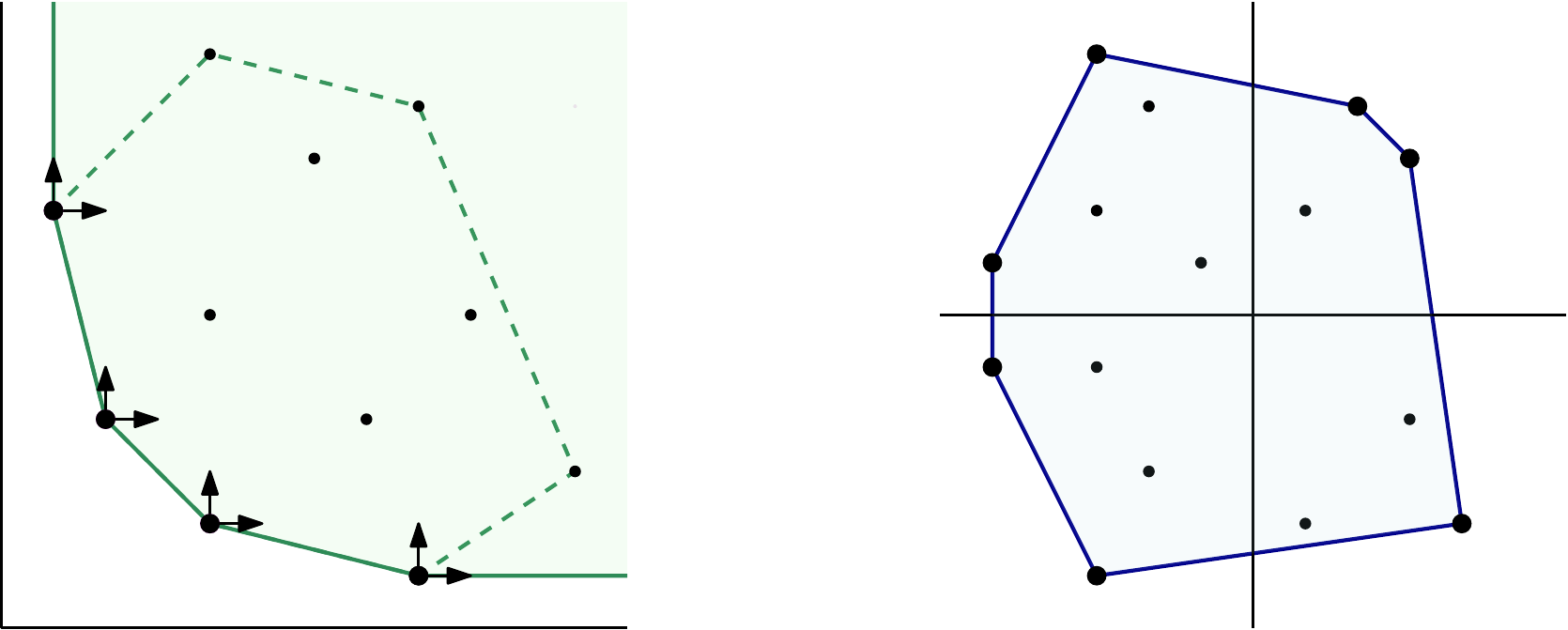}
\end{center}
\caption[Potential minimal modes for positive and mixed-sign arrangements]{Potential minimal modes for positive (A) and mixed sign (B) point arrangements in $\R^2$.  Potential minimal nodes are enlarged. On the left, the convex hull of the point arrangement is shown with a dashed line. }
\label{fig:convhullposmix}
\end{figure}

\begin{defn}
A \emph{nonnegative rank-$d$ realization} of a matrix $A$ is a rank $d$ representation of $A$ with all points in the positive orthant: $ v_1, \ldots,  v_m,  w_1, \ldots,  w_n \in \R^d_{\geq 0}$. The  \emph{nonnegative underlying rank} of a matrix is the smallest value of $r$ for which a nonnegative rank $r$ realization exists.
\end{defn}

The next result shows that the minimal and maximal nodes of a nonnegative rank-$d$ realization of a matrix $A$ must meet stricter conditions than merely being vertices. This implies that there are fewer extremal nodes in the mixed-sign case, and that the minimal and maximal nodes are (almost) disjoint. This fact can be used to determine whether a positive or mixed-sign model is more appropriate for a given dataset. 

\begin{defn}
Let $A, B\subset \R^d$. The Minkowski sum of $A$ and $B$ is the set 
\[A + B = \{a + b \mid a\in A, b\in B\}.\]
\end{defn}

\begin{prop}\label{prop:nonnegativeminimals}
If $i$ is a minimal node of $A$ and $ v_1, \ldots,  v_m,  w_1, \ldots,  w_n \in \R^d$ is a nonnegative rank-$d$ representation of $A$, then $ v_i$ is a vertex of the polytope 
$$\conv( v_1, \ldots,  v_m) + \R^d_{\geq}, $$
where the $+$ denotes the Minkowski sum. 

If $i$ is a maximal node of $A$ and $ v_1, \ldots,  v_m,  w_1, \ldots,  w_n \in \R^d$ is a nonnegative rank-$d$ representation of $A$, then $ v_i$ is a vertex of the polytope 
$$\conv( v_1, \ldots,  v_m) + \R^d_{\leq}. $$
\end{prop}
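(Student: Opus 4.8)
The plan is to use the standard variational description of vertices of a polyhedron: if $P$ is a polyhedron and $p\in P$, then $p$ is a vertex of $P$ if and only if some linear functional attains its minimum over $P$ uniquely at $p$. Since $\conv(v_1,\dots,v_m)+\R^d_{\geq}$ and $\conv(v_1,\dots,v_m)+\R^d_{\leq}$ are polyhedra (a polytope plus a polyhedral cone), it suffices to exhibit such a functional for $p=v_i$. So the whole proof reduces to constructing the right linear functional from the data of the minimal (or maximal) node.

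First I would translate the combinatorial hypothesis into geometry. If $i$ is a minimal node of $A$, there is a column index $j$ with $A_{ij}<A_{kj}$ for all $k\neq i$. Since $A_{kj}=f(v_k\cdot w_j)$ with $f$ increasing, this gives $w_j\cdot v_i<w_j\cdot v_k$ for every $k\neq i$, so $w_j$ strictly separates $v_i$ from the remaining points. The natural candidate functional is $x\mapsto w_j\cdot x$, but this is \emph{not} quite enough: $w_j\in\R^d_{\geq}$ may have zero entries, and along the coordinate directions on which $w_j$ vanishes the Minkowski summand $\R^d_{\geq}$ can move freely without increasing $w_j\cdot(-)$, so $v_i$ need not be the \emph{unique} minimizer of $w_j$ over the sum. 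Recognizing and repairing this is the one real idea of the proof.

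The fix is to perturb: set $c:=w_j+\varepsilon\mathbf 1$, where $\mathbf 1$ is the all-ones vector and $\varepsilon>0$ is chosen small enough that the finitely many strict inequalities $c\cdot v_i<c\cdot v_k$ ($k\neq i$) still hold. Now $c$ is strictly positive in every coordinate. Given any $q\in\conv(v_1,\dots,v_m)+\R^d_{\geq}$, write $q=\sum_k\lambda_k v_k+r$ with $\lambda_k\geq 0$, $\sum_k\lambda_k=1$, $r\in\R^d_{\geq}$. Then $c\cdot q=\sum_k\lambda_k(c\cdot v_k)+c\cdot r\geq c\cdot v_i$, since each $c\cdot v_k\geq c\cdot v_i$ and $c\cdot r\geq 0$; moreover equality forces $\lambda_k=0$ for all $k\neq i$ (as $c\cdot v_k>c\cdot v_i$ there), hence $\lambda_i=1$, and forces $c\cdot r=0$, which forces $r=0$ because $c$ is strictly positive and $r\in\R^d_{\geq}$. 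Thus $q=v_i$, so $v_i$ is the unique minimizer of $c$ over the polyhedron and is therefore a vertex.

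For a maximal node $i$, the argument runs verbatim: the separating inequalities become $w_j\cdot v_i>w_j\cdot v_k$, one takes $c:=-(w_j+\varepsilon\mathbf 1)$ (strictly negative) and the polyhedron $\conv(v_1,\dots,v_m)+\R^d_{\leq}$, and for $q=\sum_k\lambda_k v_k+r$ with $r\in\R^d_{\leq}$ one has $c\cdot r\geq 0$ with equality iff $r=0$, so again $v_i$ is the unique minimizer of $c$ and hence a vertex. The main (and essentially only) obstacle is the perturbation step — seeing that the honest separating functional $w_j$ can fail to expose $v_i$ precisely because it may have zero entries, and checking that adding a small multiple of $\mathbf 1$ restores strict positivity while preserving the finitely many strict separating inequalities. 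Everything else is routine.
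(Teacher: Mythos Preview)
Your proof is correct and follows the same core approach as the paper: use the functional $x\mapsto w_j\cdot x$ coming from the column that witnesses minimality, together with nonnegativity of $w_j$, to show that $v_i$ minimizes this functional over $\conv(v_1,\dots,v_m)+\R^d_{\geq}$. The paper's argument stops there, simply asserting that minimizing the functional makes $v_i$ a vertex; your perturbation $c=w_j+\varepsilon\mathbf 1$ is an actual improvement, since it closes the gap you correctly identified---when $w_j$ has zero coordinates the minimizer over the Minkowski sum is not unique, so the unperturbed functional does not by itself certify that $v_i$ is a vertex rather than merely a point on an extreme face.
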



\begin{proof}
We prove this statement for minimal nodes. The proof of the statement for maximal nodes is analogous. 

Let $i$ be a minimal node and $j\in [n]$ such that $A_{ij} < A_{kj}$ for all $k\in [m].$ 
Equivalently, $ v_i$ is the point of $ v_1, \ldots,  v_m$ that minimizes the inner product $ v \cdot  w_j$.
On the other hand, by the definitions of convex hull and Minkowski sum, we can write
$$\conv( v_1, \ldots,  v_m) + \R^d_{\geq}= \{\lambda_1  v_1 + \cdots + \lambda_m  v_m +  x\mid \lambda_1,\ldots \lambda_m \geq 0, \lambda_1 + \cdots + \lambda_m = 1,  x\in \R^d_{\geq 0}\}.$$
This tells us that $ v_i$ is a vertex of $\conv( v_1, \ldots,  v_m)$, and 
$$ w_j \cdot  v_i \leq  w_j\cdot(\lambda_1  v_1 + \cdots + \lambda_m  v_m),$$
 whenever $\lambda_1,\ldots \lambda_m \geq 0, \lambda_1 + \cdots + \lambda_m = 1$ and that this inequality is strict whenever $\lambda_i < 1$. 
 
Because $ w_j$ is in the positive orthant, $ w_j \cdot  x \geq 0$ for all $ x$ in the positive orthant. 
Thus, 
$$ w_j \cdot  v_i \leq  w_j\cdot(\lambda_1  v_1 + \cdots + \lambda_m  v_m +  x),$$
 for all $\lambda_1,\ldots \lambda_m \geq 0, \lambda_1 + \cdots + \lambda_m = 1,  x\in \R^d_{\geq 0},$  and this inequality is strict whenever $\lambda_i <1$. 
Thus, $ v_i$ minimizes the inner product $ w_j \cdot  v$ for all $ v \in \conv( v_1, \ldots,  v_m) + \R^d_{\geq}$. 
Thus, $ v_i$ is a vertex of $\conv( v_1, \ldots,  v_m) + \R^d_{\geq}$. 
\end{proof}



Figure \ref{actualvsexpected} collects these calculations for varying $n$. Curves labeled ``random full rank'' correspond in both cases to randomly generated matrices (non-symmetric) with positive or mixed signed entries, respectively. Notice that in the mixed-sign case there is an overlap between the curves corresponding to (true) random full rank matrices and to the case $d=n$ (gold and aqua curves). 

\begin{figure}
\includegraphics[width = 6 in]{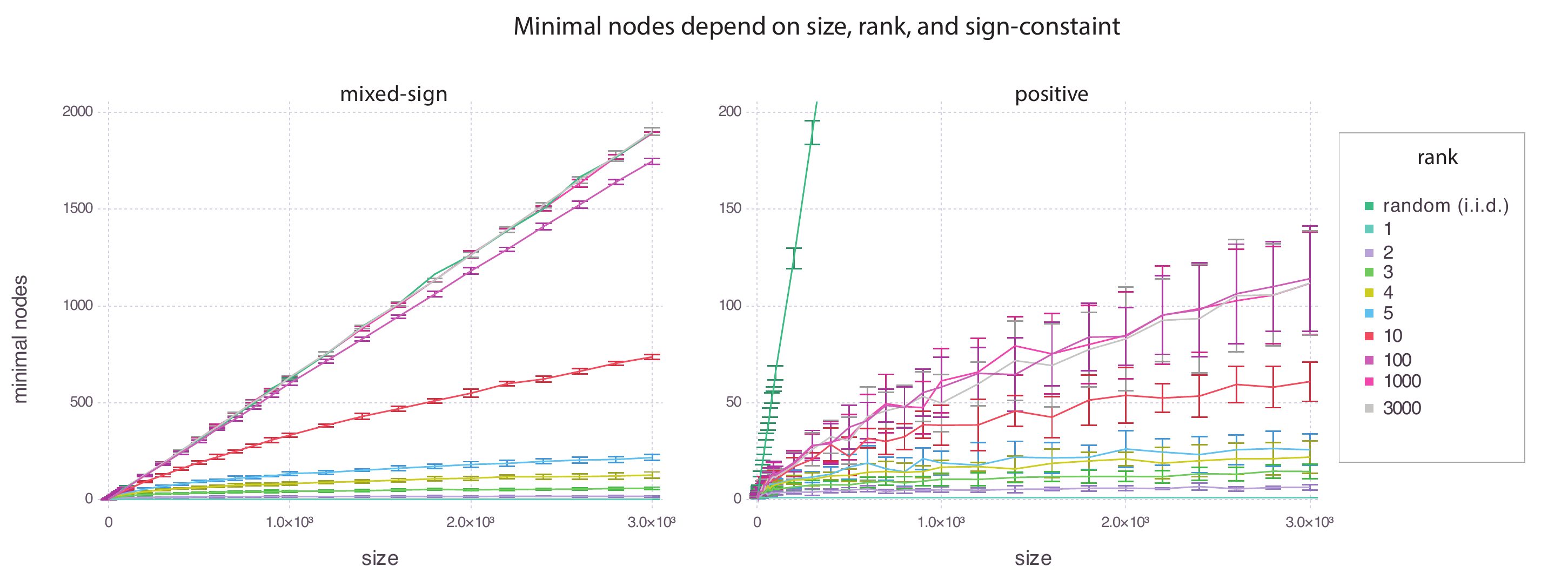}
\caption[Average number of minimal nodes]{Average number of minimal nodes for samples of 5 matrices, obtained as $BB'$, for $B$ an $n\times d$ matrix with random uniformly distributed entries in $[0,1]$ (left, positive) and  $[-1/2,1/2]$ (right, mixed-sign). In both cases $d$ is the rank of the resulting matrix $BB'$. Curves are color coded by rank. Error bars are given by standard deviations. 
\label{fig:nmin_empirical}}
\end{figure}

\begin{figure}
\includegraphics[width = 6 in]{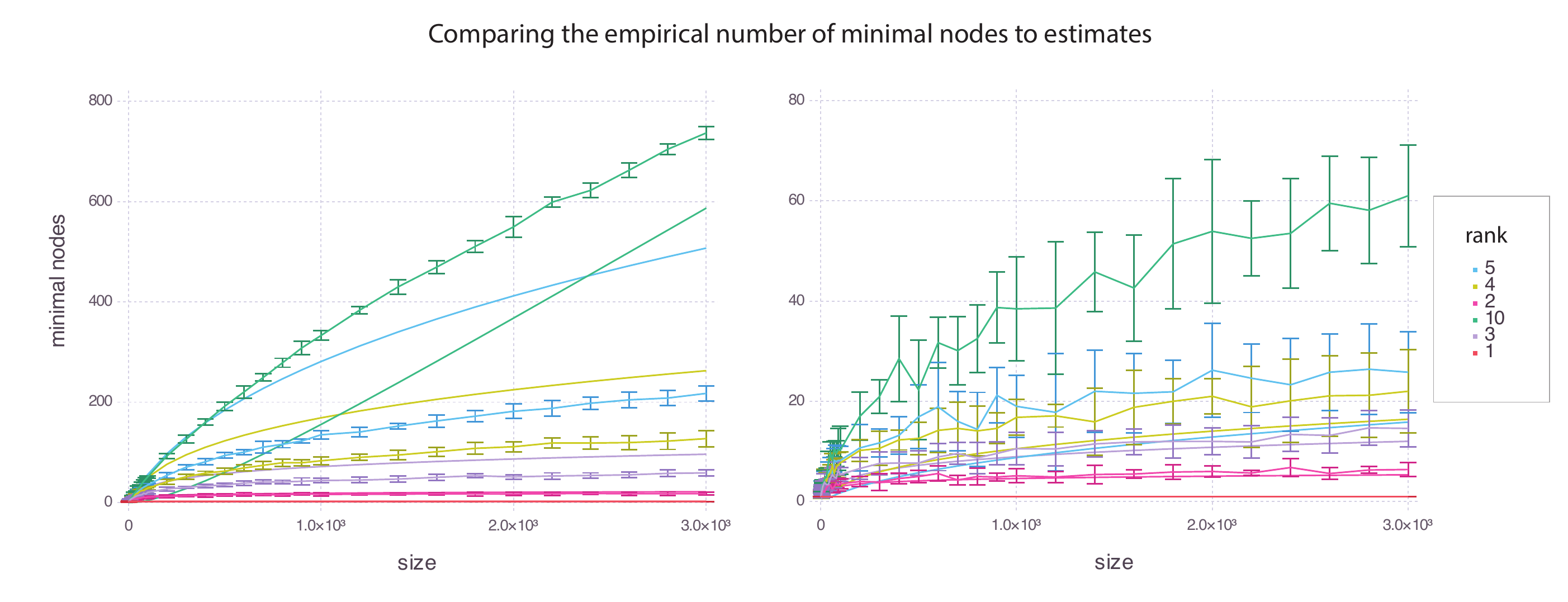}
\caption[Comparing the observed number of minimal nodes to the bounds]{We compare the observed number of minimal nodes to the asymptotic upper bounds appearing in Proposition \ref{prop:expected_cube} and Conjecture \ref{cong:expected_pos}. In particular, in the mixed sign case, we plot the quantity  $\frac{2^d d}{(d+1)^{d-1}}\log(m)^{d-1}$ and in the positive case, we plot the quantity $\frac{d}{(d+1)^{d-1}}\log(m)^{d-1}$. We notice that for higher ranks, we underestimate the number of minimal nodes, as the  $O(\log(m)^{d-2} \log \log(m))$ correction term matters more in this case. 
\label{actualvsexpected}}
\end{figure}

If $A$ has low non-negative rank, it will have fewer  minimal nodes. Roughly, if $A$ is a a random non-negative underlying rank-$d$ matrix whose underlying point arrangement $v_1, \ldots, v_m$ is drawn uniformly at random from the unit cube in the positive orthant,  the expected number of minimal nodes is bounded above by $ \E(n_v)/2^d$, which in the case of the unit cube is given by $\frac{d}{(d+1)^{d-1}}\log(n)^{d-1}.$ Observe in Figure \ref{actualvsexpected} that this matches up with computational experiments with random matrices. 
We give a heuristic argument for why this should be the case:
in order for $v_i$ to be a  vertex of $\conv(v_1, \ldots, v_n)+ \R^d_{\geq},$ $v_i$ must be a vertex of  $\conv(v_1, \ldots, v_n)$ whose normal cone intersects the positive orthant. Since the positive orthant in $\R^d$ takes up $\frac{1}{2^d}$ the volume of the unit cube centered at the origin in $\R^d$, the probability that  $v_i$ is a vertex of $\conv(v_1, \ldots, v_n)+ \R^d_{\geq}$ is roughly $\frac{1}{2^d}$ the probability that $v_i$ is a vertex of $\conv(v_1, \ldots, v_n)$. Thus, the probability that $v_i$ is a minimal node of a random positive rank-$d$ matrix is roughly $\frac{1}{2^d}$ the probability that $v_i$ is a minimal node of a random mixed-sign rank-$d$ matrix.  Note that this argument neglects the volume of the normal cone of $v_i$, making it a heuristic argument and not a proof. Thus, we leave this  as a conjecture:

\begin{conj}
If $v_1, \ldots, v_n$ are drawn uniformly at random from the unit cube $[0, 1]^d$, then the expected number of minimal nodes is 
$$\E(n_v)/2^d = \frac{d}{(d+1)^{d-1}}\log(n)^{d-1} +  O(\log(m)^{d-2} \log \log(m)).$$
\label{cong:expected_pos}
\end{conj}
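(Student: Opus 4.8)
The plan is to reduce Conjecture~\ref{cong:expected_pos} to a single geometric-probability computation: the expected number of vertices of the unbounded polyhedron $Q := \conv(v_1,\ldots,v_n) + \R^d_{\geq 0}$ when $v_1,\ldots,v_n$ are i.i.d.\ uniform in $[0,1]^d$. By Proposition~\ref{prop:nonnegativeminimals}, every minimal node $i$ has $v_i$ a vertex of $Q$; conversely, if $v_i$ is a vertex of $Q$ then $v_i$ is the unique minimizer of $\langle w,\cdot\rangle$ over $\{v_1,\ldots,v_n\}$ for every $w$ in a nonempty, relatively open cone (the positive orthant intersected with the interior of $-N_{\conv}(v_i)$), so $i$ becomes an \emph{actual} minimal node as soon as one of the available directions $w_j$ lands in that cone. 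Thus the set of minimal nodes is sandwiched between two deterministic functions of $v_1,\ldots,v_n$, and I would first prove
\[
\E\bigl[\#\{\text{vertices of }Q\}\bigr] = \frac{d}{(d+1)^{d-1}}(\log n)^{d-1} + O\bigl((\log n)^{d-2}\log\log n\bigr),
\]
and treat the gap between this ``potential'' count and the actual count as a separate, lower-order correction.

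For the vertex count, the key reformulation is that $v_i$ is a vertex of $Q$ exactly when $v_i$ is a sample point extreme in some direction $u$ lying in the single orthant cone $C:=\R^d_{\leq 0}$ (minimizing $\langle w,\cdot\rangle$ with $w\geq 0$ is maximizing $\langle -w,\cdot\rangle$ with $-w\in C$). So the quantity to estimate is the expected number of sample points that are $u$-extreme for at least one $u\in C$. The cap-covering/floating-body analysis behind Equation~(1.6) of \cite{barany92} localizes the expected number of extreme points of $n$ uniform points in a polytope $P$ near the vertices of $P$, with vertex $c$ contributing $\frac{(\#\text{flags of }P\text{ at }c)}{(d+1)^{d-1}(d-1)!}(\log n)^{d-1}$ to leading order; summing over the $2^d$ cube vertices, each with $d!$ flags, recovers the constant $\frac{2^d d}{(d+1)^{d-1}}$ of Proposition~\ref{prop:expected_cube}. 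Imposing $u\in C$ should annihilate all of these except the contribution near the corner $\mathbf 0=(0,\ldots,0)$, whose outer normal cone is exactly $N(\mathbf 0)=\R^d_{\leq 0}=C$: that corner retains all $d!$ of its flags, giving $\frac{d!}{(d+1)^{d-1}(d-1)!}(\log n)^{d-1} = \frac{d}{(d+1)^{d-1}}(\log n)^{d-1}$. A corner adjacent to $\mathbf 0$ has $N(c)\cap C$ of dimension $d-1$, hence contributes an order-$(\log n)^{d-2}$ term (effectively a $(d-1)$-dimensional instance of the same problem on a facet), and corners farther away contribute still less; absorbing these, together with the subleading term of the expansion at $\mathbf 0$, into the error matches the stated $O((\log n)^{d-2}\log\log n)$. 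This realizes the text's $\frac{1}{2^d}$ heuristic rigorously, via ``only one cube corner survives'' rather than a volume comparison.

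The hard part will be making this corner-by-corner accounting rigorous for the \emph{unbounded} polyhedron $Q$ with a direction-cone restriction, since \cite{barany92} is stated for convex hulls of points inside a bounded polytope and with no such restriction. I see two routes: (i) re-run the economic cap-covering argument keeping only caps whose outer normal lies in $C$, and verify that the covering estimates and the resulting integral degrade by exactly the ``lost flags'' factor at each cube vertex; or (ii) exhibit an equivalent bounded model — e.g.\ observe that $v_i\in[0,1]^d$ is a vertex of $Q$ iff it is a vertex of $\conv(\{v_1,\ldots,v_n\}\cup F)$ for a fixed ``cap'' $F$ of the cube near $\mathbf 1$ chosen not to meet the relevant normal cones — and then apply \cite{barany92} to that bounded polytope after computing its flag structure at $\mathbf 0$. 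A secondary, genuinely separate obstacle is closing the potential-versus-actual gap for whichever random model is intended (the $w_j$ i.i.d.\ uniform, or the symmetric case $w_j=v_j$ of Figure~\ref{fig:nmin_empirical}): one must show that with high probability all but $o((\log n)^{d-1})$ of the vertices of $Q$ have normal cone of solid angle large enough to capture some $w_j$, which should again follow from the floating-body estimates (the normal cones of hull vertices at $(\log n)/n$ scale are not asymptotically degenerate) but requires its own argument. I expect (i)/(ii) to be the bulk of the work, and the place where the $\log\log$ factor in the error term is forced.
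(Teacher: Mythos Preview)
The paper does not prove this statement: it is explicitly labeled a conjecture, and the paper offers only a heuristic (the positive orthant occupies a $1/2^d$ fraction of the space of directions, so roughly a $1/2^d$ fraction of hull vertices should have normal cone meeting it), while noting that this heuristic ``neglects the volume of the normal cone of $v_i$'' and is therefore not a proof. So there is no paper proof to compare against; what you have written is a proof \emph{plan} for an open conjecture.

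That said, your plan is substantially sharper than the paper's heuristic and is the right shape for an actual proof. The paper's $1/2^d$ volume argument treats all hull vertices as interchangeable, which is exactly what fails; your corner-localization reformulation---that the B\'ar\'any--Buchta flag-counting attributes the leading term to the $2^d$ cube corners, and that the direction restriction $u\in\R^d_{\leq 0}$ kills all of them except the origin---explains \emph{why} the factor is $1/2^d$ without appealing to any uniformity of normal-cone volumes. This is a genuine advance over the heuristic.

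The two obstacles you name are real and are precisely the places where work is needed. First, adapting \cite{barany92} to either an unbounded body or, equivalently, to a direction-restricted extremality count is not in the literature in this form; your route (ii), replacing $Q$ by $\conv(\{v_1,\ldots,v_n\}\cup F)$ for a suitable fixed cap $F$ near $\mathbf{1}$, looks cleanest, but you must check that the flags at the new artificial vertices contribute only to the error term. Second, the potential-versus-actual gap is not addressed at all in the paper, and the intended model (from Figure~\ref{fig:nmin_empirical}) is the symmetric one $w_j=v_j$, so the ``witnessing'' directions are not independent of the point cloud---your floating-body intuition that normal cones at scale $(\log n)/n$ are not degenerate is plausible, but in the symmetric case you also need that enough of the $v_j$ themselves land in those cones, which is a different (and not obviously easier) estimate. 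Both gaps are honest open problems, not oversights in your plan.
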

In Figure \ref{actualvsexpected}, we compare the quantity $\frac{d}{(d+1)^{d-1}}\log(n)^{d-1}$ to the observed number of minimal nodes for random matrices of various ranks. We see that for low ranks, the estimate holds, but for high ranks, it does not, likely due to the factor  $O(\log(m)^{d-2} \log \log(m)).$

We can interpolate between the mixed-sign and positive case by constraining some, but not all, coordinates to be positive. We define the \emph{positivity} of a point arrangement as the number of coordinates which are required to be positive. For example, Figure \ref{fig:plotminmaxoverlapv2} shows how much the overlapping sets differ for $d=2$. As we increase the positivity, we decrease the number of minimal nodes. We also decrease the \emph{overlap} between minimal and maximal nodes, the number of nodes which are both minimal and maximal. We can use the overlap and positivity together to estimate underlying rank, assuming a distribution for the underlying point arrangement.

\begin{figure}[ht!]
\centering
  \includegraphics[width=\linewidth]{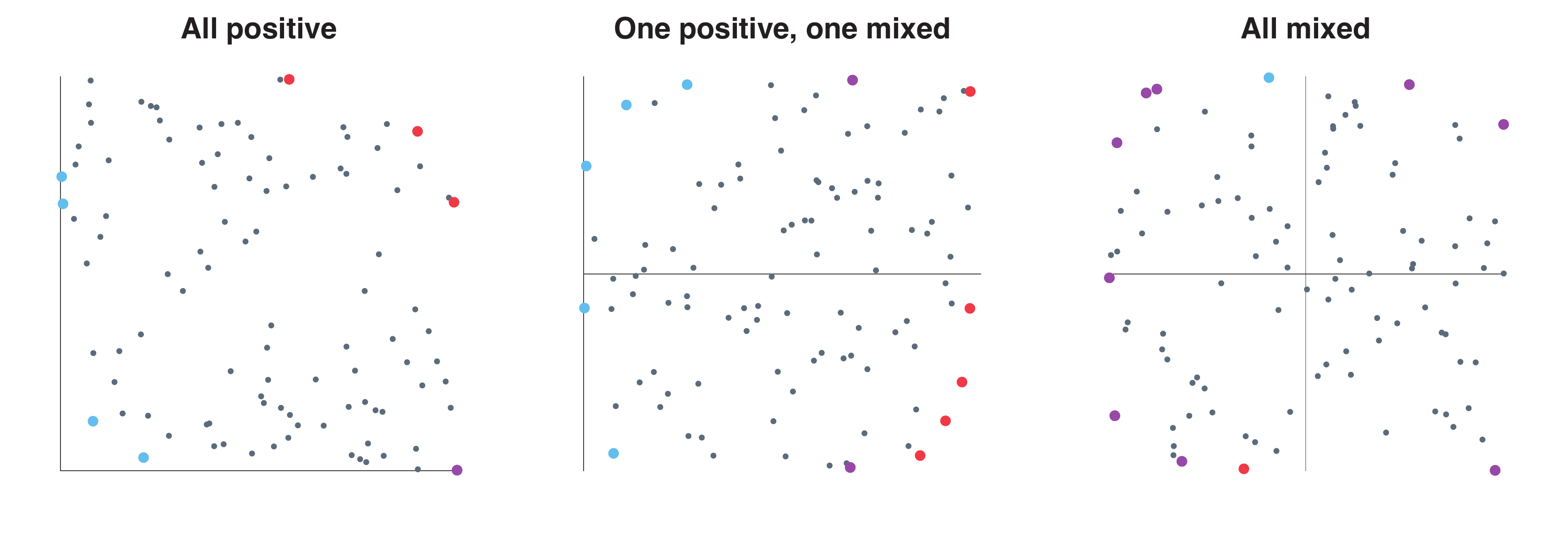}
\caption[The effect of positivity on the number of minimal nodes]{We show the effect of increasing positivity on the number of minimal, maximal, and overlap nodes. In each plot, the minimal nodes are colored blue, the maximal nodes red and the overlapping nodes (nodes that are both minimal and maximal) are colored purple. Notice that in the all-positive case and the one positive, one mixed case the overlap is minimal, whereas in the all-mixed case the overlap consists of most of the nodes.}
\label{fig:plotminmaxoverlapv2}
\end{figure}



\subsubsection{Matrices with i.i.d entries.}
The next proposition shows that we can characterize the expected number of minimal nodes for matrix with independent, identically distributed (i.i.d.) entries. In particular, we show that the expected number of minimal nodes of a square matrix with i.i.d. entries is linear in the matrix size. 
This allows us to distinguish matrices with some kind of low-rank structure (or other structure) from truly random matrices.

\begin{prop}
Let $A$ be a random $n\times n$ matrix with i.i.d. entries.  Then $$\lim_{n\to\infty}\frac{\E\left(|\min(A)|\right)}{n} = \left(1-\frac 1 e\right).$$
\end{prop}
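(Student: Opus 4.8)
The plan is to recognize $|\min(A)|$ as the number of occupied bins in a balls-into-bins experiment and then apply the classical occupancy formula. First I would assume (as is implicit throughout the chapter, where one freely perturbs $A$ without changing the order of entries) that the common distribution of the entries is continuous, so that with probability one all $n^2$ entries of $A$ are distinct; on this almost-sure event every column has a \emph{unique} minimum. For $j \in [n]$ let $R_j \in [n]$ denote the row index in which the minimum of column $j$ occurs. Directly from the definition, $i \in \min(A)$ if and only if $i \in \{R_1,\dots,R_n\}$, so $|\min(A)| = |\{R_1,\dots,R_n\}|$.

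Next I would show that $R_1,\dots,R_n$ are i.i.d.\ uniform on $[n]$. Independence is immediate: the columns of $A$ are independent random vectors and $R_j$ depends only on the $j$-th column. Uniformity follows from exchangeability of the rows: since the entries are i.i.d., permuting the rows of $A$ leaves its distribution unchanged, and a row permutation carries $R_j$ to its image under that permutation, forcing $R_j$ to be uniform on $[n]$. Hence $|\min(A)|$ has exactly the law of the number of distinct values among $n$ independent uniform draws from $[n]$, i.e.\ the number of nonempty bins when $n$ balls are thrown independently and uniformly into $n$ bins.

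Then the computation is routine: by linearity of expectation, writing $X_i$ for the indicator that bin $i$ is nonempty, and noting $\Pr[X_i = 0] = (1-1/n)^n$,
$$\E\big(|\min(A)|\big) = \sum_{i=1}^n \Pr[X_i = 1] = n\left(1 - \left(1-\tfrac{1}{n}\right)^{n}\right).$$
Dividing by $n$ and using $(1-1/n)^n \to 1/e$ yields $\lim_{n\to\infty} \E(|\min(A)|)/n = 1 - 1/e$, as claimed.

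There is no genuinely hard step here; the one point that needs care is the continuity hypothesis on the entries. If the distribution has atoms the statement can fail outright: for Bernoulli$(1/2)$ entries, a column contributes a minimal node only if it contains exactly one $0$, which has probability $n 2^{-n}$, so $\E(|\min(A)|) \to 0$ rather than $n(1-1/e)$. Thus the argument really uses that a.s.\ the entries are distinct, equivalently that the order relabeling $\order(A)$ of the matrix induces a uniformly random total order; granting this, the balls-into-bins reduction and the limit are exactly the standard ones.
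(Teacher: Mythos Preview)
Your proof is correct and is essentially the same argument as the paper's: both compute $\E(|\min(A)|)=n\bigl(1-(1-1/n)^n\bigr)$ by using independence across columns and the fact that, within a column of i.i.d.\ continuous entries, each row is equally likely to hold the minimum. You simply package this as a balls-into-bins occupancy count, whereas the paper writes out the indicators $X_{ij}$ directly; your version has the added virtue of making the continuity hypothesis explicit and exhibiting a discrete counterexample.
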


\begin{proof}
Let $X_i$ be a random variable defined by 
\begin{align*}
X_i = \begin{cases} 
1 \qquad i\in \min(A)\\
 0 \qquad \mbox{otherwise}
\end{cases}
\end{align*}
Note that $$\E\left(|\min(A)|\right) = \E\left(\sum_{i = 1}^n X_i\right ) = \sum_{i = 1}^n \E (X_i),$$
since the sum of expected values is always the expected value of the sum. 
Thus, it is sufficient to calculate $\E (X_i)$. To do this, define another random variable 
\begin{align*}
X_{ij} = \begin{cases} 
1 \qquad i \mbox{ is \emph{not} minimal in row j}\\
 0 \qquad \mbox{otherwise}
\end{cases}
\end{align*}
By the assumption that the matrix entries are $i.i.d$, $\E(X_{ij}) = \frac {n-1} {n}$ for all $i, j$, since each entry in the $j^{th}$ row has an equal chance of being the smallest. 

Now, note that $i$ is minimal if $X_{ij} = 0$ for at least one $j$. Thus, $i$ is minimal if and only if $\prod_{j = 1}^nX_{ij} = 0$. Thus $X_i =1 - \prod_{j = 1}^nX_{ij}$, $$\E(X_i) =1 - \E\left(\prod_{j = 1}^nX_{ij}\right).$$
Again by assumption, $X_{ij}$ and $X_{ik}$ are independent for $j \neq k$. 
Thus $\E\left(\prod_{j = 1}^nX_{ij}\right)= \prod_{j = 1}^n \E(X_{ij})$, so 
$$\E(X_i) =1 - \left( \frac {n-1} {n}\right)^n = 1-\left(1-\frac 1 n\right)^n.$$
Thus, 
$$\E\left(|\min(A)|\right) = \E\left(\sum_{i = 1}^n X_i\right ) = n\left(1-\left(1-\frac 1 n\right)^n\right).$$
By a standard result of calculus, $\lim_{n\to \infty} \left(1-\frac 1 n\right)^n = \frac 1 e$. Thus 
$$\lim_{n\to\infty}\frac{\E\left(|\min(A)|\right)}{n} = \left(1-\frac 1 e\right).$$

\end{proof}

Notice that this proof does not hold for random symmetric matrices. Empirically, however, we have seen that it does hold in this case.

\begin{conj}
Let $A$ be a random $n\times n$ \emph{symmetric} matrix with the entries above the diagonal i.i.d.  Then $$\lim_{n\to\infty}\frac{\E\left(|\min(A)|\right)}{n} = \left(1-\frac 1 e\right).$$
\end{conj}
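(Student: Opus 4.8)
The plan is to adapt the i.i.d.\ computation that precedes the conjecture, replacing the exact independence of columns (which fails once $A$ is symmetric) by an \emph{asymptotic} independence strong enough to force a Poisson limit. Fix the index $1$ and set
\[
N := \#\{\, j \in [n] \mid A_{1j} < A_{kj} \text{ for all } k \neq 1 \,\} = \#\{\, j \mid 1 = \operatorname*{arg\,min}_k A_{kj}\,\},
\]
the number of columns whose smallest entry lies in row $1$. By the exchangeability of the indices (relabeling permutes rows and columns simultaneously and preserves the law), $\E\left(|\min(A)|\right) = n\,\Pr(1 \in \min(A)) = n\bigl(1 - \Pr(N = 0)\bigr)$, so it suffices to prove $\Pr(N = 0) \to e^{-1}$. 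Writing $B_j = \{1 = \operatorname*{arg\,min}_k A_{kj}\}$, and using that all entries on and above the diagonal are i.i.d.\ from a continuous law (so each column consists of $n$ i.i.d.\ draws with no ties), we get $\Pr(B_j) = 1/n$ exactly and hence $\E N = 1$. I will show $N$ converges in distribution to $\mathrm{Poisson}(1)$ by computing factorial moments, and then extract $\Pr(N=0)$ by Bonferroni.

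The key step is estimating $\Pr\bigl(\bigcap_{j \in J} B_j\bigr)$ for a fixed $J \subseteq [n]$ with $|J| = r$. The only obstruction to independence of $\{B_j\}_{j\in J}$ is that columns $j$ and $k$ share the single entry $A_{jk}=A_{kj}$; thus within this family there are merely $\binom r2$ ``entangled'' entries, namely $\{A_{jk}: j,k\in J,\ j\ne k\}$. Conditioning on these $\binom r2$ values renders $\{B_j\}_{j\in J}$ \emph{conditionally independent}, since afterwards $B_j$ depends only on the remaining $n-r$ fresh entries of column $j$, and these are disjoint across $j\in J$. An elementary one-variable integral (pass to uniform coordinates via the common c.d.f.) gives, for $1\notin J$,
\[
\Pr\bigl(B_j \mid \{A_{jk}\}_{k\in J\setminus\{j\}}\bigr) \;=\; \frac{1 - (1 - c_j)^{\,n-r+1}}{n-r+1},
\]
where $c_j$ is the minimum of the $r-1$ conditioned values in column $j$ after applying the c.d.f. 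Multiplying and taking expectations,
\[
\Pr\!\left(\bigcap_{j\in J} B_j\right) = \frac{1}{(n-r+1)^{r}}\,\E\!\left[\prod_{j\in J}\bigl(1 - (1-c_j)^{\,n-r+1}\bigr)\right] = \frac{1 + o(1)}{n^{r}},
\]
by dominated convergence, since each $c_j>0$ almost surely and the product is bounded by $1$. Summing over the $\binom{n-1}{r}\sim n^{r}/r!$ subsets $J$ of size $r$ with $1\notin J$ yields a contribution tending to $1/r!$. The $\binom{n-1}{r-1}=O(n^{r-1})$ subsets with $1\in J$ contribute only $O(1/n)$: the same conditioning shows $\Pr\bigl(\bigcap_{j\in J}B_j\bigr)=O(n^{-r})$, because each $B_{j_a}$ with $j_a\neq 1$ then forces the (now conditioned) value $A_{1j_a}$ below $n-O(1)$ independent entries, costing a factor $O(1/n)$ per column. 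Hence $\E\binom{N}{r}=\sum_{|J|=r}\Pr\bigl(\bigcap_{j\in J}B_j\bigr)\to \tfrac1{r!}$ for every fixed $r\ge0$.

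Finally I pass to $\Pr(N=0)$ via the Bonferroni inequalities: for every $s$,
\[
\sum_{r=0}^{2s+1}(-1)^r \E\binom{N}{r} \;\le\; \Pr(N=0) \;\le\; \sum_{r=0}^{2s}(-1)^r\E\binom{N}{r}.
\]
Each finite sum converges term by term as $n\to\infty$, so $\sum_{r=0}^{2s+1}(-1)^r/r!\le\liminf\Pr(N=0)\le\limsup\Pr(N=0)\le\sum_{r=0}^{2s}(-1)^r/r!$; letting $s\to\infty$ pinches $\Pr(N=0)$ to $\sum_{r\ge0}(-1)^r/r!=e^{-1}$. Therefore $\E(|\min(A)|)/n\to 1-1/e$.

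The main obstacle is exactly the dependence structure: since $A$ is symmetric, no two columns are independent, so the dependency graph of $\{B_j\}$ is complete and the standard Chen--Stein bound for locally dependent indicator sums is useless. The point of the argument is that the dependence is nevertheless \emph{weak} --- each pair of columns overlaps in a single entry --- which lets one quarantine the $O(r^2)=O(1)$ offending entries and restore conditional independence; what remains is the routine but slightly delicate asymptotics of the conditional probabilities and the bookkeeping for subsets meeting column $1$. One subtlety to flag explicitly is the role of the diagonal: the conclusion genuinely needs the diagonal entries to be drawn from the same continuous law as the off-diagonal ones (a constant, very negative diagonal would make every index minimal), so the hypothesis should be read as ``all entries on and above the diagonal i.i.d.\ from a continuous distribution.''
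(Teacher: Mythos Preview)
The paper does not prove this statement; it is explicitly left as a conjecture (the preceding proposition handles the non-symmetric i.i.d.\ case, and the text notes that ``this proof does not hold for random symmetric matrices''). Your argument therefore goes beyond the paper, and as far as I can see it is correct. The key observation---that after conditioning on the $\binom{r}{2}$ shared entries $\{A_{jk}:j,k\in J\}$ the events $B_j$ become conditionally independent because the remaining entries of the columns in $J$ occupy disjoint positions in the upper triangle---is sound, and the resulting one-dimensional integral and dominated-convergence step are routine. The uniformity of the $o(1)$ over $J$ (needed since you sum $\sim n^r/r!$ terms) holds because the joint law of $(c_j)_{j\in J}$ depends only on $r$, not on $J$ or $n$. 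The Bonferroni sandwich then cleanly converts factorial-moment convergence into $\Pr(N=0)\to e^{-1}$.

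Two small remarks. First, a harmless miscount: each column in $J$ has $n-r+1$ fresh entries (the $n-r$ entries indexed by $[n]\setminus J$ together with the diagonal $A_{jj}$), not $n-r$; your displayed formula with exponent $n-r+1$ is nonetheless correct. Second, your caveat about the diagonal is well taken and worth stating as a hypothesis: the conjecture as written only constrains the strictly upper-triangular entries, and your proof (like the non-symmetric one) uses that each column is an i.i.d.\ sample, which requires the diagonal to be drawn from the same continuous law.
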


\section{Radon's theorem: a lower bound for underlying rank}
\label{sec:radon}
As we saw, extremal nodes can help us estimate the underlying rank of a matrix.  However, since a polygon can have an unlimited number of vertices, we cannot use extremal nodes to prove that underlying rank is greater than two.  Further, the minimal nodes only capture a very small part of the combinatorial structure of the point configuration $v_1, \ldots, v_n$. In this section, we leverage more of this structure in order to establish lower bounds on underlying rank. We do so by introducing two simplicial complexes, the sweep complex and the shatter complex, whose dimensions give lower bounds for underlying rank that we establish via Radon's theorem.

\subsection{The Sweep and Shatter Complexes}

\label{sec:shatter}
In this section, we consider the geometric constraints the ordering of entries in the matrix $A$ place upon rank-$d$ representations of $A$. By Lemma \ref{lem:sweep_order}, the order of entries in each row of $A$ corresponds to the order in which a sequence of hyperplanes sweeps past the points $v_1, \ldots, v_m$. We will use this to build two combinatorial objects, the sweep complex and the shatter complex, which will help us bound the underlying rank. While these invariants are more computationally expensive than minimal nodes, they are able to provide stronger lower bounds. 

\begin{figure}
    \centering
    \includegraphics[width = 4 in]{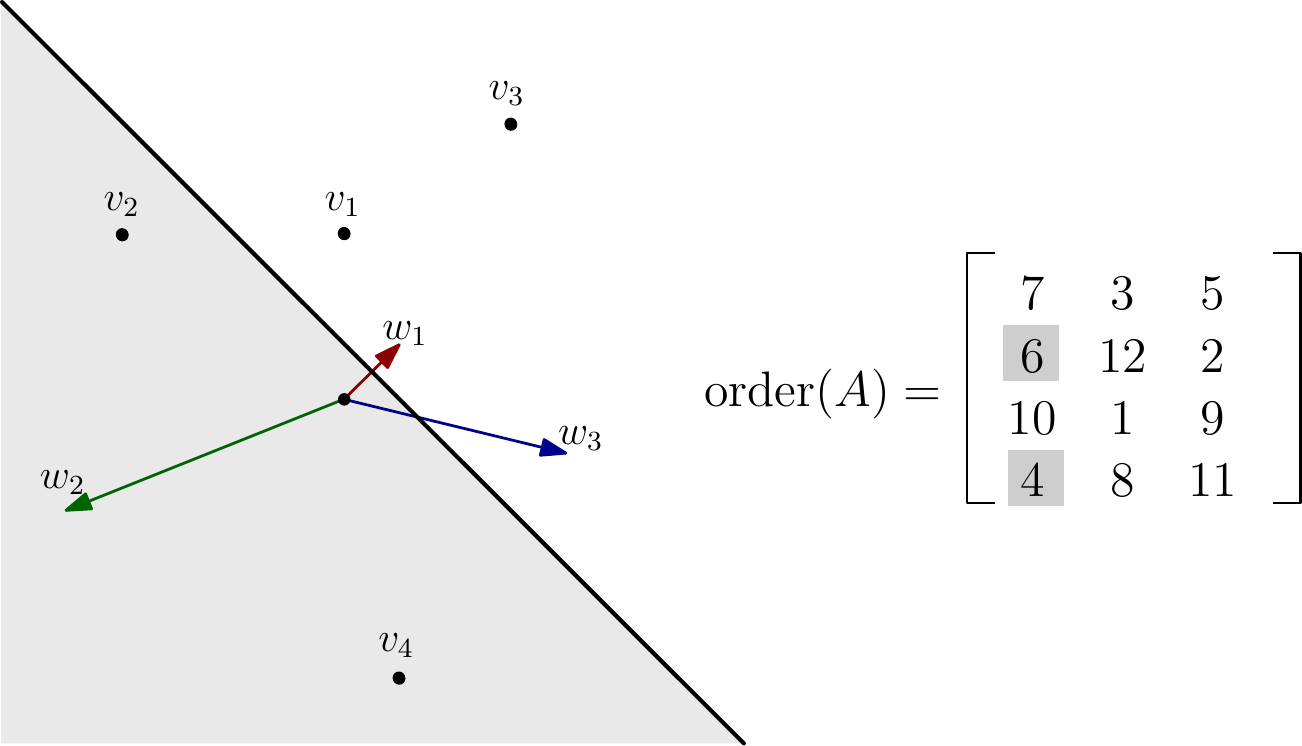}
    \caption[Swept sets.]{The set $\{2, 4\}$ is swept by the first column of $\order(A)$. Consequently, we can deduce that there is a hyperplane separating $\{v_2, v_4\}$ from  $\{v_1, v_3\}$. }
    \label{fig:sep}
\end{figure}

\begin{defn}
Let $A$ be a $m\times n$ matrix. 
A set of columns $\sigma\subseteq [m]$ is \emph{swept} by $A$ if there is some row $i$ such that:
\begin{enumerate}
\item 
$A_{ij} < A_{ik}$ for all $j\in \sigma, k\notin \sigma$
\item 
$A_{ij} > A_{ik}$ for all $j\in \sigma, k\notin \sigma $
\end{enumerate}

An example of a swept set appears in Figure \ref{fig:sep}. 
\end{defn}

\begin{lem}\label{lem:partition}
If a set $\sigma \subseteq [m]$ is swept by $A$, then in any rank-$d$ realization of $A$, there is a hyperplane separating the points $\{v_i\}_{i\in \sigma}$ from the points $\{v_j\}_{j\notin \sigma}$.
\end{lem}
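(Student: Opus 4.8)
This is a direct corollary of Lemma \ref{lem:sweep_order}: being swept is exactly the combinatorial shadow of the points of $\sigma$ occupying an initial (or final) block in one of the sweeps, and a separating hyperplane can be read off by stopping the sweep between the last point of $\sigma$ and the first point of its complement.

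\textbf{Steps.} Fix a rank-$d$ realization $v_1,\dots,v_m,\ w_1,\dots,w_n,\ f$ of $A$, and suppose $\sigma$ is swept, witnessed by a column $j$; after possibly replacing $w_j$ by $-w_j$ (which reverses the sweep direction, interchanging the two defining conditions) we may assume the entries of column $j$ in the rows indexed by $\sigma$ are strictly smaller than those indexed by $[m]\setminus\sigma$. Write $t_{ij}=w_j\cdot v_i$ for the value of $t$ at which the sweeping hyperplane $c_j(t)=\{x\mid w_j\cdot x = t\}$ passes through $v_i$. Since $f$ is strictly increasing, the order of the column entries $A_{ij}=f(t_{ij})$ agrees with the order of the numbers $t_{ij}$ (this is the content of Lemma \ref{lem:sweep_order}), so the hypothesis gives $t_{ij}<t_{kj}$ for all $i\in\sigma$, $k\notin\sigma$. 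In particular $w_j\neq 0$ (otherwise all $t_{ij}$ coincide and $\sigma$ cannot be simultaneously nonempty and proper), so $c_j(t)$ is a genuine affine hyperplane for every $t$. Now put $a=\max_{i\in\sigma}t_{ij}$, $b=\min_{k\notin\sigma}t_{kj}$; then $a<b$, and for $t^\ast=(a+b)/2$ the hyperplane $H=c_j(t^\ast)$ satisfies $w_j\cdot v_i<t^\ast$ for $i\in\sigma$ and $w_j\cdot v_k>t^\ast$ for $k\notin\sigma$. Thus $H$ strictly separates $\{v_i\}_{i\in\sigma}$ from $\{v_j\}_{j\notin\sigma}$, which is what was claimed.

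\textbf{Main obstacle.} Honestly there is none of substance — the statement is essentially an unpacking of Lemma \ref{lem:sweep_order}. The only points needing (routine) care are handling both sweep directions, dispatched by the sign flip $w_j\mapsto -w_j$, and verifying $w_j\neq 0$ so that $H$ is really a hyperplane. Worth recording for the sequel is that the separation produced is \emph{strict}; this is exactly the form needed to feed the sweep and shatter complexes into Radon's theorem to bound $\ur(A)$ from below.
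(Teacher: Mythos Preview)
Your proof is correct and follows essentially the same approach as the paper's: invoke Lemma~\ref{lem:sweep_order} to convert the swept condition into an ordering of the dot products $t_{ij}=w_j\cdot v_i$, then choose a threshold $t'$ strictly between $\max_{i\in\sigma}t_{ij}$ and $\min_{k\notin\sigma}t_{kj}$ so that $c_j(t')$ separates. Your handling of both defining cases via the sign flip and your check that $w_j\neq 0$ are extra bits of care the paper leaves implicit, but the core argument is the same.
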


\begin{proof}
Suppose  $\sigma$ is swept by the $k$-(th) column of $A$.
By Lemma \ref{lem:sweep_order}, the sweeping hyperplane $c_{k}(t)$ encounters all points  $\{v_i\}_{i\in \sigma}$ before it encounters any of the points $\{v_j\}_{j\in [n]\setminus \sigma}$. Let $t_i$ be the time such that $v_i \in  c_{k}(t_i)$. 
Thus we can choose $t'$ so that $\max_{i\in \sigma}{t_i} < t' <\min_{j\in [n]\setminus}{t_j}$.  The hyperplane $c_k(t')$ separates the points $\{v_i\}_{i\in \sigma}$ from the points $\{v_j\}_{[n]\setminus \sigma }$.

\end{proof}

We now describe two ways to build a simplicial complex out of the set system $$H_A := \{\sigma \subseteq [m] \mid  \sigma \mbox{ is swept by } A\}.$$

First, we define the \emph{sweep complex} $\Delta_{sw}(A, L)$ with respect to a set of landmarks $L\subseteq [m]$. This is the largest simplicial complex contained in the set 
$H_A\cap L := \{\sigma \cap L \mid \sigma \in H_A\}$. 
More explicitly, 
$$\Delta_{sw}(A, L) = \{\sigma \mid \tau \in H_A \cap L \mbox{ for all } \tau \subseteq \sigma \}$$
Notice that the vertex set of $\Delta_{sw}(A, L)$ is the set of minimal nodes contained in $L$. More generally, we will see that the faces of 
$\Delta_{sw}(A, L)$ are faces of the convex polytope $\conv(\{v_{i} \mid i\in L\})$. In this sense, the sweep complex is the most natural way to give a structure to the set of minimal nodes. We typically choose $L$ to be a small, proper subset of the minimal nodes of $A$. Picking $L$ to be smaller not only speeds up computation, but also loosens the conditions required to include a simplex $\sigma \in \Delta_{sw}(A, L)$, increasing the rank we detect.  The sweep complex is inspired by the witness complex of \cite{de2004topological}. 

Next, we define the \emph{shatter complex} $\Delta_{sh}(A)$. We say $\sigma\subseteq[m]$ is \emph{shattered} by $H_A$ if for all $\tau \subseteq \sigma$, there exists $\omega\in H_A$ such that $\tau = \sigma \cap \omega$. 
The shatter complex is 
$$ \Delta_{sh}(A) = \{\sigma \mid  \sigma \mbox{ is shattered by } H_A\}.$$
Notice that for any $L$, $\Delta_{sw}(A, L) \subseteq \Delta_{sh}(A)$. 
We will show that each simplex of the shatter complex corresponds to an affinely independent subset of the points $v_1, \ldots, v_m$. We will establish this using a version of Radon's theorem.

Notice that there is a hyperplane separating the points  $\{v_i \mid i \in \sigma\}$  from the $\{v_i \mid j \in \tau\}$ if and only if $\conv(\{v_i \mid i \in \sigma\} ) \cap \conv(\{v_i \mid j \in \tau\}) = \emptyset$. 

\begin{lem}\label{lem:radon}
\textbf{(Radon's theorem)} 
A set of points is affinely independent if for every partition $\sigma, \tau$ of the points, they points in $\sigma$ and the points in $\tau$ can be separated by a hyperplane. 
\end{lem}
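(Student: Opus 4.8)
The plan is to prove the contrapositive, using the remark immediately preceding the lemma, that a partition $(\sigma,\tau)$ of a point set is separable by a hyperplane precisely when the two convex hulls $\conv(\{v_i\}_{i\in\sigma})$ and $\conv(\{v_i\}_{i\in\tau})$ are disjoint. So it suffices to show: if a finite set $\{v_{i_1},\dots,v_{i_k}\}$ is affinely \emph{dependent}, then there is a partition into two nonempty blocks whose convex hulls meet.

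First I would unpack affine dependence: there are scalars $\lambda_{i_1},\dots,\lambda_{i_k}$, not all zero, with $\sum_j \lambda_{i_j}=0$ and $\sum_j \lambda_{i_j} v_{i_j}=0$. Set $\sigma=\{i_j:\lambda_{i_j}>0\}$ and $\tau=\{i_j:\lambda_{i_j}\le 0\}$. Since the coefficients sum to $0$ and are not all zero, $\sigma\neq\emptyset$, and then $\tau\neq\emptyset$ as well (otherwise all coefficients would be nonnegative with zero sum, hence all zero). Put $\Lambda=\sum_{i\in\sigma}\lambda_i=\sum_{i\in\tau}(-\lambda_i)>0$. Then the common point
\[ p \;=\; \sum_{i\in\sigma}\frac{\lambda_i}{\Lambda}\,v_i \;=\; \sum_{i\in\tau}\frac{-\lambda_i}{\Lambda}\,v_i \]
is simultaneously a convex combination of $\{v_i\}_{i\in\sigma}$ and of $\{v_i\}_{i\in\tau}$, so $p\in\conv(\{v_i\}_{i\in\sigma})\cap\conv(\{v_i\}_{i\in\tau})$. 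Indices with $\lambda_i=0$ were placed in $\tau$; this only enlarges that hull and keeps the partition a genuine partition of all the points, so it does no harm. Hence $(\sigma,\tau)$ is not separable, proving the contrapositive.

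There is an equivalent route that makes the name ``Radon's theorem'' literal: restrict to the affine span $W$ of the point set, with $m=\dim W$, so the points lie in a copy of $\R^m$. If $k\ge m+2$, the already-stated Radon's theorem yields a partition with intersecting convex hulls, hence a non-separable partition; so the hypothesis forces $k\le m+1$, and since always $m\le k-1$, we get $m=k-1$, i.e.\ affine independence. I expect no real obstacle here: the only points needing care are the empty-block edge case and the zero-coefficient bookkeeping above, together with the fact that ``separating hyperplane'' is used here only in the weak sense supplied by the preceding remark, which is exactly what is needed to conclude affine independence of shattered sets. So the argument is essentially a one-line reduction once the contrapositive is set up.
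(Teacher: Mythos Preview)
Your proof is correct and is the standard argument for Radon's theorem. The paper does not actually prove this lemma; it simply states it as a known result (with the citation \cite{radon1921mengen} given earlier in Chapter~\ref{chapter:combo_background}), so there is no paper proof to compare against. Your contrapositive via an affine dependence relation, splitting by the signs of the coefficients, is exactly the classical proof and suffices here.
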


\begin{prop} Let $A$ be a $m\times n$ matrix with underlying point arrangement $v_1, \ldots, v_m$. Then if $\omega \in \Delta_{sh}(A)$, the set $\{v_i \mid i \in \omega\}$ is affinely independent. 
\end{prop}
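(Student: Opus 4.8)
The plan is to fix a rank-$d$ representation $v_1,\dots,v_m, w_1,\dots,w_n, f$ of $A$ (the one furnishing the underlying point arrangement in the statement) and show directly that $\{v_i \mid i \in \omega\}$ admits no Radon partition, so that it is affinely independent by Lemma \ref{lem:radon}. Concretely, I would let $\omega = \omega_1 \sqcup \omega_2$ be an arbitrary partition and construct a hyperplane separating $\{v_i \mid i \in \omega_1\}$ from $\{v_i \mid i \in \omega_2\}$; once this is achieved for every partition, the (contrapositive) form of Radon's theorem recorded in Lemma \ref{lem:radon} yields that $\{v_i \mid i \in \omega\}$ is affinely independent, which is exactly the claim. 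Note the argument does not depend on which rank-$d$ representation was chosen, so the conclusion applies in particular to any representation witnessing $\ur(A)=d$.

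The heart of the proof combines the defining property of the shatter complex with Lemma \ref{lem:partition}. Since $\omega \in \Delta_{sh}(A)$, the set $\omega$ is shattered by $H_A$, so applying shatteredness to the subset $\omega_1 \subseteq \omega$ produces a swept set $\eta \in H_A$ with $\omega \cap \eta = \omega_1$. Because $\eta$ is swept by $A$, Lemma \ref{lem:partition} gives a hyperplane $H$ separating $\{v_i \mid i \in \eta\}$ from $\{v_j \mid j \notin \eta\}$ in our fixed realization. I would then record the two inclusions $\omega_1 = \omega \cap \eta \subseteq \eta$ and $\omega_2 = \omega \setminus \omega_1 = \omega \cap ([m]\setminus\eta) \subseteq [m]\setminus\eta$, which immediately show that $H$ places all of $\{v_i \mid i \in \omega_1\}$ strictly on one side and all of $\{v_i \mid i \in \omega_2\}$ strictly on the other. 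Thus $H$ separates the two blocks of the partition, and since $\omega_1$ ranges over all subsets of $\omega$, every partition of $\{v_i \mid i \in \omega\}$ is separable.

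I do not expect a genuine obstacle here: the proof is essentially a one-line composition of the definition of $\Delta_{sh}(A)$, Lemma \ref{lem:partition}, and Lemma \ref{lem:radon}. The only points needing a little care are bookkeeping. First, one should use that the separation provided by Lemma \ref{lem:partition} is the strict (open-halfspace) separation equivalent to disjointness of convex hulls — as noted in the remark preceding Lemma \ref{lem:radon} — so that it genuinely descends to the subsets $\omega_1$ and $\omega_2$ as a hyperplane separating $\conv(\{v_i\}_{i\in\omega_1})$ from $\conv(\{v_i\}_{i\in\omega_2})$. Second, the degenerate cases cause no trouble: a partition with an empty block, or a face $\omega$ with $|\omega|\le 1$, is automatically fine since the empty set and singletons are affinely independent and have empty convex hull, and the case $v_i = v_{i'}$ for distinct $i,i'\in\omega$ is already excluded by shatteredness (taking $\omega_1=\{i\}$ would force a hyperplane strictly separating two equal points). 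These remarks can be dispatched in a sentence or two.
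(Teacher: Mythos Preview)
Your proposal is correct and follows essentially the same approach as the paper: use shatteredness of $\omega$ to realize each block of an arbitrary partition as $\omega \cap \eta$ for some swept set $\eta \in H_A$, invoke Lemma~\ref{lem:partition} to get a separating hyperplane, and conclude via Radon's theorem (Lemma~\ref{lem:radon}) that $\{v_i \mid i \in \omega\}$ is affinely independent. Your write-up is in fact more careful than the paper's about the inclusion bookkeeping and the degenerate cases.
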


\begin{proof}
Suppose $\omega \in \Delta_{sh}(A)$. 
Then for each partition $\sigma \cup \tau = \omega$, $\sigma \cap \tau = \varnothing$, both $\sigma$ and $\tau$ are intersections $\sigma = h \cap \omega$, $\tau = h'\cap  \omega$ for some $h, h' \in H_A$. 
Thus there is a hyperplane separating the points $\{v_i \mid i \in \sigma\}$ from the points  $\{v_j \mid j \in \tau\}$. Thus, the set of points $\{v_i \mid i \in \sigma\}$ has no Radon partition. Thus, it is affinely independent.
\end{proof}

An affinely independent set in $\R^d$ has at most $d+1$ points.
Thus, the dimension of the sweep complex is a lower bound on underlying rank. 
Inspired by Radon's theorem, we term this the \emph{Radon rank} $\radr(A)$.

\begin{prop}\label{prop:radon_bound} The Radon rank of a matrix is a lower bound on its underlying rank. That is, $$\radr(A) \leq \ur (A).$$
\end{prop}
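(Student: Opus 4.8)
The plan is to combine the proposition immediately preceding the statement (every simplex of $\Delta_{sh}(A)$ indexes an affinely independent subset of the underlying point arrangement) with the elementary fact that an affinely independent set in $\R^d$ has at most $d+1$ points.

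First I would unwind the definitions. Suppose $\ur(A) = d$. By definition there is a rank-$d$ matrix $B$ with $A_{ij} \le A_{kl} \iff B_{ij} \le B_{kl}$, equivalently $A_{ij} = f(B_{ij})$ for a strictly increasing $f$. Applying the Proposition of Section~\ref{sec:geom} to $B$, we obtain point arrangements $v_1,\dots,v_m, w_1,\dots,w_n \in \R^d$ with $B_{ij} = v_i \cdot w_j$, and hence $A_{ij} = f(v_i\cdot w_j)$. This exhibits a rank-$d$ representation of $A$ whose underlying point arrangement $v_1,\dots,v_m$ lives in $\R^d$.

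Next, fix an arbitrary simplex $\omega \in \Delta_{sh}(A)$. By the proposition just above the statement, the set $\{v_i \mid i\in \omega\}$ is affinely independent in $\R^d$. An affinely independent subset of $\R^d$ contains at most $d+1$ points, so $|\omega| \le d+1$, i.e.\ $\dim \omega = |\omega| - 1 \le d$. Since $\omega$ was arbitrary, $\dim \Delta_{sh}(A) \le d$. Because the Radon rank $\radr(A)$ is by definition the dimension of the shatter complex (and, since $\Delta_{sw}(A,L) \subseteq \Delta_{sh}(A)$ for every landmark set $L$, this dimension also dominates that of every sweep complex), we conclude $\radr(A) \le d = \ur(A)$.

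The argument has essentially no remaining obstacle: all the real work was done in Lemma~\ref{lem:partition} (swept sets yield separating hyperplanes) and in the Radon-theorem argument showing shattered sets are affinely independent. The only point requiring a moment's care is the bookkeeping around the definition of Radon rank — making sure we take it to be $\dim \Delta_{sh}(A)$ (equivalently, that no choice of landmarks produces a sweep complex of larger dimension), so that the per-simplex bound transfers to the whole complex.
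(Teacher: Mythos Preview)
Your proposal is correct and follows essentially the same approach as the paper: the paper's argument (given in the text immediately preceding the proposition) is exactly the two-step deduction you wrote out --- simplices of $\Delta_{sh}(A)$ index affinely independent sets in any rank-$d$ representation, and affinely independent sets in $\R^d$ have at most $d+1$ elements, so $\dim\Delta_{sh}(A)\le d$. Your added bookkeeping about $\Delta_{sw}(A,L)\subseteq\Delta_{sh}(A)$ is a helpful clarification but not a departure from the paper's reasoning.
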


A similar bound can be derived from the sweep complex, however, this requires slightly more casework. Further, this cannot exceed the Radon rank. 

The \emph{Radon rank} closely resembles the concept of the \emph{Vapnik–Chervonenkis (VC) dimension} from statistical learning theory \cite{vapnik1971uniform}. The VC dimsnion of a set system $H$ is the size of the largest set shattered by $H$. Thus,  Radon rank of a matrix $A$ is the VC dimension of the set system $H_A$, minus one. Similar ideas have been used to estimate the dimensionality of neural activity in \cite{rigotti2013importance}. 

\subsection{How high can Radon rank be?}
The Radon rank is not equal to either the underlying rank or the monotone rank in general. In particular, we see in the next proposition that the the size of the smallest matrix with Radon rank $d$ is exponential in $d$. This is due to a combinatorial explosion in the number of column orders needed to shatter a simplex. 

\begin{prop}\label{prop:radon_limits}
The Radon rank $d = \radr(A)$ of a $m\times n$ matrix $A$ satisfies 
$$\frac 1 2 \binom{d+1}{\lfloor (d+1)/2\rfloor} \leq n.$$
\end{prop}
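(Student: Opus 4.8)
The plan is to unwind the definition of Radon rank and then count, inside a shattered set, how many subsets of the \emph{middle} cardinality a single column of $A$ can possibly realize. By definition $\radr(A)=\dim\Delta_{sh}(A)$, so there is a set $\omega\subseteq[m]$ with $|\omega|=d+1$ that is shattered by the set system $H_A$; that is, for every $\tau\subseteq\omega$ there is some $h\in H_A$ with $h\cap\omega=\tau$. The first step is to record the combinatorial shape of $H_A$: by the definition of ``swept'' (and the geometric picture of Lemma~\ref{lem:sweep_order}), a set lies in $H_A$ precisely when it is an initial segment or a final segment of the ordering of $[m]$ induced by the entries of a \emph{single} column of $A$, with equal entries forced to the same side. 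Since $A$ has only $n$ columns, $H_A$ is a union of $n$ families, the $k$-th being the prefixes and suffixes of the $k$-th column order on $[m]$.

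The key observation is that passing to $\omega$ respects this prefix/suffix structure: if $h$ is a prefix of the $k$-th column order on $[m]$, then $h\cap\omega$ is a prefix of the (pre)order that the $k$-th column induces on the $d+1$ elements of $\omega$, and similarly for suffixes. Hence $\{\,h\cap\omega \mid h\in H_A \text{ coming from column } k\,\}$ is exactly the collection of prefixes and suffixes of one linear (pre)order on a $(d+1)$-element set. Now set $s:=\lfloor(d+1)/2\rfloor$. Prefixes of a fixed order are nested, so there is at most one prefix of cardinality $s$; likewise at most one suffix of cardinality $s$. Therefore each column of $A$ contributes \emph{at most two} subsets of $\omega$ of cardinality exactly $s$ to the family $\{\,h\cap\omega \mid h\in H_A\,\}$.

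Finally I would combine this with shattering. Since $\omega$ is shattered by $H_A$, every one of the $\binom{d+1}{s}$ subsets of $\omega$ of size $s$ must occur as $h\cap\omega$ for some $h\in H_A$, hence must come from one of the $n$ columns. Counting with the bound of the previous paragraph gives $2n\ge\binom{d+1}{s}=\binom{d+1}{\lfloor(d+1)/2\rfloor}$, i.e.\ $\tfrac12\binom{d+1}{\lfloor(d+1)/2\rfloor}\le n$, with the case $d=0$ immediate. I do not expect a real obstacle: this is a clean double count, and the only trick is to restrict attention to the middle layer of the Boolean lattice on $\omega$, which turns ``a column can realize many subsets of $\omega$'' into ``a column can realize at most one subset of $\omega$ of each fixed size.'' The points needing a little care are the bookkeeping around ties within a column (so that ``prefix/suffix'' is the right notion even when a column has repeated entries) and keeping the transpose conventions straight — the $n$ sweeping directions are indexed by the columns of $A$, which is exactly why $n$, rather than $m$, is the quantity being bounded.
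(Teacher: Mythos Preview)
Your proof is correct and rests on the same underlying idea as the paper's: each column of $A$ contributes two chains (its prefixes and its suffixes) in the Boolean lattice on the shattered $(d+1)$-set, and the middle layer is the bottleneck. The paper packages this as an appeal to Dilworth's theorem (minimum chain cover equals maximum antichain size) together with Sperner's theorem (the largest antichain in $2^{[d+1]}$ is the middle layer of size $\binom{d+1}{\lfloor(d+1)/2\rfloor}$), concluding that $2n$ chains are needed to cover the lattice. You bypass Dilworth entirely by observing directly that any chain meets a fixed level in at most one element, and then counting at the middle level. Your route is more elementary --- only the trivial direction of Dilworth (chain cover $\geq$ antichain size) is actually needed, and you supply it by hand --- while the paper's phrasing has the virtue of situating the bound inside a standard extremal-combinatorics framework. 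Your remarks about ties and about restriction to $\omega$ preserving the prefix/suffix structure are the right sanity checks, and they go through without trouble.
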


A plot of this bound appears in Figure \ref{fig:n_vs_d}.

\begin{proof}
If $\radr(A) = d$, $A$ shatters a set $\rho$ with $|\rho| = d+1$. Thus, for each $\sigma \subseteq \rho$, the set $\sigma$ is swept by $A$. 
Each column of $A$ sweeps two nested sequences of sets,  $\emptyset \subseteq \sigma_1 \subseteq \cdots \subseteq \\sigma_d \subseteq \rho$ and $\emptyset \subseteq \tau_1 \subseteq \cdots \subseteq \tau_d \subseteq\rho.$
These are \emph{chains} in the Boolean lattice, the partially ordered set whose elements are subsets of $\rho$ ordered by set-inclusion. 
In order to induce every partition of $\rho$, these chains must cover the Boolean lattice on $\rho$. 
Dilworth's theorem \cite{dilworth1950decomposition} states that the minimal number of chains needed to cover a poset $P$ is exactly equal to the length of the longest antichain of $P$. 
Sperner's theorem \cite{sperner1928satz} states that the longest antichain of the Boolean lattice on $k$ elements has length $\binom{k}{\lfloor k/2\rfloor}$. Thus, the number of permutations needed to induce all partitions of $\rho$ is  $\binom{d+1}{\lfloor (d+1)/2\rfloor}$. 
Since each column induces two permutations, we have $\frac 1 2 \binom{d+1}{\lfloor (d+1)/2\rfloor} \leq n$, as desired. 
\end{proof}

\begin{figure}
    \centering
       \includegraphics[width = 3	 in]{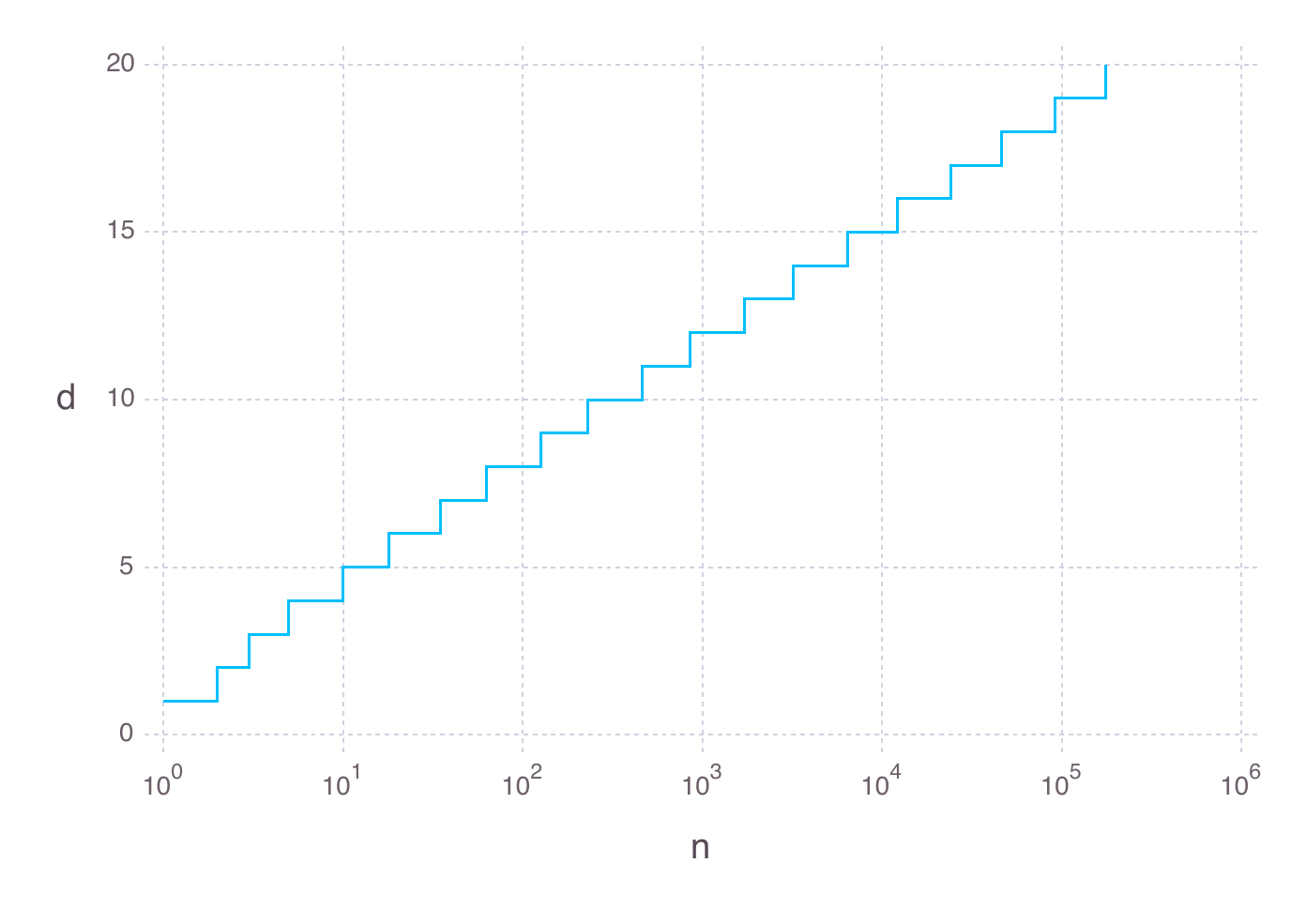}
    \caption{The maximum underlying rank $d$ detectable by an $n\times n$ matrix, plotted as a function of $n$. Note that the scale for $n$ is logarithmic.  }
    \label{fig:n_vs_d}
\end{figure}

\section{Application: dimensionality of neural activity in the larval zebrafish optic tectum}
\label{sec:zebrafish}

As a proof of concept, we apply the techniques developed here to estimate the underlying rank of a neural dataset.  
The data consists of calcium imaging of $N = 2042$ neurons in the optic tectum of a larval zebrafish obtained from the Sumbre lab at École Normale Supérieure. The recording is one hour long. 
The zebrafish is in the dark for the duration of the recording, thus the recorded activity is spontaneous, rather than stimulus driven. The sampling rate is 15 Hz. 
This means the dataset is a $54000 \times 2042 $ matrix $A$. 
To reduce noise, we smoothed the data by averaging the activity of each neuron over a sliding window of 2 seconds (30 time bins), producing a matrix $A_{smooth}$. 

To get an estimate of the linear dimensionality of our data, we compute the singular values of $A$ and $A_{smooth}$, plotted in Figure \ref{fig:data_svd}. Notice that for both $A$ and $A_{smooth}$, the singular values decay smoothly with dimension, following a roughly power law relationship between singular value and dimension. The singular values of $A_{smooth}$ decay faster. 

\begin{figure}[ht!]
\begin{center}
\includegraphics[width = 4 in]{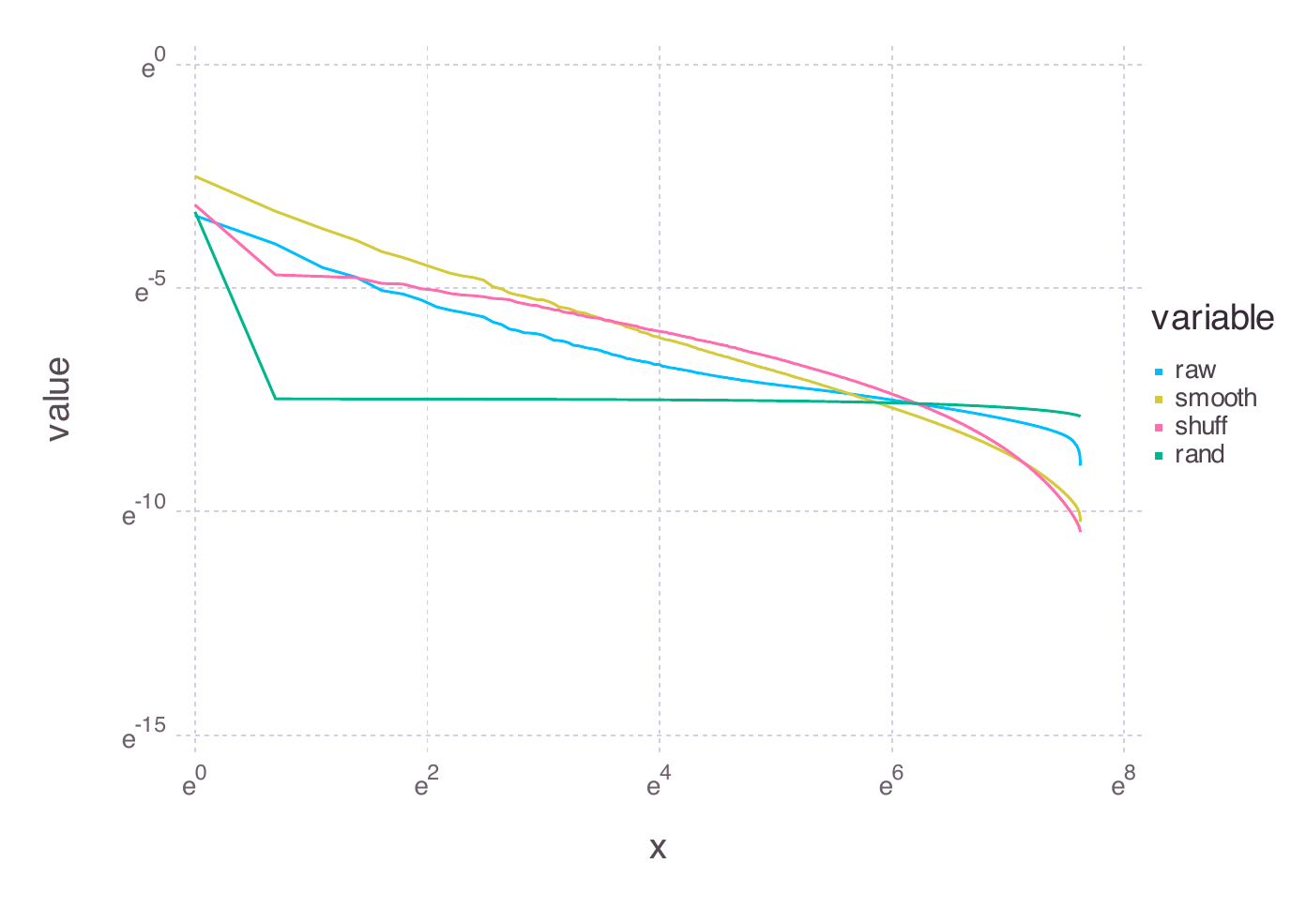}
\end{center}
\caption{Singular values of the data matrix, before and after smoothing. \label{fig:data_svd}}
\end{figure}

However, the monotone nonlinear relationship between firing rate and calcium fluorescence may affect this measure of dimensionality. Thus, we use the concepts we have introduced in this paper to determine whether the ordering of the entries of this matrix is consistent with the estimate of underlying rank based on singular values, or whether it is consistent with a lower rank. 

Because differences between low values of calcium fluorescence do not reflect anything meaningful, while differences between high values does, we focus on maximal nodes, rather than minimal nodes. 
We sample square submatrices of $A_{smooth}$ and count the number of maximal nodes that each one has. We then repeat this process with random matrices of ranks 1, 5, and 10. 
We plot the results of this in \ref{fig:min_nodes_data}. Notice that, in the real data, there is a distinction between maximal rows and maximal columns. In particular, a maximal row corresponds to a time point $t$ where some neuron $i$ achieves its maximal value.  On the other hand, a maximal column correspond to a neuron $i$ which is the most active neuron at some time point $t$.

\begin{figure}[ht!]
    \centering
    \includegraphics[width = 6 in]{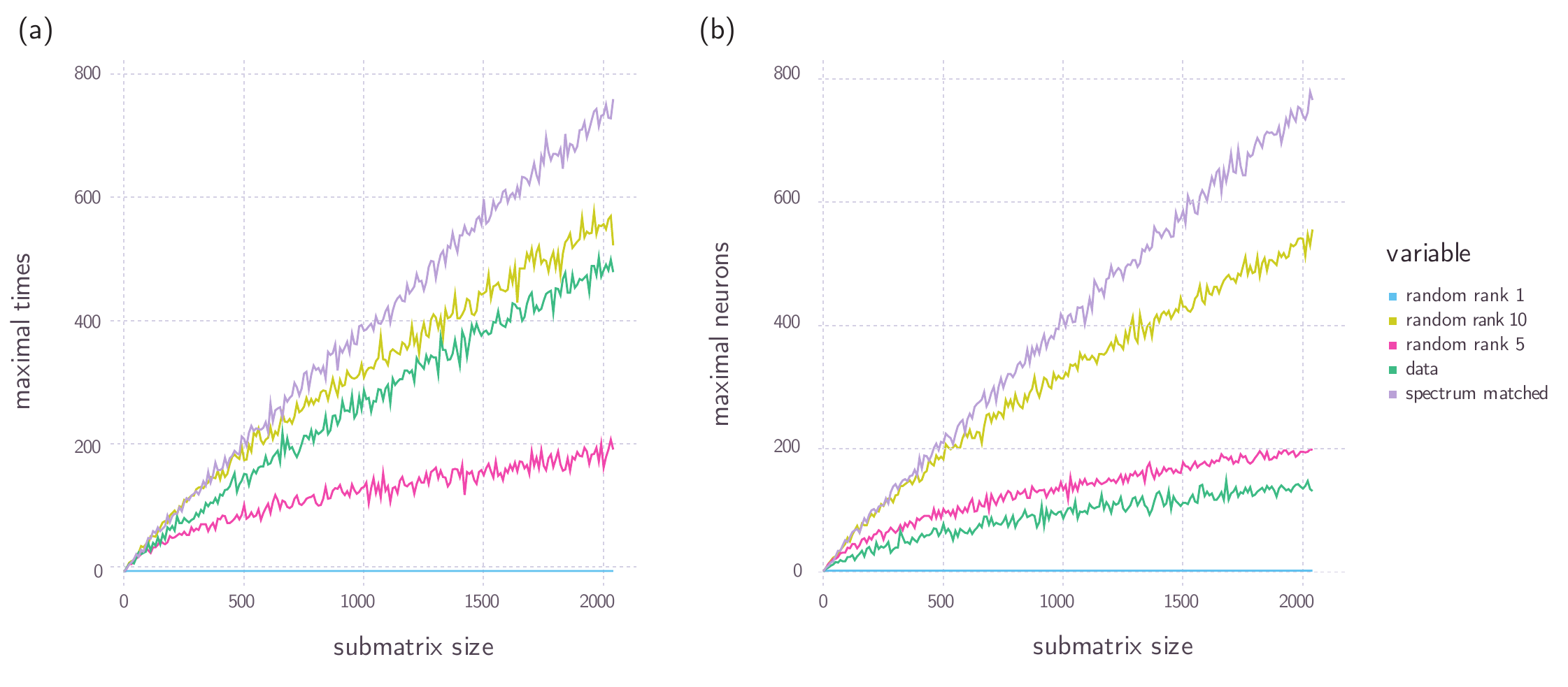}
    \caption[Applying minimal nodes to data.]{(a) maximal timepoints (b) maximal neurons}
    \label{fig:min_nodes_data}
\end{figure}

Notice in Figure \ref{fig:min_nodes_data}, based on the maximal rows (maximal times), the data is most consistent with underlying rank 10, while based on maximal columns (maximal neurons), the data is most consistent with underlying rank 5.  

Next, in Figure \ref{fig:shatter_data}, we use the concept of Radon rank to estimate the underlying rank of $A_{smooth}$. The exact Radon rank of a matrix of this size is prohibitively inefficient to compute, and subject to localized noise, since one set of size $k+1$ being shattered is enough to ensure Radon rank is at least $k$. Therefore, we estimate Radon rank through random sampling and comparison. Since there are more time points than neurons, it is possible to estimate Radon rank over a higher range by sampling sets of $k+1$ neurons and checking whether each possible partition of the neurons is achieved by a row. With the size of the matrix, the highest possible Radon rank which can be detected is 18. Therefore, for each  $k = 2, 3, \ldots, 19 $, we sampled 100 sets of $k$ neurons and checked which subsets were shattered. For the data matrix, as well as low-rank controls, we plot the fraction of sets which were shattered for each $k$ in Figure \ref{fig:shatter_data}. 

In contrast with minimal nodes, this result is consistent with the singular values of the matrix, and is not consistent with a lower rank. In particular, the data shatters some sets of 13 neurons, establishing a Radon rank of at least 12. The fraction of sets shattered as a function of $k$ most closely resembles a random rank 20 matrix, and is also very similar to that of a singular value-matched control.

\begin{figure}[ht!]
    \centering
    \includegraphics{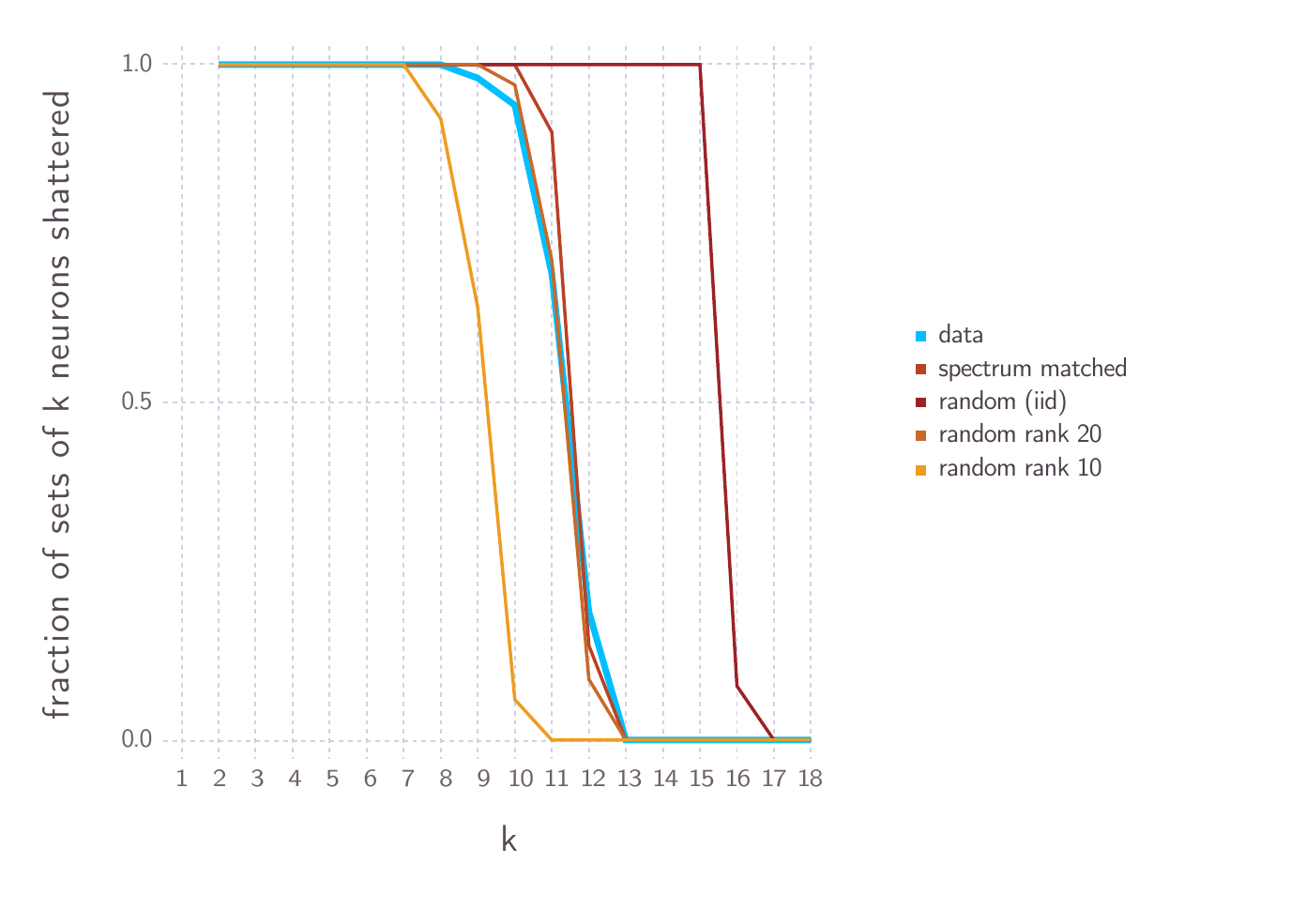}
    \caption[Applying Radon rank to data]{Fraction of sets of size $k$ shattered as a function of $k$. }
    \label{fig:shatter_data}
\end{figure}

\section{Conclusion and Open Questions}

In Section \ref{sec:minimal}, we show that the number of minimal nodes of a random matrix increases as we increase rank, thus that the number of minimal nodes can be used as an order invariant for estimating underlying rank. However, the precise dependence of the number of minimal nodes on rank depends on the probability distribution used to choose our random matrix. Thus, in order to obtain reliable estimates of the underlying rank of a matrix arising from neural data using minimal nodes, we need a biologically realistic way to simulate neural activity with a specified rank.
\begin{question}
Give a biologically plausible model for random matrices of each rank, and characterize the relationship between rank and minimal nodes from this model. 
\end{question} 
Further, often data does not have an idealized low rank + noise structure. Instead, singular values decay more smoothly, but it is still often reasonable to regard the data as low rank.
\begin{question}
Characterize the minimal nodes of matrices with a wider variety of spectra. 
\end{question}


\chapter{The Geometry of Underlying Rank }
\label{chapter:urank_math}

\section{Introduction}

In the previous chapter, we introduced the concept of underlying rank and introduced two tools for estimating it, minimal nodes and Radon rank. 
In this chapter, we explore the concept of underlying rank in more mathematical detail, show that underlying rank may exceed these estimates, and show that it is difficult to compute exactly. Each of these examples and results arises from the connection between underlying rank and oriented matroid theory. 

We begin by describing the relationship underlying rank and three closely related concepts: monotone rank, sign rank, and convex sensing. In particular, the connection to sign rank gives us the following result:

\begin{ithm}\label{cor:hadamard}
For each $N = 2^n$,  $O(H_n)$ is a $N\times N$ order matrix with $$\ur(O(H_n)) \geq  \sqrt{N}-1.$$
\end{ithm} 

Combined with Proposition \ref{prop:radon_limits}, this implies that underlying rank can exceed Radon rank. Motivated by this,  we give three explicit, small  examples of matrices whose underlying rank exceeds their Radon rank. These appear in Example \ref{ex:strict}, Proposition \ref{prop:rad_strict}, and Example \ref{ex:allowable}. In particular,  Proposition \ref{prop:rad_strict} demonstrates that the set of potential Radon partitions must have a certain structure. In order to generalize this, we relate underlying rank to oriented matroid theory via the following result. 

\begin{ithm}
Suppose $A$ has underlying rank $d$. Then the potential radon partitions of rank $d+1$ of $A$ contain the circuits of a representable oriented matroid of rank $d+1$.\label{thm:potential} 
\end{ithm}

Next, in Example \ref{ex:allowable}, we give an example of a matrix for which the detailed column orders contain information beyond the set of Radon partitions. This example arises from a connection underlying rank and allowable sequences. In particular,  we note in Observation \ref{obs:allowable} that if a matrix has underlying rank 2, then its columns are contained in a subsequence of an allowable sequence. We use this connection to give an approximate algorithm for determining whether a matrix has underlying rank 2, whose correctness is proven in Proposition \ref{prop:allow_alg}. Further, we exploit the connection between underlying rank and allowable sequences to prove that computing underlying rank is hard. 

\begin{ithm}\label{cor:urank_hard}
Checking whether a matrix has underlying rank two is $\exists \R$-complete and thus NP-hard. 
\end{ithm}

\setcounter{ithm}{0}

This chapter is organized as follows:
First, in Section \ref{sec:math_context}, we introduce the underlying rank in a mathematical context, comparing it to other similar notions.  
Next, in Section \ref{sec:counter}, we give an explicit example of a matrix whose underlying rank exceeds its Radon rank because the set of possible Radon partitions is not consistent. 
In Section \ref{sec:comb_geo}, we generalize this example using oriented matroid theory. 

\section{Mathematical context}
\label{sec:math_context}
In this section, we place the concept of underlying rank in its mathematical context though comparison to three related problems: monotone rank, convex sensing, and sign rank. 

\subsection{Monotone rank}
In Lemma \ref{lem:sweep_order}, we proved that the order of entries in the $j$-th column of a matrix $A$ corresponds to the order in which a hyperplane swept perpendicular to $w_j$  sweeps past the points $v_1, \ldots, v_m$ in a rank $d$ realization of $A$. 
Notice that this proof does not depend on the monotone function $f$ being the same for each column. That is, we can replace the global function $f$ with functions $f_1, \ldots, f_n$ such that $A_{ij} = f_j (v_i \cdot w_j)$. This fact is used in \cite{egger20xxtopological} to define the \emph{monotone rank} of a matrix.  Like underlying rank, monotone rank was introduced in the context of mathematical neuroscience. It makes sense to consider the monotone rank, rather than the underlying rank when the biological context means that each column is distorted in a different way.

\begin{defn}
The \emph{monotone rank} of $A$, written $\monr(A)$, is the smallest value of $d$ such that there exists a rank-$d$ matrix $B$ and a set of monotone functions $f_1, \ldots, f_n$ such that $A_{ij} = f_j(B_{ij})$. 
\end{defn}

Notice that the monotone rank depends only on the ordering of entries within columns of the matrix, while the underlying rank depends on the complete ordering of matrix entries. Further, notice that monotone rank of both $A$ and $A^T$ is a lower bound on underlying rank: for any matrix $A$, $\monr(A)\leq \ur(A)$ and  $\monr(A^T)\leq \ur(A)$.  However, we will see in Example \ref{ex:strict} that monotone rank of a matrix can be strictly lower than its underlying rank. To develop this example, we use a simple characterization of matrices which have monotone rank one noted in \cite{egger20xxtopological}: a matrix $A$ has monotone rank one if and only if there is some order such that every column is either in this order or the reverse order.

\begin{ex}\label{ex:strict}
Consider the matrix 
$$A = \begin{pmatrix}
1 & 3 & 4\\
2 & 5 & 8\\
6 & 7 & 9\\
\end{pmatrix}$$
We first note that  $\monr(A) = 1$ because each column of $A$ is in the same order, increasing from top to bottom. However, we prove that $\ur(A) = 2$.  

Suppose to the contrary that $\ur(A) = 1$. Then there exists a rank-one matrix $B$ whose entries are in the same order as the entries of $A$. In rank one, the inner product reduces to scalar multiplication, and thus $B$ is rank one if and only if there exist real numbers $v_1, v_2, v_3$ and  $w_1, w_2, w_3$ such that $B_{ij} = w_iv_j$ 

We can choose these real numbers to be positive. 
To see this, first notice that because all rows are in the same order, the points $w_1, w_2,$ and $w_3$ must all have the same sign. We can choose this sign to be positive, flipping the signs on $v_1, v_2, v_3$ to compensate if necessary. Now, once we have fixed  $w_1, w_2,w_3 > 0$, we notice that adding a constant shift to the $v_1, v_2, v_3$ to ensure that $v_1, v_2, v_3 > 0 $ does not change the ordering among the products $v_iw_j$. 
Thus, without loss of generality, we can choose $v_1, v_2, v_3 >0$. 

This allows us to replace each entry with its logarithm. Since the logarithm is a monotone increasing function, this does not change the ordering of matrix entries. Thus, the matrix $C$ defined by $C_{ij} = \log(w_iv_j) = \log(w_i)+ \log(v_j)$ has the same ordering as the original matrix $A$ and the rank-one matrix $B$. To simplify notation, denote $w_i' = \log(w_i)$, $v_i' = \log(v_i)$. Now, we derive some additional inequalities from the inequalities in the matrix:
\\
\\
From $A_{21}< A_{12}$ and $ A_{31} > A_{22} $, we have
\begin{align*}
w_2' + v_1' < w_1' + v_2'\\
w_3' + v_1' > w_2' + v_2'
\end{align*}
Subtracting, we have the inequality
\begin{align*}
w_2'-w_3' < w_1'-w_2'.
\end{align*}
On the other hand, from $A_{13}< A_{22}$ and $A_{23}> A_{32}$, we have
\begin{align*}
w_1' + v_3' < w_2' + v_2'\\
w_2' + v_3' > w_3' + v_2'
\end{align*}
Subtracting, we have the inequality. 
\begin{align*}
w_1'-w_2' < w_2'-w_3'.
\end{align*}
Thus, we have arrived at a contradiction.\\ 

Thus, the underlying rank of $A$ is at least two, even though the monotone rank is one. This means $\ur(A) > \monr(A)$. This argument can be generalized using Farkas lemma from linear programming. There is a general characterization of matrices of underlying rank one using this idea \cite{LauraAnderson}.  \end{ex}

Despite the simple characterization of matrices of monotone rank one, we show in Section \ref{sec:comb_geo} that characterizing matrices of monotone rank two or higher exactly is computationally intractable. On the other hand, if the rows of $A$ contain the complete set sweeping orders of $v_1, \ldots, v_m$, and we know that this is indeed the complete set, then it is in principle easy to recover the monotone rank of $A$. See  \cite{padrol2021sweeps} for an in depth exploration of this set of sweeping orders as a combinatorial object.

However, we do not expect our data sets to contain this complete set of permutations. The authors of \cite{egger20xxtopological} consider the problem of determining monotone rank in this more limited condition. They show that techniques using directed complexes are effective at estimating monotone rank when the number of sweep orders $n$ is much greater than $d$ and  techniques using the Dowker complexes are effective at estimating monotone rank when the number of points being swept past $m$ is much greater than $d$.

\subsection{Sign rank}\label{sec:sign_rank}
Finally, underlying rank is closely related to the concept of \emph{sign rank}, used in theoretical computer science \cite{alon2014sign, forster2002linear, basri2009visibility}. The relationship between sign rank and underlying rank is illustrated in Figure \ref{fig:order_vs_sign}. 

\begin{defn}
The \emph{sign rank} of a matrix $A$ is the minimum rank of a matrix with the same sign-pattern as $A$:
\begin{align*}
    \mathrm{sign\, rank}(A) = \min\{\rank(B): \sign(A_{ij}) = \sign(B_{ij})\mbox{ for all } i, j\}
\end{align*}
\end{defn}

Since the only information the sign rank preserves about a matrix is its sign pattern, we typically consider the sign ranks of matrices with entries $\pm 1$, which we term \emph{sign matrices}. 
We can relate each order matrix to a family of  sign matrices. This will allow us to bound underlying rank of a matrix in terms of the sign rank. In particular, this allows us to show that there is a family of $N\times N$ matrices with underlying rank on the order of $\sqrt{N}$ for all $N = 2^n$. For large enough $n$, this exceeds all other bounds we give for underlying rank in this paper. 

\begin{figure}
    \centering
    \includegraphics[width = 4.5 in]{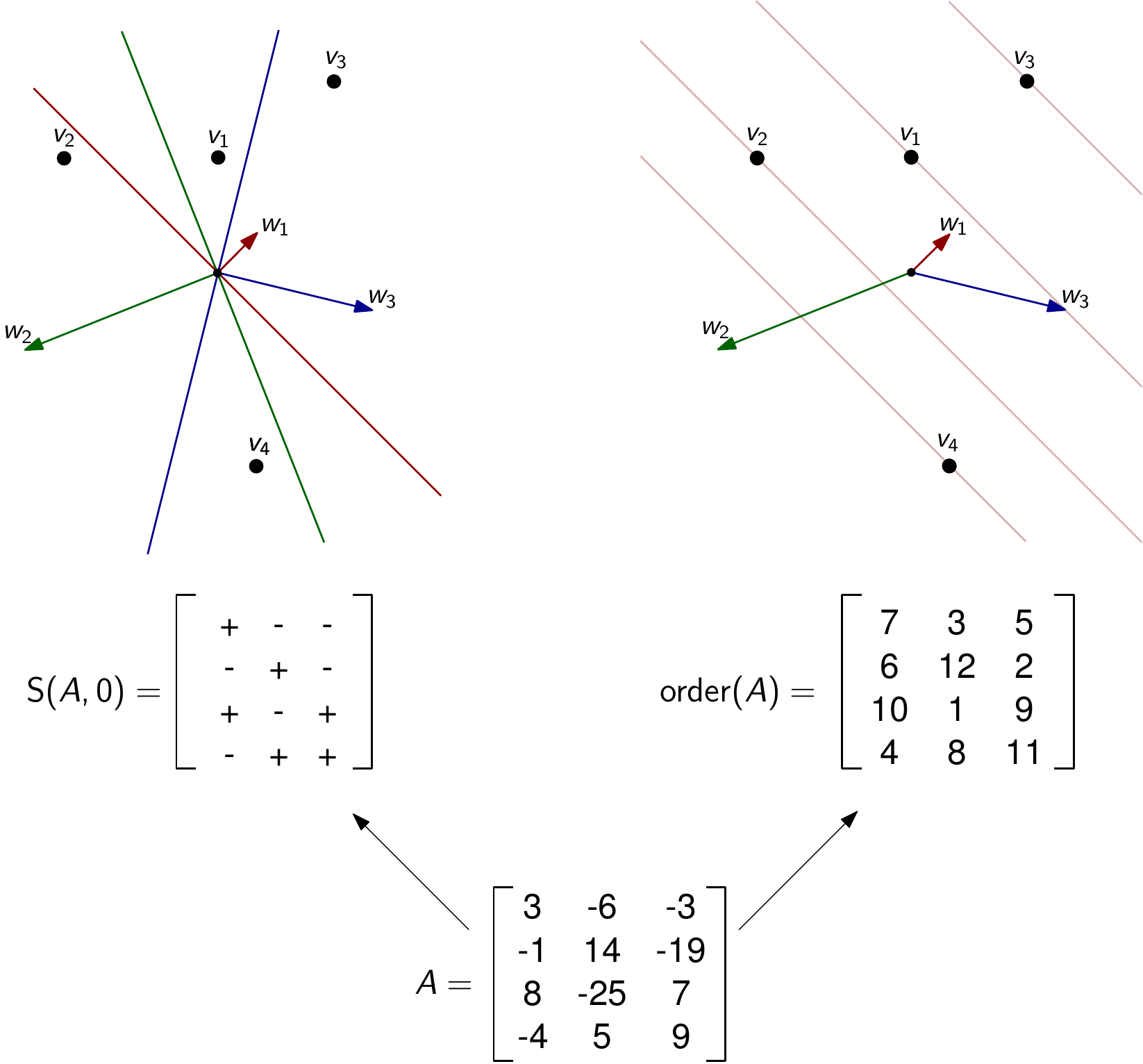}
    \caption[Comparison between sign rank and underlying rank.]{The information carried by the signs of a matrix as compared to the order. On the left, $S(A, 0)_{ij}$ indicates which side of the hyperplane perpendicular to $w_i$ through the origin the point $ v_j$ falls on. }
    \label{fig:order_vs_sign}
\end{figure}

\begin{defn}
Let $A$ be an $m\times n$  matrix, $\theta\in \R$. Then define the sign matrix of $A$ at threshold $\theta$ to be the matrix 
\begin{align*}
    S(A, {\theta})_{ij} =\sign(A_{ij} -\theta).
\end{align*}
We use the convention that if $A_{ij}-\theta = 0,$ then $S(A, {\theta})_{ij} = -1$. 
\end{defn}

Notice that the matrix  $S(A, {\theta})$ can only change when $\theta$ passes through some $A_{ij}$. Thus for each matrix $A$ there are at most $mn$ sign matrices which arise as  $S(A, {\theta})$. We can put all of these matrices together into one matrix. 

\begin{defn}\label{def:S(A)}
Define the matrix 
$$S(A) = 
\begin{pmatrix}
\sign(A,A_{11})\\
\sign(A,A_{12})\\
 \vdots\\
\sign(A, A_{nn})
\end{pmatrix}$$
That is, $S(A)$ is the matrix obtained by vertically stacking all of the matrices $\sign(A, A_{ij})$. 
\end{defn}

We show that sign rank is bounded in terms of underlying rank. 

\begin{prop}\label{prop:srur}
For all matrixes $A$, $\theta \in \R$, $\sign \rank (S(A, \theta)) \leq \sign\rank(S(A))\leq \ur(A)+1$
\end{prop}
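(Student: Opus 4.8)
The plan is to prove the two inequalities separately; the second is the substantive one and the first is a short monotonicity argument.

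For $\sign\rank(S(A,\theta)) \le \sign\rank(S(A))$, I would observe that for every $\theta\in\R$ the matrix $S(A,\theta)$ is either the all-$(+1)$ matrix — which happens exactly when $\theta$ is strictly below every entry of $A$, and then $\sign\rank(S(A,\theta))=1\le \sign\rank(S(A))$ since $S(A)$ is a nonzero $\pm1$ matrix — or else it coincides with $\sign(A,A_{kl})$, where $A_{kl}$ is the largest entry of $A$ with $A_{kl}\le\theta$. Indeed, in the latter case $A_{ij}\le\theta \iff A_{ij}\le A_{kl}$ for all $i,j$ (both directions are immediate), so the two sign matrices agree. Now $\sign(A,A_{kl})$ is one of the blocks of rows in the vertical stack defining $S(A)$, hence a submatrix of $S(A)$, and sign rank does not increase on passing to a submatrix: restrict any rank-minimizing real matrix with the sign pattern of $S(A)$ to the relevant rows. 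This finishes the first inequality.

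For $\sign\rank(S(A))\le \ur(A)+1$, set $d=\ur(A)$ and fix a rank-$d$ matrix $B$ whose entries have the same order as those of $A$. The key point is that $S(A)=S(B)$ as literal $\pm1$ matrices: for any entry positions $(i,j),(k,l)$ we have $A_{ij}>A_{kl}\iff B_{ij}>B_{kl}$ and $A_{ij}\le A_{kl}\iff B_{ij}\le B_{kl}$ (in particular $A_{ij}=A_{kl}\iff B_{ij}=B_{kl}$), so, with the convention $\sign(0)=-1$, the sign of $A_{ij}-A_{kl}$ equals that of $B_{ij}-B_{kl}$. Thus it suffices to bound $\sign\rank(S(B))$. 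Using the Proposition relating rank-$d$ matrices to point arrangements, write $B_{ij}=v_i\cdot w_j$ with $v_i,w_j\in\R^d$. Choose $\epsilon>0$ smaller than every positive difference $B_{ij}-B_{kl}$ (of which there are finitely many; take $\epsilon=1$ if there are none), and define the matrix $M$ with rows indexed by triples $(k,l,i)$ and columns by $j$ by $M_{(k,l,i),j}=v_i\cdot w_j-v_k\cdot w_l-\epsilon$. Then $M$ factors as the product of the matrix whose $(k,l,i)$-row is $(v_i,\,-v_k\cdot w_l-\epsilon)\in\R^{d+1}$ with the matrix whose $j$-column is $(w_j,1)\in\R^{d+1}$, so $\rank(M)\le d+1$. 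A three-way case check ($B_{ij}>B_{kl}$: entry is $\ge$ (positive difference)$-\epsilon>0$; $B_{ij}=B_{kl}$: entry is $-\epsilon<0$; $B_{ij}<B_{kl}$: entry is negative) shows $\sign(M_{(k,l,i),j})$ equals the $(i,j)$ entry of $\sign(B,B_{kl})$, i.e. $\sign(M)=S(B)=S(A)$ entrywise. Hence $\sign\rank(S(A))\le \rank(M)\le d+1=\ur(A)+1$.

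The only delicate point — the ``main obstacle'', such as it is — is the bookkeeping around ties: repeated entries of $A$ force $B$ to have matching repeated entries, and the convention $\sign(0)=-1$ must be honored so that $S(A)$ really is the $\pm1$ matrix of the definition. This is handled by the small negative shift $-\epsilon$, which is absorbed into the extra $(d{+}1)$-st coordinate of the factorization so that it does not cost an additional unit of rank; everything else is linear algebra and the order-preservation hypothesis on $\ur(A)$.
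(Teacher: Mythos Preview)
Your proof is correct and follows essentially the same approach as the paper's: for the first inequality you note that $S(A,\theta)$ is either a constant $\pm1$ matrix or coincides with one of the blocks stacked into $S(A)$, and for the second you subtract the threshold from a rank-$d$ representative $B$ and absorb the constant shift into an extra coordinate to get a rank $\le d+1$ factorization. You are in fact more careful than the paper about ties---your $\epsilon$-perturbation guarantees strictly negative entries at positions where $A_{ij}=A_{kl}$, whereas the paper's construction $B-f^{-1}(\theta)J$ would produce zeros there---and your case split for the first inequality (at $\min A_{ij}$ rather than $\max A_{ij}$) is the correct one.
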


\begin{proof}
We prove  the first inequality, $\sign \rank (S(A, \theta)) \leq \sign\rank(S(A)).$\\ First we look at the case where $\theta > \max_{i, j} A_{ij}$. Here, $S(A,\theta)$ is the all ones matrix, which has sign-rank one. Thus, because $S(A)$ has at least sign rank one, we have $\sign \rank (S(A, \theta)) \leq \sign\rank(S(A)).$
Now, for all $\theta \leq \max_{i, j} A_{ij}$, $S(A,\theta)$ is a submatrix of $S(A)$, hence $\sign \rank (S(A,\theta)) \leq \sign\rank(S(A)).$

Now, we show the second inequality, $\sign\rank(S(A))\leq \ur(A)+1.$ 
Let the underlying rank of $A$ be $r$. Let $B$ be a rank-$r$ matrix and $f$ be a monotone function such that $$A_{ij} = f(B_{ij}).$$ Now, for each $\theta$, observe that $B_\theta := B - f\inv(\theta)J$, where $J$ is the all-ones matrix, matches the sign pattern of  $S(A, \theta)$. Thus, the matrix 
$$ \begin{pmatrix}
B_{A_{11}}\\
B_{A_{12}}\\
\vdots\\ 
B_{A_{nn}}
\end{pmatrix} = 
\begin{pmatrix}B\\ B\\\vdots\\B
\end{pmatrix}
-  
\begin{pmatrix}f\inv(A_{11})J\\
f\inv(A_{12})J\\
\vdots\\
f\inv(A_{nn})J
\end{pmatrix}
$$
matches the sign pattern of $S(A).$
The matrix $[B\enspace B\enspace \cdots  \enspace B]$ is rank $R$, and can thus be written as a sum of $r$ rank one matrices. The matrix $[f\inv(A_{11})J\enspace f\inv(A_{12})J\enspace \cdots  \enspace f\inv(A_{nn})J]$ is rank one. Thus we have shown that $B_\theta$ can be written as a sum of $r+1$ rank-$r$ matrices, and thus has rank at most $r+1$. 
Thus, $\sign\rank(S(A))\leq \ur(A)+1.$ 
\end{proof}

Thus we can apply known results about sign rank to establish lower bounds on underlying rank. We focus on results which bound the sign rank of a sign matrix $M$ in terms of $||M||$. Recall that for any matrix $M$, $||M|| = \sigma_1(M)$, where $\sigma_1(M)$ is the greatest singular value of $M$. This is equal to the square root of the top eigenvalue of $M^TM.$ 

\begin{thm}\cite{forster2002linear}
Let $M\in \{\pm 1\}^{m\times n}$ be a sign matrix of sign rank $d$. Then $$d\geq \frac{mn}{||M||}.$$ 
\end{thm}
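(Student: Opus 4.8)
The plan is to prove this by a geometric averaging argument, which is the standard route to Forster's theorem. First I would set up the representation: if $M \in \{\pm 1\}^{m \times n}$ has sign rank $d$, then there exist vectors $u_1, \ldots, u_m \in \R^d$ and $v_1, \ldots, v_n \in \R^d$ such that $\sign(u_i \cdot v_j) = M_{ij}$ for all $i,j$; in particular all inner products $u_i \cdot v_j$ are nonzero. Since scaling each $u_i$ or $v_j$ by a positive constant does not change signs, I may normalize all $v_j$ to be unit vectors. The key preprocessing step — and this is the technical heart of Forster's original argument — is to apply an invertible linear transformation $A$ to the $v_j$ so that the resulting point set is in \emph{isotropic position}: after replacing $v_j$ by $Av_j / \|Av_j\|$, one has $\sum_{j=1}^n \tilde v_j \tilde v_j^{\mathsf{T}} = \frac{n}{d} I_d$. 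The existence of such an $A$ is a compactness/fixed-point argument (no set of $n \geq d$ vectors spanning $\R^d$ obstructs it), and applying $A^{-\mathsf{T}}$ to the $u_i$ preserves all the signs $\sign(u_i \cdot v_j)$. So without loss of generality the $v_j$ are unit vectors in isotropic position.

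Next I would extract the inequality. For each fixed row $i$, normalize $u_i$ to be a unit vector $\hat u_i$ as well (again sign-preserving). Because $M_{ij} = \sign(\hat u_i \cdot \tilde v_j)$, we have $|\hat u_i \cdot \tilde v_j| = M_{ij} (\hat u_i \cdot \tilde v_j)$, so summing over $j$ gives
\[
\sum_{j=1}^n |\hat u_i \cdot \tilde v_j| \;=\; \hat u_i \cdot \Big(\sum_{j=1}^n M_{ij} \tilde v_j\Big) \;\leq\; \Big\| \sum_{j=1}^n M_{ij} \tilde v_j \Big\|.
\]
On the other hand, the isotropy condition lets me lower-bound the left-hand side: by Cauchy--Schwarz applied with the weights $|\hat u_i \cdot \tilde v_j|$ one has $\big(\sum_j |\hat u_i \cdot \tilde v_j|\big)^2 \geq \big(\sum_j (\hat u_i \cdot \tilde v_j)^2\big)^2 / \sum_j (\hat u_i \cdot \tilde v_j)^2 = \sum_j (\hat u_i \cdot \tilde v_j)^2$, wait — more cleanly, $\sum_j |\hat u_i \cdot \tilde v_j| \geq \sum_j (\hat u_i \cdot \tilde v_j)^2 = \hat u_i^{\mathsf{T}} \big(\sum_j \tilde v_j \tilde v_j^{\mathsf{T}}\big) \hat u_i = \frac{n}{d}$, using that each $|\hat u_i \cdot \tilde v_j| \leq 1$. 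Combining, $\big\| \sum_j M_{ij} \tilde v_j \big\| \geq \frac{n}{d}$ for every $i$.

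Finally I would aggregate over rows and bring in the spectral norm of $M$. Squaring and summing the last inequality over all $i \in [m]$ gives $\sum_{i=1}^m \big\| \sum_j M_{ij} \tilde v_j \big\|^2 \geq m \cdot \frac{n^2}{d^2}$. The left side is $\sum_i \sum_{j,k} M_{ij} M_{ik} (\tilde v_j \cdot \tilde v_k) = \operatorname{tr}\big( V^{\mathsf{T}} M^{\mathsf{T}} M V \big)$ where $V$ is the $n \times d$ matrix with rows $\tilde v_j$; since $V^{\mathsf{T}} V = \frac{n}{d} I_d$ has all eigenvalues $\frac{n}{d}$, this trace is at most $\frac{n}{d} \operatorname{tr}(M^{\mathsf{T}} M) \cdot$ (adjustment) — concretely $\operatorname{tr}(V^{\mathsf{T}} M^{\mathsf{T}} M V) \leq \|M\|^2 \operatorname{tr}(V^{\mathsf{T}} V) = \|M\|^2 \cdot n$. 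Therefore $\|M\|^2 \cdot n \geq m n^2 / d^2$, i.e. $d^2 \geq mn / \|M\|^2$, which rearranges to the claimed bound $d \geq \frac{\sqrt{mn}}{\|M\|}$ (equivalently $d \geq \frac{mn}{\|M\|^2}$ when $m=n$ and $\|M\|^2 \leq n$; I would state the bound in the form the paper uses and reconcile the exponent with the cited source). The main obstacle is the isotropic-position normalization step: it requires care to state the fixed-point argument correctly and to check that applying the transformation to the $u_i$ side genuinely leaves every sign unchanged; the rest is Cauchy--Schwarz and a trace inequality.
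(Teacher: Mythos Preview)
The paper does not prove this statement; it is quoted from Forster's paper \cite{forster2002linear} and used as a black box. So there is no ``paper's own proof'' to compare against. Your outline is the standard Forster argument (isotropic position for the $v_j$, the pointwise inequality $|\hat u_i\cdot\tilde v_j|\geq(\hat u_i\cdot\tilde v_j)^2$ from unit vectors, then the trace/spectral-norm estimate), and the core steps are correct.

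Two remarks. First, the bound your argument actually produces is $d\geq \sqrt{mn}/\|M\|$, which is the correct Forster inequality; the paper's displayed statement $d\geq mn/\|M\|$ appears to be a typo. You can see this from the paper's own application: for the $N\times N$ Hadamard matrix with $\|H_n\|=\sqrt{N}$ it computes $N/\sqrt{N}=\sqrt{N}$, which matches $\sqrt{mn}/\|M\|$, not $mn/\|M\|$. So your ``reconcile the exponent'' instinct is right, but your parenthetical claim that $d\geq\sqrt{mn}/\|M\|$ is ``equivalently $d\geq mn/\|M\|^2$'' is false in general --- those are different inequalities, and neither matches the paper's displayed form. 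Just state the correct bound.

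Second, a small cleanup: in your trace step you should verify $\operatorname{tr}(V^{\mathsf T}V)=n$ directly from the $\tilde v_j$ being unit vectors (you don't need the isotropy there), and drop the aborted first Cauchy--Schwarz attempt. Otherwise the plan is sound.
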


Using this together with Proposition \ref{prop:srur}, we can bound the underlying rank of $A$ in terms of the singular values of $S(A,\theta)$. 
\begin{cor}
Let $A\in \R^{m\times n}$. Then 
$$\ur(A)\geq \frac{m n}{||S(A)||}-1.$$
\label{cor:sign_rank_bound}
\end{cor}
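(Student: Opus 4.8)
The statement to prove is Corollary \ref{cor:sign_rank_bound}, which asserts that for any $A \in \R^{m\times n}$, we have $\ur(A) \geq \frac{mn}{||S(A)||} - 1$.

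The plan is to chain together the two results that immediately precede the statement. First I would recall that by Proposition \ref{prop:srur}, the sign rank of the stacked matrix $S(A)$ is bounded above by $\ur(A) + 1$; that is, $\sign\rank(S(A)) \leq \ur(A) + 1$. Next I would invoke Forster's theorem (the cited \cite{forster2002linear} result stated just above) applied to the sign matrix $M = S(A)$: if $S(A)$ has sign rank $d$, then $d \geq \frac{(mn)(n)}{||S(A)||}$ — wait, here I must be careful about the dimensions of $S(A)$. By Definition \ref{def:S(A)}, $S(A)$ is obtained by vertically stacking the $mn$ matrices $\sign(A, A_{ij})$, each of which is $m \times n$, so $S(A)$ is an $(m\cdot mn) \times n = (m^2 n) \times n$ matrix. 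Applying Forster's bound with these dimensions gives $\sign\rank(S(A)) \geq \frac{(m^2 n)(n)}{||S(A)||}$, which is not quite the claimed $\frac{mn}{||S(A)||}$. So the cleaner route is to apply Forster's theorem to a single slice $S(A, \theta)$, which is genuinely $m \times n$: this yields $\sign\rank(S(A,\theta)) \geq \frac{mn}{||S(A,\theta)||}$, and then combine with $||S(A,\theta)|| \leq ||S(A)||$ (since $S(A,\theta)$ is a submatrix of $S(A)$, as noted in the proof of Proposition \ref{prop:srur}) and $\sign\rank(S(A,\theta)) \leq \sign\rank(S(A)) \leq \ur(A)+1$.

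Concretely, the key steps in order are: (1) pick any threshold $\theta$; (2) by Proposition \ref{prop:srur}, $\sign\rank(S(A,\theta)) \leq \ur(A) + 1$; (3) by Forster's theorem applied to the $m\times n$ sign matrix $S(A,\theta)$, $\sign\rank(S(A,\theta)) \geq \frac{mn}{||S(A,\theta)||}$; (4) since $S(A,\theta)$ is a submatrix of $S(A)$, the largest singular value satisfies $||S(A,\theta)|| \leq ||S(A)||$, hence $\frac{mn}{||S(A,\theta)||} \geq \frac{mn}{||S(A)||}$; (5) combining, $\ur(A) + 1 \geq \frac{mn}{||S(A)||}$, which rearranges to the claim. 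One should double-check step (4): the submatrix inequality for the operator norm (largest singular value) holds because the operator norm of a submatrix never exceeds that of the full matrix — one can see this by extending any test vector with zeros, or via the variational characterization $||M|| = \max_{x,y} \frac{|y^T M x|}{||x||\,||y||}$ restricted to supports. A minor subtlety is that Forster's bound as stated requires $S(A,\theta)$ to be a genuine $\{\pm 1\}$ matrix, which holds by the sign convention in the definition of $S(A,\theta)$ (entries equal to zero are assigned $-1$).

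I do not anticipate a genuine obstacle here — the result is a corollary and the proof is essentially a two-line deduction. The only thing requiring care is getting the matrix dimensions right so that the factor $mn$ in the numerator is correct, which is why I would route through the slice $S(A,\theta)$ rather than through $S(A)$ directly; using $S(A)$ would give a weaker-looking (though also valid and in fact stronger) bound with $m^2 n^2$ in the numerator, but the cleanly stated corollary matches the slice argument. I would present the proof as: \emph{Fix $\theta \in \R$. By Forster's theorem, $\sign\rank(S(A,\theta)) \geq mn/||S(A,\theta)||$. Since $S(A,\theta)$ is a submatrix of $S(A)$ we have $||S(A,\theta)|| \leq ||S(A)||$, and by Proposition \ref{prop:srur} we have $\sign\rank(S(A,\theta)) \leq \ur(A)+1$. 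Chaining these inequalities gives $\ur(A) + 1 \geq mn/||S(A)||$, as desired.}
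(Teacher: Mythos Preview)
Your proposal is correct and matches the paper's intended argument: the text immediately preceding the corollary says to combine Forster's theorem (applied to the $m\times n$ slice $S(A,\theta)$) with Proposition~\ref{prop:srur}, and your chain of inequalities---Forster on $S(A,\theta)$, then $\|S(A,\theta)\|\le\|S(A)\|$ via the submatrix inclusion, then $\sign\rank(S(A,\theta))\le\ur(A)+1$---is exactly that. The only cosmetic fix is that ``fix $\theta\in\R$'' should be ``fix $\theta$ equal to some entry $A_{ij}$'' so that the submatrix claim from the proof of Proposition~\ref{prop:srur} applies.
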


The cited theorem, and related results, make it possible to construct explicit examples of matrices with high sign rank relative to their size. 
The primary example given in \cite{forster2002linear} is the family of Hadamard matrices $H_n$. The Hadamard matrices are a family of symmetric matrices with entries $\pm 1$ whose rows are pairwise orthogonal. They are defined recursively, with $$H_0= (1),$$ and $$H_{n+1} = \begin{pmatrix}
H_n & H_n \\ H_n & -H_{n}
\end{pmatrix}.$$ 
Notice that $H_{n}$ is $2^n\times 2^n$. Let $N:=2^n$, so that $H_n$ is $N\times N$.  
Now, we find $||H_n||$.
Now, because the columns of $H_n$ are pairwise orthogonal and the entries are $\pm 1$, we note that 
$$H_n^T H_n = NI.$$
This has eigenvalues $N, \ldots, N$. Thus, $||H_n|| = \sqrt{N}$. Thus, we have 
$$\sign\rank (H_n )\geq \frac{N}{\sqrt N} = \sqrt{N}.$$

We can also use the Hadamard matrices to construct examples of matrices with high underlying rank. To do this, we notice that given any sign matrix $S$, we can produce an order matrix $O(S)$, such that $\sign\rank(S) \leq \ur(O(S))+1$. Let $S$ be a $m\times n$ matrix with $N$ negative entries. To produce $O(S)$, we loop through the entries of $S$. If $S_{ij} = -1$, we set $O(S)_{ij}$ to be an element of $1, \ldots, N$. If $S_{ij} = 1$, we set $O(S)_{ij}$ to be an element of $N+1, \ldots, m n$. We make sure each value is only used once. Notice that $S = \sign(O(S), N)$, thus $\sign\rank(S) \leq \ur(O(S))+1$. Using this construction, we have 

\begin{ithm}\label{cor:hadamard}
For each $N = 2^n$,  $O(H_n)$ is a $N\times N$ order matrix with $$\ur(O(H_n)) \geq  \sqrt{N}-1.$$
\end{ithm} 

In general, sign rank is difficult to compute: it is complete for \emph{the existential theory of the reals}, written $\exists \R$. This is the complexity class of decision problems of the form
$$\exists(x_1 \in \R)\cdots \exists(x_n\in \R)P(x_1, \ldots , x_n),$$
where P is a quantifier-free formula whose atomic formulas are polynomial equations and inequalities in the $x_i$ \cite{broglia1996lectures}. Many classic problems in computational geometry fall into $\exists\R$ \cite{schaefer2009complexity}. 
Problems which are $\exists \R$-complete are not believed to be computationally tractable. In particular, they must be NP-hard. 
The fact that sign rank is $\exists \R$ complete is established in  \cite{basri2009visibility} and independently in \cite{bhangale2015complexity}.  More precisely, for $r\geq 3$, the decision problem of determining whether a matrix $A$ has sign rank $r$ is $\exists \R$-complete. 

Note that the fact that computing sign rank is $\exists \R$ complete does not immediately establish that underlying rank is $\exists \R$ complete. However, we will show in Section \ref{sec:comb_geo} that underlying rank and monotone rank are also $\exists \R$ complete problems. 

\subsection{Convex sensing}\label{sec:cvx}
Another related problem is the convex sensing problem introduced in \cite{wu2021topological}. This problem considers a point arrangement 
$w_1, \ldots, w_n \subseteq \R^d$. The only information we have about this point arrangement is the matrix 
$$M_{ij} = f_i(w_j),$$
where each $f_i$ is a \emph{quasi-convex function}. That is, $f_i$ is a function which has convex sublevel sets, i.e. for each threshold $\theta$, 
$f\inv(-\infty, \theta)$ is either convex or empty. 

In essence, in the monotone rank and underlying rank problems, we learn about a point configuration by sweeping hyperplanes past it and noticing when the hyperplane passes each point. In the convex sensing problem, we learn about a point configuration by growing convex sets and noticing the order in which each set envelops each point. To obtain the monotone rank problem from the convex sensing problem, we restrict from quasi-convex functions to ``quasi-linear functions": functions whose sublevel sets are half-spaces. 

This restriction fundamentally changes the problem: while we will show that there exist matrices with arbitrarily high underlying rank, by Corollary 1.5 of \cite{wu2021topological}, any convex sensing problem has a degenerate solution in $\R^2$. Thus, \cite{wu2021topological} takes a probabilistic perspective, focusing on estimating dimension under assumptions about the probability distribution generating the points $w_1, \ldots, w_n$. Some techniques introduced in this paper apply to the convex sensing problem as well. In particular, since a quasi-convex function takes its maximum value on a convex set on the boundary, the maximal nodes we introduced in Section \ref{sec:minimal} are also respected by quasi-convex functions. This means that the estimation techniques introduced in this section apply to this problem as well.

\section{Improving upon Radon rank using the structure of Radon partitions}
\label{sec:counter}

Combining Proposition \ref{prop:radon_limits} with Theorem \ref{cor:hadamard}, we see that underlying rank can exceed the Radon rank. 
Theorem \ref{cor:hadamard} gives a family of matrices whose underlying rank scales with the square root of their size, while Proposition \ref{prop:radon_limits} shows that Radon rank cannot grow this fast. We next consider an explicit example where underlying rank exceeds Radon rank because the possible Radon partitions are not consistent with one another.

A matrix $A$  has Radon rank $d$ if there is a set of size $d+1$ with no possible Radon partition, i.e. partition $(\sigma, \tau)$ not induced by $A$.  However, it is possible for every set of size $d+2$ to have a possible Radon partition, but for the underlying rank to still be greater than $d$. 

\begin{prop}\label{prop:rad_strict}
The inequality in Proposition \ref{prop:radon_bound} may be strict. In particular, the order matrix $$ A = \begin{pmatrix}
12 & 13 & 3& 10 & 6 \\
13 & 14 & 4 & 9 & 5 \\
3 & 4 & 15 & 11 & 1 \\
10 & 9 & 11 & 8 & 2 \\
6 & 5 & 1 & 2 & 7 \\
\end{pmatrix}
$$
has Radon rank two and monotone rank three. 
\end{prop}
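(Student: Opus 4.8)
The plan is to establish two facts about the $5 \times 5$ matrix $A$: first, that its Radon rank is two, and second, that its monotone rank is three (which, since $\monr(A) \leq \ur(A)$, forces $\ur(A) \geq 3 > 2 = \radr(A)$ and gives the strict inequality).

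For the Radon rank, recall that $\radr(A)$ is the largest $d$ such that some set of $d+1$ column indices is shattered by the set system $H_A$ of swept sets, and $\radr(A) \leq \ur(A)$ by Proposition \ref{prop:radon_bound}. To show $\radr(A) \geq 2$, I would exhibit a set of three indices, say $\{i,j,k\}$, every subset of which is realized as $\sigma \cap \{i,j,k\}$ for some swept set $\sigma \in H_A$; by symmetry of the matrix this amounts to scanning the rows and reading off, for each row, the two nested chains of prefixes/suffixes of the sorted order, then checking these chains cover the Boolean lattice on $\{i,j,k\}$. To show $\radr(A) \leq 2$, I would argue that no set of four indices is shattered: by Proposition \ref{prop:radon_limits}, shattering a set of size $d+1 = 4$ requires $\frac{1}{2}\binom{4}{2} = 3 \le n = 5$ columns, so the cardinality bound alone does not rule it out, and instead I would check directly that for each of the $\binom{5}{4} = 5$ candidate four-element sets, some partition of it into two blocks is \emph{not} achieved by any row of $A$ (equivalently, no column, using symmetry) — a finite verification. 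Actually, the cleanest route here may be to observe that since $A$ is symmetric it is the matrix of a single point configuration $v_1,\dots,v_5 \in \R^{\ur(A)}$ with $A_{ij} = f(v_i \cdot v_j)$, and use this to convert "swept set" statements into separation statements via Lemma \ref{lem:partition}.

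For the monotone rank, the upper bound $\monr(A) \le 3$ follows from $\ur(A) \le n - 1 = 4$ being too weak; instead I would directly produce a rank-$3$ matrix $B$ and monotone functions $f_1,\dots,f_5$ with $A_{ij} = f_j(B_{ij})$, or equivalently (using the characterization recalled after Example \ref{ex:strict} and Lemma \ref{lem:sweep_order}) exhibit points $v_1,\dots,v_5, w_1,\dots,w_5 \in \R^3$ so that the order of entries down column $j$ matches the order in which a hyperplane normal to $w_j$ sweeps past the $v_i$. The real content is the lower bound $\monr(A) \geq 3$, i.e. $\monr(A) \neq 1$ and $\monr(A) \neq 2$. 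Ruling out $\monr(A) = 1$ is immediate from the characterization in \cite{egger20xxtopological}: a matrix has monotone rank one iff there is a single linear order of rows such that every column is that order or its reverse — and one can see by inspection that columns $1$ through $5$ of $A$ do not share such a common order (e.g. compare the induced orders of a suitable pair of columns and observe they are neither equal nor reverses). The hard part is ruling out $\monr(A) = 2$: I would suppose $A_{ij} = f_j(v_i \cdot w_j)$ with $v_i, w_j \in \R^2$, and derive a contradiction. The mechanism should mirror Proposition \ref{prop:rad_strict}'s own assertion that the \emph{possible Radon partitions are inconsistent}: in $\R^2$, the point configuration $v_1,\dots,v_5$ determines (via Lemma \ref{lem:sweep_order}) which pairs-versus-triples separations are forced by columns, and one uses Radon's theorem (Lemma \ref{lem:radon}) together with the specific pattern of swept sets read off from $A$ to conclude that the $v_i$ would have to be affinely independent in $\R^2$ — impossible for $5$ points — or else produce two incompatible separation constraints (a set that must be linearly separated one way by one column and the complementary way by another, in a manner no $5$ points in the plane can satisfy). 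Concretely I expect to extract from the rows of $A$ a collection of "sweep orders" and show, via an allowable-sequence / order-type argument in the plane (cf. Observation \ref{obs:allowable} and the discussion of \cite{padrol2021sweeps}), that no planar configuration realizes all of them simultaneously, while some configuration in $\R^3$ does.

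The main obstacle is this last step: verifying that no rank-$2$ (monotone) realization exists requires a genuinely combinatorial-geometric argument rather than a routine computation, and the cleanest presentation will likely phrase it as: the set of column sweep-orders of $A$ is not a sub-collection of any allowable sequence of permutations of $[5]$, hence by Observation \ref{obs:allowable} cannot come from a planar point configuration, whereas the (full, entrywise) order of $A$ is realized by an explicit rank-$3$ configuration. I would organize the write-up as (i) the explicit rank-$3$ realization, (ii) the $\monr \neq 1$ observation, (iii) the $\monr \neq 2$ argument via Radon/allowable sequences, and (iv) the short Radon-rank computation, concluding $2 = \radr(A) < 3 = \monr(A) \le \ur(A)$.
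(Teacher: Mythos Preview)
Your outline is correct and names the right mechanism---inconsistent potential Radon partitions---but you hedge between that and an allowable-sequence argument, and your concrete plan at the end leans toward the latter. The paper commits to the Radon-partition route and carries it out very explicitly: for each of the five $4$-element subsets of $\{1,\dots,5\}$ it lists the \emph{unique} partition not swept by any column of $A$ (hence the only candidate Radon partition in a planar realization), namely $1234:(14,23)$, $1235:(1,235)$, $1245:(14,25)$, $1345:(135,4)$, $2345:(235,4)$, and then derives a direct geometric contradiction: $(1,235)$ forces $v_1\in\conv(v_2,v_3,v_5)$, while $(14,25)$ forces the segment $\overline{v_1v_4}$ to exit this triangle through edge $\overline{v_2v_5}$ and $(14,23)$ forces it to exit through edge $\overline{v_2v_3}$---impossible. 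This is shorter and more self-contained than routing through allowable sequences (the paper reserves that technique for a different matrix, Example~\ref{ex:allowable}). One place your plan is actually more thorough than the paper: you propose to certify $\monr(A)\le 3$ by exhibiting an explicit rank-$3$ realization, whereas the paper's proof argues only the lower bound, the upper bound being implicit from how the example was constructed (by sampling a random rank-$3$ matrix). Your separate elimination of $\monr(A)=1$ is also unnecessary, since ruling out a rank-$2$ realization a fortiori rules out rank $1$.
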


\begin{proof}
To see this, we first compute the shatter complex of $A$:  
\begin{align*}
\shatter(A) = \Delta(123, 124, 125, 134, 135, 145, 234, 235, 245, 345).
\end{align*}                                                            Since the largest sets of points we can shatter have size $3$, $\radr(A) = 2$. Now, we show that the monotone rank is, in fact, 3. 

We suppose for the sake of contradiction that $\ur (A) = 2$. Let $v_1,\ldots,v_m, w_1, \ldots, w_n\subset \R^2$ be a rank-2 representation of $A$.  
By Radon's theorem, every subset of $v_1,\ldots,v_m,$ of size 4 must has a Radon partition. If $(\sigma, \tau)$ is a Radon partition of $\rho$, then there is no hyperplane separating the sets $\{v_i\}_{i \in \sigma} $ and $\{v_j\}_{j \in \tau} $. Thus, the partition $(\sigma, \tau)$ must not be induced by $A$. 

For each subset of size 4, we compute the set of potential Radon partitions, which are the sets $\sigma, \tau$ such that $\sigma$ and $\tau$ are not swept by $A$. We find that there is exactly one potential Radon partition for each subset: 
\begin{align*}
1234: (14, 23)\\
1235: (1, 235)\\
1425: (14, 25)\\
1345: (135, 4)\\
2345: (235, 4)
\end{align*} 
Now, suppose $V$ is an arrangement of points with these Radon partitions, as illustrated in Figure \ref{fig:ur>rr}. 
Then the partition $(1, 235)$ implies that $v_1 \in \conv(v_2, v_3, v_5)$. 
The partition $(14, 25)$ implies that the line segment from $v_1$ to $v_4$ crosses out of the triangle $\conv(v_2, v_3, v_5)$ by crossing the line segment from $v_2$ to $v_5$. However, the partition $(14, 23)$ implies that the line segment from $v_1$ to $v_4$ crosses out of the triangle $\conv(v_2, v_3, v_5)$ by crossing the line segment from $v_2$ to $v_3$.  
Thus, we have reached a contradiction. 
\begin{figure}[h!]
\begin{center}
\includegraphics[width = 2 in]{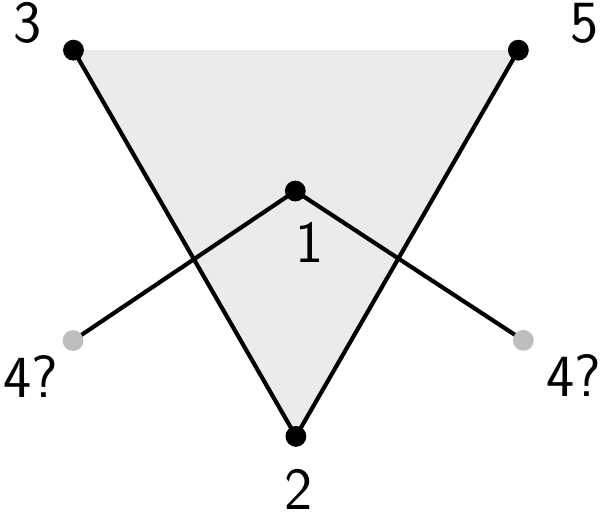}
\end{center}
\caption{We reach a contradiction when we try to construct a rank two realization of $A$. \label{fig:ur>rr}}
\end{figure}
In the next section, we generalize this obstruction by exploring the connection between underlying rank and oriented matroid theory.

\end{proof}

\section{The combinatorial geometry of underlying rank}
\label{sec:comb_geo}

In this section, we see how two combinatorial abstractions of point arrangements, oriented matroids and allowable sequences,  can inform our understanding of underlying rank. This will get us three main benefits: 
first, we will generalize the argument made in Proposition \ref{prop:rad_strict} by describing the structure the minimal Radon partitions of a set of points must have. Second, we will show that for $d\geq 2$, deciding whether a matrix has monotone rank $d$ is complete for the existential theory of the reals, and is therefore NP-hard. Contrary to this, we give a necessary condition for a matrix to have monotone rank 2 which can be checked in $O(mn^2)$ for a $m\times n$ matrix. 

\subsection{Oriented matroids}

We can recast the example in the proof of Proposition \ref{prop:rad_strict} in terms of oriented matroids, introduced in \ref{chapter:combo_background}. In order to do so, we introduce \emph{known topes} and \emph{potential circuits}. 

\begin{defn}
Let $A$ be a $m\times n$ matrix. A sign vector $X$ is a \emph{known tope} of $A$ if the sets $X^+$ and $X^-$ are swept by a column of $A$. A sign vector $Y$ is a \emph{potential circuit of rank $d+1$} of $A$ if  $|\underline{Y}| = d+2$ and $Y$ is orthogonal to every known tope. 
\end{defn}

In other words, the known topes of $A$ are the sign vectors which we know to be topes of $A$, and the potential circuits are the sign vectors of a particular size which we have not ruled out as circuits. 

\begin{ithm}
Suppose $A$ has underlying rank $d$. Then the potential radon partitions of rank $d+1$ of $A$ contain the circuits of a representable oriented matroid of rank $d+1$.\label{thm:potential} 
\end{ithm}

\begin{proof}
Let $\cT$ be the set of known topes of $A$. Let $v_1, \ldots, v_m \subseteq \R^d$ be a point arrangement realizing $A$. Since each $X\in \cT$ corresponds to a hyperplane which properly separates the points indexed by its positive and negative parts, we can perturb $v_1, \ldots, v_m$ to be in general position without changing the set of known topes. Then the set of known topes $\cT$ is contained in the set of covectors $\cL$ of a uniform oriented matroid of rank $d+1$. We consider the set of circuits $\cC$ of this oriented matroid. Since each circuit $Y \in \cC$ is orthogonal to every tope, the set $\cC$ is contained in the set of potential circuits of rank $d$ of $A$. 
\end{proof}

We reexamine Example \ref{ex:strict} in light of this result. The matrix $$ A = \begin{pmatrix}
12 & 13 & 3& 10 & 6 \\
13 & 14 & 4 & 9 & 5 \\
3 & 4 & 15 & 11 & 1 \\
10 & 9 & 11 & 8 & 2 \\
6 & 5 & 1 & 2 & 7 \\
\end{pmatrix}
$$
has known topes 
\begin{align*}
\cT = \{&+++++, ++-++, ++-+-, ++---, -+---, \\
        &-----, ++++-, -+++-, --++-, --+--,\\
        &+++--, +-+--, ++--+, +---+, ----+\}.
\end{align*}
This allows us to compute the potential circuits of rank $2$ as 
\begin{align*}
\cC = \{&+--+0, -++-0, +--0-, -++0+, +-0+-,\\ &-+0-+, +0+-+, -0-+-, 0++-+, 0--+-\}.
\end{align*}
Notice that there is exactly one pair of potential circuits on each support. Thus, any set of circuits contained in this set of potential circuits must actually be the full set of potential circuits. Thus, if $A$ has underlying rank 2, $\cC$ must follow the circuit axioms. We apply axiom $C3$ to $X = +--0-$, $Y = -+0-+$, $e = 5$. Then there must exist $Z \in  \cC$ with 
$Z^+ \subseteq X^+ \cup Y^+ \setminus \{e\} = \{1, 2\}$, 
$Z^- \subseteq X^- \cup Y^- \setminus \{e\} = \{1, 2, 3, 4\}$. 
No such potential circuit is present: the potential circuits on support $\{1, 2, 3, 4\}$ are $+--+0$ and $-++-0$, which do not conform to this pattern. Thus, by Theorem \ref{thm:potential}, the underlying rank of $A$ is at least 3. 

In fact, Example \ref{ex:strict} was generated using Theorem \ref{thm:potential}: we generated $A$ by generating random matrices of rank 3, computing the potential circuits of rank 2, and checking whether the potential circuits had a subset which satisfied the circuit axioms for oriented matroids. 

\subsection{Allowable Sequences}
\label{sec:allowable}
While the oriented matroid of a point configuration describes a point configuration in terms of its minimal Radon partitions, it does not capture all combinatorial information about the point configuration. \emph{Allowable sequences} fill in some of this further information for point configurations in $\R^2$. 
In this section, we discuss allowable sequences, and how they can be used to determine whether or not matrices have underling rank two. 

Let  $v_1, \ldots, v_n\in \R^2$ be a point arrangement in the plane. As discussed in Section \ref{sec:geom}, sweeping a sequence of hyperplanes normal to a vector $w$ produces a permutation of $[n]$ resulting from the order in which the hyperplane encounters the points.  As we rotate the normal vector to $w$ around the circle and record each permutation we encounter, we obtain a sequence of permutations. This sequence must be an allowable sequence:

\begin{defn}
Define $-\pi$ to be the permutation in the reverse order, i.e. if $\pi = 1342, -\pi = 2413.$ 

An \emph{allowable sequence} is a circular sequence of permutations $ S = \pi_1,\pi_2, \ldots, \pi_{2m}$ of $[n]$ such that: 
\begin{enumerate}
    \item $\pi \in S \Leftrightarrow -\pi\in S$ 
    \item $\pi_{i+1}$ is obtained from $\pi_i$ by reversing the order of one or more substrings of $\pi_i$
    \item The order of each pair $i, j$ is reversed exactly once between each $\pi_{k}, -\pi_{k}$
\end{enumerate}
An allowable sequence is called \emph{simple} if each pair $\pi_i, \pi_{i+1}$ differ by reversing a pair of adjacent entries. We are primarily concerned with simple allowable sequences because they correspond to point arrangements in general position. 
\end{defn}

\begin{ex}
We return to the point arrangement $v_1, v_2, v_3, v_4$ from our running example, illustrated in Figure \ref{fig:allowable}.  The allowable sequence associated to this point configuration is 2134, 2143, 2413, 4213, 4123, 4132, 4312, 3412, 3412, 3142, 3142, 3124. These are the orders in which a hyperplane swept perpendicular to the vector $w$ sweeps past $v_1, v_2, v_3, v_4$ as we rotate it counterclockwise.
\end{ex}

\begin{figure}[ht!]
    \centering
    \includegraphics[width = 3 in]{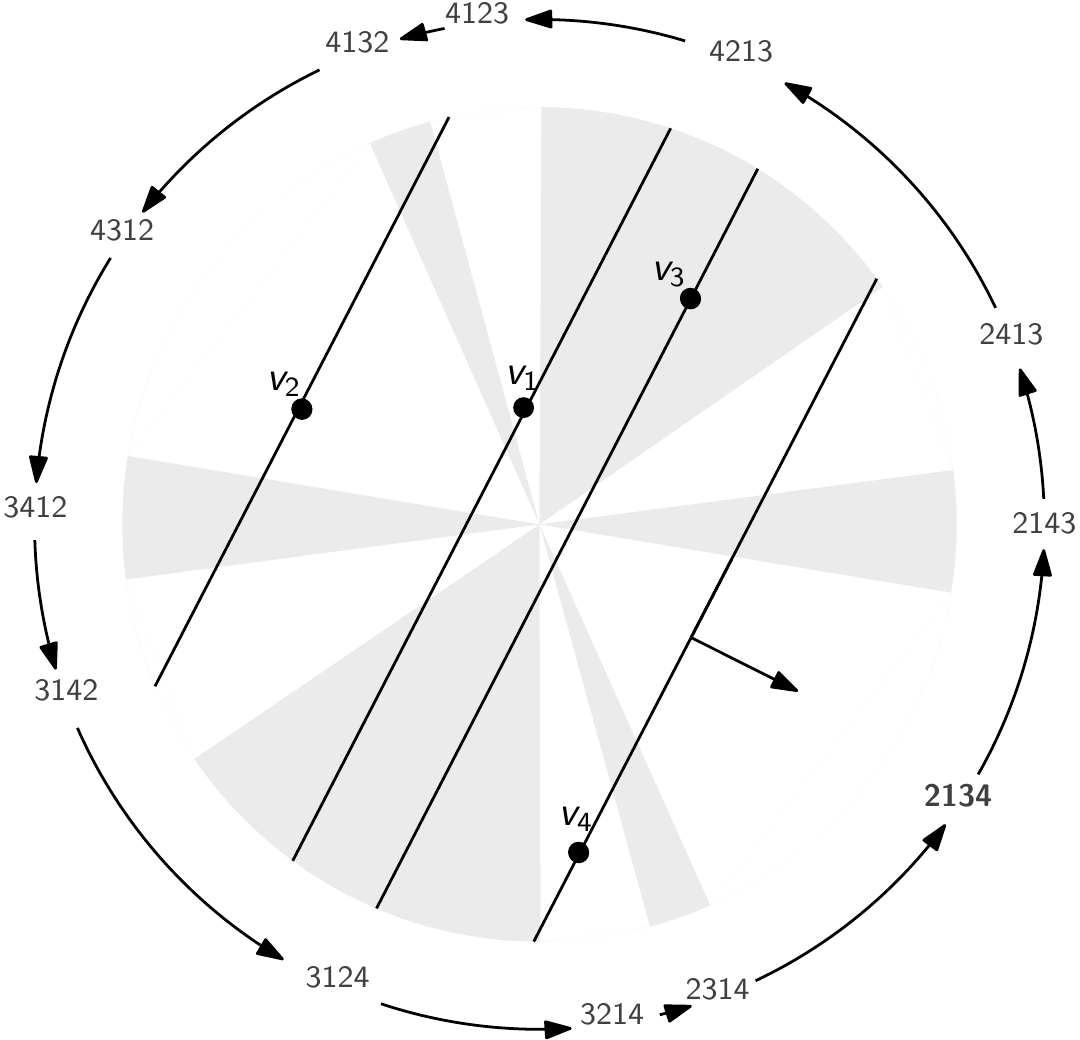}
    \caption[An example of an allowable sequence.]{The allowable sequence arising from the point configuration $v_1, v_2, v_3, v_4$.  Alternating shaded and un-shaded regions correspond to cones of sweep directions which yield particular orders.    \label{fig:allowable}  }
\end{figure}

An allowable sequence is \emph{realizable} if it arises from a planar point configuration. As with oriented matroids, not every allowable sequence is realizable, and realizablity is difficult to determine. By \cite{hoffmann2018universality}, checking whether an allowable sequence is realizable is complete for the existential theory of the reals.

\subsubsection{Using allowable sequences to estimate underlying rank}

Now, we notice the connection between monotone rank and allowable sequences. If $A$ is a matrix of monotone rank two, each column of $A$ corresponds to a sweep order a point arrangement in the plane. This means that there is a realizable allowable sequence which contains the column permutations of $A$ in some order.  In order to translate notation between notation used for allowable sequences and notation used for monotone rank, we introduce the \emph{sort permutation} of a list of numbers. 

\begin{defn}
Let $ a = [a_1, \ldots, a_n]$ be a vector with real entries. The \emph{sort permutation} $\pi( a)$ of $a$ is the order of entries in $ a$: that is, $\pi( a)_i$ lists the index of thesm $i$-th smallest entry of $ a$. 
\end{defn}

\begin{obs}\label{obs:allowable}
A matrix $A$ has monotone rank two if and only if the set of sort permutations of its columns can be ordered such that they form a subsequence of a realizable allowable sequence. 
\end{obs}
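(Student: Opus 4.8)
The statement to prove is Observation~\ref{obs:allowable}: a matrix $A$ has monotone rank two if and only if the set of sort permutations of its columns can be ordered so that they form a subsequence of a realizable allowable sequence. The plan is to unpack both directions through the geometric picture already established in Lemma~\ref{lem:sweep_order}, which says that the order of entries in the $i$-th column of $A$ records the order in which the sweeping hyperplane $c_i(t)$ encounters the points $v_1,\ldots,v_m$. Since monotone rank depends only on the within-column orders (as noted in the discussion of $\monr$ in Section~\ref{sec:math_context}), it suffices to work entirely with the sort permutations $\pi(A_{\bullet,1}),\ldots,\pi(A_{\bullet,n})$.

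For the forward direction, suppose $\monr(A) = 2$, so there is a point arrangement $v_1,\ldots,v_m \in \R^2$, vectors $w_1,\ldots,w_n\in\R^2$, and monotone functions $f_1,\ldots,f_n$ with $A_{ij} = f_j(v_i\cdot w_j)$. By Lemma~\ref{lem:sweep_order}, the sort permutation of column $j$ is exactly the order in which a hyperplane normal to $w_j$ sweeps past $v_1,\ldots,v_m$. First I would perturb the $v_i$ into general position and the $w_j$ to be pairwise non-parallel without changing any column order (each column order is stable under small perturbations, since it is determined by strict inequalities among the $v_i\cdot w_j$). Now rotate a normal direction around the circle: this produces a circular sequence of permutations, which is by definition a realizable allowable sequence $S$ (realizable precisely because it arises from the planar point configuration $v_1,\ldots,v_m$). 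Each $w_j$ points into one of the angular cones of $S$, so $\pi(A_{\bullet,j})$ appears as one of the permutations in $S$; ordering the columns by the angular position of their $w_j$ exhibits the sort permutations as a subsequence of $S$.

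For the converse, suppose the sort permutations of the columns of $A$, in some order $\pi_{j_1},\ldots,\pi_{j_n}$, form a subsequence of a realizable allowable sequence $S$. Realizability gives a planar point configuration $v_1,\ldots,v_m$ whose sweep orders, as the normal direction rotates, realize $S$. For each column $j$, pick a direction $w_j$ in the angular cone of $S$ whose sweep order is $\pi(A_{\bullet,j})$; then the order in which $c_j(t)$ meets the $v_i$ matches the order of entries in column $j$. Finally, define $f_j$ to be any monotone function sending $v_i\cdot w_j$ to $A_{ij}$ for each $i$ (this exists because both sides are in the same order within the column), so $A_{ij} = f_j(v_i\cdot w_j)$ and $\monr(A) \le 2$. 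Since the all-equal-order case forces monotone rank at most one and an order matrix cannot have rank zero, one still has $\monr(A)=2$ in the non-degenerate case; I would phrase the observation as the $\le 2$ characterization to avoid this fussiness, matching how the surrounding text treats it.

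The main obstacle, and really the only subtle point, is bookkeeping around degeneracy and the circular-versus-linear distinction: an allowable sequence is a \emph{circular} object with the antipodal symmetry $\pi\in S \Leftrightarrow -\pi\in S$, whereas "subsequence" must be interpreted cyclically, and the column sort permutations might repeat or coincide with reversals of one another. I expect to handle this by working with simple allowable sequences throughout (justified because general-position point configurations give simple sequences, and any non-general-position configuration can be perturbed) and by being careful that choosing $w_j$ within an \emph{open} angular cone is always possible. None of this requires new ideas beyond Lemma~\ref{lem:sweep_order} and the standard correspondence between planar point configurations and realizable allowable sequences; the observation is essentially a translation of definitions, which is why it is stated as an observation rather than a theorem.
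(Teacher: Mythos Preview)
Your proposal is correct and matches the paper's approach: the paper states this as an observation with no explicit proof, treating it as a direct translation between the definition of monotone rank two and realizable allowable sequences via Lemma~\ref{lem:sweep_order}. Your two directions unpack exactly this translation, and your remarks about perturbing to general position and the circular nature of the subsequence are the right bookkeeping to make the argument rigorous.
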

Because realizability is difficult to check, a weaker version of this observation is more useful.

\begin{obs}\label{obs:allowable}
If a matrix $A$ has monotone rank two, then the sort permutations of its columns can be ordered such that they form a subsequence of an allowable sequence. 
\end{obs}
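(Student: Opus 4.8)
The plan is to combine the geometric picture of Lemma \ref{lem:sweep_order} with the classical fact that rotating a sweep direction about a planar point configuration traces out an allowable sequence. First I would unpack the hypothesis. Assume (WLOG, since we only care about an order matrix) that the entries of $A$ are pairwise distinct. If $\monr(A) = 2$, there is a rank-two matrix $B$ and monotone increasing functions $f_1, \ldots, f_n$ with $A_{ij} = f_j(B_{ij})$. By the proposition relating rank to point arrangements, $B$ has a rank-two representation $B_{ij} = v_i \cdot w_j$ with $v_1, \ldots, v_m, w_1, \ldots, w_n \in \R^2$. The points $v_1, \ldots, v_m$ are pairwise distinct (two equal rows of $V$ would force two equal rows of $A$), and each column of $A$ is strictly ordered, so for each $j$ the values $v_1 \cdot w_j, \ldots, v_m \cdot w_j$ are pairwise distinct, i.e. $w_j$ is a generic sweep direction for $\{v_i\}$. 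By Lemma \ref{lem:sweep_order} and the monotonicity of $f_j$, the sort permutation $\pi^{(j)}$ of the $j$-th column of $A$ is exactly the order in which the points $v_1, \ldots, v_m$ are encountered by a line sweeping in direction $w_j$.

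Next I would invoke the standard construction of the allowable sequence of a planar point configuration (Goodman--Pollack; see \cite{goodman1991complexity}): as a unit vector $u$ rotates once around $S^1$, the projection order of $v_1, \ldots, v_m$ onto $u$ changes only at finitely many critical directions, the projection order being constant on each open arc between consecutive critical directions, and the resulting circular list of orders is an allowable sequence $S$ on the row indices in the sense of the definition above (diametrically opposite directions give reversed permutations, and passing a critical direction reverses one or more substrings, with each pair of indices reversing exactly once per half-turn). Any direction $u$ for which the $v_i$ project to distinct values lies in one of these open arcs and realizes the corresponding term of $S$; in particular each $w_j$ realizes the term $\pi^{(j)}$ of $S$.

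Finally I would order the columns $j_1, \ldots, j_n$ so that $w_{j_1}, \ldots, w_{j_n}$ occur in this cyclic order as $u$ sweeps around $S^1$, breaking the circle at any non-critical direction not equal to any $w_j$ and collapsing to a single term those columns whose directions fall in the same open arc. Reading $S$ from that break point, the terms $\pi^{(j_1)}, \ldots, \pi^{(j_n)}$ appear in this order, so this ordering of the column sort permutations is a subsequence of the allowable sequence $S$, which is exactly what is claimed.

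The main obstacle is the bookkeeping around the circular and degenerate aspects rather than any geometric depth: checking that the rotating-sweep construction really yields an allowable sequence matching the axioms stated (in particular that non-generic directions only cause substring reversals, even when several $v_i$ are collinear, and that each pair reverses exactly once per half-turn), handling the $w_j$ versus $-w_j$ ambiguity consistently, and making precise the sense in which a finite list of terms of a \emph{circular} sequence counts as a subsequence (choice of break point, collapsing of repeats). All of the real content is already contained in Lemma \ref{lem:sweep_order} and in the well-known correspondence between planar point configurations and allowable sequences.
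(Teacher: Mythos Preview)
Your proposal is correct and is exactly the argument the paper intends. In the paper this statement is presented as an \emph{observation} with no written proof: it is noted as a weaker version of the preceding ``if and only if'' observation (monotone rank two $\Leftrightarrow$ subsequence of a \emph{realizable} allowable sequence), obtained by dropping realizability. Your write-up simply makes explicit the underlying geometry---Lemma~\ref{lem:sweep_order} identifies each column's sort permutation with a sweep order of the planar configuration $\{v_i\}$, and the Goodman--Pollack rotating-sweep construction assembles all sweep orders into an allowable sequence---which is precisely the content the paper leaves implicit.
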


Notice this condition is also satisfied if a matrix has underlying rank two. We show that this condition is relatively easy to check. 

\begin{prop} \label{prop:subsequence}
 A circular sequence of permutations $S = \pi_1, \pi_2, \ldots, \pi_{2m}$ such that $\pi \in S \Leftrightarrow -\pi\in S$ is a subsequence of an allowable sequence if and only if satisfies condition  (3) to be an allowable sequence: 
the order of each pair $i, j$ is reversed exactly once between each $\pi_{k}, -\pi_{k} = \pi_{k+m}$
\end{prop}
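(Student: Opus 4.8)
The statement is an "if and only if," and I would prove the two directions separately: the forward implication is a short inheritance argument, while the reverse implication requires an explicit interpolation construction.

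For ($\Rightarrow$), suppose $S$ is a subsequence of an allowable sequence $T$. Fix two indices $a,b$ and an index $k$; I want to show the relative order of $a,b$ changes exactly once as we pass from $\pi_k$ to $\pi_{k+m}=-\pi_k$ through $\pi_{k+1},\dots,\pi_{k+m-1}$. First observe that condition (3) for $T$ forces the pair $(a,b)$ to reverse exactly twice around the full circle of $T$ (once in each half $\tau_\ell\to -\tau_\ell$). Since $\pi_{k+m}=-\pi_k$ flips the order of \emph{every} pair, the net change of the order of $(a,b)$ along the corresponding $T$-arc is odd; together with the ``at most two total'' bound, this arc of $T$ contains exactly one reversal of $(a,b)$. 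Hence the order of $(a,b)$ is monotone (one value, then the other) along that arc of $T$, and therefore changes exactly once along the sub-list $\pi_k,\pi_{k+1},\dots,\pi_{k+m}$ of $S$. This is precisely condition (3) for $S$.

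For ($\Leftarrow$), assume $S$ satisfies condition (3) together with the indexing convention $\pi_{k+m}=-\pi_k$. I would fill in each gap $[\pi_i,\pi_{i+1}]$ by a geodesic in the Cayley graph of the symmetric group under adjacent transpositions. Let $D_i$ be the set of unordered pairs whose relative order differs between $\pi_i$ and $\pi_{i+1}$; the standard theory of reduced words (equivalently, bubble sort) gives a chain $C_i$ of permutations from $\pi_i$ to $\pi_{i+1}$, each step reversing a length-two substring, such that $C_i$ reverses each pair of $D_i$ exactly once and no other pair at all. To preserve antipodal symmetry, choose $C_1,\dots,C_m$ freely and set $C_{m+i}:=-C_i$ for $i=1,\dots,m$: negation is an automorphism of the adjacent-transposition graph, and since $\pi_{m+i}=-\pi_i$, the chain $-C_i$ is a valid interpolation of the gap $[\pi_{m+i},\pi_{m+i+1}]$. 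Let $T$ be the circular sequence $\pi_1,C_1,\pi_2,C_2,\dots,\pi_{2m},C_{2m}$. Then I check the three axioms for $T$: condition (1) holds because $T$ is symmetric under negation by construction, with antipode $m+\sum_{i\le m}|C_i|$ steps away; condition (2) holds because every step of every $C_i$ reverses a substring; and condition (3) holds because along the $T$-arc from $\pi_k$ to $\pi_{k+m}$ a pair $(a,b)$ is reversed exactly $\bigl|\{i: k\le i<k+m,\ (a,b)\in D_i\}\bigr|$ times, which equals $1$ by condition (3) for $S$ (the order of $(a,b)$ changes exactly once among $\pi_k,\dots,\pi_{k+m}$), the other half following by symmetry. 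Since $S$ is visibly a subsequence of $T$, this completes the argument.

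The main obstacle I anticipate is the combinatorial bookkeeping in the reverse direction: making fully precise the link between geodesics in $(S_n,\text{adjacent transpositions})$ and inversion sets (that a reduced word from $\pi_i$ to $\pi_{i+1}$ reverses exactly the pairs of $D_i$, each exactly once), and verifying that negation interacts correctly with the chains so that the assembled $T$ is genuinely antipodally symmetric with the antipode sitting at the half-way index. Everything else is either a direct citation of the allowable-sequence axioms or the monotonicity observation used in the forward direction.
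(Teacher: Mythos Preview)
Your proposal is correct and follows essentially the same approach as the paper: both directions match the paper's argument, with the reverse direction filling each gap $[\pi_i,\pi_{i+1}]$ by a bubble-sort (reduced-word) chain and then completing the second half by antipodal symmetry $C_{m+i}=-C_i$. Your write-up is in fact more careful than the paper's in two places---the forward direction, where you spell out why a subsequence inherits condition (3) via the ``exactly two reversals around the full circle'' count, and the verification of axiom (3) for the constructed $T$, where you track the inversion sets $D_i$ explicitly---but the underlying ideas are identical.
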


\begin{proof}
To prove the ``only if" part of this proposition, we note that if we fail to meet condition (3), we cannot fix it by adding more permutations. To prove the other direction, we construct a simple allowable sequence $\sigma_1, \ldots, \sigma_{2M}$ containing $ \pi_1, \pi_2, \ldots, \pi_{2m}$ as a subsequence. We need to show that for each  $\pi_{i}, \pi_{i+1}$, we can fill in $$\pi_{i} = \sigma_j, \sigma_{j+1}, \ldots, \sigma_{j+k} = \pi_{i+1}$$ such that each ${\sigma_j}, \sigma_{j+1}$ differs by reversing a pair of adjacent entries. 

We are able to do this iteratively for each pair: to produce $\sigma_{j+1}$ from $\sigma_{j}$, we look for a pair of adjacent elements of $\sigma_j$ which are in the reverse order of their order in $\pi_{i+1}.$ We reverse this pair to produce $\sigma_{j+1}$ from $\sigma_j.$ Once there is nothing left to reverse, we have arrived at $\pi_{i+1}$. 

We do this for each pair $\pi_i, \pi_{i+1}$ until we reach $\sigma_k = -\pi_{i}$. From here, we fill in the rest of the allowable sequence using the rule $\sigma_{i+k} = -\sigma_i$. By construction, we have clearly satisfied conditions (1) and (2) to be an allowable sequence. Now, notice that since we only reverse a pair of elements between $\sigma_{j}, \sigma_{j+1}$ if $i$ and $j$ were reversed between some $\pi_i$ and $\pi_{i+1}$, we do not introduce any extra swaps of each pair between any $\pi_i$, $-\pi_i$. Thus, each pair $i,j$ is reversed exactly once between each $\sigma_i, -\sigma_i = \sigma_{i+k}$. Thus, $\sigma_1, \sigma_2, \ldots, \sigma_{2k}$ is an allowable sequence. 
\end{proof}

\begin{ex}\label{ex:allowable}
The matrix 
\[A = \begin{pmatrix}
1&2&1&1\\
2&1&3&2\\
3&3&2&4\\
4&4&4&3
\end{pmatrix}\]
has Radon rank two and monotone rank three. To see that the Radon rank is two, note that no column of the matrix induces the partition $(14, 23)$.  To see that the monotone rank is three, we consider its set of column permutations and reversed column permutations 
\begin{align*}
1234, 2134, 1324, 1243, 4321, 4312, 4213, 3412    
\end{align*}
By brute force, we can see that there is no ordering which satisfies Proposition \ref{prop:subsequence}. Thus, by Observation  \ref{obs:allowable}, $$\ur(A) \geq \monr(A) \geq 3.$$ We will later show how to determine this much more efficiently. 

\end{ex}

Next, we give an $O(mn^2)$ algorithm to determine whether the column permutations of a matrix can be placed in an order which satisfies the conditions of Proposition \ref{prop:subsequence}. This algorithm makes use of a definition of distance between two permutations. 

\begin{defn}
Given two permutations $\pi, \sigma$, define their  distance by
$$d(\pi, \sigma) = |\{\{i, j\} \mid i \mbox{ and } j \mbox{ are swapped between } \pi \mbox{ and } \sigma \}|.$$
\end{defn}

\begin{lem}
Let $\pi_i, \pi_j, \pi_k$ be permutations of $[n]$. Then $\pi_j$ is allowed between $\pi_i$ and $\pi_k$ if and only if $d(\pi_i, \pi_j) + d(\pi_j, \pi_k) = d(\pi_i, \pi_k)$. 
\end{lem}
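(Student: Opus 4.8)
The plan is to reduce everything to the combinatorics of the ``difference set'' of a pair of permutations. For permutations $\pi,\sigma$ of $[n]$, let $D(\pi,\sigma)$ denote the set of unordered pairs $\{a,b\}$ that occur in opposite relative order in $\pi$ and in $\sigma$, so that $d(\pi,\sigma)=|D(\pi,\sigma)|$ by definition. The first step is to record the elementary parity identity
\begin{align*}
D(\pi_i,\pi_k) = D(\pi_i,\pi_j)\,\triangle\, D(\pi_j,\pi_k),
\end{align*}
where $\triangle$ denotes symmetric difference. This holds pair-by-pair: for a fixed $\{a,b\}$, ``$a$ precedes $b$'' is a $\{0,1\}$-valued quantity in each permutation, so the discrepancy between $\pi_i$ and $\pi_k$ on $\{a,b\}$ is the mod-$2$ sum of the discrepancies on the two legs, which is exactly membership in the symmetric difference. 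From this I would immediately deduce
\begin{align*}
d(\pi_i,\pi_j)+d(\pi_j,\pi_k) = d(\pi_i,\pi_k) + 2\,\bigl|D(\pi_i,\pi_j)\cap D(\pi_j,\pi_k)\bigr|,
\end{align*}
using $|A|+|B| = |A\triangle B| + 2|A\cap B|$. Hence the numerical equality in the lemma is equivalent to the disjointness $D(\pi_i,\pi_j)\cap D(\pi_j,\pi_k)=\emptyset$.

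Second, I would translate ``$\pi_j$ is allowed between $\pi_i$ and $\pi_k$'' into the same disjointness condition, using the interpolation picture from the proof of Proposition \ref{prop:subsequence}: $\pi_j$ is allowed between $\pi_i$ and $\pi_k$ exactly when one can interpolate simple allowable moves $\pi_i=\sigma_0,\sigma_1,\dots,\sigma_r=\pi_j=\tau_0,\tau_1,\dots,\tau_s=\pi_k$ (each step reversing a single adjacent pair) so that no unordered pair is reversed more than once in total. The key sub-fact, which is the standard inversion description of the weak order on $S_n$, is that any chain of adjacent transpositions from $\alpha$ to $\beta$ reverses each pair of $D(\alpha,\beta)$ an odd number of times and every other pair an even number of times, and that a bubble-sort chain realizes exactly $D(\alpha,\beta)$, each pair once. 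Applying this to the two legs, the $\pi_i\to\pi_j$ leg reverses at least the pairs of $D(\pi_i,\pi_j)$ — and exactly those in a shortest interpolation — while the $\pi_j\to\pi_k$ leg reverses at least (exactly, minimally) the pairs of $D(\pi_j,\pi_k)$; a longer interpolation only adds pairs reversed twice, which is forbidden. Therefore a legal interpolation through $\pi_j$ exists if and only if $D(\pi_i,\pi_j)$ and $D(\pi_j,\pi_k)$ are disjoint.

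Combining the two steps closes the argument: $\pi_j$ is allowed between $\pi_i$ and $\pi_k$ $\iff$ $D(\pi_i,\pi_j)\cap D(\pi_j,\pi_k)=\emptyset$ $\iff$ $d(\pi_i,\pi_j)+d(\pi_j,\pi_k)=d(\pi_i,\pi_k)$. I expect the only genuine obstacle to be bookkeeping in the second step: pinning down precisely what ``allowed between'' means in terms of interpolating simple moves within a single segment of an allowable sequence, and carefully invoking the bubble-sort fact that a shortest adjacent-transposition chain between two permutations reverses exactly their difference set while longer chains only introduce doubly-reversed pairs. The parity identity and the inclusion–exclusion count in the first step are routine.
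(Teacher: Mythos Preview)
The paper states this lemma without proof, so there is no argument to compare against. Your proof is correct: the parity identity $D(\pi_i,\pi_k)=D(\pi_i,\pi_j)\triangle D(\pi_j,\pi_k)$ and the resulting count $d(\pi_i,\pi_j)+d(\pi_j,\pi_k)=d(\pi_i,\pi_k)+2|D(\pi_i,\pi_j)\cap D(\pi_j,\pi_k)|$ reduce the equality to disjointness of the two difference sets, which is exactly the ``no pair reversed twice'' condition.

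One small comment: your second step is more elaborate than necessary. In the paper's usage (see the worked example in the algorithm, where ``$1324$ is not allowed between $2134$ and $4321$ because the pair $12$ gets reversed twice''), ``allowed between'' is defined directly as the condition that no pair $\{a,b\}$ is reversed on both legs $\pi_i\to\pi_j$ and $\pi_j\to\pi_k$. Since the relative order of a fixed pair in two permutations either differs or doesn't, this is immediately the disjointness $D(\pi_i,\pi_j)\cap D(\pi_j,\pi_k)=\emptyset$; you do not need to pass through adjacent-transposition chains or bubble sort. That machinery is what Proposition~\ref{prop:subsequence} uses to extend a subsequence to a full allowable sequence, but for the present lemma the equivalence is one line once ``allowed between'' is unpacked.
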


\noindent \textbf{Algorithm}\\
\textbf{Input}: A sequence of permutations $\pi_1, \ldots, \pi_{2n}$\\
\textbf{Output}: The same sequence of permutations, sorted in an order which satisfies Proposition \ref{prop:subsequence}, if possible. 
Otherwise, the algorithm will report failure. \\
\\
\textbf{Step one}:
For each $j\geq 1$ compute and record $d(\pi_i, \pi_j)$.  
Sort the permutations in the order of increasing $d(\pi_i, \pi_j)$, resolving ties arbitrarily. From here on out, we will assume that $\pi_1, \ldots, \pi_n$ is written in this order. \\
\textbf{Step two}: Now, we begin placing the permutations into a new order satisfying Proposition \ref{prop:subsequence}. 
We first construct only the first half of this order, between $\pi_1$ and $-\pi_1$, placing $\pi_1$ at the first spot in this order and $-\pi_1$ at the last spot. 
Now, beginning with $\pi_2$ and continuing in our sorted order until we reach $\pi_{2n}$, we check whether each new entry to be added $\pi_i$ is allowed between the last permutation added to the list, $\pi_j$, and $-\pi_1.$  
If yes, we insert it there. If not, we do not. \\
\textbf{Step three}: Now, we check whether, for each pair $\pi_i, -\pi_i$, we have inserted at least one member of the pair into the sequence. If we have not, we report failure. If we have, we then complete the rest of the sequence using the rule $\pi_{i+n} = -\pi_i$. We return this sequence. 

\begin{ex}
We apply our algorithm to the matrix 
\[A = \begin{pmatrix}
1&2&1&1\\
2&1&3&2\\
3&3&2&4\\
4&4&4&3
\end{pmatrix}\]
from Example \ref{ex:allowable}, trying to fit the set of column permutations and their reverses
\begin{align*}
1234, 2134, 1324, 1243, 4321, 4312, 4231, 3421    
\end{align*}
 into a sequence satisfying Proposition \ref{prop:subsequence}. \\
\textbf{Step one:} 
Sorting by distance from $1234$, we have 
 \begin{align*}
1234, 2134, 1324, 1243, 4312, 4231, 3421  , 4321
\end{align*}
\textbf{Step two:}
Now, we begin a new sequence with 
\begin{align*}
 1234, 4321
\end{align*}
\begin{itemize}
\item Now, we check whether $2134$ is allowed between $1234$ and $4321$. It is, so we insert it
\begin{align*}
 1234, 2134, 4321
\end{align*}
\item Next, we check whether $1324$ is allowed between $2134$ and $4321$. It is not, because the pair $12$ gets reversed twice along the sequence. 

\item Next, we check whether $1243$ is allowed between $2134$ and $4321$. Again, it is not, because the pair $12$ gets reversed twice along the sequence. 

\item Next, we check whether $4312$ is allowed between $2134$ and $4321$. Again, it is not, because the pair $12$ gets reversed twice along the sequence. 

\item Next, we check whether $4231$ is allowed between $2134$ and $4321$. It is, so we insert it
\begin{align*}
 1234, 2134, 4231, 4321
\end{align*}

\item Finally, we check whether $3421$ is allowed between $4231$ and $4321.$ It is not, because the pair $34$ is reversed twice along the sequence. 

\end{itemize}

Thus, the final sequence we have constructed is 
\begin{align*}
 1234, 2134, 4231, 4321
\end{align*}

\textbf{Step three}
We report failure because we have not inserted either member of the pair $1243, 3412$. 
\end{ex}

\begin{prop}
The sequence of permutations returned by this algorithm is a subsequence of an allowable sequence. 
If the algorithm reports failure, then these is no way to reorder $\pi_1, \cdots, \pi_n$ so that they form a subsequence of an allowable sequence. The worst case runtime of this algorithm is $O(mn^2)$. 

\label{prop:allow_alg}
\end{prop}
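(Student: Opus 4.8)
The plan is to establish the three claims of the proposition in order: correctness of the output (it is a subsequence of an allowable sequence), correctness of the failure report (failure is reported only when no valid reordering exists), and the $O(mn^2)$ runtime bound. For the runtime, note that there are $2n$ permutations of $[m]$, computing $d(\pi_i,\pi_j)$ for a fixed $\pi_i$ against all other $\pi_j$ takes $O(mn)$ work (each distance is computed by counting inversions between two length-$m$ permutations, which is $O(m)$, possibly $O(m\log m)$ with a naive implementation — but the dominant cost is the $n$ comparisons, giving $O(mn)$ overall for Step one up to logarithmic factors); Steps two and three each iterate once through the $2n$ permutations performing an $O(m)$ ``betweenness'' check per permutation, again $O(mn)$. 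Summing gives $O(mn^2)$ once we account for the sort in Step one being $O(n\log n)$ comparisons each of cost $O(m)$. I would state this as the easy part and dispatch it quickly.

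For correctness of the output, I would argue that the sequence the algorithm constructs, call it $\sigma_1 = \pi_1, \sigma_2, \ldots, \sigma_k = -\pi_1, \sigma_{k+1} = -\sigma_1, \ldots$, satisfies condition (3) of Definition of an allowable sequence, hence by Proposition \ref{prop:subsequence} it is a subsequence of an allowable sequence. The key observations: (a) each $\sigma_{i+1}$ was only inserted after $\sigma_i$ because it passed the betweenness test $d(\sigma_i,\sigma_{i+1}) + d(\sigma_{i+1}, -\pi_1) = d(\sigma_i, -\pi_1)$, and by the Lemma relating $d$ to betweenness this means no pair is reversed twice going from $\sigma_i$ forward to $-\pi_1$; (b) sorting in Step one by $d(\pi_1, \cdot)$ guarantees that when we test $\pi_i$ for insertion, every pair reversed between $\pi_1$ and $\pi_i$ is ``on the way'' to $-\pi_1$ (since $d(\pi_1,\pi_i) \le d(\pi_1, -\pi_1)$ automatically, and the sorted order ensures we never try to insert a permutation that is ``farther'' before one that is ``nearer''). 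Chaining the betweenness relations along the constructed first half and using the symmetry $\sigma_{i+k} = -\sigma_i$ for the second half, I would verify that each pair $\{i,j\}$ is reversed exactly once between $\sigma_\ell$ and $-\sigma_\ell$ for every $\ell$. This is the bookkeeping heart of the argument.

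For the failure clause — this is the main obstacle — I need to show that if the greedy insertion procedure leaves some pair $\{\pi_i, -\pi_i\}$ with neither member inserted, then no reordering of $\pi_1, \ldots, \pi_{2n}$ forms a subsequence of an allowable sequence. The subtlety is that greedy insertion makes local choices (resolving ties arbitrarily, inserting as soon as the betweenness test passes), and I must rule out that a different insertion order could have succeeded. The argument I would pursue: suppose for contradiction a valid ordering $\tau_1, \ldots, \tau_{2n}$ exists (a subsequence of an allowable sequence, hence satisfying condition (3)); in any such ordering the permutations between $\pi_1$ and $-\pi_1$ are exactly those $\pi_j$ with the pairs reversed from $\pi_1$ forming a ``prefix-closed'' family in the reversal order, and crucially they must appear in an order consistent with $d(\pi_1, \cdot)$ being nondecreasing up to $-\pi_1$ — because in an allowable sequence, as we sweep from $\pi_1$ toward $-\pi_1$, the number of pairs reversed so far is strictly monotone. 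Therefore the sorted order from Step one is, after tie-breaking, a \emph{feasible candidate order} for the first half of any valid arrangement, and the greedy test $d(\sigma_i, \pi) + d(\pi, -\pi_1) = d(\sigma_i, -\pi_1)$ accepts $\pi$ precisely when inserting it keeps us on a monotone path to $-\pi_1$. One then shows by induction that if a valid arrangement places $\pi$ (or $-\pi$) in the first half, the greedy algorithm also places one of $\pi, -\pi$ somewhere; the induction step compares the greedy partial sequence with the hypothetical valid one and uses that any pair reversed between consecutive greedy entries is ``used up'' legitimately. I expect the delicate point to be handling ties in Step one and the fact that greedy commits to $\sigma_i$ before seeing later permutations — I would address this by proving an exchange lemma: if $\pi$ is rejected after $\sigma_i$ but accepted after some later $\sigma_{i'}$ in a valid arrangement, then $\pi$ could equally have been accepted after $\sigma_i$, contradicting the rejection. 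Once that exchange property is in hand, the failure clause follows, and the proposition is complete.
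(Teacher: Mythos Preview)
Your overall plan matches the paper's: verify condition~(3) of Proposition~\ref{prop:subsequence} for the output, argue the failure clause via monotonicity of $d(\pi_1,\cdot)$, and bound the runtime. Two points deserve correction.

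\textbf{Runtime.} Your claim that computing $d(\pi,\sigma)$ between two length-$m$ permutations is $O(m)$ is incorrect: the distance counts how many of the $\binom{m}{2}$ pairs are inverted, so the naive cost is $O(m^2)$ (at best $O(m\log m)$ via merge-sort inversion counting). With your convention ($2n$ permutations of $[m]$) this makes Steps one and two each $O(nm^2)$, not $O(mn)$. The paper in fact silently swaps the roles of $m$ and $n$ in its runtime paragraph (``a set of $m$ sequences of $n$ points''), which is why the stated bound reads $O(mn^2)$; be sure your variable convention matches whatever you report.

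\textbf{Failure clause.} You correctly identify the key fact---that along any valid sequence from $\pi_1$ to $-\pi_1$ the distance $d(\pi_1,\cdot)$ is strictly increasing, since each step reverses new pairs and never un-reverses old ones---but then you set it aside in favor of an exchange lemma. That detour is both unnecessary and, as you have stated it, unclear: the phrase ``accepted after some later $\sigma_{i'}$ in a valid arrangement'' conflates the greedy partial sequence with the hypothetical valid one, and it is not obvious what exchange you would actually perform. The paper's argument is direct: monotonicity forces the order of the first half to be the distance-sorted order, so the position at which greedy tests $\pi_j$ is the \emph{only} position it could occupy in any valid arrangement; hence rejection by greedy means $\pi_j$ cannot appear in the first half at all. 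Since exactly one of $\pi_j,-\pi_j$ must lie between $\pi_1$ and $-\pi_1$ in any valid circular arrangement, greedy will succeed on one of them whenever a valid arrangement exists. You already have the monotonicity observation; use it directly and drop the exchange lemma.
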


\begin{proof}
We first prove that any sequence of permutations returned by this algorithm is a subsequence of an allowable sequence using Proposition \ref{prop:subsequence}. By construction, we have that $\pi \in S \Leftrightarrow -\pi\in S$. Now, we check that each pair of points is reversed exactly along the sequence between $\pi_1$ and  $-\pi_1$. Note that this property is present at the beginning of Step 2, when the sequence is $\pi_1, \pi_2, -\pi_1$. Now, we show that it is maintained every time we insert a permutation.   Every time we insert a permutation $\pi_i$ between $\pi_j$ and $-\pi_1$, we check whether $\pi_i$ is allowed between $\pi_j$ and $-\pi_1$. This means that if a pair of points is reversed between $\pi_1$ and $\pi_j$, then it is not reversed again between $\pi_j$ and $\pi_i$. Thus, we satisfy condition (3) as well. 

Now, we prove that if the algorithm reports failure, then there is no allowable sequence containing all of the permutations. Suppose there exists such an allowable sequence. Without loss of generality, we can assume that the permutation $\pi_2$ comes after $\pi_1$ and before $-\pi_1$ in this sequence, since otherwise we can reverse the order of the sequence to get another allowable sequence. Now, notice that as we move along the sequence from $\pi_1$ to $\pi_n$, the distance of each permutation from $\pi_1$ must increase monotonically. Thus, a permutation $\pi_j$ must be inserted into the sequence at the position chosen by the algorithm if it can be inserted into the sequence at all. Further, since every pair is reversed between permutations $\pi_i$ and $-\pi_i$, for all $i\neq 1$, exactly one of $\pi_i$ and $-\pi_i$ appears between $\pi_1$ and $-\pi_1$ in a sequence. Thus, if we apply this algorithm to a set of permutations which can be arranged to form a subsequence of an allowable sequence, for each $i$, we can fit either $\pi_i$ or $-\pi_i$ into the sequence, so we do not report failure. 

Finally, we prove that the running time is $O(mn^2)$ for a set of $m$ sequences of $n$ points. In step one,  we can compute the distance between two permutations in time $O(n^2)$. Thus, we can compute all of the distances from $\pi_1$ to $\pi_i$ in time $O(mn^2)$. Next, we can sort the sequences by distance in time $O(m\log m).$ Asymptotically, we must have $\log m < n^2$, thus the overall runtime scales as $O(mn^s)$. In step two, we look at each permutation $\pi_j$ once and compute two  distances $d(\pi_{i}, \pi_j)$ and $d(\pi_{j}, -\pi_1)$ to determine whether $\pi_j$ can be placed into the sequence. Thus, we can do step two in time $O(mn^2)$ as well. Finally, we check whether for each $\pi_i$, we have placed either $\pi_i$ or $-\pi_i$ into the sequence in time $O(n)$, since the fact that the list is sorted by distance from $\pi_1$ means that we know where to look for $-\pi_i$ given $\pi_i$. 
 
\end{proof}

\subsubsection{Computing underlying rank is hard}
Now, we apply the result of \cite{hoffmann2018universality} that the problem of checking realizability for allowable sequences is complete in the existential theory of the reals $(\exists \R)$ in order to prove that determining whether a matrix has monotone rank two is $\exists \R$-complete. We then apply this result to show that determining whether a matrix has underlying rank two is $\exists \R$-complete as well. 

\begin{prop} 
Checking whether a matrix has monotone rank  two is complete  $\exists \R$-complete and therefore NP-hard.
\end{prop}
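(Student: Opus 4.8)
The plan is to prove both membership in $\exists\R$ and $\exists\R$-hardness; $\mathrm{NP}$-hardness then follows since $\mathrm{NP}\subseteq\exists\R$.

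For membership, observe that $\monr(A)\le 2$ holds if and only if there exist an $n\times 2$ matrix $V$ and a $2\times n$ matrix $W$ such that, writing $B_{ij}=\sum_k V_{ik}W_{kj}$, one has $B_{ij}<B_{\ell j}\iff A_{ij}<A_{\ell j}$ for all $i,\ell,j$ — the existence of the monotone functions $f_j$ is exactly this system of strict order conditions, since $A$ is an order matrix. This is an $\exists\R$ sentence in the entries of $V$ and $W$. Because $\monr(A)\le 1$ is decidable in polynomial time (it holds iff there is a single order such that every column of $A$ is in that order or its reverse) and $\mathrm{P}\subseteq\exists\R$, the decision problem ``$\monr(A)=2$'' lies in $\exists\R$ as well.

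For hardness, I would reduce from the problem of deciding whether a simple allowable sequence is realizable, which is $\exists\R$-complete by \cite{hoffmann2018universality}. Given a simple allowable sequence $S=\pi_1,\ldots,\pi_{2m}$ on $[n]$ (so $m=\binom{n}{2}$), build the $n\times 2m$ matrix $A$ whose $k$-th column holds the integers $(k-1)n+1,\ldots,kn$ arranged so that its sort permutation is $\pi_k$. Then $A$ is an order matrix, is computable in polynomial time, and its monotone rank depends only on the set of column orders $\{\pi_1,\ldots,\pi_{2m}\}$. The claim to prove is that $\monr(A)=2$ if and only if $S$ is realizable. One direction is routine: if $S$ is realized by points $v_1,\ldots,v_n\in\R^2$, then each $\pi_k$ is the order in which a hyperplane with a suitable normal $w_k$ sweeps past the $v_i$ (Lemma \ref{lem:sweep_order}), and letting $f_k$ be the increasing bijection carrying $\{v_i\cdot w_k\}_i$ onto the $k$-th column of $A$ gives $\monr(A)\le 2$; moreover $\monr(A)\ge 2$ because for $n\ge 3$ the sequence $S$ contains a permutation that differs from $\pi_1$ by a single adjacent transposition, hence is neither $\pi_1$ nor its reverse, so no common order can witness monotone rank $1$.

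The hard part will be the converse. Suppose $\monr(A)=2$. By Observation \ref{obs:allowable}, the column orders of $A$ can be arranged into a subsequence of a realizable allowable sequence on $[n]$; equivalently, some planar point arrangement realizes all $n(n-1)$ permutations of $S$ as sweeping orders. I would then argue that $n$ points in the plane admit at most $n(n-1)$ distinct sweeping orders — the transition directions are the $\binom{n}{2}$ directions orthogonal to the differences $v_i-v_j$, each occurring twice around the circle — so that an arrangement realizing $n(n-1)$ of them must be in general position and have exactly the permutation set of $S$ as its allowable sequence; finally, a simple allowable sequence is determined up to rotation and reflection by its set of realized permutations (a standard fact, see \cite{goodman1991complexity}), so this arrangement realizes $S$ itself. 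The delicate point is exactly this passage from ``realizes every permutation of $S$'' to ``realizes $S$'', which is where the counting bound on sweeping orders and the reconstruction of an allowable sequence from its permutation set must be invoked carefully; once that is in place, the reduction is complete and the theorem follows.
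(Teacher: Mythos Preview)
Your approach is essentially the same as the paper's: reduce from realizability of allowable sequences by building a matrix whose column orders are the permutations of the given sequence, then invoke Observation~\ref{obs:allowable}. The paper's proof is a single short paragraph that does exactly this, and in fact is less careful than yours on two points. First, the paper only argues hardness and says nothing about membership in $\exists\R$, whereas you correctly observe that $\monr(A)\le 2$ is expressible as an $\exists\R$ sentence and that $\monr(A)\le 1$ is polynomial-time decidable. Second, the paper simply asserts (via the ``if and only if'' form of Observation~\ref{obs:allowable}) that the constructed matrix has monotone rank two exactly when the given allowable sequence is realizable, without addressing the gap between ``the permutation \emph{set} is realized by some planar configuration'' and ``the allowable \emph{sequence} is realized.'' You correctly identify this as the delicate step and propose the right fix: a counting argument shows any realizing configuration yields a simple allowable sequence with the same permutation set as $S$, and a simple allowable sequence is determined up to rotation and reflection by its permutation set. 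That last claim is the one place where you should make sure you can either cite it precisely or supply a short argument (e.g., via the ``swap graph'' on the permutation set being a single cycle), but otherwise your proof is complete and, if anything, more rigorous than the paper's.
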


\begin{proof}
We can reduce the problem of determining whether an allowable sequence is realizable to the problem of determining whether a matrix has monotone rank two. 
Given an allowable sequence $\pi_1, \ldots, \pi_m$, we can construct a matrix $A$ whose columns have sort permutations $\pi_1, \ldots, \pi_m$. By Observation  \ref{obs:allowable}, $A$ has monotone rank two if and only if $\pi_1, \ldots, \pi_m$ form a subsequence of a realizable allowable sequence.
\end{proof}

Thus, we have shown that the problem of computing monotone rank is computationally intractable.

Finally, we show that computing underlying rank is intractable. We do this by reducing the problem of determining whether an allowable sequence is realizable to the problem of determining whether a matrix has underlying rank two.

To to this, we construct a matrix $A(\pi_1, \ldots, \pi_{2n})$ which has rank two if and only if $\pi_1, \ldots, \pi_{2n}$ is realizable. Ensure that the $i^{th}$ column of $A$ has the sort permutation $\pi_i$. Now, scale each  column that $A_{ij} > A_{k\ell}$ whenever $j > \ell$. We prove the following:

\begin{lem}\label{lem:underlying_reduction}
The matrix $A(\pi_1, \ldots, \pi_{2n})$ has underlying rank two if and only if $\pi_1, \ldots, \pi_{2n}$ is realizable. 
\end{lem}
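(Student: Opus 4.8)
The plan is to prove the two implications separately, assuming throughout (as is relevant to \cite{hoffmann2018universality}) that the input $\pi_1,\dots,\pi_{2n}$ is a \emph{simple} allowable sequence, so that it has the maximal length $2n=m(m-1)$ for permutations of $[m]$; the block structure built into $A=A(\pi_1,\dots,\pi_{2n})$ (every entry of column $j$ exceeds every entry of column $\ell$ whenever $j>\ell$) will serve only as an engineered-in cross-column order that can be realized in rank two. For the ``only if'' direction, suppose $\ur(A)=2$. Since $\monr(A)\le\ur(A)$ always, $\monr(A)\le 2$, and it is not $\le 1$ because the columns of $A$ realize the $2n\ge 3$ distinct permutations $\pi_1,\dots,\pi_{2n}$; hence $\monr(A)=2$. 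By Observation \ref{obs:allowable} and Proposition \ref{prop:subsequence} the column permutations of $A$ can be reordered into a subsequence of a \emph{realizable} allowable sequence $\Sigma$. But $\{\pi_1,\dots,\pi_{2n}\}$ already has the maximal cardinality $m(m-1)$ possible for the permutation set of an allowable sequence of $[m]$, so $\Sigma$ has length $2n$ and consists of exactly these permutations, hence is simple. Finally, in the point configuration realizing $\Sigma$, consecutive terms of the simple allowable sequence $\pi_1,\dots,\pi_{2n}$ differ by an adjacent transposition and therefore lie in \emph{adjacent} chambers of the circle of sweep directions; since the $\pi_j$ exhaust all $2n$ such chambers, their listing order must coincide, up to rotation and reflection, with the cyclic order of $\Sigma$. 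Thus $\pi_1,\dots,\pi_{2n}$ equals $\Sigma$ cyclically and is realizable.

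For the ``if'' direction, suppose $\pi_1,\dots,\pi_{2n}$ is realized by points $v_1,\dots,v_m\in\R^2$ in general position. Choose a unit direction $\hat u_j$ in the interior of the chamber whose sweep order is $\pi_j$, so that sorting $i$ by $v_i\cdot\hat u_j$ yields $\pi_j$ and $\hat u_{j+n}=-\hat u_j$. The $n$ directions $\hat u_1,\dots,\hat u_n$ lie in chamber interiors, so they span an angular arc of width strictly less than $\pi$; after rotating the configuration we may assume $\hat u_j\cdot\hat e_1\ge\delta>0$ for $j\le n$ (hence $\hat u_j\cdot\hat e_1\le-\delta$ for $j>n$). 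Set $v_i'=v_i+C\hat e_1$ and $w_j=r_j\hat u_j$, and let $B_{ij}=v_i'\cdot w_j$, a matrix of rank at most $2$. Since $r_j$ and the additive term $r_jC(\hat u_j\cdot\hat e_1)$ do not depend on $i$, the order inside column $j$ of $B$ is still $\pi_j$; and for $C$ large and then the $r_j$ growing geometrically, the intervals spanned by the columns of $B$ become pairwise disjoint, with columns $1,\dots,n$ in the positive range and columns $n+1,\dots,2n$ in the negative range. Reading off the total order on the entries of $B$ and relabelling the columns by the fixed permutation carrying this order to $1<2<\dots<2n$ (harmless, since underlying rank is invariant under column permutations), we obtain a rank-two matrix with exactly the order type of $A$, so $\ur(A)\le 2$; together with $\monr(A)\ge 2$ this gives $\ur(A)=2$.

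The main obstacle is the ``if'' direction, and specifically the fact that a rank-two representation has no free homogeneous coordinate, so one cannot shift each column's block independently without spending an extra dimension — this is precisely why $\ur$ can exceed $\monr$ in general and why the corresponding reduction for monotone rank was easier. The construction above circumvents this by using that the sweep directions for half of the allowable sequence can be rotated into an open half-plane, so a \emph{single} global translation of the point set pushes those columns uniformly into the positive range while pushing the antipodal half into the negative range; the remaining work — choosing $C$, the growth rate of the $r_j$, and the column relabelling so that all $2n$ blocks come out disjoint and in the prescribed order — is routine bookkeeping. A secondary point needing care is the chamber-adjacency step in the ``only if'' direction, i.e.\ that the listing order of a simple allowable sequence of maximal length is pinned down by its permutation set; this follows from the fact that crossing a wall in a generic point arrangement swaps an adjacent pair in the sweep order, but it should be stated cleanly.
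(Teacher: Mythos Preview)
Your overall strategy matches the paper's, and your ``only if'' direction is actually more careful than the paper's terse appeal to Observation~\ref{obs:allowable}. The adjacency step is correct: in any simple allowable sequence $\Sigma=(\sigma_1,\dots,\sigma_{2n})$ the Kendall--tau distance satisfies $d(\sigma_j,\sigma_k)=\min(|j-k|,\,2n-|j-k|)$, so two permutations at distance~$1$ (in particular any $\pi_k,\pi_{k+1}$) must be cyclically consecutive in $\Sigma$, forcing $\Sigma$ and $(\pi_1,\dots,\pi_{2n})$ to agree up to rotation and reflection.

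The ``if'' direction has a genuine gap. With your choice $v_i'=v_i+C\hat e_1$, columns $1,\dots,n$ of $B$ are positive and columns $n+1,\dots,2n$ are negative, so \emph{no} choice of positive scalars $r_j$ can place the column ranges in the order $1<2<\dots<2n$: you would need the positive column $n$ to sit below the negative column $n+1$. Your proposed repair --- permute the columns of $B$ --- does not work. Permuting columns of $B$ also permutes which sort permutation lives in which column, so the permuted matrix has column $j$ carrying $\pi_{\sigma(j)}$ rather than $\pi_j$, and therefore does \emph{not} have the order type of $A$. Invariance of $\ur$ under column permutation is true but irrelevant here: a column permutation of $A$ permutes \emph{both} the sort permutations \emph{and} the block ranks in lockstep, whereas your $B'$ has the block ranks in the identity order but the sort permutations in a nontrivial order, so it is not a column permutation of $A$ either.

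The paper avoids this by arranging the signs the other way and then scaling non-monotonically. It translates along $v_j-v_i$ where $(i,j)$ is the pair swapped between $\pi_{2n}$ and $\pi_1$ (equivalently, in your setup, take $v_i'=v_i-C\hat e_1$), so that columns $1,\dots,n$ become negative and columns $n+1,\dots,2n$ become positive; it then \emph{shrinks} $r_k$ for $k=1,\dots,n$ and \emph{grows} $r_k$ for $k=n+1,\dots,2n$, which puts the ranges in exactly the order $1<2<\dots<2n$ with no relabelling. Your argument is salvaged by the same one-line fix: flip the sign of the translation and replace ``$r_j$ growing geometrically'' with ``$r_1\gg\dots\gg r_n$ and $r_{n+1}\ll\dots\ll r_{2n}$.''
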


\begin{proof}

First, notice that if $A$ has underlying rank two, then it also has monotone rank two. Thus by Observation \ref{obs:allowable}, $\pi_1, \ldots, \pi_{2n}$ is realizable.

Now, suppose $\pi_1, \ldots, \pi_{2n}$ is realizable by the point configuration $v_1, \ldots, v_m$. Let $i, j$ be the pair of indices whose order is switched between $\pi_{2n}$ and $\pi_1$.  Let $ x_{ij}$ be the vector which points from $v_i$ to $v_j$. We produce a new point configuration $v_1, \ldots, v_m$ by letting $v'_k = v_k + C  x_{ij}$ for some constant $C$. We can chose the constant $C$ and sweep vectors $w_1, \ldots, w_m$ such that the matrix $B_{ij} = w_iv_j$ is a rank two realization of $A$.

Notice that for any $\epsilon > 0$, we can choose a value of $C$ such that the angle between the vectors $v_k'$ and $v_j'-v_i' = v_j-v_i$ is less than $\epsilon.$ Thus, we choose $C$ such that the angle between $v_j'-v_i'$ and $v_k'$ is smaller than the angle between $v_j'-v_i'$ and any $v_\ell'-v_p'$.

Thus when we cyclically order the vectors 
$\{v_k'\}_{k \in [m]} \cup \{v_k'-v_\ell'\}_{k, \ell \in [m]} $, all of the vectors $v_k'$ are closer to $v_j-v_i$ then they are to any other $v_k'-v_\ell'$. This means that when we consider the allowable sequence of the larger point arrangement 
$\{v_k'\}_{k\in [m]} \cup \{\mathbf 0\}$, each permutation which has $i$ before $j$ occurs with $\mathbf 0$ at the beginning, and each permutation with $j$ before $i$ occurs with $\mathbf 0$ at the end. 

Thus for each permutation $\pi_k$ in the allowable sequence, we can choose a sweep direction $w_k$ which sweeps past $v_1, \ldots, v_m$ in the order $\pi_k$, with $v_\ell \cdot w_k > 0$ for all permutations with $i$ before $j$ and $v_\ell \cdot w_k < 0$ for all permutations with $j$ before $i$. Now, we rescale each $w_k$ to get a new vector $w_k'$ so as to satisfy the condition that $A_{ij} > A_{k\ell}$ whenever $j > \ell$. For the first $n$ permutations, $i$ comes before $j$, so $v_\ell \cdot w_k' > 0$. As we increase $k$ we can shrink the magnitude of $w_k'$ to satisfy $v_i' \cdot w_j' \leq v_k'w_{k+1}'$.  Once we pass permutation $\pi_{n+1}$, $j$ comes before $i$, and so the inner products satisfy $v_\ell \cdot w_k < 0$. Now, as we increase $k$, we can increase the magnitude of $w_k'$ as we increase $k$ in order to satisfy $v_i' \cdot w_j' \leq v_k'w_{k+1}'$. 

Thus, $v_1, \ldots, v_m, w_1, \ldots, w_{2n}$ is a rank two representation of $A$. 

\end{proof}


\begin{ithm}\label{cor:urank_hard}
Checking whether a matrix has underlying rank two is $\exists \R$-complete and thus NP-hard. 
\end{ithm}

\begin{proof}
We show that there is a reduction from the problem of determining whether an allowable sequence is realizable to the problem of determining whether a matrix has underlying rank two. Let $\pi_1, \ldots, \pi_{2n}$ be an allowable sequence. Then by Lemma \ref{lem:underlying_reduction}, the matrix $A(\pi_1, \ldots, \pi_{2n})$ has underlying rank two if and only if $\pi_1, \ldots, \pi_{2n}$ is realizable. Thus, there is a polynomial time reduction from the problem of determining whether an allowable sequence is representable to the problem of determining whether a matrix has underlying rank two. Thus, checking whether a matrix has underlying rank two is $\exists \R$-complete and thus NP-hard. 
\end{proof}

\section{Conclusion and Open Questions}
In this chapter, we introduced the concept of the underlying rank of a matrix, and introduced a set of techniques for estimating it. Many open questions, of both biological and mathematical interest, remain. We highlight some here.

In Section \ref{sec:radon}, we introduce the Radon rank of a matrix as a lower bound for its underlying rank. However, the Radon rank is not stable under localized noise:  shattering even one set of size $d+1$ is enough to raise the Radon rank to $d$, while only involving $d+1$ rows of the matrix.

In Section \ref{sec:comb_geo}, we use the theory of oriented matroids and allowable sequences to give lower bounds on underlying rank. Our work in this section leaves open interesting questions in oriented matroid theory. In particular, our Theorem \ref{thm:potential} states that if a matrix has underlying rank $d$ if its potential circuits of rank $d$ contain the circuits of an oriented matroid of rank $d+1$. This serves as a ``combinatorial relaxation" of the underlying rank problem. While computing the underlying rank problem is not computationally tractable, this relaxed version may be. 

\begin{question}
Is there a combinatorial characterization of the when a given set of potential circuits contain the circuits of an oriented matroid? 
Is there an efficient algorithm to determine whether this is the case?
\end{question}

These questions suggest building a more general theory of ``partial oriented matroids" which asks when a given set of sign vectors can be extended to be the set of topes or circuits of an oriented matroid. 
Progress in this area would also be helpful towards answering Question \ref{q:axioms}, since knowing an oriented matroid lives above a certain code also provides partial information about the matroid.

\part{Threshold-Linear Networks} 


\chapter{Introduction to Threshold-Linear Networks } \label{chapter:TLNs1}

In this chapter, we introduce threshold-linear networks and summarize past work on them. 
In Chapter \ref{chapter:nullclines}, we prove some basic results about nullclines of threshold-linear networks and how they shape trajectories, saving our main results for Chapter \ref{chapter:TLNs2}. 

\section{Past results}

Mathematical models of neural activity which are able to reproduce the rich variety of behavior observed in neural circuits must be nonlinear.
We study some of the simplest possible nonlinear models for neural activity, threshold-linear networks (TLNs). 
The firing rates $x_1, \ldots, x_n$ of neurons in a TLN are governed by the system of ordinary differential equations
\begin{align} \label{eqn:tln}
\frac{dx_i}{dt} &= -x_i + \left[\sum_{j = 1}^n W_{ij}x_j + b_i\right]_+,
\end{align}
which can be written in vector form as
\begin{align}
\frac{dx}{dt} &= -x + \left[Wx + b\right]_+. 
\end{align}
Here, $[y]_+ := \max\{y, 0\}$,  $W_{ij}$ is the strength of the input to neuron $i$ from neuron $j$, and $ b$ is a constant external drive. 
Variations of this model have been used for decades as models of neural systems, such at the horseshoe crab retina \cite{hartline1957inhibitory} and the mammalian visual cortex \cite{von1973self}. More generally, neurons whose response depends on whether their input exceeds a threshold  go back to the beginnings of theoretical neuroscience: the McCulloch-Pitts neuron is essentially a discrete threshold-linear neuron and the Hopfield network is essentially a discrete threshold-linear network \cite{mcculloch1943logical, hopfield1982neural}. 

While Equation \ref{eqn:tln} is a common model for neural activity, other threshold-linear models for neural activity appear in the literature, such as 
\begin{align} 
\label{eqn:tln_2}
\frac{d v}{dt} &= -v + W[ v]_+ +b
\end{align}
The equivalence of models \ref{eqn:tln} and \ref{eqn:tln_2} is proven in \cite{miller2012mathematical}, generalizing a result from \cite{beer2006parameter}. When $W$ is invertible and $b$ is constant, \cite{beer2006parameter} shows that if $x$ evolves according to \ref{eqn:tln} if and only if $ v = W x + b$ evolves according to \ref{eqn:tln_2}. The more general case, when $W$ may not be invertible and $b$ may vary over time, is tackled in \cite{miller2012mathematical}. 	

Because TLNs are built out of locally linear systems, they are surprisingly tractable mathematically as compared to other high-dimensional nonlinear systems. 
However, while solutions to systems of linear ODEs either converge to a stable fixed point or diverge to infinity, solutions to TLNs can exhibit the full range of nonlinear behavior including multistability, limit cycles, and chaos. 
Initial mathematical work on threshold-linear networks focused on their capacity for multistability \cite{hahnloser1998piecewise,feng1996qualitative}. When the weight matrix $W$ is symmetric, it is in fact the case that multistability is the only feature of nonlinear dynamics that threshold-linear networks exhibit, but this is not true in general \cite{hahnloser2000permitted}. 

The activity of a neural circuit is determined by a variety of factors, such as intrinsic neural dynamics, external input, and the structure of network connectivity itself.  
Our goal is to understand the role of connectivity in shaping neural activity. 
Thus, we study models of neural activity which isolate the role of connectivity. 
Combinatorial threshold-linear networks (CTLNs) are a special class of TLNs where $W$ is determined by a directed graph $G$ together with parameters $\varepsilon$ and $\delta$ satisfying $\delta > 0$, $0< \varepsilon< \frac{\delta}{\delta + 1}$, together with a constant external drive $\theta  >0   $ which is the same for all neurons.   
When $i\not \to j$, neuron $i$ strongly inhibits neuron $j$. When $i\to j$, neuron $i$ weakly inhibits neuron $j$. Values of $W_{ij}$ are given by the rule:

\vspace{-.15in}
\begin{align*}
W_{ij}(G, \varepsilon, \delta) = \begin{cases}
\;\;\; 0 &\mbox{ if }\,\, i = j\\
-1 + \varepsilon&\mbox{ if } j \to i \mbox{ in }\,\, G\\
-1 - \delta &\mbox{ if } j \not \to i \mbox{ in }\,\, G.
\end{cases} 
\end{align*}
We interpret this as a network of $n$ excitatory neurons, against a background of instantaneous non-specific inhibition.  
Notice that the network is truly all-to-all connected, even when the graph $G$ is missing many edges. 
In particular, even networks which look feedforward are recurrent. 
CTLNs can have a wide range of dynamics, ranging from multistability to limit cycles and chaos.

CTLNs fall into a more general category of competitive TLNs, networks whose weights are negative. 
Given a competitive TLN with weight matrix $W$ and constant external drive $\theta$, we can define a graph $G_W$ by 
\[i\to j \mbox{ in } G_W \iff 	W_{ij}  > -1.\] 
Note that if $W$ is the weight matrix of a CTLN with graph $G$, $G_W = G$. We view the class of TLNs with graph $G$ as relaxations of the CTLN on $G$.

Even though TLNs are nonlinear, they are locally or ``patchwork" linear, as illustrated in Figure \ref{fig:tln_eqn}.  That is, we can divide up the positive orthant into at most $2^n$ chambers based on whether $\sum_{j = 1}^n W_{ij}x_j + \theta \geq 0	$ or  $\sum_{j = 1}^n W_{ij}x_j + \theta <0 $ for each $i$. We label each chamber by the set $\sigma = \{i \mid \sum_{j = 1}^n W_{ij}x_j + \theta \geq 0\}$. Within chamber $\sigma$, the dynamics are governed by the purely linear dynamical system 
\begin{align*}
\frac{dx_i}{dt} =\begin{cases} -x_i + W_{ij}x_j + \theta \qquad &i\in \theta\\
 -x_i \qquad &i \notin\theta.
 \end{cases}
\end{align*}
Define $A^\sigma$ to be the matrix of this linear system. That is, if $i\in \sigma$, the $i^{th}$ row of $A^\sigma$ is equal to the $i^{th}$ row of $-I + W$, and if $i\notin \sigma$, the $i^{th}$ row of $A_\sigma$ is equal to the $i^{th}$ row of $-I$. 

\begin{figure}
\begin{center}
\includegraphics[width = 4 in]{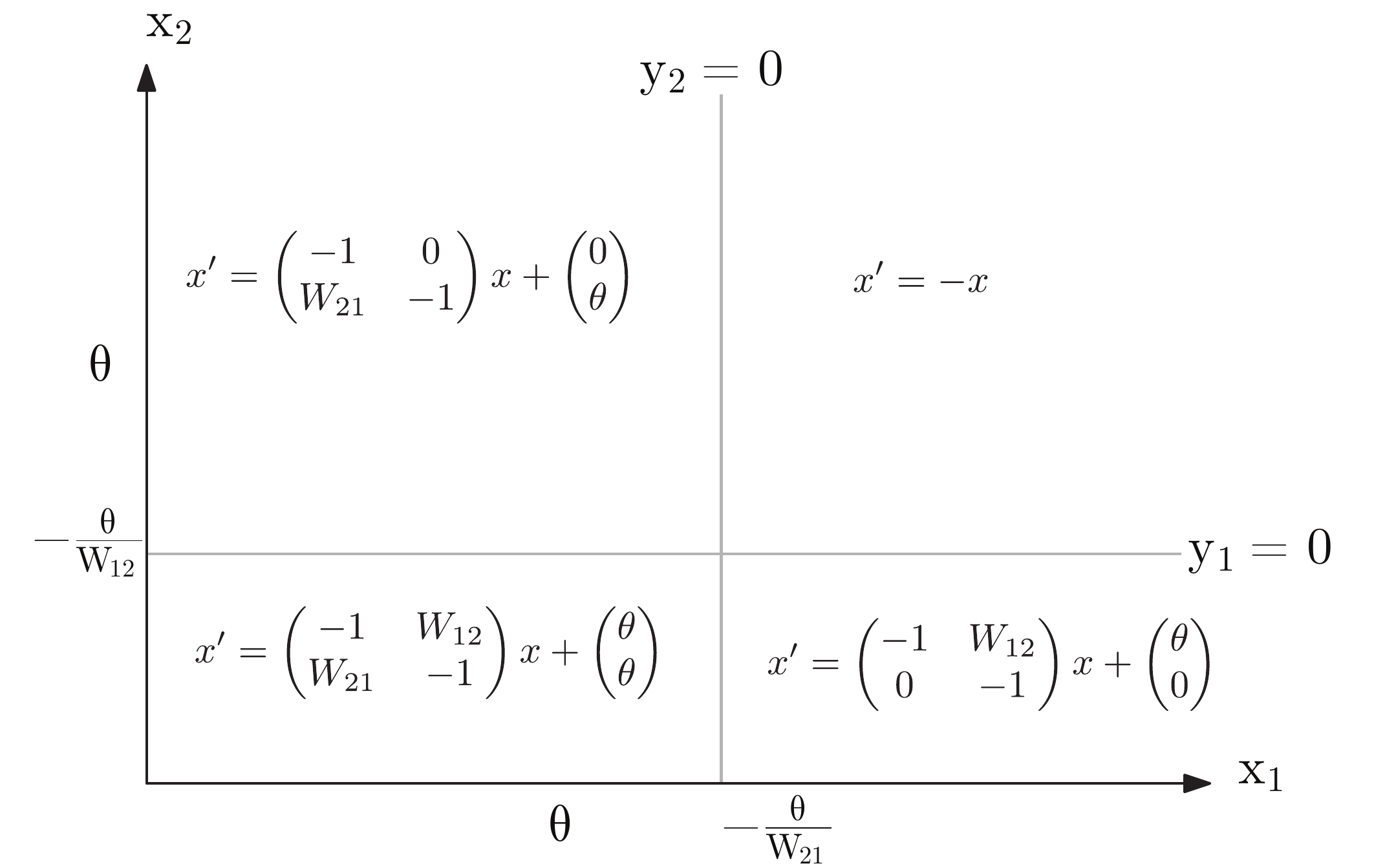}
\end{center}
\caption[TLNs are patchwork-linear networks]{TLNs are patchwork-linear networks, with a different linear system being active in each polyhedral chamber. \label{fig:tln_eqn}}
\end{figure}

The fixed points of a dynamical system are the points $x^*\in \R^d$ such that 
$\frac{dx_i}{dt}\Big|_{x^*} = \,0$ for all $i\in [n]$. 
Because of patchwork linearity, each fixed point of a TLN must be a fixed point of the relevant linear system. 
However, a fixed point of a linear system may fail to be a fixed point of the TLN if that fixed point falls outside of the appropriate linear chamber. 
The \emph{support} of a fixed point is its set of active neurons: $x^*$ is $\mathrm{supp}(x^*) := \{i \in [n]\mid x_i^* > 0\}$. 
There is at most one fixed point per support.  
The set of fixed point supports for the TLN defined by $W, b$ is denoted $\FP(W, b)$. 
It is of course possible to determine compute the fixed points supports of a TLN by first solving for the fixed point for each of the $2^n$ linear systems, and then checking whether it falls in the appropriate chamber. 
However, we can gain more insight by proving theorems which link the structure of the weight matrix $W$ to the properties of  $\FP(W, b)$. 
These results describe the relationship between a network's connectivity and its activity. 

A fixed point is \emph{stable} if all trajectories which start near it stay near it. 
Early mathematical work on TLNs focused on characerizing the \emph{permitted sets} of TLNs \cite{hahnloser2000permitted, curto2016pattern}. 
For a fixed weight matrix $W$, these are the sets $\sigma\subseteq[n]$ such that for \emph{some} input $b$, $\sigma$ is the support of a stable fixed point in $\FP(W, b)$. 
In particular, the permitted sets of a symmetric, competitive TLN form a simplicial complex: a subset of a permitted set must be permitted \cite{hahnloser2000permitted,  curto2016pattern}.  
Later work characterizes the full set of fixed points which are present for a particular input \cite{curto2016pattern}. 
In particular, for symmetric $W$, the \emph{stable} fixed points must form an antichain. 
This implies that a symmetric threshold-linear network has at most $ \binom n {\lfloor n/2 \rfloor}$ stable fixed points. 
A set $\sigma$ is the support of a stable fixed point of a symmetric CTLN with graph $G$ if and only if $\sigma$ is a maximal clique in $G$ \cite{curto2016pattern}.  
In Figure \ref{fig:sym}, we give an example of a symmetric graph with four stable fixed points, each of which corresponds to a maximal clique.

\begin{figure}[ht!]
\begin{center}
\includegraphics[width = 6 in]{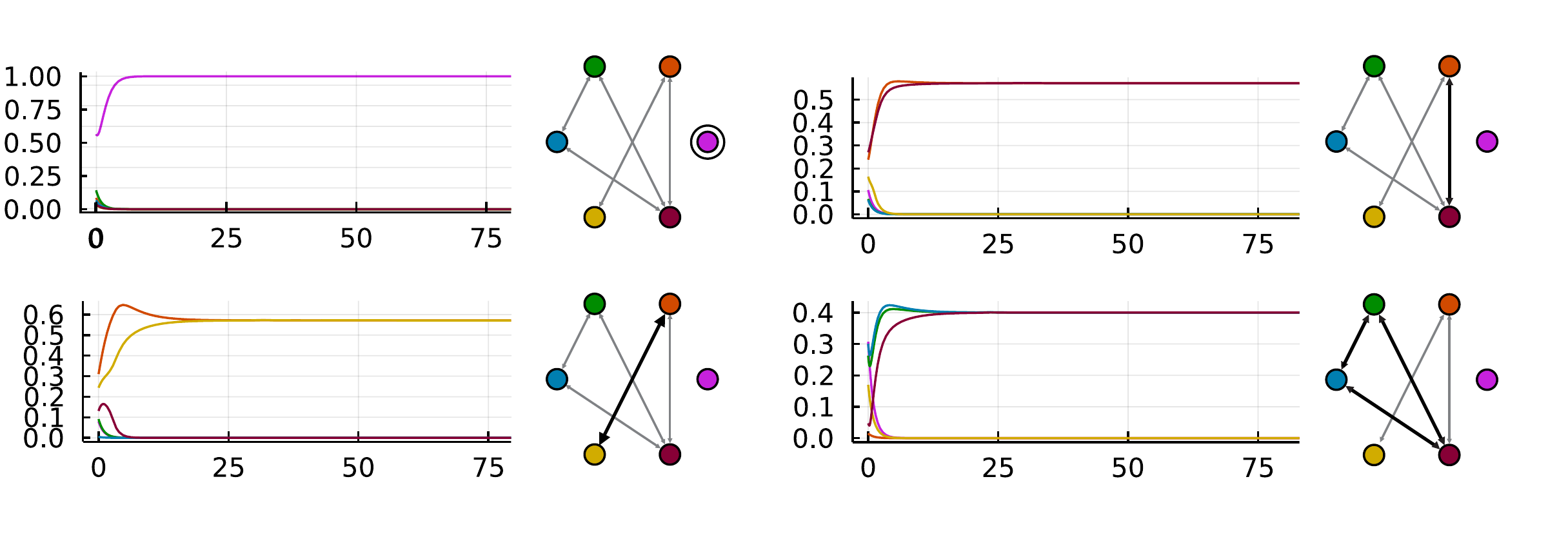}
\end{center}
\caption[Stable fixed points of a CTLN.]{The CTLN of a symmetric graph has a stable fixes point corresponding to each maximal clique. \label{fig:sym}} 
\end{figure}

Thus, by a bound in \cite{moon1965cliques}, a symmetric CTLN has at most $\mathrm O(3^{n/3})$ stable fixed points. 
Note that for large $n$, the gap between the upper and lower bounds becomes quite large. 
There are no known examples of symmetric, competitive TLNs which exceed the bound for CTLNs. 

Unstable fixed points of TLNs also play a role in shaping the dynamics of a TLN.   
Thus, a body of work has gone into characterizing $\FP(W, b)$ for competitive TLNs in general and CTLNs in particular \cite{curto2019fixed, curto2020combinatorial, curto2019robust}. 
For CTLNs, these results come in the form of \emph{graph rules}, which give constraints on $\FP(W(G, \varepsilon, \delta), \theta)$ based on properties of $G$ \cite{curto2019fixed}. 
In general, the parameters $\varepsilon$ and $\delta$ may affect  $\FP(W(G, \varepsilon, \delta), \theta)$, however, $G$ still provides strong constraints.


For general competitive TLNs, the relationship between $G_W$ and $\FP(W, \theta)$ is much looser. 
A graph $G$ is an \emph{robust motif} if for all $W, W'$ with $G_W = G_W' = G$, $\FP(W, \theta)$ is the same. With a finite family of exceptions, robust motifs fall into two very constrained infinite families, DAG1 and DAG2 \cite{curto2019robust}. A complete characterization of $\FP(W, \theta)$ for competitive TLNs on 3 neurons is given using oriented matroid theory in \cite{curto2020combinatorial}. 

\section{Dynamic attractors of TLNs}

Stable fixed points are a particular kind of attractor. Our goal in this chapter and the next two is to characterize the attractors of CTLNs and competitive TLNs more broadly. 
\begin{defn}
A set $A$ is an \emph{attracting set} of a dynamical system if 
\begin{enumerate}
\item 
it is invariant: if $x (0) \in A$, then $x(t) \in A$ for all $t \geq 0$. 
\item 
it has an open basin of attraction: there is an open set $U$ containing $A$ such that if $x (0)\in U$, then the distance from $x(t)$ to $A$ approaches zero as $t$ approaches infinity. 
\end{enumerate}
An \emph{attractor} is a \emph{minimal} attracting set. That is, $A$ is an attractor if it is an attracting set and there is no $B\subsetneq A$ that is also an attracting set. 
\end{defn}
There are many attractors beyond stable fixed points, termed \emph{dynamic attractors}. These include limit cycles and strange attractors.   
By the main result of \cite{hahnloser2000permitted}, if the weight matrix $W$ is symmetric and the matrix $(I-W)$ is copositive, then the TLN defined by $W$ has a set of stable fixed points, and all trajectories of the network must approach one of these states. In particular, competitive TLNs satisfy this copositivity condition. Thus all trajectories of symmetric competitive TLNs (such as CTLNs defined from undirected graphs) must approach stable fixed points. Thus, these networks do not have dynamic attractors. 

On the opposite side of the spectrum, \cite{morrison2016diversity} proves that if $G$ is an oriented graph with no sinks, the CTLN defined by $G$ has no stable fixed points. Because trajectories of these networks cannot approach stable fixed points and cannot diverge to infinity, these networks must have dynamic attractors.  In one case, the directed three-cycle, this dynamic attractor has been explicitly characterized as a stable limit cycle \cite{bel2021periodic}. Informal results connect the unstable fixed points of CTLNs to their dynamic attractors \cite{parmelee2022core, parmelee2021sequential}. Examples of dynamic attractors appear in Figures \ref{fig:two_cycles}, \ref{fig:two_cycles_same}, \ref{fig:baby_chaos}, and  \ref{fig:chaos_20}.  It is often difficult to predict how the activity of a CTLN will evolve over time: see Figure \ref{fig:transient} for examples of networks with early chaotic activity, which then fall into fixed points or limit cycles. 

\begin{figure}[ht!]
\includegraphics[width = 6 in]{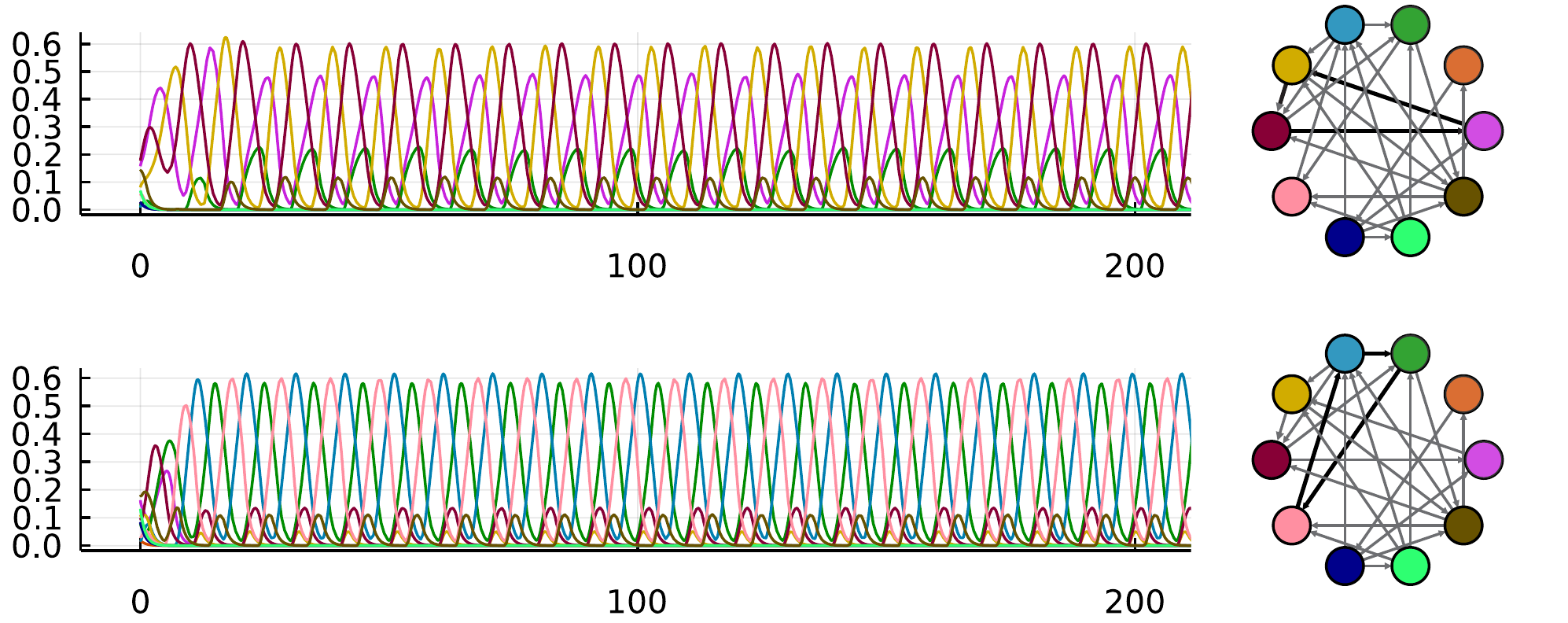}
\caption[Limit cycles of CTLNs]{A CTLN with at least two limit cycles, each corresponding to a different 3-cycle in the graph, which is shown with darker edges. The graph $G$ of this  $CTLN$  is oriented and has no sinks. Within this constraint, the edges were chosen randomly. For each $i, j$, the probability that there is either an edge from $i$ to $j$ or $j$ to $i$ is $0.5$. The  adjacency matrix of $G$ is
$ A = 
[0 0 0 0 0 1 0 1 0 0\mathrel{;} 0 0 0 0 0 0 0 0 0 1\mathrel{;} 0 0 0 1 0 1 0 0 1 0\mathrel{;} 0 0 0 0 0 0 1 1 1 1\mathrel{;} 1 0 0 1 0 0 0 0 1 0\mathrel{;} 0 0 0 1 1 0 0 0 0 1\mathrel{;} 0 0 1 0 0 0 0 0 1 1\mathrel{;} 0 1 0 0 0 0 0 0 0 0\mathrel{;} 0 0 0 0 0 0 0 1 0 0\mathrel{;} 0 0 1 0 1 0 0 1 0 0]
$. 
\label{fig:two_cycles}} 
\end{figure}

\begin{figure}[ht!]
\includegraphics[width = 6 in]{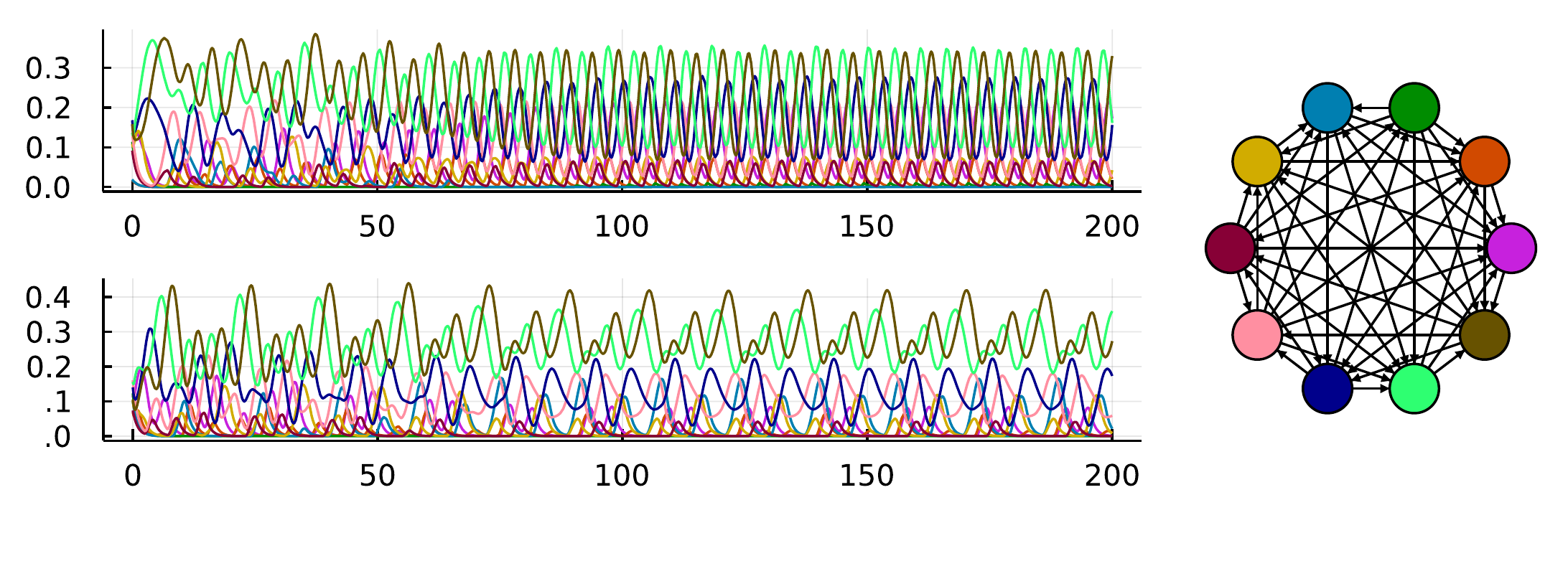}
\caption[Two limit cycles of CTLNs supported on the same neurons]{A CTLN with at least two limit cycles, each with the same set of active neurons, but with qualitatively different dynamics. 
The graph $G$ of this  $CTLN$  is oriented and has no sinks. Within this constraint, the edges were chosen randomly. 
For each $i, j$, there is either an edge from $i$ to $j$ or $j$. 
The  adjacency matrix of $G$ is 
$A = [0 1 1 1 0 1 1 0 1 0\mathrel{;} 0 0 1 1 1 0 1 0 1 0\mathrel{;} 0 0 0 0 0 1 1 1 0 0\mathrel{;} 0 0 1 0 1 1 1 0 0 1\mathrel{;} 1 0 1 0 0 1 1 0 0 1\mathrel{;} 0 1 0 0 0 0 0 1 0 1\mathrel{;} 0 0 0 0 0 1 0 1 0 1\mathrel{;} 1 1 0 1 1 0 0 0 0 1\mathrel{;} 0 0 1 1 1 1 1 1 0 0\mathrel{;} 1 1 1 0 0 0 0 0 1 0]$

\label{fig:two_cycles_same}}
\end{figure}

\begin{figure}[ht!]
\includegraphics[width = 6 in]{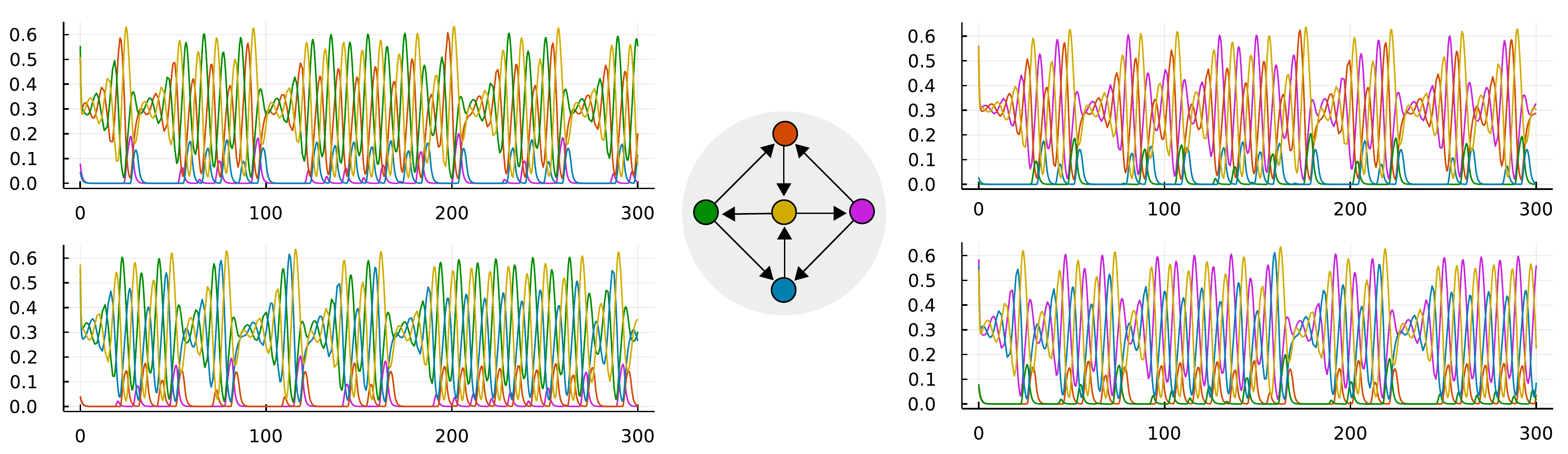}
\caption[Chaos in a small CTLN]{A five neuron CTLN with four chaotic attractors \label{fig:baby_chaos}}
\end{figure}

\begin{figure}[ht!]
\includegraphics[width = 6 in]{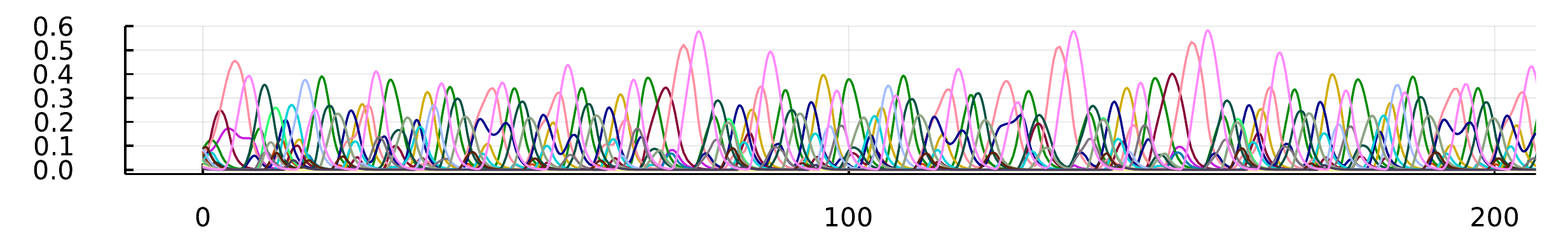}
\caption[Chaos in a larger CTLN]{Larger graphs can have more complicated chaotic dynamics, as pictured in this 20 neuron example.  The graph $G$ of this  $CTLN$  is oriented and has no sinks. Within this constraint, the edges were chosen randomly. For each $i \neq j$, the probability that there is either an edge from $i$ to $j$ or $j$ to $i$ is $0.5$.  \label{fig:chaos_20}}
\end{figure}

\begin{figure}[ht!]
\includegraphics[width = 6 in]{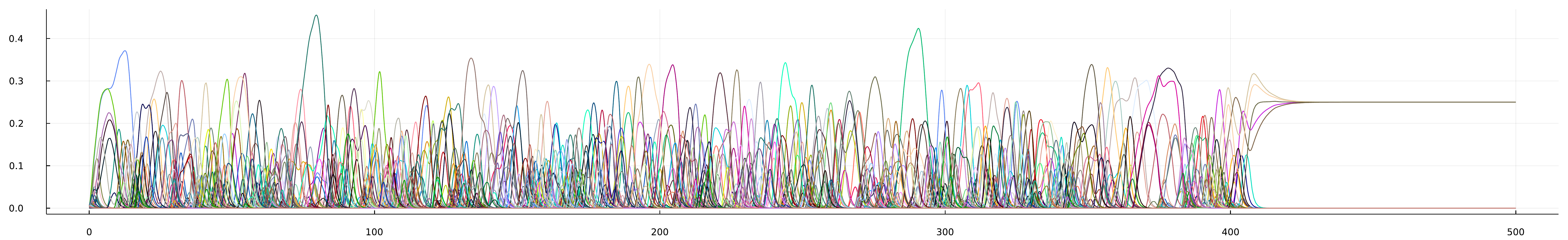}
\includegraphics[width = 6 in]{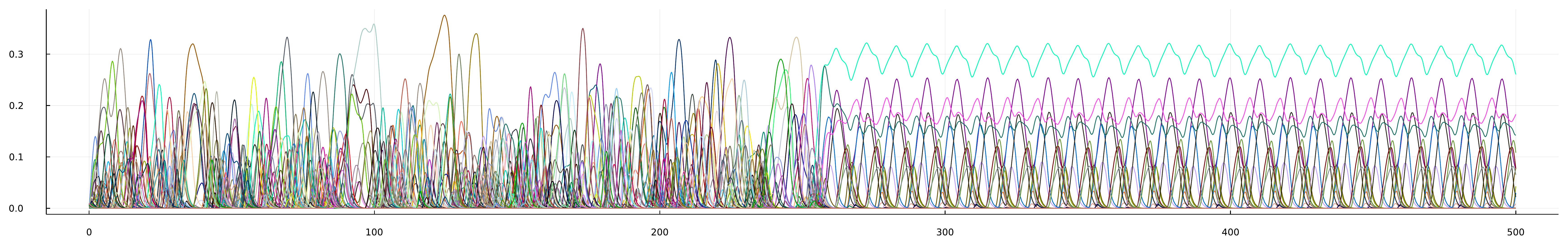}

\caption[Long transients in CTLNs]{It is often hard to predict how a network's activity will involve from the initial conditions. A network may have chaotic dynamics for a long time, before falling into a fixed point or a limit cycle, as illustrated by these two hundred-neuron examples. Illustrated here are two trajectories of the same CTLN. The graph is a random (Erd\H{o}s - Renyi) directed  graph with edge probability 0.5.   \label{fig:transient}}
\end{figure}



\chapter{Nullcline Arrangements of Threshold-Linear Networks}
\label{chapter:nullclines}
\section{Introduction}\label{sec:nullclines}

In this chapter, we look at constraints on dynamic attractors of competitive TLNs provided by the arrangement of nullclines. In Section \ref{sec:nullclines}, we describe the nullclines of a TLN in terms of hyperplane arrangements. Next, Section \ref{sec:nullchamber_dynamics}, we describe how the chambers of the nullcline arrangement constrain the flow of trajectories of the system.  In particular, in Section \ref{sec:mixed_sign}, we prove our first constraint on trajectories of competitive TLNs in terms of the nullcline arrangement:
We prove that all trajectories of competitive TLNs approach and become trapped in a small region $\cA$ of phase space, corresponding to the set where at least one derivative is positive and at least one derivative is negative. This set is cut out by the hyperplanes defining the nullclines. 
\begin{ithm}
If $W,\theta$ defines a competitive TLN, $\cA$ is an attracting set, all trajectories approach it. 
\end{ithm}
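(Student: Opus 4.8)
The plan is to show that every trajectory of a competitive TLN eventually enters and remains in the set $\cA$ consisting of those points where $dx_i/dt > 0$ for at least one $i$ and $dx_j/dt < 0$ for at least one $j$ — equivalently, the complement of the two regions where all derivatives are simultaneously nonnegative or all nonpositive. The first step is to handle these two "bad" regions separately. Call $\cR_+ = \{x \geq 0 : dx_i/dt \geq 0 \text{ for all } i\}$ and $\cR_- = \{x \geq 0 : dx_i/dt \leq 0 \text{ for all } i\}$. On $\cR_-$ all coordinates are nonincreasing, so along any trajectory segment inside $\cR_-$ the quantity $\sum_i x_i$ is nonincreasing and bounded below by $0$; combined with the fact that $\cR_-$ is cut out by the nullcline hyperplanes, one argues the trajectory cannot stay in $\cR_-$ forever unless it converges to a fixed point on the boundary, which lies in $\overline{\cA}$. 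The genuinely important region is $\cR_+$: here I would use competitivity ($W_{ij} \leq 0$, or more precisely $W$ having nonpositive off-diagonal entries after accounting for the $-x_i$ leak term) to show $\cR_+$ is not forward-invariant and in fact trajectories leave it in finite time.

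The key estimate on $\cR_+$: when all $dx_i/dt \geq 0$, we have $x_i \leq [\,\sum_j W_{ij} x_j + \theta\,]_+ \leq \sum_j W_{ij}x_j + \theta$ for each active coordinate, and since $W$ is competitive the right-hand side is bounded above by $\theta$ minus a nonnegative combination of the $x_j$'s. Summing (or taking an appropriate positive linear functional $\ell(x) = \sum_i c_i x_i$ adapted to $W$), one gets that on $\cR_+$ the function $\ell(x)$ satisfies $\ell(x) \leq C$ for an explicit constant $C$ depending only on $W$ and $\theta$ — this is exactly the content of the forthcoming Corollary on total population activity. Moreover $d\ell(x)/dt = -\ell(x) + \ell([Wx+b]_+) > 0$ forces $\ell$ strictly increasing while in $\cR_+$, but $\ell$ is bounded on $\cR_+$, so a trajectory can remain in $\cR_+$ only for a bounded time before $\dot\ell$ must change sign, which can only happen upon exiting $\cR_+$. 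The boundary between $\cR_+$ and $\cA$ is where some $dx_i/dt$ drops to $0$, i.e. a nullcline hyperplane, so exit occurs through $\partial\cA$.

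Next I would assemble these into the two required properties of an attracting set. For invariance of $\cA$ (or of a slightly enlarged compact forward-invariant neighborhood $\cA' \supseteq \cA$): check the flow on $\partial \cA'$ points inward. The boundary of $\cA'$ consists of pieces of nullcline hyperplanes together with coordinate hyperplanes $x_i = 0$; on the coordinate hyperplanes the flow is inward because $dx_i/dt = [\,\cdots\,]_+ \geq 0$ when $x_i = 0$ (standard for TLNs, giving invariance of the positive orthant), and on the nullcline pieces one uses the derivative-sign analysis above to see trajectories cross from $\cR_\pm$ into $\cA$ and not back. For the open basin of attraction being all of $\R^n_{\geq 0}$: any trajectory starting outside $\cA$ starts in $\cR_+$ or $\cR_-$, and the finite-exit-time arguments show it reaches $\overline{\cA}$ in finite time and then cannot leave $\cA'$; boundedness of all trajectories (again from the population-activity bound, which shows $\sum_i x_i$ is ultimately bounded regardless of region, since $d(\sum x_i)/dt = -\sum x_i + \sum[Wx+b]_+ \leq -\sum x_i + (\text{const})$) guarantees everything stays in a compact set.

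The main obstacle I anticipate is the precise treatment of $\cR_+$ near fixed points and near the boundary faces: a trajectory could graze $\cR_+$, exit, re-enter, and one must rule out an infinite oscillation that never settles into $\cA$. The clean way around this is to find a single global Lyapunov-type functional — most naturally $V(x) = \max_i\left(x_i - \ell_i(x)\right)$ or the population sum $\sum_i x_i$ — that is ultimately bounded and whose sublevel set $\{V \leq C\}$ contains $\cA$ and is forward-invariant; then $\cA'=\{V\le C\}\cap\R^n_{\ge0}$ is the attracting set and the theorem follows without delicate case analysis on re-entries. Getting the constant $C$ exactly right, and verifying $\{V \leq C\}$ is forward-invariant using competitivity, is where the real work lies, but it is essentially the same computation as the Corollary bounding total population activity that the paper states follows from this theorem, so the two should be proved together.
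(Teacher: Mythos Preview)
Your approach has a genuine gap, and it is precisely the one you flag yourself: the re-entry problem. Your argument that a trajectory in $\cR_+$ must exit is incomplete (from ``$\ell$ increasing and bounded above on $\cR_+$'' you get that $\dot\ell\to 0$ along a trajectory that stays in $\cR_+$, which gives approach to $\partial\cA$, not finite-time exit), and your proposed Lyapunov fixes are too vague. The population sum $\sum_i x_i$ is not monotone across $\cA$, so its sublevel sets are not forward-invariant in the way you need; the alternative $V(x)=\max_i(x_i-\ell_i(x))$ is undefined as stated.

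The paper's argument avoids all of this with one geometric observation you are missing. For a competitive TLN, every entry of the normal vector $h_i$ to the nullcline hyperplane $H_i$ is strictly negative (the diagonal entry is $-1$ and the off-diagonal entries are $W_{ij}<0$). Therefore on the boundary between $\cA^+$ and $\cA$, where $\dot x\ge 0$ componentwise, one has $\dot x\cdot h_i\le 0$ for every $i$, with equality only at fixed points; the flow points from $\cA^+$ into $\cA$. The symmetric computation on $\partial\cA^-$ gives $\dot x\cdot h_i\ge 0$. This single sign check simultaneously proves that trajectories in $\cA^\pm$ cross into $\cA$ \emph{and} that $\cA$ is forward-invariant, so the re-entry issue never arises. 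No Lyapunov functional or population-activity bound is needed for the theorem itself; the population bound is derived afterward as a corollary, by observing that $\cA$ lies between two explicit hyperplanes.
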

This allows us to bound the total population activity of CTLNs: 
\begin{cor}
In CTLNs, \[\lim\inf_{t\to\infty} \sum_{i = 1}^n x_i \geq \frac{\theta}{1 + \delta}\] and  \[\lim\sup_{t\to\infty} \sum_{i = 1}^n x_i \leq \frac{\theta}{1 - \varepsilon}.\] 
\end{cor}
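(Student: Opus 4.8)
The plan is to turn Theorem~\ref{thm:mixed_sign} into the population bound by first establishing the two inequalities as \emph{pointwise} facts about points of $\cA$ (hence of its closure $\overline{\cA}$), and then transferring them to the trajectory as $t\to\infty$ using that every trajectory approaches $\cA$.

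First I would unpack what membership in $\cA$ says about $\sum_i x_i$. Fix $x\in\cA$. By definition some coordinate has negative derivative, so $\dot x_i<0$ for some $i$; this forces $x_i>\big[\sum_j W_{ij}x_j+\theta\big]_+\ge\sum_j W_{ij}x_j+\theta$. Using $W_{ii}=0$, the bound $W_{ij}\ge -1-\delta$, and $x_j\ge 0$, we get $x_i>-(1+\delta)\sum_{j\neq i}x_j+\theta$; adding $(1+\delta)\sum_{j\neq i}x_j$ to both sides and using $x_i\le(1+\delta)x_i$ (here nonnegativity of $x_i$ and $\delta>0$ are used) yields $(1+\delta)\sum_j x_j>\theta$, i.e.\ $\sum_j x_j>\frac{\theta}{1+\delta}$. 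Symmetrically, some coordinate has positive derivative, $\dot x_k>0$; since $x_k\ge 0$ this forces $\big[\sum_j W_{kj}x_j+\theta\big]_+>0$, so the rectification is inactive and $\sum_j W_{kj}x_j+\theta>x_k$. Now $W_{kk}=0$ and $W_{kj}\le -1+\varepsilon$ give $\theta>x_k+(1-\varepsilon)\sum_{j\neq k}x_j\ge(1-\varepsilon)\sum_j x_j$, using $x_k\ge(1-\varepsilon)x_k$ since $\varepsilon<1$. Hence $\sum_j x_j<\frac{\theta}{1-\varepsilon}$. By continuity, $\frac{\theta}{1+\delta}\le\sum_j x_j\le\frac{\theta}{1-\varepsilon}$ on all of $\overline{\cA}$.

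To finish, I would invoke Theorem~\ref{thm:mixed_sign}: any trajectory $x(t)$ has $\mathrm{dist}(x(t),\cA)\to 0$, and the map $x\mapsto\sum_i x_i$ is $1$-Lipschitz in the $\ell^1$ norm, hence Lipschitz in any norm. So choosing $y(t)\in\overline{\cA}$ with $\|x(t)-y(t)\|\to 0$, we have $\sum_i x_i(t)-\sum_i y_i(t)\to 0$ while $\frac{\theta}{1+\delta}\le\sum_i y_i(t)\le\frac{\theta}{1-\varepsilon}$; taking $\liminf$ and $\limsup$ gives the claim. The argument is otherwise elementary: its only content beyond Theorem~\ref{thm:mixed_sign} is careful bookkeeping with the CTLN weight bounds $-1-\delta\le W_{ij}\le -1+\varepsilon$ for $i\neq j$ and with the two ways the rectification $[\,\cdot\,]_+$ enters (harmlessly droppable when $\dot x_i<0$, automatically inactive when $\dot x_k>0$). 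The one spot that genuinely matters — and the only place I expect to have to be careful — is the reduction to $\sum_j x_j$ via $x_i\le(1+\delta)x_i$ and $(1-\varepsilon)x_k\le x_k$, which is precisely where nonnegativity of the firing rates is essential; without it the bounds would not hold.
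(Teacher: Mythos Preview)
Your proof is correct but follows a genuinely different route from the paper's. The paper's Corollary~\ref{cor:total_pop} actually proves a stronger statement (a finite time $T$ after which the strict inequalities $\tfrac{\theta}{1+\delta}<\sum_i x_i<\tfrac{\theta}{1-\varepsilon}$ hold) via a geometric argument: it invokes Proposition~\ref{prop:cvx_hull} to write each nullcline $H_i\cap\R^n_{\ge 0}$ as the convex hull of the axis points $b_{ij}$, observes these all lie between the parallel hyperplanes $\sum_i x_i=\theta/(1+\delta)$ and $\sum_i x_i=\theta/(1-\varepsilon)$, hence so does $\cA$, and then rules out fixed points on those bounding hyperplanes to force trajectories strictly inside in finite time. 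Your approach bypasses the convex-hull description entirely: you extract the sum bound pointwise from the sign of a single derivative by direct inequality manipulation with the CTLN weight bounds $-1-\delta\le W_{ij}\le -1+\varepsilon$. This is shorter and more elementary for the $\liminf/\limsup$ version asked, though it does not immediately yield the paper's finite-time strict-entry conclusion. One small wording issue: the paper defines $\cA$ as already the \emph{closure} of the mixed-sign set, so your sentence ``Fix $x\in\cA$. By definition some coordinate has negative derivative'' is technically about the pre-closure set; your continuity step then does exactly the work you intend, so the argument stands.
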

This supplements a previous bound on activity: if $x(0) \in \prod_{i = 1}^n[0, \theta]$, then $x(t) \in \prod_{i = 1}^n[0, \theta]$ for all $t >  0$ \cite{curto2019fixed}. Notice that neither bound implies the other. 

Next, in Section \ref{sec:nullcline_dynamics}, we explore in more generality how the nullcline chambers constrain trajectories. In Section \ref{sec:class2}, we use the nullcline arrangement to show that all trajectories of two-neuron competitive TLNs converge to a stable fixed point. On the other hand, we include an example from \cite{tang2005analysis} of a two-neuron non-competitive TLN which has a limit cycle. 

We begin by describing nullclines of threshold-linear networks in terms of hyperplane arrangements. 
A deeper account of the nullcline arrangement of a threshold-linear network, from the perspective of oriented matroid theory, can be found in \cite{curto2020combinatorial}.

The $i^{th}$ nullcline $\cN_i$ of a dynamical system is the set of all $x \in \R^n$ such that 
$\xdot_i = 0.$
Nullclines of linear systems are hyperplanes.
Because TLNs are threshold-linear, their nullclines are unions of pieces of hyperplanes.

\begin{defn}
Given a weight matrix $W\in \R^{n\times n}$, $\theta \in \R$, we define affine-linear functionals
\begin{align*}
e_i^*(x) &:= -x_i\\
h_i^*(x) &:= -x_i + \sum_{j=1}^n W_{ij}x_j + \theta.
\end{align*}
These define hyperplanes $E_1, \ldots, E_n, H_1, \ldots, H_n $. The hyperplane arrangement of the TLN defined by $W$ and $\theta$ is the set $$\cA(W, \theta) = \{E_1, \ldots, E_n,H_1, \ldots, H_n\},$$ with the orientation (i.e. choice of positive and negative half-spaces) of each hyperplane given by $e_i^*$, $h_i^*$.   
\end{defn}

We describe the nullclines in terms of the hyperplanes $E_1, \ldots, E_n, H_1, \ldots, H_n$. 

\begin{prop}
The $i^{th}$ nullcline of a TLN is 
$$	$$
\end{prop}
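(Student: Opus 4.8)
The statement to be proved expresses the $i$th nullcline $\cN_i = \{x : \dot x_i = 0\}$ as a union of polyhedral pieces of the hyperplanes $H_i$ and $E_i$. Writing $y_i(x) := \sum_{j=1}^n W_{ij}x_j + \theta$ for the total (un-rectified) input to neuron $i$, so that $\dot x_i = -x_i + [\,y_i(x)\,]_+$, the claim is
\[
\cN_i \;=\; \bigl(H_i \cap \overline{E_i^-}\bigr)\;\cup\;\bigl(E_i \cap \overline{H_i^-}\bigr),
\]
which, when phase space is restricted to the positive orthant $\R^n_{\ge 0}$ (where $\overline{E_i^-} = \{x_i \ge 0\}$ is everything), simplifies to $\cN_i = H_i \cup \bigl(E_i \cap \overline{H_i^-}\bigr)$.

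The plan is a direct computation from the defining ODE, splitting $\R^n$ along the fold $\{y_i(x) = 0\}$ of the rectifier. First I would record the elementary identities $e_i^*(x) = -x_i$, $h_i^*(x) = -x_i + y_i(x)$, hence $h_i^* = e_i^* + y_i$. On the closed region $\{x : y_i(x) \ge 0\}$ the rectifier is inactive, so $\dot x_i = -x_i + y_i(x) = h_i^*(x)$, and $\dot x_i = 0$ there exactly on $H_i$; moreover on $H_i$ one has $x_i = y_i(x)$, so $H_i \cap \{y_i \ge 0\} = H_i \cap \{x_i \ge 0\} = H_i \cap \overline{E_i^-}$. On the complementary closed region $\{x : y_i(x) \le 0\}$ the rectifier vanishes, so $\dot x_i = e_i^*(x) = -x_i$, and $\dot x_i = 0$ there exactly on $E_i$; on $E_i$ the identity $h_i^* = e_i^* + y_i$ gives $h_i^*(x) = y_i(x) \le 0$, so $E_i \cap \{y_i \le 0\} = E_i \cap \overline{H_i^-}$. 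Since the two regions cover $\R^n$, taking the union of the two contributions yields the displayed formula, and the orthant version follows by intersecting with $\R^n_{\ge 0}$.

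There is no genuinely hard step here; the one point that deserves care rather than being treated as a formality is the overlap $\{y_i(x) = 0\}$, where $[\,\cdot\,]_+$ is not differentiable and both regime formulas must be checked to agree. On that set $\dot x_i = -x_i$ regardless of which formula one uses, so $\dot x_i = 0 \iff x_i = 0$; and $\{y_i = 0\} \cap \{x_i = 0\}$ lies in $H_i$ (since $h_i^* = -x_i + y_i = 0$ there) and in $E_i$ simultaneously, so the two pieces of the union are consistently glued and $\cN_i$ is a well-defined closed set. Once this is verified, the description makes transparent the assertion used repeatedly in the rest of the chapter: each nullcline consists of at most two convex polyhedral faces, one carved out of $H_i$ and one out of $E_i$, so the nullcline arrangement is an arrangement assembled from the hyperplanes of $\cA(W,\theta)$.
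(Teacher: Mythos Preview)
Your proof is correct. The paper organizes the same computation slightly differently: rather than splitting along the rectifier fold $\{y_i = 0\}$, it rewrites $\dot x_i = -x_i + [y_i]_+ = \max\{e_i^*(x), h_i^*(x)\}$ and then reads off the sign regions from this $\max$ (negative iff both $e_i^*, h_i^*$ are negative, i.e.\ on $H_i^- \cap E_i^-$; positive iff at least one is positive, i.e.\ on $H_i^+ \cup E_i^+$), obtaining the nullcline as the common boundary by continuity. Your two-regime split is equally direct and has the minor advantage of naturally producing closed half-spaces $\overline{E_i^-}, \overline{H_i^-}$ in the formula and of explicitly checking consistency on the overlap $\{y_i = 0\}$; the paper's stated boundary $(H_i \cap E_i^-) \cup (E_i \cap H_i^-)$ uses open half-spaces and so, read literally, omits the corner $H_i \cap E_i$, which your version correctly includes.
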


\begin{proof}
We show that $x_i' < 0$ if and only if  $\vec{x}\in H_i^-\cap E_i^-$, and $x_i' > 0$ if $x \in H_i^+\cup E_i^+$. If $x\in H_i^-\cap E_i^-$, then  $$\max\left\{-x_i, -x_i + \sum_{j=1}^n W_{ij}x_j + \theta\right\} = -x_i + \left[\sum_{j=1}^n W_{ij}x_j + \theta\right]_+ < 0,$$ so $x_i' < 0$.  Now, if $x \in H_i^+\cup E_i^+$, then 
$$\max\left\{-x_i, -x_i + \sum_{j=1}^n W_{ij}x_j + \theta\right\} = -x_i + \left[\sum_{j=1}^n W_{ij}x_j + \theta\right]_+ > 0,$$
so $x_i' > 0$. 
Now, by continuity, we must have $x_i' = 0$ on the boundary between $H_i^-\cap E_i^-$ and $H_i^+\cup E_i^+$, which is the set $\left(H_i\cap E_i^-\right)\cup \left(E_i \cap H_i^-\right)$. 
\end{proof}

\begin{ex}

Let $G$ be a the graph consisting of two vertices and a single directed edge from vertex 2 to vertex 1. Then the TLN arising from $G$ has the weight matrix 
$$W =  
\begin{pmatrix}
0& -1+ \varepsilon \\
-1-\delta & 0
\end{pmatrix}.$$

\label{ex:edge_null}
\begin{figure}[ht!]
\begin{center}
\includegraphics[width = 4 in]{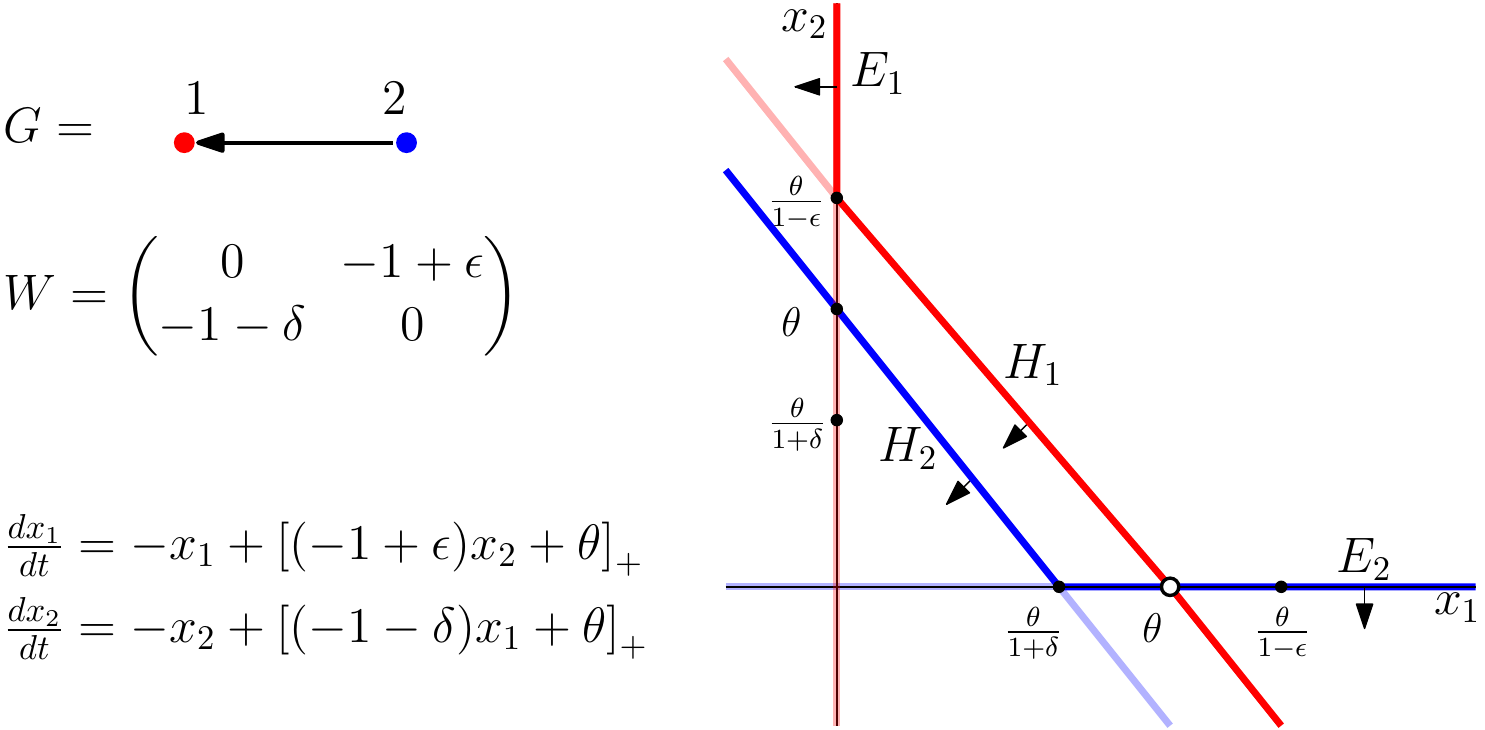}
\end{center}
\caption[Nullclines for the CTLN of a single directed edge]{Nullclines for the CTLN of a single directed edge. The $x_1$ nullclines, $E_1$ and $H_1$, are in red, $E_2$ and $H_2$ are in blue. Black arrows point towards the positive side of each hyperplane. \label{fig:nullcline_edge}}
\end{figure}

The hyperplanes $E_1, E_2, H_1, H_2$, shown in the Figure \ref{fig:nullcline_edge}, are given by the equations 
\begin{align*}
E_1 &= \{x | -x_1 = 0\}\\
E_2 &= \{x | -x_2 = 0\}\\
H_1 &= \{x | -x_1 + (-1+\varepsilon)x_2 + \theta = 0\}\\
H_2 &= \{x | -x_2 + (-1-\delta)x_2 + \theta = 0\}.
\end{align*}
The $x_1$ nullcline consists of the piece of $H_1$ on the negative side of $E_1$ together with the piece of $E_1$ on the negative side of $H_1$. Likewise, the $x_2$ nullcline consists of the piece of $H_2$ on the negative side of $E_2$ together with the piece of $E_2$ on the negative side of $H_2$. The pieces of $E_1, E_2, H_1$, and $H_2$ which act as nullclines are shown in darker colors in Figure \ref{fig:nullcline_edge}. The fixed point of this TLN is the point of intersection between these two nullclines, and appears as a white disk in the figure.

\end{ex}

Notice that in this example, the fixed point occurs at the intersection of both nullclines. This is true in general: the set of fixed points of a TLN is the intersection of all of the nullclines, since this is exactly the set where all derivatives are zero.

The following proposition makes it easier to sketch nullclines, especially in three dimensions. 
\begin{prop}\label{prop:cvx_hull}
Let $W$ be a competitive TLN. Define $b_{ij} = \frac{-\theta}{W_{ij}}e_j$ for $i\neq j$, $ b_{ii} = \theta \,e_i$. Then
\begin{align*}
H_i \cap \R^n_{\geq 0} =  \conv\left(\{b_{ij}\mid j = 1, ..., n\}\right).
\end{align*}
\end{prop}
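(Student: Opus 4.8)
\textbf{Proof proposal for Proposition \ref{prop:cvx_hull}.}

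The plan is to show the two sets are equal by a direct computation of the intersection of the hyperplane $H_i$ with the nonnegative orthant, identifying the extreme points. Recall that $h_i^*(x) = -x_i + \sum_{j=1}^n W_{ij}x_j + \theta$, so $H_i = \{x \mid \sum_{j} W_{ij}x_j - x_i = -\theta\}$; since $W_{ii} = 0$ for a competitive TLN (indeed for a CTLN, and more generally we may assume the diagonal is zero), this reads $\sum_{j \neq i} W_{ij}x_j - x_i = -\theta$, i.e. $x_i - \sum_{j \neq i} W_{ij} x_j = \theta$. The key structural fact is that in a competitive TLN all the off-diagonal weights satisfy $W_{ij} < 0$, so on $\R^n_{\geq 0}$ the left-hand side is a sum $x_i + \sum_{j\neq i}(-W_{ij})x_j$ of nonnegative multiples of the coordinates; thus $H_i \cap \R^n_{\geq 0}$ is exactly the solution set in the orthant of an equation $\sum_j c_j x_j = \theta$ with all $c_j > 0$ (here $c_i = 1$, $c_j = -W_{ij}$ for $j \neq i$).

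First I would observe that a set of the form $\{x \in \R^n_{\geq 0} \mid \sum_j c_j x_j = \theta\}$ with all $c_j > 0$ and $\theta > 0$ is a simplex: it is bounded (each $x_j \le \theta/c_j$), closed, and convex, hence it is the convex hull of its vertices. Its vertices are obtained by setting all but one coordinate to zero: the vertex on the $x_j$-axis is $(\theta/c_j) e_j$. So $H_i \cap \R^n_{\geq 0} = \conv\{(\theta/c_j)e_j \mid j = 1,\dots,n\}$. Next I would just match this against the definition of the $b_{ij}$: for $j \neq i$ we have $c_j = -W_{ij}$, so $(\theta/c_j) e_j = \frac{\theta}{-W_{ij}} e_j = \frac{-\theta}{W_{ij}} e_j = b_{ij}$ (well-defined and positive since $W_{ij} < 0$ and $\theta > 0$); and for $j = i$ we have $c_i = 1$, so $(\theta/c_i)e_i = \theta\, e_i = b_{ii}$. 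Substituting gives $H_i \cap \R^n_{\geq 0} = \conv(\{b_{ij} \mid j = 1,\dots,n\})$, as claimed.

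To make the simplex claim rigorous I would verify both inclusions of the set equality $\{x \in \R^n_{\ge 0} \mid \sum_j c_j x_j = \theta\} = \conv\{(\theta/c_j)e_j\}$ explicitly: the $\supseteq$ direction is immediate since each vertex lies in the set and the set is convex; for $\subseteq$, given such an $x$, write $\lambda_j = c_j x_j / \theta \ge 0$, note $\sum_j \lambda_j = 1$, and check $\sum_j \lambda_j (\theta/c_j) e_j = \sum_j x_j e_j = x$. I expect the main (very mild) obstacle to be bookkeeping around the hypotheses: one must use that the TLN is competitive (all $W_{ij} < 0$ for $i \neq j$) so that the $b_{ij}$ are genuine points of the positive orthant and the coefficients $c_j$ are all strictly positive, and that $\theta > 0$; without competitiveness the level set need not be bounded and the statement fails. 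I would also note at the start that the diagonal convention $W_{ii}=0$ is what makes $c_i = 1$; if the paper's convention differed one would simply absorb $1 - W_{ii}$ into $c_i$, but for CTLNs and the competitive TLNs considered here $W_{ii} = 0$.
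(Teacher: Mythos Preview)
Your proof is correct and follows essentially the same approach as the paper: compute the intersection of $H_i$ with each coordinate axis to find the points $b_{ij}$, then argue that $H_i \cap \R^n_{\geq 0}$ is the convex hull of these points. Your version is in fact slightly more explicit than the paper's, since you write down the convex coefficients $\lambda_j = c_j x_j/\theta$ directly, whereas the paper argues via affine independence and then asserts that for $x \in \R^n_{\geq 0}$ the affine coefficients can be taken nonnegative.
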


\begin{proof}
Recall that $H_i$ is the zero set of 
$$h_i^*(x) = -x_i + \sum_{j=1}^n W_{ij}x_j + \theta.$$
Along the $x_j$ axis generated by $e_j$, we can set the other variables to zero, so for $i\neq j$, we have
\begin{align*}
0 = h_i^*(x) =  W_{ij}x_j + \theta\\
x_j = \frac{-\theta}{W_{ij}}
\end{align*}
and for $i = j$, 
\begin{align*}
0 = h_i^*(x) =  -x_i  + \theta\\
x_i = \theta 
\end{align*}
Thus, for each $j$, $b_j$ is the intersection of $H_i$ with the $x_j$ axis.  Thus, these points and their affine span are contained in $H_i$. These points are affinely independent, and there are $n$ of them, so affine span is equal to $H_i$. That is, each point of $H_i$ can be written as 
$$x = \sum \lambda_j b_j$$
subject to the constraint that $\sum_{j = 1}^n \lambda_j = 1$. 

Now, if $x\in H_i\cap \R^n_+$, we can take each $\lambda_i \geq 0$. Thus, $x$ a convex combination of the points $\{b_j\mid j = 1, \ldots, n\}$. Thus, $H_i \cap \R^n_{\geq 0} =  \conv\left(\{ b_{ij}\mid j = 1, ..., n\}\right)$. 
\end{proof}

\begin{ex}
We use Proposition \ref{prop:cvx_hull} to sketch nullclines for the CTLN corresponding to the graph $G = (V = \{1, 2, 3\}, E = \{2\to 1, 2 \to 3\})$ in Figure \ref{fig:nullclines_3}. Notice that specializing to the case of CTLNs simplifies our descriptions of the points $ b_{ij}$. If $j \not\to i$, $b_{ij} = \frac{\theta}{1+ \delta}$ and if $j \to i$, $ b_{ij} = \frac{\theta}{1-\varepsilon}$. Also notice that $\frac{\theta}{1+ \delta} < \theta < \frac{\theta}{1-\varepsilon}$. Thus, the absence or presence of an edge $j \to i$ tells us how the point where $H_j$ intersects the $x_i$ relates to the point where $H_i$ intersects the $x_i$ axis. The fixed points lie at the points of intersection between all three nullclines. These are marked with white circles. 
\begin{figure}[ht!]
\includegraphics[width = 6 in]{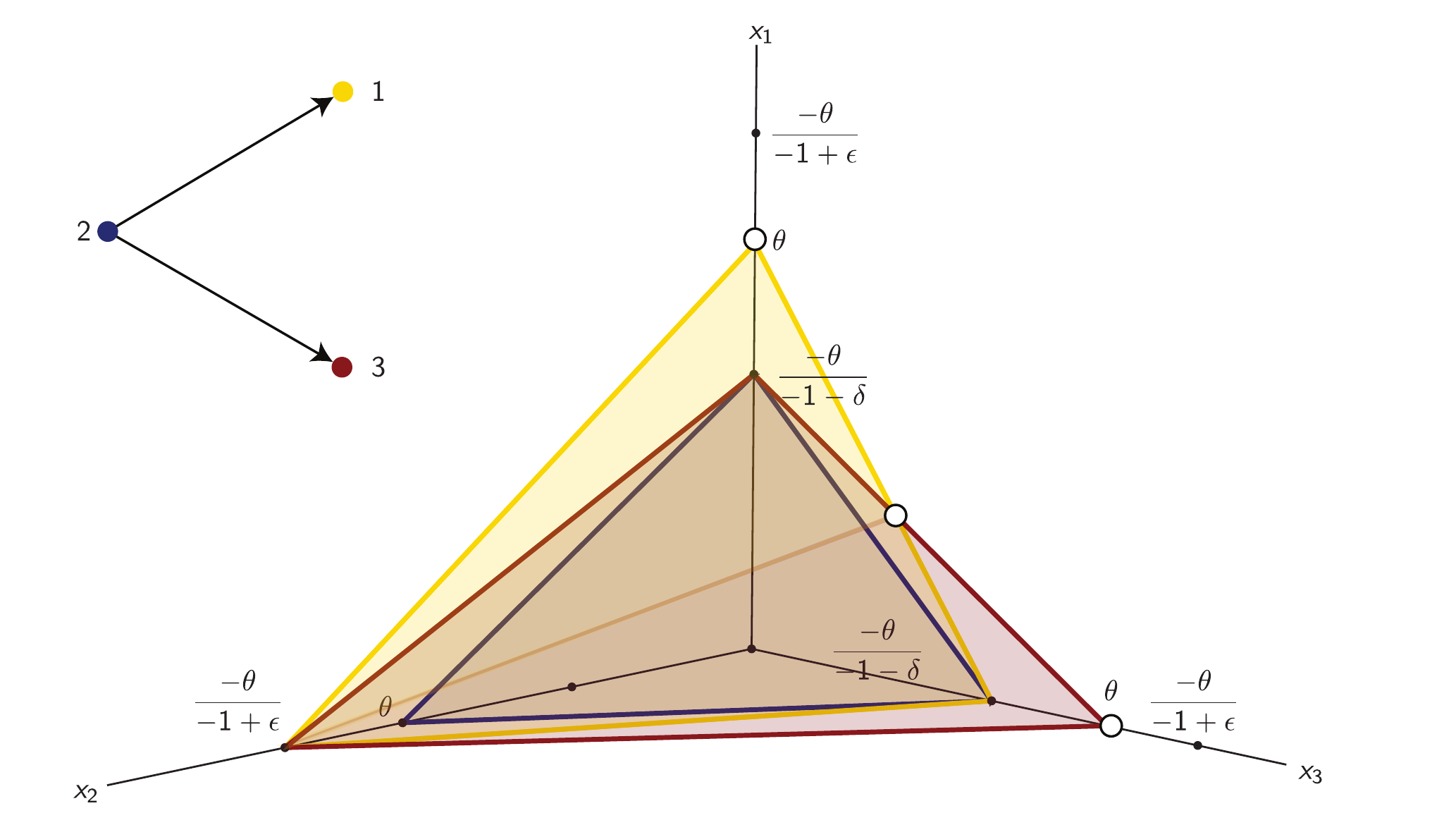}
\caption{Nullclines for a CTLN on three neurons.\label{fig:nullclines_3}}
\end{figure}
\end{ex}

\section{Nullcline chambers and dynamics}
\label{sec:nullchamber_dynamics}
In the previous section, we talked about how the hyperplane arrangement $\cA(W, \theta)$ controls the fixed points of a TLN. Now, we explore how the hyperplane arrangement controls other dynamic properties of the TLN. In particular, we discuss the \emph{chambers} of this arrangement. These chambers are defined by the sign of the derivative at each point of $\R^n$.  

\begin{defn}
A \emph{sign vector} $X$ of length $n$ is an element of $\{0, +, -\}^n$. 
\end{defn}

\begin{defn}
Given a TLN defined by $W, \theta$ and a sign vector $X$, we define the \emph{chamber} of $X$ to be the set 
\begin{align*}
A_{W, \theta}(X) = \{x \in \R^n \mid \sign(x_i') = X_i\}.
\end{align*}
\end{defn}
Because TLNs are threshold-linear, these chambers can be described in terms of the hyperplane arrangement. 
\begin{prop}
Let $X\in \{0,+, -\}^n$. 
\begin{align*}
A_{W, \theta}(X) = \bigcap_{X_i = +} (H_i ^+\cup E_i^+) \cap \bigcap_{X_j = -} (H_j^- \cap E_j^-) \cap \bigcap_{X_k = 0}\left( \left(H_i\cap E_i^-\right)\cup \left(E_i \cap H_i^-\right)\right).
\end{align*}
Within the strictly positive orthant, $\R^n_> = \{x\in \R^n\mid x_i > 0 \mbox{ for all } n \}$ this simplifies to 
\begin{align*}
A_{W, \theta}(X)\cap \R^n_> = \bigcap_{X_i = +} H_i ^+ \cap \bigcap_{X_j = -} H_j^-  \cap \bigcap_{X_k = 0} H_i
\end{align*}
\end{prop}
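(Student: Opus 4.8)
The plan is to prove the proposition directly from the definition of $A_{W,\theta}(X)$, unpacking it one coordinate at a time. By definition $A_{W,\theta}(X) = \{x \in \R^n \mid \sign(\dot x_i) = X_i \text{ for all } i\} = \bigcap_{i=1}^n \{x \in \R^n \mid \sign(\dot x_i) = X_i\}$, so it suffices to identify, for each fixed index $i$, the three sets $\{\dot x_i > 0\}$, $\{\dot x_i < 0\}$, and $\{\dot x_i = 0\}$ with regions cut out by $E_i$ and $H_i$, and then to intersect the appropriate one for each $i$ according to whether $X_i$ is $+$, $-$, or $0$.

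First I would record the pointwise identity $\dot x_i = -x_i + \bigl[\sum_j W_{ij}x_j + \theta\bigr]_+ = \max\{-x_i,\ -x_i + \sum_j W_{ij}x_j + \theta\} = \max\{e_i^*(x),\ h_i^*(x)\}$, which is immediate from $[y]_+ = \max\{y,0\}$. Since the sign of a maximum of two real numbers is positive exactly when at least one summand is positive, negative exactly when both are negative, and zero otherwise, this yields at once $\dot x_i > 0 \iff e_i^*(x) > 0 \text{ or } h_i^*(x) > 0 \iff x \in E_i^+ \cup H_i^+$, and $\dot x_i < 0 \iff e_i^*(x) < 0 \text{ and } h_i^*(x) < 0 \iff x \in E_i^- \cap H_i^-$. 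The set $\{\dot x_i = 0\}$ is the complement of the union of these two, which is precisely the $i$-th nullcline $\cN_i$; by the preceding proposition this equals $(H_i \cap E_i^-) \cup (E_i \cap H_i^-)$. Substituting these three descriptions into $\bigcap_i \{x \mid \sign(\dot x_i) = X_i\}$ produces the first displayed formula (with the understanding that the index in the $X_k=0$ term reads $H_k \cap E_k^-$, $E_k \cap H_k^-$).

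For the restriction to $\R^n_>$, I would observe that $e_i^*(x) = -x_i < 0$ for every $x \in \R^n_>$ and every $i$; hence $\R^n_> \subseteq E_i^-$, while $\R^n_> \cap E_i^+ = \emptyset$ and $\R^n_> \cap E_i = \emptyset$. Intersecting each factor of the first formula with $\R^n_>$ therefore collapses $E_i^+ \cup H_i^+$ to $H_i^+$, collapses $E_i^- \cap H_i^-$ to $H_i^-$, and collapses $(H_i \cap E_i^-) \cup (E_i \cap H_i^-)$ to $H_i$, which gives the second displayed formula.

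The content here is essentially bookkeeping, so I do not anticipate a genuine obstacle; the only point requiring a little care is the $\dot x_i = 0$ case, where instead of re-deriving the nullcline description I would simply invoke the earlier nullcline proposition. If one preferred a self-contained argument, one notes that $\max\{a,b\} = 0$ holds exactly when ($a=0$ and $b \le 0$) or ($b=0$ and $a \le 0$), which matches $(H_i \cap E_i^-) \cup (E_i \cap H_i^-)$ up to the lower-dimensional set $H_i \cap E_i$ on which both functionals vanish, consistent with the convention already in use in this chapter. Everything else is just matching the $+/-/0$ entries of $X$ to the corresponding regions and taking the intersection.
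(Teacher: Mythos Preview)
Your proof is correct and follows exactly the approach implicit in the paper: the paper states this proposition without proof, since it is an immediate consequence of the preceding nullcline proposition (which establishes $\xdot_i > 0 \Leftrightarrow x \in H_i^+ \cup E_i^+$, $\xdot_i < 0 \Leftrightarrow x \in H_i^- \cap E_i^-$, and $\xdot_i = 0 \Leftrightarrow x \in (H_i \cap E_i^-) \cup (E_i \cap H_i^-)$) by intersecting over $i$ according to the entries of $X$. Your argument spells out precisely this reduction, including the observation that on $\R^n_>$ one has $e_i^*(x) < 0$, which collapses each factor to the corresponding $H_i$-region.
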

%


\subsection{The mixed-sign chambers are attracting }\label{sec:mixed_sign}
Now, we show that for competitive TLNs the chambers of $--\ldots-$ and $++\ldots+$ are repelling, and that all trajectories of the TLN eventually approach the union of the mixed-sign chambers. This constrains trajectories to a small subset of the positive orthant.

\begin{defn}
We define the \emph{mixed sign region} $\cA$ as the closure of the set 
$$\{x\in \R^{n}_{\geq 0} \mid \mbox{ there exist } \, i, j \mbox{ such that } \sign(\xdot_i) = - \mbox{ and } \sign(\xdot_j) = +\}.$$
That is, $\cA$ is the region where not all derivatives have the same sign. 
Similarly, we define the two pieces of the complement of $\cA$ as
\begin{align*}
\cA^+ &= \{x\in \R^{n}_{\geq 0} \mid  \sign(\xdot_i) = +  \mbox{ for all } i\}\\
\cA^- &= \{x\in \R^{n}_{\geq 0} \mid  \sign(\xdot_i) = -  \mbox{ for all } i\}.
\end{align*}
\end{defn}
Notice that $\cA$ is bounded by the nullclines. In particular, within the positive orthant, it is bounded by the hyperplanes $H_1, \ldots, H_n$. Since the fixed points are the points where all nullclines intersect, they are always on the boundary of the mixed-sign chamber. A two-dimensional example is illustrated in Figure \ref{fig:mixed_sign}. 

\begin{figure}
\begin{center}
\includegraphics[width = 5.5 in]{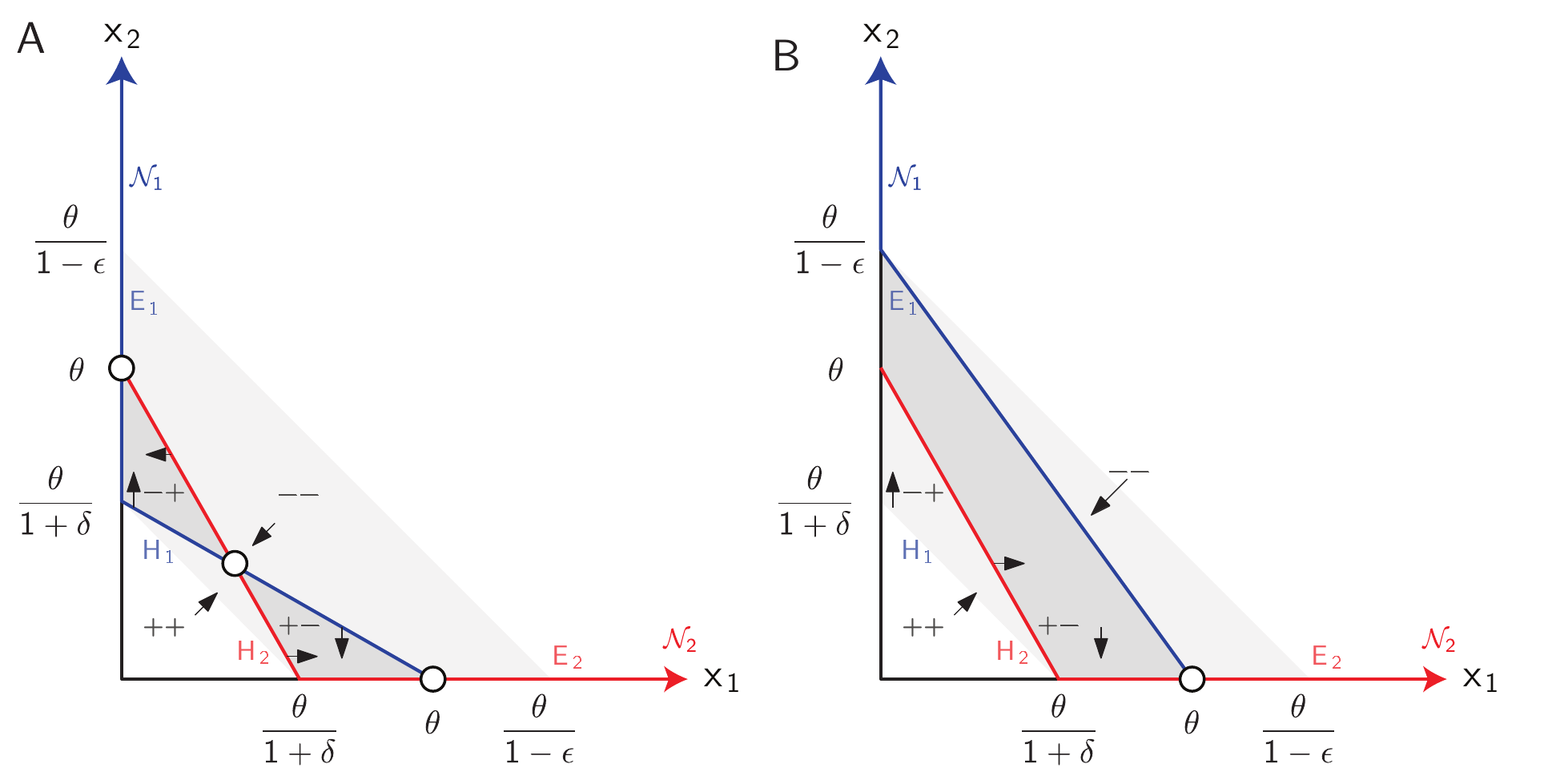}
\end{center}
\caption[The mixed sign region $\cA$ for a two CTLNs. ]{The mixed sign region $\cA$ for a two CTLNs,  an independent set (A) and the edge directed from 2 to 1 (B). The mixed sign region is shaded in dark gray. In light gray, we show the bounds from Corollary \ref{cor:total_pop}. Fixed points are shown as white circles--note that they all fall on the boundary of $\cA$, and the interior of the light gray region.  \label{fig:mixed_sign} }

\end{figure}

\begin{thm}\label{thm:mixed_sign}
For any competitive TLN, the mixed sign region $\cA$ is a globally attracting set. In particular, trajectories which enter $\cA$ cannot leave. 
\end{thm}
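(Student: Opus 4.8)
The plan is to show two things: (1) the complementary regions $\cA^+$ and $\cA^-$ are ``repelling'' in the sense that trajectories cannot remain in them, and in fact must enter $\cA$ in finite time; and (2) once a trajectory enters $\cA$ it cannot leave. Since $W,\theta$ defines a competitive TLN, we will use the boundedness facts already available (trajectories starting in $\prod_{i}[0,\theta]$ stay there, and more generally the forward orbit is bounded because $\dot x_i \le -x_i + \theta + \text{(nonpositive terms)}$ pushes large coordinates down), so every trajectory has compact closure in $\R^n_{\ge 0}$.

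\textbf{Step 1: $\cA^-$ is transient.} On $\cA^-$ every derivative is strictly negative, so along a trajectory in $\cA^-$ the function $S(x) := \sum_i x_i$ is strictly decreasing; moreover on $\cA^-$ we have $\dot x_i = -x_i$ for the inactive neurons and $\dot x_i = -x_i + \sum_j W_{ij}x_j + \theta < 0$ for the active ones, and in either case $\dot x_i \le -x_i$, so actually $\dot S \le -S$, forcing $S$ to decay exponentially. But $\cA^-$ is bounded away from the origin in the relevant direction: at a point of $\R^n_{\ge 0}$ where \emph{all} $x_i$ are small (say $x$ near $0$), $h_i^*(x) = -x_i + \sum_j W_{ij}x_j + \theta \approx \theta > 0$, so $\dot x_i > 0$ and $x \notin \cA^-$. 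Hence a trajectory cannot stay in $\cA^-$: $S$ would have to decrease past the value at which the trajectory leaves $\cA^-$, i.e. it crosses into $\cA$ (it cannot pass directly from $\cA^-$ to $\cA^+$ by continuity of the $\dot x_i$, since leaving the region ``all negative'' means at least one $\dot x_i$ has become $0$ then $+$, and the others are still $<0$ at that instant). Symmetrically:

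\textbf{Step 2: $\cA^+$ is transient.} On $\cA^+$ every $\dot x_i > 0$, so $S$ is strictly increasing there. Combined with boundedness of the forward orbit (Step 0), a trajectory cannot remain in $\cA^+$ forever — $S$ is monotone increasing and bounded, so $\dot S \to 0$, but $\dot S = \sum_i \dot x_i$ is bounded below by a positive constant on any compact subset of the interior of $\cA^+$ (since each $\dot x_i>0$ there), forcing the trajectory to approach $\partial \cA^+$, which is part of $\partial\cA$; and again by continuity the only way to exit $\cA^+$ is into $\cA$.

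\textbf{Step 3: $\cA$ is forward-invariant.} This is the key point and the main obstacle. Suppose a trajectory is in $\cA$ and reaches a boundary point of $\cA$; boundary points of $\cA$ (within $\R^n_{\ge 0}$) lie on the nullcline hyperplanes, i.e. they are points where some subset of derivatives vanish while the sign pattern is still mixed or on the verge of becoming uniform. To leave $\cA$ the trajectory would have to enter $\cA^+$ (all $\dot x_i>0$) or $\cA^-$ (all $\dot x_i<0$). I would rule this out by a barrier/invariance argument on the boundary: on $\partial\cA \cap \partial\cA^+$ there is some neuron $k$ with $\dot x_k = 0$ (it was previously $\ge 0$, about to become $>0$ as we try to enter $\cA^+$) while at least one neuron $j$ still has $\dot x_j < 0$; I need to show $\dot x_j$ cannot simultaneously become $\ge 0$, equivalently that the time-derivative of the vector field along the trajectory points back into $\cA$. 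Concretely, differentiate $h_j^*(x(t))$: $\frac{d}{dt} h_j^*(x) = -\dot x_j + \sum_i W_{ji}\dot x_i$. Entering $\cA^+$ requires all $\dot x_i \ge 0$ at the crossing point; but then for the active index $j$, $\dot x_j = h_j^*(x)$ is exactly the quantity whose sign we are tracking, and using competitiveness ($W_{ji} \le -1 < 0$ for $j\ne i$ off-diagonal, or at least $W_{ji}<0$) the inhibitory cross-terms $\sum_{i\ne j} W_{ji}\dot x_i \le 0$ when all $\dot x_i\ge 0$, which together with $\dot x_j < 0$ keeps $\frac{d}{dt}\dot x_j$ controlled so that $\dot x_j$ cannot cross zero upward — the inhibition a neuron receives from all the rising neurons prevents its own derivative from turning positive while they are all positive. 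The symmetric computation (all $\dot x_i \le 0$, competitiveness gives $\sum_{i\ne j}W_{ji}\dot x_i \ge 0$) handles $\partial\cA\cap\partial\cA^-$. Making this rigorous at the non-smooth locus (where the $[\cdot]_+$ changes which linear piece is active) requires checking the one-sided derivatives agree in sign, which is routine given that the vector field is Lipschitz and piecewise-linear; I expect this technical bookkeeping at the hyperplanes $E_i$ to be the fussiest part.

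\textbf{Step 4: Conclusion.} Steps 1--2 show every trajectory enters $\cA$ in finite time; Step 3 shows it stays. Since $\cA$ is closed and (by Step 0) the forward orbit of any point in $\cA$ is bounded and contained in $\cA$, $\cA$ is an invariant set with an open basin equal to all of $\R^n_{\ge 0}$, hence a globally attracting set, and trajectories which enter $\cA$ cannot leave. This also immediately yields Corollary \ref{cor:total_pop} by intersecting $\cA$ with the half-spaces $H_i^\pm$: on $\cA$ at least one $\dot x_i \le 0$, giving $\sum_i x_i \ge \theta/(1+\delta)$ eventually, and at least one $\dot x_j \ge 0$, giving $\sum_i x_i \le \theta/(1-\varepsilon)$ eventually, after plugging in the CTLN weight values and summing the relevant linear inequality.
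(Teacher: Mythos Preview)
Your overall architecture---show $\cA^\pm$ are transient, then show $\cA$ is forward-invariant via a sign computation on the boundary---matches the paper's proof, and the core idea in Step~3 (differentiate $h_j^*(x(t))$ and use that the off-diagonal weights are negative) is exactly the right one. However, your description of the boundary point in Step~3 is garbled in a way that would break the argument as written. On $\partial\cA\cap\partial\cA^+$ there is \emph{no} neuron $j$ with $\dot x_j<0$: every $\dot x_i\ge 0$, with at least one $\dot x_k=0$. Your appeal to ``$\dot x_j<0$'' then feeds into the formula $\frac{d}{dt}h_j^*(x)=-\dot x_j+\sum_{i\ne j}W_{ji}\dot x_i$ with the wrong sign (it would contribute $-\dot x_j>0$, allowing $\dot x_j$ to rise). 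The correct move is to track the neuron $k$ with $\dot x_k=0$: then $\frac{d}{dt}\dot x_k=\sum_{i\ne k}W_{ki}\dot x_i\le 0$ by competitiveness and $\dot x_i\ge 0$, so $\dot x_k$ cannot become positive and the trajectory cannot enter $\cA^+$. The symmetric argument handles $\cA^-$.

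The paper packages this same computation more crisply: since the normal vector $h_i$ to $H_i$ has \emph{all} entries negative (the $i$-th entry is $-1$, the others are $W_{ij}<0$), one immediately gets $\dot x\cdot h_i\le 0$ whenever $\dot x\ge 0$, with equality only at fixed points. This single observation handles both entering $\cA$ from $\cA^+$ and forward-invariance of $\cA$ simultaneously, and avoids the second-derivative bookkeeping. Your Steps~1--2 using $S(x)=\sum_i x_i$ are correct but heavier than needed; the paper just observes that uniform-sign derivatives push the trajectory toward the boundary of $\cA^\pm$ and then applies the normal-vector sign argument there.
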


\begin{proof}
First, note that within $\cA^+$, all derivatives $\xdot_i$ are positive. Thus, the trajectory must approach the boundary between $\cA^+$ and $\cA$. Likewise, within $\cA^-$, all derivatives $\xdot_i$ are negative, thus  the trajectory must approach the boundary between $\cA^-$ and $\cA$. 

Next, we show that if a trajectory is on the boundary between either $\cA^+$ and $\cA$ or $\cA^-$ and $\cA$, it must enter $\cA$. We first consider the boundary between $\cA^+$ and $\cA$. Since $W$ is the weight matrix of a competitive TLN, all entries of the normal vector $ h_i$ to the hyperplane $H_i$ are negative for all $i$. 
On the boundary between $\cA^+$ and $\cA$,  $\xdot \geq 0$ in all coordinates. Thus, $ \xdot\cdot  h_i \leq 0$, with equality only achieved at fixed points. Thus, unless the trajectory is at a fixed point, it must cross from $\cA^+$ into $\cA$. Likewise, if the trajectory is on the boundary between $\cA^-$ and $\cA$, $\xdot \leq 0$ in all coordinates.   Thus, $ \xdot\cdot  h_i \geq 0$, with equality only achieved at fixed points. Thus, the trajectory must cross from $\cA^+$ into $\cA$. 
\end{proof}

We can use this result to derive bounds on the total population activity of competitive TLNs in general, and CTLNs in particular.  

\begin{cor}\label{cor:total_pop}
Let $m = \min_{i, j}\{(I - W_{ij})\}$ and $M = \max_{i, j}\{(I - W_{ij})\}$. 
In competitive TLNs, there is some time $T$ such that 
\[\frac{\theta}{M} < \sum_{i = 1}^n x_i < \frac{\theta}{m}\] 
for $t \geq T$. 
In CTLNs this result specializes to

\[\frac{\theta}{1 +\delta} < \sum_{i = 1}^n x_i < \frac{\theta}{1 -\varepsilon}.\]

\end{cor}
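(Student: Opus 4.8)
The plan is to read the bound off directly from Theorem~\ref{thm:mixed_sign} together with a one-line estimate on the affine inequalities that cut out the mixed-sign region $\cA$. By Theorem~\ref{thm:mixed_sign}, every trajectory $x(t)$ of a competitive TLN enters $\cA$ in finite time and never leaves; fix $T$ with $x(t)\in\cA$ for all $t\geq T$. It then suffices to bound $\sum_i x_i$ on $\cA$. First note that $m>0$: competitiveness (in the sense used for Theorem~\ref{thm:mixed_sign}) forces $W_{ii}<1$ and $W_{ij}<0$ for $i\neq j$, so every entry $(I-W)_{ij}$ is strictly positive and $\theta/m$ is well defined and positive.

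For the upper bound, fix a point $x$ in the interior of $\cA$ (i.e.\ where some $\xdot_j>0$ and some $\xdot_i<0$) and pick a coordinate $j$ with $\xdot_j>0$. Since $\xdot_j=-x_j+\left[\sum_k W_{jk}x_k+\theta\right]_+>0$ and $x_j\geq 0$, the bracket must be positive, so $\xdot_j>0$ rearranges to $\sum_k (I-W)_{jk}x_k<\theta$; using $(I-W)_{jk}\geq m$ and $x_k\geq 0$ gives $m\sum_k x_k<\theta$, i.e.\ $\sum_k x_k<\theta/m$. Symmetrically, picking a coordinate $i$ with $\xdot_i<0$ yields $\sum_k(I-W)_{ik}x_k>\theta$ in all cases: if the bracket is positive this is the direct rearrangement of $\xdot_i<0$, while if the bracket vanishes then $\xdot_i=-x_i<0$ forces $x_i>0$ and $\sum_k(I-W)_{ik}x_k = x_i-\sum_k W_{ik}x_k>x_i+\theta>\theta$. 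Bounding $(I-W)_{ik}\leq M$ then gives $M\sum_k x_k>\theta$, i.e.\ $\sum_k x_k>\theta/M$. Thus $\theta/M<\sum_k x_k<\theta/m$ on the interior of $\cA$.

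It remains to transfer this from the interior of $\cA$ to the trajectory itself and to handle the boundary. The proof of Theorem~\ref{thm:mixed_sign} shows that on the nullcline part of $\partial\cA$, away from fixed points, the vector field points strictly into the interior of $\cA$, so once $x(t)$ has entered the interior it cannot return to that part of the boundary; it can only approach a fixed point asymptotically or approach the coordinate hyperplanes. At a fixed point $x^\ast$ one has $\sum_k(I-W)_{ik}x^\ast_k=\theta$ for every $i\in\mathrm{supp}(x^\ast)$, and in the CTLN case the diagonal term contributes coefficient $1$ with $1-\varepsilon=m<1<1+\delta=M$, so the estimates above remain strict in the limit; the coordinate-hyperplane points form a lower-dimensional set on which, at worst, one recovers the non-strict bound. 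Hence there is $T'\geq T$ with $\theta/M<\sum_i x_i(t)<\theta/m$ for all $t\geq T'$. Specializing to a CTLN, the entries of $I-W$ are $1$ on the diagonal, $1-\varepsilon$ when $j\to i$, and $1+\delta$ when $j\not\to i$, so $m=1-\varepsilon$ and $M=1+\delta$, giving $\theta/(1+\delta)<\sum_i x_i<\theta/(1-\varepsilon)$.

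The conceptual work is all done by Theorem~\ref{thm:mixed_sign}; the remaining content is the displayed estimate, which is immediate. The one genuinely delicate point — and the step I expect to cost the most care in a full write-up — is the strict inequality: it needs the trajectory to be eventually trapped in the \emph{interior} of $\cA$, away from its nullcline boundary, which is precisely the transversality statement embedded in the proof of Theorem~\ref{thm:mixed_sign}, plus a separate verification that the degenerate configurations on the coordinate hyperplanes cannot keep the trajectory on the closed boundary for a subarc and so do not spoil the open bound.
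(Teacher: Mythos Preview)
Your argument is correct and takes a genuinely different route from the paper's. The paper argues geometrically: it invokes Proposition~\ref{prop:cvx_hull} to write each nullcline piece $H_i\cap\R^n_{\geq 0}$ as the convex hull of the axis points $b_{ij}$, observes that every such $b_{ij}$ has coordinate sum between $\theta/M$ and $\theta/m$, and hence sandwiches $\cA$ between the two parallel hyperplanes $H_\ell=\{\sum_i x_i=\theta/M\}$ and $H_u=\{\sum_i x_i=\theta/m\}$. Strictness is then obtained by checking that no fixed point lies on $H_\ell$ or $H_u$, so that $\sum_i x_i$ is strictly monotone outside the open strip and trajectories must cross in. Your approach bypasses the convex-hull description entirely: you read the inequality $\sum_k (I-W)_{jk}x_k<\theta$ directly off the equation $\xdot_j>0$, and then bound the row of $I-W$ uniformly by $m$ (and symmetrically by $M$ for a coordinate with $\xdot_i<0$). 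This is shorter and more elementary, and it makes transparent why the constants $m$ and $M$ are exactly the extremal entries of $I-W$. What the paper's route buys is that the strictness step becomes a statement about two fixed hyperplanes $H_\ell,H_u$ rather than about the moving boundary of $\cA$; your handling of strictness is correct but, as you note, leans on the transversality already established inside the proof of Theorem~\ref{thm:mixed_sign}. One small point: your strictness check at fixed points is written for the CTLN case (using $m=1-\varepsilon<1<1+\delta=M$); for a general competitive TLN the same conclusion needs that no row of $I-W$ restricted to a fixed-point support is identically equal to $m$ or $M$, which is the same nondegeneracy the paper's ``no fixed points on $H_\ell,H_u$'' step implicitly uses.
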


\begin{proof}
We show that, within the positive orthant, $\cA$ is contained between the hyperplanes $H_\ell = \{ x\mid \sum_{i = 1}^n x_i -\frac{\theta}{M} = 0\}$, oriented away from the origin, and $H_u = \{x\mid \sum_{i = 1}^n x_i - \frac{\theta}{m} =0\}$, oriented towards the origin. 
Further, we show that these hyperplanes contain no fixed points of the TLN. 
By Proposition \ref{prop:cvx_hull}, for each $i$, the hyperplane $H_i$ is the convex hull of the points $\mathbf b_{ij} = \frac{-\theta}{W_{ij}}\mathbf{e_j}$ for $i\neq j$, $\mathbf b_{ii} = \theta \,\mathbf e_i$. Each point $ b_{ij}$ is on the positive side of $H_\ell $ and on the negative side of the hyperplane defined by $H_u$. Thus, since all trajectories must approach $\cA$, they must also approach the region between $H_\ell$ and $H_u$. 

Now, notice that for each $i$, the intersection between $H_\ell$  and $H_i$ is the convex hull of the set of points $x_k = \frac{\theta}{M}$, $x_j = 0$ for $j\neq k$ which lie on the $H_i$. 
Notice that none of these vertices are fixed points, since singleton fixed points have $x_k = \theta$.
Further, note that there are no fixed points in the relative interior, since any nullcline which touches the relative interior of this convex hull must touch all vertices. 
The analogous statement holds for $H_u$ and points $x_k = \frac{\theta}{m}$, $x_j = 0$. 

Thus, since there are no fixed points on $H_\ell$ or $H_u$, on  the closure of $H_\ell^-$,  $\sum_{i=1}^n x_i$ is strictly increasing, and on the closure of $H_u^+$,   $\sum_{i=1}^n x_i$  is strictly decreasing. 
Therefore, trajectories cannot remain in  the closure of $H_u^+$ or $H_\ell^-$ forever, and must cross into the region  $H_u^+\cap H_\ell^-$. 
Thus eventually \[\frac{\theta}{M} < \sum_{i = 1}^n x_i < \frac{\theta}{m}.\] 


\end{proof}

\begin{figure}[ht!]
\includegraphics[width = 6 in]{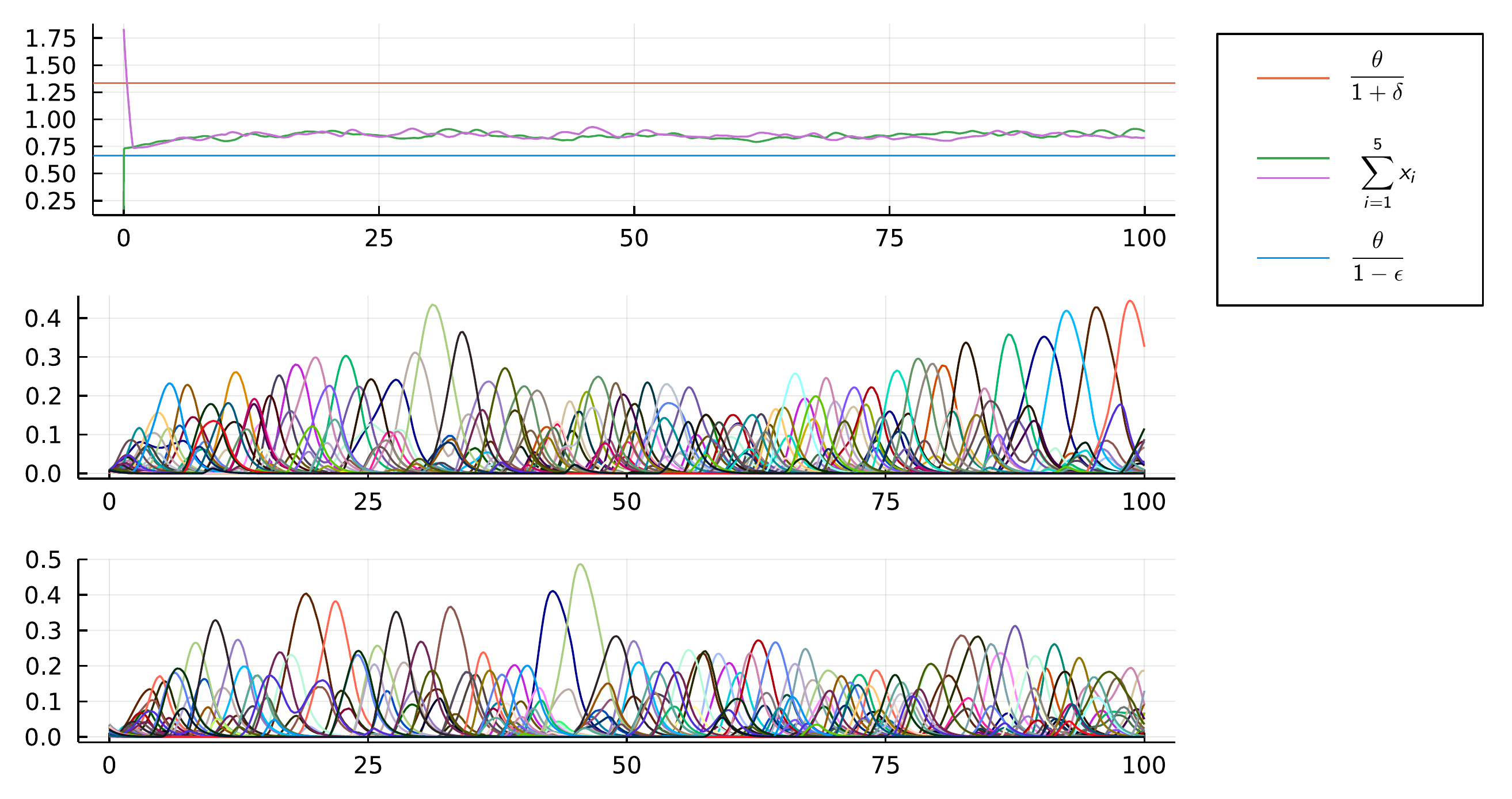}
\caption[Two trajectories of a CTLN with similar total population activity.]{Two trajectories of a CTLN. Note that total population activity  $\sum_{i = 1}^n x_i$ along both trajectories starts outside the region $\frac{\theta}{1 +\delta} < \sum_{i = 1}^n x_i < \frac{\theta}{1 -\varepsilon}$, but very rapidly enters the region.\label{fig:total_pop_dynamic}}
\end{figure}

\begin{figure}[ht!]
\includegraphics[width = 6 in]{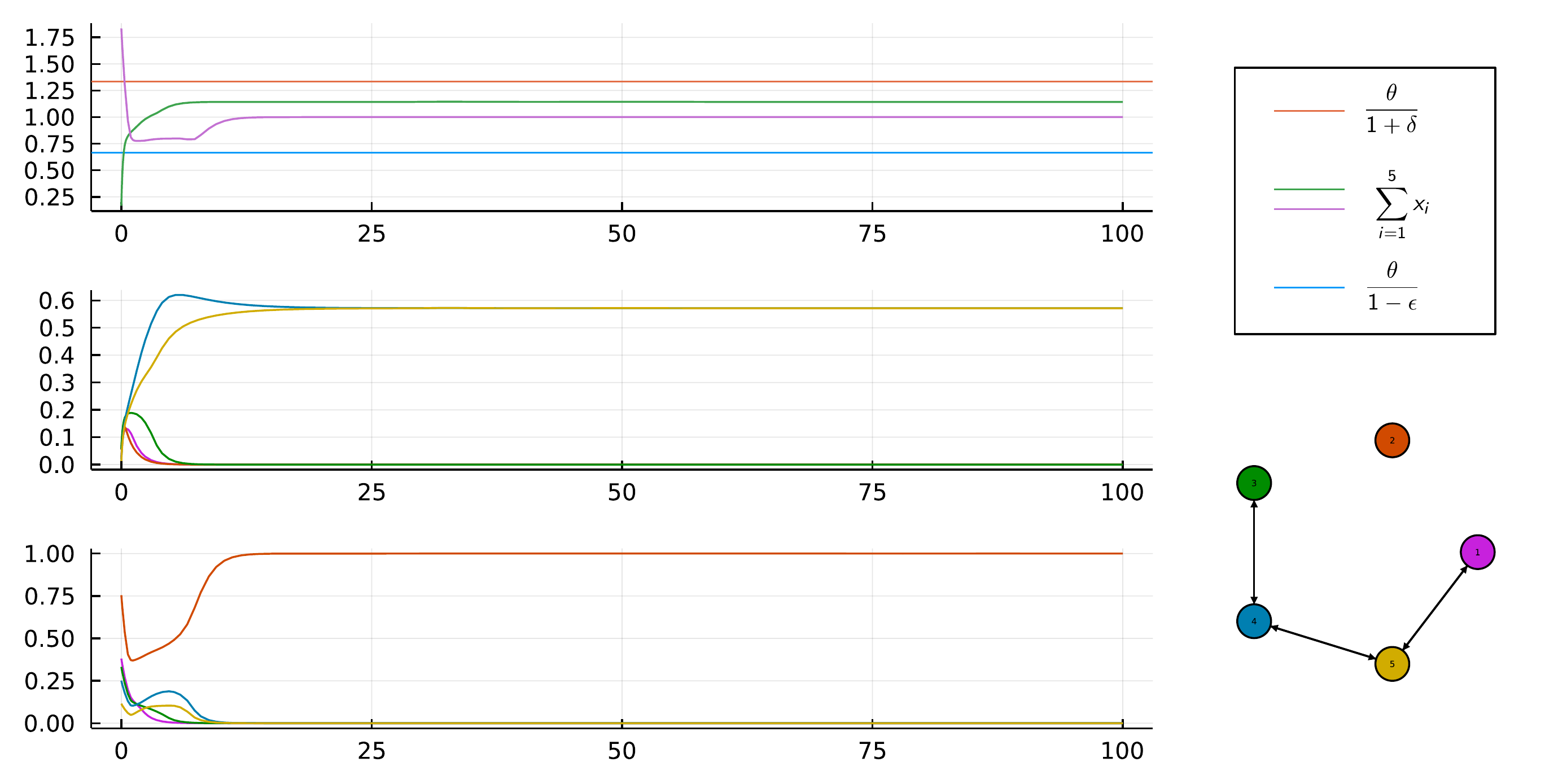}
\caption[Fixed points satisfy total population activity bounds.]{Two trajectories of a symmetric CTLN approach fixed points satisfying $\frac{\theta}{1 +\delta} < \sum_{i = 1}^n x_i < \frac{\theta}{1 -\varepsilon}$. \label{fig:total_pop_fp}}
\end{figure}

The total population activity $\sum_{i = 1}^n x_i$ often does not vary as much as these bounds allow. For instance, see Figures \ref{fig:total_pop_dynamic} and \ref{fig:total_pop_fp}.


\subsection{How nullcline chambers shape dynamics}\label{sec:nullcline_dynamics}
\begin{defn}
Define a directed graph $T_{W, \theta} = (V, E)$ whose vertices correspond to nullcline chambers of the TLN defined by $W, \theta$ and whose edges correspond to possible transitions between chambers. That is, 
$V = \{ X\in \{+, -\}^n\mid A_{W, \theta, X} \neq \emptyset\}$, and the edge $(X\to Y) \in E$ if there is a point $x$ on the boundary between $X$ and $Y$ such that ${\xdot}$ points from $X$ to $Y$. 
\end{defn}

We now prove that trajectories follow the edges of $T_{W, \theta}$. 
\begin{lem}\label{lem:folloW_edges}
Any trajectory of the TLN defined by $W, \theta$ must follow a directed path in $T_{W, \theta}$. 
\end{lem}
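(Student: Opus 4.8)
The plan is to show that a trajectory $x(t)$ of the TLN, viewed as a sequence of nullcline chambers it visits, corresponds to a walk in $T_{W,\theta}$. First I would recall that the positive orthant is partitioned (up to measure-zero boundary sets) into the open chambers $A_{W,\theta}(X)$ indexed by sign vectors $X \in \{+,-\}^n$, where $X_i$ records $\sign(\dot x_i)$; these chambers are the connected components of the complement of the hyperplane arrangement $\{H_1,\dots,H_n\}$ restricted to $\R^n_{>}$ (using the simplified description from the proposition preceding this lemma). Inside any one chamber $A_{W,\theta}(X)$ the sign vector of $\dot x$ is constant, so the chamber label can only change when the trajectory crosses one of the hyperplanes $H_i$ — i.e. when some $\dot x_i$ changes sign.

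The key steps, in order, would be: (1) Observe that by patchwork linearity (and standard existence/uniqueness for the ODE \eqref{eqn:tln}, whose right-hand side is globally Lipschitz since $[\cdot]_+$ is $1$-Lipschitz), $x(t)$ is a well-defined $C^1$ curve for all $t \ge 0$. (2) Define the (closed) set of times $I_X = \{t \ge 0 : x(t) \in \overline{A_{W,\theta}(X)}\}$ for each $X$, and argue that the trajectory visits a well-ordered sequence of chambers: whenever $x(t_0)$ lies in the open chamber $A_{W,\theta}(X)$, there is an interval around $t_0$ on which the chamber label is $X$, because the open chamber is an open set. (3) At a transition time $t_1$ where $x(t_1)$ lies on the common boundary of $A_{W,\theta}(X)$ and $A_{W,\theta}(Y)$ (generically a single hyperplane $H_i$, with $X,Y$ differing only in coordinate $i$), the trajectory passes from $X$ into $Y$; since $\dot x(t_1)$ is the velocity vector at that boundary point and it carries the trajectory from the $X$-side into the $Y$-side, by definition $\dot x(t_1)$ points from $X$ to $Y$, so $(X \to Y) \in E$. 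Hence consecutive chambers in the itinerary are joined by a directed edge of $T_{W,\theta}$. (4) Conclude that the sequence of chambers visited is a directed walk (path, after removing repetitions) in $T_{W,\theta}$.

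I would also need to dispose of two degenerate possibilities so the statement is clean. One is a trajectory that crosses a codimension-$\ge 2$ stratum of the arrangement (two or more hyperplanes simultaneously), changing several sign-coordinates at once; here I would either note this is a non-generic event that can be handled by taking one-sided limits of the chamber label, or simply allow the edge set of $T_{W,\theta}$ to include such multi-flip transitions (the definition via "a point on the boundary between $X$ and $Y$ with $\dot x$ pointing from $X$ to $Y$" already permits $X,Y$ to differ in more than one coordinate). The other is a trajectory that slides along a nullcline $H_i$ for a positive-length time interval; on $H_i$ we have $\dot x_i = 0$, and I would argue using $\ddot x_i$ (differentiating $\dot x_i = -x_i + [\,\cdot\,]_+$) that the trajectory cannot remain on $H_i^{}$ unless it is at a fixed point, so such sliding does not occur except trivially.

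The main obstacle I anticipate is precisely this boundary/tangency bookkeeping — ensuring the "itinerary" of chambers is genuinely a sequence (no accumulation of infinitely many transitions in finite time, no ambiguous labelling on the hyperplanes) — rather than any deep dynamical fact. Once the chamber structure is set up carefully, the edge-following claim is essentially immediate from the definition of $T_{W,\theta}$, since an edge was defined to exist exactly when the flow can carry a trajectory across the corresponding boundary in the corresponding direction.
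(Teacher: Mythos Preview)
Your proposal is correct and follows essentially the same approach as the paper: at a transition from chamber $X$ to chamber $Y$, the velocity $\dot x$ at the boundary point must point from $X$ to $Y$, which is precisely the definition of an edge $(X\to Y)$ in $T_{W,\theta}$. The paper's proof is three sentences long and does not address any of the degenerate cases (codimension-$\ge 2$ crossings, sliding along nullclines, accumulation of transitions) that you carefully flag; your version is strictly more thorough.
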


\begin{proof}
Suppose a trajectory crosses directly from the chamber $A_{W, \theta}(X)$ to the chamber $A_{W, \theta}(Y)$. Then at some time $t$, $x(t)$ is on the boundary between $A_{W, \theta}(X)$ and $A_{W, \theta}(Y)$. The direction it is going at time $t$ is given by $\xdot(t)$. Thus, in order to cross from $A_{W, \theta}(X)$ to $A_{W, \theta}(Y)$, $\xdot(t)$ must point from $A_{W, \theta}(X)$ to $A_{W, \theta}(Y)$.
\end{proof}

\begin{ex}
We consider again the directed edge.The nullcline arrangement for this graph appears in Figure \ref{fig:trans_graph}. We have $T_{W, \theta} = (V, E)$ with $V = \{--, +-, ++\}$ and $E = \{(-- \to +-), (++ \to +-)$\}.  To see this, we first note that edges must be between adjacent chambers. The edge between $--$ and $+-$ corresponds to crossing the $x_1$ nullcline $H_1$. On this nullcline, we have $x_1' = 0$ and $x_2' < 0$. Thus, we are going straight down. Therefore, $\xdot$ points from the chamber $--$ to the chamber $+-$. The edge between $++$ and $+-$ corresponds to crossing the $x_2$ nullcline. On this nullcline, we have $x_2' = 0$, $x_1' > 0$. Thus, we are going right. Therefore, whenever we cross this nullcline, we must cross from the chamber $++$ to the chamber $+-$. 

\begin{figure}[ht!]
\begin{center}
\includegraphics[width = 4 in]{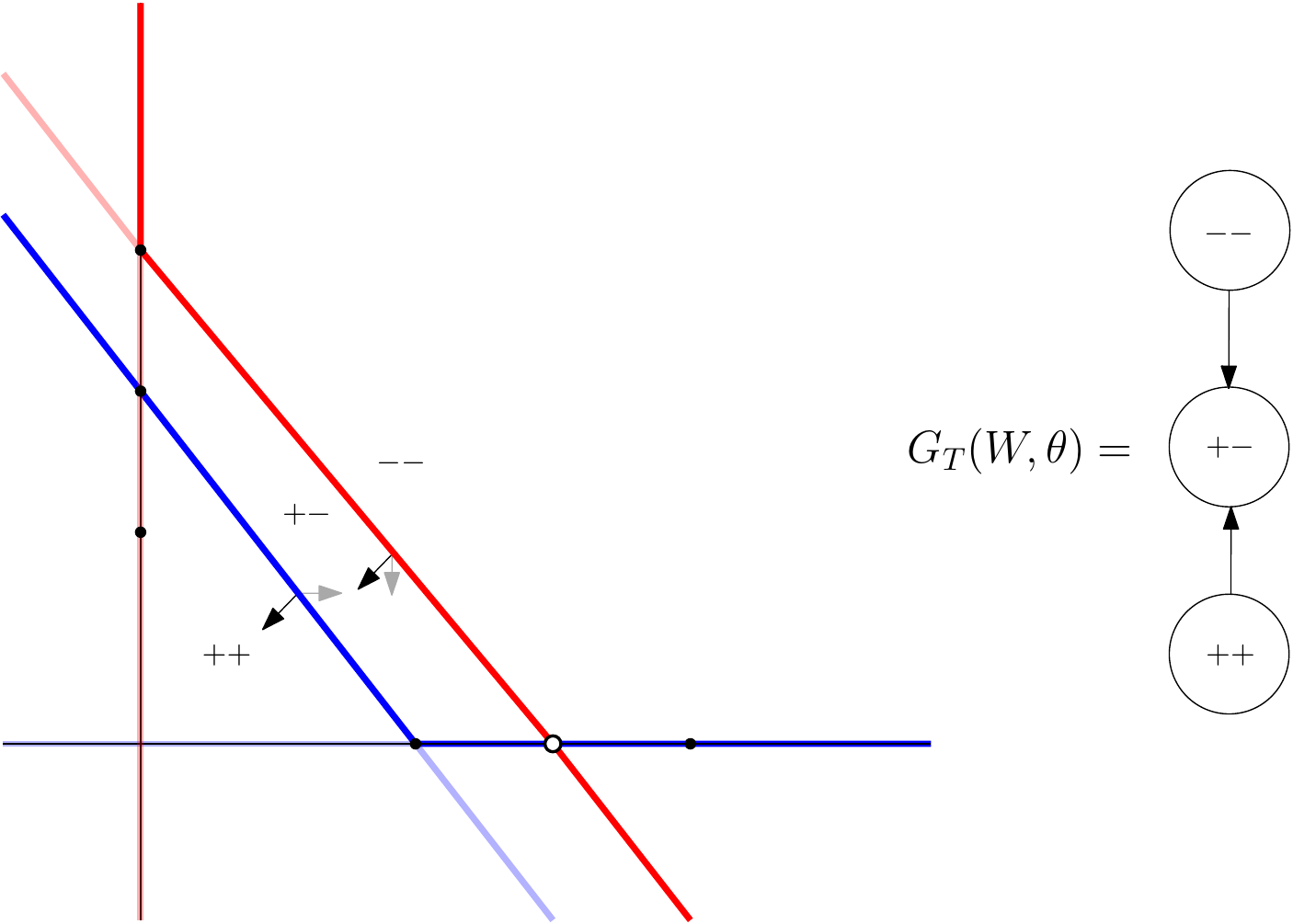}
\end{center}
\caption[The transition graph $T_{W, \theta}$ for the single directed edge.]{The transition graph $T_{W, \theta}$ for the single directed edge. The gray arrows are the vectors $\xdot$ on the nullclines.  \label{fig:trans_graph}}
\end{figure}
\end{ex}

Not only must transitions between chambers follow edges of $T_{W, \theta}$, dynamic attractors must follow directed cycles in  $T_{W, \theta}$. To prove this, it is sufficient to prove that single chambers of  $T_{W, \theta}$ to not contain dynamic attractors.

\begin{lem}\label{lem:single_chamber}
If a trajectory is contained on one side of the $i^{th}$ nullcline, then the value of $x_i(t)$ approaches a limit.  If a trajectory of a TLN is contained within a nullcline single chamber, then it approaches a fixed point. 
\end{lem}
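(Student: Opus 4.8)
The plan is to reduce both claims to the elementary fact that a bounded monotone function of $t$ has a limit. First I would note that, by the description of $\cN_i$ and of the nullcline chambers in the previous section, a trajectory being ``contained on one side of the $i$-th nullcline'' is exactly the statement that $\dot x_i(t)$ has constant (weak) sign for all $t$; equivalently $x(t)$ stays in the closure of one of the two sides $\{\dot x_i > 0\}$, $\{\dot x_i < 0\}$ of $\cN_i$. In either case $x_i(t)$ is weakly monotone in $t$. Since trajectories of a competitive TLN are bounded for $t\ge 0$ --- they are trapped in $\prod_{i=1}^n[0,\theta]$, and in any case eventually lie in the region of Corollary~\ref{cor:total_pop} --- the function $x_i(t)$ is bounded and monotone, hence converges to a limit $x_i^\ast:=\lim_{t\to\infty}x_i(t)$. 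This proves the first assertion.

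For the second assertion, suppose the trajectory $x(t)$ lies in the closure of a single nullcline chamber $A_{W,\theta}(X)$ with $X\in\{+,-\}^n$. Then $\dot x_i(t)$ has constant sign for \emph{every} $i$, so applying the first part in each coordinate gives $x_i(t)\to x_i^\ast$ for all $i$, i.e.\ $x(t)\to x^\ast:=(x_1^\ast,\dots,x_n^\ast)$. It remains to show $x^\ast$ is a fixed point. Writing $F(x):=-x+[Wx+b]_+$ for the vector field of the TLN, which is globally continuous (indeed Lipschitz), I would argue by contradiction: if $F(x^\ast)\neq 0$, continuity provides a ball $U$ around $x^\ast$ on which $\langle F(x),F(x^\ast)\rangle>\tfrac12\|F(x^\ast)\|^2>0$; since $x(t)\to x^\ast$ there is $T$ with $x(t)\in U$ for $t\ge T$, and then $\frac{d}{dt}\langle x(t),F(x^\ast)\rangle=\langle F(x(t)),F(x^\ast)\rangle>\tfrac12\|F(x^\ast)\|^2$ for $t\ge T$, so $\langle x(t),F(x^\ast)\rangle\to+\infty$, contradicting $x(t)\to x^\ast$. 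Hence $F(x^\ast)=0$. The same argument applies verbatim if the trajectory is only \emph{eventually} confined to a single chamber.

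The hard part will be this last step --- ruling out the possibility that $x(t)$ converges to a non-equilibrium boundary point of $\overline{A_{W,\theta}(X)}$ by ``decelerating'' as it approaches. The continuity argument above is what makes it go through, but it genuinely uses boundedness of the trajectory (otherwise $\langle x(t),F(x^\ast)\rangle$ really could run off to infinity), so I would keep the boundedness input explicit and restrict attention to competitive TLNs and CTLNs, the setting that matters here. The only other point requiring care is the precise translation between the phrase ``on one side of the nullcline'' and the sign-vector/chamber formalism, so that the monotonicity claim is unambiguous; this is routine given the proposition describing $A_{W,\theta}(X)$, but worth stating carefully.
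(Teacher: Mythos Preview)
Your proof is correct and follows the same approach as the paper: fixed sign of $\dot x_i$ gives monotonicity, boundedness gives a limit, and applying this in every coordinate yields convergence to a point. You are in fact more careful than the paper, which simply asserts that ``the trajectory approaches a fixed point'' once all coordinates converge; your continuity argument showing $F(x^\ast)=0$ fills in a step the paper leaves implicit, and your explicit invocation of boundedness (via $\prod_i[0,\theta]$ or Corollary~\ref{cor:total_pop}) makes precise an assumption the paper uses without comment.
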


\begin{proof}
If a trajectory is contained on one side of the $i^{th}$ nullcline, then the sign of $x_i'(t)$ along this trajectory is fixed. Thus, the function $x_i(t)$ is bounded and is either monotonically increasing or monotonically decreasing. Thus, $x_i(t)$ approaches a limit. 

Within a single chamber of $A_{W, \theta, X}$,  $x_i(t)$ approaches a limit for all $i = 1, ..., n$, thus the trajectory approaches a fixed point. 
\end{proof}

\begin{prop}\label{prop:directed_cycle} Any dynamic attractor of the TLN defined by $W, \theta$ must follow a directed cycle of $T_{W, \theta}$. 
\end{prop}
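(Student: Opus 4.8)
The plan is to combine the two lemmas that immediately precede the statement: Lemma~\ref{lem:folloW_edges}, which says every trajectory follows a directed walk in $T_{W,\theta}$, and Lemma~\ref{lem:single_chamber}, which says a trajectory confined to a single nullcline chamber converges to a fixed point. Let $A$ be a dynamic attractor, i.e.\ a minimal attracting set that is not a single stable fixed point, and pick a trajectory $x(t)$ with $x(0)\in A$; by invariance $x(t)\in A$ for all $t\ge 0$. Since $\{+,-\}^n$ is finite, the set of chambers visited by $x(t)$ in forward time is finite, so the sequence of chambers along $x(t)$ (as given by Lemma~\ref{lem:folloW_edges}) is an infinite walk in the finite directed graph $T_{W,\theta}$. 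First I would separate two cases: either $x(t)$ eventually stays in one chamber, or it changes chambers infinitely often.

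In the first case, Lemma~\ref{lem:single_chamber} forces $x(t)$ to approach a fixed point $x^\ast$; since $A$ is closed and invariant and contains the whole forward orbit, $x^\ast\in A$, and then $\{x^\ast\}$ is an attracting set contained in $A$ (it attracts this trajectory, and one must check it genuinely has an open basin — this is where one invokes that $A$ is a \emph{minimal} attracting set, so $\{x^\ast\}$ cannot be a proper attracting subset unless $A=\{x^\ast\}$, contradicting that $A$ is a dynamic attractor). So this case does not occur for a genuine dynamic attractor. In the second case, $x(t)$ changes chambers infinitely often, so the infinite chamber-walk in the finite graph $T_{W,\theta}$ must revisit some chamber $X$ infinitely often; between consecutive returns to $X$ the walk traces a closed walk, hence $T_{W,\theta}$ contains a directed cycle, and the attractor's trajectory is (eventually, up to the transient before the first return) supported on the union of chambers appearing in such a closed walk. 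To upgrade "closed walk" to "directed cycle" one simply notes that a closed walk in a finite digraph contains a directed cycle as a sub-walk; and to see that the attractor itself (not just one trajectory) follows a directed cycle, I would argue that all chambers that are visited infinitely often along the orbit form a strongly connected subgraph of $T_{W,\theta}$ (any two such chambers are mutually reachable along the orbit), and any nonempty strongly connected digraph with at least one edge contains a directed cycle through each of its vertices.

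The main obstacle is the bookkeeping in the first case: ruling out that an attractor could "look like" it cycles but actually accumulate on a single chamber only asymptotically, and making precise the minimality argument that $\{x^\ast\}$ being an attracting set inside $A$ contradicts $A$ being a dynamic (non-fixed-point) attractor. One must be careful that "the trajectory converges to $x^\ast$" does not by itself make $\{x^\ast\}$ an attracting set in the technical sense of Definition (open basin required); the cleanest route is: if every trajectory starting in $A$ eventually enters a single chamber, then by Lemma~\ref{lem:single_chamber} each converges to a fixed point, the set of such fixed points in $A$ is finite (at most one per support), and minimality of $A$ together with connectivity/invariance forces $A$ to be exactly one of them — contradicting that $A$ is a dynamic attractor. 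Hence some trajectory in $A$ changes chambers infinitely often, and the pigeonhole/strong-connectivity argument above delivers the directed cycle. I would also remark (as a sanity check, not a formal step) that the directed-cycle conclusion is consistent with the running example of the $3$-cycle CTLN, whose limit cycle traverses the cyclically arranged mixed-sign chambers.
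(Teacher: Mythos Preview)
Your proposal is correct and follows essentially the same approach as the paper: invoke Lemma~\ref{lem:folloW_edges} to get a walk in $T_{W,\theta}$, use Lemma~\ref{lem:single_chamber} to rule out being trapped in a single chamber, and then apply pigeonhole on the finite vertex set to extract a directed cycle. The paper's proof is considerably terser and does not spell out the minimality or closed-walk-to-cycle bookkeeping you worry about, so your version is if anything more careful than what is needed here.
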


\begin{proof}
By Lemma \ref{lem:folloW_edges}, any trajectory must follow edges of $T_{W, \theta}$. By Lemma \ref{lem:single_chamber}, a dynamic attractor can never become trapped in a single chamber. Thus, it must transition between chambers infinitely many times. Since $T_{W, \theta}$ has only finitely many vertices of, the trajectory must thus return to a chamber is has visited before, thus following a directed cycle.  
\end{proof}

Proposition \ref{prop:directed_cycle} allows us to prove that a given TLN does not have a dynamic attractor. 


\subsection{$N = 2$ competitive TLN classification}
\label{sec:class2}
\begin{figure}[ht!]
\begin{center}
\includegraphics[width = 4 in]{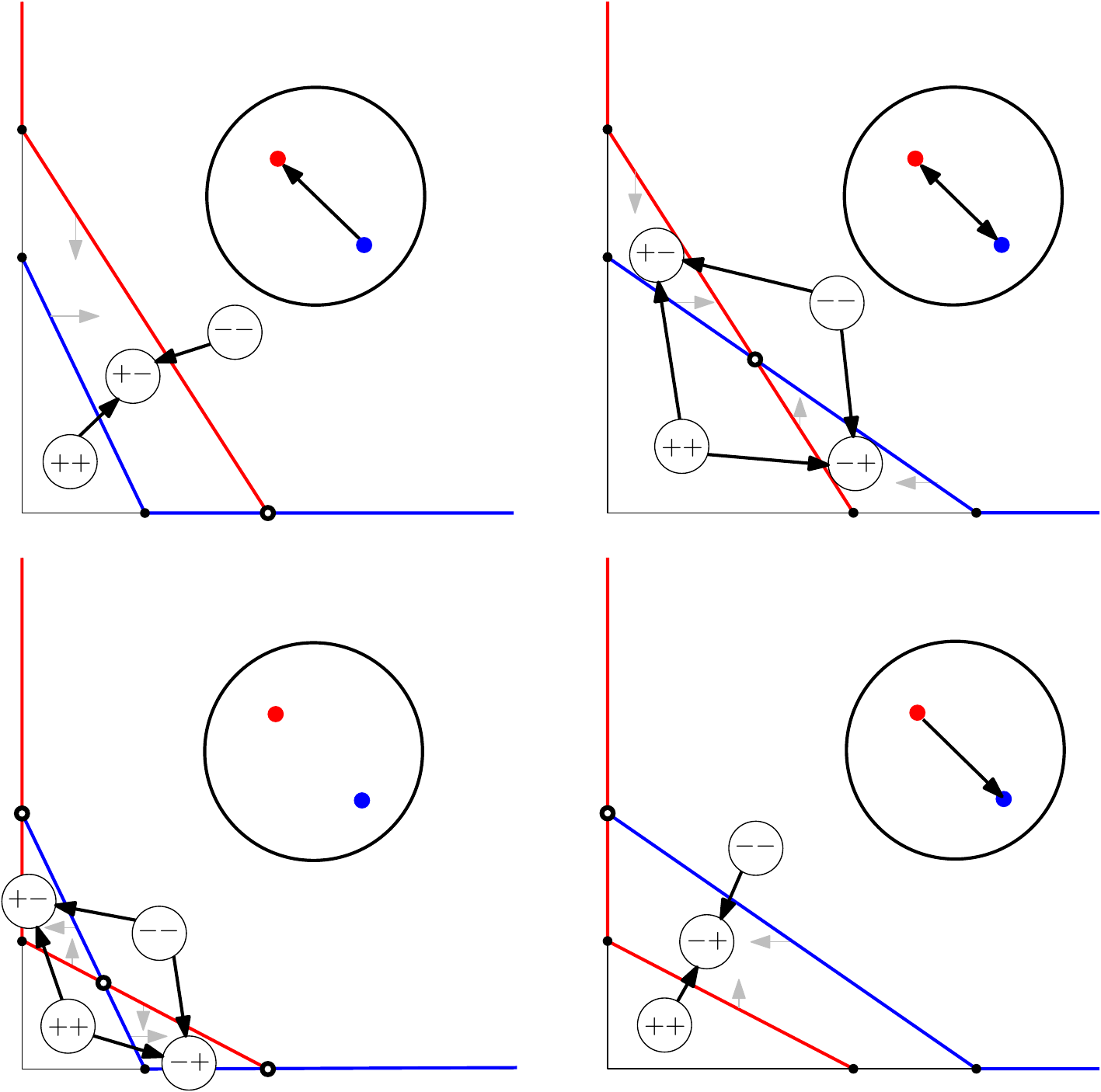}
\end{center}
\caption[Nullcline arrangements for all competitive TLNs on two neurons.]{In each subplot, we have the nullcline arrangement and transition graph $T_{W, \theta}$ corresponding to the connectivity graph in the upper right corner. Gray arrows on nullclines give the direction of the derivative on this piece of the nullclines. \label{fig:n_2_class}}
\end{figure}

The transition graph $T_{W, \theta}$ allows us to easily prove that competitive TLNs with two neurons do not have dynamic attractors.

\begin{prop}\label{prop:n=2}
All trajectories of competitive TLNs with two neurons converge to stable fixed points. 
\end{prop}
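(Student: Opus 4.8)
The plan is to prove Proposition~\ref{prop:n=2} by a short finite case analysis organized by the connectivity graph $G_W$, combined with the transition-graph machinery of Section~\ref{sec:nullcline_dynamics}. Since $W_{ii}=0$, a two-neuron competitive TLN has $G_W$ equal, up to relabeling, to one of three graphs: the independent set on $\{1,2\}$, the single directed edge $2\to1$, or the bidirected edge $1\leftrightarrow2$. For each, the nullcline arrangement $\cA(W,\theta)$ in $\R^2_{\ge0}$ consists of the two coordinate axes together with two lines $H_1,H_2$ whose normals, by competitiveness, have strictly negative coordinates; these are exactly the three pictures collected in Figure~\ref{fig:n_2_class}, each shown with its transition graph $T_{W,\theta}$.

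The key step is to verify, in each of the three cases, that $T_{W,\theta}$ is acyclic. This is a direct inspection of the vector field on the finitely many pieces of the two nullcline lines: on a piece of $H_i$ one has $\dot x_i=0$ and $\mathrm{sign}(\dot x_j)$ ($j\ne i$) constant, so each edge of $T_{W,\theta}$ is forced in a single direction, and because the normal of $H_i$ points toward the origin these directions cannot close up. Concretely, for $2\to1$ one gets $V=\{--,\,+-,\,++\}$ with edges $\{(--\to+-),\,(++\to+-)\}$; for the independent set one gets $V=\{--,\,+-,\,-+,\,++\}$ with edges $\{(++\to+-),\,(++\to-+),\,(--\to+-),\,(--\to-+)\}$, so $+-$ and $-+$ are sinks; and the bidirected case is analogous, with $+-$ and $-+$ again the sinks (this case also follows from the symmetry of $W$ via \cite{hahnloser2000permitted}). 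I would also record, for later use, that each sink chamber of $T_{W,\theta}$ contains the fixed point whose local linear system $A^\sigma$ has all eigenvalues with negative real part: this is a $2\times2$ computation, e.g. for the bidirected CTLN the full-support fixed point has linearization $-I+W$ with eigenvalues $-\varepsilon$ and $-2+\varepsilon$, a stable node.

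Given acyclicity, the proposition follows quickly. By Lemma~\ref{lem:folloW_edges} every trajectory traces a directed walk in the finite graph $T_{W,\theta}$; in an acyclic graph such a walk has length at most $|V|-1$, so every trajectory makes only finitely many chamber transitions and is eventually confined to a single chamber. (Equivalently: by Theorem~\ref{thm:mixed_sign} and Corollary~\ref{cor:total_pop} trajectories are eventually bounded, so a trajectory that failed to settle would produce a dynamic attractor, and Proposition~\ref{prop:directed_cycle} rules this out since $T_{W,\theta}$ has no directed cycle.) Once a trajectory is trapped in one chamber, Lemma~\ref{lem:single_chamber} gives convergence to the fixed point of that chamber; and a non-sink chamber has a boundary piece through which the flow escapes, so the trapping chamber must be a sink, and the limit is one of the stable fixed points identified above.

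The main obstacle — and essentially the only delicate point — is the word \emph{stable} in the statement. The transition-graph argument robustly yields convergence to \emph{a} fixed point, but in the independent-set case the arrangement also contains an unstable fixed point: the full-support fixed point $(\theta/(2+\delta),\theta/(2+\delta))$ is a saddle, with eigenvalues $\delta>0$ and $-2-\delta$, whose stable manifold is the invariant diagonal $x_1=x_2$, and the (measure-zero) family of trajectories starting on that diagonal converges to it rather than to a stable fixed point. So to make the proposition literally correct one must either restrict to generic trajectories, or explicitly note that the exceptional trajectories lie on the lower-dimensional stable manifold of a saddle and account for them; doing this uniformly over the parameter ranges ($\delta>0$, $0<\varepsilon<\delta/(\delta+1)$, $\theta>0$, and over all competitive $W$) is where the remaining care is needed. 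Everything else — the three-case reduction, the acyclicity checks, the boundedness, and the reduction to a single chamber — is routine given the results already established.
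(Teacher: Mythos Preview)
Your approach is essentially the paper's own: a finite case analysis on the connectivity graph $G_W$, verification (as in Figure~\ref{fig:n_2_class}) that the transition graph $T_{W,\theta}$ is acyclic in each case, and an appeal to Proposition~\ref{prop:directed_cycle} to rule out dynamic attractors. Your concern about the word ``stable'' is legitimate but orthogonal to the comparison---the paper's proof also stops at ``none of these graphs have dynamic attractors'' and does not separately argue stability of the limit, so the measure-zero saddle-manifold trajectories you identify in the independent-set case are simply not addressed there either.
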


\begin{proof}
The transition graph $T_{W, \theta}$ is completely determined by the connectivity graph $G_{W, \theta}$. In particular, if $W_{12} \leq -1$, the intersection of $H_1$ with the $x_2$ axis is closer to the origin than the intersection of $H_2$ with the $x_2$ axis, and if $W_{12} \leq -1$ the intersection of $H_1$ with the $x_2$ axis is further from the origin than the intersection of $H_2$ with the $x_2$ axis. Similarly, the relationship between $W_{21}$ and -1 determines the relative positions of the intersections of $H_1$ and $H_2$ with the $x_1$ axis. These relationships are sufficient to completely determine all combinatorial properties of the hyperplane arrangement $E_1, E_2, H_1, H_2$, and thus the graph $T_{W, \theta}$.  All possible arrangements are shown in Figure \ref{fig:n_2_class}. We notice that none of these graphs have directed cycles. Thus, by Proposition \ref{prop:directed_cycle}, none of these graphs have dynamic attractors.  
\end{proof}

\subsection{Non-competitive TLNs for $n = 2$ }

While we have shown that competitive TLN's with 2 neurons, this result does not hold for general TLNs with two neurons. In particular, we consider the TLN defined by 
\begin{align*}
W = \begin{pmatrix}
2 & -1 \\ 2 & 0
\end{pmatrix} && \theta = \begin{pmatrix}
2 \\ 1 
\end{pmatrix},
\end{align*}
which is introduced in \cite{tang2005analysis}. This TLN does, in fact, have a limit cycle. We can see that this limit cycle corresponds to a directed cycle in its transition graph, as illustrated in Figure \ref{fig:n=2cycle}. 

\begin{figure}[ht!]
\begin{center}
\includegraphics[width = 3 in]{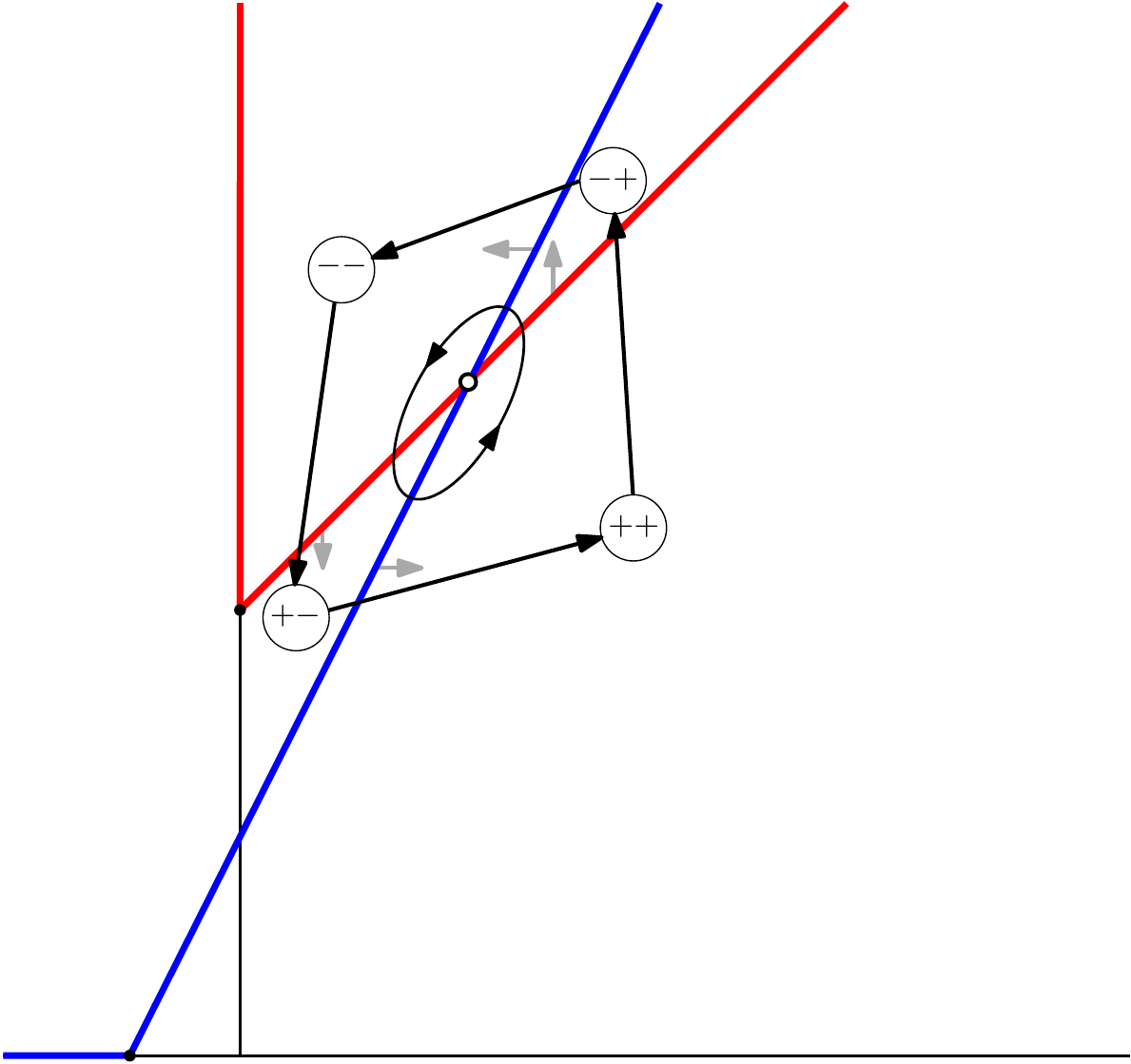}
\end{center}
\caption{A two neuron, non competitive TLN with a limit cycle.\label{fig:n=2cycle} }
\end{figure}

Additionally, notice that this limit cycle is not confined to the union of the two mixed-sign chambers, $-+$ and $+-$. Instead, the limit cycle visits all four nullcline chambers. Thus, we see that the assumption that networks are competitive is necessary for Theorem \ref{thm:mixed_sign}. 



\chapter{Constraining Dynamic Attractors of Threshold-Linear Networks } \label{chapter:TLNs2}
\section{Introduction}

In this section, we prove our main results about dynamic attractors of threshold-linear networks. 
In particular,  we aim to characterize which CTLNs have dynamic attractors and which do not, and to constrain the dynamic attractors of CTLNs. 
Our proof strategies resemble the one used to prove Theorem \ref{thm:mixed_sign} in the previous chapter, and many of our results involve showing that we can force trajectories of TLNs into smaller regions of space.

We organize our work around the following conjecture: any CTLN with a dynamic attractor must have a proper directed cycle, as defined below. 
\begin{defn}
A \emph{proper directed cycle} of a directed graph is a directed cycle which has at least one edge which is not bidirectional. 
\end{defn}
\begin{conj}Let $G$ be a graph with no proper directed cycle. Then no CTLN with graph $G$ has a dynamic attractor. \label{conj:graph_struct}
\end{conj}

Note that the converse of this conjecture does not hold: see Figure \ref{fig:dir_no_attr} for an example of a graph which has a strongly directed cycle, but no dynamic attractor. 

\begin{figure}
\begin{center}
\includegraphics{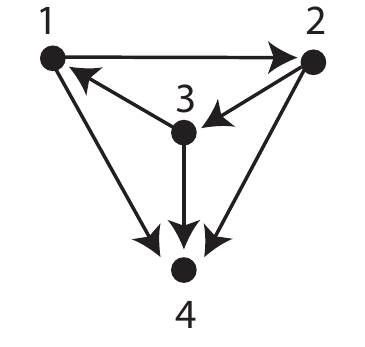}
\end{center}
\caption[A CTLN with a strongly directed cycle, but no dynamic attractor.]{While this graph has a strongly directed cycle $\{1,2,3\}$, its CTLN does not have any dynamic attractor. Instead, it has a single stable fixed point, supported on $\{4\}$.}
\label{fig:dir_no_attr}
\end{figure}

We prove this conjecture in special cases. First, we note that applying the main result of \cite{hahnloser2000permitted} to symmetric CTLNs establishes that these networks do not have dynamic attractors. Next, we are able to show in a special case that proper sources do not participate in dynamic attractors: 

\begin{ithm}
\label{thm:sources_die}
Let $G$ be a graph in which every vertex is reachable from a source. Then there is a source $j$ in $G$ such that $\lim_{t\to \infty} x_j(t) = 0$. 
\end{ithm}

We conjecture that proper sources die in general: 

\begin{conj}
\label{thm:sources_die}
Let $j$ be a source in a graph $G$. Then $\lim_{t\to \infty} x_j(t) = 0$. 
\end{conj}

We apply this result iteratively to show that directed acyclic graphs (DAGs), graphs with no directed cycles at all, do not have dynamic attractors.

\begin{ithm}
If $G$ is a directed acyclic graph, then no CTLN with graph $G$ has a dynamic attractor. \label{thm:dag}
\end{ithm}

\begin{figure}
\begin{center}
\includegraphics[width = 5 in]{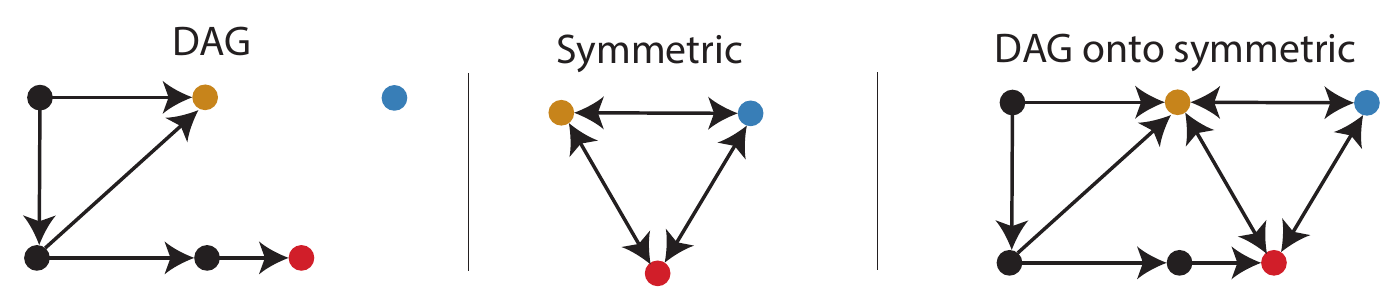}
\end{center}
\caption{Classes of graphs such that all trajectories of their CTLNs are guaranteed to converge to a  fixed point.}
\label{fig:graph_classes}
\end{figure}

Thus, we have shown that CTLNs with either symmetric or feedforward architectures do not support dynamic attractors.
It is natural to ask when it is possible to combine these two architectures to obtain a CTLN which we can guarantee has no dynamic attractors. 
One case in which we can do this is when a directed acyclic graph terminates on a symmetric graph.

\begin{ithm}
Let $G$ be a graph on vertex set $V$, and $\sigma, \tau$ be a partition of $V$. Suppose  the subgraph $G_\sigma$ is symmetric and connected, the subgraph $G_\tau$ is a directed acyclic graph, and that there are no edges from a vertex in $\tau$ back to a vertex in $\sigma$. 
Then no CTLN with graph $G$ has a dynamic attractor. \label{thm:dag_onto_symmetric}
\end{ithm}

Our Conjecture \ref{conj:graph_struct} describes the much broader set of ways we believe it is possible to combine feedforward and symmetric elements to obtain a CTLN for which all trajectories are guaranteed to approach  a fixed point. We summarize our convergence results in Figure \ref{fig:graph_classes}. 

We also apply our results to classify CTLNs on three neurons. We prove that only the directed three-cycle has a dynamic attractor. 

\begin{ithm}
The three-cycle is the only three neuron CTLN with persistent dynamic activity. 
\end{ithm}

Next, we explore to what extent our results extend to general competitive TLNs. In general, the graph $G$ does not determine the fixed point structure of the set of TLNs such that $G_W = G$ \cite{curto2019robust}. 
A graph $G$ is a \emph{robust motif} if all competitive TLNs with $G_W = G$ have the same set of fixed point supports. 
Almost all robust motifs fall into one of two infinite families, DAG1 and DAG2, with only a handful of exceptions. 
If $G$ is a robust motif, does the graph structure determine not just the fixed point structure of the graph, but also the presence of dynamic attractors?  Competitive TLNs with $G_W$ in either of these families each have a single stable fixed point. Further, in the CTLN case, these graphs have no dynamic attractors by our Theorems \ref{thm:dag} and \ref{thm:dag_onto_symmetric}, thus all trajectories approach the single stable fixed point.  It is natural to ask whether these graphs are ``dynamically robust": is it still true that all trajectories approach the single stable fixed point for all competitive TLNs such that $G_W$ is a member of DAG1 or DAG2? We conjecture that this is the case, and prove it in a special case, for graphs in a family we define and name \emph{shallow DAG1}. 

\begin{ithm}
Let $G$ be a member of shallow DAG1. Then in any competitive TLN with $G_W = G$, all trajectories approach the single stable fixed point. 
\end{ithm}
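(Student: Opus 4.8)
The plan is to peel sources off the graph one layer at a time, using a competitive-TLN version of the ``sources die'' phenomenon of Theorem~\ref{thm:sources_die} as the engine, with Proposition~\ref{prop:n=2} as the base case. Fix $G$ in shallow DAG1 and a competitive TLN with weight matrix $W$ and drive $\theta$ such that $G_W = G$. Because $G$ is a robust motif, \cite{curto2019robust} gives that $\FP(W,\theta)$ depends only on $G$; for $G$ in DAG1 it is a single support $\sigma^*$, which is stable, and we write $x^*$ for the corresponding fixed point (note $\sigma^*$ contains no source of $G$, since the stable fixed point of a DAG1 network is supported on its terminal part). By Theorem~\ref{thm:mixed_sign}, every trajectory eventually enters and never leaves the mixed-sign region $\cA$, and by Corollary~\ref{cor:total_pop} it is eventually confined to the compact slab $\theta/M \le \sum_i x_i \le \theta/m$; hence the $\omega$-limit set $\Omega$ of any trajectory is a nonempty compact connected flow-invariant subset of that region. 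It therefore suffices to show $\Omega = \{x^*\}$.

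The key step, which I would isolate as a lemma, is a competitive-TLN strengthening of Theorem~\ref{thm:sources_die}: \emph{if $j$ is a source of $G$, then $x_j(t)\to 0$ along every trajectory} (in a DAG every edge is non-bidirectional, so any source is automatically a proper source). Granting this for every source, each point of $\Omega$ has $x_j = 0$ for all sources $j$, so $\Omega$ lies in the invariant coordinate subspace $\{x : x_j = 0 \text{ for every source } j\}$, on which the flow is exactly the competitive TLN of the induced subgraph $G' = G\setminus\{\text{sources}\}$. One checks that $G'$ is again a shallow DAG1 graph on strictly fewer vertices; by induction on the number of neurons (the base case, at most two neurons, being Proposition~\ref{prop:n=2}), every trajectory of the $G'$-network converges to its unique stable fixed point $\tilde x^*$, so $\tilde x^*$ is globally asymptotically stable for that subsystem. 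Since $\Omega$ is an $\omega$-limit set of the full flow sitting inside this subsystem and contains $\tilde x^*$ (the trajectory through any point of $\Omega$ stays in $\Omega$ and converges to $\tilde x^*$, and $\Omega$ is closed), internal chain transitivity of $\omega$-limit sets forces $\Omega = \{\tilde x^*\}$. Viewed in $\R^n$, $\tilde x^*$ is a fixed point of the $G$-network, so by uniqueness of the fixed point it equals $x^*$, and $x(t)\to x^*$.

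Proving that sources die in the shallow case is the main obstacle. For a source $j$, every other neuron strongly inhibits it, so
\[
\dot x_j \;=\; -x_j + \Big[\,\theta - \sum_{k\neq j} w_{jk}\, x_k\,\Big]_+,\qquad w_{jk} := -W_{jk} \ge 1,
\]
and it is enough to prove $\liminf_{t\to\infty}\sum_{k\neq j} w_{jk}x_k(t) > \theta$: then the bracket is eventually zero, $\dot x_j = -x_j$ eventually, and $x_j\to 0$. Corollary~\ref{cor:total_pop} only yields a lower bound on $\sum_k x_k$ of order $\theta$ (and even in the two-neuron base case this is not by itself enough), so the real work is to show that asymptotically a weighted mass exceeding $\theta$ accumulates on the vertices that inhibit $j$ --- concretely on the terminal (sink) part of $G$, which in a shallow DAG1 graph lies downstream of every source and inhibits $j$ with weights at least $1$. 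I would obtain this by a comparison/trapping argument on the sink subnetwork: apply Theorem~\ref{thm:mixed_sign} and the nullcline description of Chapter~\ref{chapter:nullclines} to the sink, treating the upstream neurons as a bounded, eventually-constrained input, to pin $\sum_{k\in\text{sink}}x_k(t)$ from below, uniformly in $t$, by a quantity exceeding $\theta$ --- here the DAG1 weight pattern (weak inhibition inside the sink versus strong inhibition into $j$) and the bounded depth are essential, since the depth bound keeps this chain of estimates finite and lets the sink be analyzed as an essentially self-contained low-dimensional subsystem. Making this lower bound uniform in time, and handling a sink with nontrivial internal structure, is exactly where the ``shallow'' hypothesis enters; the general statement is the conjectural form of Theorem~\ref{thm:sources_die}, and combining that conjecture with the $\omega$-limit induction above would extend the theorem from shallow DAG1 to all of DAG1, with the analogous sink analysis handling DAG2.
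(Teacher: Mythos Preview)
Your inductive plan via an $\omega$-limit set reduction is coherent, and the passage from ``all sources die'' to convergence on the subgraph is fine. But the engine you need --- a competitive-TLN version of ``sources die'' --- is not proved, and your sketch for it has a genuine circularity. This is not a detail to be filled in later: it is the entire content of the theorem, and the paper takes a different route precisely because no such lemma is available for general competitive TLNs.

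Concretely, you reduce the dying of a source $j$ to the absolute bound $\liminf_{t\to\infty}\sum_{k\neq j}|W_{jk}|\,x_k(t) > \theta$, and propose to obtain it by pinning the activity of the target $u$ from below via a ``comparison/trapping argument on the sink subnetwork''. But $u$ is inhibited by $s$ and by every bystander, so any lower bound on $x_u$ already presupposes control on $x_s$, which is exactly what you are trying to prove. Corollary~\ref{cor:total_pop} only yields $\sum_i x_i > \theta/M$ with $M>1$, and nothing in Chapter~\ref{chapter:nullclines} upgrades this to the needed absolute bound; the shallow hypothesis limits the depth of the graph but does not by itself break this loop. Indeed, a competitive-TLN sources-die lemma of the strength you assume would already prove Conjecture~\ref{conj:dyn_robust} for all of DAG1, which the paper explicitly leaves open.

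The paper's argument is \emph{relative}, not absolute, and this is what dissolves the circularity. From the shallow hypothesis it extracts a domination structure (the ``shallow DAG1 domination property''): the target $u$ strongly dominates $s$ with respect to all of $[n]$, and $u$ strongly dominates each bystander $i$ with respect to $[n]\setminus\{s\}$. A chamber-by-chamber analysis of the surface $\{\dot x_u = \dot x_s\}$ (Lemma~\ref{lem:lyapanov}) then shows $\dot x_u - \dot x_s > 0$ throughout the interior of $\cA$, so $x_u - x_s$ is Lyapunov-like on $\cA$ and trajectories are driven toward $\{x_s = 0\}$ --- with no lower bound on $x_u$ ever required. Once $x_s$ is small, the second domination condition makes $H_u$ the outer nullcline of $\cA$, forcing $\dot x_u > 0$ thereafter and freezing all derivative signs (Proposition~\ref{prop:dag1domglobal}). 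There is no induction; the single relative comparison $x_u - x_s$ replaces the absolute activity bound that your approach cannot supply.
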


We also show that the result that trajectories of CTLNs whose graphs are in DAG1 have all trajectories approach the single stable fixed point is stable under small perturbations of the weight matrix. 

\begin{ithm}
Let $G$ be a member of DAG1. There is an open neighborhood around the CTLN weight matrix $W(G, \varepsilon, \delta)$ such that all trajectories of a TLN defined by a weight matrix $W$ in this neighborhood approach the unique fixed point. 
\end{ithm}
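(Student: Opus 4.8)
Write $W_0 = W(G,\varepsilon,\delta)$ for the CTLN weight matrix and $\theta>0$ for its drive, and let $x^*_0$ be the unique stable fixed point of this CTLN, with support $\sigma^* \in \FP(W_0,\theta)$. The plan is to combine robustness of the motif at the level of fixed points with a structurally stable version of the argument that proves all trajectories of a DAG1 CTLN reach $x^*_0$. First I would observe that the off-diagonal entries of $W_0$ are $-1+\varepsilon$ and $-1-\delta$, both strictly negative and bounded away from $-1$ (as $0<\varepsilon<\delta/(\delta+1)<1$ and $\delta>0$); hence for $\eta$ small enough every $W$ with $\|W-W_0\|<\eta$ is still competitive and still has $G_W = G$. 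Since $G\in\text{DAG1}$ is a robust motif, \cite{curto2019robust} gives $\FP(W,\theta)=\FP(W_0,\theta)=\{\sigma^*\}$, so the perturbed TLN has a unique fixed point $x^*(W)$, depending continuously (rationally in the entries of $W_{\sigma^*\sigma^*}$) on $W$. Because the row $i$ of $A^{\sigma^*}$ is $-e_i$ for $i\notin\sigma^*$, the matrix $A^{\sigma^*}$ is block triangular for the partition $(\sigma^*,[n]\setminus\sigma^*)$ and its spectrum is $\{-1\}\cup\mathrm{spec}(-I+W_{\sigma^*\sigma^*})$; the latter has strictly negative real parts at $W_0$, so by continuity of eigenvalues it does so on a (possibly smaller) $\eta$-ball, and $x^*(W)$ is asymptotically stable there.

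\textbf{A uniform trapping region.} Since $W$ is competitive, Theorem~\ref{thm:mixed_sign} and Corollary~\ref{cor:total_pop} apply to the perturbed TLN: every trajectory eventually enters the compact set $\{x\in\R^n_{\ge 0}:\ \theta/M_W \le \sum_i x_i \le \theta/m_W\}$ with $m_W=\min_{ij}(I-W)_{ij}$, $M_W=\max_{ij}(I-W)_{ij}$. These constants vary continuously, so for $W$ in the $\eta$-ball there is a single compact forward-invariant set $T$, independent of $W$, that eventually absorbs every trajectory. It therefore suffices to show that, for $W$ in a small enough neighborhood, every trajectory starting in $T$ converges to $x^*(W)$.

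\textbf{Structurally stable ``sources die'' and recursion.} The CTLN convergence proof for $G\in\text{DAG1}$ proceeds by iterating Theorem~\ref{thm:sources_die}: a source $j$ of (the current) DAG receives only strong inhibition, so $\dot x_j = -x_j + [\,\theta - \sum_{k\ne j}(1+\delta)x_k\,]_+$, and once the weighted activity of the other neurons passes $\theta$ — forced eventually by the lower bound of Corollary~\ref{cor:total_pop} — $x_j$ decays to $0$; one then restricts to the invariant face $\{x_j=0\}$, where the dynamics is the CTLN of the smaller DAG $G\setminus j$, and recurses until only $\sigma^*$ is left. The key point is that this finite recursion invokes only \emph{strict} inequalities on the entries of $W_0$ and on $\varepsilon,\delta,\theta$ (``receives only strong inhibition'', the strict bounds of Corollary~\ref{cor:total_pop}, and the eigenvalue conditions at the base case), each of which is an open condition and so persists, with a smaller positive margin, for all $W$ in a sufficiently small $\eta_0(\varepsilon,\delta,\theta)$-ball. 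Replaying the same recursion with these weakened margins then shows that for $\|W-W_0\|<\eta_0$ every trajectory in $T$ converges to $x^*(W)$, which together with the previous paragraph proves the theorem.

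\textbf{Main obstacle.} The delicate part is making the recursion \emph{uniform}: after passing to a face $\{x_j=0\}$ one is no longer looking at a CTLN but at a TLN that is only $\eta$-close to the CTLN of a DAG1 subgraph, so the inductive statement must be phrased for that enlarged class, with the allowed perturbation size possibly shrinking level by level; and the ``weighted activity exceeds the threshold'' bootstrap — driven in the CTLN case exactly by Corollary~\ref{cor:total_pop} — must be shown to still close when the inhibitory weights are only approximately $-1-\delta$, which is where explicit estimates are genuinely needed and where a too-large perturbation would break the argument. A secondary difficulty appears if the relevant DAG1 family bottoms out on a symmetric subgraph, so that Theorem~\ref{thm:dag_onto_symmetric} rather than Theorem~\ref{thm:dag} is the base case: symmetry is not an open condition, so that base case needs its own small-perturbation analysis of the associated symmetric CTLN — exactly the kind of input the companion ``shallow DAG1'' theorem supplies, and restricting to DAG1 (rather than all DAGs) is what keeps it tractable.
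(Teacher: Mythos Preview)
Your approach differs from the paper's and has a genuine gap in the ``sources die'' step. For a source $j$ in the CTLN you write $\dot x_j = -x_j + [\theta - (1+\delta)\sum_{k\neq j}x_k]_+$ and claim Corollary~\ref{cor:total_pop} forces the bracket to vanish. But the corollary gives only $\sum_i x_i > \theta/(1+\delta)$, i.e.\ $(1+\delta)\sum_i x_i > \theta$; subtracting the positive term $(1+\delta)x_j$ shows this does \emph{not} yield $(1+\delta)\sum_{k\neq j}x_k > \theta$. The paper's actual CTLN proof of Theorem~\ref{thm:sources_die} is accordingly more delicate: it first drives the trajectory into the region where a source is (approximately) the smallest coordinate via a chain of local Lyapunov functions (Lemmas~\ref{lem:lyapunov}--\ref{lem:smallest_source_becomes_smallest}), and only then shows $H_j$ is the lower boundary of $\cA$ on that region (Lemma~\ref{lem:smallest_sources_die}). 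Perturbing that multi-stage argument and then recursing on ever-smaller faces would require exactly the uniform-in-level estimates you flag as the main obstacle and do not supply. (Incidentally, DAG1 graphs are DAGs with a designated source and target; there is no symmetric base case to worry about.)

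The paper avoids the recursion entirely by working with the \emph{target} vertex $n$ rather than the sources. With a topological order placing the target last, it proves directly that $L_i = x_i - x_n$ is a local Lyapunov function for the pair $(\Omega_i,\Omega_{i-1}(\delta_i))$, where $\Omega_i = \{x\in\cA: x_j \le x_n \text{ for all } j\le i\}$ (Lemmas~\ref{lem:i_vs_n} and~\ref{lem:lyapunov_2}). Once trajectories reach $\Omega_{n-1}$, the target's nullcline $H_n$ is the outer boundary of $\cA$, so $\dot x_n > 0$ thereafter and convergence follows. This yields an \emph{explicit} open condition on $W$, namely $W_{nk} - W_{ik} > \frac{1}{d_{in}(i)}(W_{in}+1)$ for all $i,k\neq n$, satisfied by the CTLN weights with margin; there is no recursion on subgraphs and hence no shrinking-tolerance issue.
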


\setcounter{ithm}{0}

\section{Local Lyapunov Functions}
Our Proposition \ref{thm:mixed_sign} shows that all trajectories of TLNs approach the mixed sign chamber. In this section, we explore a wider variety of cases when we can prove that all trajectories of a dynamical system approach a smaller region of space. In particular, we introduce \emph{local Lyapunov functions}, a tool for showing all trajectories which are trapped in a region $B$ eventually approach a smaller region $A\subset B$. We begin with a few special cases, proving dynamic consequences of the combinatorial concept of graphical domination, before moving on to the general case. 

\subsection{Domination}

In this section, we review the concept of \emph{domination}, introduced in \cite{curto2019fixed}. While \cite{curto2019fixed} uses domination to establish constraints on fixed points of CTLNs, we show that it constrains dynamic attractors as well. 

\begin{defn}
Let $W$ be the weight matrix of a TLN on $n$ neurons. We say $k$ \emph{strongly dominates} $j$, written $k >_W j$,  if $\wt W_{ki} \geq \wt W_{ji}$ for all $i$.  

If $W$ is the weight matrix of a CTLN with graph $G$, then strong domination is equivalent to the following condition on $G$, termed \emph{graphical domination}: 
$k$ graphically dominates $j$, written $k >_G j$, if the following conditions hold: 
\begin{itemize}
\item For each vertex $i \neq j, k$, if $i\to j$, then $i \to k$. 
\item  $k\not\to j$
\item  $j \to k$
\end{itemize}
\end{defn}  

We also consider a weaker version of domination. 

\begin{defn}
Let $W$ be the weight matrix of a TLN on $n$ neurons. We say $k$ \emph{weakly dominates} $j$ if for all  $i\neq j, k$, $\wt W_{ki} \geq W_{ji}$, and $\wt W_{jk} = \wt W_{kj}$.   

If $W$ is the weight matrix of a CTLN with graph $G$, then weak domination is equivalent to the following condition on $G$, termed \emph{weak graphical domination}:  $k$ weakly graphically dominates $j$ if the following conditions hold

\begin{itemize}
\item For all $i$, if $i\to j$, then $i \to k$. 
\item $k\to j$ if and only if $j \to k$. 
\end{itemize}

\end{defn}

\begin{defn}
Neurons $j$ and $k$ are \emph{equal input} if $j$ weakly dominates $k$ and $k$ weakly dominates $j$. 
\end{defn}

\begin{ex}
Consider the graph with $V = \{1, 2, 3\}$, $E = \{(1 \to 2), (2\leftrightarrow 3), (3 \to 1)\}$, pictured in Figure \ref{fig:dom_ex}. Then 2 strongly dominates 1, 2 weakly dominates 3, and 1 and 3 have no domination relationship.
\begin{figure}[ht!]
\begin{center}
\includegraphics[width = 3 in]{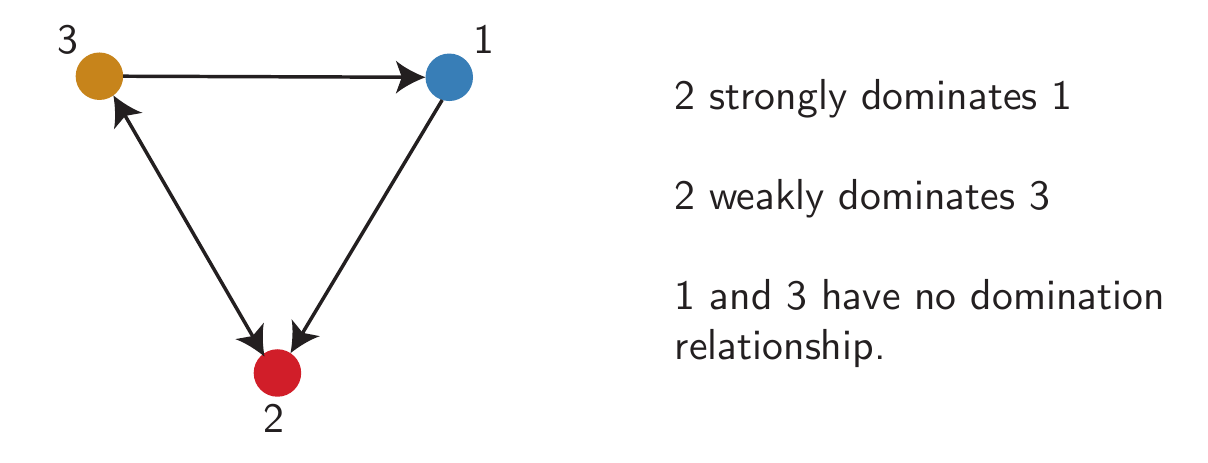}
\end{center}
\caption[Examples of strong and weak domination.]{In this graph, 2 strongly dominates 1, 2 weakly dominates 3, and 1 and 3 have no domination relationship. \label{fig:dom_ex}}
\end{figure}
\end{ex}

\subsubsection{Basic results}
We begin with the following observation about strong domination: 
\begin{obs}\label{obs:strong_domination}
Let $k$ strongly dominate $j$,  $\wt W_{ki} \geq \wt W_{ji}$. Then for all $x$ in the positive orthant, 
$$h_k^*(x) - h_j^*(x) = \sum_{i= 1}^n (\wt W_{ki}-\wt W_{ji})x_i \geq 0.$$
\end{obs}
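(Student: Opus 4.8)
The plan is to reduce the claimed identity to a one-line computation by absorbing the leakage term $-x_i$ of each $h_i^*$ into the shifted weight matrix $\wt W = W - I$, and then to read off nonnegativity directly from the two hypotheses (strong domination and positivity of $x$). Since strong domination was defined precisely as the entrywise inequality $\wt W_{ki} \ge \wt W_{ji}$, there is essentially nothing left to do once the bookkeeping is set up correctly.

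Concretely, the first step is to observe that
\[
h_i^*(x) = -x_i + \sum_{\ell=1}^n W_{i\ell}x_\ell + \theta = \sum_{\ell=1}^n \left(W_{i\ell} - \delta_{i\ell}\right) x_\ell + \theta = \sum_{\ell=1}^n \wt W_{i\ell}\, x_\ell + \theta,
\]
where $\delta_{i\ell}$ is the Kronecker delta, so that the $-x_i$ term is exactly the $i$-th diagonal contribution of $\wt W$. Taking the difference of the two functionals for $i = k$ and $i = j$, the constant $\theta$ cancels and one obtains
\[
h_k^*(x) - h_j^*(x) = \sum_{\ell=1}^n \left(\wt W_{k\ell} - \wt W_{j\ell}\right) x_\ell,
\]
which is the asserted formula.

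The second step is to bound this sum from below. By the hypothesis $k >_W j$, every coefficient satisfies $\wt W_{k\ell} - \wt W_{j\ell} \ge 0$; and since $x$ lies in the positive orthant, $x_\ell \ge 0$ for all $\ell$. Hence each summand is a product of two nonnegative numbers, so the total is nonnegative, giving $h_k^*(x) - h_j^*(x) \ge 0$.

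I do not expect any genuine obstacle here: the only point requiring care is verifying that the definition of $\wt W$ correctly incorporates the diagonal leakage terms (so that the difference $h_k^* - h_j^*$ has no residual $\pm x_k, \pm x_j$ contributions outside the sum over $\ell$), and that the strong domination hypothesis is being applied in the ``for all $i$'' form, including the indices $i = j$ and $i = k$. Once this is pinned down, the observation is immediate, and it is precisely the inequality needed to feed into the subsequent arguments that strong domination forces $x_j$ to decay along trajectories.
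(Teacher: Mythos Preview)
Your proof is correct and is exactly the computation the paper has in mind; the statement is labeled as an observation and given no separate proof in the paper precisely because the identity $h_i^*(x) = \sum_\ell \wt W_{i\ell}x_\ell + \theta$ together with the entrywise inequality $\wt W_{ki}\ge \wt W_{ji}$ and $x_\ell\ge 0$ makes it immediate.
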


We first see how this shapes the arrangement of nullclines. 

\begin{prop}\label{prop:dom_sep}
If $k$ strongly dominates $j$, then $H_j$ separates $H_k$ from the origin. 
\end{prop}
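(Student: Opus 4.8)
The plan is to unwind the definitions of the affine functionals $h_i^*$ and the orientation of the hyperplanes, and then read off the claim directly from Observation~\ref{obs:strong_domination}. The key point is that ``$H_j$ separates $H_k$ from the origin'' should be interpreted within the positive orthant (which is where all the dynamics live), and that the origin lies strictly on the positive side of every hyperplane.

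First I would record the two elementary facts. Writing $\wt W = W - I$, we have $h_i^*(x) = \sum_{\ell} \wt W_{i\ell} x_\ell + \theta$, so $H_i = \{h_i^* = 0\}$ and $h_i^*(0) = \theta > 0$; hence $0 \in H_i^+$ for every $i$, i.e.\ the origin lies strictly on the positive side of each $H_i$. Second, by Observation~\ref{obs:strong_domination}, if $k$ strongly dominates $j$ then for every $x$ in the positive orthant,
\begin{align*}
h_k^*(x) - h_j^*(x) = \sum_{\ell=1}^n (\wt W_{k\ell} - \wt W_{j\ell})\, x_\ell \;\geq\; 0,
\end{align*}
so $h_j^*(x) \le h_k^*(x)$ throughout $\R^n_{\ge 0}$.

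Next I would combine these: let $x \in H_k \cap \R^n_{\ge 0}$ be arbitrary. Then $h_k^*(x) = 0$, and the inequality above gives $h_j^*(x) \le 0$, so $x \in \overline{H_j^-}$. Thus $H_k \cap \R^n_{\ge 0}$ is contained in the closed half-space $\overline{H_j^-}$. Since $h_j^*(0) = \theta > 0$, the origin is not in $\overline{H_j^-}$; in fact $0$ is strictly on the opposite ($+$) side of $H_j$. Therefore $H_j$ separates $H_k$ (restricted to the positive orthant, which is all that is relevant) from the origin, which is the claim. I would also note the degenerate edge cases are harmless: if $H_k \cap \R^n_{\ge 0} = \varnothing$ the statement is vacuous, and points of $H_k$ lying on $H_j$ itself are still consistent with ``weak'' separation.

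I do not anticipate a real obstacle here — the proposition is essentially a one-line consequence of Observation~\ref{obs:strong_domination} once the sign conventions are pinned down. The only thing requiring a word of care is making explicit that the relevant comparison of the functionals $h_j^*, h_k^*$ holds only on $\R^n_{\ge 0}$ (because strong domination controls $\wt W_{k\ell} - \wt W_{j\ell}$ only as a sign, not a magnitude), so the separation statement is naturally a statement about the hyperplanes intersected with the positive orthant; this matches how $H_i$ is used as a nullcline piece elsewhere in the chapter.
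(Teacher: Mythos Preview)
Your proposal is correct and is essentially the same argument as the paper's: both invoke Observation~\ref{obs:strong_domination} to get $h_j^*(x)\le h_k^*(x)$ on $\R^n_{\ge 0}$ and then read off the separation. The only cosmetic difference is that you evaluate $h_j^*$ on $H_k$ (getting $H_k\subseteq\overline{H_j^-}$ directly), whereas the paper evaluates $h_k^*$ on $H_j$ (getting $H_j\subseteq\overline{H_k^+}$) and infers the separation from that; your phrasing is arguably the more direct route to the stated conclusion.
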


\begin{proof}
By Observation \ref{obs:strong_domination}, $h_k^*(x) - h_j^*(x) \geq 0$ throughout the positive orthant.
Thus, on the $H_j$ nullcline, where  $h_j^*(x) = 0$, we must have $h_k^*(x)\geq 0$. Therefore, within the positive orthant, $H_j$ is contained on the positive side of $H_k$, thus $H_j$ separates $H_k$ from the origin. 
\end{proof}

Domination has some simple consequences for fixed points. For instance, if any neuron $k$ strongly dominates $j$, $j$ is not contained within any fixed point support. For a more thorough account of how domination constrains fixed points, see  \cite{curto2019fixed}.
We now turn to the easiest constraints on dynamic attractors provided by domination.

\begin{prop}\label{prop:supersource}
If $i$ is a proper source in $G$, and $i\to j$ for all vertices $j$, then $\lim_{t\to\infty} x_i(t) = 0$. 
\end{prop}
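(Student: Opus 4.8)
The plan is to turn the whole question into a one–dimensional comparison for $x_i$ and then bootstrap. Since $i$ is a source, no edge points into $i$, so $W_{ij}=-1-\delta$ for every $j\neq i$ and $W_{ii}=0$; hence
\[
\dot x_i \;=\; -x_i + \left[\theta - (1+\delta)S\right]_+, \qquad S := \sum_{j\neq i} x_j .
\]
Because $S\geq 0$ this already gives $\dot x_i \leq -x_i+\theta$, so $\limsup_{t\to\infty} x_i(t)\leq \theta$. The crucial structural fact is that $i\to j$ for \emph{every} vertex $j$, so $W_{ji}=-1+\varepsilon$ is the \emph{same} for all $j\neq i$: thus the neurons in $V\setminus\{i\}$ evolve as a CTLN on $G|_{V\setminus\{i\}}$ driven by the \emph{uniform} (time–varying) external input $\tilde\theta(t) := \theta - (1-\varepsilon)x_i(t)$.

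Next I would establish a sub-network version of the lower total–population bound in Corollary~\ref{cor:total_pop}, but against a \emph{fixed} target so that moving–nullcline subtleties disappear: if $b^*>0$ is a constant with $\tilde\theta(t)\geq b^*$ for all large $t$, then $S(t)\geq \tfrac{b^*}{1+\delta}$ for all large $t$. This is proved exactly as Corollary~\ref{cor:total_pop}: using $[y]_+\geq y$, $W_{kk}=0$, and $W_{k\ell}\geq -1-\delta$ one gets $\dot S \geq (n-1)\tilde\theta - \big(1+(1+\delta)(n-2)\big)S$, so on the half-space boundary $\{S = b^*/(1+\delta)\}$ one has $\dot S \geq \tfrac{\delta}{1+\delta}\,b^*>0$. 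Hence $\{S\geq b^*/(1+\delta)\}$ is forward invariant and, since trajectories are bounded, is entered in finite time---a local Lyapunov–type argument for $S$ on that half-space.

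Finally I would run an induction on upper bounds $M_k$ for $\limsup_{t\to\infty}x_i(t)$, starting from $M_0=\theta$. Fix a small $\epsilon'>0$. If $\limsup_t x_i \leq M_{k-1}$, then eventually $x_i<M_{k-1}+\epsilon'$, hence eventually $\tilde\theta(t)\geq b^*_k := \theta-(1-\varepsilon)(M_{k-1}+\epsilon')$, which is positive once $\epsilon'<\tfrac{\varepsilon\theta}{1-\varepsilon}$ (using $M_{k-1}\leq\theta$). By the previous step, eventually $(1+\delta)S\geq b^*_k$, so $\left[\theta-(1+\delta)S\right]_+\leq \theta-b^*_k=(1-\varepsilon)(M_{k-1}+\epsilon')$ and therefore $\dot x_i\leq -x_i+(1-\varepsilon)(M_{k-1}+\epsilon')$, giving $\limsup_t x_i\leq M_k:=(1-\varepsilon)(M_{k-1}+\epsilon')$. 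Since $0<1-\varepsilon<1$, the recursion $M_k=(1-\varepsilon)(M_{k-1}+\epsilon')$ decreases to $\tfrac{(1-\varepsilon)\epsilon'}{\varepsilon}$; thus $\limsup_t x_i(t)\leq \tfrac{(1-\varepsilon)\epsilon'}{\varepsilon}$ for every small $\epsilon'$, which forces $\lim_{t\to\infty}x_i(t)=0$.

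The main point---and the reason this is the easiest case---is the contraction factor $1-\varepsilon = |W_{ji}|<1$ that drives the bootstrap; it exists precisely because $i$ sends only \emph{weak} inhibition to every neuron. A source that is not connected to all of $V$ would force the factor $1+\delta>1$ somewhere and the iteration would diverge, so that case (the general statement that proper sources die) needs a genuinely different argument. The only real bookkeeping to be careful about is that the finite-time entry into $\{S\geq b^*_k/(1+\delta)\}$ and the invariance must be handled uniformly enough to iterate over all $k$, which is routine once the thresholds $b^*_k$ are monotone; they are, since the $M_k$ decrease.
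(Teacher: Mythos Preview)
Your argument is correct, but it takes a quite different route from the paper's. The paper's proof is geometric and essentially one line of nullcline reasoning: since $i$ is a proper source and $i\to j$ for every $j$, each $j\neq i$ \emph{strongly dominates} $i$, so by Proposition~\ref{prop:dom_sep} the hyperplane $H_i$ lies between the origin and every other $H_j$. Hence $H_i$ is the entire lower boundary of the mixed--sign region~$\cA$, and Theorem~\ref{thm:mixed_sign} forces every trajectory onto the $H_i^-$ side. There $\dot x_i\leq 0$, so $x_i$ is eventually monotone and must approach its nullcline; since it cannot approach $H_i$ from the repelling side, it approaches $E_i$, i.e.\ $x_i\to 0$.

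Your approach instead decouples the system: you view $\{x_j\}_{j\neq i}$ as a CTLN with uniform time--varying drive $\tilde\theta(t)=\theta-(1-\varepsilon)x_i(t)$, prove a sub--network lower population bound $S\geq b^*/(1+\delta)$ by a direct differential inequality, and run a bootstrap $M_k=(1-\varepsilon)(M_{k-1}+\epsilon')$ whose contraction factor $1-\varepsilon$ comes precisely from the weak inhibition $i$ sends everywhere. This is a clean, self--contained ODE comparison argument; it makes the quantitative role of $\varepsilon$ explicit and avoids any appeal to the global nullcline geometry. The trade--off is that the paper's domination viewpoint is shorter and slots directly into the framework used for Proposition~\ref{prop:supersink} and the later source--dying results, whereas (as you correctly observe) your contraction iteration breaks as soon as the source fails to send an edge to some vertex, since the relevant factor becomes $1+\delta>1$.
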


\begin{proof}
Since $i$ is a proper source in $G$, and $i\to j$ for all $j\in [n]$, for any $j \in [n]$, $j$ strongly dominates $i$. Thus, $H_i$ separates all other $H_j$ from the origin. Thus, $H_i$ is the lower boundary of $\cA$. Thus, by Proposition \ref{thm:mixed_sign}, all trajectories approach the negative side of $H_i$. Any trajectory must either eventually cross $H_i$, or approach a fixed point on $H_i$. Since $i$ is proper source, it is not contained in any fixed points, to if a trajectory is approaching a fixed point, $\lim_{t\to\infty} x_i(t) = 0$. Otherwise, once our trajectory crosses $H_i$ at time $T$, for all $t >T$, $x_i' \leq 0$. Thus, since the quantity $x_i$ is monotonic, it approaches a fixed value. Thus, the trajectory must approach the $i^{th}$ nullcline. It cannot approach the interior of the nullcline $H_i$, thus it must approach $E_i$. Therefore, $\lim_{t\to\infty} x_i(t) = 0$.
\end{proof}

\begin{prop}\label{prop:supersink}
If $i$ is a proper sink in $G$, and $j\to i$ for all $j\in [n]$, then $\lim_{t\to\infty} x_i(t) = \theta$ and $\lim_{t\to\infty} x_j(t) = 0$ for all $j\neq i$. 
\end{prop}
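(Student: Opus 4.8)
The plan is to show that the hypotheses force the sink $i$ to \emph{strongly dominate} every other vertex, that $\{i\}$ is the \emph{unique} fixed point of the CTLN, and that every trajectory converges to it; the stated limits then follow since the fixed point supported on $i$ has $x_i^*=\theta$ and $x_j^*=0$. This is the ``sink'' counterpart of Proposition~\ref{prop:supersource}, and the argument runs dual to it.

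First I would extract the combinatorial input. Since $i$ is a sink, $i\not\to j$ for every $j$, and by hypothesis $j\to i$ for every $j\neq i$; feeding these into the definition of graphical domination gives $i>_G j$ for all $j\neq i$ (the conditions $i\not\to j$ and $j\to i$ are immediate, and $\ell\to j\Rightarrow\ell\to i$ holds because every vertex already points to $i$). In CTLN weights this yields, with the functionals of Section~\ref{sec:nullclines},
\begin{align*}
h_i^*(x)-h_j^*(x)=\sum_{\ell}\big(\widetilde W_{i\ell}-\widetilde W_{j\ell}\big)x_\ell \;\ge\; \delta\, x_i+\varepsilon\, x_j \;\ge\; 0
\end{align*}
throughout the positive orthant, since the coefficient of $x_i$ is $\delta$, that of $x_j$ is $\varepsilon$, and all remaining coefficients are nonnegative. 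Hence $h_i^*=\max_k h_k^*$ everywhere on $\R_{\ge0}^n$; by Proposition~\ref{prop:dom_sep}, $H_i$ is the outermost nullcline hyperplane, and on the mixed-sign region $\cA$ (where not all $h_k^*$ are negative) we have $h_i^*\ge 0$, so $x_i'\ge 0$ there. The same inequality pins down the fixed points: any fixed point $x^*$ with support $\sigma$ is nonempty (as $x=0$ is not a fixed point), must contain $i$ (otherwise $h_i^*(x^*)\le 0$ while $h_i^*(x^*)\ge h_j^*(x^*)+\varepsilon x_j^*>0$ for $j\in\sigma$), and cannot contain a second vertex $j$ (otherwise $0=h_i^*(x^*)-h_j^*(x^*)\ge\delta x_i^*>0$); so $\sigma=\{i\}$, and then $x_i^*=\theta$.

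Next the dynamical part. By Theorem~\ref{thm:mixed_sign} every trajectory approaches $\cA$ and, once in $\cA$, stays; and since $\cA^+$ and $\cA^-$ are separated by $\cA$, each trajectory either (a) remains in $\overline{\cA^+}$ or in $\overline{\cA^-}$ for all large $t$, or (b) eventually enters and stays in $\cA$. In case (a) every coordinate $x_k$ is eventually monotone and bounded (all derivatives have fixed sign, and trajectories are bounded by Corollary~\ref{cor:total_pop}), so the trajectory converges to a fixed point. In case (b), $x_i'\ge 0$ for all large $t$, so by Lemma~\ref{lem:single_chamber} $x_i(t)$ increases to a limit $L_i$; here $L_i>0$, for if $L_i=0$ then $x_i(t)\equiv 0$ eventually, which forces $(-1+\varepsilon)\sum_{j\neq i}x_j+\theta\le 0$, i.e.\ $\sum_j x_j\ge\theta/(1-\varepsilon)$, contradicting Corollary~\ref{cor:total_pop}. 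On the $\omega$-limit set $\Omega$ the coordinate $x_i$ is the constant $L_i>0$, so $x_i'=0$ and hence $h_i^*=0$ on $\Omega$, i.e.\ $\Omega\subseteq H_i$; there $h_\ell^*\le h_i^*=0$ gives $x_\ell'\le 0$ for all $\ell$, while remaining on $H_i$ requires $\frac{d}{dt}h_i^*(x)=\sum_\ell\widetilde W_{i\ell}x_\ell'=0$ with every summand $\widetilde W_{i\ell}x_\ell'\ge 0$ (competitiveness), forcing $x_\ell'=0$ for all $\ell$. Thus $\Omega$ consists of fixed points; being connected with the fixed points isolated, $\Omega=\{i\}$, and the trajectory converges to $\{i\}$. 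In either case $x_i(t)\to\theta$ and $x_j(t)\to 0$ for $j\neq i$.

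I expect the main obstacle to be exactly this last (global, dynamical) step: upgrading ``approaches $\cA$'' to the honest case split and, in case (b), ruling out invariant sets confined to the hyperplane $H_i$. The $\omega$-limit argument above handles it, but one could instead estimate $z=\sum_{j\neq i}x_j$ directly, using $h_j^*(x)\le h_i^*(x)-\delta x_i-\varepsilon x_j$, $h_i^*(x(t))\to 0$, and $z'=-z+\sum_{j\neq i}[h_j^*(x)+x_j]_+$ together with the uniform gap $[h_j^*(x)+x_j]_+\le x_j$ (strict when $x_j>0$) to deduce $z(t)\to 0$; this is precisely the kind of estimate the local Lyapunov function machinery of this section is meant to systematize.
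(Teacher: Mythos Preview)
Your proof is correct and follows the same line as the paper's: show that $i$ strongly dominates every other vertex, deduce that $H_i$ is the outer boundary of $\cA$, invoke Theorem~\ref{thm:mixed_sign} to get $x_i'\ge 0$ once in $\cA$, and conclude that the unique fixed point (supported on $\{i\}$) is the limit. Your $\omega$-limit argument on $H_i$ is a more rigorous completion of the convergence step, which the paper's own proof treats tersely; the paper simply records that $x_i$ must approach a fixed value and that $\{i\}$ is the only fixed-point support, without spelling out why the remaining coordinates must also settle.
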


\begin{proof}
Since $i$ is a proper sink in $G$ and $i$ receives an edge from every other neuron, $i$ dominates all other neurons $j\in [n]$. Thus, $H_j$ is separated from the origin from each $H_i$. Therefore, $H_i$ forms the outer boundary of $\cA$. Thus, by Proposition \ref{thm:mixed_sign}, all trajectories approach the positive side of $H_i$. Any trajectory must either eventually cross $H_i$, or approach a fixed point on $H_i$. Either way, the value of $x_i$ must approach a fixed value. Since $i$ dominates all other neurons, the only fixed point support of this network is the singleton set $\{i\}$. Thus, in this case $\lim_{t\to\infty} x_i(t) = \theta$ and  $\lim_{t\to\infty} x_j(t) = 0$ for all $j\neq i$. 
 
\end{proof}


\subsubsection{Dynamic domination relationships}

While Propositions \ref{prop:supersource} and \ref{prop:supersink} demonstrate that the purely combinatorial property of graphical domination has dynamic consequences, they only apply in extremely restrictive cases. In this next section, we show that domination has further-reaching consequences. We prove several results that show that domination relationships between the neurons $j$ and $k$ have consequences for the relationship between activity levels $x_j(t)$ and $x_k(t)$ over the long run. In order to do this, we must first prove the following lemma about threshold nonlinearities. 
\begin{lem}\label{lem:thresh_ineq}
Let $a \leq 0$. If $a \leq x - y$, then $a \leq [x]_+ - [y]_+$. Further, these inequalities can be made strict: if $b < 0$, $b < x - y$, then $b < [x]_+ - [y]_+$.
\end{lem}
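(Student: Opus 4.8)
The plan is to prove Lemma~\ref{lem:thresh_ineq} by a short case analysis on the signs of $x$ and $y$, using only the definition $[t]_+ = \max\{t, 0\}$ and the elementary facts that $[t]_+ \geq t$ and $[t]_+ \geq 0$ for all $t$. The statement asserts that the map $t \mapsto [t]_+$ is $1$-Lipschitz ``from below'' in the sense that it cannot decrease a nonpositive difference; the strict version is the obvious refinement.

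For the non-strict claim, suppose $a \leq 0$ and $a \leq x - y$. I would split into two cases according to the sign of $x$. If $x \geq 0$, then $[x]_+ = x$, and since $[y]_+ \geq y$ we get $[x]_+ - [y]_+ = x - [y]_+ \geq x - [y]_+$; more carefully, $[x]_+ - [y]_+ = x - [y]_+$, and we must compare this with $a$. Here it is cleaner to case on $y$ instead: if $y \leq 0$ then $[y]_+ = 0 \geq y$, so $[x]_+ - [y]_+ = [x]_+ - 0 = [x]_+ \geq x \geq x - y + y \geq a + y$; this is not quite what we want, so the right bookkeeping is: $[x]_+ - [y]_+ \geq x - [y]_+$ always (since $[x]_+ \geq x$), and $[y]_+ = \max\{y,0\}$. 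If $y \leq 0$, then $[y]_+ = 0$ and $[x]_+ - [y]_+ = [x]_+ \geq 0 \geq a$. If $y > 0$, then $[y]_+ = y$, so $[x]_+ - [y]_+ = [x]_+ - y \geq x - y \geq a$. Either way $[x]_+ - [y]_+ \geq a$. The cleanest organization, then, is to case solely on the sign of $y$: when $y \leq 0$ use $[x]_+ \geq 0 \geq a$; when $y > 0$ use $[x]_+ \geq x$ together with $[y]_+ = y$ to get $[x]_+ - [y]_+ \geq x - y \geq a$.

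The strict version follows verbatim with the same two cases: if $b < 0$ and $b < x - y$, then for $y \leq 0$ we have $[x]_+ - [y]_+ = [x]_+ \geq 0 > b$, and for $y > 0$ we have $[x]_+ - [y]_+ \geq x - y > b$. So the inequality propagates strictly in both branches.

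I do not anticipate any genuine obstacle here; the only thing to be careful about is choosing the case split (on the sign of $y$ rather than $x$) that makes both branches immediate, and remembering that the hypothesis $a \leq 0$ (resp. $b < 0$) is exactly what is needed in the branch $y \leq 0$ where $[x]_+ - [y]_+$ could be as small as $0$. I would write this out as roughly four or five lines of text with an inline \texttt{case} structure rather than a displayed \texttt{align} environment, to avoid any blank-line hazards in display math.
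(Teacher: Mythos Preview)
Your argument is correct and follows the same case-analysis strategy as the paper, which splits into four cases on the signs of both $x$ and $y$; your two-case split on the sign of $y$ alone is a cleaner organization that subsumes the paper's cases (when $y \leq 0$ the sign of $x$ is irrelevant since $[x]_+ \geq 0$, and when $y > 0$ only $[x]_+ \geq x$ is needed).
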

\begin{proof}
We consider four cases:
\begin{itemize}
\item Case 1: $x \geq 0$ and $y \geq 0$. Then 
\begin{align*}
a \leq x- y = [x]_+ - [y]_+\\
b < x- y = [x]_+ - [y]_+
\end{align*}
\item Case 2: $x \geq 0$ and $y < 0$. Then 
\begin{align*}
a \leq 0 \leq [x]_+ = [x]_+ - [y]_+\\
b < 0 \leq [x]_+ = [x]_+ - [y]_+
\end{align*}
\item Case 3: $x < 0$ and $y \geq 0$. Then 
\begin{align*}
a \leq x - y < [x]_+ - [y]_+\\
b < x - y < [x]_+ - [y]_+
\end{align*}
\item 
Case 4: $x < 0$ and $y < 0$. Then 
\begin{align*}
a \leq 0 = [x]_+ - [y]_+\\
b <0 = [x]_+ - [y]_+
\end{align*}
\end{itemize}
\end{proof}
\begin{prop}[Weak domination] \label{prop:weak_dom}
If $k$ weakly dominates $j$, then if $x_k(T) \geq x_j(T)$, then $x_k(t) \geq x_j(t)$ for all $t \geq T$. 
\end{prop}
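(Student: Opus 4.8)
\textbf{Proof proposal for Proposition \ref{prop:weak_dom}.}

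The plan is to show that the difference $x_k(t) - x_j(t)$ cannot become negative once it is nonnegative, by analyzing its time derivative along trajectories. First I would write down the differential equation for the difference $z(t) := x_k(t) - x_j(t)$. From Equation \eqref{eqn:tln},
\begin{align*}
\frac{dz}{dt} = -z + \left[\sum_{i=1}^n W_{ki}x_i + \theta\right]_+ - \left[\sum_{i=1}^n W_{ji}x_i + \theta\right]_+.
\end{align*}
The key observation is that because $k$ weakly dominates $j$, we have $\wt W_{ki} \geq W_{ji}$ for all $i \neq j,k$, together with $\wt W_{jk} = \wt W_{kj}$ (and the diagonal entries $W_{kk} = W_{jj} = 0$). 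I would use this to bound the difference of the two linear arguments $\sum_i W_{ki}x_i + \theta$ and $\sum_i W_{ji}x_i + \theta$ from below in terms of $z$ itself. The crucial index to track carefully is $i = j$ and $i = k$: the argument for neuron $k$ contains the term $W_{kj}x_j$ while the argument for neuron $j$ contains $W_{jk}x_k$, and since $W_{kj} = W_{jk}$, the off-diagonal contribution of these two coordinates to the difference is $W_{kj}(x_j - x_k) = -W_{kj}z$. For all other indices $i$, the coefficient difference $W_{ki} - W_{ji} \geq 0$ multiplies $x_i \geq 0$, contributing a nonnegative amount. So the difference of linear arguments is at least $-W_{kj}z \geq -(1-\varepsilon)z$ roughly, hence at least some $-Cz$ with $C$ depending on the weight bounds (for CTLNs, $C = 1 - \varepsilon < 1$; for general competitive TLNs one takes $C = -\min_i W_{ij}$ or similar).

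Next I would invoke Lemma \ref{lem:thresh_ineq}: if the difference of the linear arguments is bounded below by $-Cz$, then the difference of their positive parts is also bounded below by $-Cz$ (applying the lemma with the roles set so that when $z \geq 0$ the lower bound $-Cz \leq 0$ qualifies). This gives the differential inequality
\begin{align*}
\frac{dz}{dt} \geq -z - Cz = -(1+C)z
\end{align*}
valid at any time when $z(t) \geq 0$. Actually I need to be slightly more careful: the lower bound $-Cz$ on the linear difference requires knowing $z \geq 0$ so that $-Cz \leq 0$ and Lemma \ref{lem:thresh_ineq} applies; when $z < 0$ the argument changes sign. So the clean statement is: whenever $z(t) \geq 0$, $\dot z(t) \geq -(1+C)z(t)$.

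Finally, I would conclude by a standard comparison / Gronwall-type argument: suppose for contradiction that $z(T) \geq 0$ but $z(t_1) < 0$ for some $t_1 > T$. By continuity there is a last time $t_0 \in [T, t_1)$ with $z(t_0) = 0$, and $z(t) < 0$ on $(t_0, t_1]$. But at $t_0$ we have $z(t_0) = 0$, so the inequality gives $\dot z(t_0) \geq 0$; more robustly, on any interval where $z$ is small and we are just leaving $z = 0$, the bound $\dot z \geq -(1+C)z$ forces $z(t) \geq z(t_0)e^{-(1+C)(t-t_0)} = 0$ as long as $z$ stays nonnegative, so $z$ cannot cross strictly below zero. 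This contradicts $z(t_1) < 0$. Hence $z(t) \geq 0$ for all $t \geq T$, i.e. $x_k(t) \geq x_j(t)$ for all $t \geq T$.

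The main obstacle I anticipate is the bookkeeping at the diagonal and at indices $j, k$ themselves when establishing the lower bound $-Cz$ on the difference of linear arguments --- one must check that the "bad" terms (those where the coefficient difference could be negative) are exactly the $x_j$-coefficient in the $k$-equation versus the $x_k$-coefficient in the $j$-equation, and that weak domination's hypothesis $\wt W_{jk} = \wt W_{kj}$ is precisely what makes these combine into a clean multiple of $z$ rather than an uncontrolled term. A secondary technical point is making the comparison argument rigorous at the boundary $z = 0$, since the vector field is only piecewise-linear (Lipschitz but not $C^1$); this is handled by the fact that $z(t)$ is still absolutely continuous and the differential inequality holds almost everywhere, so the integrated (Gronwall) form applies.
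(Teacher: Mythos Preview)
Your approach is correct but more involved than the paper's. The paper simply evaluates $\dot z = \xdot_k - \xdot_j$ \emph{on the hyperplane} $x_j = x_k$, where the only potentially troublesome contribution $W_{kj}x_j - W_{jk}x_k$ vanishes exactly (since weak domination gives $W_{kj} = W_{jk}$, and $x_j = x_k$); the remaining terms $\sum_{i\neq j,k}(W_{ki}-W_{ji})x_i$ are nonnegative, so $\dot z \geq 0$ there. Forward invariance of $\{z \geq 0\}$ then follows.

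By contrast, you track the full $z$-dependence: you bound the difference of linear arguments below by $-W_{kj}z$, weaken to $-Cz$, invoke Lemma~\ref{lem:thresh_ineq}, and close with Gronwall. This works, but note two things. First, for competitive TLNs $-W_{kj} > 0$, so when $z \geq 0$ the difference of linear arguments is already $\geq 0$; monotonicity of $[\cdot]_+$ then gives $\dot z \geq -z$ directly, without the detour through $-Cz$ and the lemma. Second, your Gronwall step (``$z(t) \geq 0$ as long as $z$ stays nonnegative'') has exactly the same invariance subtlety that the paper's boundary argument has, so the extra machinery does not actually purchase additional rigor. The paper's approach is more economical precisely because restricting to $x_j = x_k$ lets the symmetry hypothesis kill the cross term outright.
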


\begin{proof}
We show that if $k$ weakly dominates $j$, it not possible to cross from the $x_k > x_j$ side of the hyperplane $x_k = x_j$ to the $x_j > x_k$ side. To do this, we show that if 
$x_j = x_k$, then $x_k '- x_j' > 0$. We have 
\begin{align*}
x_k' - x_j' = -x_k + x_j + \left[\sum_{i = 1}^{n} W_{ki}x_i + \theta \right]_+ - \left[\sum_{i = 1}^{n} W_{ji}x_i + \theta \right]_+.
\end{align*}
Since $x_j = x_k$, 
\begin{align*}
x_k' - x_j' = \left[\sum_{i = 1}^{n} W_{ki}x_i + \theta \right]_+ - \left[\sum_{i = 1}^{n} W_{ji}x_i + \theta \right]_+.
\end{align*}
Now, since the threshold nonlinearity if a monotone transformation, 
\begin{align*}
0 \leq \left[\sum_{i = 1}^{n} W_{ki}x_i + \theta \right]_+ - \left[\sum_{i = 1}^{n} W_{ji}x_i + \theta \right]_+ 
\end{align*}
if and only if 
\begin{align*}
0 \leq 
\left(\sum_{i = 1}^{n} W_{ki}x_i + \theta \right) - \left(\sum_{i = 1}^{n} W_{ji}x_i + \theta \right).
\end{align*}
We have 

\begin{align*}
\left(\sum_{i = 1}^{n} W_{ki}x_i + \theta \right) - \left(\sum_{i = 1}^{n} W_{ji}x_i + \theta \right) = 
\sum_{i \neq j, k}^{n} (W_{ki}-W_{ji})x_i  + W_{kj} x_j -  W_{jk}x_k.
\end{align*}
Because $k$ weakly dominates $j$, $(W_{ki}-W_{ji})\geq 0$ for all $i$, so $\sum_{i \neq j, k}^{n} (W_{ki}-W_{ji})x_i \geq 0$. Because $x_j = x_k$ and because $k$ weakly dominates $j$, $W_{kj} x_j = W_{jk}x_k$. 
Thus
\begin{align*}
0\leq\sum_{i \neq j, k}^{n} (W_{ki}-W_{ji})x_i  + W_{kj} x_j -  W_{jk}x_k
\end{align*}
as desired.  
\end{proof}

\begin{cor}[Equal input]\label{cor:equal_input}
If $x_k$ and $x_j$ are equal input, then if $x_k(T) > x_j(T)$, then $x_k(t) > x_k(t)$ for all $t > T$. If $x_k(T) = x_j(T)$, then $x_k(t) = x_k(t)$ for all $t > T$.
\end{cor}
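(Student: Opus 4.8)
The plan is to deduce this from Proposition \ref{prop:weak_dom} together with a Gr\"onwall-type estimate on the difference $z(t) := x_k(t) - x_j(t)$. The equality statement is immediate: if $x_k(T) = x_j(T)$, then both $x_k(T) \geq x_j(T)$ and $x_j(T) \geq x_k(T)$ hold, and since $j$ and $k$ are equal input, $k$ weakly dominates $j$ and $j$ weakly dominates $k$; applying Proposition \ref{prop:weak_dom} in each direction gives $x_k(t) \geq x_j(t)$ and $x_j(t) \geq x_k(t)$ for all $t \geq T$, hence $x_k(t) = x_j(t)$ for all $t \geq T$.

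For the strict statement, I would first write the ODE for $z$. Subtracting coordinate $j$ from coordinate $k$ in the defining TLN equations yields
\[ z' = -z + \left( \left[\textstyle\sum_i W_{ki}x_i + \theta\right]_+ - \left[\textstyle\sum_i W_{ji}x_i + \theta\right]_+ \right). \]
The arguments of the two nonlinearities differ by
\[ D(t) := \sum_{i \neq j,k}(W_{ki}-W_{ji})\,x_i(t) + W_{kj}\,x_j(t) - W_{jk}\,x_k(t) = \sum_{i\neq j,k}(W_{ki}-W_{ji})\,x_i(t) - W_{kj}\,z(t), \]
where I used $W_{jk}=W_{kj}$ from the equal-input hypothesis. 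Since $k$ weakly dominates $j$ we have $W_{ki}-W_{ji}\geq 0$ for $i\neq j,k$, trajectories remain in the nonnegative orthant, and (competitiveness) $W_{kj}<0$; hence whenever $z(t)\geq 0$ every term of $D(t)$ is nonnegative, so $D(t)\geq 0$. Because $[\,\cdot\,]_+$ is nondecreasing, this forces the bracketed difference in the $z$-equation to be nonnegative, so $z'(t)\geq -z(t)$ at every $t$ with $z(t)\geq 0$, i.e. $(e^{t}z(t))'\geq 0$ there.

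It then remains to run the barrier argument. Suppose $z(T)>0$ and set $t_1 := \sup\{\,t\geq T : z(s)>0 \text{ for all } s\in[T,t]\,\}$. If $t_1<\infty$, continuity of $z$ (the right-hand side of the TLN equations is continuous, so $z\in C^1$) gives $z(t_1)=0$ while $z>0$ on $[T,t_1)$; but on $[T,t_1)$ the function $e^{t}z(t)$ is nondecreasing, so $e^{t_1}z(t_1)\geq e^{T}z(T)>0$, contradicting $z(t_1)=0$. Hence $t_1=\infty$ and $x_k(t)>x_j(t)$ for all $t\geq T$.

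The main subtlety is that Proposition \ref{prop:weak_dom} only produces the non-strict conclusion $x_k'\geq x_j'$ on the wall $\{x_k=x_j\}$, which by itself does not prevent a trajectory from touching that wall and sliding to the other side; the genuine content beyond Proposition \ref{prop:weak_dom} is exactly the differential inequality $z'\geq -z$ on $\{z\ge 0\}$ and the comparison argument above, whose one extra input is the sign $W_{kj}<0$ coming from competitiveness. If one prefers, Lemma \ref{lem:thresh_ineq} can replace monotonicity of $[\,\cdot\,]_+$ here, and its strict version even yields $z'>-z$ when $z>0$, which makes the endpoint discussion at $t_1$ slightly cleaner.
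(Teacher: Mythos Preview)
Your proof is correct. The equality case is handled exactly as in the paper: apply Proposition \ref{prop:weak_dom} in both directions using that equal-input means each of $j,k$ weakly dominates the other.

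For the strict case your route genuinely differs from the paper's. The paper's written proof only invokes Proposition \ref{prop:weak_dom} once to obtain $x_k(t)\geq x_j(t)$ from $x_k(T)\geq x_j(T)$ and then passes to the equality case; it does not spell out why strict inequality is preserved. One can rescue the paper's approach via backward uniqueness of solutions (the TLN vector field is globally Lipschitz): if $z(t_1)=0$ then the equality case gives a trajectory on $\{x_k=x_j\}$ through that state, and uniqueness forces the original trajectory to coincide with it, contradicting $z(T)>0$. Your Gr\"onwall argument is a cleaner alternative that avoids appealing to uniqueness: from $z'\geq -z$ on $\{z\geq 0\}$ you get $e^{t}z(t)$ nondecreasing there, which blocks $z$ from ever reaching $0$. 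Note that for equal input the sum $\sum_{i\neq j,k}(W_{ki}-W_{ji})x_i$ is in fact exactly zero (both weak-domination inequalities hold), so $D(t)=-W_{kj}\,z(t)$; this doesn't change your argument but simplifies it. One small wording issue: the danger Proposition \ref{prop:weak_dom} alone fails to rule out is the trajectory \emph{touching} the wall $\{x_k=x_j\}$, not crossing it---crossing is already excluded by $x_k'-x_j'\geq 0$ there.
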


\begin{proof}
If $j$ and $k$ are equal-input, then $j$ weakly dominates $k$ and $k$ weakly dominates $j$. Thus, if  $x_k(T) \geq x_j(T)$, then $x_k(t) \geq x_j(t)$ for all $t > T$. If $x_k(T) = x_j(T)$, then $x_k(T) \geq x_j(T)$ and $x_k(T) \leq x_j(T)$, so by Proposition \ref{prop:weak_dom}, $x_k(t) \geq x_j(t)$ and $x_k(t) \leq x_j(t)$ for all $t \geq T$, so $x_k(t) = x_j(t)$ for all $t \geq T$. 
\end{proof}

\begin{prop}[Strong domination]
\label{prop:strong_dom}
If $k$ strongly dominates $j$, then for all $\eta > 0$, there exists some $T_\eta \geq 0$ such that for all $t \geq T_\eta$, $x_k(t) +\eta > x_j(t)$. \end{prop}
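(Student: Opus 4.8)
The plan is to mimic the argument for Proposition~\ref{prop:weak_dom}, but track the gap $x_k(t) - x_j(t)$ rather than just its sign, and show that once the gap drops below $-\eta$ it is forced to increase. Concretely, fix $\eta > 0$ and consider the function $g(t) := x_j(t) - x_k(t)$. I want to show that the region $\{g \leq \eta\}$, i.e. $\{x_k + \eta \geq x_j\}$, is forward-invariant and globally attracting, which gives the existence of $T_\eta$. For forward-invariance it suffices to check that on the boundary hyperplane $\{x_j - x_k = \eta\}$ we have $g'(t) < 0$, i.e. $x_k'(t) - x_j'(t) > 0$; for global attraction I will additionally show $g'(t)$ is bounded away from $0$ by a negative constant whenever $g(t) \geq \eta$ (using that activity is bounded, by the total-population bound of Corollary~\ref{cor:total_pop}, or simply the invariant box $\prod[0,\theta]$).

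First I would write, exactly as in Proposition~\ref{prop:weak_dom},
\begin{align*}
x_k' - x_j' &= -x_k + x_j + \left[\sum_{i=1}^n W_{ki}x_i + \theta\right]_+ - \left[\sum_{i=1}^n W_{ji}x_i + \theta\right]_+ \\
&= -(x_k - x_j) + \left[\sum_{i=1}^n W_{ki}x_i + \theta\right]_+ - \left[\sum_{i=1}^n W_{ji}x_i + \theta\right]_+.
\end{align*}
By Observation~\ref{obs:strong_domination}, $\sum_i (\wt W_{ki} - \wt W_{ji}) x_i \geq 0$ on the positive orthant, so $\sum_i W_{ki}x_i + \theta \geq \sum_i W_{ji}x_i + \theta$; then Lemma~\ref{lem:thresh_ineq} (with $a = 0$, applied to the monotone threshold map) gives $\left[\sum_i W_{ki}x_i + \theta\right]_+ - \left[\sum_i W_{ji}x_i + \theta\right]_+ \geq 0$. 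Hence $x_k' - x_j' \geq -(x_k - x_j) = x_j - x_k$. Therefore, on the boundary set where $x_j - x_k = \eta > 0$ we get $x_k' - x_j' \geq \eta > 0$, so trajectories cannot cross from $\{x_j - x_k \leq \eta\}$ into $\{x_j - x_k > \eta\}$: the region $\{x_k + \eta \geq x_j\}$ is forward-invariant.

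For the attracting part, note the same inequality shows that wherever $x_j - x_k \geq \eta$ we have $g'(t) = x_j' - x_k' \leq -(x_j - x_k) \leq -\eta < 0$, so $g$ decreases at rate at least $\eta$ until it reaches the value $\eta$; since $g$ is bounded below (activity stays in $\prod_{i=1}^n[0,\theta]$ eventually, hence $g \geq -\theta$), this takes finite time. Thus there is $T_\eta \geq 0$ with $x_j(t) - x_k(t) \leq \eta$, equivalently $x_k(t) + \eta > x_j(t)$ up to replacing $\eta$ by $\eta/2$ to make the inequality strict, for all $t \geq T_\eta$. The main subtlety — and the one place to be careful — is that the differential inequality $g' \leq -g$ only gives asymptotic decay of $g$ toward $\leq \eta$ after we know $g$ is bounded below; invoking the eventual confinement of trajectories to the box $\prod[0,\theta]$ (or to the bounded region $\cA$ of Theorem~\ref{thm:mixed_sign} / Corollary~\ref{cor:total_pop}) handles this cleanly, and then a Grönwall-type comparison $g(t) - \eta \leq (g(T_0) - \eta)e^{-(t-T_0)}$ on the set where $g > \eta$ finishes the argument.
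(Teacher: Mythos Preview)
There is a genuine gap in your key inequality. You claim that Observation~\ref{obs:strong_domination} gives $\sum_i W_{ki}x_i + \theta \geq \sum_i W_{ji}x_i + \theta$, i.e.\ $y_k \geq y_j$, and hence $[y_k]_+ - [y_j]_+ \geq 0$. But the observation says $\sum_i(\wt W_{ki}-\wt W_{ji})x_i \geq 0$, and since $\wt W = W - I$ this is
\[
(y_k - y_j) - (x_k - x_j) \;=\; h_k^*(x) - h_j^*(x) \;\geq\; 0,
\]
which only yields $y_k - y_j \geq x_k - x_j$, not $y_k \geq y_j$. When $x_j > x_k$ the right side is negative and $y_k < y_j$ is entirely possible. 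Concretely, in a two–neuron CTLN with $1\to 2$, $2\not\to 1$ (so $2$ strongly dominates $1$), at the point $x_1=\theta,\ x_2=0$ one has $y_1=\theta$ and $y_2=\varepsilon\theta$, so $[y_2]_+-[y_1]_+=(\varepsilon-1)\theta<0$. Your differential inequality $g'\leq -g$ then fails: it would force $\varepsilon\theta = x_2'-x_1' \geq x_1-x_2 = \theta$, i.e.\ $\varepsilon\geq 1$.

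The correct use of Lemma~\ref{lem:thresh_ineq} is with $a = x_k - x_j$ rather than $a=0$: when $x_j \geq x_k$ we have $a\leq 0$ and $a \leq y_k - y_j$ (this is exactly $h_k^*-h_j^*\geq 0$), so the lemma gives $x_k - x_j \leq [y_k]_+-[y_j]_+$, i.e.\ $x_k'-x_j' \geq 0$. This is strictly weaker than your claimed $x_k'-x_j' \geq x_j - x_k$: you only get $g' \leq 0$ on $\{g>0\}$, not $g'\leq -g$. The paper then finishes by a monotone–convergence argument (either the trajectory crosses into $\{x_k\geq x_j\}$ and stays there, or $g$ is positive and strictly decreasing, hence converges, and the limit must be $0$), rather than by a Grönwall-type exponential bound. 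Your outline can be repaired by replacing the erroneous step with this weaker inequality, but then the ``rate at least $\eta$'' part of your attracting argument no longer holds and you must argue as the paper does.
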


\begin{proof}
We prove that if $x_j > x_k$, then $x_k' - x_j' > 0$. By Observation \ref{obs:strong_domination}, we have $h_k^*(x) > h_j^*(x)$ within the positive orthant. 
Now, we apply Lemma \ref{lem:thresh_ineq} to show that this still holds once we apply the threshold nonlinearity: 
\begin{align*}
h_k^*(x)  -h_j^*(x)  &=  -x_k + y_k +  x_j - y_j \geq 0\\
x_k - x_j &\geq  y_k - y_j \\
x_k - x_j &\geq [y_k]_+ - [y_j]_+\\
x_k' - x_j'&\geq 0
\end{align*}
whenever $x_j  \geq x_k$. Further, this inequality can be made strict whenever $x_j > x_k$, since in this case, we must have $h_k^*(x)  -h_j^*(x) > 0$.

Now, we use this to prove the desired result. First, suppose that the trajectory starting at the point $x(0)$ has $x_k(T) \geq  x_j(T)$ at some time $T$. Then by Proposition \ref{prop:weak_dom} $x_k(t) \geq  x_j(t)$ for all time $t \geq T$. Thus, we can take $T_\eta= T$ for all $\eta > 0$. 

Now, suppose this does not happen, and that $x_k(t) <  x_j(t)$ for all $t \geq 0$. Then $x_k'(t) - x_j'(t) > 0$ for all $t\geq 0$. Thus, the quantity $x_k(t)-x_j(t)$ is monotonically increasing, and must thus approach a limit. This limit must be $x_k(t)-x_j(t) = 0$, since when $x_k(t) < x_j(t)$, $\xdot_k(t) -\xdot_j(t) >0$. Thus for all $\eta > 0$, there exists some $T_\eta \geq 0$ such that for all $t \geq T_\eta$, $x_k(t) +\eta > x_j(t)$.
\end{proof}

\subsection{General Local Lyapunov Functions}

The main idea in our proof of Proposition \ref{prop:strong_dom} is an example of something we will call a \emph{local Lyapunov function}. 
Lyapunov functions are typically used to show that all trajectories of a dynamical system approach a specified fixed point, conventionally taken to be the origin. Here, we generalize this style of argument by introducing local Lyapunov-like functions. We will use these to show that all trajectories which enter a region $B$ approach a smaller region $A$, as illustrated in Figure \ref{fig:llf}.

\begin{figure}[ht!]
\begin{center}
\includegraphics[width = 2.5 in]{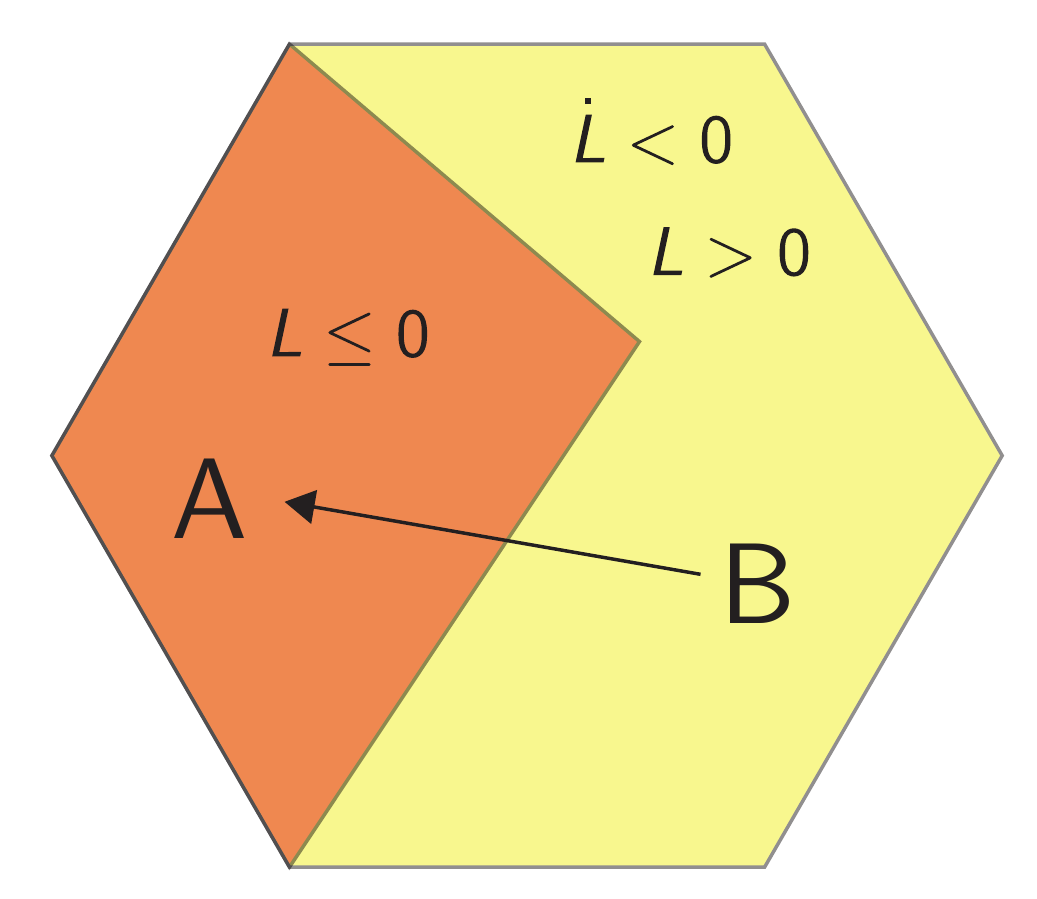}
\end{center}
\caption[Local Lyapunov functions]{If $L$ is a local Lyapunov function for $A$ and $B$, then all trajectories which enter $B$ must approach $A$. \label{fig:llf}}
\end{figure}
 
\begin{defn}Let $A\subseteq B\subseteq \R^n$ be compact sets, and $\xdot = g(x)$ be a dynamical system on $\R^n$.  A continuous,continuously differentiable function $L: B \to \R$ is a \emph{local Lyapunov-like function} for the pair $A, B$ if it satisfies the following: 
\begin{enumerate}
\item  $\dot{L} < 0$ on $B\setminus A$.
\item $L > 0$ on $B\setminus A$.  
\item $L\leq 0$ on $A$. 
 \end{enumerate} 
\end{defn}

For instance, our proof of Proposition \ref{prop:strong_dom} established that the $L(x) = x_j - x_k$ is a local Lyupanov function for $B = \cA$, $A = \cA \cap \{ x \mid x_k \geq x_j\}$. 

\begin{lem}
If $L$ is a local Lyapunov-like function for $A,B$, then any trajectory which remains in $B$ for all time approaches $A$. 
\end{lem}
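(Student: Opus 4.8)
The plan is to run a LaSalle-type argument, carefully phrased so that it does not assume $A$ is forward invariant (the hypotheses do not provide this). Fix a trajectory $x(\cdot)$ with $x(t)\in B$ for all $t\ge 0$ and set $\ell(t):=L(x(t))$. Since $x(\cdot)$ solves $\dot x=g(x)$ it is differentiable, and as $L\in C^1$ the function $\ell$ is $C^1$ with $\ell'(t)=\dot L(x(t))$, where $\dot L:=\nabla L\cdot g$ is continuous on $B$. The first step is the observation that conditions (2) and (3) present $A$ as a sublevel set: $A=\{y\in B:L(y)\le 0\}$, because $L>0$ off $A$ and $L\le 0$ on $A$. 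Consequently it suffices to prove $\limsup_{t\to\infty}\ell(t)\le 0$. Indeed, granting that, for each $\varepsilon>0$ the set $K_\varepsilon:=\{y\in B:\mathrm{dist}(y,A)\ge\varepsilon\}$ is compact and disjoint from $A$, hence contained in $B\setminus A$, so by (2) and compactness $m_\varepsilon:=\min_{K_\varepsilon}L>0$ (if $K_\varepsilon=\varnothing$ there is nothing to prove for that $\varepsilon$); once $\ell(t)<m_\varepsilon$, which holds for all large $t$, we get $x(t)\notin K_\varepsilon$, i.e. $\mathrm{dist}(x(t),A)<\varepsilon$. As $\varepsilon$ was arbitrary, $x(t)\to A$.

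To establish $\limsup_{t\to\infty}\ell(t)\le 0$, I would argue by contradiction: suppose $\alpha:=\limsup_{t\to\infty}\ell(t)>0$, noting $\alpha<\infty$ since $L$ is bounded on the compact set $B$. Fix levels $0<\beta<\gamma<\alpha$ and let $S:=\{y\in B:L(y)\ge\beta\}$. Then $S$ is compact and, by (3), disjoint from $A$, so $S\subseteq B\setminus A$; by (1) and compactness of $S$ there is $c>0$ with $\dot L\le -c$ on $S$. Now two facts are in tension. First, by definition of $\alpha$ there are arbitrarily large times with $\ell(t)>\gamma$. Second, there are arbitrarily large times with $\ell(t)\le\beta$: otherwise $x(t)\in S$ for all $t$ past some $T$, forcing $\ell'(t)\le -c$ and hence $\ell(t)\to-\infty$, impossible since $\ell\ge\beta>0$ on $[T,\infty)$. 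Combining the two facts, pick $a$ large with $\ell(a)\le\beta$ and then $b>a$ with $\ell(b)>\gamma$, and set $u:=\sup\{s\in[a,b]:\ell(s)\le\beta\}$. Then $u<b$, $\ell(u)\le\beta$ by continuity, and $\ell(t)>\beta$ for $t\in(u,b]$, so $x(t)\in S$ on $(u,b]$ and $\ell$ is strictly decreasing there, giving $\ell(b)\le\ell(u)\le\beta<\gamma$ --- contradicting $\ell(b)>\gamma$. Hence $\alpha\le 0$, which is what remained to show.

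The one genuinely delicate point is exactly the one this structure is designed to absorb: since the flow may carry a trajectory out of $A$ again, $\ell$ need not be monotone, so the naive ``$\ell$ decreases to its infimum'' reasoning fails. The crossing argument replaces monotonicity with the fact that any excursion of $\ell$ back above level $\gamma$ must traverse the compact band $S=\{L\ge\beta\}$ on which $\dot L$ is uniformly negative, which cannot be reconciled with $\ell$ also dipping below $\beta$ infinitely often. Everything else --- continuity of $\dot L$ (from $L\in C^1$ and $g$ continuous), attainment of a positive minimum of $L$ on compact subsets of $B\setminus A$, and boundedness of $L$ on $B$ --- is routine and uses only that $A$ and $B$ are compact and $L\in C^1$.
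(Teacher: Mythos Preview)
Your argument is correct and in fact tidier than the paper's. The paper proceeds by a two–case split: if the trajectory never enters $A$, then $L(x(t))$ is strictly decreasing and bounded, from which the paper infers $\dot L(x(t))\to 0$ and hence $x(t)\to A$; if the trajectory enters $A$, the paper shows it can never leave, since exiting would force $L$ to pass from $\le 0$ to $>0$ while $\dot L<0$ just outside $A$. Your level–band crossing argument handles both cases at once and sidesteps the Barbalat-type step (``$L$ has a limit $\Rightarrow \dot L\to 0$'') that the paper leaves implicit. One small correction to your framing: the hypotheses \emph{do} force forward invariance of $A$, exactly by the paper's exit argument (if $x(u)\in A$ and $x(t)\in B\setminus A$ on $(u,t_1]$, then $L(x(t_1))=L(x(u))+\int_u^{t_1}\dot L<0$, contradicting $L>0$ on $B\setminus A$). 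So your care in not assuming this was unnecessary, though it costs nothing and makes the proof more portable.
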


\begin{proof}
Let $x_0\in B$. By assumption, the trajectory $x(t)$ with this  initial condition remains in $B$ for all $t\geq 0$. Now, suppose $x(t)$ does not ever enter $A$. Then $\dot L < 0$ for all time. This means that the value $L(x(t))$ is monotonically decreasing, so it must approach a limit as $t\to \infty$. This means that $\dot L(x(t)) \to 0$ as $t\to \infty$. Since the only place within $B$ where it is possible that $\dot L = 0$ is $A$, the trajectory must approach $A$. 

Now, suppose the trajectory enters $A$. We show that it must stay there. Since $\dot L  < 0$ on $B\setminus A$, by continuity of $\dot L $ we have $\dot L \leq 0$ on $\partial A$. Thus, if the trajectory were to cross out of $A$, the value of $L$ must decrease as we cross out. However, since $L \leq 0$ on $A$ and $L > 0$ on $B$, this is not possible. 
\end{proof}

\section{Main results: CTLNs}

In this section, we work towards a proof of the following result, which has Theorems \ref{thm:dag} and \ref{thm:dag_onto_symmetric} as consequences.

\begin{ithm}
\label{thm:sources_die}
Let $G$ be a graph in which every vertex is reachable from a source. Then there is a source $j$ in $G$ such that $\lim_{t\to \infty} x_j(t) = 0$. 
\end{ithm}

We begin by giving an informal summary of our argument. 
First, we prove that if $j$ is a source, and if $x_j \leq x_k$ for all other neurons $k$, then $\lim_{t\to \infty} x_j = 0$ (Lemma \ref{lem:smallest_sources_die}). We do this by showing that  within this set, the nullcline $H_j$ forms the lower boundary of the mixed-sign chamber (Figure \ref{fig:sources_die_arg} A), thus trajectories are trapped on the negative side of $H_j$. Thus once trajectories are trapped in this region, $\lim_{t\to  \infty}x_j$. 

\begin{figure}
\includegraphics{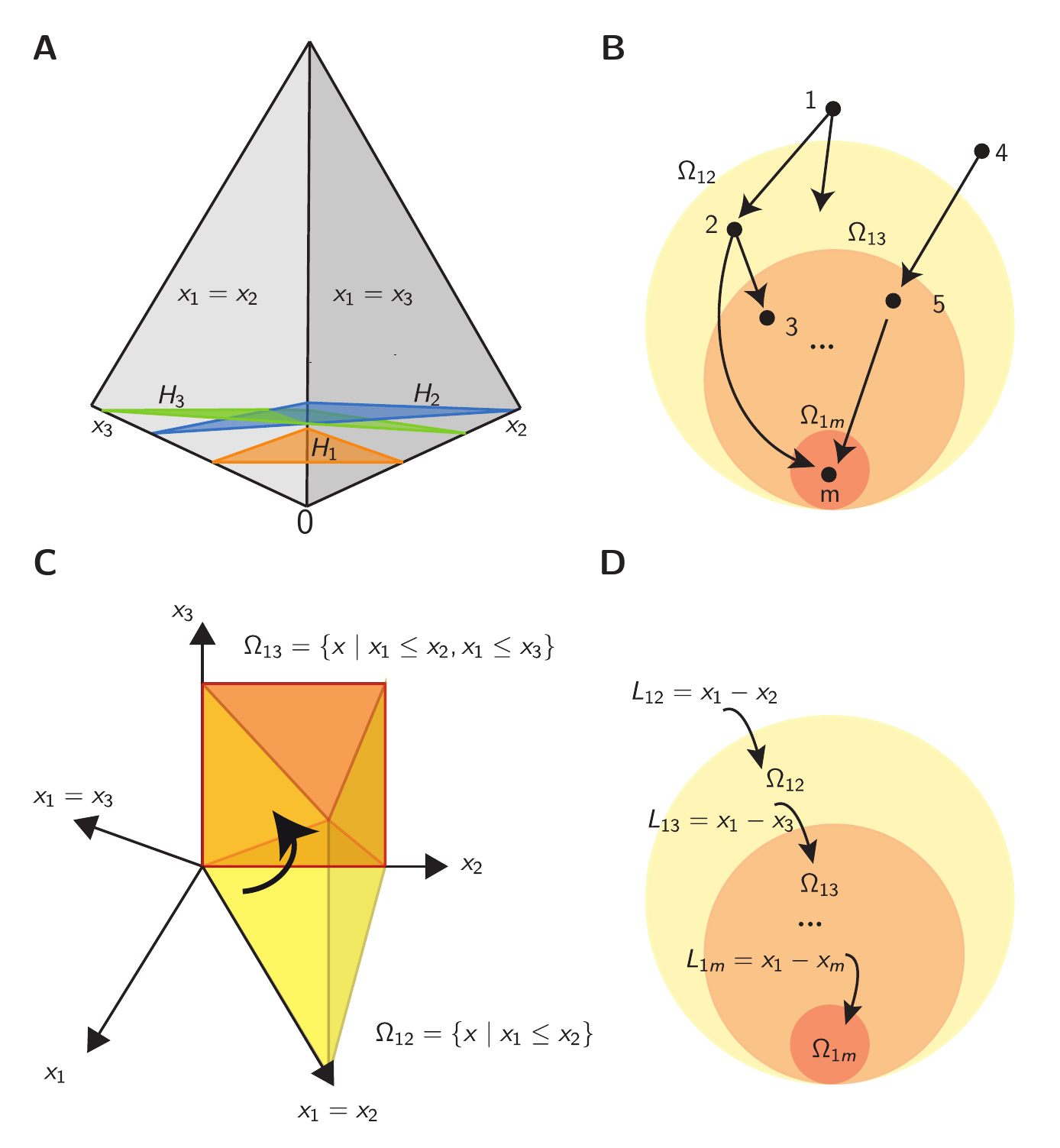}
\caption[Schematic for the proof of Theorem \ref{thm:sources_die}]{A) Within the region where the source numbered 1 is the smallest, the nullcline $H_1$ separates the other nullclines from the origin. B) We number our neurons along paths from the smallest source. C) The regions $\Omega_{1\ell}$ and $\Omega_{13}$. D) Our local Lyapunov functions.  \label{fig:sources_die_arg}}
\end{figure}

Next, we show that this condition is eventually satisfied for the smallest source: the source $j$ which has the $x_j \leq x_\ell$ for all other sources $\ell$ at our initial condition.
We show this using local Lyapunov functions for a nested sequence of sets $\Omega^{G}_1\supseteq \Omega^G_{2}\supseteq \cdots \supseteq\Omega^G_{n}$. These sets are defined by renumbering the neurons along paths from the source (Figure \ref{fig:sources_die_arg} B), and then requiring the activity of the source to be lower than that of other neurons along this path (Figure \ref{fig:sources_die_arg} C). Finally, we show that the function $L_\ell = x_j - x_\ell$ is a local Lyapunov function for the sets $A = \Omega^{G}_\ell$ and 
 $B =\Omega^{G}_{\ell -1}$. This means that trajectories get in $\Omega^{G}_{\ell -1}$ go towards $ \Omega^{G}_\ell$. Thus, trajectories must eventually enter the set where $x_j \leq x_{\ell}$ for all vertices $\ell$ reachable from $j$. 
This must also happen for every other source $i$. Thus, since $x_j$ is the smallest source, we must eventually have $x_j\leq x_i \leq x_\ell$ if $\ell$ is reachable from some other source $i$. Therefore, we must eventually enter the set where $x_j\leq x_\ell$ for all other neurons $\ell$. Thus, by Lemma \ref{lem:smallest_sources_die}, $\lim_{t\to \infty} x_j = 0$. 
Finally, the value of $x_j$ must become so low that its influence on the other neurons may be neglected. Thus, we let some other source take over as the smallest source. In this way, we show that the activity of all sources must decay to zero. 

In order to apply this result inductively in order to prove Theorems \ref{thm:dag} and \ref{thm:dag_onto_symmetric}, we wish to eventually neglect the activity of sources which have ``died". Thus, we will need to prove our results in a way which has some threshold for negligible activity which we allow ourselves to neglect. When we are proving Theorems \ref{thm:dag} and \ref{thm:dag_onto_symmetric} by induction, we will modify our graph by ignoring neurons whose activity is approaching zero. We will also need to deal with the fact that, if we have a local Lyapunov function for $A$ and $B$, this only guarantees that trajectories which remain in $B$ \emph{approach} $A$, not that they actually enter it. We handle both of these issues through our definition of $\Omega^{G}_\ell(\alpha, \beta)$. 

\begin{defn}\label{def:order_omega}
Let $G$ be a graph with vertices $V_G$. Let $L\subseteq V_G$.  Let $j$  be a particular source in $G|_L$.  Renumber the vertices in $L$ as follows: 
First, partially order the vertices by $k \leq_j \ell$ if the shortest path from $j$ to $k$ is shorter than the shortest path from $j$ to $\ell$, setting this length to be $\infty$ if $\ell$ is not reachable from $j$. Then, pick a linear extension of this order. Notice that under this order, we have labeled $j = 1$. Now, we can define
\begin{align*}
\Omega^{G_L}_\ell  &= \{x \mid x_j \leq x_k \mbox{ for all } k \leq \ell, k\in L, x_k = 0 \mbox{ for all } k \notin L\}\\
\Omega^{G_L}_\ell (\alpha, \beta) & = \{x \mid x_j \leq x_k + \alpha \mbox{ for all } k \leq \ell, k\in L, x_k \leq \beta, |\xdot_k|\leq \beta \mbox{ for all } k \notin L\}
\end{align*}
When $L = V_G$, we drop the parameter $\beta$, writing 
\begin{align*}
\Omega^{G_L}_\ell (\alpha) & = \{x \mid x_j \leq x_k + \alpha \mbox{ for all } k \leq \ell\}.
\end{align*}
\end{defn}

Notice that $\Omega^{G_L}_{|L|}$ is the set where the source numbered 1 has $x_1 \leq x_\ell$ for all $\ell \in L$, and $x_\ell = 0$ for all $\ell \notin L$. Our first lemma establishes that ``smallest sources die": if a trajectory stays in the set $\Omega^{G_L}_{|L|}$ where the source 1 is the neuron with lowest activity, then $x_1$ approaches zero. We prove a stronger result, that this holds on $ \Omega^{G_L}_{|L|}(\eta, \zeta)$ for some positive values of $\eta$ and $\zeta$.
 
\begin{lem}\label{lem:smallest_sources_die}
Let $L \subseteq V_G$ be our set of active neurons. Assume the neurons are numbered according to Definition \ref{def:order_omega} such that neuron 1 is a source in $G|_L$. Then there exist $\eta, \zeta > 0$ such that if $x \in \Omega^{G_L}_{|L|}(\eta, \zeta)$ for all $t \geq T$, then $\lim_{t\to \infty} x_1(t)= 0$. 
\end{lem}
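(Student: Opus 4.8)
The plan is to show that, once $\eta$ and $\zeta$ are taken small, the trajectory is eventually confined near and below the hyperplane $H_1$ attached to the source neuron $1$, so that $\dot x_1$ becomes (essentially) nonpositive; convergence of $x_1$ to $0$ then follows from an $\omega$-limit set argument. Concretely, I would first prove a ``domination-type'' separation estimate saying that on $\Omega^{G_L}_{|L|}(\eta,\zeta)$ the hyperplane $H_1$ is the innermost of the nullcline hyperplanes, up to an error controlled by $\eta,\zeta$; this is the combinatorial input and should be routine. Since neuron $1$ is a source in $G|_L$, every active $i\in L\setminus\{1\}$ satisfies $i\not\to 1$, so $W_{1i}=-1-\delta$, while $W_{11}=0$ and $|W_{ki}|,|W_{1i}|\le 1+\delta$ for $i\notin L$; expanding,
\[
h_k^*(x)-h_1^*(x) \;=\; (1+W_{k1})\,x_1 \;+\; \delta\,x_k \;+\; \sum_{i\neq 1,k}\bigl(W_{ki}-W_{1i}\bigr)x_i ,
\]
and on the region the constraints $x_1\le x_k+\eta$ for $k\in L$ and $x_i\le\zeta$ for $i\notin L$ force the right side to be at least $-(\delta\eta+C\zeta)$ for a constant $C$ depending only on $G$ (the $x_1$ terms either add a nonnegative multiple of $x_1$ or cancel). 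Hence $h_1^*(x)\le h_k^*(x)+\delta\eta+C\zeta$ on $\Omega^{G_L}_{|L|}(\eta,\zeta)$ for every active $k$.

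Next I would combine this with the trapping property. By Theorem \ref{thm:mixed_sign} the trajectory is eventually inside the mixed-sign region $\cA$, where some active $k$ has $\dot x_k<0$; for such $k$, once $x_1\ge 2\eta$ we have $x_k\ge x_1-\eta>0$ and therefore $h_k^*(x)<0$, so the separation estimate gives $h_1^*(x)<\delta\eta+C\zeta$, equivalently $\dot x_1 = -x_1+[\,h_1^*(x)+x_1\,]_+ < \delta\eta+C\zeta$, and in fact $\dot x_1<0$ wherever $x$ lies strictly in $H_1^-$ with $x_1>0$. Choosing $\eta,\zeta$ small makes $x_1$ eventually (essentially) monotone decreasing, hence convergent to some $c\ge 0$; the forward $\omega$-limit set $\Gamma$ is then a nonempty invariant set contained in $\cA$ with $x_1\equiv c$ on $\Gamma$. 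If $\Gamma$ is a single fixed point, then $c=0$: neuron $1$ is a source with an out-neighbor, so that out-neighbor graphically dominates $1$ and by \cite{curto2019fixed} neuron $1$ lies in no fixed-point support. It remains to exclude the possibility that $\Gamma$ contains no fixed point while $c>0$; here $\Gamma$ must lie on the ``$H_1$-face'' $\{h_1^*=0,\,x_1=c\}$ of the nullcline $\cN_1$, and invariance together with $W_{1i}=-1-\delta$ for all $i\neq 1$ forces $\tfrac{d}{dt}\sum_{i\neq 1}x_i\equiv 0$ on $\Gamma$, i.e. constant total population; I would derive a contradiction from this using the trapping of $\cA$ and the total-population pinch of Corollary \ref{cor:total_pop}.

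I expect the last step — ruling out a nontrivial invariant set $\Gamma$ on which $x_1$ is a positive constant — to be the main obstacle, since the instantaneous estimate only yields $\dot x_1<\delta\eta+C\zeta$ rather than $\dot x_1<0$, so $x_1$ is a priori only ``almost monotone'' and could in principle hover at a positive value; this is exactly why the slack parameters $\eta,\zeta$ appear in the statement, and the argument must be arranged so that the residual threshold $\rho(\eta,\zeta)$ below which control is lost satisfies $\rho(\eta,\zeta)\to 0$ as $\eta,\zeta\to 0$ and, if necessary, the estimate is bootstrapped once a crude a priori bound on $x_1$ has been established. A secondary technical point is using the hypothesis $|\dot x_k|\le\zeta$ for $k\notin L$ to control the inactive neurons \emph{uniformly in $t$} throughout, rather than merely pointwise, and checking that neuron $1$ has an active out-neighbor in every setting in which the lemma is applied — which holds when $L$ is the set of vertices reachable from the source, as in Definition \ref{def:order_omega}.
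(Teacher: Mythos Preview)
Your separation estimate $h_1^*(x)\le h_k^*(x)+\delta\eta+C\zeta$ is correct and is essentially the same computation the paper does. The gap is in the next step: from this you only extract $\dot x_1<\delta\eta+C\zeta$, a small \emph{positive} upper bound, and then assert that $x_1$ is ``essentially monotone, hence convergent to some $c\ge 0$.'' That inference is false in general (e.g.\ $x_1(t)=1+\tfrac12(\delta\eta+C\zeta)\sin t$ satisfies your bound but does not converge), and everything downstream---the $\omega$-limit set sitting on $\{x_1=c\}$, the invariance argument on the $H_1$-face---depends on it. You recognize this as the ``main obstacle,'' but the proposed fix (bootstrapping a residual threshold $\rho(\eta,\zeta)\to 0$) does not work either: even if you show $\liminf x_1\le\rho$, nothing prevents $x_1$ from oscillating indefinitely between $\rho$ and some larger value.

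The paper's idea is to stop trying to bound the scalar $\dot x_1$ and instead prove that the \emph{flow is transverse to the hyperplane $H_1$}, always crossing from $H_1^+$ into $H_1^-$. On $H_1$ one has $\dot x_1=0$, and your separation estimate gives $\dot x_k\ge -\delta\eta$ for every $k$. The additional ingredient you are missing is a compactness argument: on $H_1\cap\R^n_{\ge 0}$, away from any fixed point, the continuous function $\|\dot x\|_1$ attains a positive minimum $\alpha>0$. Choosing $\eta=\alpha/(4n\delta)$ then yields $\sum_i\dot x_i=\|\dot x\|_1+2\sum_{\dot x_i<0}\dot x_i\ge \alpha-2n\delta\eta=\alpha/2>0$. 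Since neuron $1$ is a source, the normal $h_1=(-1,-1-\delta,\ldots,-1-\delta)$ has all negative entries, so $\dot x\cdot h_1<0$ on $H_1$. Hence a trajectory can cross $H_1$ only from the positive to the negative side; once in $H_1^-$ one genuinely has $\dot x_1<0$, so $x_1$ is truly monotone and converges. The limit must lie on the $x_1$-nullcline, and it cannot lie on $H_1$ (the flow is transverse there), so it lies on $E_1$, i.e.\ $x_1\to 0$. This transversality-on-$H_1$ argument is what replaces your incomplete $\omega$-limit step, and it is also what dictates the specific choice of $\eta$ and $\zeta$.
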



\begin{proof}

We first describe the value of $\eta$ and $\zeta$. If $1$ is a proper source, there are no fixed points on the set $ C := H_1\cap \R^n_\geq$, so $||\xdot||_1> 0$ on $H_1 \cap \R^n_\geq$. If $1$ is an isolated vertex, then the only fixed point $x^*$ on $H_1$ has support $\{1\}$, and therefore is separated from $\Omega^{G_L}_{|L|}$ by a positive distance $d$. Thus, $||\xdot||_1> 0$ on the compact set  $C := H_1\cap \R^n_\geq \setminus B_d(x^*)$ we get by deleting a $d$-neighborhood of this fixed point $x^*$.  Thus the continuous function $||\xdot||_1$ attains its minimum value on $C$. Therefore there exists some positive $\alpha > 0$ such that  $||\xdot||_1 \geq \alpha > 0$ on $C$. Let $\eta  = \frac{\alpha}{4n\delta }$. Let $\zeta = \delta \eta$. 

We first focus on the neurons in $L$ and show that if $x_1(T) < x_k(T) +\eta $ for some $T$, then $x_1'(t) \leq 0$ for all $t \geq T$. Thus $\lim_{t\to \infty}x_1(t) = 0$. For all $k$ such that $1\not \to k$,
\begin{align*}
h_k^*(x(t)) - h_1^*(x(t)) &= -x_k + x_1 +  (-1-\delta)x_1 - (-1-\delta)x_k + \sum_{i\neq 1, k} (W_{ki} - W_{1i})x_i  \\
h_k^*(x(t)) - h_1^*(x(t)) &= \delta(x_k - x_1) + \sum_{i\neq 1, k} (W_{ki} - W_{1i})x_i \\
h_k^*(x(t)) - h_1^*(x(t)) &\geq -\delta \eta. 
\end{align*}
Now, for $x(t)\in H_1 \cap \R^n_\geq$, $h_1^*(x(t)) = x_1'(t) = 0$. For all $x(t)$, $h_k^*(x(t)) \leq \xdot_k(t)$. Therefore, for $x(t)\in H_1 \cap \R^n_\geq$,
\begin{align*}
x_k'(t)  &\geq -\delta \eta. 
\end{align*}
If $1 \to k$, then the entirety of $H_1 \cap \R^n_\geq$ is contained on the positive side of $H_k$, so on $H_1 \cap \R^n_\geq$, 
\begin{align*}
\xdot_k(t)  \geq 0 \geq -\delta \eta\\ 
\xdot_k(t) \geq -\delta \eta. 
\end{align*}
Now, we note that this is also satisfied  by requirement for  the neurons outside $L$. 

Now, we show that it is only possible to cross the $H_1$ hyperplane from the positive to the negative side. We do this by showing that the dot product of $ \xdot(t)$ with the normal vector to $H_1$ is always negative. 
We have 
\begin{align*}
\sum_{i = 1}^n \xdot_i(t) = \sum_{x_i(t)\geq 0}\xdot_i(t) + \sum_{\xdot_i(t)< 0}\xdot_i(t) = \sum_{i = 1}^n | \xdot_i(t)| + 2 \sum_{x_i(t)< 0}\xdot_i(t) = ||\xdot(t)||_1+ 2 \sum_{x_i(t)< 0}\xdot_i(t).
\end{align*}
Since $\xdot_k(t)   \geq -\delta \varepsilon$ for all $k$,
\begin{align*}
2 \sum_{x_i(t)< 0}\xdot_i(t) \geq -2n\delta\varepsilon = -2n\delta \frac{\alpha}{4n\delta} = -\frac{\alpha}{2}.\\
\end{align*}

Thus 
\begin{align*}
\sum_{i = 1}^n \xdot_i = ||\xdot||_1+ 2 \sum_{x_i< 0}\xdot_i \geq \frac{\alpha}{2}  > 0.
\end{align*}
Now, recall that the normal vector to $H_i$ is $(-1, -1-\delta,-1-\delta, \ldots,  -1-\delta)$.  Thus, since $\xdot_1 = 0$ on $H_1\cap \R^n_\geq$, 
\begin{align*}
\xdot(t)\cdot h_1 = (-1-\delta)\sum_{i = 1}^n \xdot_i(t) < 0. 
\end{align*}
Thus, trajectories can only cross from the positive side to the negative side of $H_1$. Thus, the value of $x_1(t)$ eventually becomes monotone, and thus must approach a limit. This means the trajectory must approach the $x_1$ nullcline. The trajectory cannot approach $H_1$ because if $c\in  H_1$, $\xdot(t)\cdot h_1< 0$. Thus, the trajectory must approach $E_1$, so  $\lim_{t\to \infty}x_1(t) = 0$.

\end{proof}

Now, we show that the conditions of Lemma \ref{lem:smallest_sources_die} are ultimately achieved for some source, i.e.  for some source, $ \Omega^{G_L}_{|L|}$ is an attracting set. In particular, we show this for the smallest source, i.e. the source neuron $j$ such that $x_j \leq x_\ell$ for all other \emph{sources} $\ell$. 
We will prove by induction using this order that if $k$ is reachable along a path from a source $j$, then eventually $x_j  < x_k  + \eta$. We do this using local Lyapunov functions for a nested sequence of sets.

\begin{lem}\label{lem:lyapunov} For any $\beta > 0$, the function $L_\ell = x_j - x_{\ell} + \beta$ is a local Lyapunov function for 
$B = \Omega^{G_L}_{\ell}(\beta, \zeta)$ and  $A =\Omega^{G_L}_{\ell-1}(\alpha, \zeta)\cap B$ for $\beta = \frac{ \alpha\varepsilon}{2(\varepsilon + \delta)}$ and $\zeta = \frac{\varepsilon \alpha}{2n(\varepsilon + \delta)}$. 
\end{lem}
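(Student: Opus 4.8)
The plan is to verify, one at a time, the three defining conditions of a local Lyapunov-like function for $L_\ell(x)=x_j-x_\ell+\beta$ on the pair $(A,B)$. Write $j=1$ for the distinguished source. By the renumbering in Definition \ref{def:order_omega}, the index $\ell$ lies on a shortest directed path from $j$ in $G|_L$; let $m$ be its immediate predecessor on such a path, so $m\to\ell$ in $G|_L$ and $m$ comes before $\ell$ in the order, hence the constraint $x_j\le x_m+\beta$ is among those imposed on the larger set. The single structural fact driving everything is that $j$ is a source, so it receives no edges: $W_{jk}=-1-\delta$ for every $k\ne j$ (and $W_{jj}=0$), whereas $W_{\ell m}=-1+\varepsilon$ and $W_{\ell k}\in\{-1+\varepsilon,\,-1-\delta\}$ for all $k$, so $W_{\ell k}\ge W_{jk}$ for every $k\ne j,\ell$. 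In words, $\ell$ is inhibited no more than $j$ by any common neighbor, and strictly less by $m$, which should force $x_\ell$ to catch up to $x_j$.

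Conditions (2) and (3) are bookkeeping. The set $A$ carries, in addition to the constraints defining $B$, the upper bound pinning $x_j$ by $x_\ell$ plus slack; reading that off gives $L_\ell\le 0$ on $A$. On $B\setminus A$ this is precisely the bound that fails, which pushes $x_j-x_\ell$ past the slack and hence makes $L_\ell>0$. The frozen coordinates $x_k$, $k\notin L$, are irrelevant here since $L_\ell$ depends only on $x_j$ and $x_\ell$. Finally $L_\ell$ is affine, hence $C^1$, and $\dot L_\ell=\dot x_j-\dot x_\ell$ is continuous, so $L_\ell$ is an admissible candidate.

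The substance is condition (1), $\dot L_\ell=\dot x_j-\dot x_\ell<0$ on $B\setminus A$. Expanding the TLN equations, $\dot x_j-\dot x_\ell=(x_\ell-x_j)+\big([\,\mathrm{in}_j(x)\,]_+-[\,\mathrm{in}_\ell(x)\,]_+\big)$ with $\mathrm{in}_i(x)=\sum_k W_{ik}x_k+\theta$. From the sign pattern a direct computation yields $\mathrm{in}_\ell(x)-\mathrm{in}_j(x)\ge (1+\delta)(x_\ell-x_j)+(\varepsilon+\delta)x_m$, the coordinates $k\notin L$ (each at most $\zeta$ on $B$) contributing a nonnegative amount that only helps. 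I would then case-split on the sign of $a_0:=(1+\delta)(x_\ell-x_j)+(\varepsilon+\delta)x_m$. If $a_0\ge 0$ then $\mathrm{in}_\ell\ge\mathrm{in}_j$, so $[\mathrm{in}_j]_+-[\mathrm{in}_\ell]_+\le 0$ and the strictly negative term $x_\ell-x_j$ already gives $\dot L_\ell<0$ on $B\setminus A$. If $a_0<0$, apply Lemma \ref{lem:thresh_ineq} with $a=a_0$, $x=\mathrm{in}_\ell$, $y=\mathrm{in}_j$ to get $[\mathrm{in}_j]_+-[\mathrm{in}_\ell]_+\le -a_0$, whence $\dot x_j-\dot x_\ell\le (x_\ell-x_j)-a_0=\delta(x_j-x_\ell)-(\varepsilon+\delta)x_m$. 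Now the constants close the argument: on $B\setminus A$ we have $x_j-x_\ell>\alpha$ and $x_m\ge x_j-\beta$, so, using $x_\ell\ge 0$,
\begin{align*}
\delta(x_j-x_\ell)-(\varepsilon+\delta)x_m \;\le\; \delta(x_j-x_\ell)-(\varepsilon+\delta)(x_j-\beta) \;\le\; -\varepsilon x_j+(\varepsilon+\delta)\beta \;<\; -\varepsilon\alpha+(\varepsilon+\delta)\beta,
\end{align*}
which equals $-\tfrac{\varepsilon\alpha}{2}<0$ exactly because $\beta=\tfrac{\alpha\varepsilon}{2(\varepsilon+\delta)}$; the prescribed $\zeta=\tfrac{\varepsilon\alpha}{2n(\varepsilon+\delta)}$ is the slack one keeps in reserve when the $O(n\zeta)$ frozen-coordinate terms are carried explicitly through the estimate rather than discarded.

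I expect the main obstacle to be not a single idea but the disciplined accounting: deriving the lower bound on $\mathrm{in}_\ell-\mathrm{in}_j$ with the correct coefficients (in particular tracking the $-x_j$ versus $-x_\ell$ contributions and the weight $W_{\ell j}$), threading it cleanly through the rectifier in both branches of the case split, and ensuring the error from the $\zeta$-bounded coordinates stays strictly below the $\tfrac{\varepsilon\alpha}{2}$ margin. A secondary point to dispatch is a degenerate position of $\ell$ — e.g.\ if $\ell$ is not reachable from $j$ and the predecessor $m$ does not exist; but in every intended application (Theorem \ref{thm:sources_die} and its corollaries) each vertex is reachable from some source and the renumbering is along paths from $j$, so the indices $\ell$ for which the step genuinely tightens a constraint are exactly those with a predecessor, and any remaining index is handled by the identical argument run relative to the source that does reach it.
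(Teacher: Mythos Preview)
Your approach is essentially the paper's: both pass through the threshold via Lemma~\ref{lem:thresh_ineq} to reduce to the linear quantity $h_j^*-h_\ell^*$, then control it using that $j$ is a source in $G|_L$, the predecessor edge into $\ell$, and the $\beta$-constraint on the predecessor, arriving at the same bound $-\varepsilon x_j+(\varepsilon+\delta)\beta$ and the same arithmetic with $\beta=\tfrac{\alpha\varepsilon}{2(\varepsilon+\delta)}$. The paper splits instead on whether $j\to\ell$ directly or via an intermediate $k$, whereas you handle both uniformly through $m$; your explicit case split on $\operatorname{sign}(a_0)$ is a cleaner way to thread the rectifier than the paper's somewhat elliptical appeal to Lemma~\ref{lem:thresh_ineq}.

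One genuine slip: the claim that the coordinates $k\notin L$ ``contribute a nonnegative amount that only helps'' is false. The vertex $j$ is a source only in $G|_L$, not in $G$, so some $k\notin L$ may satisfy $k\to j$ and $k\not\to\ell$, giving $W_{\ell k}-W_{jk}=-(\varepsilon+\delta)<0$. Your later remark is the correct fix: these terms contribute at most $n(\varepsilon+\delta)\zeta=\tfrac{\varepsilon\alpha}{2}$ in absolute value, and carrying this error $E$ through your Case~2 gives $\dot L_\ell\le -\varepsilon x_j+(\varepsilon+\delta)\beta+E<-\varepsilon\alpha+\tfrac{\varepsilon\alpha}{2}+\tfrac{\varepsilon\alpha}{2}=0$, still strict since $x_j>x_\ell+\alpha\ge\alpha$ is strict. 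So drop the ``only helps'' sentence and instead carry $E$ from the start; the rest of your argument goes through unchanged.
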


\begin{proof}
By construction, we have $ L_\ell \leq 0$ on $A$ and $L_\ell > 0$ on $B \setminus A$.  
Now, we show that $\dot L_\ell <0$ on $B\setminus A$. 
We consider 
\begin{align*}
\dot L_\ell = \xdot_j - \xdot_\ell = -x_j + x_\ell + \left[\sum_{i=1}^n W_{ji}x_i + \theta \right]_+ - \left[\sum_{i=1}^n W_{\ell i}x_i + \theta\right]_+. 
\end{align*}
By Lemma \ref{lem:thresh_ineq}, when $x_j \geq x_k$
\begin{align*}
\xdot_j - \xdot_\ell &\leq -x_j + x_\ell + \left( \sum_{i=1}^n W_{ji}x_i + \theta\right)  - \left(\sum_{i=1}^n W_{\ell i}x_i + \theta \right)_+\\
\xdot_j - \xdot_\ell &\leq -x_j + x_\ell + \sum_{i=1}^n (W_{ji}-W_{\ell i})x_i \\
\xdot_j - \xdot_\ell &\leq (-1 - W_{\ell j})x_j + (1 + W_{j\ell})x_\ell + \sum_{i \in L\setminus \{j, \ell\}  }(W_{ji}-W_{\ell i})x_i + \sum_{k \notin L }(W_{ji}-W_{\ell i})x_k
\end{align*}
Now, since $j$ is source in $G|_L$, $(W_{ji}-W_{\ell i}) \leq 0$ for each neuron $i \in L\setminus \{j, \ell\} $. 
 Further, $1 + W_{j\ell} = -\delta \leq 0$.  
 Since $x\in B$, the total activity of the neurons outside $L$ is bounded above, thus $ \sum_{k \notin L }(W_{ji}-W_{\ell i})x_k \leq  2n(\varepsilon + \delta) \frac{\varepsilon \alpha}{2n(\varepsilon +\delta)} = \frac{\varepsilon \alpha}{2}$.  
 
Now, we consider two cases, if $\ell$ recieves an edge directly from $j$ or not. In the fist case,  then  $(-1 - W_{\ell j})  = -\varepsilon <0$, so 
$$\xdot_j - \xdot_\ell \leq -\varepsilon x_j + \frac{\varepsilon\alpha}{2}.$$
Since $x_j \geq \alpha$ in $B\setminus A$, we have 
 $\xdot_j - \xdot_\ell \leq 0$ whenever $x_j \geq x_k$, thus $\dot L < 0$ on $B\setminus A$. 

Next, if $\ell$ does not receive an edge directly from $j$, by our ordering of neurons, we have that $\ell$ does receive an edge from some neuron $k < \ell$. For $x\in B$, we have $x_j \leq x_k + \beta$. We pull the $k^{th}$ term out of the summation in the right hand side, 
rewriting it as 
\begin{align*}
\xdot_j - \xdot_\ell &\leq (-1 - W_{\ell j})x_j + (1 + W_{j\ell})x_\ell + (W_{jk} - W_{\ell k})x_k  + \sum_{i\neq j, k,\ell \in L}(W_{ji}-W_{\ell i})x_i + \frac{\varepsilon\alpha}{2}\\
\xdot_j - \xdot_\ell &\leq \delta x_j - \delta x_\ell  - (\delta + \varepsilon)x_k  +\sum_{i\neq j, k,\ell \in L}(W_{ji}-W_{\ell i})x_i + \frac{\varepsilon\alpha}{2}\\
\end{align*}
We can drop the terms $ - \delta x_\ell$ and $\sum_{i\neq j, k,\ell \in L}(W_{ji}-W_{\ell i})x_i$, as they are guaranteed to be non-positive, obtaining the bound 
\begin{align*}
\xdot_j - \xdot_\ell &\leq \delta x_j  - (\delta + \varepsilon)x_k + \frac{\varepsilon\alpha}{2}.
\end{align*}
Now, for $x\in B$, we have $x_j \leq x_k + \beta$, so 
\begin{align*}
\xdot_j - \xdot_\ell &\leq \delta x_j  - (\delta + \varepsilon)(x_j - \beta)+ \frac{\varepsilon\alpha}{2}\\
\xdot_j - \xdot_\ell &\leq -\varepsilon x_j  + (\delta + \varepsilon)\beta + \frac{\varepsilon\alpha}{2}.
\end{align*}
Now, for all $x\notin A$, we have $x_j \geq x_k + \alpha \geq \alpha$, thus-
\begin{align*}
\xdot_j - \xdot_\ell &\leq -\varepsilon \alpha + (\delta + \varepsilon)\beta + \frac{\varepsilon\alpha}{2}= 0,
\end{align*}
as desired. 
\end{proof}

\begin{lem}\label{lem:source_becomes_smallest_1}
Let $L \subseteq V_G$ be our set of active neurons,  numbered according to Definition \ref{def:order_omega} such that neuron 1 is a source in $G|_L$. Let $\ell$ be a vertex which is reachable from $1$ along a directed path. Then for each $\eta \geq 0$, there exists some  $T \geq 0$ and some $\zeta> 0$ such that if $x_k(t), |\xdot_k(t)| \leq \zeta$ for all  time $t$ and all inactive neurons $k\notin L$ such that $x(t) \in \Omega_\ell^{G|_L}(\eta, \zeta)$ for all $t \geq T$. 
\end{lem}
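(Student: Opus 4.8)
The plan is to prove the statement by induction on the position $\ell$ of the vertex in the ordering fixed in Definition \ref{def:order_omega}, using the local Lyapunov--like functions of Lemma \ref{lem:lyapunov} together with the boundedness of trajectories guaranteed by Corollary \ref{cor:total_pop}. Throughout we use the renumbering in which $1$ is the chosen source of $G|_L$, and we work inside the forward-invariant slice of phase space on which $x_k \le \zeta$ and $|\dot x_k| \le \zeta$ for every inactive neuron $k \notin L$ (this is exactly the hypothesis of the lemma). Since the conclusion is only asserted for vertices $\ell$ reachable from $1$, every vertex numbered $\le \ell$ lies in $L$ and is itself reachable from $1$, so the sets $\Omega^{G|_L}_{\ell}(\cdot,\cdot)$ only ever constrain reachable active neurons.

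\emph{Base case} $\ell = 1$. Here $\Omega^{G|_L}_{1}(\eta,\zeta)$ imposes only the vacuous inequality $x_1 \le x_1 + \eta$ together with the bounds on inactive neurons. Hence for any $\zeta > 0$ and $T = 0$, the stated hypothesis alone forces $x(t) \in \Omega^{G|_L}_{1}(\eta,\zeta)$ for all $t \ge 0$.

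\emph{Inductive step.} Suppose $\ell > 1$ is reachable from $1$ and the result holds for the vertex numbered $\ell-1$. Because $\ell$ is reachable, a shortest directed path from $1$ to $\ell$ in $G|_L$ has a penultimate vertex $k$ of strictly smaller distance from $1$; thus $k < \ell$ in our ordering, $k \in L$, and $k \to \ell$ in $G|_L$, which is precisely what Lemma \ref{lem:lyapunov} requires. Given the target $\eta > 0$, set $\alpha = \eta$ and let $\beta = \frac{\alpha\varepsilon}{2(\varepsilon+\delta)}$, $\zeta_0 = \frac{\varepsilon\alpha}{2n(\varepsilon+\delta)}$ be the constants of Lemma \ref{lem:lyapunov}, so that $L_\ell = x_1 - x_\ell + \beta$ is a local Lyapunov--like function witnessing that every trajectory confined to a neighbourhood $B$ of $\Omega^{G|_L}_{\ell-1}$ approaches a subset $A \subseteq \Omega^{G|_L}_{\ell}$. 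Apply the inductive hypothesis to $\ell-1$ with parameter $\eta' = \min\{\eta,\beta\}$, obtaining $T' \ge 0$ and $\zeta' > 0$, and put $\zeta = \min\{\zeta_0,\zeta'\}$. Then for $t \ge T'$ the trajectory lies in $\Omega^{G|_L}_{\ell-1}(\eta',\zeta) \subseteq \Omega^{G|_L}_{\ell-1}(\eta,\zeta) = B$; by the basic property of local Lyapunov--like functions established above (a trajectory remaining in $B$ approaches $A$, and once in $A$ stays there), together with the fact that $\dot L_\ell \to 0$ along any trajectory that never enters $A$ since $L_\ell$ is bounded below on the compact set provided by Corollary \ref{cor:total_pop}, the trajectory approaches $A$. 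Since $\beta < \eta$ and the inequalities $x_1 \le x_m + \eta$ for $m < \ell$ already hold in $B$, the set $A$ lies in the relative interior of $\Omega^{G|_L}_{\ell}(\eta,\zeta)$ within the invariant slice; passing to a small forward-invariant sublevel set $\{L_\ell \le c\} \subseteq \Omega^{G|_L}_{\ell}(\eta,\zeta)$ which the trajectory eventually enters and cannot leave, and choosing $T \ge T'$ past this entry time, completes the step. Chaining the finitely many steps up to the given $\ell$ and taking the minimum of the finitely many $\zeta$'s produced yields the single pair $(T,\zeta)$ the statement demands.

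The main obstacle is the parameter bookkeeping and the conversion of ``approaches $A$'' into ``eventually lies in, and stays in, $\Omega^{G|_L}_{\ell}(\eta,\zeta)$.'' For the first, one must thread $\eta$, $\zeta$, and $T$ through the induction so that the surviving $\zeta$ is small enough for every Lyapunov step (each of which pins $\zeta$ to its own $\alpha$) while the surviving $\eta'$ at each stage is $\le$ the target $\eta$ so that earlier inequalities are preserved; this is handled by working with a common small $\zeta$ and restarting the clock at each stage. For the second, one builds in slack (taking $\beta < \eta$ at every stage, so $A$ sits strictly inside $\Omega^{G|_L}_{\ell}(\eta,\zeta)$ within the invariant slice) and then uses a forward-invariant sublevel set of $L_\ell$ to upgrade ``approaches'' to ``enters and remains.'' Everything else is routine once Lemma \ref{lem:lyapunov} and Corollary \ref{cor:total_pop} are in hand.
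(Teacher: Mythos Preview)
Your proposal is correct and follows essentially the same approach as the paper: induction on $\ell$ in the ordering of Definition~\ref{def:order_omega}, with the base case trivial and the inductive step powered by the local Lyapunov function of Lemma~\ref{lem:lyapunov}, applied after the inductive hypothesis places the trajectory in $\Omega^{G|_L}_{\ell-1}(\cdot,\zeta)$. The paper's proof is considerably terser---it simply invokes Lemma~\ref{lem:lyapunov} with $B=\Omega^{G|_L}_{\ell-1}(\eta/2,\zeta)$ and $A=\Omega^{G|_L}_{\ell}(\beta,\zeta)\cap B$, notes that the inductive hypothesis puts the trajectory in $B$, and then asserts the conclusion---whereas you spell out the parameter threading (taking $\zeta$ as a minimum over all stages, and $\eta'=\min\{\eta,\beta\}$ so that the earlier constraints survive with the tighter tolerance the Lyapunov lemma needs) and the upgrade from ``approaches $A$'' to ``enters and remains in $\Omega^{G|_L}_{\ell}(\eta,\zeta)$'' via the strict inequality $\beta<\eta$ and a forward-invariant sublevel set of $L_\ell$. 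These are exactly the details the paper glosses over, so your version is in fact a more complete execution of the same argument.
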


\begin{proof}
We prove this by induction, with the vertices labeled according to Definition \ref{def:order_omega}. If $\ell = 1$ this result is trivially true. 

Now, we assume as an inductive hypothesis that this result holds for all vertices $k < \ell$, and show that it must hold for $\ell$.  Then by Lemma \ref{lem:lyapunov}, we have that there is a local Lyapunov function $L$ for $B = \Omega_{\ell-1}^{G|_L}(\eta/2, \zeta)$ and  $A =\Omega_\ell^{G|_L}(\beta, \zeta)\cap B$ for $\beta = \frac{ \eta\varepsilon}{2(\varepsilon + \delta)}.$ Now, by the inductive hypothesis, there exists a time $T'$ such that $x(t) \in B$ for all $t \geq T'$. Thus, all trajectories must enter and remain in $B$, so they approach $A$. Thus, after some time $T$ all trajectories must actually enter and become trapped in the set  $\Omega_\ell^{G|_L}(\eta, \zeta)$. 
\end{proof}

Now, we have shown that each source eventually becomes (approximately) smaller than all vertices which are reachable from it. However, to apply Lemma \ref{lem:smallest_sources_die}, we need to show the source eventually becomes  (approximately) smaller than \emph{all} active neurons. 

\begin{lem}\label{lem:smallest_source_becomes_smallest}
Let $L \subseteq V_G$ be our set of active neurons,  numbered according to Definition \ref{def:order_omega} such that neuron 1 is a source in $G|_L$. Suppose $x_1(0) < x_k(0)$ for all other sources $k$ in $G|_L$. Then for each $\eta \geq 0$, there exists $\zeta> 0$ such that if $x_j \leq \zeta$ for all $j\notin L$ and $T \geq 0$ such that $x(t) \in \Omega_\ell^{G|_L}(\eta, \zeta)$ for all $\ell \in L$, $t \geq T$. 
\end{lem}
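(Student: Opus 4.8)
The plan is to bootstrap from Lemma~\ref{lem:source_becomes_smallest_1}, which already gives that neuron~$1$ (the distinguished source) eventually drops below every vertex reachable from it, and to handle the remaining vertices by showing that neuron~$1$, being the smallest source at time~$0$, is never overtaken by any other source of $G|_L$. Throughout I will assume --- as holds in every application of this lemma, and in particular under the hypothesis of Theorem~\ref{thm:sources_die} --- that every vertex of $G|_L$ is reachable from some source of $G|_L$; this is what lets us route from a source down to an arbitrary $\ell\in L$.

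\emph{Step 1: sources stay comparable.} If $1$ and $k$ are both sources of $G|_L$, then within $L$ they receive identical weights: a source receives $-1-\delta$ from every vertex of $L$, so $W_{1i}=W_{ki}$ for all $i\in L\setminus\{1,k\}$ and $W_{1k}=W_{k1}=-1-\delta$. Hence on the positive orthant $h_1^*(x)-h_k^*(x)=\delta(x_1-x_k)+E(x)$, where $E(x)=\sum_{i\notin L}(W_{1i}-W_{ki})x_i$ satisfies $|E(x)|\le n(\varepsilon+\delta)\zeta$ once the inactive coordinates are bounded by $\zeta$. Imitating the barrier argument in the proofs of Proposition~\ref{prop:weak_dom} and Lemma~\ref{lem:smallest_sources_die} (and using Lemma~\ref{lem:thresh_ineq} to pass the threshold nonlinearity), I would analyze the hyperplanes $\{x_1=x_k+c\}$: when both neurons are above threshold the gap obeys $\frac{d}{dt}(x_1-x_k)=\delta(x_1-x_k)+E$, so an initially negative gap of size $\mu_k:=x_k(0)-x_1(0)$ is driven further negative, and in the silent regime it merely contracts toward $0$. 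Choosing $\zeta$ small relative to $\min_k\mu_k$ and to $\eta$ should then yield $x_1(t)\le x_k(t)+\eta/2$ for all $t\ge0$ and every source $k\ne1$ of $G|_L$.

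\emph{Step 2: routing.} Fix $\ell\in L$. If $\ell$ is reachable from $1$, Lemma~\ref{lem:source_becomes_smallest_1} gives $T_\ell,\zeta_\ell$ with $x_1(t)\le x_\ell(t)+\eta$ for $t\ge T_\ell$. Otherwise choose a source $k$ of $G|_L$ from which $\ell$ is reachable; renumbering $L$ as in Definition~\ref{def:order_omega} with $k$ as the distinguished source, Lemma~\ref{lem:source_becomes_smallest_1} gives $T_\ell',\zeta_\ell'$ with $x_k(t)\le x_\ell(t)+\eta/2$ for $t\ge T_\ell'$, and adding the Step~1 inequality $x_1(t)\le x_k(t)+\eta/2$ gives $x_1(t)\le x_\ell(t)+\eta$ for $t\ge T_\ell'$. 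As $L$ is finite, take $T$ the maximum of these times and $\zeta$ smaller than all the $\zeta_\ell$, than the threshold from Step~1, and than whatever keeps $x_k,|\dot x_k|\le\zeta$ for $k\notin L$; then for $t\ge T$ we have $x_1\le x_k+\eta$ for every $k\in L$, i.e. $x(t)\in\Omega^{G|_L}_\ell(\eta,\zeta)$ for all $\ell\in L$.

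The crux is Step~1: because two sources that fire simultaneously are in winner-take-all competition (the gap $x_1-x_k$ solves an \emph{unstable} linear equation, growing away from $0$), one must carefully rule out the scenario in which $x_1$ creeps up past $x_k$, and do so robustly against the size-$O(\zeta)$ perturbation $E(x)$ from the inactive neurons. I expect this to require a case split on which of $x_1,x_k$ is above threshold together with a Lipschitz estimate on the thresholded difference, and possibly the extra input that, once combined with Lemma~\ref{lem:smallest_sources_die}, $x_1$ in fact decays to $0$. By comparison, the renumbering in Definition~\ref{def:order_omega} and the finite bookkeeping of $T$ and $\zeta$ in Step~2 are routine; one only checks that the $\Omega$-sets depend on the chosen source but the scalar inequalities ``$x_1\le x_\ell+\eta$'' we extract do not, so reusing Lemma~\ref{lem:source_becomes_smallest_1} around a different source is legitimate.
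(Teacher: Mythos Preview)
Your approach is essentially the same as the paper's: apply Lemma~\ref{lem:source_becomes_smallest_1} at each source $k$ to get $x_k\le x_\ell+\eta$ for every $\ell$ reachable from $k$, and combine with the fact that two sources of $G|_L$ receive equal input within $L$ to conclude (for $\zeta$ small) that $x_1$ remains below every other source. The paper in fact treats your Step~1 even more briefly than you do, simply asserting that for suitable $\zeta$ the order of $x_1$ and $x_k$ cannot switch; your identification of this as the delicate step, and your barrier sketch for it, go beyond what the paper writes out.
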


\begin{proof}
By assumption, every vertex $\ell$ is reachable from some source $k$ along a directed path.  Thus by numbering the vertices according to Definition  \ref{def:order_omega} such that $k$ is sent to $1$, we can apply Lemma  \ref{lem:source_becomes_smallest_1}, to show that exists a time $T_{k\ell}$ such that $x_k \leq x_\ell + \eta$ for all $t\geq T_{k\ell}$. Now, since  $x_1(0) < x_k(0)$ for all other sources $k$, there is some $\zeta$ such that if  $x_j \leq \zeta$ for all $j\notin L$, the order of $x_1$ and $x_k$ cannot switch, thus $x_1$ will remain smaller than $x_k$ for all other sources $k$. Thus, by taking $T \geq T_{k\ell}$ for each vertex $\ell$ and source $k$, we have that 
$x_j \leq x_j \leq x_\ell + \eta$  for each vertex $\ell$. Thus, for  $t \geq T$, we have  $x(t) \in \Omega_\ell^{G|_L}(\eta, \zeta)$, as desired. 
\end{proof}

Finally, we are ready to prove Theorem \ref{thm:sources_die}. 

\begin{proof}
Let $G$ be a graph for which every vertex is reachable from a source, and consider a CTLN with graph $G$. Let $j$ be the source in $G$ which satisfies $x_j \leq x_k$ for all other sources  at the initial condition. By Lemma \ref{lem:smallest_source_becomes_smallest}, there exists a time $T$ such that $x\in \Omega_\ell^{G}(\eta)$ for all neurons $\ell$ and the values of $\eta$ an used in Lemma \ref{lem:smallest_sources_die}. Therefore, $\lim_{t\to\infty} x_{j} = 0$. 
\end{proof}

Now, we use this result to prove Theorems \ref{thm:dag} and \ref{thm:dag_onto_symmetric}, which we restate here. 

\begin{ithm}
If $G$ is a directed acyclic graph, then no CTLN with graph $G$ has a dynamic attractor. \label{thm:dag}
\end{ithm}

\begin{ithm}
Let $G$ be a graph on vertex set $V$, and $\sigma, \tau$ be a partition of $V$. Suppose  the subgraph $G_\sigma$ is symmetric and connected, the subgraph $G_\tau$ is a directed acyclic graph, and that there are no edges from a vertex in $\tau$ back to a vertex in $\sigma$. 
Then no CTLN with graph $G$ has a dynamic attractor. \label{thm:dag_onto_symmetric}
\end{ithm}

 The first part of our proof is to show that, in the case of a DAG, all neurons other than the sinks have their activity approach zero, and in the case of a symmetric graph, all neurons outside the symmetric part have their activity approach zero. We prove this for a more general family of graphs, which we define below. 

\begin{defn} 
Let $G$ be a graph and $\omega, \tau \subseteq V_G$. We say that $(\omega, \tau)$ is an \emph{inductively stable DAG decomposition} of $G$ if  $\omega \sqcup \tau$ is a partition of the vertices of $G$ satisfying the following inductive definition  
\begin{enumerate}
\item $G|_\omega$ is a DAG
\item $G|_\tau$ contains all sinks of $G$ 
\item there are no edges from $\tau$ back to $\omega$ 
\item every vertex of $\tau$ is reachable along a directed path from a vertex in $\omega$. 
\item for any source $i\in \omega$, $\omega \setminus \{i\}, \tau$ is an inductively stable DAG decomposition of $G|_{V_G\setminus \{i\}}$. 
\end{enumerate}
\end{defn}

\begin{lem}\label{lem:dag_decomp}
Let $G$ be a graph with inductively stable DAG decomposition $( \omega, \tau)$. Then for any $i\in \omega$, $\lim_{t\to \infty}x_i = 0$. 
\end{lem}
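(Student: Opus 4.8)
The plan is to induct on $|\omega|$, peeling off one source of $G$ at a time and invoking Theorem~\ref{thm:sources_die}, or, more precisely, the relative version of it that is already contained in Lemmas~\ref{lem:smallest_sources_die}, \ref{lem:source_becomes_smallest_1}, and~\ref{lem:smallest_source_becomes_smallest} (which are stated for an arbitrary active set $L\subseteq V_G$). The base case $\omega=\varnothing$ is vacuous. For the inductive step I would first read off, from the combinatorial axioms of an inductively stable DAG decomposition, that every source of $G$ lies in $\omega$ and is a source of $G|_\omega$, and that every vertex of $G$ is reachable from a source of $G$. Indeed: a source of the DAG $G|_\omega$ has no in-edges within $\omega$ and, by property~(3), no in-edges from $\tau$ either, so it is a source of $G$; conversely a source of $G$ lying in $\omega$ is clearly a source of $G|_\omega$, and no vertex of $\tau$ can be a source of $G$ by property~(4). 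Since every vertex of a finite DAG is reachable (by walking backwards along edges) from a source of that DAG, every vertex of $\omega$ is reachable from a source of $G$; and by property~(4) every vertex of $\tau$ is reachable from a vertex of $\omega$, hence from a source of $G$.

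With these hypotheses verified, Theorem~\ref{thm:sources_die} yields a source $j\in\omega$ of $G$ with $\lim_{t\to\infty}x_j(t)=0$. Next I would upgrade this to $\lim_{t\to\infty}\xdot_j(t)=0$: the proof of Lemma~\ref{lem:smallest_sources_die} shows the trajectory eventually satisfies $\xdot_j\le 0$ and approaches the face $E_j$, and since a CTLN vector field is globally Lipschitz and trajectories are bounded (Corollary~\ref{cor:total_pop}), $x_j$ is $C^2$ with bounded second derivative, so Barbalat's lemma gives $\xdot_j\to 0$. Consequently, for every $\zeta>0$ there is a time after which $x_j,|\xdot_j|\le\zeta$; that is, $j$ becomes an ``inactive'' neuron in exactly the sense appearing in Definition~\ref{def:order_omega} and in the lemmas of this section.

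Now by property~(5), $(\omega\setminus\{j\},\tau)$ is an inductively stable DAG decomposition of $G|_{V_G\setminus\{j\}}$. The key observation is that the machinery behind Theorem~\ref{thm:sources_die} is already parametrized by an active set $L$ together with a uniform-in-time tolerance $\zeta$ on $x_k$ and $|\xdot_k|$ for $k\notin L$, so I would rerun that argument with $L=V_G\setminus\{j\}$: the reachability hypothesis for $G|_{V_G\setminus\{j\}}$ holds by the same verification as above, and $x_j,|\xdot_j|$ can be driven below whatever $\zeta$ Lemmas~\ref{lem:smallest_source_becomes_smallest} and~\ref{lem:smallest_sources_die} demand, so there is a second source $j'\in\omega\setminus\{j\}$ with $\lim_{t\to\infty}x_{j'}(t)=0$. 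Iterating, at each stage decreasing $|\omega|$ by one, carrying along the growing set of already-dead neurons, and choosing the tolerance at each step small enough for the next application, after $|\omega|$ steps every neuron of $\omega$ has activity tending to $0$. Equivalently, the whole argument can be organized as a single induction on $|\omega|$ using the relative form of Theorem~\ref{thm:sources_die} on $G|_{V_G\setminus\{j\}}$.

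The main obstacle is precisely the bookkeeping in this last step: when the $(k{+}1)$-st source is to be peeled off, one must simultaneously keep all previously killed sources below the $\zeta$ required by the relevant lemmas, and must check that these tolerances can be chosen consistently as the induction descends. This is exactly why the supporting lemmas were formulated with an explicit active set $L$ and an explicit $\zeta$, and why $\Omega^{G_L}_\ell(\alpha,\beta)$ carries the constraints $x_k\le\beta$, $|\xdot_k|\le\beta$ for $k\notin L$; the verification reduces to the fact that each dead neuron's activity is eventually monotone decreasing to $0$ with derivative tending to $0$, hence can be placed below any prescribed positive threshold for all large $t$, uniformly together with the other dead neurons.
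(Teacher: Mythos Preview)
Your proposal is correct and follows essentially the same approach as the paper: verify every vertex is reachable from a source, apply Theorem~\ref{thm:sources_die} to kill one source, use property~(5) to pass to the smaller inductively stable DAG decomposition, and iterate via the relative (active-set~$L$) versions of the supporting lemmas. Your treatment is in fact a bit more careful than the paper's in two places---you explicitly argue that sources of $G$ lie in $\omega$, and you invoke Barbalat's lemma to obtain $\xdot_j\to 0$, whereas the paper simply asserts that approaching the $j$-th nullcline forces $|\xdot_j|\le\zeta$ eventually.
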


\begin{proof}

Let $G$ be a graph with inductively stable DAG decomposition $( \omega, \tau)$. We first prove that every vertex of $G$ is reachable from a source. First, if $i\in \omega$, then a path which follows edges backwards must remain in $\omega$, and must eventually terminate, since $G_\omega$ is a DAG. The vertex where this path terminates must be a source.  Now, by assumption, every vertex in $\tau$ is reachable along a path from a vetex in $\omega$. Thus, from a vertex in $\omega$, we can trace a path backwards into $\tau$, and then back to a source. 

Now, by Theorem \ref{thm:sources_die}, there is some source $i$ of $G$ whose activity must approach zero. Further, for this to be possible, all trajectories of the TLN must approach the $i^{th}$ nullcline. 
Thus we are guaranteed that for each value of $\zeta$, there exists a time $T$ such $x_i(t)\leq \zeta$, and $|\xdot_i(t)| \leq \zeta$ for all $t \geq T$. 
Now, by assumption, $\omega\setminus\{i\}, \tau$ is an inductively stable DAG decomposition of $G|_{V_G\setminus i}$.  
By choosing $\zeta$ small enough, we can satisfy the conditions of Lemmas \ref{lem:smallest_sources_die} and \ref{lem:lyapunov} when restricting to the graph  $G|_{V_G\setminus i}$. 
We can apply this argument inductively to show that for each $i\in \omega$, $\zeta > 0$, there is a time $T$ such that $x_i(t) \leq \zeta$ for all $t\geq T$. Thus, for any $i\in \omega$, $\lim_{t\to \infty}x_i = 0$. 
\end{proof}

Thus, we have shown that whenever $G$ has an inductively stable DAG decomposition $\omega$, $\tau$, all activity of the CTLN on $G$ becomes concentrated on $G|_\tau$. Now, we wish to prove that if $G_|\tau$ is symmetric, then all trajectories of the $CTLN$ on $G$ approach a fixed point. 
In order to prove this, we first prove the following lemma, which allows us to neglect the impact of neurons in $\omega$ upon those of $\tau$ once they have decayed sufficiently. 

\begin{lem}\label{lem:lyapunov_hell}
Let $\xdot = g(x)$ be a dynamical system on a compact set $A$ with a Lyapunov like function $L: A \to \R$ satisfying the following: 

\begin{itemize}
\item $\dot L \leq 0$ for all $x\in A$ 
\item $\dot L = 0$ only on a finite set of fixed points $X = \{x\in A\mid g(x) = 0\}$, 
\item $L$ is bounded below on $A$.  
\end{itemize}

Then for every $\varepsilon > 0$, there exists $\delta > 0$ such that if $\dot x = g'(x, t)$ is a dynamical system satisfying $|g(x) - g'(x, t)| < \delta$, then there exists $T$ such that all trajectories of of the system $\dot x = g'(x, t)$ satisfy $|x(t) - X| < \varepsilon$ for all $t > T.$

\end{lem}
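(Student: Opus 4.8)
The plan is to prove this as a robustness statement for LaSalle's invariance principle (``total stability''). First I would extract the unperturbed picture from $L$. Since $\dot L \le 0$ on the compact set $A$ and $\{x\in A:\dot L(x)=0\}=X$ consists of fixed points of $\dot x = g(x)$, each of which is an invariant singleton, the largest invariant subset of $\{\dot L=0\}$ is $X$ itself; by LaSalle's principle every trajectory of $\dot x = g(x)$ that remains in $A$ converges to $X$, and because $X$ is finite (its $\omega$-limit set being connected) each such trajectory converges to a single point of $X$. This is the behaviour the perturbed flow must be shown to inherit in an $\varepsilon$-approximate, uniform way.

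Next I would assemble the quantitative ingredients. Fix $\varepsilon>0$; since ``$\mathrm{dist}(x(t),X)<\varepsilon_0$'' implies ``$\mathrm{dist}(x(t),X)<\varepsilon$'' whenever $\varepsilon_0\le\varepsilon$, we may shrink $\varepsilon$ so that the closed $\varepsilon$-balls about distinct points of $X$ are pairwise disjoint. Write $N_r=\{x\in A:\mathrm{dist}(x,X)<r\}$. The set $K:=A\setminus N_{\varepsilon/2}$ is compact and disjoint from $\{\dot L=0\}$, so continuity of $\dot L$ yields a constant $c>0$ with $\dot L\le -c$ on $K$ for the unperturbed flow. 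Compactness of $A$ and $L\in C^1$ give $M:=\sup_{x\in A}\|\nabla L(x)\|<\infty$ and finite bounds $\alpha:=\inf_A L$, $\beta:=\sup_A L$. Along any trajectory of a perturbed system with $\sup_{x,t}\lvert g(x)-g'(x,t)\rvert<\delta$ we have $\tfrac{d}{dt}L(x(t))=\nabla L(x)\cdot g(x)+\nabla L(x)\cdot\big(g'(x,t)-g(x)\big)\le \dot L_{\mathrm{unpert}}(x)+M\delta$; choosing $\delta<c/(2M)$ makes $\tfrac{d}{dt}L(x(t))\le -c/2$ on $K$, while $\tfrac{d}{dt}L(x(t))\le M\delta$ everywhere on $A$.

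The third step, which contains essentially all the work, is attraction plus trapping. Attraction is easy: a trajectory confined to $K$ for all time would force $L\to-\infty$, contradicting $L\ge\alpha$, so every perturbed trajectory enters $N_{\varepsilon/2}$. For trapping I would argue that only finitely many excursions from $N_{\varepsilon/2}$ out past $N_\varepsilon$ are possible. Any such excursion must traverse the shell $N_\varepsilon\setminus\overline{N_{\varepsilon/2}}\subseteq K$ twice (out and back); since $\|g'\|$ is bounded on $A$ each traversal takes at least a fixed time $\rho>0$, and being spent in $K$ it lowers $L$ by at least $c\rho/2>0$. Since $L$ is bounded below, and since its upward drift is controlled by the $\delta$-bounds above, at most finitely many such decrements can be absorbed, after which the trajectory never leaves $N_\varepsilon$; hence there is $T$ with $\mathrm{dist}(x(t),X)<\varepsilon$ for all $t>T$. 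As all constants ($c$, $M$, $\rho$, $\alpha$, $\beta$) depend only on $g$, $L$, $A$ and $\varepsilon$ — not on the particular $g'$ or initial condition — the time $T$ can be taken uniform.

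The main obstacle is making the ``finitely many excursions'' bookkeeping airtight: near a fixed point $\dot L_{\mathrm{unpert}}$ is only barely negative while $\nabla L$ need not vanish, so the perturbation can pump $L$ \emph{upward} while the trajectory lingers there, potentially undoing the decrement incurred on each excursion. I would handle this by interposing a further inner scale $N_{\varepsilon/4}$ and, shrinking $\delta$ once more using compactness of $N_{\varepsilon/2}\setminus N_{\varepsilon/4}$, forcing the perturbed $\tfrac{d}{dt}L$ to be strictly negative on that annular region too, so that any net increase of $L$ is confined to the small set $N_{\varepsilon/4}$; one then compares the (bounded) supremum of $L$ over $N_{\varepsilon/4}$ against the definite $L$-decrement of a full out-excursion to close the estimate. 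Equivalently, once the unperturbed attraction of $X$ is in hand, this is exactly the point at which one may invoke the classical converse-Lyapunov / total-stability machinery.
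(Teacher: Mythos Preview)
Your proposal is correct and follows essentially the same strategy as the paper's proof: both obtain a uniform negative bound on $\dot L$ away from $X$ via compactness, control the perturbation through $\nabla L\cdot(g'-g)$, argue that each excursion out past the $\varepsilon$-neighborhood costs a definite amount of $L$, and then compare this decrement against the small oscillation of $L$ near the fixed points to rule out infinitely many excursions. The paper packages the same three ingredients as explicit quantities $D(\delta)$ (the guaranteed $L$-drop on an out-and-back excursion), $\varepsilon'(\delta)$ (the shrinking radius outside which the perturbed $\dot L$ is still negative), and $E(\delta)$ (the oscillation of $L$ on $B_{\varepsilon'(\delta)}(x^*)$), and concludes by noting that a trajectory cannot leave $B_\varepsilon(x^*)$ and later return to $B_{\varepsilon'}(x^*)$ for the same $x^*$, so with finitely many fixed points it must eventually get trapped near one of them. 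Your use of fixed nested scales $\varepsilon,\varepsilon/2,\varepsilon/4$ with $\delta$ adjusted accordingly is just the dual bookkeeping to the paper's choice of letting $\varepsilon'(\delta)\to 0$; the content is the same, and your explicit framing via LaSalle and total stability arguably makes the structure clearer.
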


\begin{figure}
\includegraphics[width = 6 in]{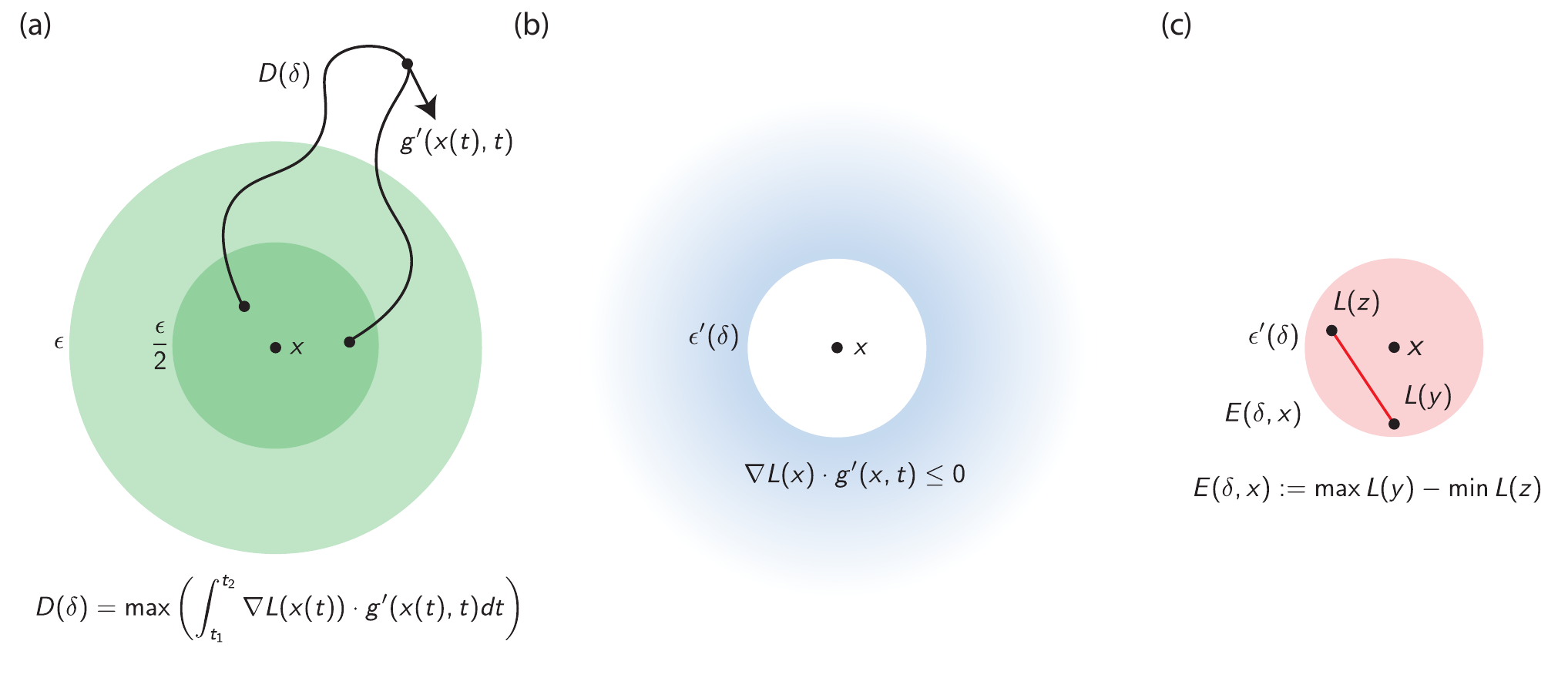}
\caption[Schematic for the proof of Lemma \ref{lem:lyapunov_hell}]{(a) The quantity $D(\delta)$. (b) The quantity $\varepsilon'(\delta)$. (c) The quantity $E(\delta, x)$.  \label{fig:lyapunov}}
\end{figure}

\begin{proof}
We consider three quantities, $D(\delta)$, $\varepsilon'(\delta)$,and $E(\delta)$, represented in Figure \ref{fig:lyapunov}.  First, $D(\delta)$ gives the maximum value of $\int_{t_1}^{t_2} \nabla L(x(t)) \cdot g'(x(t),t)dt$ can assume along trajectories of the system $\xdot = g'(x, t)$  which begin and end in an $\varepsilon/2$ neighborhood of $X$, but leave an $\varepsilon$ neighborhood of $X$ along the way. We show that $D(\delta)$ approaches a negative limit as $\delta \to 0$. Notice that  
\begin{align*}\int_{t_1}^{t_2} \nabla L(x(t)) \cdot g'(x(t),t)dt \leq (t_2 - t_1)  \max_{x \in A, d(x, X) \geq \varepsilon/2} (\nabla L(x(t)) \cdot g(x(t),t)).
\end{align*}
Now, since $\nabla L(x) \cdot g'(x,t)$ attains its negative maximum value $\alpha < 0$ on the compact set $\{x \in A\mid d(x, X) \geq \varepsilon\}$ , and  $g'(x, t) \to g(x)$ as $ \delta\to 0$, we must have that  $\max_{x \in A, d(x, X) \geq \varepsilon/2} (\nabla L(x) \cdot g'(x,t)) \to \alpha < 0$ as $\delta \to 0$.  Further, any trajectory which begins in an $\varepsilon/2$ neighborhood of $X$, leaves an $\varepsilon$ neighborhood of $X$, and returns to an  an $\varepsilon/2$ neighborhood of $X$ must have length at least $\varepsilon$, provided we have chosen $\varepsilon$ small enough that all fixed points in $X$ are separated by a distance of at most $\varepsilon$.  

Finally, since $g(x)$ is bounded above, so is $g'(x, t)$. Thus for a trajectory of length $\varepsilon$, we must have $t_2 - t_1 \geq \varepsilon / \max_{x\in A}(|g'(x,t)|)$. This must approach the positive limit $t_2 - t_1 \geq \varepsilon / \max_{x\in A}(|g(x)|)$ as $\delta\to 0$. Thus, the product $(t_2 - t_1)  \max_{x \in A, d(x, X) \geq \varepsilon/2} (\nabla L(x) \cdot g'(x,t))$ approaches a negative limit at $\delta \to 0$. Thus, so does $D(\delta)$. 

Now, we consider $\varepsilon'(\delta)$, which we define as the maximum value of $\varepsilon$ such that $ \nabla L(x) \cdot g'(x,t) < 0$ for all $x$ such that $|x - X| \geq \varepsilon'$. Again, we have that $\varepsilon' \to 0$ as $\delta \to 0$, since $ \nabla L(x) \cdot g'(x,t)\to \nabla L(x) \cdot g(x)$, and  $\nabla L(x) \cdot g(x) < 0$ for all $x\notin X$. Finally, we define 
\begin{align*}E(\delta, x):=\max_{y \in B_{\varepsilon'(\delta)}(x)} L(y) - \min_{z \in B_{\varepsilon'(\delta)}(x)}  L(z) 
\end{align*}
 Notice that  as $\delta \to 0$, $E(\delta, x) \to 0$ by continuity of $L$. Now, let $$E(\delta) := \max_{x \in X} E(\delta, x).$$ Since we are taking this maximum over a finite set, we also have that $E(\delta) \to 0$ as $\delta \to 0$. 
 
Thus, for $\delta$ small enough, we have $|E(\delta)| < |D(\delta)|$. When this is achieved, we claim that it is impossible for a trajectory to start within $B_{\varepsilon'}(X)$, leave  $B_{\varepsilon}(X)$, and return to $B_{\varepsilon'}(X)$. We choose $\delta$ such that $\varepsilon' < \varepsilon/2$. Then along any trajectory starts from and returns from  $B_{\varepsilon'}(X)$, the value of $L$ must decrease along the trajectory by at least $D(\delta)$. However, since the most the value of $L$ can change within $B_{\varepsilon'}(x)$ is $E(\delta) < D(\delta)$, this is not possible. 

Finally, we show that this means that there exists $T$ such that all trajectories of of the system $\dot x = g'(x, t)$ satisfy $|x(t) - X| < \varepsilon$ for all $t > T.$ By our definition of $\varepsilon'(\delta)$, we have that $ \nabla L(x) \cdot g'(x,t) < 0$ outside of $B_{\varepsilon'}(X)$. Thus, since $L$ is bounded below, all trajectories must visit the set $B_{\varepsilon'}(X)$. Now, for each particular fixed point $x^*\in X$, it is not possible for a trajectory to leave the set $B_\varepsilon(x^*)$ and return to $B_{\varepsilon'}(x^*)$. Thus, since there are only finitely many fixed points $x^*$, our trajectory must eventually get stuck in  $B_\varepsilon(x^*)$ for one of them.

\end{proof}

\begin{proof}[Proof of Theorem \ref{thm:dag}]
Let $G$ be a DAG. Let $\tau$ be the set of all sinks of $G$, and $\omega$ be $V_G\setminus\tau$. Notice that this is an inductively stable DAG decomposition of $G$. Then by Lemma \ref{lem:dag_decomp}, $\lim_{t\to \infty}x_i(t) = 0$ for all $i\in \omega$. 
Now, since $G|_\tau$ is symmetric, it has  a Lyapunov-like function $L$ by \cite{hahnloser2000permitted}. 
Therefore, by considering the effect of $x_i, i\in \omega$ as a time varying external input on $x_j, j\in \tau$, by our Lemma \ref{lem:lyapunov_hell}, for each $\eta > 0$, we can find a time $T'$ such that $\{x_j\}_{j\in \tau}$  is contained within an $\eta$ neighborhood of a fixed point of the CTLN defined by $G|_\tau$. 
Thus, as $T\to \infty$, the trajectory approaches a fixed point whose support is contained within $\tau$. 
\end{proof}

\begin{proof}[Proof of Theorem \ref{thm:dag_onto_symmetric}]
Let $G$ be a graph and $\omega, \tau$ a partition of the vertices of $G$ such that $G|_\tau$ is symmetric and connected, $G|_\omega$ is a DAG, and there are no edges from $\tau$ to $\omega$. Then by Lemma \ref{lem:dag_decomp}, $\lim_{t\to \infty}x_i(t) = 0$ for all $i\in \omega$. 
Now, since $G|_\tau$ is symmetric, it has  a Lyapunov-like function $L$ by \cite{hahnloser2000permitted}. 
Therefore, by considering the effect of $x_i, i\in \omega$ as a time varying external input on $x_j, j\in \tau$, by our Lemma \ref{lem:lyapunov_hell}, for each $\eta > 0$, we can find a time $T'$ such that $\{x_j\}_{j\in \tau}$  is contained within an $\eta$ neighborhood of a fixed point of the CTLN defined by $G|_\tau$. 
Thus, as $T\to \infty$, the trajectory approaches a fixed point whose support is contained within $\tau$. 
\end{proof}

\section{Classification of 3 neuron CTLNs}

We close by using the results of the previous section to show that, among CTLNs with 3 neurons, only the directed 3 cycle has a dynamic attractor, which is a limit cycle. 

\begin{ithm}
The three-cycle is the only three neuron CTLN with persistent dynamic activity. 
\end{ithm}

We split this work into several cases, shown in Figure \ref{fig:n_3_class}.

\begin{figure}[ht!]
\begin{center}
\includegraphics[width = 5 in]{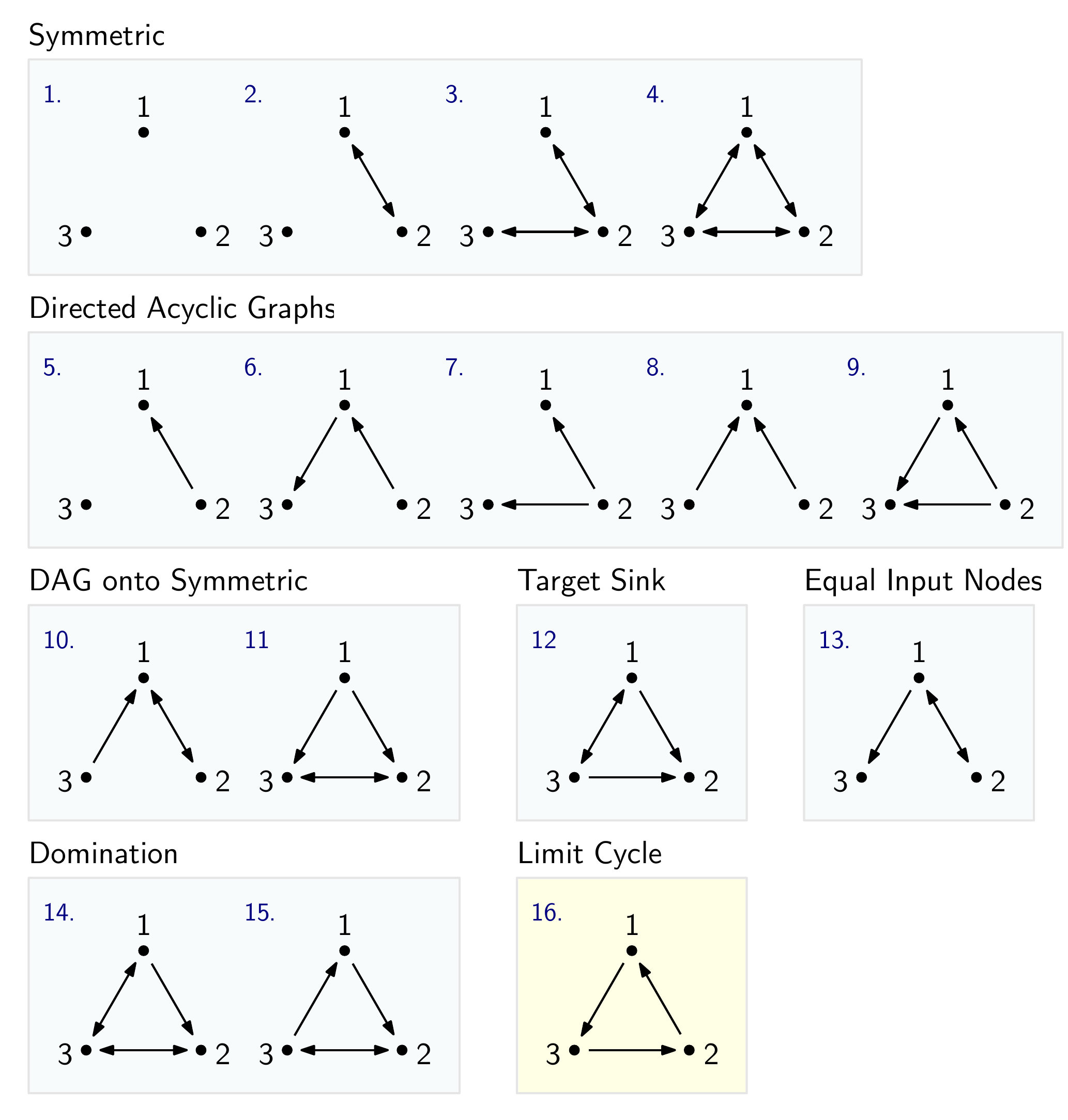}
\end{center}
\caption[Our classification of 3 neuron graphs.]{Our classification of 3 neuron graphs. Only Graph 16, the directed 3-cycle (in a yellow box), has a dynamic attractor, which is a limit cycle. \label{fig:n_3_class}}
\end{figure}

\begin{itemize}
\item \textbf{Case 1: Symmetric graphs.} Graphs 1-4 are symmetric. Thus, by \cite{hahnloser2000permitted}, all trajectories of their CTLNs approach stable fixed points. 
\item 
\textbf{Case 2: Directed acyclic graphs}
Graphs 5-9 are directed acyclic graphs. 
Thus, by Theorem \ref{thm:dag}, all trajectories of their CTLNs approach stable fixed points. 

\item \textbf{Case 3: DAG onto symmetric}
Graphs 10 and 11 can be decomposed as the union of a directed acyclic graph and a connected symmetric graph, such that every vertex of the symmetric part is reachable from the DAG part. 
Thus, by Theorem \ref{thm:dag_onto_symmetric}, all trajectories of their CTLNs approach stable fixed points.
 
\item \textbf{Case 4: Target Sink}
In graph 12, neuron 2 is a proper sink and receives an edge from every other neuron. 
Thus by Proposition \ref{prop:supersink}, all of its trajectories approach a stable fixed point. 

\begin{figure}[ht!]
\begin{center}
\includegraphics[width = 5 in]{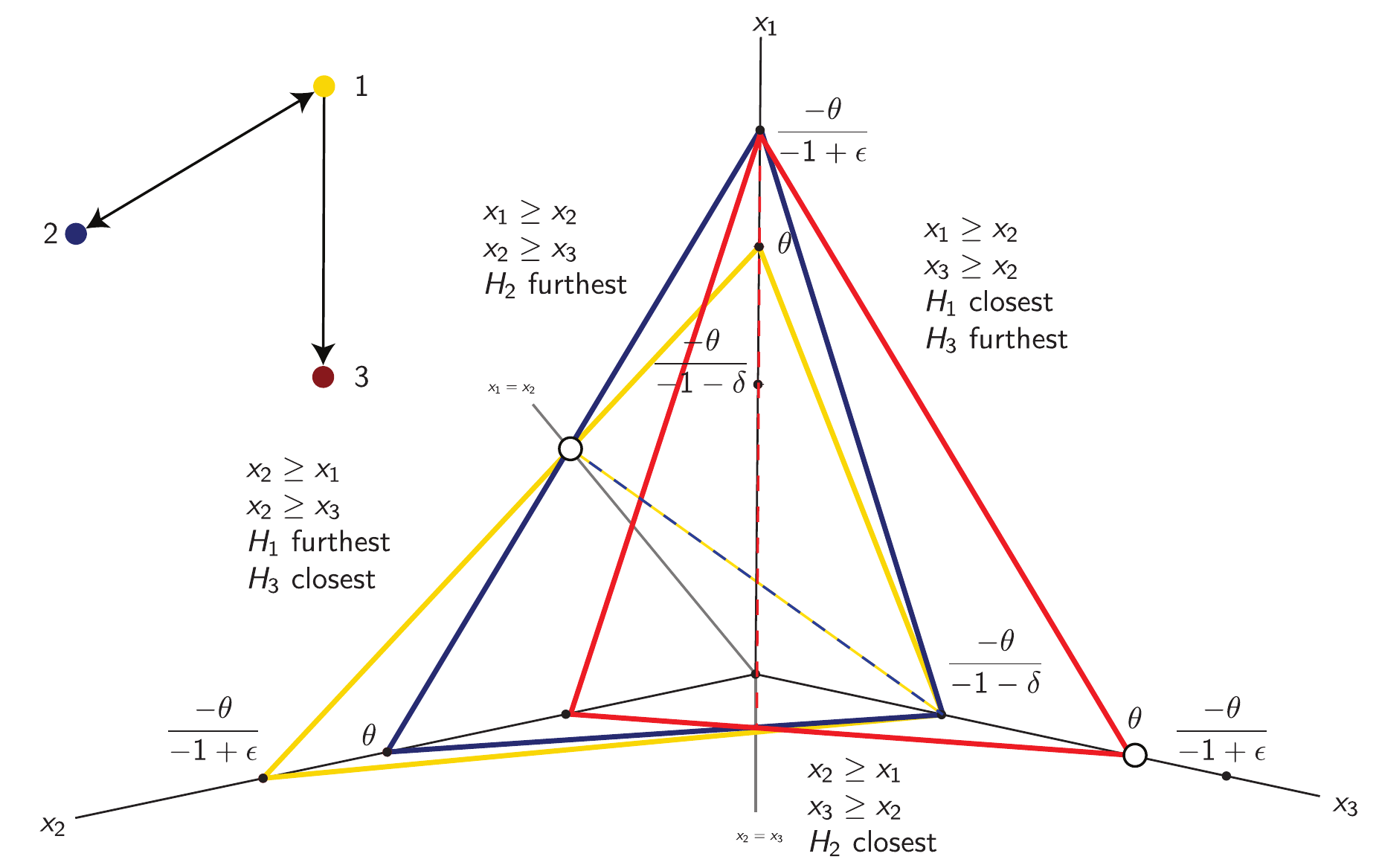}
\end{center}
\caption{Nullcline arrangement for graph 13. \label{fig:eq_input}}
\end{figure}

\item \textbf{Case 5: Equal input nodes}
In Graph 13, the pairs of vertices $\{1, 2\}$ and $\{2, 3\}$ receive equal input. Thus, by Corollary \ref{cor:equal_input}, the ordering of $x_1, x_2$ and $x_2, x_3$ is fixed by the initial conditions. 
Now, we consider the chambers defined by the hyperplanes $x_1 = x_2$, $x_2 = x_3$.  
We examine the nullcline arrangement within each chamber and show that no chamber can support a dynamic attractor. See Figure \ref{fig:eq_input} for an illustration. 

\begin{itemize}

\item 
Within the chamber where $x_2 \geq x_1$ and $x_3 \geq x_2$, the nullcline $H_2$ is closest to the origin. 

\item Within the chamber $x_2 \geq x_1$ and $x_2 \geq x_3$ the nullcline $H_1$ is furthest from the origin.

\item Within the chamber where $x_1 \geq x_2$ and $x_3 \geq x_2$,the hyperplane $H_3$ is furthest from the origin, and the hyperplane $H_1$ is closest to the origin. 

\item Within the chamber where $x_1 \geq x_2$ and $x_2 \geq x_3$, the hyperplane $H_2$ is furthest from the origin. 
\end{itemize}
Thus, within each chamber defined by the hyperplanes $x_1 = x_2$, $x_2 = x_3$, the sign of at least one derivative $x_i'$ is fixed. Thus, the value of $x_i$ approaches a limit. Thus, the derivative $x_i'$ approaches zero, so trajectories approach the $i^{th}$ nullcline.  If they approach $E_i$, then by Proposition \ref{prop:n=2}, activity of this CTLN converges to a stable fixed point. 
Now, we can't approach anything other than a fixed point on $H_i$ because $H_i$ is repelling in this chamber. 

\item \textbf{Case 6: Domination relationships}
We deal with graphs 14 and 15 separately. Their nullcline arrangements are illustrated in Figure \ref{fig:dom_cases}. 
\begin{itemize}
\item We first consider graph 14. Notice that $x_2$ strongly dominates $x_1$ and that $x_2$ and $x_3$ are equal input. Thus, the ordering of $x_2$ and $x_3$ is fixed by the initial conditions. Now, on the $x_2\leq x_3$ side of the hyperplane defined by $x_2 = x_3$, the nullcline $H_2$ is furthest from the origin. On the $x_2 \geq x_3$ side, the nullcline $H_1$ is closest to the origin and the nullcline $H_3$ is furthest from the origin. Thus, by Lemma \ref{lem:single_chamber}, dynamics within each chamber approaches a stable fixed point. 
\item 
We next consider Graph 15. Notice that $x_2$ strongly dominates $x_1$ and that $x_2$ weakly dominates $x_3$  Thus, the ordering of $x_2$ and $x_3$ is fixed by the initial conditions. And, eventually, $x_2$ becomes greater than $x_1$.  

Thus, we restrict our analysis to the side of the $x_1= x_2$ hyperplane where $x_1 \leq x_2$. 
Now, on the $x_2\leq x_3$ side of the hyperplane defined by $x_2 = x_3$, the nullcline $H_2$ is furthest from the origin. On the $x_2 \geq x_3$ side, the nullcline $H_1$ is closest to the origin. Thus, by Lemma \ref{lem:single_chamber}, dynamics within each chamber approaches a stable fixed point. 
\end{itemize}

\end{itemize}

\begin{figure}[ht!]
\begin{center}
\includegraphics[width = 5 in]{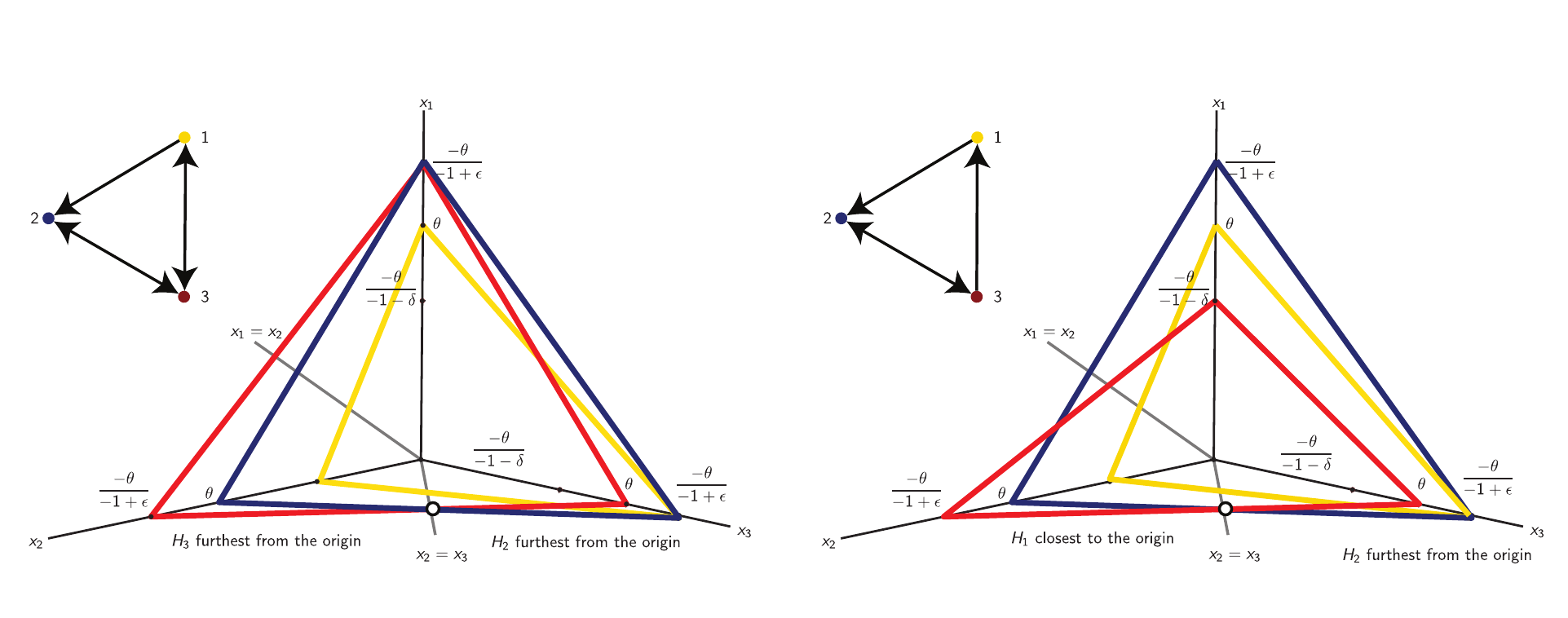}
\end{center}
\caption{Nullcline arrangement for graphs 14 and 15. \label{fig:dom_cases}}
\end{figure}

\section{Main results: competitive TLNs}
In \cite{curto2019robust}, Curto, Langdon, and Morrison relax the strict CTLN conditions, and investigate how and when the graph $G_W$ constrains the dynamics of a competitive TLN. In particular, they say that a graph $G$ is a \emph{robust motif} if $G$ fully determines $FP(W)$ for all weight matrices $W$ such that $G_W = G$. Their Theorem 1.6 gives a complete characterization of which graphs $G$ are robust motifs. 

\begin{thm*}[Theorem 1.6, \cite{curto2019robust}]
 $G$ is a robust motif if and only if one of the following holds:
 \begin{enumerate}
\item $G$ belongs to one of the infinite families DAG1 or DAG2, or
\item  $G$ is one of the four invariant permitted motifs (the singleton, the 2-clique, the independent set of size 2, or the 3-cycle).
\end{enumerate}
\end{thm*}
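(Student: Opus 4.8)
The statement is a biconditional, so the plan is to prove its two implications separately: \textbf{(sufficiency)} every graph in DAG1, DAG2, or the list of four invariant permitted motifs is a robust motif, i.e.\ $\FP(W)$ is the same for all competitive TLN weight matrices $W$ with $G_W = G$; and \textbf{(necessity)} for every other graph $G$ there exist $W, W'$ with $G_W = G_{W'} = G$ but $\FP(W) \neq \FP(W')$. Throughout I would work from the standard fixed-point sign inequalities --- on a candidate support $\sigma$ the associated linear system must have a solution in the positive orthant, and every off-support neuron must receive negative net input --- since these are the only tool that directly converts ``$G_W = G$'' into constraints on $\FP(W)$.

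For sufficiency I would compute $\FP(W)$ family by family and observe that the answer depends only on $G$. The four small motifs (the singleton, the $2$-clique, the size-$2$ independent set, and the directed $3$-cycle) are covered by the $n \le 2$ discussion of Chapter~\ref{chapter:TLNs1} together with the $3$-neuron case analysis built in this chapter (Figure~\ref{fig:n_3_class}): in each case $\FP(W)$ is forced by sign conditions and by the domination relations of Propositions~\ref{prop:weak_dom} and~\ref{prop:strong_dom} and Corollary~\ref{cor:equal_input}, all of which are statements about $G_W$ rather than about the precise weights. For DAG1 and DAG2 the point is that these graphs are ``feedforward into a rigid core'': iterated graphical domination forces every vertex outside a distinguished terminal set out of every fixed-point support, in exactly the way ``sources die'' is used in Lemma~\ref{lem:dag_decomp}. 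I would formalize this by induction on $|V(G)|$, at each step peeling off a graphically dominated source (the inequalities $\wt{W}_{ki} \ge \wt{W}_{ji}$ needed for domination follow from $G_W = G$ alone), so that $\FP(W)$ reduces to the fixed points of the terminal subgraph; since that subgraph contributes a single (stable, essentially full-support) fixed point independent of $W$, the overall answer is $W$-independent.

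For necessity I would argue by contrapositive. The first and most delicate step is structural: translate the hypothesis that $G$ lies in none of DAG1, DAG2, or the four-motif list into a finite list of local obstructions --- e.g.\ a proper directed cycle of length $\ge 4$, a pair of cycles sharing structure, a cycle with a chord that violates the DAG-into-core shape, a would-be target vertex that is not a sink, and so on. For each obstruction I would pin it to a small induced subgraph $H \subseteq G$ and exhibit two realizations $W_H \neq W_H'$ of $H$ with $\FP(W_H) \neq \FP(W_H')$ --- typically by sliding a single edge weight across the threshold $-1$ (or across the finer $\varepsilon,\delta$ windows) so as to create or annihilate one fixed point --- and then extend each to a weight matrix on all of $G$ by placing strong inhibition on the remaining edges, chosen so that $G_W = G_{W'} = G$ is preserved, the added neurons are graphically dominated and hence enter no fixed-point support, and the discrepancy on $H$ therefore persists in $\FP(W)$ versus $\FP(W')$. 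Each such verification is again a routine check of the sign inequalities plus a domination argument.

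The main obstacle is the necessity direction --- specifically, making the list of local obstructions genuinely exhaustive and checking that each local certificate \emph{lifts}: adding back the rest of $G$ can re-stabilize a fixed point one is trying to destroy, so the inhibition weights on the complementary edges must be tuned, and one must prove uniformly that no such interference occurs. A secondary difficulty is making the sufficiency induction work for \emph{every} member of the two infinite families rather than for small cases, which requires a clean recursive description of DAG1 and DAG2 purely in terms of iterated domination. I would expect the individual computations to be short; the real work is organizing the case analysis so that nothing is missed.
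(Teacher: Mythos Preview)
This theorem is not proved in the dissertation at all. It is quoted verbatim from \cite{curto2019robust} (note the starred environment and the bracketed citation in the theorem header) and used only as background motivation for Conjecture~\ref{conj:dyn_robust} about dynamic robustness. There is therefore no ``paper's own proof'' to compare your proposal against; the actual argument lives in the Curto--Langdon--Morrison paper, which develops a detailed combinatorial theory of robust motifs that is not reproduced here.

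A few remarks on your sketch nonetheless. First, the propositions you cite from this dissertation (Propositions~\ref{prop:weak_dom}, \ref{prop:strong_dom}, Corollary~\ref{cor:equal_input}) are statements about \emph{trajectories}, not about the fixed-point set $\FP(W)$; the fixed-point consequences of domination are developed in \cite{curto2019fixed}, which you would need to invoke instead. Second, in your necessity argument you propose to extend a local counterexample on $H$ to all of $G$ by ``placing strong inhibition on the remaining edges,'' but the constraint $G_W = G$ forces $W_{ji} > -1$ whenever $i \to j$ in $G$, so you cannot freely make those entries strongly inhibitory; the extension step is more delicate than you suggest. Third, your structural step --- reducing ``$G$ is not in DAG1, DAG2, or the four small motifs'' to a finite list of local obstructions --- is the real content of the classification in \cite{curto2019robust}, and is not something that follows from anything in this dissertation.
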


In particular, TLNs in the infinite families DAG1 and DAG2 each have a single fixed point, which is stable. However, the fixed point structure of a dynamical system does not fully determine its activity: in principle, a dynamical system with a unique fixed point which is stable may also have a limit cycle or a chaotic attractor. Thus, we would like to show that the families DAG1 and DAG2 are robust in the following sense:

\begin{conj}\label{conj:dyn_robust}
Let $(W, \theta)$ define a TLN such that $G_W$ is a member of DAG1 or DAG2. Then the unique fixed point is always a global attractor.
\end{conj}

We are not able to prove this conjecture for the full families DAG1 or DAG2. However, we define a smaller infinite family of graphs, \emph{shallow DAG1}, and show that graphs in this infinite family are dynamically robust in this sense.

\begin{defn}
Let $G$ be a member of DAG1 with source node $s$ and target node $u$. We say $G$ is a member of \emph{shallow DAG1} if  it satisfies the additional property that if $i\to j$, then $i = s$ or $j = u$. 
 That is, if a node $j$ is not the target, then it may only receive input from the source $s$. See Figure \ref{fig:dag1}. 
\end{defn}

\begin{figure}[ht!]
\begin{center}
\includegraphics{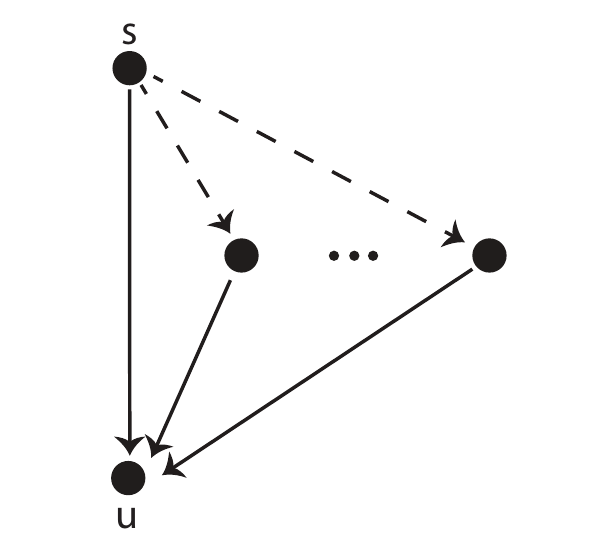}
\caption{A graph is a member of shallow DAG1 if only the target receives from a non-source vertex.
\label{fig:dag1}}
\end{center}
\end{figure}

We prove Conjecture \ref{conj:dyn_robust} in the special case where $G$ is a member of shallow DAG1. 

\begin{ithm}\label{thm:dag1global}
Let $G$ be a member of shallow DAG 1, and  $(W, \theta)$ be a TLN with  $G_W = G$.  This TLN has a unique fixed point which is a global attractor. 
\end{ithm}

We also prove a second result, that if $G$ is any member of DAG1 and the coefficients are close enough to the CTLN coefficients, then all trajectories of the CTLN approach a fixed point. 

\begin{ithm}\label{prop:near_ctln}
Let $G$ be a member of DAG1. There is an open set in $\R^{n\times n}$ containing the CTLN coefficients for $G$ such that if the weight matrix $W$ falls in this set, the TLN has a unique fixed point which is a global attractor. 
\end{ithm}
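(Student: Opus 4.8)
The plan is to regard a TLN with weight matrix $W$ near the CTLN matrix $W_0 := W(G,\varepsilon,\delta)$ as a small perturbation of the CTLN vector field and to transfer global convergence across the perturbation, using a Lyapunov function together with Lemma \ref{lem:lyapunov_hell}. Fix $G\in\mathrm{DAG1}$, keep the external drive $b=\theta\mathbf{1}$ fixed, and write $g_0(x)=-x+[W_0x+b]_+$ and $g_W(x)=-x+[Wx+b]_+$. First I would record what we already know about the unperturbed system: since $G$ is a DAG, Theorem \ref{thm:dag} says every trajectory of the CTLN converges to a fixed point, and since $G\in\mathrm{DAG1}$ is a robust motif with a single stable fixed point $x^*$ (by the classification of \cite{curto2019robust}), every trajectory converges to $x^*$. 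Standard a priori bounds (using that the off-diagonal weights stay negative near $W_0$; cf.\ Corollary \ref{cor:total_pop} and forward invariance of the positive orthant) confine every trajectory to a fixed compact box $B$, so on $B$ the CTLN is globally asymptotically stable. I would also check that $x^*$ is hyperbolic: it lies in the interior of a single linear chamber, and the local Jacobian there has all eigenvalues with negative real part — for the CTLN it is in fact triangular with diagonal $-1$, since at $x^*$ the sink is active with value $\theta$ while every other neuron $j$ has $\sum_i (W_0)_{ji}x^*_i+\theta<0$.

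Next I would produce a Lyapunov function for the CTLN. Because $g_0$ is globally Lipschitz and the system is globally asymptotically stable (Step 1), the converse Lyapunov theorem for Lipschitz — not merely smooth — vector fields yields a $C^1$ function $L$ on $B$ with $L\ge 0$, $L^{-1}(0)=\{x^*\}$, and $\nabla L\cdot g_0<0$ on $B\setminus\{x^*\}$; in particular $L$ meets the hypotheses of Lemma \ref{lem:lyapunov_hell} with fixed-point set $X=\{x^*\}$. One could instead try to build $L$ by hand, running the ``sources die'' estimates of Lemmas \ref{lem:smallest_sources_die}--\ref{lem:lyapunov} with built-in perturbation slack and then handling the sink subgraph; but the sink subgraph of a DAG1 graph is an independent set whose CTLN is symmetric, and a generic perturbation of $W$ destroys that symmetry, so \cite{hahnloser2000permitted} no longer applies directly — the converse-theorem route sidesteps this.

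Then I would perturb. If $\|W-W_0\|$ is small then $W_{ij}>-1\iff (W_0)_{ij}>-1$ (both CTLN values $-1+\varepsilon$ and $-1-\delta$ are bounded away from $-1$), so $G_W=G$; hence by \cite{curto2019robust} the TLN $(W,b)$ again has a unique fixed point $\tilde x^*$, with $\tilde x^*\to x^*$ as $W\to W_0$ by continuity, and $B$ can be taken absorbing for all such $W$. On $B$ one has $\|g_W-g_0\|_\infty\le C\|W-W_0\|$, so for every $\eta>0$ there is a threshold on $\|W-W_0\|$ below which Lemma \ref{lem:lyapunov_hell} (applied with $g'=g_W$; or, more directly, the fact that $\nabla L\cdot g_0$ has a strictly negative maximum outside any ball about $x^*$, hence $\nabla L\cdot g_W<0$ there too once the perturbation is small) forces every trajectory of $g_W$ eventually into $B_\eta(x^*)$. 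Finally, since $x^*$ is hyperbolic stable for $g_0$ and the relevant chamber and local Jacobian vary continuously with $W$, for small $\|W-W_0\|$ the equilibrium $\tilde x^*$ is hyperbolic stable with a basin of attraction containing a ball $B_\rho(\tilde x^*)$ of radius $\rho>0$ uniform in $W$. Shrinking the perturbation so that $|\tilde x^*-x^*|<\rho/2$ and choosing $\eta<\rho/2$, we get $B_\eta(x^*)\subseteq B_\rho(\tilde x^*)$, so every trajectory converges to $\tilde x^*$: it is a global attractor. The admissible $W$ form an open neighborhood of $W_0$ in $\R^{n\times n}$, as required by Theorem \ref{prop:near_ctln}.

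The main obstacle is Step 2 — producing a Lyapunov function for the unperturbed CTLN. Theorem \ref{thm:dag} only gives trajectory-by-trajectory convergence via an induction, not a global Lyapunov function, so one must either invoke the Lipschitz version of the converse Lyapunov theorem (being careful that the CTLN vector field is only $C^0$, and that ``every trajectory converges to $x^*$'' has been upgraded to genuine global asymptotic stability using the local stability of $x^*$ from Step 1) or redo the proof of Theorem \ref{thm:dag} keeping explicit perturbation margins in every inequality, which runs into the symmetry-breaking issue noted above. Everything after Step 2 is openness and continuity bookkeeping around Lemma \ref{lem:lyapunov_hell}.
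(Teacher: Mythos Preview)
Your proposal is correct but takes a genuinely different route from the paper. The paper does not invoke a converse Lyapunov theorem at all; instead it proves a quantitative strengthening with an \emph{explicit} open condition on $W$ --- namely $W_{nk}-W_{ik}>\tfrac{1}{d_{\mathrm{in}}(i)}(W_{in}+1)$ for all $i,k\neq n$, where $n$ is the target --- and then shows directly that any such TLN converges. The mechanism is a hand-built cascade of local Lyapunov functions $L_i(x)=x_i-x_n$ for a nested family $\Omega_i(\varepsilon)=\{x:x_j\le x_n+\varepsilon\text{ for }j\le i\}\cap\cA$ indexed by a topological order of $G$: Lemma \ref{lem:i_vs_n} gives the key inequality $h_i^*(x)<h_n^*(x)$ on $\Omega_{i-1}(\delta_i)$, and then an induction drives trajectories through $\Omega_1\supset\Omega_2\supset\cdots\supset\Omega_{n-1}$, after which $\dot x_n>0$ is forced and monotonicity finishes. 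This is exactly the ``redo Theorem \ref{thm:dag} with explicit perturbation margins'' alternative you mention in your final paragraph and then set aside.

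What each approach buys: your route is softer and more portable --- once you accept the Lipschitz converse Lyapunov theorem, the argument would transfer verbatim to any competitive TLN known to be globally asymptotically stable, not just DAG1 --- but it is non-constructive and brings in machinery external to the paper. The paper's route stays within its own toolkit of explicit piecewise-linear local Lyapunov functions, yields a concrete and checkable description of the open set, and avoids the technical care needed to invoke converse theorems on a forward-invariant orthant for a merely Lipschitz field. Your worry about symmetry breaking in the sink subgraph is moot for DAG1 specifically (the sink is a single vertex), but your instinct that the inductive proof of Theorem \ref{thm:dag} could be rerun with slack is precisely what the paper carries out.
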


\subsection{Shallow DAG1}

In order to prove Theorem \ref{thm:dag1global},  we generalize shallow DAG1 to a larger family of TLNs using strong domination. In order to do this, we need a notion of strong domination with respect to a set $\sigma\subseteq[n]$.

\begin{defn}
Neuron $k$ strongly dominates neuron $j$ with respect to $\sigma \subseteq [n]$ if $\wt W_{ji} \leq \wt W_{ki}$ for all $i\in \sigma$. 
\end{defn}

\begin{defn} Let $W$ be the weight matrix of a TLN. We say that $W$ satisfies the \emph{shallow DAG1 domination property} if there exist neurons $s$ and $u$ such that the following conditions hold:
\begin{enumerate}
\item  $u$ strongly dominates $s$ with respect to $[n]$
\item  for all $i\notin\{s, u\}$,  $u$ strongly dominates $i$ with respect to $[n]\setminus \{s\}$. 
\end{enumerate}

\end{defn}

\begin{prop}
If $G$ is a member of shallow DAG1, and $G_W = G$, then $W$ satisfies the shallow DAG1 domination property. 
Further, suppose that $G$ is a graph such that $W$ satisfies the shallow DAG1 property whenever $G_W = G$. Then $G$ is a member of shallow DAG1. 
\end{prop}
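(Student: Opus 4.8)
The plan is to prove the two implications separately, both by reducing everything to the sign constraints that $G_W=G$ imposes on the entries of $W$. Recall that $G_W=G$ means precisely that for $i\neq j$ we have $W_{ji}>-1$ iff $i\to j$ in $G$, that $\wt W_{ji}=W_{ji}$ for $i\ne j$, and that $\wt W_{ii}=-1$. Recall also that a member $G$ of shallow DAG1 has a distinguished source $s$ (no in-edges) and target $u$ (no out-edges) with $s\to j$ and $j\to u$ for every other vertex $j$, and the extra property that the only out-edge of any vertex $j\notin\{s,u\}$ is $j\to u$. Throughout I read "$u$ strongly dominates $j$ with respect to $\sigma$" as the family of inequalities $\wt W_{jk}\le \wt W_{uk}$ for $k\in\sigma$, and I use three elementary facts repeatedly: (i) $\wt W_{ii}=-1$, while an off-diagonal non-edge entry is $\le -1$ and an off-diagonal edge entry is $>-1$; (ii) since $s$ is a source, $\wt W_{si}\le -1$ for every $i\ne s$; (iii) since every $j\ne u$ satisfies $j\to u$, $\wt W_{uj}>-1$ for every $j\ne u$, and since $u$ is a sink, $\wt W_{iu}\le -1$ for every $i\ne u$.

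\emph{Forward direction.} Given $G$ in shallow DAG1 and $W$ with $G_W=G$, take $s,u$ to be the source and target of $G$ and verify the two conditions column by column. For Condition~1 ($u$ dominates $s$ with respect to $[n]$) the cases $k=s$, $k=u$, and $k\notin\{s,u\}$ each reduce to comparing a $\le -1$ quantity in row $s$ (or a diagonal $-1$ in column $u$) with a $>-1$ quantity in row $u$ (or a diagonal $-1$), which is exactly what (i)--(iii) supply. For Condition~2 ($u$ dominates $i$ with respect to $[n]\setminus\{s\}$ for each $i\notin\{s,u\}$): the case $k=i$ is $\wt W_{ii}=-1\le \wt W_{ui}$ by (iii); the case $k=u$ is $\wt W_{iu}\le -1=\wt W_{uu}$ by (iii); and the case $k\notin\{i,u,s\}$ uses that the only out-edge of $i$ goes to $u$, so $i\not\to k$ and hence $\wt W_{ik}\le -1<\wt W_{uk}$ by (iii). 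This establishes the shallow DAG1 domination property.

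\emph{Converse.} I argue the contrapositive: assuming the property holds for \emph{every} $W$ with $G_W=G$, I deduce the defining conditions of shallow DAG1. The main device is that along any edge $i\to k$ the weight $W_{ik}$ may be taken arbitrarily large positive without changing $G_W$, and enlarging it destroys any inequality $\wt W_{ik}\le \wt W_{uk}$ in which $k\ne u$. First apply the hypothesis to the CTLN matrix $W_0=W(G,\varepsilon,\delta)$ and let $(s,u)$ be a valid pair. Reading off $W_0$'s $\{-1+\varepsilon,-1-\delta\}$-valued entries, Condition~1 with $i=s$ forces $s\to u$, with $i=u$ forces $u\not\to s$, and Condition~1 with $i\notin\{s,u\}$ (together with a perturbation enlarging $W_{si}$ whenever $i\to s$) forces $i\not\to s$, so $s$ is a source. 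Condition~2 with $k=i$ forces $i\to u$ for every $i\notin\{s,u\}$; Condition~2 with $k=u$ forces $u\not\to i$, so $u$ is a sink receiving an edge from every other vertex; and Condition~2 with $k\notin\{i,u,s\}$ applied to a perturbation enlarging $W_{ik}$ forces that $i$ has no out-edge to any such $k$. Together with $G$ being loop-free these are exactly the defining conditions of shallow DAG1 (the absence of edges into $s$ makes it the unique source, and if $s\not\to j$ for some $j\ne u$ then $j$ would also be a source, contradicting uniqueness of the source in DAG1, so $s\to j$ for all $j$), and the one candidate pair $(s,u)$ is pinned down as the unique sink receiving from everyone and the unique source.

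\emph{Main obstacle.} The delicate point is the existential quantifier over $(s,u)$ in the domination property: to contradict it for a graph that is not shallow DAG1 I must produce a single weight matrix $W$ that defeats \emph{every} choice of pair, not merely one local perturbation. My plan is to dispose of this exactly as sketched above --- first show, from the shape of the inequalities alone, that any admissible $u$ must be a sink receiving from all other vertices and any admissible $s$ must be a source, so the candidate pair is essentially unique and graph-theoretic, and only then invoke a single generic perturbation of $W_0$ that enlarges one offending edge weight. The remaining work is the uniform bookkeeping of the column-by-column case split across both conditions and the boundary columns $k\in\{s,u\}$, which is routine but fiddly, plus confirming that the derived constraints coincide verbatim with the published definition of DAG1 restricted by the shallow condition.
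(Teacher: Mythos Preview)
Your approach is essentially the same as the paper's: both directions are handled by column-by-column case analysis using the sign constraints that $G_W=G$ imposes, and the converse is argued by contrapositive via weight perturbations.

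One small slip in the forward direction: when checking Condition~2 at $k\notin\{i,u,s\}$ you write ``the only out-edge of $i$ goes to $u$, so $i\not\to k$, hence $\wt W_{ik}\le -1$.'' But by your own convention $\wt W_{ik}\le -1$ is equivalent to $k\not\to i$, not $i\not\to k$. The fix is immediate---the shallow condition equally says that the only in-edge to $i$ (for $i\neq u$) comes from $s$, so $k\not\to i$ as needed---but the justification as written swaps in-edges and out-edges.

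On the converse, you are more careful than the paper. The paper's argument treats $(s,u)$ as if it were a fixed pair attached to $G$ and shows how to break the inequalities for that pair; it does not explicitly rule out that the perturbed $W$ satisfies the domination property for some \emph{other} choice of $(s,u)$. You correctly identify this existential quantifier as the delicate point and outline the right resolution: first argue that any admissible $u$ must be a sink receiving from every vertex (so $u$ is graph-theoretically determined), then perturb to constrain $s$. Your plan is sound, though note that pinning down $s$ from the CTLN alone does not work (for the CTLN, once $u$ is the universal sink, \emph{any} $s\neq u$ satisfies Condition~1), so the perturbation argument really is needed there; carrying this out cleanly requires producing a single $W$ that simultaneously defeats all candidate $s$'s, which takes a bit more than you have written but is routine along the lines you indicate.
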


\begin{proof}
First, suppose $G$ is a member of shallow DAG1. Because $s$ is a source and $u$ is a target, then $W_{ui} >-1$ for all $i\neq u$, and $W_{si} < -1$ for all $i\neq s$. Thus $\wt W_{ui} > \wt W_{si}$ for all $i$, thus $u$ strongly dominates $s$. 
Now, because if $i\to j$, either $i = s$ or $j = u$, we have $W_{ij} < -1$ for all $j\neq s, i$ and $W_{uj} > -1$ for all $j\neq u$. Thus, $\wt W_{uj} > \wt W_{ij}$. 

Now, we show that if $G$ is not a member of shallow DAG1, there is some weight matrix $W$ such that $G = G_W$, but $W$ does not satisfy the shallow DAG1 domination property. If $s$ is not a source, then there is some $i$ such that $i\to s$. Then we can choose $W_{ui}$ such that $W_{ui} <W_{si}$.  Likewise, if $u$ is not a target,  then there is some $i$ such that $i\not \to u$. Then we can choose $W_{ui} < W_{si}$. In both of these cases, we violate the condition that $u$ strongly dominates $s$. 

Now, suppose that there is some $i\neq s$, $j\neq s$ such that $i\to j$. Then $W_{ji} > -1$. Then we can choose $W_{ji} > W_{ui}$, which means that $u$ does not strongly dominate $j$ with respect to $[n]\setminus \{s\}$. 
\end{proof}

We will prove that if a TLN satisfies the shallow DAG1 domination property, it has a unique fixed point which is globally attracting. This means if the graph  $G$ is a member of the shallow DAG1 family, then any TLN with $G_W = G$ has a unique fixed point which is globally attracting. Thus, these graphs are dynamically robust in the sense of Conjecture \ref{conj:dyn_robust}. 

\begin{prop}\label{prop:dag1domglobal}
Let $W$ be the weight matrix of a TLN satisfying the shallow DAG1 domination property. 
Then all activity of the TLN defined by $W$ approaches a stable fixed point at which $u$ is the only active neuron. 
\end{prop}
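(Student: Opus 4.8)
The plan is to determine the equilibrium structure, confine all trajectories to a compact set, then show that every coordinate except $x_u$ decays to $0$, and finally conclude by local stability of the surviving fixed point.

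\emph{Step 1: the equilibrium structure.} First I would show that the only fixed point is $x^*$ with $\operatorname{supp}(x^*)=\{u\}$, namely $x_u^*=\theta/(1-W_{uu})$, and that it is stable. Conditions~(1) and~(2) of the shallow DAG1 domination property together say that $u$ strongly dominates every other neuron (neuron $s$ with respect to all of $[n]$, and each ``middle'' neuron with respect to $[n]\setminus\{s\}$); by the fixed-point consequences of domination in \cite{curto2019fixed}, no neuron other than $u$ can belong to a fixed-point support, so the only fixed point is $x^*$, the empty support being excluded since $\theta>0$. That $x^*$ is genuinely a fixed point follows from the inequality $\wt W_{ju}\le\wt W_{uu}$ for all $j\ne u$ --- the $u$-coordinate instance of the two domination hypotheses --- which forces $W_{ju}x_u^*+\theta\le 0$. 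Stability is immediate: in the chamber where only $y_u\ge 0$ the linearization is $\dot x_u=-(1-W_{uu})x_u+\theta$, $\dot x_j=-x_j$ for $j\ne u$, with eigenvalues $W_{uu}-1<0$ and $-1$ of multiplicity $n-1$.

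\emph{Step 2: confinement.} By Theorem~\ref{thm:mixed_sign} every trajectory enters and never leaves the mixed-sign region $\cA$, and by Corollary~\ref{cor:total_pop} the total activity is eventually trapped between $\theta/M$ and $\theta/m$; intersecting, every trajectory eventually lies in a fixed compact set $B\subseteq\R^n_{\geq 0}$ on which the arguments below take place.

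\emph{Step 3: decay of the non-$u$ coordinates (the crux).} The goal is $x_j(t)\to 0$ for all $j\ne u$. For $j=s$ this is where I would use Proposition~\ref{prop:strong_dom}: condition~(1) and Observation~\ref{obs:strong_domination} give $h_u^*(x)\ge h_s^*(x)$ on $\R^n_{\geq 0}$, hence $\tfrac{d}{dt}(x_s-x_u)\le-(x_s-x_u)$ and so $\limsup_{t\to\infty}(x_s(t)-x_u(t))\le 0$. For the middle neurons I would run the local-Lyapunov-function machinery along a nested family of compact forward-invariant sets $B\supseteq B_1\supseteq B_2\supseteq\cdots$ with $\bigcap_k B_k=\{x^*\}$, using on each step a function of the form $L=x_j-x_u$ (as in the proof of Proposition~\ref{prop:strong_dom}) while simultaneously tightening the constraint on $x_s$: on the slab $\{x_s\le x_u+\eta\}$, condition~(2) makes $u$ an ``approximate dominator'' of each middle neuron with error only $O(x_s)$, and the relevant constant is strictly below $1$ since $\wt W_{js}$ and $\wt W_{us}$ both lie in $(-1,0)$. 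Once every $x_j$ ($j\ne u$) has been pushed below a small $\zeta$, the rest is routine: $y_j<0$ for $j\ne u$, those coordinates decay, $x_u\to x_u^*$, the trajectory enters the basin of the stable fixed point $x^*$, and converges to it. An alternative route is to pass to the $(n-1)$-neuron TLN on $[n]\setminus\{s\}$, where $u$ strongly dominates every node so that a ``dominant-neuron'' argument in the spirit of Proposition~\ref{prop:supersink} gives convergence to $x^*$, and then transfer back by treating $x_s$ as a vanishing time-varying perturbation and applying Lemma~\ref{lem:lyapunov_hell}.

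\emph{Main obstacle.} The hard part is the interdependence inside Step~3. Strong domination of $s$ by $u$ yields only $x_s\lesssim x_u$, not $x_s\to 0$; and since condition~(2) controls $u$ against the middle neurons only on $[n]\setminus\{s\}$, the bound on a middle neuron's growth carries a term proportional to $x_s$. Conversely, the clean reason one expects $x_s\to 0$ --- that $H_s$ should eventually become the inner boundary of $\cA$, forcing $x_s$ monotonically down as in Proposition~\ref{prop:supersource} --- only becomes available once the middle neurons are already small. Breaking this loop, whether by interleaving the two decays in one nested-region argument or by the perturbation route above (which still has to bootstrap $x_s$ small), is where the real work of the proof lies.
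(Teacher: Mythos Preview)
Your setup (Steps 1 and 2) is fine, and you have correctly located the real difficulty: the circular dependence between getting $x_s\to 0$ and controlling the middle neurons. But you do not actually break this loop; both routes you sketch in Step~3 presuppose what they need. The nested $L=x_j-x_u$ argument requires $x_s$ to already be small, since condition~(2) says nothing about the $s$-column and the error term $(\wt W_{us}-\wt W_{js})x_s$ can have either sign (your parenthetical about constants in $(-1,0)$ does not help here). The perturbation route via Lemma~\ref{lem:lyapunov_hell} likewise needs $x_s\to 0$ before it can be invoked. And strong domination of $s$ by $u$ yields only $\limsup(x_s-x_u)\le 0$, not $x_s\to 0$; your inequality $\tfrac{d}{dt}(x_s-x_u)\le-(x_s-x_u)$ is not what Proposition~\ref{prop:strong_dom} gives.

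The paper breaks the loop with a single idea you are missing: it shows that $\dot x_u-\dot x_s>0$ on the \emph{entire} interior of $\cA$, so that $L=x_u-x_s$ alone is a Lyapunov-like function there. When $y_s\ge 0$ this is just Lemma~\ref{lem:strong_dom_deriv}. When $y_s<0$ the paper does not compare $u$ to $s$ directly; instead, for any bystander $b\neq s,u$ it proves the three-term inequality
\[
\dot x_u-\dot x_s-\dot x_b\;\ge\;(1-W_{bs}+W_{us})x_s+(-1-W_{bu})x_u+(1+W_{ub})x_b+\sum_{i\neq s,u,b}(W_{ui}-W_{bi})x_i\;>\;0,
\]
where every coefficient is nonnegative by conditions (1) and (2) combined. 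Hence whenever $\dot x_b>0$ we get $\dot x_u>\dot x_s$, and since $\cA$ always contains a coordinate with positive derivative, the Lyapunov inequality holds throughout. Equality forces $\dot x_s=-x_s=0$, so the trajectory approaches $\{x_s=0\}$. Only \emph{after} $x_s$ is small does condition~(2) come back in a second, separate step: near $x_s=0$ each $H_i$ strictly separates $H_u$ from the origin, so within $\cA$ we must have $\dot x_u>0$, $x_u$ becomes monotone, and a nullcline argument finishes. The middle neurons are never handled by their own Lyapunov functions; they are controlled indirectly through the geometry of $\cA$ and the single function $x_u-x_s$.
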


Notice that this implies Theorem \ref{thm:dag1global}, since all TLNs whose graphs are in DAG1 satisfy the shallow DAG1 domination property. 

%
%
%
%
%

\begin{lem}\label{lem:strong_dom_deriv}
If $k$ strongly dominates $j$, then either:
\begin{align*}
\xdot_k  > \xdot_j 
\end{align*}
or

\begin{align*}
y_j < 0,\,\, \xdot_j = -x_j
\end{align*}
\end{lem}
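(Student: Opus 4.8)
The plan is to argue directly from the two ways of writing the $i$th derivative. Writing $y_i := \sum_\ell W_{i\ell}x_\ell + \theta$, we have $\xdot_i = -x_i + [y_i]_+$ and $h_i^*(x) = -x_i + y_i$, so that $\xdot_i - h_i^*(x) = [y_i]_+ - y_i \ge 0$ always, with equality exactly when $y_i \ge 0$; moreover $\xdot_i = -x_i$ precisely when $y_i \le 0$. The only other input needed is Observation~\ref{obs:strong_domination}: if $k$ strongly dominates $j$ then $h_k^*(x) - h_j^*(x) = \sum_\ell(\widetilde W_{k\ell} - \widetilde W_{j\ell})x_\ell \ge 0$ throughout the positive orthant, hence $h_k^*(x) \ge h_j^*(x)$ there.

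First I would dispose of the case $y_j < 0$: then $[y_j]_+ = 0$, so $\xdot_j = -x_j$, and the second alternative of the lemma holds with nothing further to prove. So assume $y_j \ge 0$, whence $\xdot_j = h_j^*(x)$. Stacking the inequalities above gives
\[
\xdot_k \;\ge\; h_k^*(x) \;\ge\; h_j^*(x) \;=\; \xdot_j ,
\]
which already establishes $\xdot_k \ge \xdot_j$. (One can equivalently write $\xdot_k - \xdot_j = (x_j - x_k) + [y_k]_+ - [y_j]_+$ and feed the inequality $h_k^*(x) \ge h_j^*(x)$, i.e. $x_j - x_k \ge y_j - y_k$, into Lemma~\ref{lem:thresh_ineq}; this is the same computation reorganized, and is the route consistent with the proof of Proposition~\ref{prop:strong_dom}.)

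The delicate step — and the one I expect to be the main obstacle — is upgrading the conclusion in the case $y_j \ge 0$ to the \emph{strict} inequality $\xdot_k > \xdot_j$ claimed by the lemma. If equality held in the display, then $\xdot_k = h_k^*(x)$ would force $y_k \ge 0$, and $h_k^*(x) = h_j^*(x)$ would force $\sum_\ell(\widetilde W_{k\ell} - \widetilde W_{j\ell})x_\ell = 0$; since each summand is nonnegative on the positive orthant, every one of them vanishes. In particular the two diagonal summands $(\widetilde W_{kk} - \widetilde W_{jk})x_k$ and $(\widetilde W_{kj} - \widetilde W_{jj})x_j$ vanish, and here I would use that strong domination makes both coefficients strictly positive — for a CTLN they are $\delta$ and $\varepsilon$, since $k >_G j$ entails $k \not\to j$ and $j \to k$ — so that $x_j = x_k = 0$. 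This localizes any failure of strictness to the single degenerate configuration $x_j = x_k = 0$ with $y_j = y_k \ge 0$, both neurons simultaneously silent and on the verge of activating; I would then note that this set is non-generic and is traversed instantaneously along trajectories, so it is harmless for the applications of the lemma (alternatively, one records it as an explicit third exceptional case). Writing out the CTLN coefficient bookkeeping for the diagonal terms, and verifying that the degenerate configuration does not obstruct the later use of the lemma, are the only non-routine pieces.
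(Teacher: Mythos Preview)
Your approach is essentially the paper's: both use the identity $\xdot_k - \xdot_j = (h_k^*(x) - h_j^*(x)) + ([y_k]_+ - y_k) - ([y_j]_+ - y_j)$ together with Observation~\ref{obs:strong_domination}, the paper presenting it as a single sign analysis rather than your case split on $y_j$. Your careful treatment of the strictness issue is in fact more thorough than the paper's own proof, which simply asserts the first two terms are ``positive'' and does not isolate the degenerate configuration $x_j = x_k = 0$ that you correctly identify.
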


\begin{proof}
We have 
\begin{align*}
\xdot_k - \xdot_j &= -x_k +x_j + [y_k]_+ - [y_j]+\\
\xdot_k - \xdot_j &= h_k^*(x) - h_j^*(x) + ([y_k]_+ -y_k) - ([y_j]_+- y_j) 
\end{align*}
By Observation \ref{obs:strong_domination}, the term $h_k^*(x) - h_j^*(x)$ is always positive. By the definition of the threshold nonlinearity, the term $[y_k]_+ -y_k$ is always positive. Thus, we can only have $\xdot_k - \xdot_j \leq 0$ when the term $[y_j]_+- y_j$  is positive, which occurs when $y_j < 0$, $\xdot_j = -x_j$. 
\end{proof}

The next lemma is main part of the proof of Proposition \ref{prop:dag1domglobal}. 
Essentially, we are showing that for TLNs which satisfy the shallow DAG1 domination property, the function $x_u - x_s$ is a Lyapunov-like function within $\cA$. 
We will use this to show that trajectories which enter $\cA$ must approach a fixed point. 

\begin{lem}\label{lem:lyapanov}
If a TLN satisfies the shallow DAG1 domination property, then $\xdot_u - \xdot_s >  0$ on the interior of  $\cA$, with equality achieved on the boundary of $\cA$ on the set where $\xdot_u = \xdot_s = 0$. 
\end{lem}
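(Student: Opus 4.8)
If a TLN satisfies the shallow DAG1 domination property, then $\xdot_u - \xdot_s > 0$ on the interior of $\cA$, with equality achieved only on the part of $\partial\cA$ where $\xdot_u = \xdot_s = 0$.

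The plan is to start from Lemma \ref{lem:strong_dom_deriv}, applied to the pair $(u,s)$ since the shallow DAG1 domination property guarantees $u$ strongly dominates $s$ with respect to $[n]$. That lemma tells us that at any point $x$ in the positive orthant, either $\xdot_u > \xdot_s$ outright, or else $y_s < 0$ and $\xdot_s = -x_s$, where $y_s := h_s^*(x) + x_s = \sum_j W_{sj}x_j + \theta$ is the pre-threshold input to $s$. So the entire argument reduces to controlling the second (degenerate) case and showing it forces us out of the interior of $\cA$. First I would unpack what $y_s < 0$ means: it says $x$ lies strictly on the negative side of $H_s$, so $\xdot_s = -x_s \le 0$, and moreover the strict inequality $y_s<0$ together with $h_u^*(x) - h_s^*(x) \ge 0$ (Observation \ref{obs:strong_domination}) gives information about $\xdot_u$. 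The key subclaim is: in this degenerate case, either $x$ is interior to $\cA^-$ (all derivatives negative) — which is excluded because $\cA$ and $\cA^-$ have disjoint interiors — or else $\xdot_u > 0 \ge \xdot_s$, recovering strict inequality, or else we are exactly on the boundary configuration $\xdot_u = \xdot_s = 0$.

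The main step is therefore the following case analysis at a point $x$ in the interior of $\cA$ with $y_s < 0$ (hence $\xdot_s = -x_s \le 0$). If $x_s > 0$ then $\xdot_s < 0$ strictly, and since $x$ is in the interior of $\cA$ (not in $\overline{\cA^-}$) there is some coordinate $i$ with $\xdot_i > 0$; I claim $\xdot_u \ge 0$, so that $\xdot_u - \xdot_s > 0$. To see $\xdot_u \ge 0$: if $\xdot_u < 0$ then $y_u < 0$, i.e. $x \in H_u^-$; but condition (2) of the shallow DAG1 domination property says $u$ strongly dominates every $i \neq s, u$ with respect to $[n]\setminus\{s\}$, so by Observation \ref{obs:strong_domination} applied on those coordinates, $h_u^*(x) - h_i^*(x) = \sum_{\ell \neq s}(\wt W_{u\ell} - \wt W_{i\ell})x_\ell - (\text{term involving } x_s)$ — here I need to be careful about the $x_s$ term, and this is where the real work sits — I expect to show $h_u^*(x) \ge h_i^*(x)$ for all $i \neq s$, possibly after using $y_s < 0$ to bound the $x_s$ contribution, forcing every such $i$ to also have $\xdot_i < 0$; combined with $\xdot_s < 0$ this puts $x$ in the interior of $\cA^-$, a contradiction. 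The remaining subcase $x_s = 0$ (so $\xdot_s = 0$) is handled similarly: either $\xdot_u > 0$ and we win, or $\xdot_u = 0$ too, and then I must check we are on $\partial\cA$ — which holds because $\xdot_u = \xdot_s = 0$ means we sit on $H_u \cap H_s$ (within the relevant region), and by Theorem \ref{thm:mixed_sign} the nullcline hyperplanes bound $\cA$, so this lies on the boundary; the ``equality case'' of the lemma statement is exactly this configuration.

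The hard part will be the bookkeeping in the degenerate case, specifically verifying $h_u^*(x) \ge h_i^*(x)$ for $i \neq s,u$ when we only have strong domination of $u$ over $i$ \emph{with respect to $[n]\setminus\{s\}$} rather than all of $[n]$: the $x_s$-coordinate is the one place the inequality could fail, and I will need the hypothesis $y_s < 0$ (which encodes that $s$'s input is subthreshold, hence $x_s$ cannot be ``too large'' relative to the other coordinates along trajectories inside the attracting region, or more precisely $s$ is being strongly inhibited) to close the gap. I would also double-check that the edge case where $s = u$ or where $x$ lies on $E_s$ versus $H_s$ does not cause trouble, but those are routine. Once the lemma is in place, the downstream use is immediate: $L := x_u - x_s$ is a local Lyapunov-like function for $A = \cA \cap \{\xdot_u = \xdot_s = 0\}$ and $B = \cA$ (using $\dot L = \xdot_u - \xdot_s$), so by the local Lyapunov machinery every trajectory, which by Theorem \ref{thm:mixed_sign} eventually enters and stays in $\cA$, must approach $A$; combined with condition (2) forcing all non-$u$ neurons down (via Lemma \ref{lem:strong_dom_deriv} applied to each pair $(u,i)$) this yields Proposition \ref{prop:dag1domglobal} and hence Theorem \ref{thm:dag1global}.
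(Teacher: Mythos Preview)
Your overall structure is sound, but the key subclaim ``$\xdot_u \ge 0$ on the interior of $\cA$'' (in the degenerate case $y_s < 0$) is \emph{false} for general TLNs satisfying the shallow DAG1 domination property, so the proof cannot be completed along these lines. A three-neuron counterexample with neurons $s,u,b$ and $\theta = 1$: take $W_{su}=W_{sb}=W_{bu}=-2$, $W_{us}=W_{ub}=-0.5$, $W_{bs}=-0.1$. Conditions (1) and (2) are easily checked. At $x=(x_s,x_u,x_b)=(2,\,0.1,\,0.5)$ one computes $y_s=-0.2<0$ (so $\xdot_s=-2$), $h_b^*=0.1>0$ (so $\xdot_b=0.1>0$), and $h_u^*=-0.35<0$ with $x_u>0$ (so $\xdot_u=-0.1<0$). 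Thus $x$ lies in the interior of $\cA$ with $\xdot_u<0$. The lemma itself still holds here, since $\xdot_u-\xdot_s = -0.1-(-2)=1.9>0$.

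The failure is exactly where you flagged it: condition (2) only gives domination of $b$ by $u$ over $[n]\setminus\{s\}$, and the $x_s$-term in $h_u^*-h_b^*$, namely $(W_{us}-W_{bs})x_s$, can be negative (nothing forbids $W_{bs}>W_{us}$). Your proposed fix---using $y_s<0$ to bound $x_s$---cannot work, because $y_s=\sum_{j\neq s}W_{sj}x_j+\theta$ does not involve $x_s$ at all. The paper's trick is to compare $\xdot_u-\xdot_s$ to $\xdot_b$ rather than $\xdot_u$ to $\xdot_b$: since $\xdot_s=-x_s$ in the degenerate case, the quantity $\xdot_u-\xdot_s-\xdot_b$ picks up an extra $+x_s$, making the $x_s$-coefficient $1+W_{us}-W_{bs}$, which is nonnegative because $W_{us}\ge -1$ (condition (1) at $i=s$) and $W_{bs}<0$ (competitiveness). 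One obtains $\xdot_u-\xdot_s\ge\xdot_b$ whenever $\xdot_b>0$; so if some bystander has $\xdot_b>0$ then $\xdot_u-\xdot_s>0$, and if none does then (since $\xdot_s\le 0$) the positive-derivative coordinate in the interior of $\cA$ must be $u$ itself, giving $\xdot_u>0\ge\xdot_s$.
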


\begin{figure}[ht!]
\begin{center}
\includegraphics[width = 4 in]{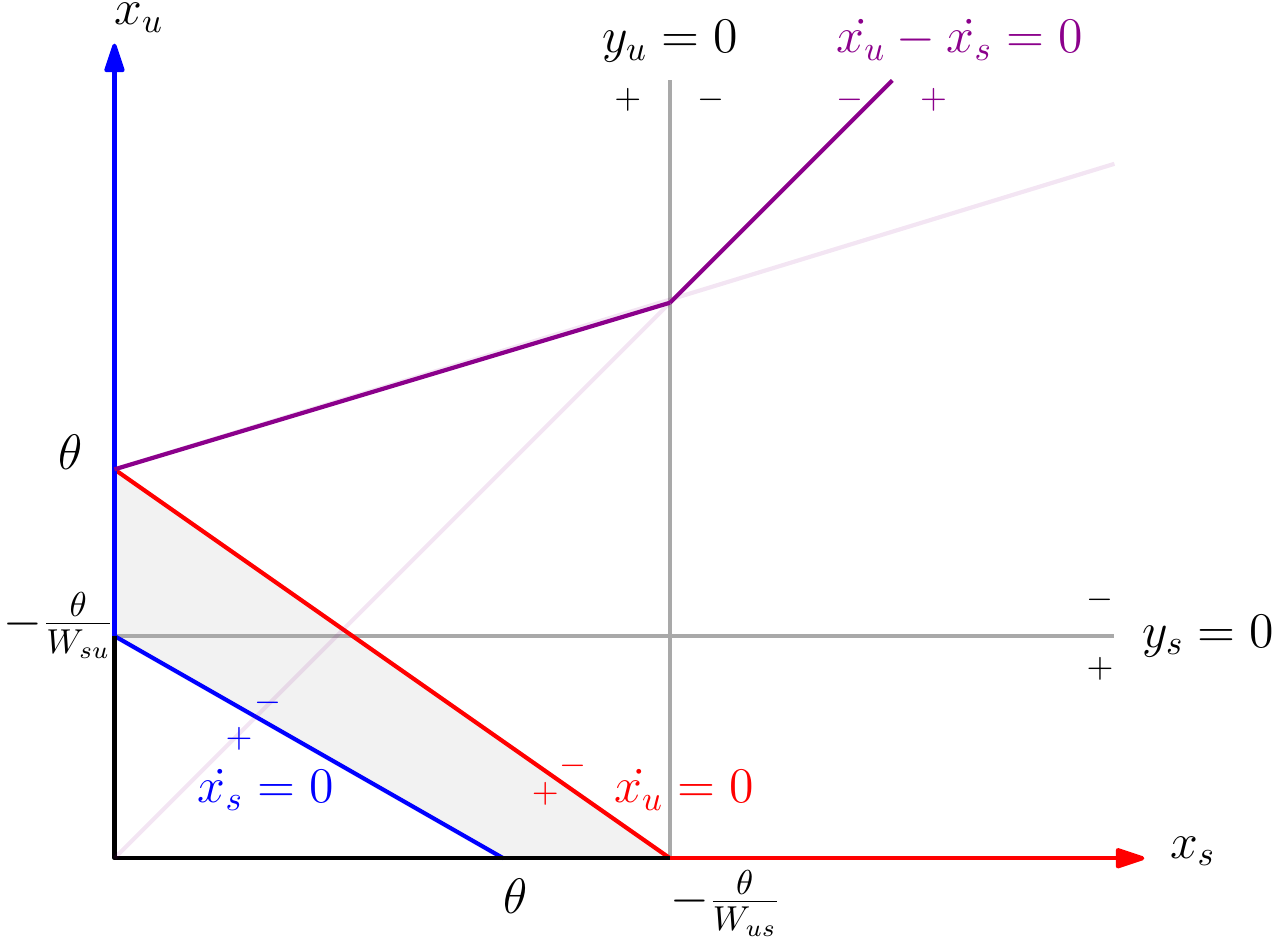}
\end{center}
\caption[Linear chambers for graphs satisfying the DAG1 domination property. ]{Linear chambers and nullclines restricted to the $x_s, x_u$ plane for graphs satisfying the DAG1 domination property. . Nullclines are shown in red and blue, boundaries of linear chambers are shown in gray. The surface $\dot{x_s} = \dot{x_u}$ is shown in green.  \label{fig:cham}   }
\end{figure}

\begin{proof}
We prove this statement by considering the surface where $\xdot_u = \xdot_s$, and show that it does not intersect with the interior of $\cA$. Because the equations for $\xdot_s$ and $\xdot_u$ are threshold-linear, the surface  $\xdot_u = \xdot_s$ consists of at most four hyperplanes, one in each linear chamber. The surface $\dot{x_s} = \dot{x_u}$ and the linear chambers are illustrated in two dimensions in Figure \ref{fig:cham}. 

We first consider the case where $y_s\geq 0$, which covers the linear chambers $y_s \geq 0, y_u \geq 0$ and $y_s \geq 0, y_u< 0$. By Lemma \ref{lem:strong_dom_deriv}, we have $\xdot_u > \xdot_s$. 
%
%
%
%
%
%
Thus, the surface  $\xdot_u = \xdot_s$  does not actually intersect either of these chambers or their closures. 

We next consider the region where $y_s < 0$, which covers the chambers $y_s < 0, y_u \geq 0$ and $y_s < 0, y_u < 0$. We let $b\neq s, u$ be some bystander neuron.  We show that the surface where $\xdot_u = \xdot_s$ is strictly confined to the negative side of the $x_b$ nullcline $H_b$. To do this, we consider the quantity $\xdot_u - \xdot_s - \xdot_b$, and show that  $\xdot_u - \xdot_s - \xdot_b > 0$ when  $\xdot_b> 0$. 

Notice that when $\xdot_b > 0$, we must have $y_b > 0$. Thus, 
\begin{align*}
 \xdot_u - \xdot_s - \xdot_b &= -x_u + x_s + x_b + [y_u]_+-[y_s]_+- [y_b]\\
 \xdot_u - \xdot_s - \xdot_b &= -x_u + x_s + x_b + [y_u]_+- y_b\\
  \xdot_u - \xdot_s - \xdot_b &\geq -x_u + x_s + x_b + y_u- y_b\
 \end{align*}
 Now, we expand this to 
    \begin{align*}
\xdot_u-\xdot_s -\xdot_b &= x_b + x_s - x_u + \left(\sum_{i=1}^n W_{ui}x_i +\theta\right)  -\left( \sum_{i=1}^n W_{bi}x_i + \theta\right)\\
\xdot_u-\xdot_s -\xdot_b &= (1 -W_{bs}+W_{us})x_s +(-1-W_{bu})x_u + (1 +W_{ub})x_b + \sum_{i\neq u, s,b}^n (W_{ui}-W_{bi})x_i 
\end{align*}

The coefficients $  (1 -W_{bs}+W_{us})$, $(-1-W_{bu})$, and  $(1 +W_{ub})$  are positive because $u$ strongly dominates $s$ with respect to $[n]$.  The coefficients $(W_{ui}-W_{bi})$ are positive because $b$ strongly dominates $u$ with respect to $[n]\setminus \{s\}$.  Thus, this quantity is positive within the positive orthant.  Thus, when $\xdot_b > 0$, we must also have $\xdot_u - \xdot_s > 0$. This means that the surface where $\xdot_u = \xdot_s$ is contained strictly within the negative side of  $H_b$.

Finally, by Lemma \ref{lem:strong_dom_deriv}, if $\xdot_u = \xdot_s$, then $\xdot_s = -x_s\leq 0$, so we must have $\xdot_u \leq 0$. This means that the surface $\xdot_u = \xdot_s$ is confined to the negative side of $H_u$, but this might not be strict. 
 
Since our bystander neuron was arbitrary, we have shown that within the mixed-sign chamber $\cA$, $\xdot_u \geq \xdot_s$. On the piece of the boundary of the mixed sign chamber made up by $H_u$, we have $\xdot_u = \xdot_s = 0$ along the surface defined by $x_s = 0, -x_u + \sum_{i = 1}W_{ui}x_i +\theta = 0$. 
\end{proof}

Finally, we are ready to prove Proposition \ref{prop:dag1domglobal}. 
\begin{proof}

By Theorem \ref{thm:mixed_sign}, all trajectories approach $\cA$. First, we consider trajectories which do not ever enter $\cA$. Such trajectories remain in either $\cA^+$ or $\cA^-$ for all time. Then along these trajectories, the signs of each derivative $\xdot_i$ are fixed. Thus, for each $i$, the value of $x_i$ approaches a constant. Thus, these trajectories approach a limit, which must be a fixed point. This must be the unique fixed point of the network. 

Next, we consider trajectories which do enter $\cA$. By Lemma \ref{lem:lyapanov}, once the trajectory has entered $\cA$, it is subject to the constraint $\xdot_u - \xdot_s > 0$. Then this trajectory must approach the limit on the boundary of $\cA$ where $\xdot_u = \xdot_s = 0$. On this set, by Lemma \ref{lem:strong_dom_deriv}, $x_s = 0$. Thus, this trajectory must approach the surface where $x_s = 0$. 

Now, by condition (2) of the shallow DAG1 decomposition property, for all $i\notin \{s, u\}$, we have that $u$ strongly dominates $i$ with respect to $[n]\setminus \{s\}$. Thus, within the plane where $x_s = 0$, for each $i\neq u$, $H_i$ strictly separates $H_u$ from the origin. Because this separation is strict, there must be some $\varepsilon > 0$ such that $H_i$ separates $H_u$ from the origin for all $x\in \R^n_{\geq 0}$ satisfying $0 \leq x_s < \varepsilon$. Because the trajectory approaches this surface where $x_s =0$, after some time, the trajectory must satisfy $x_s < \varepsilon$. At this point, to remain within the mixed-sign region $\cA$, the trajectory must have $\xdot_u > 0$. 

Thus, the trajectory must approach the nullcline for $x_u$. In particular, since we are increasing as we approach the nullcline, we must approach $H_u$. Thus, the trajectory also gets trapped on the negative side of each $H_i$. Therefore, since the signs of all derivatives are fixed, the trajectory must approach a limit, which must be a fixed point.  This must be the unique fixed point of the network. 
 \end{proof}

\subsection{Near CTLNs}

If a CTLN is a member of either DAG1 or DAG2, then by Theorem \ref{thm:dag_onto_symmetric} the unique fixed point is globally attracting. However, this result does not automatically extend to general TLNs. We are able to prove, however, that if a TLN has a graph $G_W$ that is a member of DAG1, and if the weights in $W$ are not too different from the CTLN weights, then the unique fixed point is a global attractor.

\begin{ithm}\label{prop:near_ctln}
Let $G$ be a member of DAG1 with a target vertex $n$. Let $W$ be a weight matrix satisfying $G_W = G$ and  for all $i,k\neq n$, $W_{nk} - W_{ik}> \frac{1}{d_{in}(i)}  (W_{in} + 1)$. Then all trajectories of a TLN defined by $W$ approach the unique fixed point. 
\end{ithm}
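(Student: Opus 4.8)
The plan is to mirror the inductive, layer-by-layer strategy used for Theorems~\ref{thm:dag}, \ref{thm:dag_onto_symmetric}, and \ref{thm:dag1global}: activity should drain toward the target $n$ one stratum at a time, with the perturbation hypothesis on $W$ supplying exactly the slack needed to run the domination and local-Lyapunov arguments once the weights are no longer the exact CTLN weights. First I would record the combinatorial structure. A member of DAG1 with target $n$ is a directed acyclic graph in which $n$ is the unique sink and every vertex has a directed path to $n$; in particular every vertex is reachable from a source, so the machinery behind Theorem~\ref{thm:sources_die} is available. Stratify the vertices by the quantity $d_{in}(\cdot)$ from the hypothesis — the length of a directed path from a vertex to $n$ — so that $n$ lies in layer $0$ and sources lie in the deepest layers. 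As in the proof of Proposition~\ref{prop:dag1domglobal}, any trajectory that never enters the mixed-sign region $\cA$ stays in $\cA^+$ or $\cA^-$, where every $\sign(\xdot_i)$ is fixed, hence each $x_i$ is monotone and bounded and the trajectory converges to the unique fixed point; so by Theorem~\ref{thm:mixed_sign} it suffices to handle trajectories that eventually lie in $\cA$.

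The core step is a perturbed ``sources die'' lemma. I would re-examine the chain of inequalities in Lemmas~\ref{lem:smallest_sources_die}, \ref{lem:lyapunov}, and \ref{lem:source_becomes_smallest_1}. There the facts actually used are $\widetilde W_{ji}-\widetilde W_{\ell i}\le 0$ when $j$ is a source, $1+W_{j\ell}=-\delta<0$, and uniform CTLN-type bounds such as $\delta+\varepsilon$ on the coefficient of a dominated neuron; the local Lyapunov function $L_\ell=x_j-x_\ell+\beta$ is then built for the nested family $\Omega_\ell\supseteq\Omega_{\ell+1}$ of Definition~\ref{def:order_omega}, so the estimate at layer $\ell$ feeds the estimate at layer $\ell+1$. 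For a general $W$ with $G_W=G$ a source $j$ need not satisfy $\widetilde W_{ji}-\widetilde W_{\ell i}\le 0$ exactly, but the hypothesis $W_{nk}-W_{ik}>\tfrac{1}{d_{in}(i)}(W_{in}+1)$ furnishes a strictly positive margin in precisely the inequalities that arise, and the factor $1/d_{in}(i)$ is what guarantees that after these inequalities are composed along a directed path of length at most $d_{in}(i)$ from the source down to $n$, the accumulated error still does not exceed the available margin. Carrying this bookkeeping through should yield: in any TLN with $G_W=G$ satisfying the hypothesis, the smallest source $s$ (the source with $x_s(0)\le x_k(0)$ for every other source) has $\lim_{t\to\infty}x_s(t)=0$, and moreover all trajectories approach the $s$-nullcline, so for every $\zeta>0$ there is a time after which $x_s,|\xdot_s|\le\zeta$.

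Then I would recurse, exactly as in the inductively stable DAG decomposition argument for Theorem~\ref{thm:dag}. Deleting $s$ yields $G|_{V\setminus\{s\}}$, which is again a member of DAG1 with target $n$ and whose induced weight matrix still satisfies the hypothesis (the inequalities are inherited on subgraphs, and the depths $d_{in}(\cdot)$ only decrease). Using Lemma~\ref{lem:lyapunov_hell} to absorb the negligible, decaying residual influence of $x_s$ as a bounded time-varying external input, the argument applies to the smaller network, and by induction the activity of every non-target vertex tends to $0$. Finally, once $\{x_i\}_{i\ne n}$ is uniformly small, the strong-domination/perturbation condition forces $H_n$ to be the outermost nullcline (as in Propositions~\ref{prop:dag1domglobal} and \ref{prop:supersink}): trajectories get trapped on the positive side of every $H_i$ with $i\ne n$ and, within $\cA$, have $\xdot_n>0$, so they approach $H_n$; all derivatives then have fixed sign and the trajectory converges to the network's unique stable fixed point, supported on $\{n\}$.

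The main obstacle I expect is the perturbation bookkeeping in the core lemma: verifying that the exact constant $\tfrac{1}{d_{in}(i)}$ in the hypothesis is precisely what makes the composition of the layer-by-layer local Lyapunov inequalities close, i.e.\ that the errors introduced by the deviation of $W$ from the CTLN weights remain strictly below the margins when propagated along directed paths, and doing so uniformly enough that the recursion together with Lemma~\ref{lem:lyapunov_hell} goes through. A secondary technical point is confirming that the DAG1 structure, the perturbation inequality, and the ``uniform approach to a nullcline'' conclusion are all genuinely inherited by the subgraph obtained after a source dies, so that the induction is legitimate.
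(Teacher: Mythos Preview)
Your proposal contains a concrete misreading that sends the whole strategy in the wrong direction. The quantity $d_{in}(i)$ in the hypothesis is the \emph{in-degree} of vertex $i$, not the length of a path from $i$ to the target. This is visible in the paper's Lemma~\ref{lem:i_vs_n}, where the expression $h_i^*(x)-h_n^*(x)$ contains the sum $\sum_{k\to i}(W_{ik}-W_{nk})x_k$ with exactly $d_{in}(i)$ terms; the hypothesis is calibrated so that these $d_{in}(i)$ potentially positive terms are dominated by the negative term $(1+W_{in})x_n$. Your picture of ``composing inequalities along a directed path of length $d_{in}(i)$'' is not what the constant is doing.

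More importantly, the hypothesis compares every vertex $i$ to the \emph{target} $n$ (via $W_{nk}-W_{ik}$), not to a source. Your sources-die-then-recurse plan needs inequalities of the form $\widetilde W_{ji}-\widetilde W_{\ell i}\le 0$ with $j$ a source, and the hypothesis simply does not furnish these; it gives you no control over how a source compares to intermediate vertices. The paper's argument goes the other way: fix a topological order $1,\ldots,n$, set $\Omega_i=\{x:x_j\le x_n+\varepsilon\text{ for }j\le i\}\cap\cA$, and use the local Lyapunov functions $L_i=x_i-x_n$ (comparing to the target, not to a source) together with Lemma~\ref{lem:i_vs_n} to show trajectories descend through $\Omega_1\supset\Omega_2\supset\cdots\supset\Omega_{n-1}$. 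Once in $\Omega_{n-1}(\delta)$, the nullcline $H_n$ is outermost and $\xdot_n>0$ forces convergence. No source is ever killed and no recursion on subgraphs (hence no appeal to Lemma~\ref{lem:lyapunov_hell}) is needed. Rework your induction so that the target, not the source, is the anchor of every comparison.
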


Notice that this is a relaxation of the CTLN condition. The quantity $ \frac{1}{d_{in}(i)}  (W_{in} + 1)$ is negative, since $n$ is a sink. Now, in the CTLN case, since $n$ is a target, the quantity $W_{nk} - W_{ik}$ is always non-negative. We relax the CTLN condition by allowing the terms $W_{nk} - W_{ik}$ to be negative, but not \emph{too negative}, relative to 
$\frac{1}{d_{in}(i)}  (W_{in} + 1)$. The condition  $W_{nk} - W_{ik}> \frac{1}{d_{in}(i)}  (W_{in} + 1)$ is met on an open set of the parameter space containing the CTLN parameters. 

Our proof resembles that of Theorem \ref{thm:sources_die}, using local Lyapunov functions for a sequence of nested sets. We illustrate this nested sequence in Figures \ref{fig:nested} and \ref{fig:omegas} . 
\begin{figure}
\includegraphics[width =6 in]{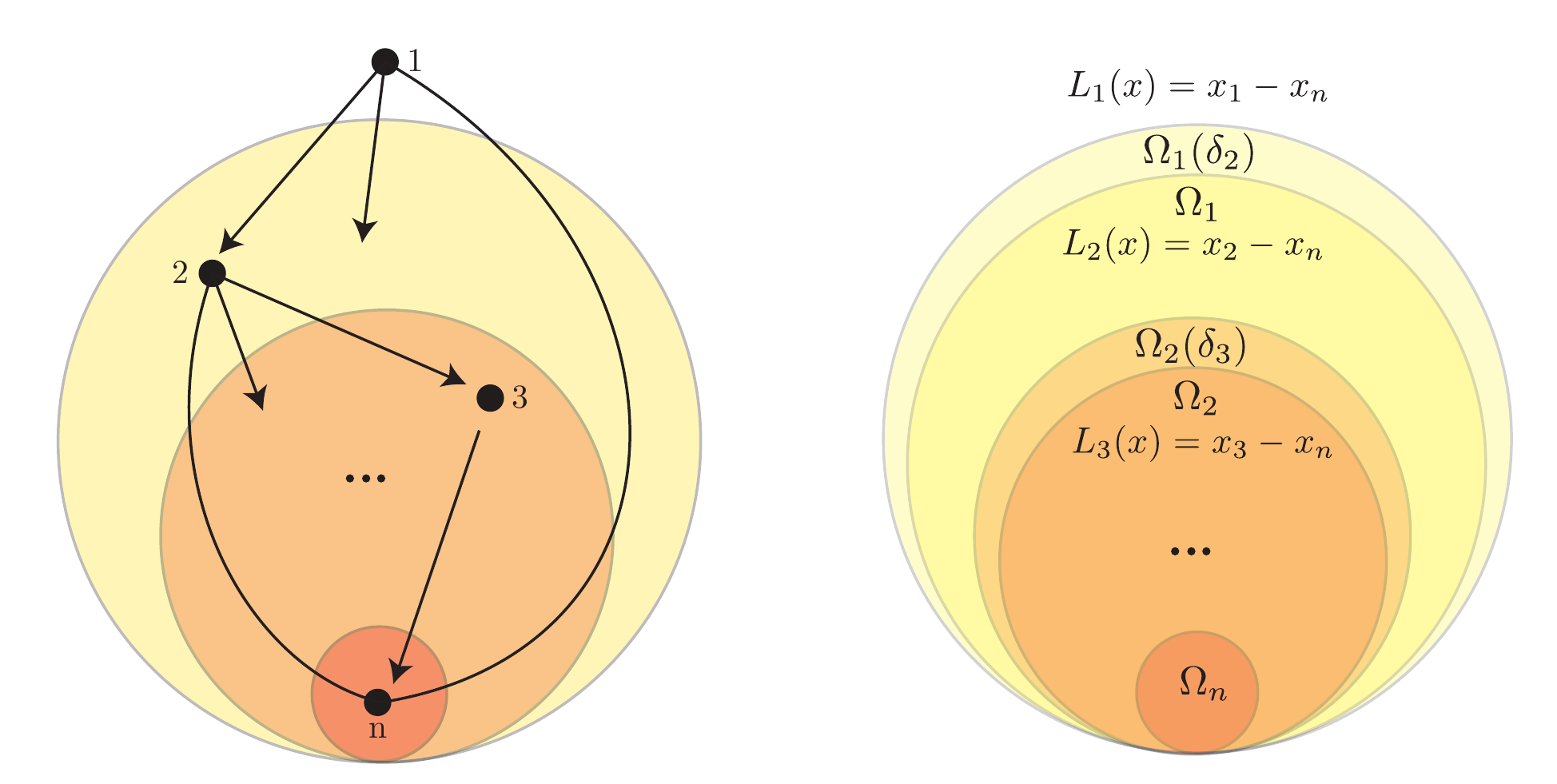}

\caption[Proof idea for Theorem \ref{prop:near_ctln}.]{Proof idea for Proposition \ref{prop:near_ctln}. We label the vertices of $G$ with $1, \ldots, n$ according to a topological order (right).  A sequence of local Lyapunov-like functions  $L_1, \ldots, L_{n-1}$ (left) force trajectories into a nested sequence of regions $\Omega_1 \supset \Omega_2 \supset \cdots \supset \Omega_{n-1}$. \label{fig:nested}}
\end{figure}

\begin{figure}
\begin{center}
\includegraphics[width =4  in]{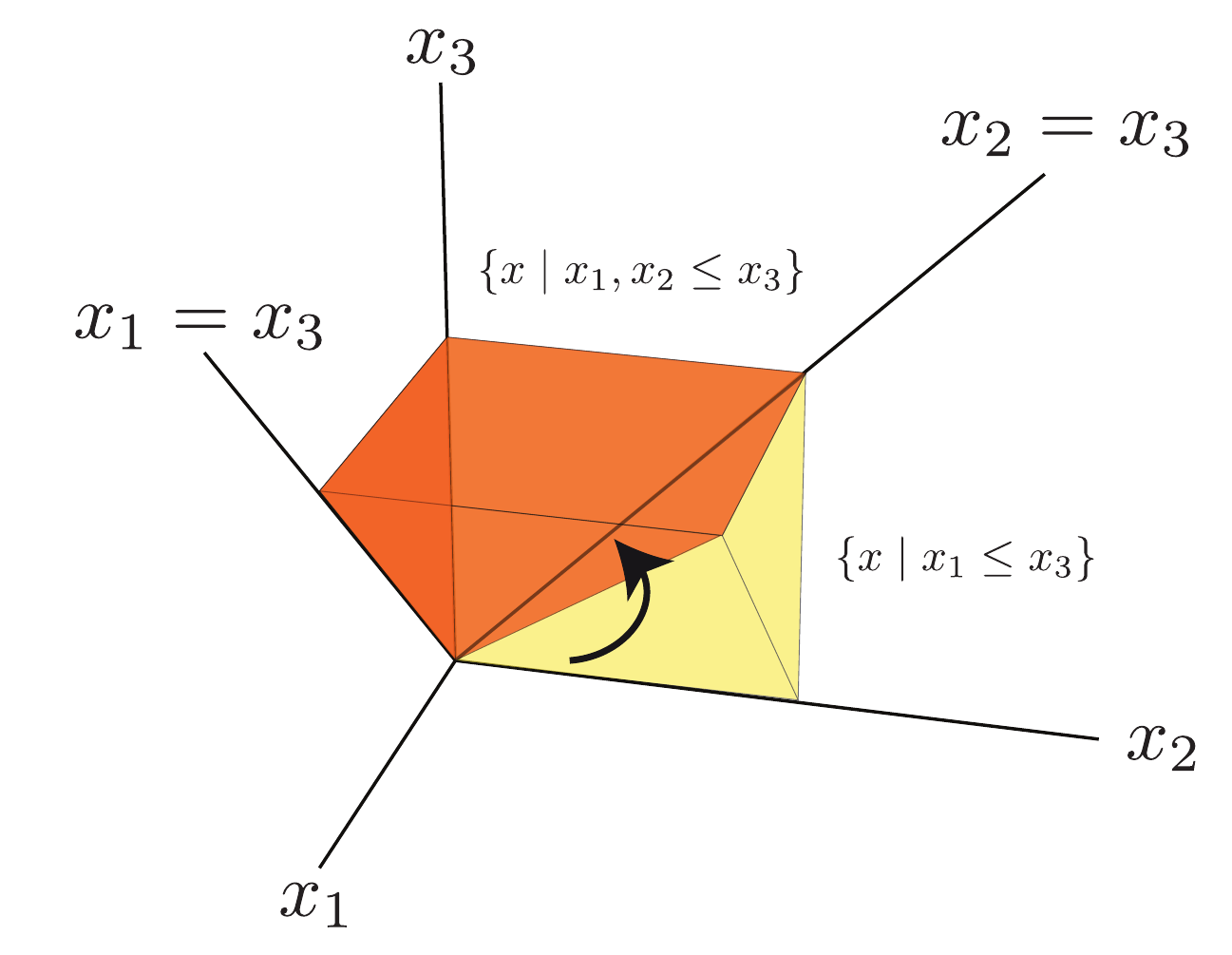}
\end{center}
\caption[Illustration of the regions $\Omega_1$ and $\Omega_2$.]{The regions $\Omega_1$ and $\Omega_2$ for a three-neuron network are obtained by intersecting the cones $\{x\mid x_1 \leq x_3\}$ and $\{x \mid x_1, x_2 \leq x_3\}$, illustrated here, with $\cA$. \label{fig:omegas} }
\end{figure}

\begin{defn}
Fix a topological ordering of the vertices of $G$ such that the source is numbered 1 and the target is numbered  $n$. 
Define 
\begin{align*}
\Omega_i(\varepsilon) :&= \{ x\in \R^n_{\geq 0}\mid x_j \leq x_n +\varepsilon\mbox{ for all } j \leq i\}\cap \cA\\
\Omega_i :&= \Omega_i(0)\\
\Omega_0 :&= \cA\\
\end{align*}
\end{defn}

Now, we will eventually prove that for some $\delta_i$, $L = x_i - x_n$ is a local Lyapunov function for $A = \Omega_i$, $B = \Omega_{i-1}(\delta_i)$. We will need to prove two lemmas first. 

\begin{lem}\label{lem:i_vs_n}
Under the assumptions of Theorem  \ref{prop:near_ctln},
there exists a constant $\delta_i := \delta_i(W) > 0$ such that if $x\in \Omega_{i-1}(\delta_i)$, then $h_i^*(x) - h_n^*(x) < 0$. 
\end{lem}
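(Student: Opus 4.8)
The plan is to reduce the assertion to a \emph{strict} inequality on the smaller, compact set $\Omega_{i-1}(0)=\Omega_{i-1}$, and then enlarge the region by a continuity/compactness argument. First I would write $h_i^*(x)-h_n^*(x)$ as an explicit affine-linear functional. Using $W_{ii}=W_{nn}=0$,
\begin{align*}
h_i^*(x)-h_n^*(x) = (1+W_{in})\,x_n \;-\; (1+W_{ni})\,x_i \;+\; \sum_{j\neq i,n}\bigl(W_{ij}-W_{nj}\bigr)x_j .
\end{align*}
Since $n$ is a sink of $G=G_W$ we have $n\not\to i$, so $W_{in}<-1$ and $1+W_{in}<0$; since $n$ is the target of a DAG1 graph, every vertex has an edge to $n$, so $1+W_{ni}>0$ and $W_{nj}>-1$ for all $j\neq n$. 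Fix a topological ordering of $G$ with source $1$ and target $n$ (the one used to define $\Omega_i$), so that every in-neighbour of $i$ has index $<i$.

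Second, I would prove $h_i^*(x)-h_n^*(x)<0$ for all $x\in\Omega_{i-1}$. On $\Omega_{i-1}$ one has $x_j\le x_n$ whenever $j\le i-1$, in particular whenever $j\to i$; there are exactly $d_{in}(i)$ such $j$, each $\neq n$. For these, the hypothesis of Theorem~\ref{prop:near_ctln} gives $W_{ij}-W_{nj}<\tfrac{1}{d_{in}(i)}|1+W_{in}|$, whence
\begin{align*}
\sum_{j\to i}\bigl(W_{ij}-W_{nj}\bigr)x_j \;\le\; \frac{|1+W_{in}|}{d_{in}(i)}\sum_{j\to i}x_j \;\le\; |1+W_{in}|\,x_n .
\end{align*}
Every other summand has $j\not\to i$, hence $W_{ij}<-1<W_{nj}$ and $W_{ij}-W_{nj}<0$, so those terms are $\le 0$. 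Substituting and cancelling the two $|1+W_{in}|x_n$ contributions leaves $h_i^*(x)-h_n^*(x)\le -(1+W_{ni})x_i\le 0$. Tracing when equality can hold, it forces $x_i=0$, all non-in-neighbour coordinates $x_j$ ($j\neq i,n$) to vanish, the in-neighbour block to equal $|1+W_{in}|x_n$ --- which, the hypothesis being \emph{strict}, requires all in-neighbour $x_j=0$ --- and then $(1+W_{in})x_n=0$, i.e.\ $x_n=0$, which on $\Omega_{i-1}$ also forces $x_j=0$ for $j\le i-1$. Thus equality would force $x=0$, which is excluded from $\cA$ by the population bound $\sum_j x_j\ge \theta/M$ of Corollary~\ref{cor:total_pop}. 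Hence $h_i^*-h_n^*<0$ on all of $\Omega_{i-1}$.

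Third, I would pass to $\Omega_{i-1}(\delta_i)$ by contradiction. If no admissible $\delta_i$ existed, then for every $\delta>0$ there would be $x^{(\delta)}\in\Omega_{i-1}(\delta)$ with $h_i^*(x^{(\delta)})-h_n^*(x^{(\delta)})\ge 0$. Since each $\Omega_{i-1}(\delta)\subseteq\cA$ and $\cA$ is compact (it lies between the hyperplanes $H_\ell$ and $H_u$ from the proof of Corollary~\ref{cor:total_pop}), a subsequence converges, $x^{(\delta)}\to x^\star\in\cA$; the constraints $x^{(\delta)}_j\le x^{(\delta)}_n+\delta$ for $j\le i-1$ pass to the limit as $x^\star_j\le x^\star_n$, so $x^\star\in\Omega_{i-1}$, and continuity of the affine-linear functional gives $h_i^*(x^\star)-h_n^*(x^\star)\ge 0$, contradicting the previous step. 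Therefore a constant $\delta_i=\delta_i(W)>0$ as claimed exists.

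The main obstacle is the boundary case analysis in the second step: establishing \emph{strict} negativity of $h_i^*-h_n^*$ everywhere on $\Omega_{i-1}$, including on the faces where $x_i$, $x_n$, or other coordinates vanish. This is exactly where the DAG1 structure is used essentially --- a unique source, a target adjacent from every vertex, and the resulting topological order, which confines the in-neighbours of $i$ to indices below $i$ so that their activities are controlled on $\Omega_{i-1}$ --- together with the fact from Corollary~\ref{cor:total_pop} that no point of $\cA$ has all coordinates zero. One should also check that the intended reading of $d_{in}(i)$ (the in-degree of $i$) makes the hypothesis of Theorem~\ref{prop:near_ctln} exactly strong enough for the in-neighbour estimate; if $d_{in}(i)$ were to count a larger set, the estimate would only improve.
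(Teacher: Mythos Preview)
Your argument is correct and takes a genuinely different route from the paper. The paper works directly on $\Omega_{i-1}(\delta_i)$: it bounds the in-neighbour sum by $a\sum_{k\to i}x_k\le a\,d_{in}(i)(x_n+\delta_i)$ with $a=\max_{k\to i}(W_{ik}-W_{nk})$, then invokes the population bound of Corollary~\ref{cor:total_pop} to find some $\ell$ with $x_\ell\ge c/n$, splits into the cases $\ell<i$ (forcing $x_n$ large) versus $\ell\ge i$ (forcing a large negative non-in-neighbour term), and from this extracts an explicit $\delta_i$. Your approach instead first establishes the \emph{strict} inequality on the smaller set $\Omega_{i-1}=\Omega_{i-1}(0)$ by a clean equality analysis that forces $x=0\notin\cA$, and then upgrades to some $\delta_i>0$ by a compactness--continuity contradiction. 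What you gain is a much tidier argument with no case split; what you lose is the explicit formula for $\delta_i$ (which the paper does not seem to use downstream anyway). One small point: your justification that $\cA$ is compact (``it lies between the hyperplanes $H_\ell$ and $H_u$'') is not quite sufficient on its own, since a slab between two parallel hyperplanes is unbounded; you should also note that $\cA\subseteq\R^n_{\ge 0}$, so in fact $\cA\subseteq\{x\in\R^n_{\ge 0}:\sum_k x_k\le \theta/m\}$, which is a simplex and hence bounded. With that addition, your compactness argument goes through.
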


\begin{proof}
We have 
\begin{align*}
  h_i^*(x) - h_n^*(x) = -(1 + W_{ni})x_i + (1 + W_{in})x_n + \sum_{k\to i} (W_{ik} - W_{nk})x_k + \sum_{j\not\to i}(W_{ij}-W_{nj})x_j .
\end{align*}
Notice that except for those of $ \sum_{k\to i} (W_{ik} - W_{nk})x_k$, all coefficients of this sum are non-positive.  Thus 

\begin{align*}
  h_i^*(x) - h_n^*(x) \leq  (1 + W_{in})x_n + \sum_{k\to i} (W_{ik} - W_{nk})x_k  + C,
\end{align*}

where $C = -(1 + W_{ni})x_i + \sum_{j\not\to i}(W_{ij}-W_{nj})x_j $. 

Let $a = \max_{k\to i}(W_{ik}-W_{nk})$. If $a\leq0$, then all terms of the sum are non-positive, thus we have   $h_i^*(x) - h_n^*(x) \leq 0$ throughout the positive orthant. Otherwise, we use the assumption that $x\in \Omega_{i-1}(\delta_i )$ to show that 
\begin{align*}
  h_i^*(x) - h_n^*(x) \leq (1 + W_{in})x_n + \sum_{k\to i} a x_k + C \\
  h_i^*(x) - h_n^*(x) \leq (1 + W_{in})x_n + \sum_{k\to i} a (x_n + \delta_i) +C \\
  h_i^*(x) - h_n^*(x) \leq (1 + W_{in} +d_{in}(i)a)x_n + d_{in}(i)a\delta_i + C 
\end{align*}
The term $(1 + W_{in} +d_{in}(i)a)x_n$ is negative by the assumption that  $W_{nk} - W_{ik}> \frac{1}{d_{in}(i)}  (W_{in} + 1)$. On the other hand, the term $ d_{in}(i)a\delta_i $ is positive. We show that for the appropriate $\delta_i $,  $x\in \Omega_{i-1}(\delta_i) $ implies that 
$ (1 + W_{in} +d_{in}(i)a)x_n + d_{in}(i)a\delta_i  \leq 0$. 

By Corollary \ref{cor:total_pop}, $x\in \cA$ implies that $\sum_{\ell = 1}^n x_\ell \geq \frac{1}{-\min(W_{ij})}$. Let $c := \frac{1}{-\min(W_{ij})}$. Then $\sum_{\ell = 1}^n x_\ell \geq c$. Thus, there is some neuron $\ell$ such that $x_\ell \geq c/n$. Now, we consider two cases. First, if $\ell \leq i$, then we have $x_\ell \leq x_n + \delta_i$, so $x_n \geq c/n - \delta_i$. 
Then \begin{align*}
  h_i^*(x) - h_n^*(x) \leq (1 + W_{in} +d_{in}(i)a)\left(c/n - \delta_i \right)+ d_{in}(i)a\delta_i  + C
\end{align*}
  All terms of this are negative, 
thus, for any value of $\delta_i$, we have $h_i^*(x) - h_n^*(x) \leq 0$. 

Now, we consider the second case, when $\ell \geq i$. In this case, we expand the term $C$:

\begin{align*}
h_i^*(x) - h_n^*(x) \leq (1 + W_{in} +d_{in}(i)a)x_n + d_{in}(i)a\delta_i +  -(1 + W_{ni})x_i + \sum_{j\not\to i}(W_{ij}-W_{nj})x_j.
\end{align*}

Now, let $m = \max_{\ell \geq i}(W_{i\ell}-W_{n\ell})$. We are guaranteed $m \leq 0$, since we chose a topological ordering to label our neurons. Thus, we have

\begin{align*}
h_i^*(x) - h_n^*(x) \leq  d_{in}(i)a\delta_i + m x_\ell \\
h_i^*(x) - h_n^*(x) \leq  d_{in}(i)a\delta_i + m c/n .\\
\end{align*}

Thus, for $\delta_i \leq \frac{-mc/n}{d_{in}(i) a}$, we have $h_i^*(x) - h_n^*(x) $ as desired.

\end{proof}

\begin{lem}
\label{lem:lyapunov_2}
Under the assumptions of Theorem \ref{prop:near_ctln},
there exists a constant $\delta_i := \delta_i(W)>0$ such that the function $L(x) = x_{i} - x_n$ is a local Lyapunov function for $A = \Omega_{i}, B = \Omega_{i-1}(\delta_i)$. 
\end{lem}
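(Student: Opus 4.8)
The plan is to verify directly the three conditions in the definition of a local Lyapunov-like function for the data $L(x)=x_i-x_n$, $A=\Omega_i$, and $B=\Omega_{i-1}(\delta_i)$, where $\delta_i:=\delta_i(W)>0$ is the constant supplied by Lemma \ref{lem:i_vs_n} (shrunk further if necessary, see below). First, $A\subseteq B$ since $\Omega_i$ forces $x_j\le x_n$ for $j\le i$, hence $x_j\le x_n+\delta_i$ for $j\le i-1$; and both sets are compact, being closed subsets of $\cA$, which is bounded by Corollary \ref{cor:total_pop}. Condition (iii), $L\le 0$ on $A$, is immediate: on $\Omega_i$ the defining inequality for the index $j=i$ reads $x_i\le x_n$. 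Condition (ii), $L>0$ on $B\setminus A$, amounts to the assertion $B\setminus A\subseteq\{x_i>x_n\}$; this requires reconciling the definitions so that the only constraint separating $A$ from $B$ is $x_i\le x_n$ — concretely, replacing ``$\Omega_i$'' by $\Omega_{i-1}(\delta_i)\cap\{x_i\le x_n\}$ and propagating the resulting slack into the next stage of the induction. I would treat this reconciliation as genuine content of the argument rather than a triviality.

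The substantive step is condition (i), $\dot L<0$ on $B\setminus A$, and here I would reproduce the threshold-nonlinearity manipulation from the proof of Proposition \ref{prop:strong_dom}. Write $y_k=\sum_j W_{kj}x_j+\theta$, so that $\dot x_k=-x_k+[y_k]_+$ and $h_k^*(x)=-x_k+y_k$, whence $y_i-y_n=\bigl(h_i^*(x)-h_n^*(x)\bigr)+(x_i-x_n)$. On $B\setminus A$ we have $x_i>x_n$, and since $B=\Omega_{i-1}(\delta_i)$, Lemma \ref{lem:i_vs_n} gives $h_i^*(x)-h_n^*(x)<0$, hence $y_i-y_n<x_i-x_n$, i.e. $x_n-x_i<y_n-y_i$. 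Applying the strict form of Lemma \ref{lem:thresh_ineq} to the pair $(y_n,y_i)$ with $b=x_n-x_i<0$ yields $x_n-x_i<[y_n]_+-[y_i]_+$, that is $[y_i]_+-[y_n]_+<x_i-x_n$, which rearranges to
\begin{align*}
\dot L=\dot x_i-\dot x_n=-x_i+x_n+[y_i]_+-[y_n]_+<0.
\end{align*}
Both ingredients — $x_i>x_n$ and the sign of $h_i^*-h_n^*$ — hold throughout $B\setminus A$, so $\dot L<0$ there, which completes the verification of the three conditions.

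The only place demanding real care — and the main obstacle — is not the dynamics but the bookkeeping of the slack parameters. Lemma \ref{lem:i_vs_n} delivers a single threshold $\delta_i$ per index, and to run the inductive cascade that underlies Theorem \ref{prop:near_ctln} (trajectories forced successively into $\Omega_1\supset\Omega_2\supset\cdots\supset\Omega_{n-1}$) one must chain these thresholds together with the slack built into $A$, so that the attracting region produced at stage $i$ is itself an admissible input region $\Omega_i(\delta_{i+1})$ at stage $i+1$; this forces $\delta_{i+1}\le\delta_i$ and requires the identification of ``$\Omega_i$'' with $\Omega_{i-1}(\delta_i)\cap\{x_i\le x_n\}$ mentioned above, without which condition (ii) (and the strictness in condition (i)) can fail on the part of $B\setminus A$ where a constraint for some $j<i$ is the one violated. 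Once this is pinned down, the dynamical content of the lemma is carried entirely by Lemma \ref{lem:i_vs_n} and Lemma \ref{lem:thresh_ineq}, exactly as in the proofs of Lemma \ref{lem:lyapunov} and Proposition \ref{prop:strong_dom}.
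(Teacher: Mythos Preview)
Your approach is essentially the paper's: verify conditions (ii) and (iii) from the definitions, and for condition (i) combine Lemma~\ref{lem:i_vs_n} with Lemma~\ref{lem:thresh_ineq} exactly as you do, using $x_i>x_n$ on $B\setminus A$ to get the strict inequality $\dot L<0$. Your algebra for $\dot L<0$ is in fact cleaner than the paper's, which states the intermediate bound as $\dot x_i-\dot x_n\le h_i^*(x)-h_n^*(x)$ and then invokes Lemma~\ref{lem:i_vs_n}.

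The bookkeeping issue you raise is real and is \emph{not} addressed in the paper's proof either: the paper simply asserts ``since $x\notin\Omega_i(0)$, we have $x_i>x_n$,'' which as you observe need not hold if membership in $\Omega_i$ fails because some earlier coordinate $x_j$ with $j<i$ satisfies $x_n<x_j\le x_n+\delta_i$. Your proposed fix---taking $A=\Omega_{i-1}(\delta_i)\cap\{x_i\le x_n\}$ so that $B\setminus A$ is exactly $\{x_i>x_n\}\cap B$, and threading the resulting slack into the next stage of the induction---is the right repair, and the downstream argument in Theorem~\ref{prop:near_ctln} (which only ever uses that trajectories eventually satisfy $x_i\le x_n+\varepsilon$ for each $i$) accommodates it. So you have both matched the paper's argument and sharpened it where it was loose.
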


\begin{proof}

First, we notice that  for any value of $\delta_i \geq 0$, $L > 0$ on $\Omega_{i-1}(\delta_i)\setminus  \Omega_{i}(0)$ and $L \leq 0$ on $\Omega_{i}(0)$. Now, we show  that for the value of $\delta_i = \delta_i(W)$ used in Lemma \ref{lem:i_vs_n},  $\dot L < 0$ on $\Omega_{i-1}(\delta_i )\setminus  \Omega_{i}(0)$.  
We have 
\begin{align*}
\dot L (x) = \xdot_{i} -\xdot_n = -x_{i} + x_n + [y_{i}]_+ - [y_n]_+
\end{align*}
Pick $x\in \Omega_{i-1}(\delta_i) \setminus  \Omega_{i}(0)$ 
Since $x\notin  \Omega_{i}(0)$, we have $x_{i} > x_n$. Thus, by Lemma \ref{lem:thresh_ineq}, we have 
\begin{align*}
\xdot_{i} -\xdot_n \leq h^*_i(x) - h_n^*(x) 
\end{align*}
Now, by Lemma \ref{lem:i_vs_n}, $h^*_i(x) - h^*(n) < 0$. Thus $\dot L < 0$ on $\Omega_{i-1}(\delta_i)\setminus  \Omega_{i}(0)$.

Finally, we check that $ \Omega_{i-1}(0)$ is a forward invariant set for our TLN. For $i = 1$, this follows from Theorem \ref{thm:mixed_sign}. For $i > 1$, this follows from the fact that $L = x_{i-1} - x_{n}$ is a local Lyapunov function for $A = \Omega_{i-1}$, $B  = \Omega_{i-2}(\delta_{i-1})$.

\end{proof}

Finally, we are ready to prove Theorem \ref{prop:near_ctln}. 

\begin{proof}[Proof of Theorem \ref{prop:near_ctln}]

We consider two cases, trajectories which do not enter $\cA$ and those which do. First, if a trajectory does not enter $\cA$, then signs of all derivatives are fixed, thus the trajectory approaches a fixed point. 

Next, we consider trajectories which enter $\cA$. We prove by induction along our topological order that for each $\varepsilon> 0$, $i = 1, \ldots, n-1$, there exists $T$ such that if $t \geq T$, $x\in  \Omega_i(\varepsilon)$. First, we prove the base case. Since the vertex $1$ is a source and $n$ is a target, $n$ strongly dominates $1$, so $h^*_1(x ) < h^*_n(x)$ for all $x$ in the positive orthant. Further, for $x\notin \Omega_1$, $x_1 > x_n$, so 
\begin{align*}
\xdot_1 - \xdot_n & = -x_1 + x_n + [y_1]_+ - [y_n]_+ \leq    -x_1 + x_n + y_1 - y_n\\
\xdot_1 - \xdot_n  &\leq  h^*_1(x ) - h^*_n(x) \leq 0.
\end{align*}
Thus $L= x_1 - x_n$ is a local Lyapunov like function for $\R^n_{\geq 0}, \Omega_1$. Thus, all trajectories approach $\Omega_1$. Therefore, there exists  $t \geq T$, $x(t) \in \Omega_i(\varepsilon)$. 

Now, assume as an inductive hypothesis that the statement has been proved for all $j \leq i$. Pick $\varepsilon > 0$. By Lemma \ref{lem:lyapunov_2}, there is a constant $\delta_i> 0$ so that  $L = x_{i+1} -x_n$ is a local Lyapunov function for $A = \Omega_i$, $B = \Omega_{i-1}(\delta) $. By the inductive hypothesis, we can choose $T_i$ so that for all $t\geq T_i$, $x\in \Omega_{i-1}(\delta_i)$. Therefore, by Lemma \ref{lem:lyapanov}, for $t> T$ the trajectory approaches $\Omega_i$. Thus, there is some $T$ such that for all $t\geq T$, $x(t) \in  \Omega_{i}(\varepsilon)$.

Thus, for all $\varepsilon > 0$, all trajectories enter the set $\Omega_{n-1}(\varepsilon)$. Now, we pick $\delta = \max_{i \in[n]} \delta_i$. Then by Lemma \ref{lem:i_vs_n}, $h^*i(x) - h_n^*(x) < 0$ for all $x\in \Omega_{n-1}(\delta)$. This shows that the $x_n$ nullcline is separated from the origin by each $x_i$ nullcline within $\Omega_{n-1}(\delta)$.  Thus $\Omega_{n-1}(\delta) \subseteq H_n^+$. By our inductive argument, all trajectories which enter $\cA$ eventually enter the region $\Omega_{n-1}(\delta)$. Once they enter this region, $\xdot_n > 0$ for all time. Thus, the value of $x_n$ becomes monotonically increasing. Thus, it must approach a limit on $H_n$. Since $H_n$ is contained on the negative side of all other $H_i$, the signs of all derivatives $\xdot_i$ are fixed. Thus, the trajectory must approach the unique fixed point of the network.

\end{proof}

\section{Conclusion and open questions}
In this chapter, we introduced local Lyapunov functions, and applied them to prove Theorems \ref{thm:dag} and \ref{thm:dag_onto_symmetric}:  CTLNs whose graphs are built out a DAG and a symmetric graph arranged in a particular way have no dynamic attractors. 
We hope that similar techniques may be able to prove stronger results, such as the following theorem about graphs with no strongly directed cycles. Recall that we have defined a strongly directed cycle as a cycle in a directed graph which can be followed in one direction, but not the other. 

\setcounter{conj}{0}
\begin{conj}Let $G$ be a graph with no proper directed cycle. Then no CTLN with graph $G$ has a dynamic attractor. \label{conj:graph_struct}
\end{conj}

In order to work towards a proof of Conjecture \ref{conj:graph_struct}, we make some observations about the structure of graphs with no proper directed cycles.

\begin{prop}
If $G$ has no proper directed cycle, then every strongly connected component of $G$ is symmetric. 
\end{prop}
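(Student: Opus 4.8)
The plan is to argue by contradiction: from any non-bidirectional edge lying inside a strongly connected component, build a proper directed cycle, contradicting the hypothesis.

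First I would fix a strongly connected component $C$ of $G$ and suppose, for contradiction, that the induced subgraph on $C$ is not symmetric. Unwinding the definition, this produces an ordered pair of vertices $i, j \in C$ such that $i \to j$ is an edge of $G$ but $j \to i$ is not; in particular $i \neq j$, and the edge $i \to j$ is not bidirectional. Next I would invoke strong connectivity of $C$ to obtain a directed path from $j$ back to $i$ in $G$, and select one of minimal length, say $j = v_0 \to v_1 \to \cdots \to v_k = i$. By minimality this path is simple (deleting any sub-loop between a repeated pair of vertices would produce a strictly shorter path), so $v_0, v_1, \ldots, v_k$ are pairwise distinct; moreover $k \geq 2$, since $k = 1$ would force the edge $j \to i$ to exist, contradicting $j \not\to i$, and $k = 0$ is impossible because $i \neq j$.

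Then I would close the path with the edge $i \to j$ to form the closed directed walk $i \to v_0 \to v_1 \to \cdots \to v_k = i$. Its vertices $i, v_0, v_1, \ldots, v_{k-1}$ are pairwise distinct (the $v_\ell$ with $\ell < k$ are distinct, and none of them equals $v_k = i$), so this is an honest directed cycle of $G$, of length $k + 1 \geq 3$. Since it contains the edge $i \to v_0 = i \to j$, which is not bidirectional by hypothesis on the pair $(i,j)$, it is a proper directed cycle of $G$. This contradicts the assumption that $G$ has no proper directed cycle, so $C$ must be symmetric, as claimed.

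There is essentially no obstacle in this argument; the only points needing a sentence of justification are that a shortest directed path between two vertices is simple, and that appending the edge $i \to j$ to such a path yields a cycle (no repeated intermediate vertex) rather than merely a closed walk — both are routine. The only thing to watch is bookkeeping around short cases, which is handled by the observations $i \neq j$ and $k \geq 2$ above.
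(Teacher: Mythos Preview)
Your proof is correct and follows essentially the same contrapositive argument as the paper: take a non-bidirectional edge $i\to j$ inside a strongly connected component, use strong connectivity to find a directed path from $j$ back to $i$, and close it up to produce a proper directed cycle. You are simply more careful than the paper in verifying that the resulting closed walk is a genuine (simple) cycle, by taking a shortest return path and checking the small cases $k=0,1$.
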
 

\begin{proof}
We prove the constrapositive, that if some strongly connected component of $G$ has a directed edge, then $G$ has a proper directed cycle. Let $G$ be a graph, and suppose $u$ and $v$ are in the same strongly connected component, and $u\to v$, $v\not \to u$. Because $u$ and $v$ are in the same strongly connected component, there is some directed path from $v$ to $u$. Adding the edge $u\to v$ to this path creates a proper directed cycle. 
\end{proof}

This means we can consider a graph $G$ with no proper directed cycles as being the union of a set of symmetric graphs $G_1, G_2, \ldots, G_m$, attached to one another in an overall directed acyclic graph as pictured in Figure \ref{fig:condensation}. This directed acyclic graph structure suggests we may be able to use inductive teqchniques, similar to those used in the proof of Theorem \ref{thm:dag_onto_symmetric} to prove that these CTLNs do not have dynamic attractors. 

\begin{figure}
\begin{center}
\includegraphics[width = 2 in]{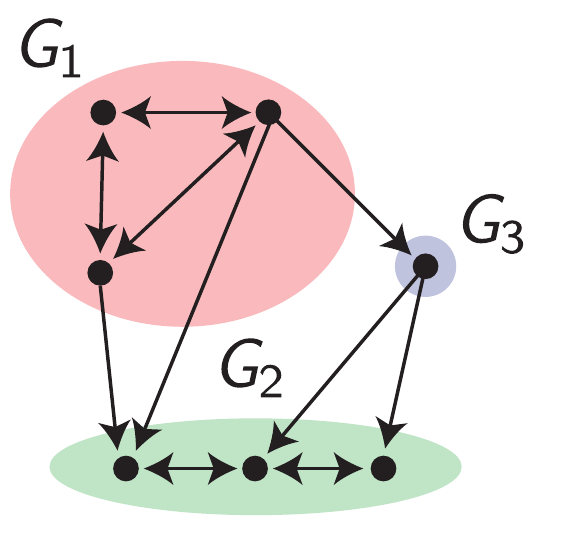}
\end{center}
\caption[Structure of graphs with no proper directed cycles.]{We can decompose any graph with no strongly directed cycle as a union of symmetric graphs $G_1, \ldots, G_m$, connected into a DAG.}
\label{fig:condensation}
\end{figure}

Our proofs of Theorems \ref{thm:dag} and \ref{thm:dag_onto_symmetric} depended on showing that ``sources die" in a particular case. We conjecture that this result holds in general. 
 
\begin{conj}
\label{thm:sources_die}
Let $j$ be a source in a graph $G$. Then $\lim_{t\to \infty} x_j(t) = 0$. 
\end{conj}

Further, we believe that more progress can be made on general competitive TLNs. In particular, while we were only able to prove dynamic robustness in the special case of shallow DAG1, we conjecture that similar result holds for all robust motifs.

\begin{conj}\label{conj:dyn_robust}
Let $(W, \theta)$ define a TLN such that $G_W$ is a member of DAG1 or DAG2. Then the unique fixed point is always a global attractor.
\end{conj}

Finally, while this chapter has focused on ruling out dynamic attractors based on graph structure, many questions remain about the relationship between structure and dynamics in CTLNs. 

\begin{question*}
Which features of a graph $G$ guarantee that the CTLN of $G$ has a limit cycle? Which features guarantee that $G$ has a chaotic attractor? 
\end{question*}

%
\appendix
\titleformat{\chapter}[display]{\fontsize{30}{30}\selectfont\bfseries\sffamily}{Appendix \thechapter\textcolor{gray75}{\raisebox{3pt}{|}}}{0pt}{}{}

} 


\begin{singlespace}

\end{singlespace}

\backmatter

\vita{SupplementaryMaterial/Vita}

\end{document}